\newif\ifcompileiliad 
\newcommand{\TextSize}{13}
\newcommand{\BaseLineSkip}{15}
\newif\ifDownscaledFinalDoc
\newif\ifDraft
\newtheorem{theorem}{Theorem}[chapter]
\newtheorem{proposition}{Proposition}[chapter]
\newtheorem{corollary}{Corollary}[chapter]
\newtheorem{lemma}{Lemma}[chapter]
\newtheorem{remark}{Remark}[chapter]
\newtheorem{definition}{Definition}[chapter]
\newtheorem{conjecture}{Conjecture}[chapter]
\newtheorem{example}{Example}[chapter]
\newcounter{ct}
\newcommand{\includepaperpages}[2]
{
	\forloop{ct}{0}{\value{ct} < #2}
	{
		\begin{figure}[ht]
			\includegraphics[width=.99\textwidth]{#1\the\value{ct}.ps}
		\end{figure}
		\clearpage
	}
}
\renewcommand{\emph}[1]{\textbf{#1}}
\renewcommand{\Re}{\,\mathrm{Re}\,}
\renewcommand{\Im}{\,\mathrm{Im}\,}
\renewcommand{\th}{\,\mathrm{th}\,}
\newcommand{\sh}{\,\mathrm{sh}\,}
\newcommand{\ch}{\,\mathrm{ch}\,}
\renewcommand{\th}{\,\mathrm{th}\,}
\newcommand{\cth}{\,\mathrm{cth}\,}
\newcommand{\tg}{\,\mathrm{tg}\,}
\newcommand{\ctg}{\,\mathrm{ctg}\,}
\newcommand{\arcth}{\,\mathrm{arcth}\,}
\newcommand{\arctg}{\,\mathrm{arctg}\,}
\DeclareMathOperator{\id}{id}
\DeclareMathOperator{\de}{\partial}
\DeclareMathOperator{\dS}{d^{\text{S}}}
\DeclareMathOperator{\dI}{d^{\text{It\^{o}}}}
\DeclareMathOperator{\Lc}{\mathcal{L}}
\DeclareMathOperator{\Ac}{\mathcal{A}}
\DeclareMathOperator{\Dc}{\mathcal{D}}
\DeclareMathOperator{\D}{\mathbb{D}}
\DeclareMathOperator{\C}{\mathbb{C}}
\DeclareMathOperator{\HH}{\mathbb{H}}
\DeclareMathOperator{\SSS}{\mathbb{S}}
\DeclareMathOperator{\LL}{\mathbb{L}}
\DeclareMathOperator{\Hc}{\mathcal{H}}
\DeclareMathOperator{\K}{\mathcal{K}}
\DeclareMathOperator{\map}{\rightarrow}
\DeclareMathOperator{\then}{~\Rightarrow~}
\newcommand{\Ev}[1]{\mathbb{E}\left[{#1}\right]}
\newcommand{\Evv}[2]{\mathbb{E}_{#1}\left[{#2}\right]}
\DeclareMathOperator{\supp}{\mathrm{supp }}
\newcommand{\law}[1]{\mathrm{Law}[#1]}
\title{
{\fontsize{28}{30}\usefont{OT1}{phv}{bc}{n}\selectfont
General Slit Stochastic L\"owner Evolution and Conformal Field
Theory} \author{
	\textbf{Alexey Tochin}\vspace{3cm}\\
		\includegraphics[width=74mm]{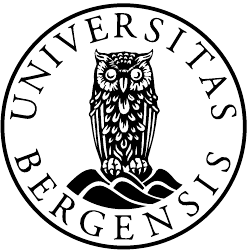}\vspace{1em}\\
		Dissertation for the degree of Philosophiae Doctor (PhD)\vspace{3.5em}\\
		Department of Mathematics \\
		University of Bergen
	}
	\date{September 2015}
}
\begin{document}

\ifDownscaledFinalDoc
	\fontsize{\TextSize}{\BaseLineSkip}
	\selectfont
\fi

\ifDraft
	\doublespacing
\fi

\maketitle
\frontmatter
\chapter{Acknowledgements}

I would like to express my sincere gratitude to all those who helped me on the
way to this thesis. During my journey I was lucky to have an excellent guidance
and noteworthy support.

First of all I am especially grateful to my
supervisor, \emph{Alexander Vasil'ev}. My study mathematics at the
University of Bergen would not possible without him. I also thank Alexander for all generous help
and support I received. He was always ready to read my early drafts with a bunch
of misprints and technical mistakes. 

The second important person is my co-author, \emph{Georgy Ivanov}, with whom we
had numerous of lively discussions. Let me also mark the contribution from  
\emph{Bruno Carneiro da Cunha} and \emph{Tiago Anselmo}, and valuable
discussions with them during my stay in Brazil.

A special thanks goes to my colleagues and friends, who read the text in the
final stage and corrected numerous of grammar and language mistakes. This
hard work was done by \emph{Anastasia Frolova}, \emph{Valentin
Krasontovitsch}, \emph{Daulet Moldabayev}, and again \emph{Alexander Vasil'ev}.

I also want to thank various people at the Mathematical Institute, including
\emph{Irina Markina}, \emph{Bjørn Ian Dundas}, \emph{Henrik Kalisch}, 
\emph{Mauricio Godoy Molina}, \emph{Erlend Grong}, \emph{Christian Autenried}, 
\emph{Mirjam Solberg}, \emph{Sergey Alyaev}, \emph{Viktor Kiselev},
\emph{Anna Varzina}, \emph{Anna Kvashchuk}, \emph{Lina Astrakova}, and many
others.

\tableofcontents

\mainmatter

%
%

\chapter*{Introduction}
\addcontentsline{toc}{chapter}{Introduction}

The thesis is presented in the style of a monograph and is dedicated to a
generalization of the L\"owner equation in its stochastic form known as the
Schramm-L\"owner  equation, and to its coupling with the Gaussian free field,
ultimately aiming at the construction of a boundary conformal field theory
with one free scalar
\footnote{pre-pre-Schwarzain, to be more precise, see Section 
\ref{Section: Pre-pre-Schwarzian}.}
bosonic field. 
This study is presented in line with a
systematic, and hopefully concise, presentation and generalization of known
elements of the theory of L\"owner evolution. This topic is closely related
with lattice models of statistical physics in their scaling limit. However, we
deal only with continuous models, focusing in particular, on domain Markov
properties, on the Gaussian free field, etc. Numerical simulations are given in
Chapter 
\ref{Chapter: Numerical Simulation}.

The main results are in the proof of the basic properties of  
($\delta,\sigma$)-SLE and general necessary and sufficient conditions for the 
coupling. We also introduce a machinery that possesses to consider all L\"owner
equations and known types of the coupling as different manifestations of the
same thing.

The thesis is split into six chapters each of which contains a short
introductory part providing general ideas and a list of main results. Chapter
\ref{Chapter: Slit Loewner equation and its stochastic version}
(deterministic and stochastic L\"owner equation) and Chapter 
\ref{Chapter: CFT} 
(Gaussian free
field and conformal field theory) are self-contained and can be studied
independently. The coupling with the Gaussian free field/CFT is considered in
Chapter 
\ref{Chapter: Coupling}, 
where the results collected in Chapters 
\ref{Chapter: Slit Loewner equation and its stochastic version}
and 
\ref{Chapter: CFT}  
merge together.
Another aspect of the same connection to conformal field theory related to the
representation of the Virasoro algebra is discussed
in Chapter
\ref{Chapter: Representation theory approach}.
Important special cases of the general slit stochastic L\"owner evolution and
its possible couplings with CFT are collected in  Chapter 
\ref{Chapter: Classical cases}.



\chapter{Slit L\"owner equation and its stochastic version}
\label{Chapter: Slit Loewner equation and its stochastic version}

This chapter mostly repeats 
\cite{Ivanov2014}. Here, we use a different manner, in particular, we focus on
working in generic domain rather then the unit disk. We also present some
additional results such as Theorem
\ref{Theorem: Absolute continuity of SLE}
and discuss the \emph{domain Markov property} in details.  
   
The L\"owner theory was introduced in 1923 by Karl L\"owner (Charles Loewner)
\cite{Lowner}, and was later developed by Kufarev 
\cite{Kufareff1943} 
and Pommerenke
\cite{Pommerenke.1975}.
The L\"owner diffrential equation became one of the most powerful tools for
solving extremal problems in the theory of univalent function. In the modern
period, L\"owner theory has again attracted a lot of interest due to the
discovery of the Stochastic (Schramm)-L\"owner Evolution (SLE), a stochastic
process that has made it possible to describe analytically scaling limits of
several 2-dimensional lattice models in statistical physics, see 
\cite{Lawler2001b}
and
\cite{Schramm2000}. 
SLE theory focuses on describing probability measures on families of curves
which possess the property of \emph{conformal invariance} and the \emph{domain
Markov property}. The second property, in its turn, is related to the diffusion
form of SLE. So far, the following types of SLE have been studied:
the \emph{chordal} SLE
\cite{Lawler2001b,Rohde2005,Schramm2000},
the \emph{dipolar} SLE
\cite{Bauer2004a},
the \emph{radial} SLE 
\cite{Lawler2008,Lawler2001b},
SLE$(\kappa,\rho)$
\cite{Dubedat2005},
and multiple SLE
\cite{Dubedat2004}.

In this thesis we address the following questions. What are other possible
diffusion equations with holomorphic coefficients that generate random families
of curves? How similar are the properties of these families of curves to the
properties of known SLE curves? 

We start by introducing notations and basic concepts. In the second section,
we introduce 
\emph{$(\delta,\sigma)$-L\"owner chains} 
(or \emph{slit L\"owner chains}) 
and their stochastic versions are given in the third section.
This construction includes the 
\emph{chordal}, 
\emph{dipolar}, 
and 
\emph{radial}
L\"owner equations as special cases. 
SLE$(\kappa,\rho)$ and multiple SLE exceed the framework of this thesis because
it can not be formulated as a single (not a system of) diffusion equation. In
particular, SLE$(\kappa,\rho)$ measure of curves do not possess a domain Markov
property defined in terms of slits only because the conditional random law
depends not only on the fixed part of the curve but on some additional parameters.

We try to keep the presentation as self-sufficient as possible avoiding,
however, any length introduction referring to, e.g.  
\cite{Lawler2008}
instead. The stochastic calculus, the It\^{o} and Stratonovich differentiation
are assumed to be familiar to the reader. Some additional facts are given
in Appendix \ref{Appendix: Some relations from stochastic calculus}.

\section{Preliminaries}
\label{Section: SLE preliminaries}

Each of the versions of the L\"owner equations, and more generally, of
holomorphic stochastic flows, is usually associated with a certain canonical
domain $D\subset \mathbb C$ in the complex plain, e.g., the upper half-plane
(in the case of the chordal SLE) or the unit disk (in the case of the radial
SLE), etc. Focusing on conformal invariant properties we avoid this specific
choice and map the canonical domains one to another if necessary. For example,
the number of fixed points of the flow or the algebraic properties of vector
fields that define the flow are presented in this invariant way. The invariance
is achieved by considering a general hyperbolic simply connected domain, or a
chart, from the very beginning. It is also natural to go further and work with
a simply connected hyperbolic Riemann surface 
$\Dc$ 
\index{$\Dc$}
(with a boundary $\de\Dc$). Below,
\index{$\de \Dc$}
$\Dc$ is understood as a generic domain with a well-defined boundary as well as
a Riemann surface.
A simply connected hyperbolic Riemann surface with a
boundary is denoted by 
$\bar\Dc=\Dc\cup\de\Dc$,
\index{$\bar \Dc$} 
where $\de\Dc$ is the boundary
and $\Dc$ is the open interior.
We will mostly use global chart maps 
$\psi\colon\Dc\map D^{\psi}\subset \mathbb{C}$ from 
\index{chart map $\psi$}
$\Dc$ to a domain of the complex plane, 
writing $\psi$ for a chart 
$(D,\psi)$ for simplicity.
For any other chart
$\tilde\psi\colon\Dc\map D^{\tilde\psi}\subset \mathbb{C}$
\index{$D^{\psi}$}
the transition map is denoted by
\index{crossing map $\tau$}
\begin{equation}
	\tau^{\psi,\tilde\psi}:=\psi \circ \tilde\psi^{-1}\colon 
	 D^{\tilde\psi}\map D^{\psi}.
\end{equation}
For example, we will often use the half-plane chart map 
$\psi_{\HH}\colon\Dc\map\HH$,
\index{half-plane chart}
\begin{equation}
	\HH:=\{z\in\mathbb{C}\colon \Im(z)>0\}.
\end{equation}
\index{$\HH$}
Another example is the unit disk chart map
\index{unit disk chart}
$\psi_{\D}\colon\Dc\map\D$, where
\begin{equation}
	\D:=\{z\in\mathbb{C}\colon |z|<1\}.
\end{equation} 
\index{$\D$}
These charts are related by the transition map 
\begin{equation}
 \tau^{\HH,\D}(z)
 :=\psi^{\HH} \circ \left(\psi^{\D}\right)^{-1} =
 i \frac{1-z}{1+z}
 \colon\D \map\HH,\quad.
 \label{Formula: tau_H D}
\end{equation}

Thus, the point $z=1$ in the unit disk chart corresponds to the origin in the
half-plane chart and the point $z=-1$ corresponds to infinity.
We will also use a non global multivalued chart 
$\psi^{\mathbb{L}}:\Dc\map\HH$ in 
Section \ref{Section: Radial case}.

Consider now a holomorphic vector field $v$ on $\Dc$, that is a
holomorphic section of the complexified tangent bundle.
\index{holomorphic vector field}
We also can define it
as a map 
$\psi\mapsto v^{\psi}$ 
from the set of all possible global chats 
$\psi:\Dc\map D^{\psi}$ 
to the set of holomorphic functions 
$v^{\psi}:D^{\psi}\map\C$
defined on $D^{\psi}:=\psi(\Dc)$.
For the vector fields, the following coordinate change holds. Any chart map
$\tilde \psi:\Dc\map D^{\tilde \psi}$
induces the transition
\begin{equation}
	v^{\tilde \psi}(\tilde z) =
	v^{\tau^{-1} \circ \psi}(\tilde z) =
	\frac{1}  { \tau'(\tilde z)} 
	v^{\psi}(\tau(\tilde z)),\quad
	\tau :=\psi\circ\tilde\psi^{-1} \colon D^{\tilde \psi}\map D^{\psi},
	\label{Formula: tilde v = 1/dtau v(tau)}
\end{equation}
If $v$ is defined for one chart, then it is automatically defined on all other
charts.
A vector field also can be called a $(-1,0)$-differential.

Consider now a conformal map
$F\colon\Dc\map \tilde\Dc$ 
between two hyperbolic simply connected Riemann surfaces
(or generic domains)
$\Dc$ and $\tilde \Dc$, 
and let 
$\psi:\Dc\map D^{\psi}\subset \C$,
and let
$\tilde \psi:\tilde \Dc\map \tilde D^{\tilde\psi}\subset \ C$ 
be the chart map. Define
\begin{equation}
	F^{\tilde\psi,\psi}:= \tilde \psi \circ F \circ \psi^{-1}
	\colon {D}^{\psi} \map \tilde D^{\tilde \psi},
\end{equation} 
consequently,
\begin{equation}
	\left( F^{\tilde\psi,\psi} \right)^{-1} =	
	\left( F^{-1} \right)^{\psi,\tilde \psi}= \psi \circ F^{-1} \circ \tilde \psi^{-1}
	\colon \tilde D^{\tilde\psi} \map D^{\psi}.
\end{equation}
The pushforward 
$F_*:v^{\psi} \mapsto \tilde v^{\tilde \psi}$ 
is defined by the rule
\begin{equation}\begin{split}
	v^{\psi}(z) \map \tilde v^{\tilde \psi}(\tilde z) 
	=&	(F_* v)^{\tilde	\psi}(\tilde z)
	:= v^{\tilde \psi \circ F}(\tilde z) = 
	v^{\tilde \psi \circ F \circ \psi^{-1} \circ \psi}(\tilde z) =
	v^{F^{\tilde \psi,\psi} \circ \psi}(\tilde z) 
	=\\=&
	\frac{1}{\left(\left(F^{\tilde\psi,\psi}\right)^{-1}\right)'(\tilde z)} 
	v^{\psi}\left(\left(F^{\tilde\psi,\psi}\right)^{-1}(\tilde z)\right),\quad 
	\tilde z\in \tilde D^{\tilde\psi},
\end{split}\end{equation} 
because 
$\left(F^{\tilde\psi,\psi}\right)^{-1}$ 
plays the same role as $\tau$ in 
\eqref{Formula: tilde v = 1/dtau v(tau)}.

Let now $F$ be an endomorphism of $\Dc$, namely,
$F(\Dc)=\tilde \Dc=\Dc\setminus \mathcal{K}$ 
for some compact subset 
$\mathcal{K}$,
$\tilde \psi \equiv \psi|_{\Dc \setminus \mathcal{K}}$,
and
$F^{\psi}:=F^{\psi,\psi}$.
Then,
\begin{equation}\begin{split}
	&F_* v^{\psi} (z) = v^{\psi \circ F} (z) =
	\frac{1}{  \left(\left(F^{\psi}{}\right)^{-1}\right) {}'(z) } v^{\psi}
	\left( 		\left(\left(F^{\psi}\right)^{-1}\right)(z)\right),\\
	&z\in \psi(\tilde \Dc) = \psi(\Dc \setminus \mathcal{K}) \subset D^{\psi},
	\label{Formula: G v(z) = 1/G' v(G(z))}
\end{split}\end{equation}
is a vector field defined in 
$\Dc\setminus \mathcal{K}$.
For the inverse map 
$F^{-1}\colon \Dc\setminus\K\map \Dc$ 
the vector field 
$F^{-1}_*v^{\psi}$ 
is defined in entire 
$\Dc$ 
but the values of
$v$ 
in 
$\K$ 
are not taken into account.
It is also easy to see that
\begin{equation}
 \tilde F_* F_* = \left(F \circ \tilde F \right)_*. 
\end{equation}


The pushforward $F_*$
\index{pushforward}
also can be understood as a bundle map between corresponding tangent bundles
induced by $F$.
We stick the way above because it can be extended to more general
transformations with respect to the change of charts such as the
\emph{pre-pre-Schwarzian},
\index{pre-pre-Schwarzian}
see Section \ref{Section: Pre-pre-Schwarzian}.
We will use the notation 
$X^{\HH}:=X^{\psi_{\HH}}$ 
if 
$X$ 
is a vector field, conformal map or pre-pre-Schwarzian (defined below) in 
$\Dc$ 
as well as
$X^{\D}:=X^{\psi_{\D}}$ 
for the unit disk chart, and similarly, for other standard charts.

It will be also convenient to use a basis of holomorphic vector fields
given by holomorphic functions as
\begin{equation}
	\ell_n^{\HH}(z) := z^{n+1} ,\quad z\in\HH
	\label{Formula: ell_n^H = ...}
\end{equation}
in the half-plane chart.
In the unit disk chart they admit the form
\begin{equation}
 	\ell_n^{\D}(z) = -\frac{i^{n}}{2}(1-z)^{1+n}(1+z)^{1-n} 
	\label{Formula: ell_n^H = ...}
\end{equation}
according to
\eqref{Formula: tau_H D}
and
\eqref{Formula: tilde v = 1/dtau v(tau)}.
We remark that $\ell_n$ are holomorphic in $\Dc$, tangent at the boundary
except two points, which are the origin and infinity in the half-plane chart
or $z=\pm 1$ in the unit disk chart. In these points $\ell_n$ has critical
pointsof order $1+n$ and $1-n$ correspondingly, which are zeros for positive
order and poles for negative order. 

A holomorphic  vector field $\sigma$ in $\Dc$ is called 
\emph{complete} 
\index{complete vector field}
\index{$\sigma$}
if the solution 
$H_t[\sigma]^{\psi}(z)$ 
of the initial value problem 
\begin{equation}
 \dot H_t[\sigma]^{\psi}(z) = \sigma^{\psi}(H_t[\sigma]^{\psi}(z)),\quad 
 H_0[\sigma]^{\psi}(z) = z,\quad 
 z\in D^{\psi}
 \label{Formula: d H[sigma](z) = sigma( H[sigma] ) (z)}
\end{equation}
is defined for
$t\in(-\infty,\infty)$ 
as a conformal automorphism 
$H_t[\sigma]^{\psi}:D^{\psi}\map D^{\psi}$. 
Here and below, we denote the partial derivative with respect to $t$ (called
\emph{time} henceforth)
\index{time} 
as
$\dot H_t:=\frac{\de}{\de t} H_t$. 
It is straightforward to see that the differential equation has the same form
in any chart $\psi$. That is why it is  reasonable to drop the index $\psi$
and the argument $z$ as 
\begin{equation}
 \dot H_t[\sigma] = \sigma \circ H_t[\sigma] ,\quad 
 H_0[\sigma] = \id.
 \label{Formula: d H = sigma H ds}
\end{equation}
We will use notation `$\circ$' in what follows. The advantage is the explicit
independence of the choice of a chart.

The collection $\{H_t[\sigma]\}_{t\in\mathbb{R}}$ forms a one-parameter
group:
\begin{equation}
	H_t[\sigma]\circ H_s[\sigma]=H_{t+s}[\sigma],\quad
	H_t^{-1}[\sigma] = H_{-t}[\sigma],\quad ,\quad t,s\in(-\infty,+\infty).
\end{equation}
Besides, for any complete vector field $\sigma$ we have 
\begin{equation}
 H_t[\sigma]_* \sigma = \sigma,\quad t\in(-\infty,+\infty).
 \label{Formula: H[v] v = v}
\end{equation}

It is possible to show, see \cite{Shoikhet2001}, that a complete vector field
$\sigma$ is a linear combination of $\ell_{-1}$, $\ell_{0}$, and $\ell_{1}$
with real coefficients
\begin{equation}
	\sigma = \sigma_{-1} \ell_{-1} + \sigma_{0} \ell_{0} + \sigma_{1} \ell_{1},\quad
	\sigma_{-1}, \sigma_0, \sigma_1 \in \mathbb{R}.
	\label{Formula: v = v l_-1 + v l_0 + v l_1}
\end{equation}
In the half-plane chart this relation looks as 
\begin{equation}
	\sigma^{\HH}(z) = \sigma_{-1} + \sigma_0 z + \sigma_1 z^2,\quad z\in \HH,\quad
	\sigma_{-1}, \sigma_0, \sigma_1 \in \mathbb{R}
 \label{Formula: v = v l_-1 + v l_0 + v l_1 in H}
\end{equation}
due to 
\eqref{Formula: ell_n^H = ...}.
It is also true that a holomorphic vector field is complete if and only if it
is holomorphic and tangent at the boundary.

Define the parameter
\begin{equation}
	\Delta_2 := \sigma_0^2 - \sigma_1 \sigma_{-1}.
\end{equation}
It is invariant with respect to choice of basis $\ell_n$, or equivalently,
M\"obious automorphisms of $\Dc$. We distinguish 3 cases:
\begin{enumerate}
	\item $\Delta_2=0$. We call this case \emph{parabolic}. The vector field
	$\sigma$ has one zero at the boundary of order 2.
	\index{parabolic $\sigma$}
	\item $\Delta_2>0$. We call this case \emph{hyperbolic}. The vector field
	$\sigma$ has two zeros at the boundary of order 1.
	\index{hyperbolic $\sigma$}
	\item $\Delta_2<0$. We call this case \emph{elliptic}. The vector
	field $\sigma$ has one zeros inside of $\Dc$.
	\index{hyperbolic $\sigma$}
\end{enumerate}

We illustrate each type in the Fig.
\ref{Figure: Complete vector fields}.
\begin{figure}[h]
\centering
  \begin{subfigure}[t]{0.33\textwidth}
	\centering
	\captionsetup{justification=centering}
  \includegraphics[width=5cm,keepaspectratio=true]
   	{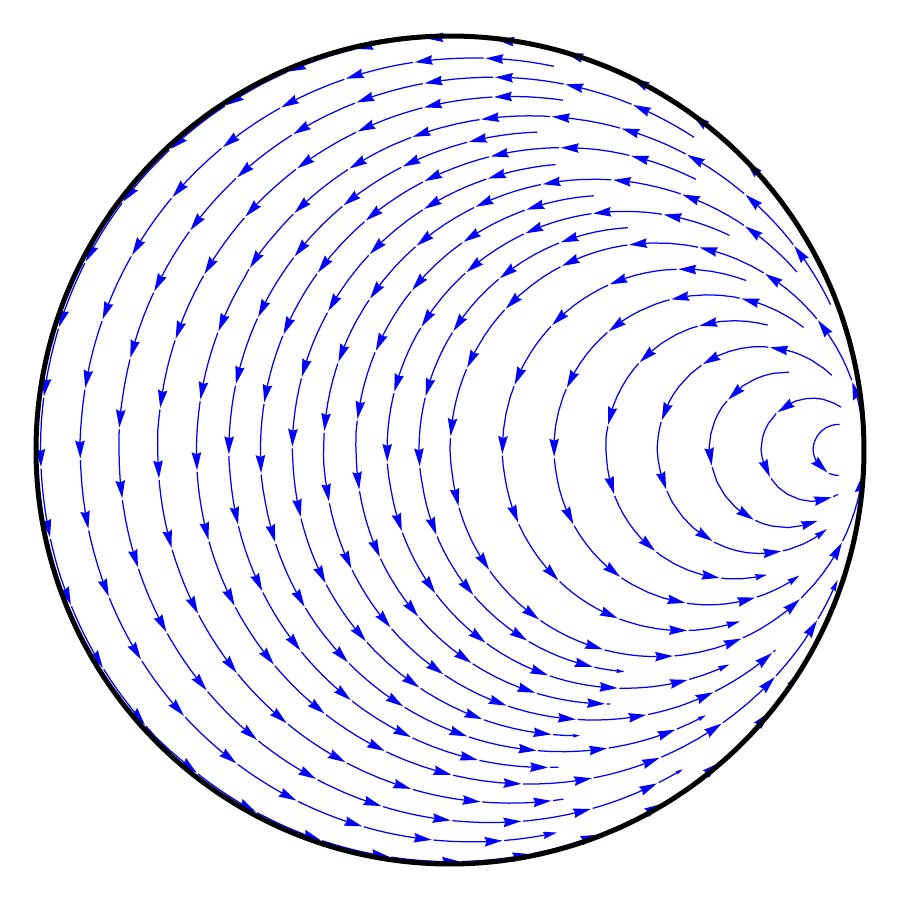}
    \caption{ $\ell_{1}$\\
    Parabolic vector field,\\ 
    {\centering$\Delta_2=0$}.}
  \end{subfigure}%
  ~  
  \begin{subfigure}[t]{0.33\textwidth}
	\centering
	\captionsetup{justification=centering}
  \includegraphics[width=5cm,keepaspectratio=true]
   	{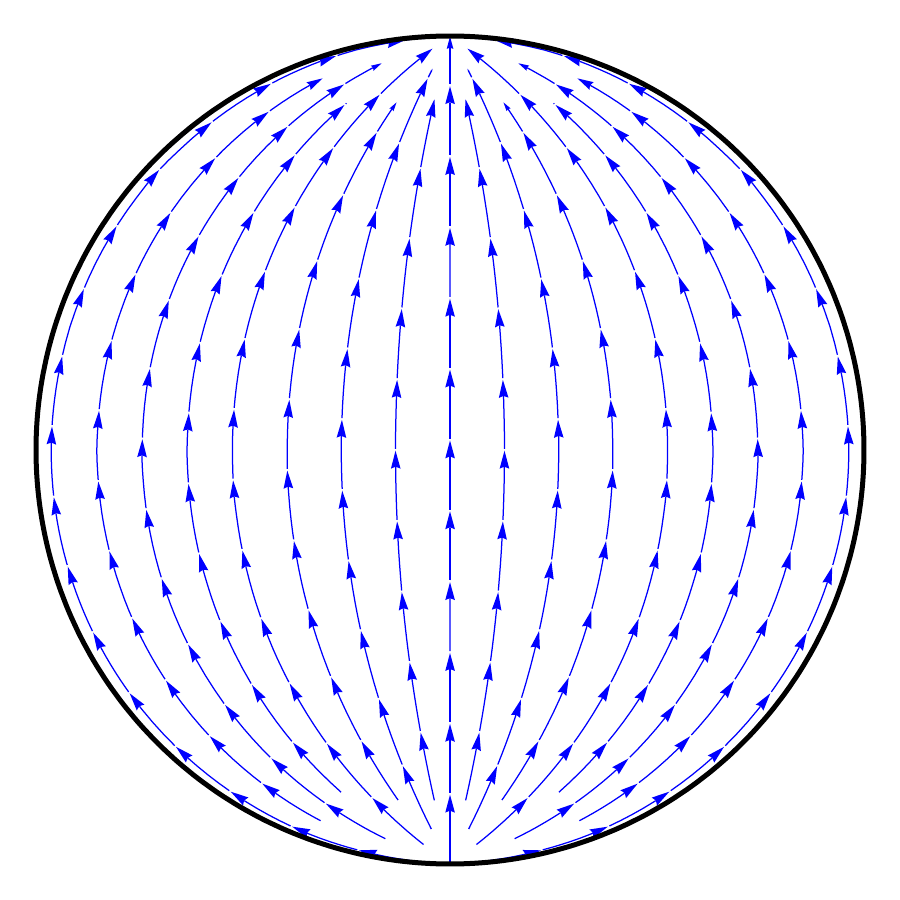} 
  	\caption{
  	$\ell_{-1}-\ell_{1}$\\
   	Hyperbolic vector field,\\  
  	$\Delta_2>0$.}  	
	\end{subfigure}%
	~
	\begin{subfigure}[t]{0.33\textwidth}
	\centering
	\captionsetup{justification=centering}
	\includegraphics[width=5cm,keepaspectratio=true]
		{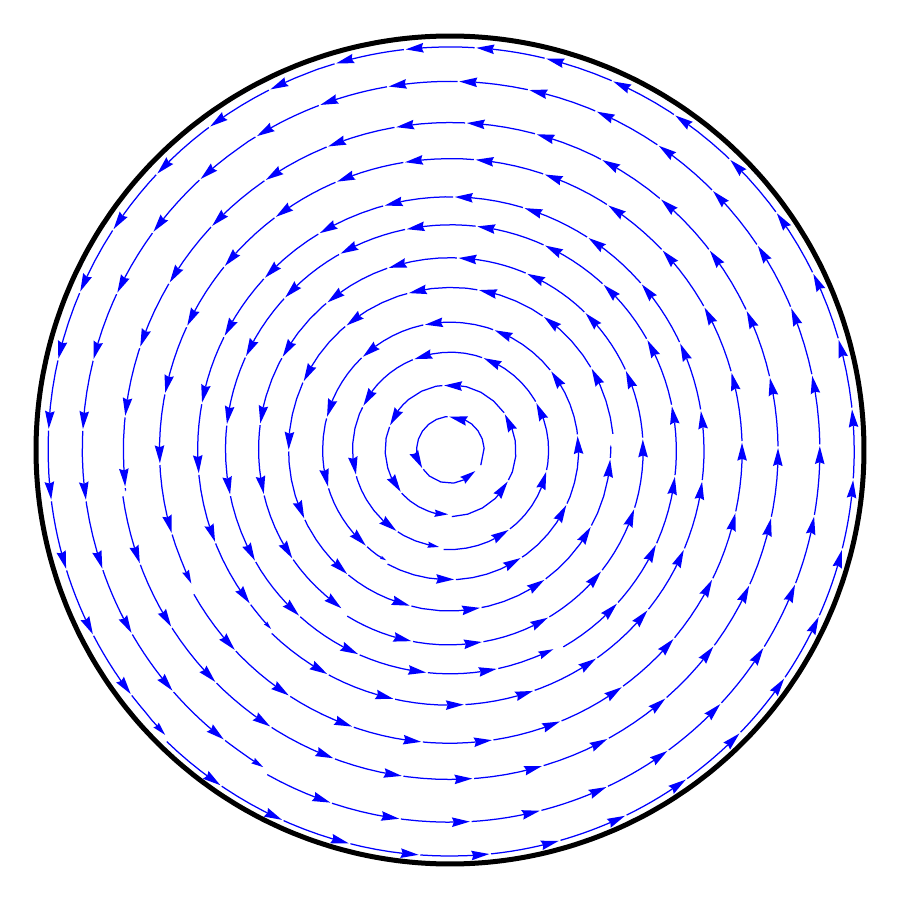}
		\caption{$\ell_{-1}+\ell_{1}$\\
		Elliptic vector field,\\ 
    $\Delta_2<0$.}
	\end{subfigure}%
 	\caption{Examples of three types of complete vector field in the unit disk
 	chart.
 	\label{Figure: Complete vector fields}}
\end{figure}
We notice that any complete vector field can be reduced to a one from the
Fig.
\ref{Figure: Complete vector fields}
up to a constant with the M\"obious transform. We remark that $H_t[\sigma]$ is
periodic with respect to $t$ if and only if $\sigma$ is elliptic.

A holomorphic vector field $\delta$ is called \emph{semicomplete}
\index{semicomplete vector field}
if the initial value problem
\eqref{Formula: d H = sigma H ds}
with $\delta$ at the place of $\sigma$
has a solution 
$H_t[\delta]$, 
which is a conformal map 
$H_t[\delta]:\Dc\map\Dc\setminus K_t$ 
for all 
$t\in[0,+\infty)$ 
and for some family 
$\{\K_t\}_{t\geq 0}$
of subsets
$\K_t\subset \Dc$. 
The equation 
\eqref{Formula: d H = sigma H ds}
also has a solution for $t\in(-\infty,0]$ 
but $\{H_t[\delta]\}_{t\leq 0}$ 
is the family of inverse endomorphisms
$H_t[\delta]:\Dc\setminus \K_t\map\Dc$.

\emph{Antisemicomplete} vector fields can be defined as just minus
semicompletes fields.
\index{antisemicomplete vector field}
\index{$\delta$}
Thus, $-\delta$ is an antisemicomplete vector field if and only if $\delta$
is a semicomplete field, and vice versa.
Equivalently, we can define an antisemicomplete vector field by assumption that
$\{H_t[\delta]\}_{t\leq 0}$
is a collection of endomorphisms or, which is the same, 
$\{H_t[\delta]\}_{t\geq 0}$ 
is a collection of inverse endomorphisms.
A complete field is, in particular, semicomplete and antisemicomplete at the
same time.

The collection $\{H_t[\delta]\}_{t\geq 0}$ for semicomplete $\delta$ is a
one-parameter semigroup of endomorphisms with respect to composition.
Analogously, for an antisemicomplete $\delta$ the collection
$\{H_t[\delta]\}_{t\geq 0}$ 
is also a one-parameter semigroup of inverse endomorphisms with respect to
composition.

\begin{proposition} (Berkson-Porta representation) \cite{Berkson}\\
A vector field $\delta$ is semicomplete if and only if 
\begin{equation}
	\delta^{\D}(z) = (z-\tau)(\bar\tau z - 1)p(z),
	\quad \tau\in\bar\D, \quad \Re p(z) \geq 0, \quad z\in\bar \D,
\end{equation}	 
in the unit disk chart for some holomorphic $p:\bar\D\map\C$.
\label{Proposition: Berkson-Porta representation}
\end{proposition}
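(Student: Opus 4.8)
The plan is to prove both directions by combining the characterization of semicomplete fields in terms of the flow $\{H_t[\delta]\}_{t\ge 0}$ being a semigroup of endomorphisms of $\D$ with a Herglotz-type argument. I will work entirely in the unit disk chart, writing $\delta=\delta^{\D}$ and $H_t=H_t[\delta]^{\D}\colon\D\to\D$. The key analytic input is the \emph{Julia–Wolff–Carathéodory / Denjoy–Wolff} picture: a one-parameter semigroup of holomorphic self-maps of $\D$ (not consisting of automorphisms) has a common fixed point $\tau\in\bar\D$ (the Denjoy–Wolff point), attracting in $\D$; and a holomorphic self-map fixing $\tau$ can be normalized by a Möbius automorphism of $\D$ so that $\tau=0$ (interior case) or $\tau=1$ (boundary case).

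For the ``only if'' direction I would proceed as follows. First, differentiate the semigroup identity $H_{t+s}=H_t\circ H_s$ at $t=0$ to recover $\dot H_s=\delta\circ H_s$, and dually obtain $\dot H_s = H_s'\cdot\delta$ (the ``backward'' form), so that $\delta$ is determined by the infinitesimal generator of the semigroup. Then identify the common Denjoy–Wolff point $\tau\in\bar\D$ of the family $\{H_t\}_{t\ge0}$; since $\delta(H_t(z))=\dot H_t(z)$ and $H_t(\tau)\to\tau$, one gets $\delta(\tau)=0$ when $\tau\in\D$, and the appropriate boundary version (nontangential limit, with $\delta'(\tau)\le0$ in the half-plane model) when $\tau\in\de\D$. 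Next, define $p$ by
\begin{equation}
	p(z):=\frac{\delta(z)}{(z-\tau)(\bar\tau z-1)},\qquad z\in\D.
\end{equation}
The zero of the denominator at $z=\tau$ (and, when $|\tau|<1$, also the lack of a zero elsewhere) is matched by the zero of $\delta$, so $p$ extends holomorphically across $\tau$; I would check the removability at $\tau$ by the infinitesimal fixed-point information from the previous step. It remains to show $\Re p\ge0$. For this I use that each $H_t$ is an endomorphism fixing $\tau$: transporting to the half-plane with $\tau$ at infinity (or the disk with $\tau=0$), the self-map condition forces the generator to point ``inward,'' which is exactly the statement that $(z-\tau)(\bar\tau z-1)^{-1}\delta(z)$ has nonnegative real part. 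Concretely, I would write down the semigroup in the normalized chart, note that $\Re\bigl(\overline{\text{(inner normal)}}\cdot\delta\bigr)\ge0$ on $\partial\D$ from the endomorphism property, and then invoke the Herglotz representation / minimum principle for harmonic functions to conclude $\Re p\ge0$ on all of $\D$, hence on $\bar\D$ by continuity.

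For the ``if'' direction, given $\delta^{\D}(z)=(z-\tau)(\bar\tau z-1)p(z)$ with $\Re p\ge0$, I would show the initial value problem $\dot H_t=\delta\circ H_t$, $H_0=\id$, has solutions $H_t\colon\D\to\D$ for all $t\ge0$. Local existence and holomorphic dependence are standard; the point is global-in-$t$ existence and that the image stays inside $\D$. I would do this by a Gronwall/Lyapunov estimate: using the factor $(z-\tau)(\bar\tau z-1)$, which is (up to sign) the square of the hyperbolic-type weight vanishing at $\tau$, together with $\Re p\ge0$, one shows $t\mapsto |H_t(z)|$ cannot reach $\de\D$ in finite time and that no blow-up occurs, so the flow is defined for all $t\ge0$ as self-maps; this is the classical argument that semicomplete vector fields are exactly the generators of such semigroups. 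The main obstacle, and the step I would spend the most care on, is the rigorous treatment of the \emph{boundary} Denjoy–Wolff case $\tau\in\de\D$: there $\delta$ need not vanish at $\tau$ in the naive sense, only an angular-derivative statement holds, and the removability of the singularity of $p$ at $\tau$ together with the sign of $\Re p$ near $\tau$ has to be extracted from the Julia–Wolff–Carathéodory inequality rather than from a plain Taylor expansion. I would handle this by first proving the result for interior $\tau$ (where everything is elementary), and then obtaining the boundary case as a limit, approximating $\delta$ by semicomplete fields with interior fixed point and passing to the limit in the representation, using normal-families compactness for the Herglotz class $\{\,p:\Re p\ge0\,\}$.
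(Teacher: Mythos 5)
The paper does not prove this proposition at all: it is quoted verbatim from Berkson--Porta \cite{Berkson} and used as a black box in the proof of the next proposition, so there is no in-paper argument to compare yours against. Judged on its own, your outline has the right architecture (Denjoy--Wolff point $\tau$, define $p(z)=\delta^{\D}(z)/\bigl((z-\tau)(\bar\tau z-1)\bigr)$, flow-invariance for sufficiency, approximation of the boundary case by interior fixed points), and the sufficiency direction is sound in outline.

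There is, however, a genuine gap in your necessity argument at the one step that actually matters, namely why $\Re p\geq 0$. You propose to ``note that $\Re\bigl(\overline{\text{(inner normal)}}\cdot\delta\bigr)\geq 0$ on $\partial\D$ from the endomorphism property, and then invoke the Herglotz representation / minimum principle.'' A general semicomplete field (equivalently, a general infinitesimal generator) need not extend continuously to $\partial\D$, so the boundary inequality is not even well posed, and the minimum principle for $\Re p$ requires boundary regularity of $p$ that you have not established --- the argument is circular. The standard and correct route stays entirely in the interior: for $\tau\in\D$, normalize $\tau=0$, use the Schwarz lemma to get $|H_t(z)|\leq|z|$ for all $t\geq 0$, and differentiate at $t=0^{+}$ to obtain $\Re\bigl[\bar z\,\delta^{\D}(z)\bigr]\leq 0$ for every $z\in\D$, which with $\delta^{\D}(z)=-z\,p(z)$ is exactly $\Re p\geq 0$; for $\tau\in\partial\D$, replace the Schwarz lemma by Julia's lemma (invariance of horodisks at $\tau$ under each $H_t$), pass to the half-plane with $\tau\mapsto\infty$, and differentiate the resulting inequality $\Re\Phi_t(w)\geq\Re w$ at $t=0^{+}$. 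A further small inaccuracy: your closing ``hence on $\bar\D$ by continuity'' is unjustified, since $p$ is in general holomorphic only on the open disk (the proposition's phrasing with $\bar\D$ is itself loose; it is harmless in this paper only because the $\delta$'s actually used are rational).
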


The space of all semicomplete fields is essentially bigger than the space of
complete fields and it is infinite-dimensional. We restrict ourselves to the
case of fields which are holomorphic in $\Dc$ and tangent at the boundary
except one point $a\in\de\Dc$ that is called \emph{source}
\index{source}
below.

\begin{proposition}
A semicomplete vector field $\delta$ is holomorphic in $\Dc$ and tangent at
the boundary except one point $a\in\de\Dc$ if and only if it is of the form
\begin{equation}
 \delta =  
 \delta_{-2} \ell_{-2} + \delta_{-1} \ell_{-1} + \delta_0 \ell_{0} 
 + \delta_1 \ell_{1},\quad
	\delta_{-1}, \delta_0, \delta_1 \in \mathbb{R},\quad \delta_{-2}\leq 0.
 \label{Formula: delta with 1 simple pole}
\end{equation}
\end{proposition}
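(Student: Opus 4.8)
The plan is to combine the Berkson--Porta representation (Proposition \ref{Proposition: Berkson-Porta representation}) with a local analysis at the boundary, working in the half-plane chart $\psi_{\HH}$ where the algebra of the $\ell_n$ is simplest. One direction is immediate: if $\delta$ has the form \eqref{Formula: delta with 1 simple pole}, then in the half-plane chart $\delta^{\HH}(z) = \delta_{-2} z^{-1} + \delta_{-1} + \delta_0 z + \delta_1 z^2$ by \eqref{Formula: ell_n^H = ...}, which is holomorphic in $\HH$, real on $\mathbb{R}\setminus\{0\}$ (hence tangent to the boundary there), and has its only singularity at $z=0$; one then checks via Berkson--Porta, after transporting to the disk through \eqref{Formula: tau_H D}, that the sign condition $\delta_{-2}\le 0$ is exactly what makes $\Re p \ge 0$, so $\delta$ is semicomplete with source $a$ corresponding to the origin.

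For the converse, suppose $\delta$ is semicomplete, holomorphic in $\Dc$, and tangent at $\de\Dc$ except at one point $a$. First I would normalize: choose the half-plane chart so that $a\mapsto 0\in\de\HH=\mathbb{R}\cup\{\infty\}$, i.e.\ $a$ is sent to the origin (we may also arrange $\infty$ to be a regular boundary point). Tangency at the boundary means $\delta^{\HH}$ is real on $\mathbb{R}\setminus\{0\}$; since $\delta^{\HH}$ is holomorphic on $\HH$ and continuous up to $\mathbb{R}\setminus\{0\}$ with real boundary values there, the Schwarz reflection principle extends $\delta^{\HH}$ to a holomorphic function on $\C\setminus\{0\}$ (and across $\infty$ if $\infty$ is regular, so in fact on $\C P^1\setminus\{0\}$). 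Thus $\delta^{\HH}$ is meromorphic on the Riemann sphere with a single possible pole, at $z=0$. A meromorphic function on $\C P^1$ is rational, and being holomorphic away from $0$ and at $\infty$ it must be a polynomial in $1/z$ plus the regular Taylor part that is bounded at $\infty$ — but boundedness at $\infty$ together with the vector-field transformation law \eqref{Formula: tilde v = 1/dtau v(tau)} (a $(-1,0)$-differential picks up a factor $\tau'$, so ``bounded at $\infty$'' for the section forces the coefficient structure) limits the positive powers to degree at most $2$. Hence $\delta^{\HH}(z) = c_{-2} z^{-1} + c_{-1} + c_0 z + c_1 z^2$ for constants $c_j$; reality on $\mathbb{R}\setminus\{0\}$ forces all $c_j\in\mathbb{R}$, and rewriting in the $\ell_n$ basis gives \eqref{Formula: delta with 1 simple pole} with real $\delta_{-2},\delta_{-1},\delta_0,\delta_1$.

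It remains to extract the sign constraint $\delta_{-2}\le 0$, and this is the step I expect to be the main obstacle, since it is where semicompleteness (not merely holomorphy and tangency) is used. Here I would invoke Berkson--Porta: transport $\delta^{\HH}$ to the unit disk via $\tau^{\HH,\D}$ from \eqref{Formula: tau_H D}, so that $z=0$ in $\HH$ becomes $\tau=1\in\de\D$, and write the result in the form $\delta^{\D}(z) = (z-1)(z-1)p(z) = (z-1)^2 p(z)$ forced by the double boundary zero structure coming from the polynomial part; the condition $\Re p \ge 0$ on $\bar\D$ then translates, after unwinding $\tau^{\HH,\D}$, into an inequality on the leading Laurent coefficient of $\delta^{\HH}$ at $0$, which is precisely $\delta_{-2}\le 0$. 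Concretely one computes the boundary behavior of $\delta^{\HH}$ near $0$ from the disk side: near $\tau=1$, $\Re p\ge 0$ controls the sign of the ``inward'' component of the flow, and matching this to the residue-type coefficient $\delta_{-2}$ of the simple pole in the half-plane yields the claimed sign. The converse direction's sign check (sketched above) is the same computation run backwards, so the two halves close up.
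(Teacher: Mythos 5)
Your overall strategy (Schwarz reflection off the tangent boundary, plus Berkson--Porta for the sign) is the same family of argument the paper uses, but as written it has a genuine gap at the decisive step: you never establish that the singularity at $a$ is a \emph{simple} pole. The reflection argument gives a function holomorphic on $\C\setminus\{0\}$ and the regularity of the vector field at $\infty$ bounds the positive powers by $z^{2}$, but this says nothing about the point $z=0$: a priori the reflected function could have a pole of any order there, or even an essential singularity, and your ``Hence $\delta^{\HH}(z)=c_{-2}z^{-1}+c_{-1}+c_{0}z+c_{1}z^{2}$'' simply asserts the answer. This is exactly where semicompleteness has to do real work. In the paper's proof it enters through the Herglotz function $p$ of Proposition \ref{Proposition: Berkson-Porta representation}: tangency forces $i\,p$ to have real Taylor coefficients on the real axis of the inverted chart, and $\Re p\geq 0$ together with the normalization $p(1)=0$ then pins $p$ down to a linear Pick function $i\,p(\cdot)=cz$, $c>0$; that rigidity is what caps the pole at order one (equivalently, gives $\delta^{\tilde\HH}(z)=2cz^{3}$ in the chart sending $a\mapsto\infty$) and simultaneously produces the sign $\delta_{-2}\leq 0$. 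Your proposal defers Berkson--Porta to the sign only, so the pole-order claim is unsupported.

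There is also a concrete error in the final paragraph: you place the Berkson--Porta point $\tau$ at the pole location $a$ and write $\delta^{\D}(z)=(z-1)^{2}p(z)$. That cannot happen. By the Herglotz representation, a holomorphic $p$ with $\Re p\geq 0$ on $\D$ blows up at most to first order at any boundary point, so $(z-1)^{2}p(z)$ tends to $0$ as $z\to 1$ and in particular cannot have a pole there; with your normalization $\delta$ would \emph{vanish} at $a$ rather than have residue $\delta_{-2}\neq 0$. The point $\tau$ is the Denjoy--Wolff point of the semigroup, where $\delta$ has a (double) zero, and it is necessarily distinct from the pole $a$; in the paper's normalization $\tau$ sits at $1\in\de\D$ while $a$ sits at $-1$ (i.e.\ at $\infty$ in the auxiliary chart $-1/\psi^{\HH}$). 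Repairing your argument requires separating these two points and then running the Herglotz rigidity argument for $p$, at which point you essentially recover the paper's proof.
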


\begin{proof}
Without loss of generality assume that $\tau=1$ and $p(1)=0$ under the
conditions of Proposition \ref{Proposition: Berkson-Porta representation}, that
can be achieved by adding an appropriate complete field.

Consider now a chart 
\begin{equation}
	\psi^{\tilde \HH}:=-\frac{1}{\psi^{\HH}} :\Dc\map \HH,
\end{equation}
which is similar to 
$\psi^{\HH}$ 
defined above, but it maps the source point $a$ to infinity 
$\psi^{\tilde \HH}(a)=\infty$.
The function 
$\delta^{\tilde \HH}(z)$ 
can be expressed as
\begin{equation}
	\delta^{\tilde \HH}(z)
	= 2 i\, z^2 p \left(\frac{i-z}{i+z}\right),\quad z\in\HH.
\end{equation}
On the other hand, 
$\delta^{\tilde \HH}(z)$ 
is an entire function after the
Schwartz reflection to the lower half-plane. 
The Taylor series of 
$\delta^{\tilde \HH}(z)$ 
and of 
$i\,p \left(\frac{i-z}{i+z}\right)$
about $z=0$ have real coefficients because 
$\delta$ 
is tangent at the boundary. The conditions 
$\Re \left(p \left(\frac{i-z}{i+z}\right)\right) \geq 0$ 
and $p(1)=0$ leave
a unique possibility:
\begin{equation}
	i\, p \left(\frac{i-z}{i+z}\right) = c\,z,\quad c>0.
\end{equation} 
We conclude that
\begin{equation}
	\delta^{\tilde \HH}(z) = 2 c\, z^3,\quad c>0.
\end{equation}
After the coordinate change $z \mapsto 1/z$ to the standard half-plane chart and
adding a generic complete part we obtain 
\eqref{Formula: delta with 1 simple pole}.
\end{proof}

In the half-plane chart, we have 
\begin{equation}
 \delta^{\HH}(z) = 
 \frac{\delta_{-2}}{z} + \delta_{-1} + \delta_0 z + \delta_1 z^2
 ,\quad z\in \HH,\quad
 \delta_{-1}, \delta_0, \delta_1 \in \mathbb{R},\quad \delta_{-2}\leq 0.
 \label{Formula: delta with 1 simple pole}
\end{equation}
Thus, $\delta$ has a simple pole at $a$ with a non-positive residue.
The sum of the last three terms is just a complete field.

We notice that, if a vector field $v$ has a zero at $z_0$, then the parameter 
$v^{\psi}{}'(z_0)$ does not depend on the choice of chart. We call a zero of a
vector field \emph{attracting}
\index{attracting zero of vector field} 
if $\Re v^{\psi}{}'(z_0)> 0$.  

Analogous to the complete field define the parameter
\begin{equation}
	\Delta_3 := 
	18 \delta_{-1} \delta_{0} \delta_{1} 
	- 4 \delta_0^3 
	+ \frac{\delta_{-1}^2 \delta_0^2}{\delta_{-2}}
	- 4 \frac{\delta_{-1}^3 \delta_1}{\delta_{-2}} 
	- 27 \delta_{-2} \delta_{1}^2   
\end{equation}
As well, it is invariant with respect to choice of basis $\ell_n$, or
equivalently, M\"obious automorphisms of $\Dc$ preserving the pole position. We
distinguish 3 cases:
\begin{enumerate}
	\item $\Delta_3=0$. We call this case \emph{parabolic}. The vector field
	$\delta$ has either one zero at the boundary of order 2 and one
	zero of order 1 at another point at the boundary or one zero at the
	boundary of order 3.
	\index{parabolic $\sigma$}
	\item $\Delta_3>0$. We call this case \emph{hyperbolic}. The vector field
	$\delta$ has three different zeros at the boundary of order 1 each. The one
	between is attracting.
	\index{hyperbolic $\sigma$}
	\item $\Delta_3<0$. We call this case \emph{elliptic}. The vector
	field $\delta$ has one attracting zeros inside of $\Dc$ of order 1 and one zero
	at the boundary of order 1.
	\index{hyperbolic $\sigma$}
\end{enumerate}
We illustrate each type in the Fig.
\ref{Figure: Semicomplete vector fields}.
\begin{figure}[h]
\centering
  \begin{subfigure}[t]{0.33\textwidth}
	\centering
	\captionsetup{justification=centering}
  \includegraphics[width=5cm,keepaspectratio=true]
   	{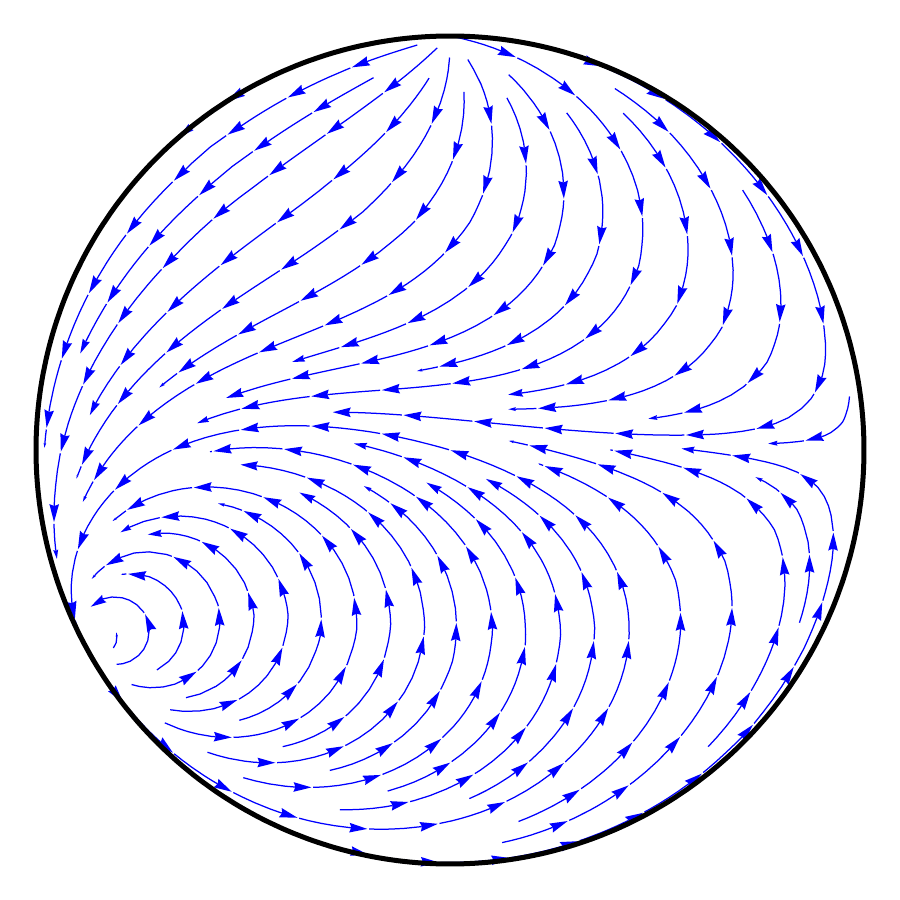}
    \caption{Parabolic case,\\ 
    {\centering$\Delta_3=0$}.}
  \end{subfigure}%
  ~  
  \begin{subfigure}[t]{0.33\textwidth}
	\centering
	\captionsetup{justification=centering}
  \includegraphics[width=5cm,keepaspectratio=true]
   	{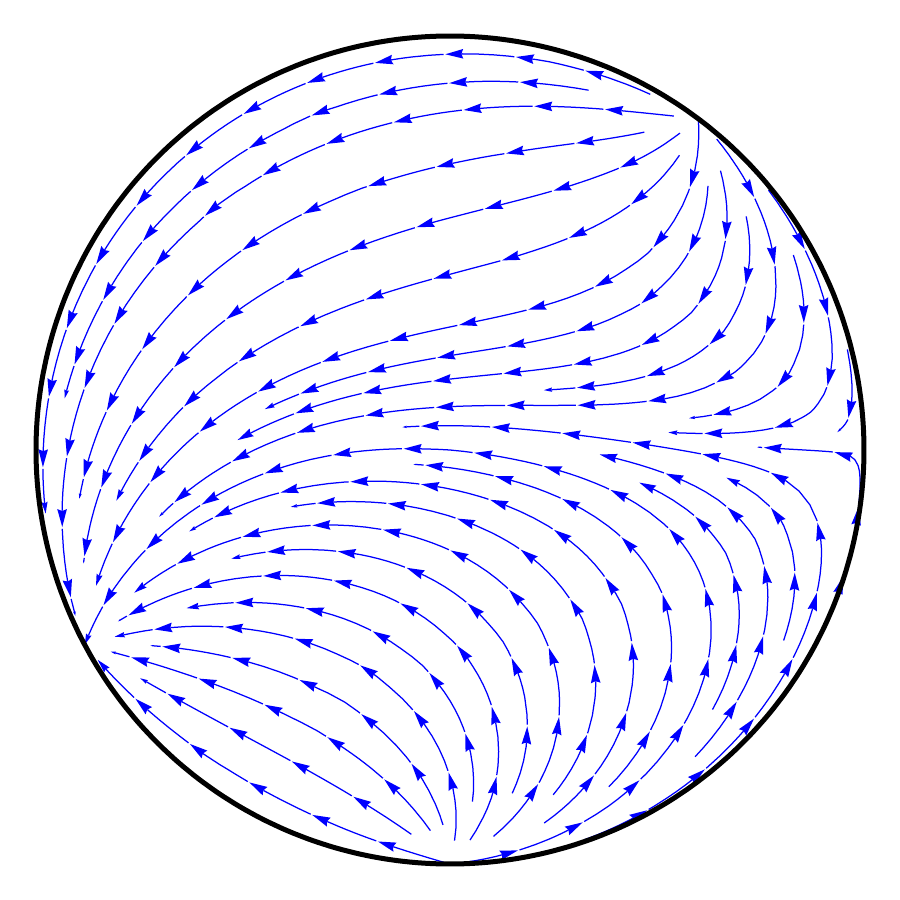} 
  	\caption{Hyperbolic case,\\  
  	$\Delta_3>0$.}  	
	\end{subfigure}%
	~
	\begin{subfigure}[t]{0.33\textwidth}
	\centering
	\captionsetup{justification=centering}
	\includegraphics[width=5cm,keepaspectratio=true]
		{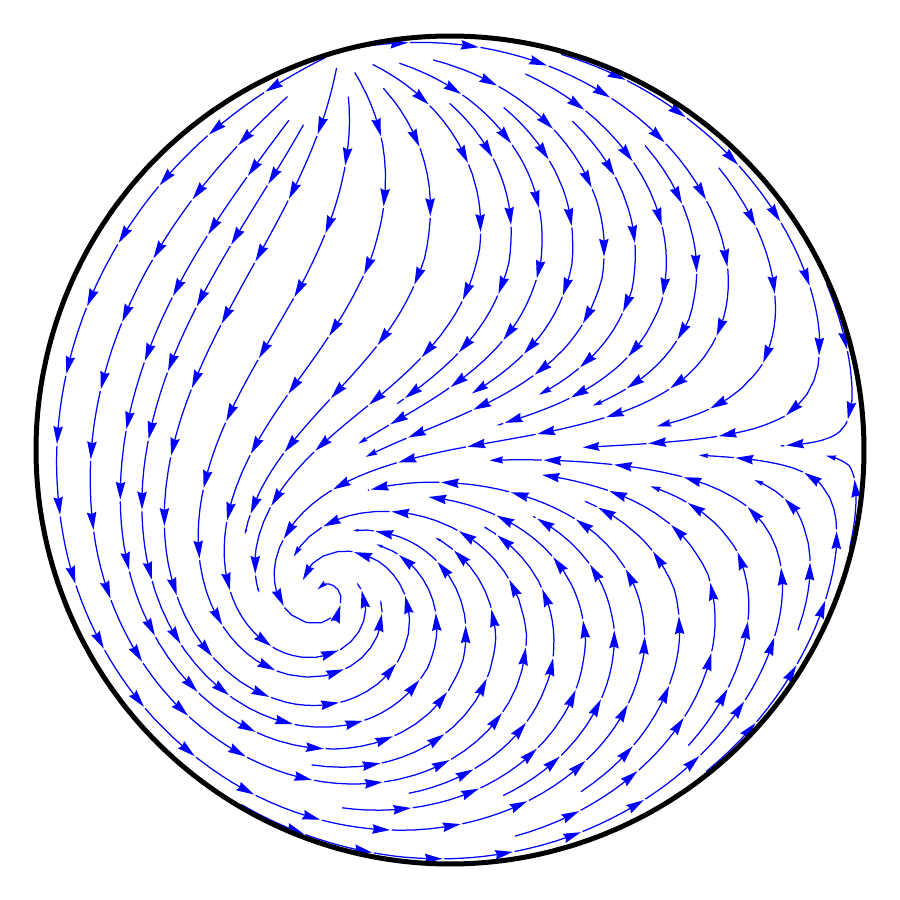}
		\caption{Elliptic case,\\ 
    $\Delta_3<0$.}
	\end{subfigure}%
 	\caption{Examples of three types of vector field $\delta$ defined by 
 	\eqref{Formula:	delta with 1 simple pole} 
 	in the unit disk chart. Each field has a simple pole at $\psi^{\D}(a)=1$. The
 	attracting zeros of order one correspond to singular points with divergent
 	arrows.
 	\label{Figure: Semicomplete vector fields}}
\end{figure}

The solution $H_t[\delta]$ of 
\eqref{Formula: d H = sigma H ds}
with $\delta$ from 
\eqref{Formula: delta with 1 simple pole}
is a conformal endomorphism 
$H_t[\delta]\colon \Dc\map\Dc\setminus\gamma_t$.
Here, 
$\{\gamma_t\}_{t\in[0,+\infty)}$
is a family of curves such as
$\gamma_t\subset\gamma_s$,
$t<s$ 
and
$\gamma_0=\emptyset$. 
The curves starts 
form $a$, lie along the flow line of 
$\delta$ 
and tend to the attracting (or degenerate) zero when 
$t\map+\infty$.
We call such family $\{\gamma_t\}_{t\in[0,+\infty)}$
a \emph{growing curve}.
\index{growing curve}

Let $v_t$ and $\tilde v_t$ be two holomorphic vector fields depending on time
continuously such as the following differential equations has continuously
differentiable solutions $F_t$ and $\tilde F_t$ in some time interval
\begin{equation}\begin{split}
	&\dot F_t = v_t \circ F_t,\\
	&\dot {\tilde F}_t = \tilde v_t \circ \tilde F_t.
\end{split}\end{equation}
Then, we can conclude that
\begin{equation}
	\frac{\de}{\de t}(F_t \circ {{\tilde F}_t}) = 
	\left(
		v_t + {F_t}_* \tilde v_t
	\right) \circ F_t \circ \tilde F_t
	\label{Formula: d G G = (v + G G v) G G}
\end{equation}
\begin{equation}
	\dot F_t^{-1}  = 
	- \left( {F_t^{-1}}_* v_t \right) \circ F_t^{-1}. 
	\label{Formula: dot F^-1 = -v F^-1 v_t F^-1}
\end{equation}
in the same interval and in the region of $\Dc$ where $F_t$ and $\tilde F_t$ are
defined.
The latter relation can be reformulated in a fixed chart $\psi$ as
\begin{equation}
	\dot {F_t^{-1}}^{\psi}(z)  = 
	- \left(\left( {F_t^{-1}}\right)^{\psi}\right)'(z)~  v^{\psi}_t(z). 
\end{equation}

\section{$(\delta,\sigma)$-L\"owner chain (slit holomorphic L\"owner chain)}
\label{Section: Slit holomorphic Loewner chain}

\subsection{Definition and basic properties}
\label{Section: Definition and basic properties}

Consider the autonomous initial value problem 
\begin{equation}
 \dot G_t = \delta \circ G_t + \dot u_t\, \sigma \circ G_t ,\quad 
 G_0 = \id,\quad t\in[0,+\infty),
 \label{Formula: d G = delta G dt + sigma G du}
\end{equation}
for a conformal map $G_t$ 
defined for a complete vector field $\sigma$, for semicomplete or
antisemicomplete vector field $\delta$, and for a continuously differentiable
function $u_t:[0,+\infty)\map \mathbb{R}$. 

Avoid now the requirement of
differentiability of $u_t$ by using the following method.
Define first a conformal map 
\begin{equation}
	g_t := H_{u_t}[\sigma]^{-1} \circ G_t,
	\label{Formula: g_t = H[sigma] circ G_t}
\end{equation}
where $H[\sigma]$ is defined by 
\eqref{Formula: d H = sigma H ds}.
Thanks to 
\eqref{Formula: d G G = (v + G G v) G G}
the map $g_t$ satisfies 
\begin{equation}
 \dot g_t = (H_{u_t}[\sigma]^{-1}_* \delta) \circ g_t,\quad g_0=\id.
 \label{Formula: d g = h delta g dt}
\end{equation}

The inverse is also true: 
\eqref{Formula: d G = delta G dt + sigma G du} 
can be obtained from 
\eqref{Formula: d g = h delta g dt}. 
But 
(\eqref{Formula: d g = h delta g dt} 
is defined for a more general set of driving functions, not necessarily
continuously differentiable. Thus, we can define 
\begin{equation}
	G_t:=H_{u_t}[\sigma]\circ g_t
	\label{Formula: G = H circ g}
\end{equation}
for non-differentiable driving functions $u_t$. The use of $g_t$ is
technically less convenient than $G_t$, however, the advantage is the
possibility to consider not differentiable driving functions. This motivates
the definition as follows.

\begin{definition}
Let $\delta$ and $\sigma$ be holomorphic vector fields of the form
\begin{equation}\begin{split}
	\delta =&
	\delta_{-2} \ell_{-2} + \delta_{-1} \ell_{-1} + \delta_0 \ell_{0} 
	+ \delta_1 \ell_{1},\quad
	\delta_{-2},\delta_{-1}, \delta_0, \delta_1 \in \mathbb{R},\quad
	\delta_{-2}\neq 0\\
	\sigma =&
	\sigma_{-1} \ell_{-1} + \sigma_0 \ell_{0} + \sigma_1 \ell_{1},\quad
	\sigma_{-1}, \sigma_0, \sigma_1 \in \mathbb{R},\quad \sigma_{-1}\neq 0,\\
	\label{Formula: delta and sigma for Lowner chain}
\end{split}\end{equation}
and let $u_t$ be a continuous function 
$u:[0,\infty)\map \mathbb{R}$.  
Then the initial value problem
\eqref{Formula: d g = h delta g dt}
is called the 
\emph{$(\delta,\sigma)$-L\"owner equation},
\index{$(\delta,\sigma)$-L\"owner equation} 
or equivalently,
the \emph{slit holomorphic L\"owner equation}, 
\index{slit holomorphic L\"owner equation}
and its solution $\{g_t\}_{t\in[0,+\infty)}$ given by 
\begin{equation}
	\{u_t\}_{t\in[0,+\infty)} \mapsto \{g_t\}_{t\in[0,+\infty)}
	\label{Formula: u_t -> g_t}
\end{equation}
or $\{G_t\}_{t\in[0,+\infty)}$ given by
\begin{equation}
	\{u_t\}_{t\in[0,+\infty)} \mapsto
	\{G_t:=H_{u_t}[\sigma]\circ g_t\}_{t\in[0,+\infty)}
	\label{Formula: u_t -> G_t}
\end{equation}
is called the \emph{$(\delta,\sigma)$-L\"owner chain},
or equivalently,
the \emph{slit holomorphic L\"owner chain}. 
The chain is called \emph{forward} 
\index{forward chain}
if $\delta$ is antisemicomplete
($\delta_{-2}> 0$), and it is called \emph{reverse} 
\index{reverse chain}
if $\delta$ is semicomplete ($\delta_{-2}<0$).
\end{definition} 

We do not consider the degenerate cases when $\delta_{-2}=0$ or $\sigma_{-1}=0$,
because the most of the proposition below are not satisfied. If 
$\delta_{-2}=0$,
then we have just M\"obious automorphisms. The case 
$\sigma_{-1}=0$
is considered in Section
\ref{Section: Degenerate case}.
Henceforth, we always assume, that $\delta$ and $\sigma$ are of the form 
\eqref{Formula: delta and sigma for Lowner chain}.
\index{$\delta$}
\index{$\sigma$}
\index{forward case}
\index{reverse case}

The well-known chordal, radial, and dipolar equations are special cases
of
\eqref{Formula: d g = h delta g dt}. 
We summarize them in Table
\ref{Table: Calssical cases}.
\begin{table}[h]
\begin{center}
\begin{tabular}{|c|c|c|}
 \hline 
 Equation type & $\delta$ & $\sigma$ \\ 
 \hline 
 Chordal & $2\ell_{-2}$ & $-\ell_2$ \\ 
 \hline 
 Dipoloar & $\frac12\ell_{-2}-\frac12\ell_0$ 
 	& $-\frac12\ell_{-1}+\frac12\ell_1$ \\
 \hline 
 Radial & $\frac12\ell_{-2}+\frac12\ell_0$ 
 	& $-\frac12\ell_{-1}-\frac12\ell_1$ \\ 
 \hline 
 ABP, see \cite{Ivanov2012a}  & $\frac12\ell_{-2}$ 
 	& $-\frac12\ell_{-1}-\frac12\ell_1$ \\ 
 \hline 
\end{tabular} 
\end{center}
\caption{Known cases of forward $(\delta,\sigma)$-L\"owner chains.}
\label{Table: Calssical cases}
\end{table}

We call first three cases in the table \emph{classical} 
\label{classical L\"owner equations}
and discuss each of them in the corresponding sections of Chapter 
\ref{Chapter: Classical cases}.
Each of the three classical cases are combinations of $\delta$ and $\sigma$ of
the same type and with the identical positions of zeros. This leads to simple
restrictions for corresponding maps $G_t$ and $g_t$ that we call
\emph{normalization}.
\index{normalization}

We apply the general theory of L\"owner chains to solve the initial value
problem \eqref{Formula: d g = h delta g dt}.
According to the terminology from 
\cite{Contreras2012} 
the vector field
$(H_{u_t}[\sigma]_*^{-1} \delta)^{\D}$ 
in the unit disc chart is Herglotz in the reverse case and minus Herglotz in
the forward case. We conclude that the solution 
$\{g_t\}_{t\in[0,+\infty)}$ 
exists, is unique, and is given by the conformal maps 
$g_t\colon \Dc\map\Dc\setminus\K_t$ 
in the reverse case and by the conformal maps 
$g_t\colon \Dc\setminus\K_t\map\Dc$
in the forward case for some collection 
$\{\K_t\}_{t\in[0,+\infty)}$ 
of subsets
$\K_t\subset\mathbb{C}$. 
Moreover, in the forward case, the collection 
$\{\K_t\}_{t\in[0,+\infty)}$  
is strictly growing $\K_s \subset \K_t$,
$s<t$. In the Fig. 
\ref{Figure: G map example},
we show how a typical map $G_t$ acts in the unit disk chart. Table
\ref{Table: Sin driving function examples}
shows the limit $t\map+\infty$ of the hull $\K_t$ (which is a curve in these
cases) for some fixed choice of the driving function and for various choices of
$\delta$ and $\sigma$. Some exact solution of the Chordal L\"owner equation
are considered in 
\cite{Kager2004}.

\begin{center}
\begin{figure}[h]
\centering
	\includegraphics[keepaspectratio=true]
    	{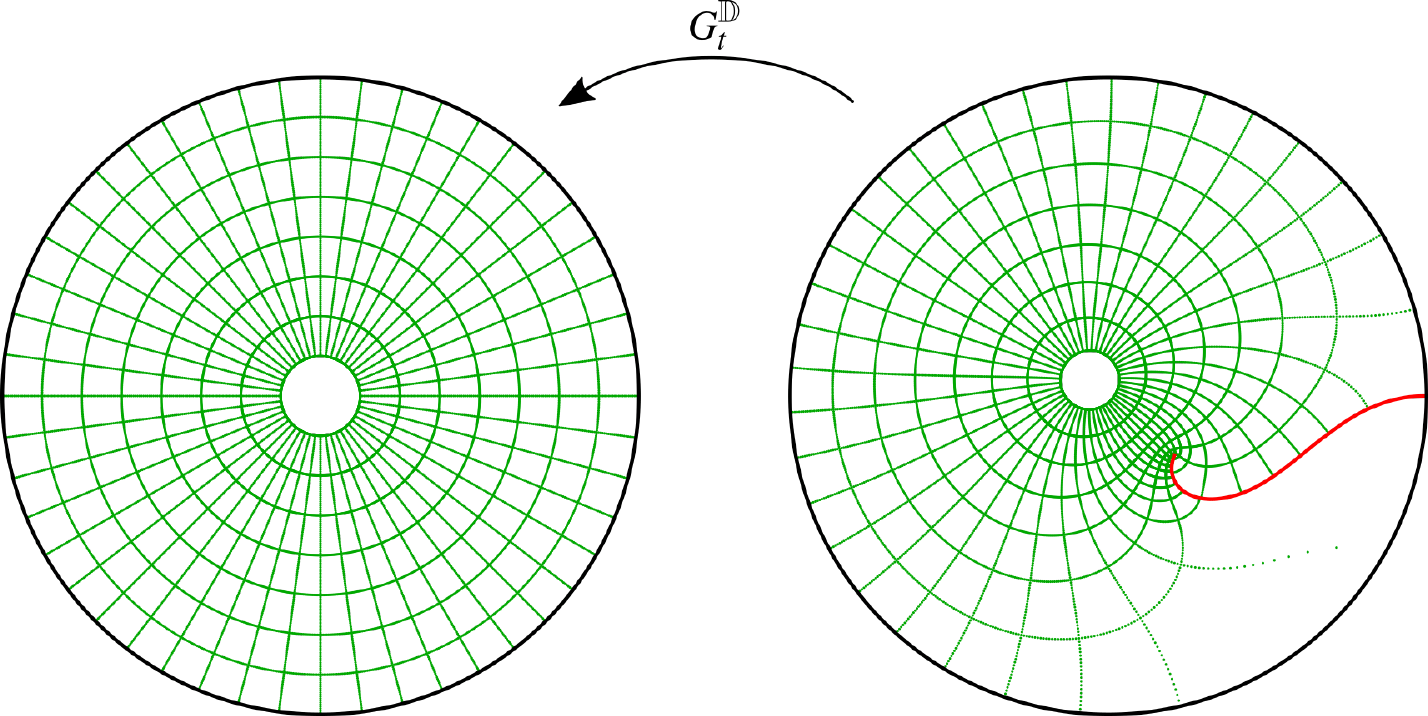}
  \caption{
  This is how a typical conformal map $G_t$ acts in the unit disk chart
  $\D$ for some choice of $\delta$, $\sigma$, the driving function $u$, and
  $t$ in the forward case.
  The red line is the hull $\K_t$, which is a simple curve (slit) in this case.
  In the reverse case, the map $G_t$ acts in the opposite direction, see
  Proposition
  \ref{Proposition: Forward - Inverse LE connection}.
  \label{Figure: G map example}}
\end{figure}
\end{center}

\newcolumntype{R}[2]{%
    >{\adjustbox{angle=#1,lap=\width-(#2)}\bgroup}%
    l%
    <{\egroup}}
\newcommand*\rot{\multicolumn{1}{R{90}{1em}}}

\begin{table}[h]
\begin{center}
\begin{tabular}{ c c c c }
  & $\sigma=\ell_{-1}$ & $\sigma=\ell_{-1}-\ell_{1}$ & 
  $\sigma=\ell_{-1}+\ell_{1}$\\ 
	\rot{$~~~~~~~~~~~~~~\delta=2\ell_{-2}$} & 
 	\includegraphics[width=4.5cm,keepaspectratio=true]
    	{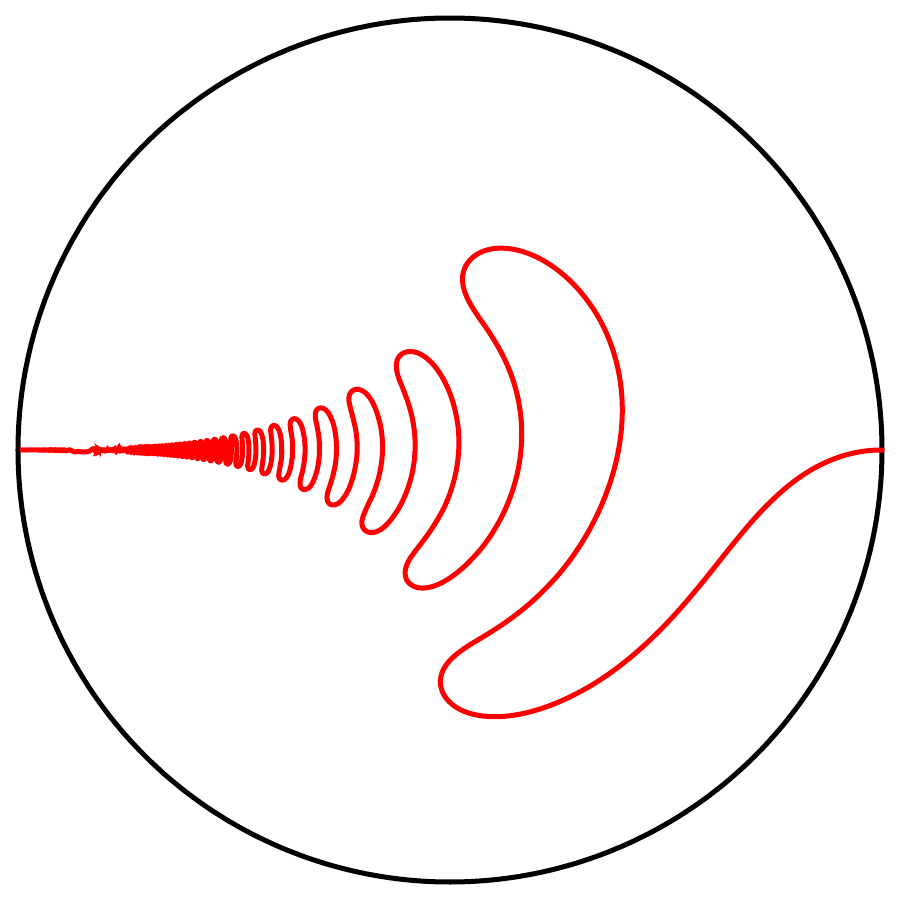} & 
 	\includegraphics[width=4.5cm,keepaspectratio=true]
    	{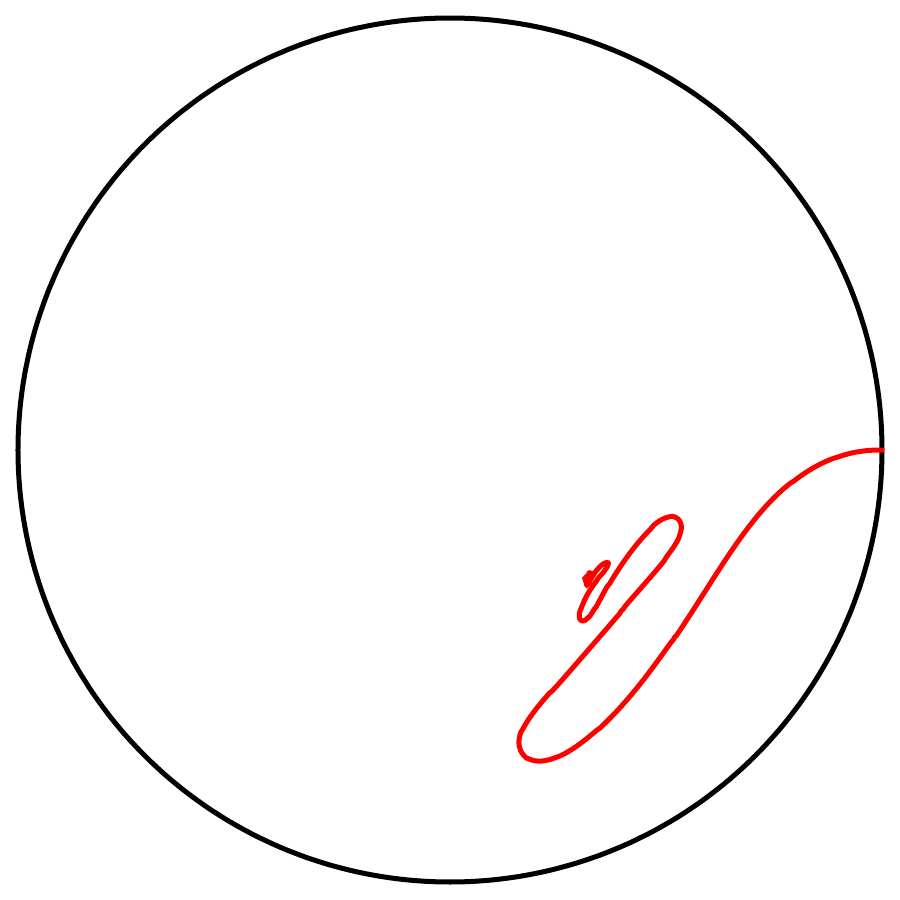} & 
 	\includegraphics[width=4.5cm,keepaspectratio=true]
    	{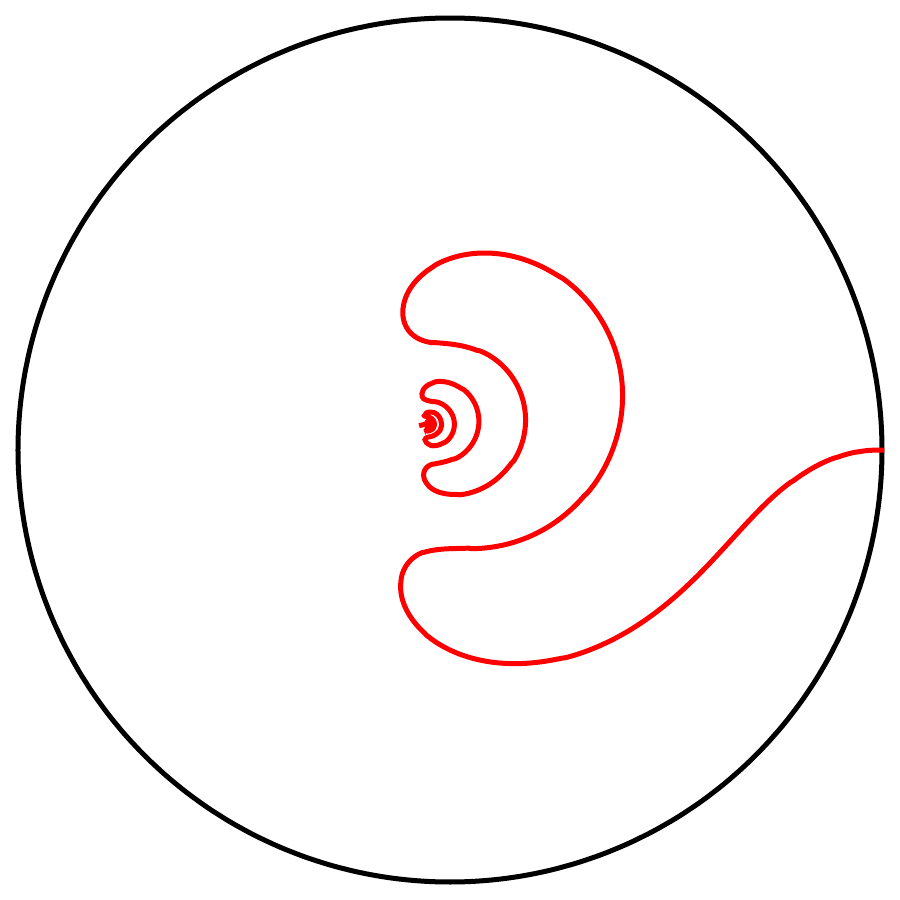} \\ 
	\rot{$~~~~~~~~~\delta=2\ell_{-2}-2\ell_0$} & 
 	\includegraphics[width=4.5cm,keepaspectratio=true]
    	{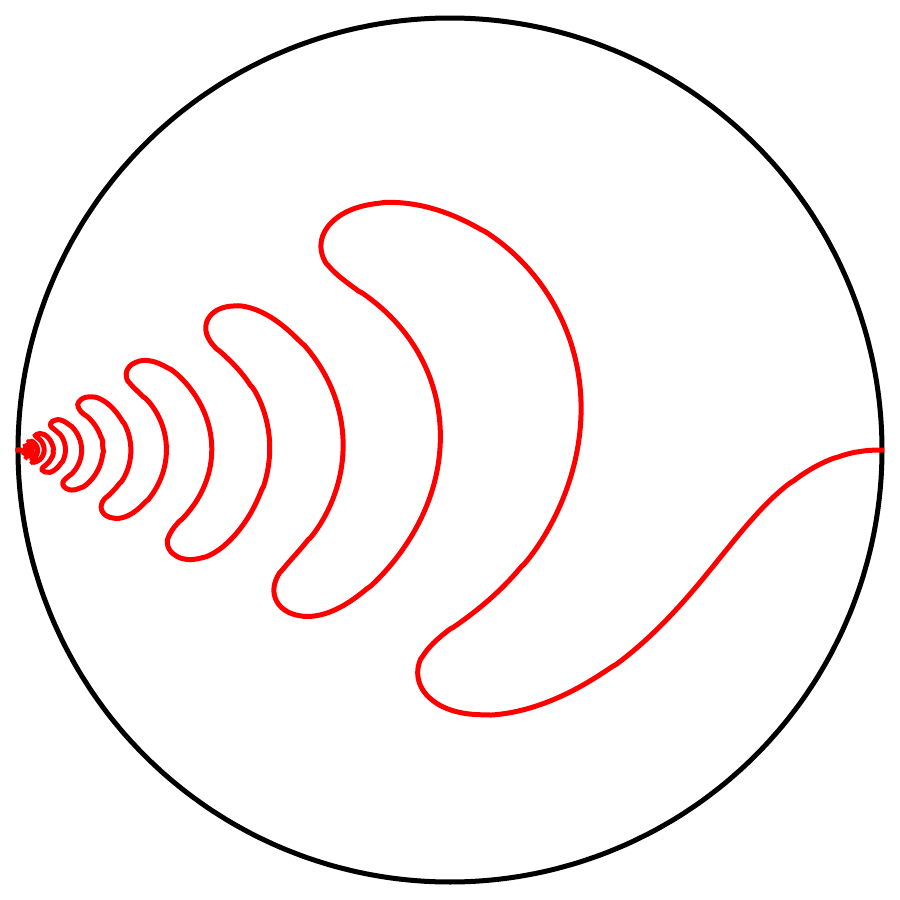} & 
 	\includegraphics[width=4.5cm,keepaspectratio=true]
    	{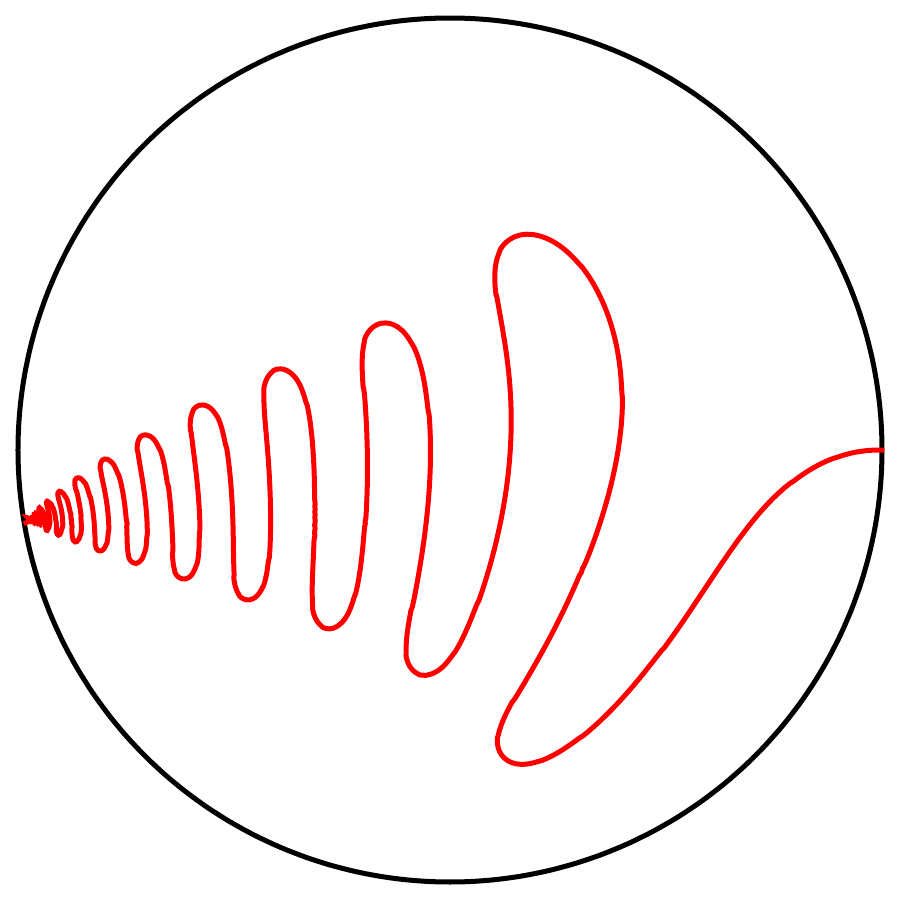} & 
 	\includegraphics[width=4.5cm,keepaspectratio=true]
    	{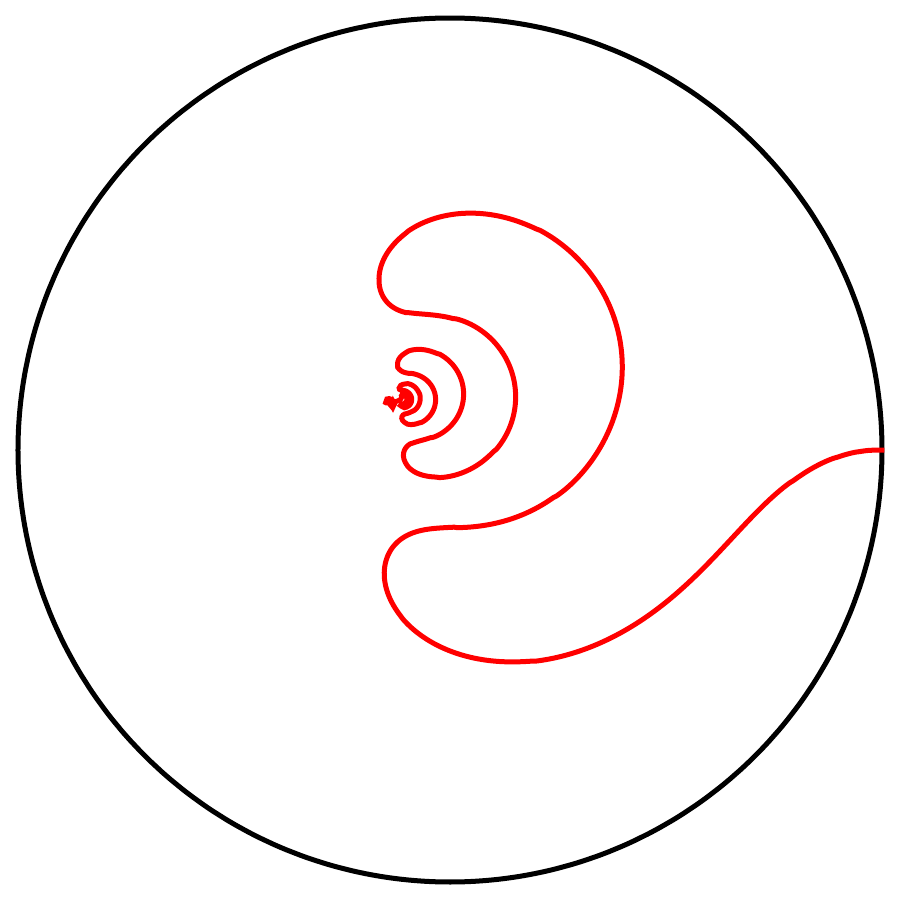} \\ 
	\rot{$~~~~~~~~~\delta=2\ell_{-2}+2\ell_0$} & 
 	\includegraphics[width=4.5cm,keepaspectratio=true]
    	{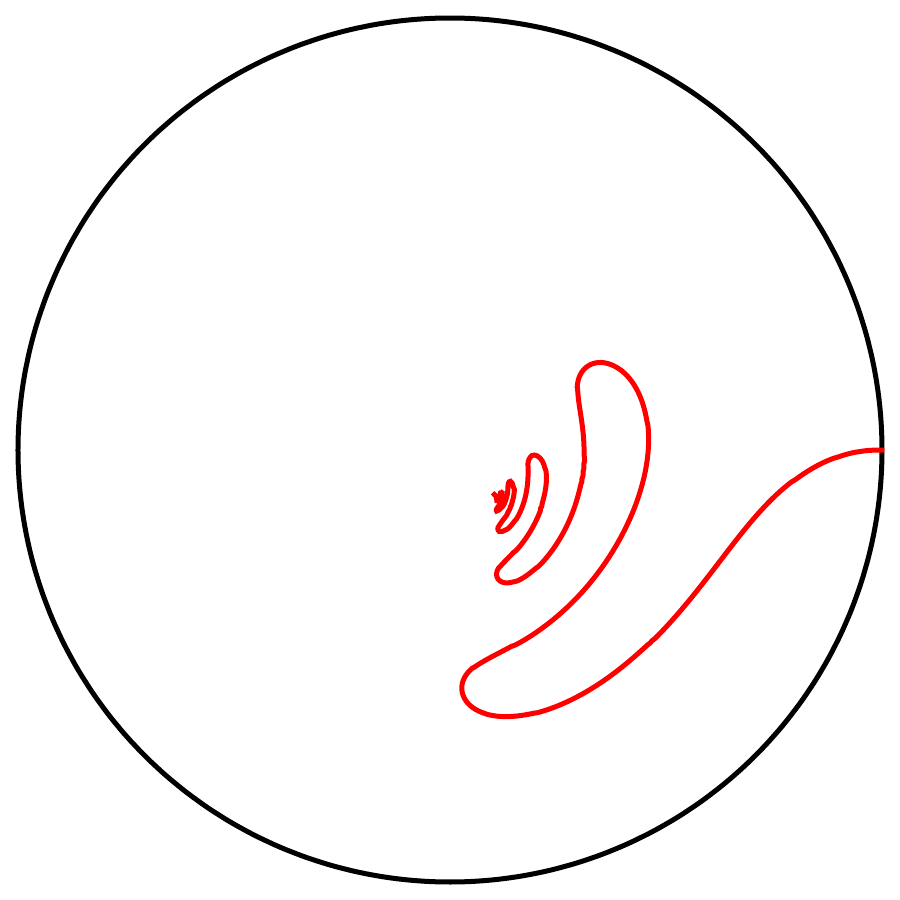} & 
 	\includegraphics[width=4.5cm,keepaspectratio=true]
    	{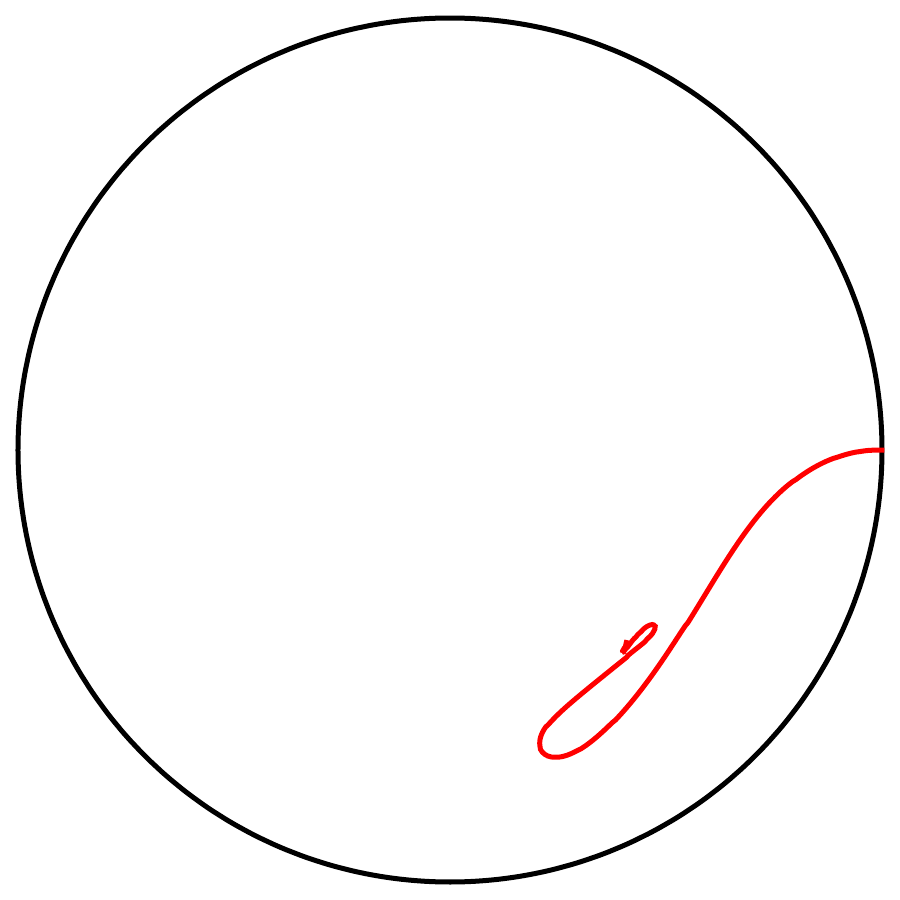} & 
 	\includegraphics[width=4.5cm,keepaspectratio=true]
    	{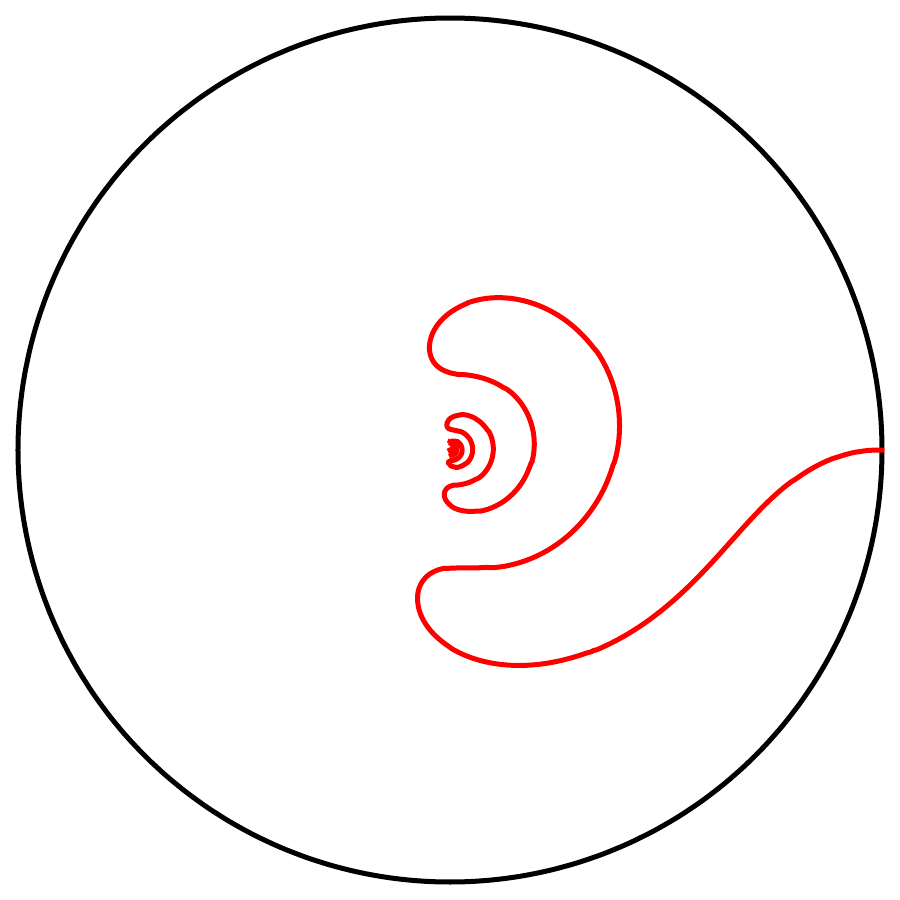} \\ 
\end{tabular}
\end{center}
\caption{Slits of ($\delta,\sigma$)-L\"owner chains ($t\map+\infty$) in the
unit disk chart driven by $u_t=\sin 20\, t$.
We fix
the choice of $\delta$ row-wise, and we fix the choice of $\sigma$ column-wise.
So, the three cases on the diagonal are classical.}
\label{Table: Sin driving function examples}
\end{table}

Let
$G_{t,s}$ be the solution of 
\eqref{Formula: d g = h delta g dt}
and
\eqref{Formula: G = H circ g}
for $t\in[0,+\infty)$ parametrized by 
$s\in[0,+\infty)$
with the initial condition $G_{s,s}=\id$. Hence, 
$\{G_{t,s}\}_{t\in[s,+\infty)}$ 
is a $(\delta,\sigma)$-L\"owner chain with the
driving function
\begin{equation}
	u_{t,s}:=u_{t} - u_s,\quad t\in[s,+\infty).
	\label{Formula: u_t,s = u_t - u_s}
\end{equation}
In particular, 
$G_{t,0}=G_t$, $t\in[0,+\infty)$.
We see that 
$G_{t+s,s}$ is also defined for negative values of $t$ 
($t\in[-s,0]$, $s\geq 0$)
and
\begin{equation}
	G_{t,s}=G_t \circ G_s^{-1},\quad t,s\geq 0
\end{equation}
Besides, 
\begin{equation}
	G_{t,s} = G_{t,r} \circ G_{r,s},\quad t,s,r \geq 0.
	\label{Formula: G_t,s = G_t,r circ G_r,s}
\end{equation}
In particular,
\begin{equation}
	G_{t} = G_{t,s} \circ G_s,\quad s,t\geq 0,
	\label{Formula: G_t = G_t,s circ G_s}
\end{equation}
and
\begin{equation}
	G_{t,s} = G^{-1}_{s,t},\quad s,t\geq 0.
\end{equation}
Analogous relations are valid for the differentiable version of the chain
$g_{s,t}$
\begin{equation}
	g_{t,s} = g_{t,r} \circ g_{r,s},\quad
	g_{t,s} = g^{-1}_{s,t},\quad
	g_{t,s}:=
	g_t\circ g_s^{-1},\quad t,s,r \geq 0.
\end{equation}
The collection 
$\{ g_{t,s}^{\D} \}_{0\leq s\leq t<+\infty}$ 
is known as an `evolution family'.

The inverse maps 
$G_t^{-1}$ 
and 
$g_t^{-1}$ 
satisfy the partial differential equations:
\begin{equation}
	\dot G_t^{-1} = - G_t^{-1}{}_* \left( \delta 
	+ \sigma \dot u_t \right)\circ G_t^{-1}
\end{equation}
for continuously differentiable $u_t$, and
\begin{equation}
	\dot g_t^{-1} = 
	-\left( g_t^{-1}{}_*  H_{u_t}[\sigma]_* \delta \right)\circ g_t^{-1}
\end{equation}
for continuous $u_t$.

\begin{proposition}
Fix $T\geq 0$.
Let $\{G_t\}_{t\in[0,T]}$ be a forward $(\delta,\sigma)$-L\"owner chain 
driven by 
$\{u_t\}_{t\in[0,T]}$,
and let
$\{\tilde G_t\}_{t\in[0,T]}$ 
be a reverse 
$(-\delta,\sigma)$-L\"owner chain
driven by 
\begin{equation}
	\tilde u_t:=u_{T-t}-u_T,\quad t\in[0,T].
\end{equation}
Thus, 
\begin{equation}
	\tilde G_T = G_T^{-1},
	\label{Formula: tilde G = G^-1}
\end{equation}  
and
\begin{equation}
	\tilde g_T = H_{u_T}[\sigma] \circ g^{-1}_{T} \circ H_{u_T}[\sigma]^{-1}.
	\label{Formula: tilde g = H g^-1 H}
\end{equation}  
\label{Proposition: Forward - Inverse LE connection} 
\end{proposition}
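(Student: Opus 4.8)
The plan is to verify the two identities by exploiting the relabeling of time in the Löwner equation and the group property \eqref{Formula: G_t,s = G_t,r circ G_r,s} together with the conversion formula \eqref{Formula: G = H circ g}. First I would reduce everything to statements about the $G$-chain, since $\tilde g_T$ is recovered from $\tilde G_T$ via \eqref{Formula: g_t = H[sigma] circ G_t}: by definition $\tilde g_T = H_{\tilde u_T}[\sigma]^{-1}\circ\tilde G_T$, and $\tilde u_T = u_0 - u_T = -u_T$, so $\tilde g_T = H_{-u_T}[\sigma]^{-1}\circ\tilde G_T = H_{u_T}[\sigma]\circ\tilde G_T$ using the one-parameter group property $H_{-s}[\sigma]^{-1} = H_s[\sigma]$. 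Granting \eqref{Formula: tilde G = G^-1}, this gives $\tilde g_T = H_{u_T}[\sigma]\circ G_T^{-1}$, and then substituting $G_T = H_{u_T}[\sigma]\circ g_T$ (so $G_T^{-1} = g_T^{-1}\circ H_{u_T}[\sigma]^{-1}$) yields exactly \eqref{Formula: tilde g = H g^-1 H}. So the whole proposition rests on \eqref{Formula: tilde G = G^-1}.

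For \eqref{Formula: tilde G = G^-1} I would argue as follows. The reverse $(-\delta,\sigma)$-chain driven by $\tilde u_t$ has, by \eqref{Formula: d G = delta G dt + sigma G du}, the equation $\dot{\tilde G_t} = -\delta\circ\tilde G_t + \dot{\tilde u}_t\,\sigma\circ\tilde G_t$ with $\tilde G_0 = \id$; note $\dot{\tilde u}_t = -\dot u_{T-t}$. Now consider the time-reversed family $K_t := G_{T-t}\circ G_T^{-1} = G_{T-t,T}$ for $t\in[0,T]$, which by the evolution-family relations \eqref{Formula: G_t,s = G_t,r circ G_r,s}–\eqref{Formula: G_t = G_t,s circ G_s} is well defined (the chain $G_{t,s}$ being defined for $t\geq -s$), satisfies $K_0 = \id$, and $K_T = G_0\circ G_T^{-1} = G_T^{-1}$. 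It remains to show $K_t$ solves the same initial value problem as $\tilde G_t$; then uniqueness of the solution — guaranteed by the Herglotz/minus-Herglotz classification invoked after \eqref{Formula: u_t -> G_t} — forces $K_t = \tilde G_t$, and evaluating at $t=T$ gives \eqref{Formula: tilde G = G^-1}. To compute $\dot K_t$ I would use \eqref{Formula: dot F^-1 = -v F^-1 v_t F^-1} for the $G_T^{-1}$ factor is unnecessary since $G_T^{-1}$ is $t$-independent; rather I differentiate $t\mapsto G_{T-t}$ via the chain's own ODE and then right-compose with the fixed map $G_T^{-1}$, using \eqref{Formula: d G G = (v + G G v) G G} or simply the substitution $t\mapsto T-t$ in \eqref{Formula: d G = delta G dt + sigma G du}. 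Differentiating, $\frac{\de}{\de t}G_{T-t} = -(\delta + \dot u_{T-t}\sigma)\circ G_{T-t} = -(\delta - \dot{\tilde u}_t\,\sigma)\circ G_{T-t}$, hence $\dot K_t = -(\delta - \dot{\tilde u}_t\,\sigma)\circ K_t = -\delta\circ K_t + \dot{\tilde u}_t\,\sigma\circ K_t$, which is precisely the reverse $(-\delta,\sigma)$-equation with driving function $\tilde u_t$. Since both $K$ and $\tilde G$ start at $\id$, uniqueness finishes the argument.

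The main obstacle I anticipate is the careful bookkeeping of signs and of the composition order when differentiating the reparametrized chain: one must be sure that replacing $t$ by $T-t$ and right-composing with the constant map $G_T^{-1}$ really does turn the forward antisemicomplete equation into the reverse semicomplete one (the vector field $-\delta$ is semicomplete precisely when $\delta$ is antisemicomplete, i.e. in the forward case $\delta_{-2}>0$, so the target really is a \emph{reverse} chain, consistent with the statement). A secondary point requiring care is the appeal to uniqueness: the solution of \eqref{Formula: d g = h delta g dt} was asserted to be unique via the Herglotz theory of \cite{Contreras2012} only for the differentiable-free $g$-formulation, so strictly I would first transport the whole comparison to the $g$-level — define $k_t := H_{\tilde u_t}[\sigma]^{-1}\circ K_t$, check it solves \eqref{Formula: d g = h delta g dt} with $-\delta$ and $\tilde u$, and invoke uniqueness there — but this is a routine translation using \eqref{Formula: g_t = H[sigma] circ G_t} and \eqref{Formula: d g = h delta g dt}, not a genuine difficulty. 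Everything else is direct substitution.
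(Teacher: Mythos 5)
Your proof is correct and follows essentially the same route as the paper: you build the time-reversed family, check that it solves the reverse $(-\delta,\sigma)$-equation with driving function $\tilde u_t=u_{T-t}-u_T$, and conclude by uniqueness of the initial value problem — and the map $k_t=H_{\tilde u_t}[\sigma]^{-1}\circ G_{T-t}\circ G_T^{-1}$ you propose for the rigorous $g$-level version is exactly the paper's $\hat g_t=H_{u_T}[\sigma]\circ g_{T-t,T}\circ H_{u_T}[\sigma]^{-1}$. The only organizational difference is that you first run the computation at the $G$-level (assuming differentiable $u$) and then transport it to the $g$-level, whereas the paper differentiates $\hat g_t$ directly.
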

\begin{proof}
Set 
\begin{equation}
	\hat g_t: = H_{u_T}[\sigma] \circ g_{T-t,T} \circ H_{u_T}[\sigma]^{-1}
	, \quad t\in[0,T],
\end{equation}
and observe that 
$\hat g_T=\tilde g_T$ 
because they are the solutions of the same initial value problem. Indeed,
\begin{equation}\begin{split}
  \frac{\de}{\de t} \hat g_{t} 
	=& \frac{\de}{\de t} 
	\left(  
		H_{u_T}[\sigma] \circ g_{T-t} \circ g^{-1}_T \circ H_{u_T}[\sigma]^{-1}
	\right) 
	=\\=&
	-\left( H_{u_T}[\sigma]_* H_{u_{T-t}}[\sigma]^{-1}_* \delta \right) 
		\circ H_{u_T}[\sigma] \circ g_{T-t} \circ g^{-1}_T \circ H_{u_T}[\sigma]^{-1}
	=\\=&
  \left( H_{u_{T-t}-u_T}[\sigma]^{-1}_* (-\delta) \right) \circ	\hat g_t
  , \quad t\in[0,T].
	\label{Formula: 1 from Proposition: Forward - Inverse LE connection}
\end{split}\end{equation}
and $\hat g_0 = \id$.
Putting $t=T$ we obtain
\eqref{Formula: tilde g = H g^-1 H}
and 
\eqref{Formula: tilde G = G^-1}.
\end{proof}

Given a pair of $\delta$ and $\sigma$, we denote by
$\mathcal{G}[\delta,\sigma]\subset\mathcal{G}$ 
the collection of all endomorphisms and inverse endomorphisms $G_t$ of $\Dc$ 
from all possible $(\delta,\sigma)$-L\"owner chains for all possible continuous
driving functions $u$ and $t\in[0,+\infty)$.

Proposition 
\ref{Proposition: Forward - Inverse LE connection},
in particular, states that the collection
$\mathcal{G}[\delta,\sigma]$ coincides with $\mathcal{G}[-\delta,\sigma]$ up to
inversion. We will consider mostly forward chains below. The advantage of the
forward case is that the hulls are growing $\K_s\subset \K_t$, $s<t$.

The equation
\eqref{Formula: d g = h delta g dt}
together with
\eqref{Formula: G = H circ g}
induce a surjective map 
$C^0[0,T]\map\mathcal{G}[\delta,\sigma]$
($T\in[0,+\infty)$)
acting as
$\{u_t\}_{t\in[0,T]}\mapsto G_T\in\mathcal{G}[\delta,\sigma]$,
where $C^0[0,T]$ is the set of all continuous functions
$u\colon[0,T]\map\mathbb{R}$ in the interval $[0,T]$, and such that $u_0=0$.

\subsection{General properties of ($\delta,\sigma$)-L\"owner chain}
\label{Section: General properties of Loewner chain}

Relations
\eqref{Formula: u_t,s = u_t - u_s}
-
\eqref{Formula: 1 from Proposition: Forward - Inverse LE connection}
are valid for arbitrary semicomplete or antisemicomplete vector fields
$\delta$, not necessarily of the form 
\eqref{Formula: delta and sigma for Lowner chain}. 
But, the theorem below uses essentially that $\delta$ has only one simple pole,
and that is tangent at other points of the boundary. The result of the
theorem motivates the term `slit L\"owner chain'.

\begin{theorem}
\label{Theorem: The master theorem}
Let $\{g_t\}_{t\in[0,+\infty)}$ be a forward $(\delta,\sigma)$-L\"owner chain  
with a driving function 
$\{u_t\}_{t\in[0,+\infty)}$ 
and with the hulls 
$\{\K_t\}_{t\in[0,+\infty)}$.
Let also $\tilde \delta$ and $\tilde \sigma$ be as in 
\eqref{Formula: delta and sigma for Lowner chain}
with $\tilde \delta_{-2}>0$.
Then the following statements hold:
\begin{enumerate} [1.]
\item 
There exists a 
$(\tilde \delta,\tilde \sigma)$-L\"owner chain 
$\{\tilde g_t\}_{t\in[0,\tilde T)}$ 
for some maximal 
$\tilde T\in(0,+\infty]$ 
driven by a function 
$\{\tilde u_{\tilde t}\}_{\tilde t\in[0,\tilde T)}$
such that its hulls are
\begin{equation}
	\tilde \K_{\tilde t} 
	= \K_{\lambda^{-1}_{\tilde t}},\quad \tilde t \in[0,\tilde T)
	\label{Formula: tilde K = K_lambda}
\end{equation}
for a continuous time reparametrization 
$\lambda:[0,T)\map[0,\tilde T)$
defined for some $T\in(0,\infty]$.
\item
If the $(\tilde \delta,\tilde \sigma)$-L\"owner chain, as in item 1, exists
until $\tilde T\in[0,+\infty)$, then it is unique.
\item
The time reparametrization $\lambda_t$ is continuously differentiable.
\item
The L\"owner chains 
$\{g_t\}_{t\in[0,\infty)}$ 
and 
$\{\tilde g_t\}_{t\in[0,\tilde T)}$ 
are related by a family of M\"obious automorphisms
\begin{equation}
	M_t := {\tilde g}_{\lambda_t} \circ g^{-1}_t,\quad t\in[0,T)
	\label{Formula: M = tilde g g^-1}
\end{equation}
that satisfy
\eqref{Formula: dot M_t = ( dot l H delta - M H delta ) circ M_t}
for $t\in[0,T)$.
\item
The functions $\lambda_t$ and $\tilde u_{\tilde t}$ are defined by 
\eqref{Formula: lambda_t = ...}
and by
\eqref{Formula: tilde u_tilde t = ...}
correspondingly.
\item
Let $b\in\Dc$ (or $b\in\de\Dc$) be a point such that $b\not \in \bar \K_t$ for
$t\in[0,+\infty)$ (where $\bar \K\subset \bar \Dc$ is the closure of $\K$ in
$\bar \Dc$). Assume that $\tilde\delta$ and $\tilde\sigma$ are radial (see
Table \ref{Table: Calssical cases} and Section \ref{Section: Radial case}) 
with the common zero at $b\in\Dc$ or chordal 
(see Section \ref{Section: chordal case}) 
with the common zero at
$b\in\de\Dc$. Then $T=+\infty$.
\end{enumerate}

\end{theorem}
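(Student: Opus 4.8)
The strategy is to reduce everything to solving an ODE for the M\"obius-valued correction $M_t$ and the time change $\lambda_t$, and then to read off the required conformal maps and driving functions from these. The starting point is that both chains $\{g_t\}$ and $\{\tilde g_{\tilde t}\}$ have prescribed hulls, and by \eqref{Formula: tilde K = K_lambda} we want $\tilde g_{\lambda_t}$ and $g_t$ to have the \emph{same} image domain $\Dc\setminus\K_t$; hence $M_t:={\tilde g}_{\lambda_t}\circ g_t^{-1}$ is a conformal automorphism of $\Dc$, i.e.\ an element of the (3-dimensional) M\"obius group. Differentiating this relation and using \eqref{Formula: d g = h delta g dt}, \eqref{Formula: G = H circ g}, and the composition rule \eqref{Formula: d G G = (v + G G v) G G}, one gets an equation of the schematic form $\dot M_t=\bigl(\dot\lambda_t\, H_{\tilde u_{\lambda_t}}[\tilde\sigma]_*^{-1}\tilde\delta - H_{u_t}[\sigma]_*^{-1}\delta\bigr)\circ M_t$ pushed forward by $M_t$; requiring the right-hand side to be a \emph{complete} (M\"obius) vector field — which it must be, since $M_t$ stays in the automorphism group — forces the pole of $\dot\lambda_t\,(\text{radial part of }\tilde\delta)$ at $M_t(a)$ to cancel the pole of $\delta$ at $a$. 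That single scalar condition determines $\dot\lambda_t$ (this is formula \eqref{Formula: lambda_t = ...}), and then $\tilde u_{\tilde t}$ is recovered from tracking the image of the source point (formula \eqref{Formula: tilde u_tilde t = ...}); items 1–5 of the theorem are the bookkeeping around this ODE, its solvability, and uniqueness by the standard Picard argument once the vector fields are seen to be continuous in $t$.

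For item 6 specifically — the claim $T=+\infty$ when $\tilde\delta,\tilde\sigma$ are radial (common interior zero at $b$) or chordal (common boundary zero at $b$) with $b$ never swallowed — the plan is as follows. First, the only way the construction can break down before $\tilde T=\infty$ is that the reparametrization $\lambda_t$ runs off to $\infty$ in finite $t$, or that $\dot\lambda_t$ degenerates (hits $0$ or $\infty$), equivalently that $M_t$ leaves every compact subset of the M\"obius group. So I would show that $M_t$ stays in a compact set of automorphisms for $t$ in any finite interval. In the radial case, $\tilde g_{\lambda_t}\colon\Dc\map\Dc\setminus\K_t$ can be normalized by fixing the interior point $b$ and a direction there (radial normalization), and the forward chain $\{g_t\}$ can be compared to it: since $b\notin\bar\K_t$, the map $g_t^{-1}$ extends conformally near the image of $b$ and, by the Koebe/Schwarz distortion estimates on the simply connected domain $\Dc\setminus\K_t$, the conformal radius of $\Dc\setminus\K_t$ seen from $b$ stays bounded away from $0$ on finite time intervals; this bounds $M_t$ in the automorphism group, hence $\dot\lambda_t$ is bounded above and below and $\lambda_t$ cannot blow up. The chordal case is the same argument with the boundary point $b$ and the hydrodynamic (half-plane) normalization at $b$, using that $\Dc\setminus\K_t$ has a well-defined half-plane capacity from $b$ which is monotone and finite for each $t$.

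The main obstacle is precisely controlling $\dot\lambda_t$ away from $0$ and $\infty$ on every finite interval: one direction (an a priori upper bound on $\dot\lambda_t$, equivalently that $\tilde T$ is not prematurely finite because $\lambda$ is ``too slow'') is the delicate one, and it is exactly where the hypothesis $b\notin\bar\K_t$ is essential — it guarantees the target domains $\Dc\setminus\K_t$ do not degenerate at the normalization point, so the family $\{M_t\}$ cannot escape to the boundary of the M\"obius group. Once that compactness is in hand, global existence for the ODE defining $\lambda_t$ and $\tilde u_{\tilde t}$ follows from the standard continuation criterion, and $\tilde T=+\infty$ drops out. I would also need to check the borderline behavior $t\to\infty$: that $\lambda_t\to\tilde T$ and, under the stated normalizations, $\tilde T$ is genuinely $+\infty$ rather than some finite value — this again uses that the radial/chordal capacity from $b$ increases without bound as the growing curve $\{\gamma_t\}$ accumulates at its attracting zero, which is distinct from $b$.
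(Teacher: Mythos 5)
Your treatment of items 1--5 is the paper's argument: introduce the M\"obius correction $M_t={\tilde g}_{\lambda_t}\circ g_t^{-1}$, differentiate to get \eqref{Formula: dot M_t = ( dot l H delta - M H delta ) circ M_t}, and impose completeness of the resulting vector field, which forces the pole of the pushed-forward $\delta$ to cancel against that of $\tilde\delta$; the position condition \eqref{Formula: hat M(a) = a} determines $\tilde u$ implicitly and the residue condition \eqref{Formula: c = delta/delta M'^2} determines $\dot\lambda_t$, after which everything collapses to the three-dimensional ODE system \eqref{Formula: dot M_t = F(M_t,u_t) circ M_t} with continuous coefficients, yielding local existence, uniqueness, and the $C^1$ regularity of $\lambda$. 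Where you diverge is item 6. You propose an a priori compactness bound: show $M_t$ stays in a compact subset of the automorphism group on finite time intervals (via Koebe distortion and the non-degeneration of the conformal radius at $b$), so that $\dot\lambda_t$ stays bounded away from $0$ and $\infty$ and the ODE continuation criterion gives $T=+\infty$. The paper instead sidesteps the ODE entirely at this point: it \emph{defines} $\hat M_t$ for all $t$ directly by the radial/chordal normalization conditions \eqref{Formula: hat M(a) = a} and \eqref{Formula: hat M G(b) = b} (plus the derivative normalization at a boundary $b$), observes that these are well-posed and continuous in $t$ precisely because $b\not\in\bar\K_t$, defines $\lambda_t$ by the integral formula \eqref{Formula: lambda_t = ...}, and then invokes the known global existence of radial/chordal chains for any hull family avoiding $b$. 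Both arguments rest on the same fact --- the normalization at $b$ never degenerates --- but the paper's direct construction avoids having to prove the two-sided bound on $\dot\lambda_t$, which is the step you yourself flag as delicate; your route would also work, but you would need to make the distortion estimate quantitative enough to rule out $\dot\lambda_t\to 0$ as well as $\dot\lambda_t\to\infty$, whereas the paper gets continuity of $(\hat M_t)'(a)$ for free from the normalization.
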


\begin{proof}
Assume first that there exists a 
$(\tilde\delta,\tilde\sigma)$-L\"owner chain 
$\{\tilde g_{\tilde t}\}_{\tilde t\in[0,\tilde T)}$ 
($\tilde T\in(0,+\infty]$),
a collection of hulls 
$\{ \tilde \K_{\tilde t} \}_{t\in[0,\tilde T)}$
constructed as a reparametrization of
$\{\K_t\}_{t\in[0,T)}$ 
with some 
$T\in(0,+\infty]$,
and a continuous strictly increasing 
$\lambda:[0,T)\map[0,\tilde T)$.
Such function $\lambda$ is invertible and differentiable a.e. in $[0,T)$.
Then define $M_t$ by the relation
\eqref{Formula: M = tilde g g^-1}
and conclude that it is also differentiable with respect to $t$ a.e. due to the
differentiability of $g_t$ and $\tilde g_{\tilde t}$.
Moreover, according to 
\eqref{Formula: d G G = (v + G G v) G G}
and
\eqref{Formula: g_t = H[sigma] circ G_t}
we have
\begin{equation}
	\dot M_t = 
	\left( 
		\dot \lambda_{t}~ H_{\tilde u_{\lambda_t}}[\tilde \sigma]^{-1}_*~ 
		\tilde \delta 
		- M_{t}{}_*~ H_{u_t}[\sigma]^{-1}_*~ \delta 
	\right) \circ M_t,\quad M_0=\id.
	\label{Formula: dot M_t = ( dot l H delta - M H delta ) circ M_t}
\end{equation}
a.e. in $[0,T)$. From the fact that $M_t$ is a M\"obious
automorphism it follows that the expression above in the parenthesis is a
complete vector field for a.a. $t$.

We study now the conditions of completeness. Define the vector field
\begin{equation}\begin{split}
	m(M, x, y, c):= 
	c~ H_{y}[\tilde \sigma]^{-1}_*~ \tilde \delta 
	- M_*~ H_{x}[\sigma]^{-1}_*~ \delta,\quad
	x,y\in\mathbb{R},\quad c>0,\quad M:\Dc\map\Dc.
\end{split}\end{equation}
It is complete if and only if the field 
\begin{equation}\begin{split}
	H_{y}[\tilde \sigma]_* m(M, x, y, c)=
	c~ \tilde \delta 
	- \hat M_* \delta
	\label{Formula: H m = delta - c M delta}
\end{split}\end{equation}
is complete, where
\begin{equation}
	\hat M:= H_{y}[\tilde \sigma] \circ M \circ H_{x}[\sigma]^{-1}.
	\label{Formula: hat M = H M H^-1}
\end{equation}
In general, it contains the sum of two poles at $a\in\Dc$ of opposite signs
because of the structure of $\tilde \delta$ and $\delta$. Let us obtain
the necessary and sufficient conditions for the pole cancellation for $M,~x,~y$,
and $c$. The positions of the poles must coincide as well as absolute values of
the residues must be equal (for example, it is clear in the half-plane chart).
These conditions are
\begin{equation}\begin{split}
 \hat M (a) = a,
 \label{Formula: hat M(a) = a}
\end{split}\end{equation}
and
\begin{equation}\begin{split}
	&c~ \mathrm{res}_a \tilde \delta = 
	\mathrm{res}_a 
	\left( \tilde M_* \delta \right) \quad \Leftrightarrow \\
	&c~ \tilde \delta_{-2} = 
	\delta_{-2} \left( \left( \tilde M^{\psi} \right)'(a) \right)^2
	\quad \Leftrightarrow \\
	&c = \frac{\delta_{-2}}{\tilde \delta_{-2}} 
		\left( \left( \tilde M^{\psi} \right)'(a) \right)^2.
	\label{Formula: c = delta/delta M'^2}
\end{split}\end{equation}
Remark that 
$\left( \tilde M^{\psi} \right)'(a)>0$ 
does not depend on the choice of the chart $\psi$ as well as the ratio 
$\frac{\delta_{-2}}{\tilde \delta_{-2}}$. 

The first condition
\eqref{Formula: hat M(a) = a}
holds uniquely as an implicit solution to $y=y(x,M)$ 
(see \eqref{Formula: hat M = H M H^-1} ) 
due to the condition 
$\tilde \sigma(a) \neq 0$, 
at least for sufficiently small $|x|$, $|y|$, and $M$
close enough to the identity map.
In the case of an elliptic $\tilde \sigma$, the function $y(x,M)$ is unique up
to the transform $y\map y+k d$, $k\in \mathbb{Z}$, where $d>0$
is a minimal number such that $H_d[\tilde \sigma]=\id$. 

Assume now 
\begin{equation}
	M = M_{ t}, \quad 
	x = u_{t}, \quad
	y = \tilde u_{\lambda_{\tilde t}}, \quad
	c = \dot \lambda_{ t}, 	
\end{equation}
which leads to
\begin{equation}
	\hat M = \hat M_{ t} = 
	H_{\tilde u_{\lambda_t}} [\tilde \sigma] \circ 
	M_{t} \circ H_{u_{t}}[\sigma]^{-1}
	\label{Formula: hat M_t = H_u M_t H_u}
\end{equation}
and
\begin{equation}\begin{split}
	&
	m(M_{t}, u_{t}, \tilde u_{\lambda_{t}}, \dot \lambda_{t})= 
	\dot \lambda_{t}~ H_{\tilde u_{\lambda_t}}[\tilde \sigma]^{-1}_*~ \tilde \delta 
	-  M_{t}{}_*~ H_{u_t}[\sigma]^{-1}_*~ \delta,\\	
\end{split}\end{equation}
which is exactly the vector field form 
\eqref{Formula: dot M_t = ( dot l H delta - M H delta ) circ M_t}.
The condition 
\eqref{Formula: c = delta/delta M'^2}
gives
\begin{equation}
	\dot \lambda_{t} = 
	\frac{\delta_{-2}}{\tilde \delta_{-2}} 
		\left( \left( \hat M_{t} \right)'(a) \right)^2
	\label{Formula: dot lambda = M'^2}
\end{equation}
for a.a $t\in[0,T)$. On the other hand, the continuity of $M_t'(a)$ 
follows from
\eqref{Formula: M = tilde g g^-1}. 
This and 
\eqref{Formula: dot lambda = M'^2}
in their turn implies that $\lambda_t$ is continuously differentiable in
$[0,T)$, and that
\eqref{Formula: dot lambda = M'^2}
holds for all $t\in[0,T)$, and not only a.e.
Due to the property $M_t'(a)>0$ of the M\"obious maps we also conclude
that the inverse function $\lambda^{-1}$ is continuously differentiable in
$[0,\tilde T)$.

Conditions 
\eqref{Formula: hat M(a) = a}
and
\eqref{Formula: hat M = H M H^-1}
give
\begin{equation}
	\tilde u_{\lambda_t} = y(M_{t},u_{t}),
	\quad	t\in[0,T), 
	\label{Formula: tilde u = f(M,u)}
\end{equation}
Here we decrease the value of $T$ according to the set of definition of
$y(M_{t},u_{t})$.
For an elliptic $\tilde \sigma$ we impose the continuity condition and 
$\tilde u_0=0$. This fixes the mentioned above arbitrariness uniquely.

So, we have found $\lambda_{t}$ as an integral functional of $\{M_t\}_{t\in[0,T)}$ and 
$\{u_t\}_{t\in[0,T)}$: 
\begin{equation}\begin{split}
	\lambda_{t} =& 
	\frac{ \delta_{-2}}{\tilde \delta_{-2}} 
	\int\limits_0^{t} 
	\left( 
		\left( 
			\hat M_{\tau} 
		\right)'(a) 
	\right)^2	
	d\tau
	=\\=&
	\frac{\delta_{-2}}{\tilde \delta_{-2}} 
	\int\limits_0^{t} 
	\left( 
		\left( 
			H_{\tilde u_{\tau}}[\tilde \sigma] \circ M_{\tau} \circ H_{u_{\lambda_{\tau}}}[\sigma]^{-1} 
		\right)'(a) 
	\right)^2	
	d\tau
	=\\=&
	\frac{\delta_{-2}}{\tilde \delta_{-2}} 
	\int\limits_0^{t} 
	\left( 
		\left( 
			H_{ y(M_{\tau}, u_{\tau}) }[\tilde \sigma] \circ M_{\tau} \circ H_{u_\tau}[\sigma]^{-1} 
		\right)'(a) 
	\right)^2	
	d\tau,\\
	&t\in[0,T),
	\label{Formula: lambda_t = ...}
\end{split}\end{equation}
where we have used 
\eqref{Formula: hat M_t = H_u M_t H_u}
and
\eqref{Formula: tilde u = f(M,u)}.

The driving function $\tilde u_{\tilde t}$ can be expressed now as
\begin{equation}
	\tilde u_{\tilde t} = y(M_{\lambda^{-1}_{\tilde t} },u_{\lambda^{-1}_{\tilde t} }),\quad
	\tilde t\in[0, \lambda^{-1}_{T}).
	\label{Formula: tilde u_tilde t = ...}
\end{equation}

Substituting now 
\eqref{Formula: hat M_t = H_u M_t H_u},
\eqref{Formula: tilde u = f(M,u)}
and
\eqref{Formula: dot lambda = M'^2}
in 
\eqref{Formula: dot M_t = ( dot l H delta - M H delta ) circ M_t}
gives the initial value problem
\begin{equation}\begin{split}
	\dot M_t =&
	\left( 
		\frac{\delta_{-2}}{\tilde \delta_{-2}}
		\left( \left( H_{y(M_t,u_t)} [\tilde \sigma] \circ 
		M_{t} \circ H_{u_{t}}[\sigma]^{-1} \right)'(a) \right)^2
		H_{y(M_t,u_t)}[\tilde \sigma]^{-1}_*~ \tilde \delta 
	\right. 
	-\\-&
	M_{t}{}_*~ H_{u_t}[\sigma]^{-1}_*~ \delta \bigg)
	\circ M_t,\\
	M_0=&\id.
	\label{Formula: dot M_t = F(M_t,u_t) circ M_t}
\end{split}\end{equation}
which is defined by the driving function $\{u_t\}_{t\in[0,T)}$
only. By the construction, it is a partial differential equation for $M$ in
$[0,T)\times \Dc$ because it contains the derivatives of $M_t^{\psi}(z)$ with
respect to $z$. However, the family of M\"obious automorphisms is 3-parametric.
For example, in the half-plane chart, we have
\begin{equation}
	M_t^{\HH}(z) = \frac{a_t z + b_t}{c_t z + d_t},\quad z\in\HH,\quad a_t d_t - b_t c_t = 1
\end{equation}
(an explicit relations in the unit-disk chart are
presented in \cite{Ivanov2014}).
Thus, 
\eqref{Formula: dot M_t = F(M_t,u_t) circ M_t}
is actually a system of 3 ordinary differential equations with continuous
coefficients. Its solution always exists and is unique at least until some
$T_1>0$.
Moreover, if the chain $\{\tilde g_t\}_{t\in[0,\tilde T)}$ exists it is unique
due to the uniqueness of the solution of
\eqref{Formula: dot M_t = F(M_t,u_t) circ M_t}
and due to the arguments above about the differentiability of $\lambda_t$ and
$M_t$ .

In order to show the item 6, we construct the second chain 
$\{\tilde g_{\tilde t}\}_{\tilde t\in[0,+\infty)}$.
Let us define 
$\{\hat M_t\}_{t\in[0,\infty)}$ 
by condition
\eqref{Formula: hat M(a) = a} 
and 
\begin{equation}
	\hat M_t \circ G_t (b) = b.
	\label{Formula: hat M G(b) = b}
\end{equation}
If $b\in\de\Dc$, then we can assume a chart $\psi$ such that the boundary of 
$D^{\psi}:=\psi(\Dc)$ 
is a straight line segment in a neighborhood of $\psi(b)$ and we require in
addition
\begin{equation}
	(\hat M_t \circ G_t)' (b) = 1.
\end{equation}
The derivative 
$(G_t^{\psi})'(\psi(b))$ 
is defined because 
$b\not\in\bar{\K_t}$, $t\in[0,+\infty)$.
Thus, the function
$\hat M_t$ 
is continuous in 
$t\in[0,+\infty)$.
Define the time reparametrisation 
$\{\lambda_t\}_{t\in[0,+\infty)}$ 
by the first line of 
\eqref{Formula: lambda_t = ...}, which is possible for $t\in[0,+\infty)$ due to the continuity of 
$\left(\hat M_{t}\right)'(a)$.
From another side, the properties of the radial and chordal L\"owner equations ensure that 
$\tilde g_{\tilde t}$ and $\tilde G_{\tilde t}$   
is defined for any collection of hulls such that 
$b\not\in\K_t$ 
until 
with some driving function $\tilde u$, and 
$\tilde G_{\tilde t} = \hat M_{\lambda^{-1}_{\tilde t}} \circ G_{\lambda^{-1}_{\tilde t}}$, 
$t\in[0,+\infty)$.
We use it now to define $M_t$ for $t\in[0,+\infty)$. 
\end{proof}

We remark that if the original chain $\{g_t\}_{t\in[0,+\infty)}$ is defined
only until some $T'\in(0,+\infty)$ then the modification of the theorem is
straightforward. The same is true for the reverse L\"owner chains.

With the aid of this theorem we can extend the properties of the classical
L\"owner equation to general $(\delta,\sigma)$-L\"owner chains. Following 
\cite{Lawler2008} 
we give define:
\begin{definition}
We say that a collection of hulls 
$\{\K_t\}_{t\in[0,\infty)}$ ($\K_0=\emptyset$)
is \emph{continuously increasing}
\index{continuously increasing collection of hulls}
if
\begin{enumerate}[1.]
\item The set
\begin{equation}
	U_t:=\bigcap\limits_{\varepsilon>0} \K_{t+\varepsilon,t},\quad t\in[0,+\infty)
\end{equation} 
is a point at the boundary $\de\Dc$, where
\begin{equation}
	\K_{t,s}:=\Dc \setminus g_{t,s}(\Dc);
\end{equation}
\item
The function $U:[0,+\infty)\map \de\Dc$ is continuous;
\item
There exists a continuously differentiable time reparametrization 
$$\lambda:[0,+\infty)\map[0,+\infty)$$
such that the family of hulls
$\{\K_{\lambda_t}\}_{t\in[0,+\infty)}$ 
is induced by some radial or chordal L\"owner chain. 
\end{enumerate}
\end{definition}

\begin{proposition} 
The hulls induced by the radial or chordal L\"owner equations are continuously
increasing.
\label{Proposition: Radial or chordal chain existence for con incr hulls}
\end{proposition}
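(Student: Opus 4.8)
The plan is to check, for the family $\{\K_t\}_{t\in[0,+\infty)}$ generated by a chordal or radial $(\delta,\sigma)$-L\"owner equation, the three requirements of the preceding definition. Requirement~3 is immediate: the hulls are by hypothesis already produced by a radial or chordal L\"owner chain, so one takes the identity reparametrization $\lambda_t=t$, which is certainly continuously differentiable. Everything therefore reduces to requirements~1 and~2 -- that $U_t:=\bigcap_{\varepsilon>0}\K_{t+\varepsilon,t}$ is a single point of $\de\Dc$, and that $t\mapsto U_t$ is continuous on $[0,+\infty)$.

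First I would pass to the chart adapted to the case -- $\psi_{\HH}$ for the chordal equation, $\psi_{\D}$ for the radial one -- in which $\delta$ and $\sigma$ take the classical normalized forms of Table~\ref{Table: Calssical cases}. There the solution $\{g_t\}$ satisfies, up to conjugation by an automorphism of $\Dc$, the classical chordal (resp.\ radial) Loewner ODE with a continuous driving function, and by \eqref{Formula: u_t,s = u_t - u_s} the same holds for each increment evolution family $\{g_{t+\varepsilon,t}\}_{\varepsilon\ge 0}$: its classical driving function at parameter $\varepsilon$ is continuous in $\varepsilon$ and at $\varepsilon=0$ reduces to a fixed M\"obius-coordinate expression in $u_t$ (in the half-plane chordal normalization, $u_t$ itself). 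I would then invoke the classical theory of the chordal and radial equations (\cite{Pommerenke.1975}, \cite{Lawler2008}) for the increment hulls $\{\K_{t+\varepsilon,t}\}_\varepsilon$: their capacity (half-plane capacity in the chordal case, logarithm of the conformal radius at the interior fixed point in the radial case) tends to $0$ as $\varepsilon\to 0^+$, and $\K_{t+\varepsilon,t}$ is squeezed onto the single boundary point equal to the value at $\varepsilon=0$ of that driving function. This yields $U_t\in\de\Dc$ -- a real point in the chordal normalization, a point of the unit circle $\de\D$ in the radial one -- and continuity of $t\mapsto U_t$ then follows from continuity of $u\mapsto u_t$ together with the continuous dependence of $H_s[\sigma]$ on $s$. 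Combined with requirement~3, this proves the proposition.

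The step I expect to be the genuine obstacle is requirement~1, namely that the nested intersection $\bigcap_{\varepsilon>0}\K_{t+\varepsilon,t}$ collapses to a \emph{single} point rather than to a nondegenerate continuum; this is precisely the local-growth (continuous generation) property of the classical equations. It rests on the quantitative estimate that a hull of small capacity whose uniformizing map is close to the normalized identity has small diameter once its root on the boundary is pinned down; I would import this estimate from \cite{Lawler2008} in the half-plane and transfer it to the radial setting through the Cayley conjugation $\tau^{\HH,\D}$ of \eqref{Formula: tau_H D}. A secondary, routine matter is that $U_t$ is defined one-sidedly (through $\K_{t+\varepsilon,t}$ with $\varepsilon>0$), so right-continuity of $t\mapsto U_t$ must be read off directly from that estimate, after which two-sided continuity follows from the cocycle identity $g_{t,s}=g_{t,r}\circ g_{r,s}$.
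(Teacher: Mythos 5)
Your argument is correct and follows essentially the same route as the paper, which simply cites the classical results of \cite[Chapter 4]{Lawler2008} applied in the normalized half-plane chart ($\psi^{\HH}(b)=\infty$) for the chordal case and the unit disk chart ($\psi^{\D}(b)=0$) for the radial case. You have merely spelled out in more detail what that citation carries — the identity reparametrization for requirement~3, and the local-growth estimate from Lawler guaranteeing that the increment hulls collapse to the single, continuously varying driving point — which is exactly the content the paper delegates to the reference.
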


\begin{proof}
This follows from the results of \cite[Chapter 4]{Lawler2008} and from the
formulation in the unit disk chart ($\psi^{\D}(b)=0$) for the radial case and
in the half-plane chart 
($\psi^{\HH}(b)=\infty,~\mathrm{res}_a\tilde\delta^{\HH}=2$)
for the chordal case.
\end{proof}

\begin{corollary}
\label{Corollary: Con incr hulls and L chains}
Any continuously increasing collection of hulls
$\{\K_{t}\}_{t\in[0,+\infty)}$ 
can be induced by any $(\delta,\sigma)$-L\"owner equation 
up to a continuously differentiable
time reparametrisation 
at least until some maximal time $T\in(0,+\infty]$.
The inverse is also true: a collection of hulls induced by
a $(\delta,\sigma)$-L\"owner equation is continuously increasing for
$t\in[0,+\infty)$.
\end{corollary}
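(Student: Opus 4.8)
The plan is to deduce both directions from Theorem \ref{Theorem: The master theorem} together with Proposition \ref{Proposition: Radial or chordal chain existence for con incr hulls}. For the forward direction, suppose $\{\K_t\}_{t\in[0,+\infty)}$ is continuously increasing. By the definition of that notion, there is a continuously differentiable reparametrization $\lambda$ so that $\{\K_{\lambda_t}\}$ is the hull family of some radial or chordal L\"owner chain $\{\tilde g_t\}$; call its data $\tilde\delta,\tilde\sigma$ (of the special radial/chordal type) and driving function $\tilde u$. Now fix an \emph{arbitrary} target pair $\delta,\sigma$ of the form \eqref{Formula: delta and sigma for Lowner chain} with $\delta_{-2}>0$ (the forward case; the reverse case follows from Proposition \ref{Proposition: Forward - Inverse LE connection}). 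I would apply Theorem \ref{Theorem: The master theorem} with the roles of the two chains interchanged: take $\{\tilde g_t\}$ as the ``given'' chain in the theorem and $(\delta,\sigma)$ as the pair $(\tilde\delta,\tilde\sigma)$ in the theorem's statement. The theorem then produces a $(\delta,\sigma)$-L\"owner chain whose hulls are a continuously differentiable reparametrization of $\{\K_{\lambda_t}\}$, hence — composing the two continuously differentiable reparametrizations, which is again continuously differentiable and strictly increasing — a reparametrization of $\{\K_t\}$ itself. This holds until the maximal time $T\in(0,+\infty]$ furnished by item 2 of the theorem, giving exactly the claimed statement.

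For the converse, suppose $\{\K_t\}_{t\in[0,+\infty)}$ is induced by a $(\delta,\sigma)$-L\"owner equation (forward, say, with $\delta_{-2}>0$). I would verify the three defining properties of ``continuously increasing'' in turn. Property 1, that $U_t=\bigcap_{\varepsilon>0}\K_{t+\varepsilon,t}$ is a single boundary point, follows from the structure of $\delta$: since $\delta$ has a single simple pole at the source $a\in\de\Dc$ and is tangent to the boundary elsewhere, the incremental hull $\K_{t+\varepsilon,t}$ shrinks as $\varepsilon\to 0$ to the image under $g_t$ of a neighborhood of $a$, which is the single boundary point $U_t = g_t^{-1}$ applied appropriately — concretely, using $g_{t+\varepsilon,t} = g_{t+\varepsilon}\circ g_t^{-1}$ and the local behaviour of the flow of $H_{u_t}[\sigma]^{-1}{}_*\delta$ near its pole. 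Property 2, continuity of $t\mapsto U_t$, follows from continuity of the driving function $u$ and continuous dependence of $g_t$ on $t$. Property 3 is precisely the forward direction just proved, specialized to the radial or chordal target: Theorem \ref{Theorem: The master theorem} (in particular item 6) gives a continuously differentiable $\lambda$ so that $\{\K_{\lambda_t}\}$ comes from a radial or chordal chain, and item 6 guarantees the reparametrized chain runs up to $+\infty$ when the target is radial with interior zero off the hulls, or chordal — and since the hulls of any $(\delta,\sigma)$-chain accumulate only at the attracting zero of $\delta$, such a point $b$ exists.

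The main obstacle I expect is Property 1 in the converse: pinning down that the incremental hull degenerates to a \emph{single} point and identifying it requires a careful local analysis of the flow near the pole of $H_{u_t}[\sigma]^{-1}{}_*\delta$, uniformly in small $\varepsilon$, and a small-time estimate on the size of $\K_{t+\varepsilon,t}$ in a suitable chart (e.g. the $\tilde\HH$ chart used in the proof of \eqref{Formula: delta with 1 simple pole}, where $\delta^{\tilde\HH}(z)=2cz^3+\dots$ near the source). A secondary subtlety is bookkeeping the maximality of $T$: the theorem only guarantees the reparametrized chain exists up to some maximal $T$, so in the forward direction one must not overclaim $T=+\infty$ except in the radial/chordal situations covered by item 6, and in the converse one should invoke item 6 with a legitimate choice of the off-hull point $b$. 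Everything else is routine composition of reparametrizations and appeals to the already-established parts of Theorem \ref{Theorem: The master theorem}.
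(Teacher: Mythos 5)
Your forward direction is exactly the paper's (implicit) argument: property 3 of the definition hands you a radial or chordal chain for a reparametrized hull family, and Theorem \ref{Theorem: The master theorem} converts that chain into a $(\delta,\sigma)$-chain up to a further continuously differentiable reparametrization, valid until the maximal $T$ of item 2; composing the two reparametrizations is fine, and the reverse case via Proposition \ref{Proposition: Forward - Inverse LE connection} is a legitimate reduction.

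In the converse, however, you have left a genuine gap where none is needed. You propose to establish property 1 (that $\bigcap_{\varepsilon>0}\K_{t+\varepsilon,t}$ is a single boundary point) by a direct local analysis of the flow of $H_{u_t}[\sigma]^{-1}{}_*\delta$ near its pole, and you yourself flag that this analysis is not carried out and would require uniform small-$\varepsilon$ estimates. The paper's route avoids this entirely: first apply item 6 of Theorem \ref{Theorem: The master theorem} (with a radial or chordal target whose common zero $b$ avoids $\bar\K_t$ for all $t$) to get property 3 with $T=+\infty$; then observe that the radial/chordal hulls $\{\tilde\K_{\tilde t}\}$ are continuously increasing by Proposition \ref{Proposition: Radial or chordal chain existence for con incr hulls}, so they already satisfy properties 1 and 2; finally, since the two chains differ by the M\"obius automorphisms $M_t$ of \eqref{Formula: M = tilde g g^-1} and a continuously differentiable strictly increasing time change, the incremental hulls $\K_{t+\varepsilon,t}$ and $\tilde\K_{\tilde t+\tilde\varepsilon,\tilde t}$ are related by homeomorphisms of $\bar\Dc$ depending continuously on time, so the single-boundary-point property and the continuity of $U_t$ transport back to $\{\K_t\}$. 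If you replace your unfinished local analysis with this transport step, your proof closes and coincides with the paper's. One further small point: in the forward direction you should not silently assume the target $(\delta,\sigma)$ is normalized with $\delta_{-2}>0$ when the corollary allows any pair of the form \eqref{Formula: delta and sigma for Lowner chain}; the reduction to $\delta_{-2}>0$ via Proposition \ref{Proposition: Forward - Inverse LE connection} should be stated for the target pair as well, which you essentially do but only in passing.
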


\begin{proof}
It follows from Proposition 
\ref{Proposition: Radial or chordal chain existence for con incr hulls} 
and Theorem \ref{Theorem: The master theorem}.

%
%
\end{proof}

We calls the point
\begin{equation}
	z_{t}:=g_t(U_t)
\end{equation}
\emph{tip} of the hull $\K_t$.
\index{tip}
Due to 
\eqref{Formula: hat M(a) = a}, 
$\hat M = \tilde G_{\lambda_t} \circ G_t^{-1}$,
and the property of forward chordal or radial chains to map the tip to the
source point 
(see \cite{Lawler2008})
we conclude that the same is true for general case of
($\delta,\sigma$)-L\"owner chain
\begin{equation}\begin{split}
	&G_t(z_{t}) = a,\quad t\in[0,+\infty),\quad \text{in the forward case},\\
	&G_t(a) = z_{t},\quad t\in[0,+\infty),\quad \text{in the reverse case}.
\end{split}\end{equation}
Besides, for the map $g_t$ we have
\begin{equation}\begin{split}
	&g_t(z_{t}) = H_{u_t}^{-1}[\sigma](a),\quad t\in[0,+\infty),\quad \text{in the
	forward case},\\
	&g_t \circ H_{u_t}^{-1}[\sigma](a) = z_{t},\quad t\in[0,+\infty),\quad \text{in
	the reverse case}.
\end{split}\end{equation}

We have studied how to construct a $(\delta,\sigma)$-L\"owner chain for a given
family of hulls. Consider now a fixed hull $\K\subset\Dc$. In general, 
for given $\delta$, $\sigma$, and $\K\subset\mathbb{C}$ a
$(\delta,\sigma)$-L\"owner chain such that $\K=\K_T$ for some
($T\in[0,\infty)$) may not exist or can be not unique. For example, the radial
chain does not exists if $b\in\K$. However, if $\K$ is a simple curve we can
state the following.

We will consider only simple curves that start from the source point
$a\in\de\Dc$ 
and that are parametrized in an open interval. 
Namely, let $\gamma\colon (0,T)\map \Dc$ (for some
$T\in(0,\infty]$)
is an endomorphism such
that $\lim\limits_{t\map +0}\gamma(t)=a$ 
and the limit 
$\lim\limits_{t\map -T}\gamma(t)$ 
may not exists.
Let 
$\mathscr{G}_{\Dc,a}$ 
\index{$\mathscr{G}_{\Dc,a}$}
be the space of all such curves 
up to a continuous reparametrization 
$\lambda:[0,T)\map[0,\tilde T)$ ($\tilde T\in(0,\infty]$).

\begin{theorem}
\label{Theorem: Curve -> chain}
For any curve from $\mathscr{G}_{\Dc,a}$ there exists a parametrization 
$\gamma:(0,T)\map\Dc$, $T\in(0,+\infty]$,	
such that 
$\{\gamma_{(0,t]}\}_{t\in[0,T)}$ 
is a continuously increasing family of hulls. In particular, for any pair 
of $\delta$ and $\sigma$ any simple curve contains a subcurve started from
$a\in\de\Dc$ that can be induced by $(\delta,\sigma)$-L\"owner chain for given
$\delta$ and $\sigma$, and a such chain is uniquely defined.
\end{theorem}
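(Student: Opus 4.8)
The plan is to reduce the statement to the classical chordal L\"owner equation in the half-plane, and then to carry the conclusion over to an arbitrary pair $(\delta,\sigma)$ by means of Corollary~\ref{Corollary: Con incr hulls and L chains} and Theorem~\ref{Theorem: The master theorem}. Let $\gamma\colon(0,T_0)\map\Dc$ be a representative of the given element of $\mathscr{G}_{\Dc,a}$. Because $\gamma$ takes values in the open surface $\Dc$, every boundary point $q\in\de\Dc\setminus\{a\}$ lies off the curve, so one may fix once and for all a half-plane chart $\psi$ with $\psi(a)=0$ and $\psi(q)=\infty$; set $\Gamma:=\psi\circ\gamma$, a simple arc in $\HH$ with $\lim_{t\to 0+}\Gamma(t)=0$. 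For every $t<T_0$ the set $\K_t:=\{0\}\cup\Gamma((0,t])$ is a hull whose complement $\HH\setminus\K_t$ is simply connected, hence carries a hydrodynamically normalized uniformizing map $g_t\colon\HH\setminus\K_t\map\HH$; let $a_t:=\mathrm{hcap}(\K_t)$ be its half-plane capacity. By the classical theory of the chordal L\"owner equation --- precisely the input used in Proposition~\ref{Proposition: Radial or chordal chain existence for con incr hulls}, see \cite[Chapter~4]{Lawler2008} --- the map $t\mapsto a_t$ is continuous, strictly increasing, and tends to $0$ as $t\to 0+$. Reparametrizing $\gamma$ so that $\mathrm{hcap}(\K_t)=2t$ gives a homeomorphism of $(0,T_0)$ onto $(0,T)$, where $T:=\tfrac12\sup_{t<T_0} a_t\in(0,+\infty]$, and loses no part of the curve; rename the reparametrized curve $\gamma$ and its chart image $\Gamma$, now defined on $(0,T)$.

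Next I would invoke \cite[Chapter~4]{Lawler2008} once more: since $\Gamma$ is a simple arc, its tip $\Gamma(t)$ is a single prime end of $\HH\setminus\K_t$, so $g_t$ extends continuously there, the family $\{\K_t\}$ has the local growth property, the driving function $u_t:=g_t(\Gamma(t))\in\mathbb{R}$ is continuous with $u_0=0$, and $\{g_t\}_{t\in[0,T)}$ solves the chordal L\"owner equation $\dot g_t = 2/(g_t-u_t)$. In the terminology of this chapter this says exactly that, in the $\psi$-chart, $\{\K_t\}_{t\in[0,T)}$ is the family of hulls induced by a chordal $(\delta,\sigma)$-L\"owner chain (Table~\ref{Table: Calssical cases}) with continuous driving function, and that the point $U_t$ appearing in the definition of a continuously increasing family equals the driving point $u_t$; consequently all three conditions of that definition hold for $\{\gamma_{(0,t]}\}_{t\in[0,T)}$, with the identity as the required continuously differentiable time reparametrization. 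This is the first assertion. The second one is then immediate from Corollary~\ref{Corollary: Con incr hulls and L chains}: a continuously increasing family of hulls is induced, up to a continuously differentiable time change, by the $(\delta,\sigma)$-L\"owner equation for any prescribed $\delta$ and $\sigma$ at least on some maximal interval $[0,T_{\delta,\sigma})$; restricting $\gamma$ to this interval yields the asserted subcurve started from $a$, and its uniqueness follows from the uniqueness statement in Theorem~\ref{Theorem: The master theorem} together with the fact that the driving function of a $(\delta,\sigma)$-chain is determined by its hulls.

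I do not expect a deep analytic obstacle here: every analytic ingredient --- continuity and strict monotonicity of $\mathrm{hcap}$ along a simple arc, the local growth property of such an arc, and continuity of the induced driving function --- is classical and quoted from \cite[Chapter~4]{Lawler2008}, exactly as in the proof of Proposition~\ref{Proposition: Radial or chordal chain existence for con incr hulls}. The points that require care are of bookkeeping nature: that one fixed $\HH$-chart is legitimate for all $t\in[0,T)$ simultaneously (whence the remark $\gamma((0,T_0))\subset\Dc$); that the curve is given only up to reparametrization and possibly without a limit at the right endpoint, which forces one to work throughout with the compact sub-hulls $\K_t$, $t<T_0$; and that the capacity reparametrization is a genuine homeomorphism, so that the dichotomy between a curve of finite total half-plane capacity ($T<\infty$) and one of infinite capacity ($T=+\infty$) is faithfully reflected in the allowed range $T\in(0,+\infty]$ of the statement.
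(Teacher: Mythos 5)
Your proposal is correct and follows essentially the same route as the paper, whose proof is simply a citation to \cite[Chapters 3--4, Remark 4.4]{Lawler2008}: you spell out exactly the content of that citation (capacity reparametrization of a simple arc, local growth and continuity of the chordal driving function) and then pass to general $(\delta,\sigma)$ via Corollary~\ref{Corollary: Con incr hulls and L chains} and the uniqueness part of Theorem~\ref{Theorem: The master theorem}, which is the intended argument.
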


\begin{proof}
This follows from the results in \cite[Chapter 3, Chapter 4]{Lawler2008}, in
particular, from Remark 4.4.
\end{proof}

\begin{figure}[h]
\centering
	\includegraphics[keepaspectratio=true]
		{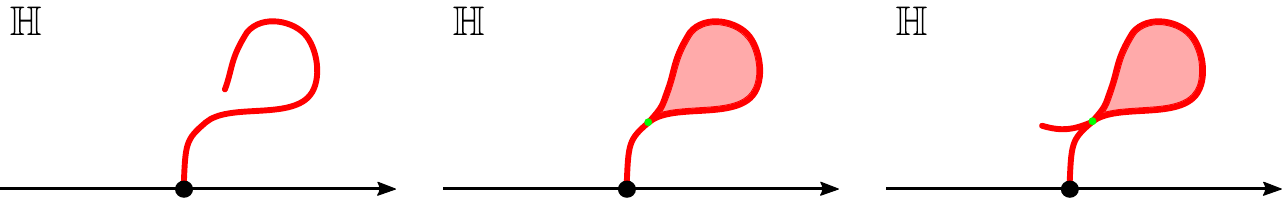}
\caption{This figure illustrates how a slit curve $\gamma_t$ (the red line )
generates a hull $\K_t$ (the red line and the pink interior) which is not a
curve. The time $t$ increases from left to right. The slit touches itself
at the green point and swallows up the pink connected component of
$\HH\setminus\gamma_t$.
\label{Figure: Closed set generation}}
\end{figure}

We remark that if the slit is not a simple curve but curve parametrized as
$\gamma(t)=G^{-1}(a)$ for the forward chain $\{G_t\}_{t\in[0,+\infty)}$, then
The hull $\K_t$ consist of the curve and all subsets of $\Dc$ swallowed by the
curve except one, where the map $G_t$ is defined, see Fig. 
\ref{Figure: Closed set generation}.
In the classical cases, It is the connected component of $\Dc\setminus\gamma_t$
that contains the fixed points (see Sections
\ref{Section: Chordal Loewner equation},
\ref{Section: Dipolar Loewner equation},
and
\ref{Section: Radial Loewner equation}).
In the general case of ($\delta,\sigma$)-L\"owner chain, it is an interesting
problem which of the components is swallowed.

It is important, but more difficult, to establish analogous results for curves
that touch themselves or the boundary $\de\Dc$. We do not discuss this
problem here. However, if there is a simple normalization for the
$(\delta,\sigma)$-L\"owner chain maps such as for the classical cases,
see 
\eqref{Formula: chordal normalization for G in H},
\eqref{Formula: dipolar normalization for G in S},
and
\eqref{Formula: radila normalization for G in D},	
then the existence is guarantied by the Riemann mapping theorem and the
uniqueness can be shown by Corollary 
\ref{Corollary: Con incr hulls and L chains}.

It is also an interesting but not solved yet problem to specify the collection
of hulls induced by $(\delta,\sigma)$-L\"owner chain for arbitrary 
$\delta$ and $\sigma$ as we did for the classical cases
discussed in Chapter \ref{Chapter: Classical cases}. 
It would be also important to prove the following conjecture.

\begin{conjecture}
\label{Conjecture: uniquence of G and T given K}
For any pair of $\delta$ and $\sigma$ and for any hull $\K\subset\Dc$ there
exists at most one $T\in[0,\infty)$ and at most one map 
$G\in\mathcal{G}[\delta,\sigma]$
such that
$\K=\Dc\setminus G(\Dc)$ for the forward case,  
$\K=\Dc\setminus G^{-1}(\Dc)$ for the reverse case,
and
$G=G_T$ for some possible non-unique $(\delta,\sigma)$-L\"owner chain
$\{G_t\}_{t\in[0,T]}$.
\end{conjecture}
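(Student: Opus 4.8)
The plan is to reduce the uniqueness question to the corresponding statement for the chordal (or radial) L\"owner equation, where it is classical, via the reparametrization machinery of Theorem \ref{Theorem: The master theorem}. First I would treat the forward case; the reverse case then follows by Proposition \ref{Proposition: Forward - Inverse LE connection}, which identifies $\mathcal{G}[\delta,\sigma]$ with $\mathcal{G}[-\delta,\sigma]$ up to inversion. So suppose $\{G_t\}_{t\in[0,T]}$ and $\{G'_t\}_{t\in[0,T']}$ are two forward $(\delta,\sigma)$-L\"owner chains (with possibly different driving functions $u$, $u'$) such that $G_T$ and $G'_{T'}$ are one and the same map $G\in\mathcal{G}[\delta,\sigma]$ realizing $\K=\Dc\setminus G(\Dc)$. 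Since $G_T=G'_{T'}=G$, the two chains produce the same terminal hull $\K_T=\K'_{T'}=\K$, but a priori along different intermediate families $\{\K_t\}$, $\{\K'_t\}$, and with $T\neq T'$.

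Next I would invoke Theorem \ref{Theorem: The master theorem} to pass both chains to a common comparison equation. Choose $\tilde\delta$, $\tilde\sigma$ to be chordal if the relevant limiting/fixed point $b$ lies on $\de\Dc$ away from $\K$, or radial if $b\in\Dc\setminus\bar\K$ (such a $b$ exists because $\K$ is a compact subset of $\Dc$; item 6 of the theorem then gives existence on all of $[0,\infty)$). By item 1--2, each of $\{G_t\}$ and $\{G'_t\}$ is carried by a continuously differentiable time reparametrization $\lambda$, resp. $\lambda'$, and a family of M\"obius maps $M_t=\tilde g_{\lambda_t}\circ g_t^{-1}$, resp. $M'_t$, to a chordal (radial) L\"owner chain $\{\tilde g_{\tilde t}\}$, resp. $\{\tilde g'_{\tilde t}\}$, with hulls $\tilde\K_{\tilde t}=\K_{\lambda^{-1}_{\tilde t}}$, resp. $\tilde\K'_{\tilde t}=\K'_{\lambda'^{-1}_{\tilde t}}$. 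In particular the two chordal (radial) chains have the \emph{same} terminal hull $\tilde\K_{\lambda_T}=\tilde\K'_{\lambda'_{T'}}=$ (the chordal/radial normalization of) $\K$. Because a hull admits at most one chordal (resp. radial) L\"owner chain up to time-reparametrization, and the chordal/radial \emph{half-plane capacity} (resp. radial capacity) is a strictly increasing continuous functional of the hull that fixes the parametrization uniquely, the terminal times in the chordal/radial parametrization agree: $\lambda_T=\lambda'_{T'}$, and moreover $\tilde g_{\tilde t}=\tilde g'_{\tilde t}$ for all $\tilde t$ up to that common terminal time, hence $\tilde g_{\lambda_T}=\tilde g'_{\lambda'_{T'}}$ and $\tilde G_{\lambda_T}=\tilde G'_{\lambda'_{T'}}$.

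It remains to reverse the reduction: knowing that the two auxiliary chordal/radial chains coincide, I would show $T=T'$ and then that the two original chains coincide. The M\"obius maps are pinned down by $\hat M_t(a)=a$ together with $\hat M_t\circ G_t(b)=b$ (and a derivative normalization at $b$), i.e. by \eqref{Formula: hat M(a) = a} and \eqref{Formula: hat M G(b) = b}, so $\hat M_{\lambda_T}=\hat M'_{\lambda'_{T'}}$; combined with $G_T=G'_{T'}=G$ this forces $M_T=M'_{T'}$ and then $\tilde g_{\lambda_T}=M_T\circ g_T = M'_{T'}\circ g'_{T'}=\tilde g'_{\lambda'_{T'}}$ consistently. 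The terminal time identity $T=T'$ follows from \eqref{Formula: dot lambda = M'^2}: since $\dot\lambda_t=\frac{\delta_{-2}}{\tilde\delta_{-2}}((\hat M_t)'(a))^2$ is determined pointwise by the common hull-history, and both $\lambda$ and $\lambda'$ are strictly increasing continuously differentiable bijections onto the common interval $[0,\lambda_T]$ driven by the \emph{same} reparametrized hull family, $\lambda$ and $\lambda'$ are the same function, whence $T=T'$; uniqueness of $G$ as the terminal map is then immediate, and uniqueness of the whole chain $\{G_t\}_{t\in[0,T]}$ follows from the uniqueness clause in item 2 of Theorem \ref{Theorem: The master theorem} applied in reverse. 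The main obstacle I anticipate is the logical direction of the reparametrization argument: Theorem \ref{Theorem: The master theorem} is stated as going \emph{from} a $(\delta,\sigma)$-chain \emph{to} a chordal/radial one, so the delicate point is to argue that distinct hypothetical $(\delta,\sigma)$-chains with the same terminal map cannot map to distinct chordal/radial chains — equivalently, that the correspondence (hull family) $\leftrightarrow$ (chain up to reparametrization) is genuinely a bijection, and that the terminal map $G_T$ together with $(\delta,\sigma)$ determines the hull family $\{\K_t\}_{t\le T}$ and not merely $\K_T$. That last point is exactly the content of the conjecture and may well require an additional argument (for instance, that the chordal/radial capacity pulled back through the M\"obius family gives a strictly monotone intrinsic clock on $[0,T]$ depending only on $G_T$ and $(\delta,\sigma)$), which is presumably why the statement is left as a conjecture rather than a theorem.
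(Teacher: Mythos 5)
This statement is left as an open conjecture in the paper: no proof is given, and the text explicitly says the problem is unsolved for general $\delta$ and $\sigma$ (only the classical cases are claimed, via their explicit normalizations and the capacity/conformal-radius interpretation of $T$). Your proposal does not close the gap, and the gap is exactly where you suspect it is at the end, but it enters earlier than you indicate. First, you begin by assuming $G_T=G'_{T'}$ are \emph{the same map}; but two conformal maps of $\Dc$ onto $\Dc\setminus\K$ differ by precomposition with a M\"obius automorphism of $\Dc$, and the conjecture's main content is that at most one map in this M\"obius orbit can arise as a terminal map of a $(\delta,\sigma)$-chain. By fixing $G_T=G'_{T'}$ at the outset you have assumed the "at most one $G$" half of the statement. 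Second, the step ``a hull admits at most one chordal L\"owner chain up to time-reparametrization'' is false as stated: a single terminal hull $\K$ admits many chordal chains (many growth histories); what is unique is the hydrodynamically normalized terminal map and the half-plane capacity. So after applying Theorem \ref{Theorem: The master theorem} to both chains you may only conclude that the two auxiliary chordal chains have equal \emph{endpoints} $\tilde g_{\lambda_T}=\tilde g'_{\lambda'_{T'}}$ and equal capacities $\lambda_T=\lambda'_{T'}$, not that they coincide as chains.

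This breaks the final step. The identity $\lambda_T=\lambda'_{T'}$ does not yield $T=T'$ unless $\lambda=\lambda'$ as functions, and by \eqref{Formula: dot lambda = M'^2} the integrand $\dot\lambda_t=\frac{\delta_{-2}}{\tilde\delta_{-2}}\bigl(\hat M_t{}'(a)\bigr)^2$ depends on the entire intermediate hull family $\{\K_t\}_{t\le T}$ through $\hat M_t$, i.e.\ on the growth history, which is precisely what is \emph{not} determined by the terminal hull. Two different histories give two different time changes, and there is no reason two integrals of different positive integrands over intervals $[0,T]$ and $[0,T']$ should force $T=T'$ merely because both equal the chordal capacity of $\K$. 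What the argument would need — and what the paper identifies as open — is a history-independent conserved quantity for general $(\delta,\sigma)$ playing the role that half-plane capacity plays for the chordal normalization, together with a proof that the M\"obius correction $\hat M_T$ at the terminal time depends only on $\K$ and not on the path. Your reduction correctly transfers the question to the chordal frame but does not answer it there; the conjecture is restated, not proved.
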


In other words, the chain 
$\{G_t\}_{t\in[0,T]}$
may not be unique, but the value 
$T\in[0,\infty)$
is unique 
(which is an interesting conformal invariant characteristic of the set
$\K$),
besides, the final map $G_T$ is also a unique possible. The proof of the
conjecture is straightforward for the classical cases thanks to the simple
normalization conditions. The parameter $T$ is just related to the conformal
radius for the radial case and the half-plane capacity for chordal case.
However, for a general choice of $\delta$ and $\sigma$ the problem is still
unsolved.

\subsection{Equivalence and normalization of $(\delta,\sigma)$-L\"owner chains}
\label{Section: Equivalence and normalization of slit L chains}

A slit L\"owner chain is determined by a triple $(\delta,\sigma, u_t)$. This
correspondence, however, is not one-to-one. It may happen that different
combinations of $\delta$, $\sigma$ and $u_t$ produce the same L\"owner chain
$\{g_t\}_{t\geq 0}$. It may  also happen that the resulting chains can be
transformed one to another by means of a simple transformation, for instance,
by a linear time reparameterization.

In this section, we define precisely what we mean by a `simple', or
\emph{elementary transformation}
\index{elementary transformations of L\"owner chain}
of a triple $(\delta,\sigma, u_t)$. If two slit L\"owner chains are determined
by triples that can be transformed one into another by means of elementary
transformations, we call such chains \emph{essentially equivalent} 
\index{essentially equivalent L\"owner chains}.

In particular, we show that we can always find a representative in the
equivalence class of triples so that the conditions
\begin{equation} 
	\delta_{-2} = \pm 2,
	\label{Formula: delta_-2 = pm2}
\end{equation}
and
\begin{equation}
	\sigma_{-1} = -1
	\label{Formula: sigma_-1 = -1}
\end{equation}
are satisfied. 

If the vector fields $\delta$ and $\sigma$ are of the form 
\eqref{Formula: delta_-2 = pm2} 
and 
\eqref{Formula: sigma_-1 = -1}, 
then we say that a general slit L\"owner chain driven by $\delta$ and $\sigma$ is a 
\emph{normalized slit L\"owner chain}
\index{normalized slit L\"owner chain}.

Below we list the transformations that we regard as elementary. We use 
the following notations 
$\mathscr{V}_c$, $\mathscr{T}_c$, $\mathscr{D}_c$, $\mathscr{R}_c$, and
$\mathscr{S}_c$ with the index $c$ for these transforms parametrized by
$c\in\mathbb{R}$ and the same letters without the index for the corresponding
one-parametric groups for automorphisms on the triples $(\delta,\sigma, u_t)$.
We summarize the properties of the transformations described above in Table
\ref{Table: elementary transforms}
together with a special combination of them 
\begin{equation}
	\mathscr{P}_c
	:=\mathscr{V}_{e^c} \circ \mathscr{T}_{e^{2c}} \circ \mathscr{S}_c.
\end{equation}
that will be used in Section \ref{Section: Equivalence and normalization SLE}

\subsubsection{Dilation of the driving function $\mathscr{V}$.} 
\index{$\mathscr{V}$}
Let 
$c\in \mathbb{R}\setminus\{0\}$. 
Then we define the transformation $\mathscr{V}_c$ by the formula
\begin{equation}
 \mathscr{V}_c : (\delta, \sigma, u_t) \mapsto 
 	(\delta, c \, \sigma, c^{-1} \,u_t).
\end{equation} 

The triple 
$ (\delta,\,\tilde{\sigma}, \, \tilde{u}_t)= (\delta,c\sigma, c^{-1} u_t)$ 
produces the same L\"owner chain as the triple 
$(\delta,\sigma, u_t)$. 
In order to see this, note that
the flow 
$\{\tilde{H}_t[\tilde \sigma]\}_{t\in \mathbb{R}}$ 
differs from the flow of 
$\{H_t[\sigma]\}_{t \in \mathbb{R}}$ 
only by time reparameterization, 
$H_t[\tilde \sigma] = H_{c t}[\sigma],$ 
so that
\begin{equation}
	H_{\tilde{u}_t}[\tilde \sigma]^{-1}_* \delta 
 	= {H_{u_t}[\sigma]}^{-1}_*\, \delta.
\end{equation}

\subsubsection{Time dilation $\mathscr{T}$} 
\index{$\mathscr{T}$}
For a constant $c>0$ we can consider the time
reparameterized chain $\{\tilde{g}_t\}_{t\geq 0},$ 
where 
$\tilde{g}_t = g_{ct},$
and note that 
$\{\tilde{g}_t\}_{t\geq 0}$ 
is generated by the triple 
$(c\,\delta,\sigma, u_{ct})$. 
Indeed,
\begin{equation}
 \frac{\partial}{\partial t} \tilde{g}_{t} 
 = \left(H_{u_{ct}}[\sigma]^{-1}_*\, (c\, \delta) \right) \circ \tilde{g}_t
\end{equation}

This motivates us to define the transformation $\mathscr{T}_c$
\begin{equation}
 \mathscr{T}_c : (\delta, \sigma, u_t) \mapsto  
 (c\,\delta, \sigma, u_{c\,t}), \quad c>0.
\end{equation}

The transformations 
$\mathscr{V}_c$ 
and 
$\mathscr{T}_c$ 
are sufficient for imposing the normalization conditions 
\eqref{Formula: delta_-2 = pm2} 
and 
\eqref{Formula: sigma_-1 = -1}.
Indeed, due to the fact that 
$\sigma_{-1} \neq 0,$ 
we can apply 
$\mathscr{V}_{1/\sigma_{-1}}$ 
and the transform 
$(\delta,\sigma, u_t) \mapsto (\delta,\frac{\sigma}{\sigma_{-1}}, 
\sigma_{-1} \, u_t),$ 
so that the vector field
\begin{equation}
 \frac{\sigma}{\sigma_{-1}} = \ell_{-1} 
 + \frac{\sigma_0}{\sigma_{-1}} \,\ell_0 
 + \frac{\sigma_1}{\sigma_{-1}}  \,\ell_1
\end{equation}
has coefficient $1$ at $\ell_{-1}$. 

Then, since 
$\delta_{-2}>0,$ 
we apply 
$\mathscr{T}_{2/b_{-2}}$, 
so that the triple 
$(\delta,\frac{\sigma}{\sigma_{-1}},  \sigma_{-1} \, u_t)$ 
is further transformed to 
\begin{equation}
 (\tilde{b}, \,\tilde{\sigma}, \,\tilde{u}_t) 
 =\left(\frac{2 \delta}{\delta_{-2}}, \,
 \frac{\sigma}{ \sigma_{-1}}, \,\sigma_{-1}\, 
 u_{\frac{2}{\delta_{-2}}t}\right)
\end{equation}
and the vector field 
$\tilde{\delta}$ 
has the coefficient $2$ at 
$\ell_{-2}$.

\subsubsection{Drift $\mathscr{D}$}
\index{$\mathscr{D}$} 
We add a linear drift to the driving 
 function $u_t$ and simultaneously modify
the field $\delta$:
\begin{equation}
 \mathscr{D}_c: (\delta, \sigma, u_t) \mapsto 
  (\delta - c \,\sigma, \sigma, u_t +  c\,t), \quad c\in \mathbb{R}.
\end{equation}
The new slit L\"owner chain 
$\{\tilde{g}_t\}_{t\geq 0}$ 
can be expressed as 
\begin{equation}
 \tilde g_{t} = H_{-ct}[\sigma] \circ g_t.
\end{equation}
Indeed, due to 
\eqref{Formula: d G G = (v + G G v) G G}
and
\eqref{Formula: H[v] v = v},
\begin{equation}\begin{split}
 \dot {\tilde g}_{t} 
 =& \frac{\de}{\de t} H_{-ct}[\sigma] \circ g_t 
 = \left(-c \sigma + H_{-ct}[\sigma]_* H_{u_t}[\sigma]_*^{-1} \delta \right) 
 \circ H_{-ct}[\sigma] \circ g_t
 =\\=&
 H_{u_t+ct}[\sigma]_*^{-1} \left(\delta - c \sigma \right) \circ \tilde g_t.
\end{split}\end{equation}
For a differentiable $u_t$ this is straightforward from 
\eqref{Formula: d G = delta G dt + sigma G du}.

We can apply $\mathscr{D}_c$ with $c= \frac{\frac32\,\delta_{-2} \sigma_0 - \delta_{-1}}{\sigma_{-1}}$, 
so that the coefficients in the normalized fields 
$\tilde{\delta}$ 
and 
$\tilde{\sigma}$  
satisfy  the following normalization condition
\begin{equation}
 2\,\frac{\delta_{-1}}{\delta_{-2}} - 3 \frac{\sigma_0}{\sigma_{-1}} = 0.  
 \label{Formula: drift normalization}
\end{equation}
This particular choice is motivated by the stochastic equation that we consider
below. In particular, this restriction appears in Theorem
\ref{Theorem: Absolute continuity of SLE},
Section
\ref{Section: Classification and normalization from algebraic point of view},
the end of Section
\ref{Section: Loewner equation and representation theory},
and Theorem
\ref{Theorem: ppS -> simple cases}.

\subsubsection{Parabolic M\"{o}bious transform $\mathscr{R}$}
\label{Formula: R - transform}
\index{$\mathscr{R}$}
This elementary transform is related to a homomorphic automorphism 
$H_c[\ell_1]\colon \Dc\map\Dc$, $c\in\mathbb{R}$ of $\Dc$ induced by the vector
field $\ell_1^{\HH}(z)=z^2$ which has a double zero at the point $a$.
The map ${H_c[\ell_1]}_*$ preserves the form
\eqref{Formula: delta and sigma for Lowner chain}
of $\delta$ and $\sigma$.
In the half-plane chart we have
\begin{equation}
	H_c[\ell_1]^{\HH} = \frac{z}{1-c z},\quad c\in\mathbb{R}.
\end{equation}

The transformation $\mathscr{R}_c$ is defined by
\begin{equation}
 \mathscr{R}_c: (\delta, \sigma, u_t) \mapsto 
 ({H_c[\ell_1]}_* \delta,~{H_c[\ell_1]}_* \sigma,~ u_t), \quad c\in \mathbb{R}.
\end{equation}
The new chain $\{\tilde g_t\}_{t\geq0}$ can be related to the original chain by
$\tilde g_t = H_c[\ell_1]\circ g_t \circ H_c[\ell_1]^{-1}$
with the aid of 
\eqref{Formula: d G G = (v + G G v) G G}.
This transform changes the hull as $K_t\map \tilde K_t = H_c[\ell_t](K_t)$.

We present how the components 
$\delta_n$, 
$\sigma_n$ 
of the fields 
$\delta$ 
and 
$\sigma$ 
are transformed under the action of 
$H_c[\ell_1]_*$
in the table 
\ref{Table: elementary transforms}.

Application of $\mathscr{R}_c$ preserves the normalization conditions 
\eqref{Formula: delta_-2 = pm2}, 
\eqref{Formula: sigma_-1 = -1} 
and 
\eqref{Formula: drift normalization}.

\subsubsection{Hyperbolic M\"{o}bious transform $\mathscr{S}$ (space dilation)}
\label{Formula: S - transform}
\index{$\mathscr{S}$} 
The transformation $\mathscr{S}_c$ ($c\in\mathbb{R}$)
is analogous to 
$\mathscr{R}_c,$ 
except for the fact that we use the holomorphic automorphisms  
$H_c[\ell_0]$ 
generated by $\ell_0$ in this case.	The vector field 
$\ell_0(z)=z$ 
has a simple zero at the source point $a$. 
the map ${H_c[\ell_0]}_*$,
ss well a 
${H_c[\ell_1]}_*$,
preserves the form
\eqref{Formula: delta and sigma for Lowner chain}
of $\delta$ and $\sigma$.

In the half-plane chart  
$H_c[\ell_0]$ 
is simply the multiplication by the real constant $e^{-c}$, i.e., 
\begin{equation}
 H_c[\ell_0]^{\HH}(z) = e^c z. 
\end{equation}
This motivates the term `space dilation'. 
In the unit disk chart, 
\begin{equation}
 H_c[\ell_0]^{\D}(z) 
 = \tau_{\D,\HH} \circ H_c[\ell_c]^{\HH} \circ \tau_{\HH,\D} 
 =  \frac
  { e^{-\frac{c}{2}} (1+z) - e^{\frac{c}{2}} (1-z) }
  { e^{-\frac{c}{2}} (1+z) + e^{\frac{c}{2}} (1-z) } 
\end{equation}
is a Möbius automorphisms of $\D$ that keeps the points $z=\pm 1$ fixed

The transformation $\mathscr{SR}_c$ is defined by
\begin{equation}
 \mathscr{S}_c: (\delta, \sigma, u_t) \mapsto 
 ({H_c[\ell_0]}_* \delta,~{H_c[\ell_0]}_* \sigma,~ u_t), \quad c\in \mathbb{R}.
\end{equation}
The new chain 
$\{\tilde g_t\}_{t\geq0}$ 
can be related to the original chain by
$\tilde g_t = H_c[\ell_0]\circ g_t \circ H_c[\ell_0]^{-1}$
with the aid of 
\eqref{Formula: d G G = (v + G G v) G G}.
This transform changes the hull as $K_t\map \tilde K_t = H_c[\ell_t](K_t)$.

We present how the components 
$\delta_n$, 
$\sigma_n$ 
of the fields 
$\delta$ 
and 
$\sigma$ 
are transformed under the action of 
$H_c[\ell_0]_*$
in Table 
\ref{Table: elementary transforms}.

Application of $\mathscr{S}_c$ keeps the normalization conditions 
\eqref{Formula: delta_-2 = pm2}, 
\eqref{Formula: sigma_-1 = -1} 
and 
\eqref{Formula: drift normalization}
unchanged.

\afterpage{
\begin{landscape}
 \vspace*{\fill}
\begin{table}
{%
\newcommand{\mc}[3]{\multicolumn{##1}{##2}{##3}}
\begin{center}
\begin{tabular}{l|l|llllll}\cline{2-8}
 & \mc{7}{c|}{Action of the transformation }\\\cline{2-8} & 
 $\delta\map$ & 
 \mc{1}{l|}{$\delta_n$} & 
 \mc{1}{l|}{$\sigma\mapsto$} & 
 \mc{1}{l|}{$\sigma_n$} & 
 \mc{1}{l|}{$u_t\mapsto$} & 
 \mc{1}{l|}{$g_t\mapsto$} &
 \mc{1}{l|}{$K_t\mapsto$} \\\hline
\mc{1}{|l|}{$\mathscr{V}_c$} & 
 $\delta$ & 
 \mc{1}{l|}
 {$\delta_n\map\delta_n$} & 
 \mc{1}{l|}
 {$c\sigma$} & 
 \mc{1}{l|}
 {$\sigma_n\map c\sigma_n$} & 
 \mc{1}{l|}
 {$c^{-1} u_t$} & 
 \mc{1}{l|}
 {$g_t$} &
 \mc{1}{l|}
 {$K_t$} \\\hline
\mc{1}{|l|} {$\mathscr{T}_c$} & 
 $c\delta$ & 
 \mc{1}{l|}
 {$\delta_n\mapsto c\delta_n$} & 
 \mc{1}{l|}
 {$\sigma$} & 
 \mc{1}{l|}{$\sigma_n\mapsto \sigma_n$} & 
 \mc{1}{l|}{$u_{ct}$} & 
 \mc{1}{l|}{$g_{ct}$} & 
 \mc{1}{l|}{$K_{ct}$} \\\hline
\mc{1}{|l|} {$\mathscr{D}_c$} & 
 $ \delta - c \sigma$ & 
 \mc{1}{l|} {$
 \begin{aligned}
  	\delta_{-2}	& \mapsto \delta_{-2} \\ 
  	\delta_{-1}	& \mapsto \delta_{-1} - c \sigma_{-1} \\
  	\delta_{0}	& \mapsto \delta_{0} - c \sigma_{0} \\
  	\delta_{1}	& \mapsto \delta_{1} - c \sigma_{1}
 \end{aligned}$} & 
 \mc{1}{l|}
 {$\sigma$} & 
 \mc{1}{l|}
 {$\sigma_n\mapsto\sigma_n$} & 
 \mc{1}{l|}
 {$u_t+c t$} & 
 \mc{1}{l|}
 {$H_{-ct}[\sigma] \circ g_t$} & 
 \mc{1}{l|}
 {$K_t$} \\\hline
\mc{1}{|l|}{$\mathscr{R}_c$} &
 ${H_c[\ell_1]}_* \delta$ & 
 \mc{1}{l|}{
 $\begin{aligned} 
  \delta_{-2}	& \mapsto \delta_{-2} \\
  \delta_{-1}	& \mapsto \delta_{-1} -3c\delta_{-2} \\
  \delta_{0}	& \mapsto \delta_{0} -2c\delta_{-1} + 3c^2 \delta_{-2} \\
  \delta_{1}	& \mapsto \delta_{1} -c\delta_{0} + c^2 \delta_{-1} -c^3 \delta_{-2}
 \end{aligned}$
 } & 
 \mc{1}{l|}
  {${H_c[\ell_1]}_* \sigma$ } & 
 \mc{1}{l|}{$\begin{aligned}
  \sigma_{-1}	&\mapsto\sigma_{-1}\\
  \sigma_0		&\mapsto\sigma_{0} -2c\sigma_{-1} \\
  \sigma_1 		&\mapsto \sigma_{1} -c\sigma_{0} +c^2\sigma_{-1} \\
 \end{aligned}$} & 
 \mc{1}{l|}{$ u_t$} & 
 \mc{1}{l|}{$H_c[\ell_1] \circ g_t \circ H_c[\ell_1]^{-1}$} & 
 \mc{1}{l|}{$H_c[\ell_1] (K_t)$}\\\hline
\mc{1}{|l|}{$\mathscr{S}_c$} & 
 $H_c[\ell_0]_* \delta$ & 
 \mc{1}{l|}
 {$\delta_n \mapsto e^{nc} \delta_n$} & 
 \mc{1}{l|}{$ H_c[\ell_0]_* \sigma$} & 
 \mc{1}{l|}{$\sigma_n \mapsto e^{nc} \sigma_n$ } & 
 \mc{1}{l|}{$u_t$} & 
 \mc{1}{l|}{$H_c[\ell_0] \circ g_t \circ H_c[\ell_0]^{-1}$} & 
 \mc{1}{l|}{$H_c[\ell_0] (K_t) $} \\\hline
\mc{1}{|l|}
 {$\mathscr{P}_c$} & 
 $e^{2c} H_c[\ell_0] \delta$ & \mc{1}{l|}{$
 \begin{aligned}
 	\delta_{-2}	& \mapsto \delta_{-2} \\
	\delta_{-1}	& \mapsto  e^{c} \delta_{-1} \\
	\delta_{0}	& \mapsto  e^{2c} \delta_{0} \\
	\delta_{1}	& \mapsto  e^{3s} \delta_{1}
 \end{aligned}$} & 
 \mc{1}{l|}
 {$e^c H_c[\ell_0]_* \sigma$} & 
 \mc{1}{l|}{$
 \begin{aligned}
	\sigma_{-1}&\mapsto   \sigma_{-1} \\
	\sigma_{0}&\mapsto  e^{c} \sigma_{0} \\
	\sigma_{1} &\mapsto  e^{2s} \sigma_{1}
 \end{aligned}$} & 
 \mc{1}{l|}
 {$ e^{-c} u_{e^{2c} t}$} & 
 \mc{1}{l|}{$H_c[\ell_0] \circ g_{e^{2c}t} \circ H_c[\ell_0]^{-1}$} & 
 \mc{1}{l|}{$H_c[\ell_0] \left(K_{e^{2c}t}\right) $}\\\hline
\end{tabular}
\end{center}
} 
\caption{Elementary transformations of slit Löwner chains. The last transform $\mathscr{P}_c$ is just the composition 
$\mathscr{P}_c:=\mathscr{V}_{e^c} \circ \mathscr{T}_{e^{2c}} \circ \mathscr{S}_c$, 
see the text for the motivation.}
\label{Table: elementary transforms}
\end{table}
\vspace*{\fill}
\end{landscape}
}

\subsubsection{The structure of the family of 
	essentially different $(\delta,\sigma)$-L\"{o}wner chains} 
\label{Section: The str of the family of essentially diff slit L chains}

The collection of the slit L\"{o}wner chains given by 
$\delta$ and $\sigma$ as in 
\eqref{Formula: delta and sigma for Lowner chain}
can be considered as a 7-parametric space
$\mathbb{R}^{\times 5} \times (\mathbb{R}\setminus \{0\})^2$, 
where one term $(\mathbb{R}\setminus \{0\})$ corresponds to $\delta_{-2}\neq 0$
and  
$\mathbb{R}\setminus \{0\}$
corresponds to 
$\sigma_{-1}\neq 0$.
The group $\mathscr{G}$  generated by 5 elementary transforms introduced above
forms equivalence classes of \emph{essentially equivalent} L\"owner chains. 
\index{essentially equivalent L\"owner chains}
Thus, the
family of essentially different forward chains can be represented by the factor
space 
$\mathbb{R}^{\times 5} \times \mathbb{R}^ 
	+ \times (\mathbb{R}\setminus \{0\})/ \mathscr{G}
	$.
The classical chordal, dipolar and radial chains are represented by three
single points in this 2-dimensional space.

To specify this factorized collection of L\"owner chains we have already
introduced 3 normalization conditions \eqref{Formula: delta_-2 = pm2},
\eqref{Formula: sigma_-1 = -1}
and
\eqref{Formula: drift normalization} 
above.
Using the last two transforms $\mathscr{R}$ and $\mathscr{S}$, 2 more
parameters can be fixed, but it is not reasonable to introduce such restriction
in this text.

\section{Slit holomorphic stochastic flow or ($\delta,\sigma$)-SLE}
\label{Section: SLE}

This section is dedicated to the central object of this monograph called the 
slit holomorphic stochastic flow or $(\delta,\sigma)$-SLE. The author prefers
the second name to highlight that it is a version of the Schramm-L\"owner
evolution.

\subsection{Definition and basic properties}
\label{Section: SLE defenition and basic properties}


Let 
($\Omega,\mathcal{F},P$)
be a probability, space and let $A$ and $B$ be two random variables taking
values in a measurable space $S$. We denote by $\law{A}$ the induced measure on
$S$. We say that the laws of $A$ and $B$
are \emph{identical}
(equivalently, $A$ and $B$ \emph{agree in law}):
\index{identical random laws (agreement in law)}
\begin{equation}
	\law{A} = \law{B}
	\label{Formula: Law A = Law B}
\end{equation}
if the following expectations are equal 
\begin{equation}
	\Ev{f(A)} = \Ev{f(B)}
\end{equation}
for any bounded measurable functions $f:S\map\mathbb{R}$.
More generally, an equality of conditional random laws
\begin{equation}
	\law{A~|~\mathcal{F}_1} = \law{B~|~\mathcal{F}_2}
\end{equation}
is, by definition, the identity
\begin{equation}
	\Ev{f(A)~|~\mathcal{F}_1}(\omega) = 
	\Ev{f(B)~|~\mathcal{F}_2}(\omega)\quad
	\text{a.s.}
\end{equation}
for any bounded $\mathcal{F}$-measurable functions $f:S\map\mathbb{R}$.
One can understand a conditional law as a law parametrized by $\omega$.

 
Let $\{B_t\}_{t\in[0,+\infty}$ be the standard Brownian motion naturally adapted
to a filtation $\{\mathcal{F}_t\}_{t\in[0,+\infty)}$ 
of a filtrated probability space
\begin{equation}
	(\Omega^B,\{\mathcal{F}_t^B\}_{t\in[0,+\infty)},P^B).
	\label{Formula: (Omega,F_t,P)}
\end{equation}
Define
\begin{equation}
	\mathcal{F}_T := \{ B\in\mathcal{F}^B \colon B \cap \{t\leq T\}\in
	\mathcal{F}_t^B,~ \forall t\in[0,+\infty)\}.
\end{equation}
The following statement is known as 
the \emph{strong Markov property of the Brownian motion}
\index{strong Markov property of the Brownian motion}
\begin{equation}
	\law {\{B_{s+T} - B_T\}_{s\in[0,+\infty)}} = 
	\law {\{B_s\}_{s\in[0,\infty)}}
	\label{Formula: Strong markov propery of B}
\end{equation}
and 
$\{B_{s+T} - B_T\}_{s\in[0,\infty)}$
is independent of 
$\mathcal{F}_T$
for any stopping time $T<+\infty$ a.s.
We consider 
$\{B_s\}_{s\in[0,\infty)}$ 
as a random variable taking values in the
space $C^0_{[0,+\infty)}$ of the continuous functions $B:[0,+\infty)\map
\mathbb{R}$ such that $B_0=0$.

One of the consequences of the strong Markov property is
\begin{equation} 
	\law {B_s~|~\mathcal{F}_T} = 
	\law {B_s~|~B_T},\quad s\in[0,\infty)
	\label{Formula: law B_s | F_T = law B_s | B_T}
\end{equation}
for any stopping time $T<+\infty$ a.s.
In particular, we have
\begin{equation} 
	\law {B_s~|~\mathcal{F}_t^B} = 
	\law {B_s~|~B_t},\quad t,s\in[0,\infty),
	\label{Formula: law B_s | F_t = law B_s | B_t}
\end{equation}
which is known as
the \emph{simple Markov property} of $\{B_t\}_{t\in[0,+\infty)}$.
\index{simple Markov property of a stochastic process}
Here and below, under a conditional expectation (or conditional law) with
respect to a random varible $X$ we mean the expectation (or the law) with
respect to the sigma-algebra $\mathcal{F}_X$ generated by $X$.

We are ready now to discuss a stochastic version of the $(\delta,\sigma)$-L\"owner equation. 
We just equip the driving function $u_t$ with the Brownian measure
\begin{equation}
 u_t:=B_t,\quad t\geq 0.
 \label{Formula: u_t = B_t}
\end{equation}
Equivalently, we formulate the following definition.

\begin{definition}
Let $\delta$ and $\sigma$ be as in
\eqref{Formula: delta and sigma for Lowner chain},
and let $\{B_t\}_{t\geq 0}$ be the standard Brownian motion. 
Then the solution to the stochastic differential equation in the Stratonovich
form
\begin{equation}
	\dS G_t = \delta \circ G_t dt + \sigma \circ G_t \dS B_t,\quad G_0=\id,
	\label{Formula: Slit hol stoch flow Strat}
\end{equation}
is called a forward (reverse)
\emph{slit holomorphic stochastic flow}
\index{slit holomorphic stochastic flow} 
or forward (reverse)
\emph{($\delta,\sigma$)-SLE}.
\index{($\delta,\sigma$)-SLE}
\end{definition}

The stochastic differential equation
\eqref{Formula: Slit hol stoch flow Strat}
is a convenient form to write down the Stratonovich integral equation. 
\begin{equation}
	\int\limits_{0}^{t} G_{\tau} \dS B_{\tau}  
	=\int\limits_{0}^{t} \delta \circ G_{\tau} d\tau 
	+ \int\limits_{0}^{t} \sigma \circ G_{\tau} \dS B_{\tau}.
	\label{Formula: Slit hol stoch flow Strat integral form}
\end{equation}
We use the notation `$\dS$' ($\dI$) to mark that the integral are taken in the
Stratonivich (It\^o) sense.
The relation between these two integrals are given in, e.g.
\cite[Section 4.3.6]{Gardiner1982} and in Appendix
\ref{Appendix: Some relations from stochastic calculus}.

The Stratonovich form is more convenient in our setup than a more
frequently used It\^{o} form. The equation
\eqref{Formula: Slit hol stoch flow Strat}
in the It\^{o} form is
\begin{equation}\begin{split}
 & \dI G_t^{\psi}(z) 
 =
 \left(\delta^{\psi} + \frac12\sigma^{\psi} {\sigma^{\psi}}' \right) 
  \circ G_t^{\psi}(z) dt 
 + \sigma^{\psi} \circ G_t^{\psi}(z) \dI B_t,\\
 &G^{\psi}_0(z)=z,
 \label{Formula: Slit hol stoch flow Ito}
\end{split}\end{equation}
choosing some chart $\psi$. A disadvantage of the It\^o form is that the
expression in parenthesis of
\eqref{Formula: Slit hol stoch flow Ito}
transforms form chart to chart in a complicated
manner, whereas the functions $\delta$ and $\sigma$ in the
Stratonovich form 
\eqref{Formula: Slit hol stoch flow Strat}
transform as vector fields.

An equivalent definition of ($\delta,\sigma$)-SLE can be given in terms of 
$\{g_t\}_{t\in[0,+\infty)}$ and equation 
\eqref{Formula: d g = h delta g dt}
if we equip the set of driving function 
$\{u_t\}_{t\in[0,+\infty)}$
with the Brownian measure by
\begin{equation}
	u_t:=B_t,\quad t\in[0,+\infty).
\end{equation}

The existence and uniqueness of the solution to 
\eqref{Formula: Slit hol stoch flow Strat}
can be directly obtained from the existence and uniqueness of the solution to 
\eqref{Formula: d g = h delta g dt} 
for each sample of 
$u_t = B_t$. 
An alternative approach is to use a general theory of stochastic differential
equations and flows, see, for instance,
\cite{Protter2004a}. The main difficulty of this way is that the function
$\delta^{\psi}(z)$ does not possess the Lipschitz condition. 

Since there is at most one driving function for a given chain, we have the
following one-to-one correspondence
\begin{equation}
	\{B_t\}_{t\in[0,\infty)} \longleftrightarrow \{G_t\}_{t\in[0,\infty)}
	\label{Formula: B_t <-> G_t}
\end{equation}
for each fixed pair $(\delta,\sigma)$.
Thereby, the $(\delta,\sigma)$-L\"owner chain $\{G_t\}_{t\in[0,+\infty)}$ is
a random variable on the same probability space 
\eqref{Formula: (Omega,F_t,P)}
as the Brownian motion $B_t$.
Moreover, the chain
$\{G_s\}_{s\in[0,t]}$ is a $\mathcal{F}^B_t$-measurable random variable
defined by the one-to-one correspondence
\begin{equation}
	\{B_s\}_{s\in[0,t]} \longleftrightarrow \{G_s\}_{s\in[0,t]}.
\end{equation}
Using the map
\begin{equation}
	\{G_s\}_{s\in[0,t]} \mapsto G_t 
\end{equation}
we can define a stochastic process 
$\{G_t\}_{t\in[0,+\infty)}$
taking values in 
$\mathcal{G}[\delta,\sigma]$
adopted to the filtration 
$\{\mathcal{F}^B_t\}_{t\in[0,+\infty)}$.

%
%
%

From the strong Markov property of the Brownian motion 
\eqref{Formula: Strong markov propery of B}
and
\eqref{Formula: B_t <-> G_t}
we conclude that
\begin{equation}
	\law{\{G_{t+T} \circ G_{T}^{-1}\}_{t\in[0,+\infty)}} = 
	\law{\{G_t\}_{t\in[0,\infty)}}
\end{equation}
and independent of $\mathcal{F}_T$
for any stopping time $T<+\infty$ a.s.
This property of the random law of the L\"owner chains 
$\{G_t\}_{t\in[0,+\infty)}$
can be called the
\emph{strong Markov property of $(\delta,\sigma)$-SLE}.
\index{strong Markov property of $(\delta,\sigma)$-SLE}
From the properties of 
$\{B_t\}_{t\in[0,+\infty)}$ 
and from the relation
\eqref{Formula: G_t = G_t,s circ G_s}
it follows that the process 
$\{G_t\}_{t\in[0,+\infty)}$ 
possesses
also the simple Markov property:
\begin{equation}
	\mathrm{Law}[G_{t+s} ~|~\mathcal{F}_t] = \mathrm{Law}[G_{t+s}~|~G_t],	\quad
	s,t\geq 0.
\end{equation}
Eqivalently,
\begin{equation}
	\Ev{ f(G_{t+s}) ~|~\mathcal{F}_t}(\omega) =
	\Ev{f(G_{t+s})~|~\mathcal{F}_{G_t}}(\omega), \quad \text{a.s.,} \quad s,t\geq 0
\end{equation}
for any bounded measurable function 
$f:\mathcal{G}[\delta,\sigma]\map\mathbb{R}$
which is bounded and measurable with respect to the sigma-algebra on
$\mathcal{G}[\delta,\sigma]$ generated by $\{B_{\tau}\}_{\tau\in[0,t+s]}$. We
denote by $\mathcal{F}_{G_t}$ the sigma-algebra on $\Omega_B$ induced by the random
variable $G_t$.
Moreover, 
the random conformal maps 
$G_{t+s,s}:=G_{t-s}\circ G_t^{-1}$ 
and 
$G_s$ 
are independent for $t,s\geq 0$.
More generally, for any finite collection of times 
$t_1>t_2>t_3>\dotso t_n\geq 0$
the random variables
$G_{t_1,t_2}$, $G_{t_2,t_3}$ , ... , $G_{t_{n-1},t_n}$ are independent thanks to
the independence of 
$B_{t_1}-B_{t_2}$, $B_{t_2}-B_{t_3}$, ... , $B_{t_{n-1}}-B_{t_n}$
and 
\eqref{Formula: G_t,s = G_t,r circ G_r,s}.
If we take into account that $\mathcal{G}[\delta,\sigma]$ is a semigroup with
respect to the composition, the last property can be interpreted as the
`independence of semigroup increments'.

Thanks to the map
\begin{equation}
	G_t \mapsto \K_t=\Dc\setminus G_t(\Dc)
\end{equation}
there is a $\mathcal{F}_t^B$-process on the subsets of $\Dc$.
The strong Markov property of random maps $G_t$ can be extended to
the random law on the subsets $\K_t$ if the Conjecture 
\ref{Conjecture: uniquence of G and T given K}
is true:
\begin{equation}
	\law{\{G_T(\K_{t+T}\setminus \K_T) \}_{t\in[0,+\infty)}} = 
	\law{\{\K_t\}_{t\in[0,+\infty)}}.
\end{equation}
\index{strong Markov property on hulls}
According to the results of Section 
\ref{Section: Slit holomorphic Loewner chain} 
this is true 
at least for the classical cases (due to the normalization
conditions), and 
when $\K_t$ is a simple curve a.s. (due to Theorem
\ref{Theorem: Curve -> chain}). 
We study when $\K_t$ is a simple curve in
in Section 
\ref{Section: Relations between essentially different SLEs}.

\begin{figure}[hp]
\centering
  \begin{subfigure}[t]{0.33\textwidth}
		\centering
    \includegraphics[width=4.5cm,keepaspectratio=true]
    	{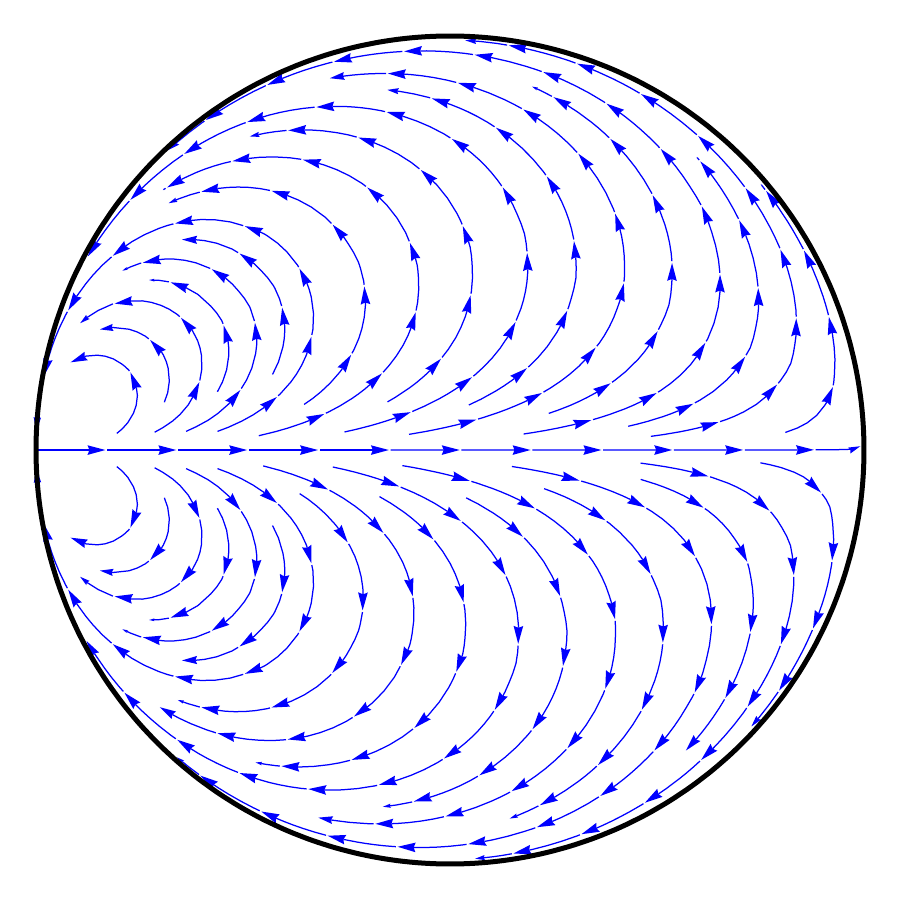}
  \end{subfigure}%
  ~  
  \begin{subfigure}[t]{0.33\textwidth}
		\centering
    \includegraphics[width=4.5cm,keepaspectratio=true]
    	{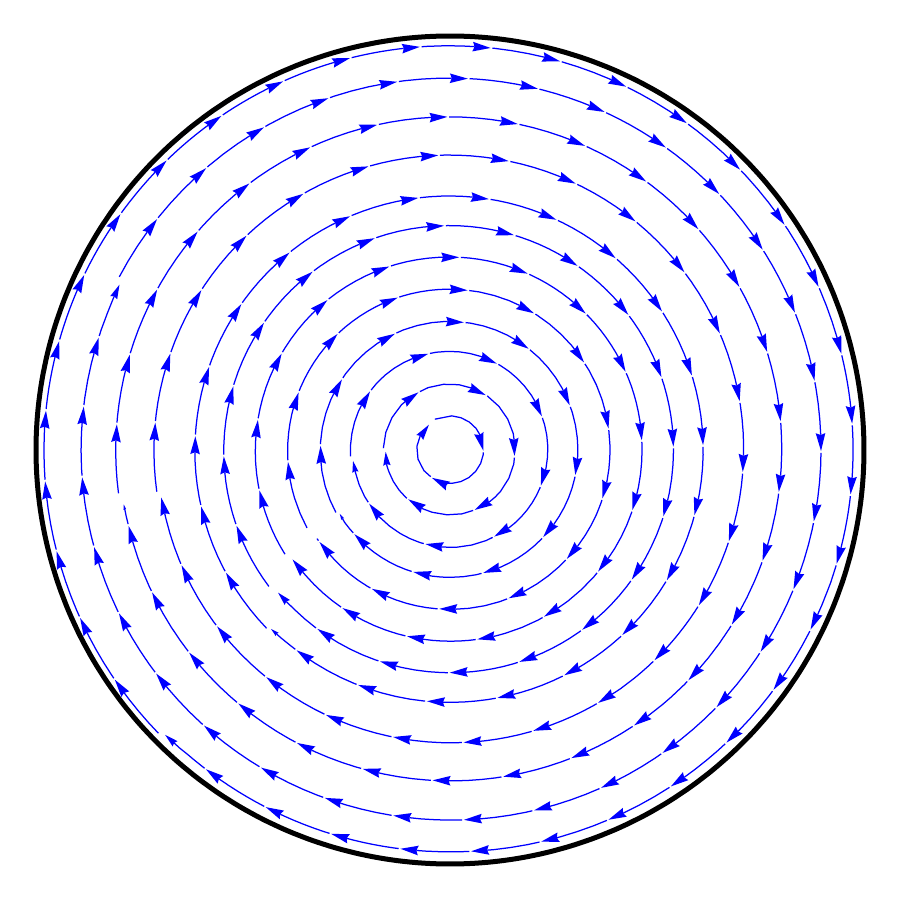} 
	\end{subfigure}%
	~
	\begin{subfigure}[t]{0.33\textwidth}
	\centering
		\includegraphics[width=4.5cm,keepaspectratio=true]
			{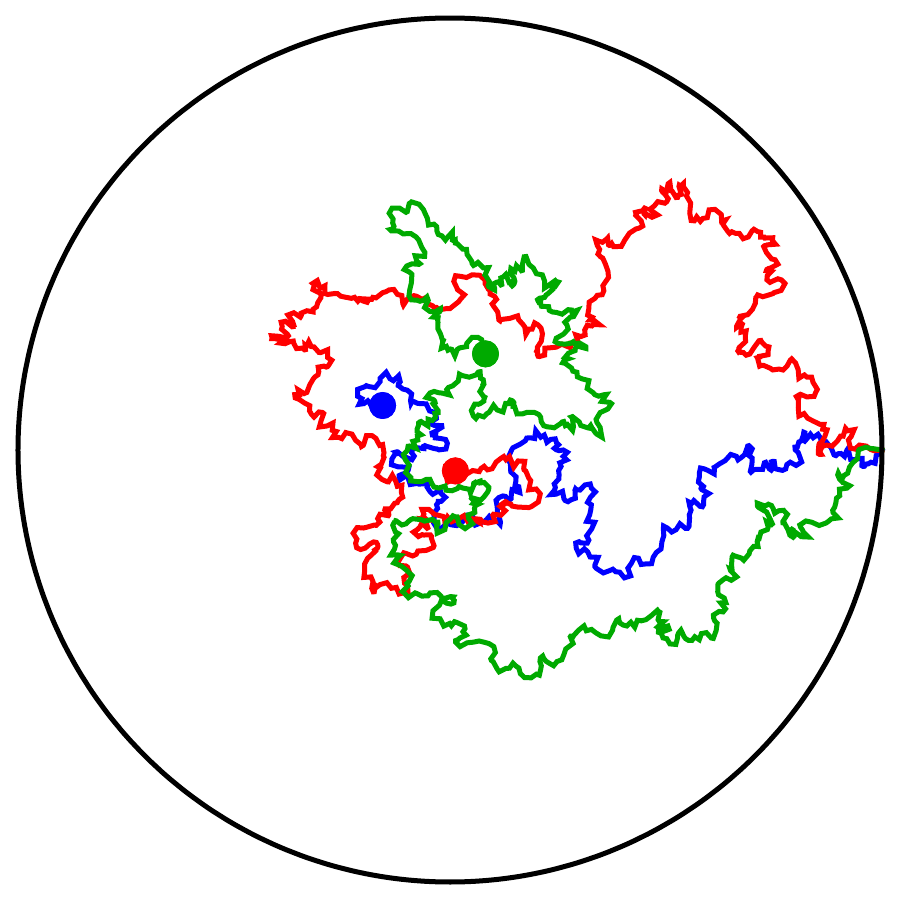}
	\end{subfigure}
	~
	
  \begin{subfigure}[t]{0.33\textwidth}
		\centering
    \includegraphics[width=4.6cm,keepaspectratio=true]
    	{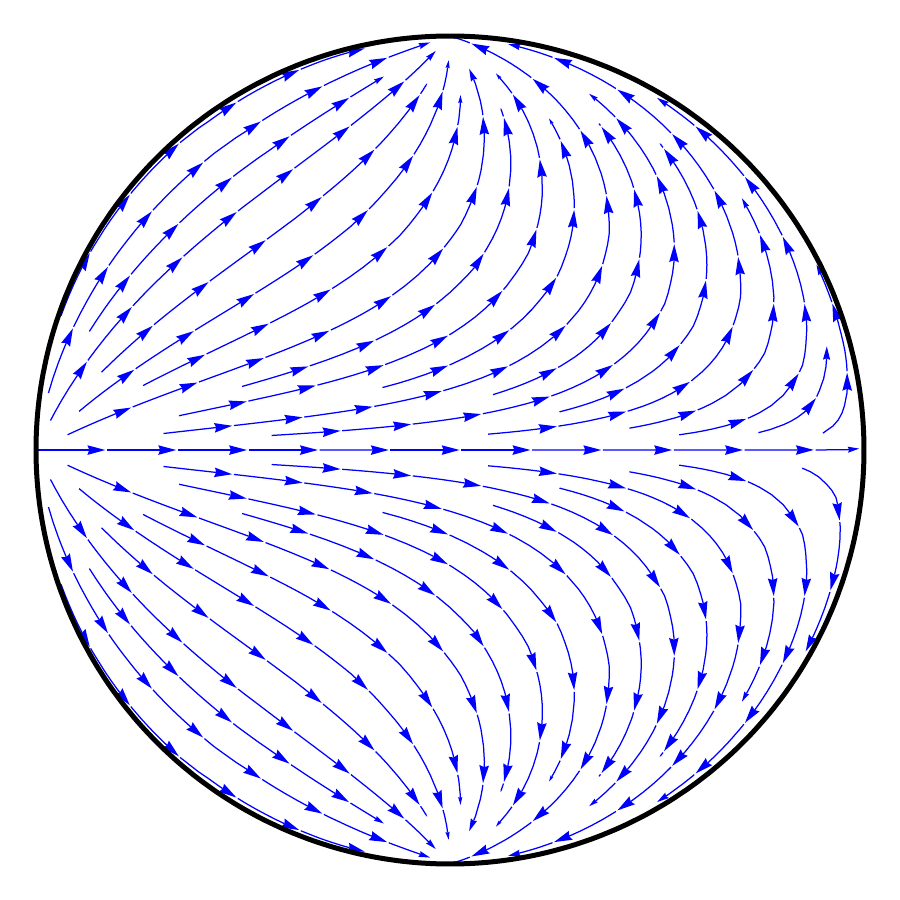}
  \end{subfigure}%
  ~  
  \begin{subfigure}[t]{0.33\textwidth}
		\centering
    \includegraphics[width=4.5cm,keepaspectratio=true]
    	{Pictures/Vector_fileds_v=-_-1_-_1_.pdf} 
	\end{subfigure}%
	~
	\begin{subfigure}[t]{0.33\textwidth}
	\centering
		\includegraphics[width=4.5cm,keepaspectratio=true]
			{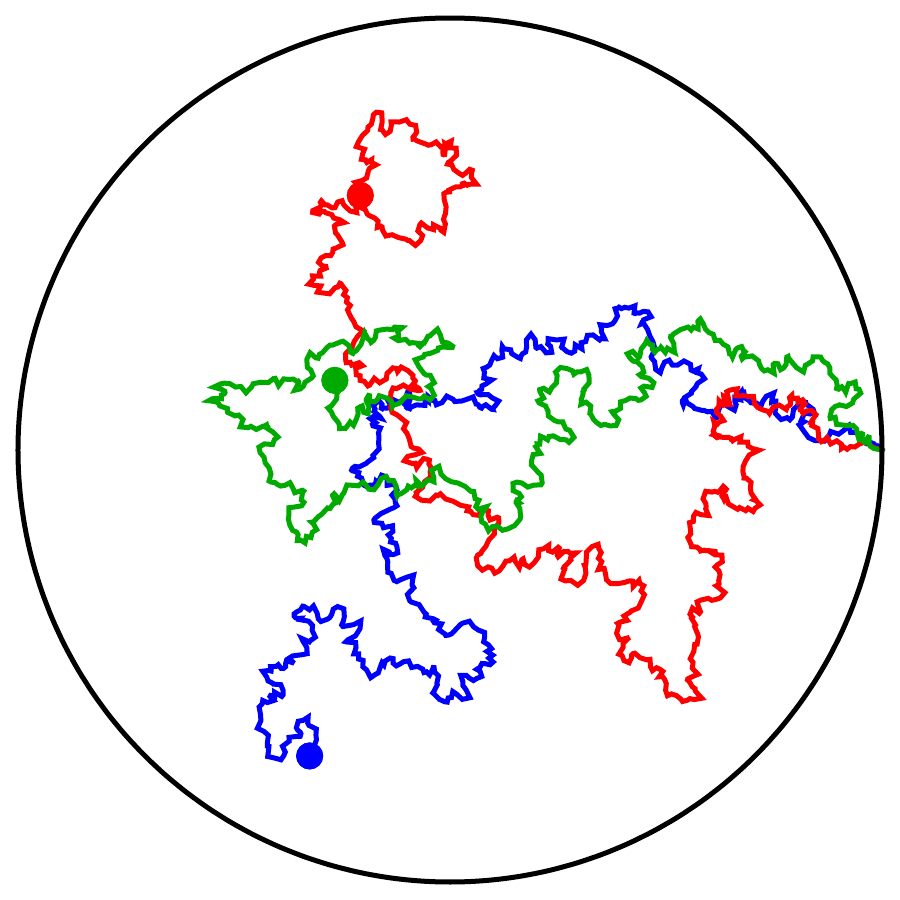}
	\end{subfigure}
	~

  \begin{subfigure}[t]{0.33\textwidth}
		\centering
    \includegraphics[width=4.5cm,keepaspectratio=true]
    	{Pictures/Vector_fileds_v=_-2_.pdf}
  \end{subfigure}%
  ~  
  \begin{subfigure}[t]{0.33\textwidth}
		\centering
    \includegraphics[width=4.5cm,keepaspectratio=true]
    	{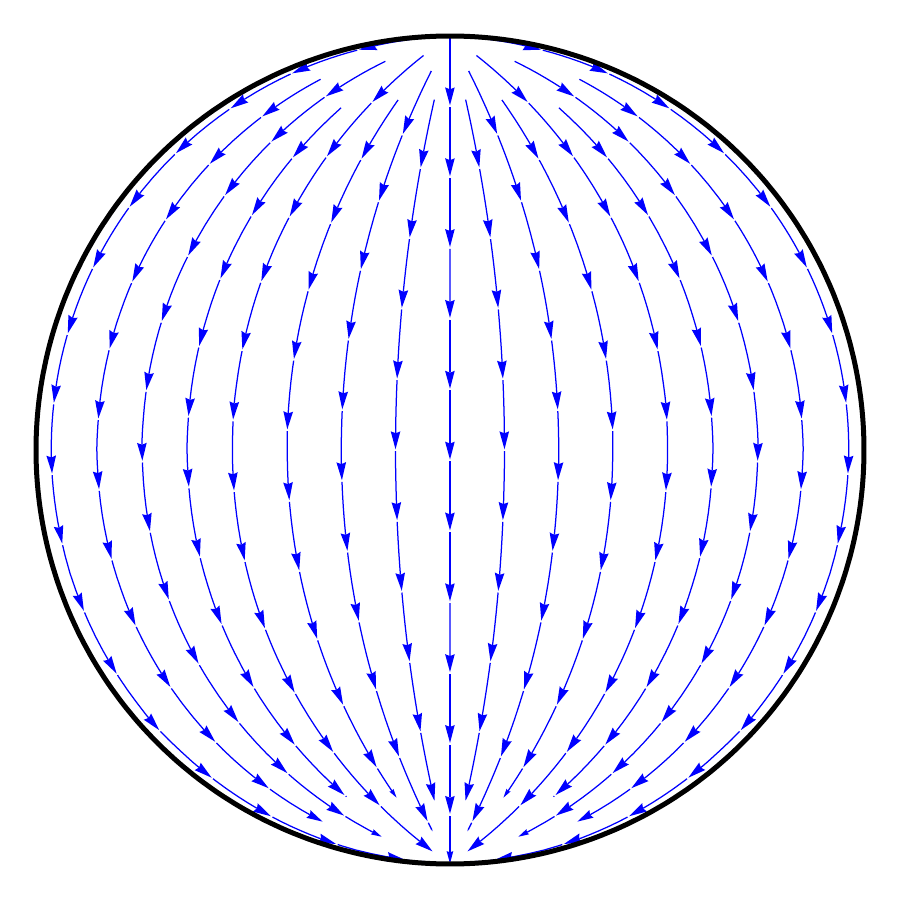} 
	\end{subfigure}%
	~
	\begin{subfigure}[t]{0.33\textwidth}
	\centering
		\includegraphics[width=4.5cm,keepaspectratio=true]
			{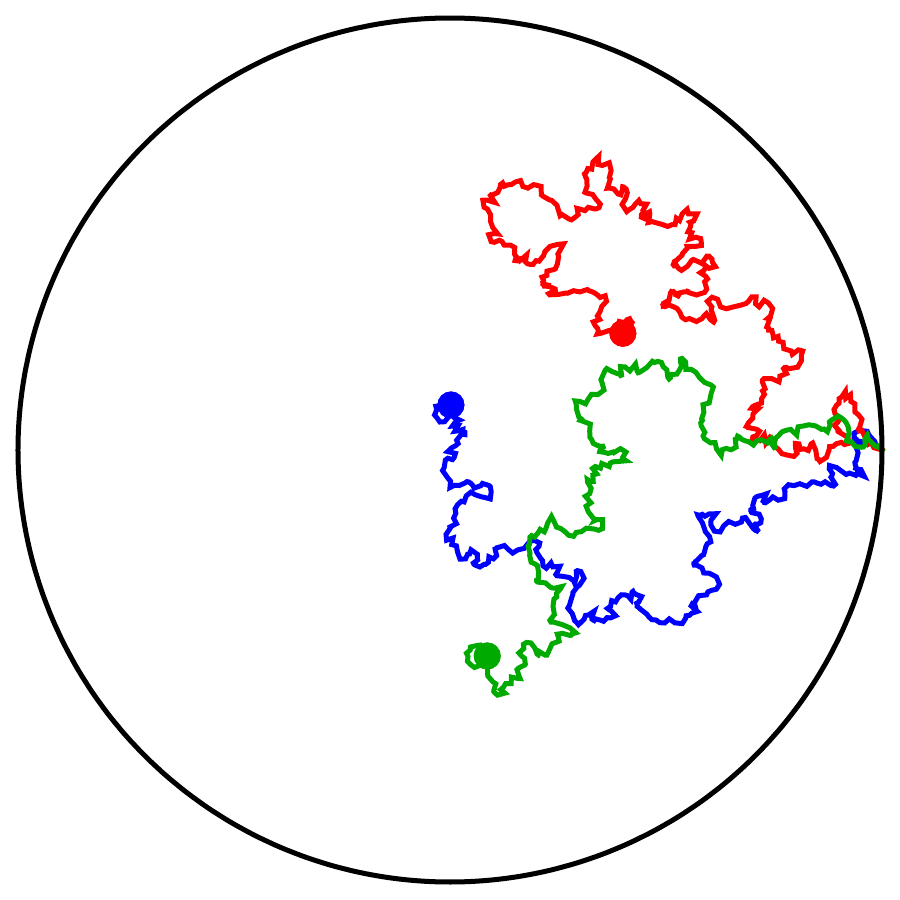}
	\end{subfigure}
	~
	
  \begin{subfigure}[t]{0.33\textwidth} 
		\centering
    \includegraphics[width=4.5cm,keepaspectratio=true]
    	{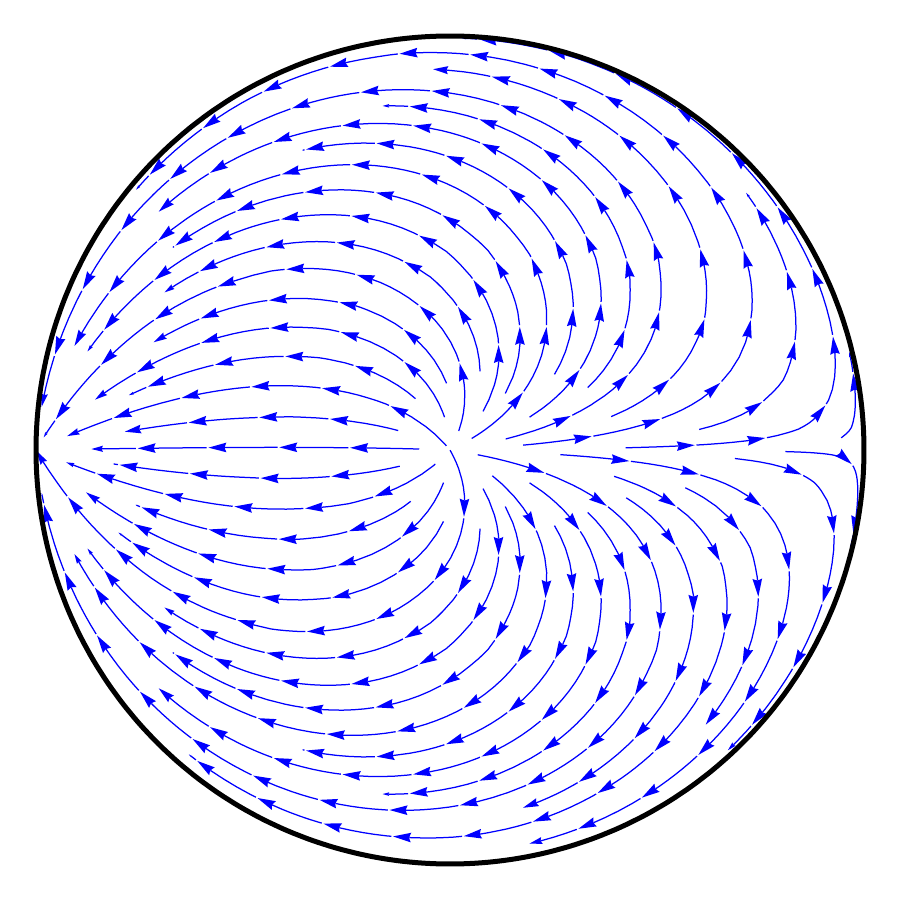}
  \end{subfigure}%
  ~  
  \begin{subfigure}[t]{0.33\textwidth}
		\centering
    \includegraphics[width=4.5cm,keepaspectratio=true]
    	{Pictures/Vector_fileds_v=-_-1_+_1_.pdf} 
	\end{subfigure}%
	~
	\begin{subfigure}[t]{0.33\textwidth}
	\centering
		\includegraphics[width=4.5cm,keepaspectratio=true]
			{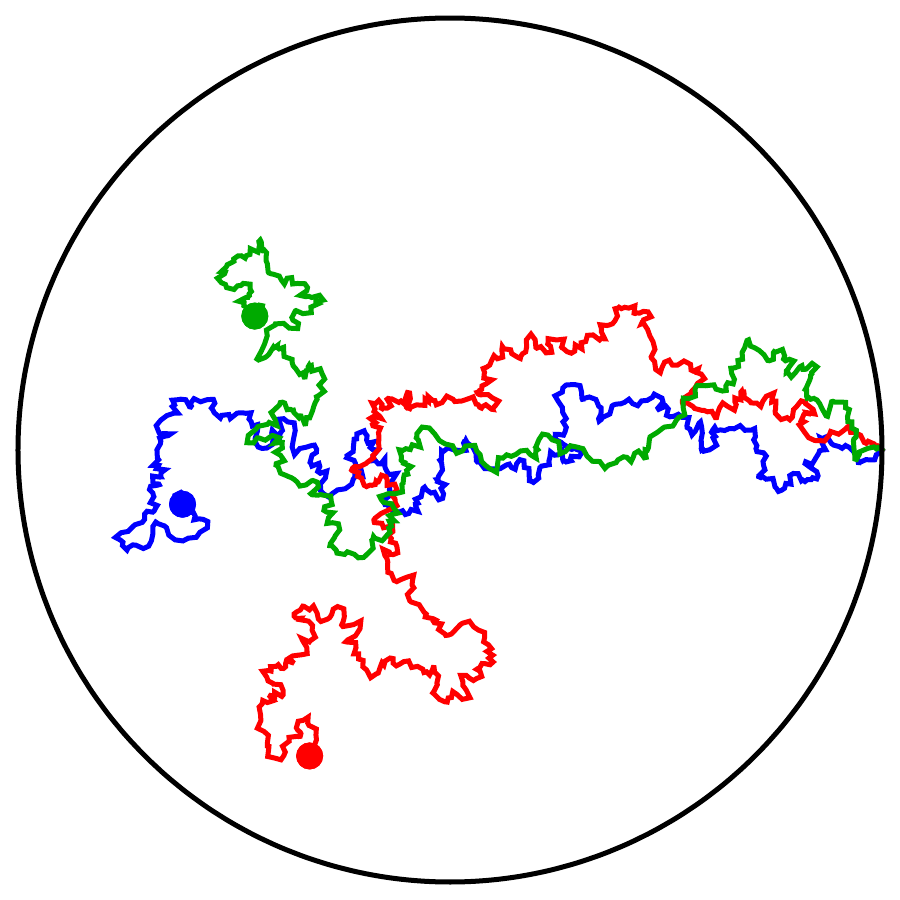}
	\end{subfigure}
 	\caption{Samples of the ($\delta,\sigma$)-SLE slits (in the right column) for
 	different choices of $\delta$ (the left column) and $\sigma$ (the
 	central column). We use the normalization from Section 
 	\ref{Section: Equivalence and normalization SLE} and $\kappa=2$. Each
 	of the four strings corresponds to some choice of $\delta$ and $\sigma$. 
 	We pick these four combinations just for illustration, they do not have any
 	special properties. Some other combinations are
 	studied in Chapter 
 	\ref{Chapter: Classical cases}.
 	\label{Figure: Some SLE Examples}}
\end{figure}

 \subsection{Equivalence and normalization of $(\delta,\sigma)$-SLE}
 \label{Section: Equivalence and normalization SLE}

We now use the results of Section
\ref{Section: Equivalence and normalization of slit L chains}
to study how the elementary transformations change the random law on the
driving function $u_t=B_t$. We remark that the transforms 
$\mathscr{V}$, $\mathscr{T}$, and $\mathscr{D}$ change the random law as
follows:
\begin{equation}
	\mathscr{V}_c:B_t\mapsto c^{-1} B_t,
\end{equation}
\begin{equation}
	\mathscr{T}_c:B_t\mapsto B_{ct}= c^{\frac12} \tilde B_t,
\end{equation}
\begin{equation}
	\mathscr{D}_c:B_t\mapsto B_t+ct,
\end{equation}
where $\tilde B_t$ is a another sample of the standard Brownian motion.
The transforms 
$\mathscr{R}$ 
and
$\mathscr{S}$ do not change the law of $B_t$.

Using the transform 
\begin{equation}
	\mathscr{T}_{c^2} \circ \mathscr{V}_{c}
	\label{Formula: T_c circ V_s^1/2}
\end{equation}
with
$c=2/\delta_{-2}$ we can impose the normalisation condition 
\eqref{Formula: delta_-2 = pm2}
keeping the random law of the deriving function unchanged. The equation 
\eqref{Formula: Slit hol stoch flow Strat} takes the from 
\begin{equation}
  \dS G_t = 
  \underline\delta \circ G_t dt 
  + \underline\sigma \circ G_t \sqrt{\kappa} \dS B_t ,\quad G_0=\id,
  \label{Formula: dS G = delta_N dt + kappa^1/2 sigma_N dB} 
\end{equation}
where we defined a parameter
\begin{equation}
	\kappa :=\frac{2 \sigma_{-1}^2}{\delta_{-2}}
	\label{Formula: kappa := ...}
\end{equation}
and denote the normalized $\delta$ and $\sigma$ with the underline:  
\begin{equation}\begin{split}
	\underline\delta =& 
	\pm (2\ell_{-2} + \underline\delta_{-1} \ell_{-1} 
	+ \underline\delta_{0} \ell_{0} 
	+ \underline\delta_{1} \ell_{1}) \\
	\underline\sigma =& 
	-\ell_{-1} + \underline\sigma_{0} \ell_{0} +	\underline\sigma_{1}\ell_{1}
	\label{Formula: delta_N = ..., sigma_N = ... without drift}
\end{split}\end{equation}
This definition of $\kappa$ agrees with its analog from the classical
chordal, dipolar, and radial cases. It will be clear below that the parameter
$\kappa$ defined this way is responsible for the local geometric properties of
the hull $\K_t$ in general case as well. We will often assume below that $\delta$
and $\sigma$ are normalized and drop the underline.

The transform 
$\mathscr{R}$
keeps the form of
\eqref{Formula: dS G = delta_N dt + kappa^1/2 sigma_N dB}
unchanged because it has no influence on the driving function $B_t$
and on the normalizations
\eqref{Formula: delta_-2 = pm2} 
and
\eqref{Formula: sigma_-1 = -1}.
The transform
$\mathscr{P}$ introduced at the end of 
Section \ref{Section: Equivalence and normalization of slit L chains}
acts on the driving function as
\begin{equation}
	\mathscr{P}_c: B_t \mapsto e^{-c} B_{e^{2c}t} =: \tilde B_t  
\end{equation}
and does not change the random law of the Brownian motion. On the other hand,
$\mathscr{P}$ does not change  normalisation conditions 
\eqref{Formula: delta_-2 = pm2} 
and
\eqref{Formula: sigma_-1 = -1}
introduced above.

In addition, we can use the transform $\mathscr{D}$ to impose
\eqref{Formula: drift normalization}. After that, we have 
\begin{equation}\begin{split}
  \dS G_t = 
  \underline\delta \circ G_t dt 
  + \underline\sigma \circ G_t \left( \sqrt{\kappa} \dS B_t 
  + \nu dt\right),\quad 
  G_0=\id,
  \label{Formula: dS G = delta_N dt + sigma_N (kappa^1/2 dB + v dt)}
\end{split}\end{equation}
where the parameter $\nu$ is chosen such that the additional (drift)
normalization condition 
\begin{equation}\begin{split}
	\underline\delta_{-1} + 3 \underline\sigma_0 = 0
	\label{Formula: delta_-1 - 3 sigma_0 = 0}
\end{split}\end{equation}
is satisfied.

The parameter $\nu$ can be defined by
\begin{equation}
 	\nu:= 
 	\frac{\sqrt{2}	\delta_{-1}}{\sqrt{\delta_{-2}}}
 	-\frac{ 3 \sqrt{\delta_{-2}} \sigma_0}{\sqrt{2} \sigma_{-1}}. 
	\label{Formula: nu := ...}
\end{equation}
The transform $\mathscr{P}$ preserves $\kappa$, but changes $\nu$ as
\begin{equation}
	\mathscr{P}_c \colon \nu\mapsto e^{c} \nu.
\end{equation}
Thereby, there are 3 nonequivalent cases: $\nu=0$, $\nu>0$, and $\nu<0$. Each
case is invariant with respect to $\mathscr{P}$. The case $\nu=0$ we call
\emph{driftless}.
\index{driftless $(\delta,\sigma)$-SLE}
The last two cases are the mirror images of each other.

The relations 
\eqref{Formula: delta_-1 - 3 sigma_0 = 0}
and
\eqref{Formula: nu := ...}
will be motivated below 
(in Theorems 
\ref{Theorem: Absolute continuity of SLE} 
and 
\ref{Theorem: ppS -> simple cases}).
For now we just remark that 
\eqref{Formula: delta_-1 - 3 sigma_0 = 0}
is invariant with respect to $\mathscr{R}$ and $\mathscr{P}$.
The right-hand side of 
\eqref{Formula: nu := ...}
is invariant with respect to 
$\mathscr{S}$  and $\mathscr{R}$.
In particular, $\nu$ does not depend on the choice of the basis
vectors $\ell_n$ if one of the poles is placed at the source
point $a\in\de\Dc$.

From above we conclude that the family of \emph{essentially different}
\index{essentially different ($\delta,\sigma$)-SLE}
($\delta,\sigma$)-SLEs is two-parametric ($7-2-1-2=2$)
and each ($\delta,\sigma$)-SLE is in
addition parametrized by 
$\kappa>0$ 
and 
$\nu\in\mathbb{R}$. 
The classical chordal, radial and dipolar equations are just elements of this
2-parametric set. We emphasise that
both of $\kappa$ and $\nu$ are parameters related to the stochastic
version of the L\"owner equation only. In the determenistic case they can
always be absorbed by the driving function after the transforms 
$\mathscr{V}$
and
$\mathscr{D}$.

\subsection{Relations between essentially different $(\delta,\sigma)$-SLEs}
\label{Section: Relations between essentially different SLEs}

In Section \ref{Section: General properties of Loewner chain},
we showed that
any $(\delta,\sigma)$-L\"owner chain $\{G_t\}_{t\in[0,+\infty)}$
after the transform 
\begin{equation}
	\tilde G_{\tilde t} = \hat M_{\tilde t} \circ
	G_{\lambda_{\tilde t}},\quad \tilde t\in[0,\tilde T)
	\label{Formula: G_t = M_t circ G_l_t}
\end{equation}
is a 
$(\tilde \delta, \tilde \sigma)$-L\"owner chain for a given pair 
of $\tilde \delta$ and $\tilde \sigma$
and some properly chosen $\{\hat M_{\tilde t}\}_{\tilde t\in[0,\tilde T)}$ and
$\{\lambda_{\tilde t}\}_{\tilde t\in[0,\tilde T)}$. 
It will be convenient in this setup to parametrize the map $M$ buy $\tilde t$
but not $t$ and to define the time reparametrization in the inverse way
$\lambda\colon[0,\tilde T)\map[0, T)$.
In this subsection, we consider a stochastic analogous of this proposition.
Namely, we ask how the random law on the driving functions changes under this
transform. We establish that, in general case, the initial Brownian random law
on the driving functions $\{u_t\}_{t\in[0,+\infty)}$ 
of
$\{G_t\}_{t\in[0,+\infty)}$
and the transformed random law on
$\{\tilde u_{\tilde t}\}_{\tilde t\in[0,+\infty)}$
of
$\{\tilde G_{\tilde t}\}_{\tilde t\in[0,+\infty)}$
are absolutely continuous with respect to each other.
This has far reaching consequences for the local properties of $\K_t$.
Moreover, in some cases, the law of  
$\{\tilde u_{\tilde t}\}_{\tilde t\in[0,\tilde T)}$
is also Brownian. A special case of this for the chordal and radial
SLEs was considered before in 
\cite{Schramm2006}.

\begin{theorem}
The random laws on 
$\{\mathcal{K}_t\}_{t\in[0,\infty)}$ 
and 
$\{\tilde {\mathcal{K}}_{\tilde t}\}_{\tilde t\in[0,\infty)}$ 
generated by a forward
$(\delta,\sigma)$-SLE 
a the forward 
$(\tilde \delta,\tilde \sigma)$-SLE
correspondingly with the same $\kappa$ are locally equivalent until some
stopping time $T>0$ modulo some random strictly increasing time change 
$\lambda\colon[0,\tilde T)\map[0, T)$. 
Moreover, if $\kappa=6$ and if both SLEs are driftless, then the random laws 
of 
$\tilde \K_{\lambda_{\tilde t}}$ 
and
$\tilde \K_{\tilde t}$
are identical for $\tilde t\in[0,\tilde T)$.

\label{Theorem: Absolute continuity of SLE}
\end{theorem}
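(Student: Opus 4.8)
The plan is to transfer everything to the level of the driving functions and exploit Girsanov's theorem. By Theorem~\ref{Theorem: The master theorem}, starting from the forward $(\delta,\sigma)$-chain $\{G_t\}$ with driving Brownian motion $u_t=B_t$, there is a unique forward $(\tilde\delta,\tilde\sigma)$-chain $\{\tilde G_{\tilde t}\}$ with the same hulls up to the continuously differentiable reparametrization $\lambda$, and the relevant quantities $\lambda_t$, $\tilde u_{\tilde t}$, and the M\"obius family $M_t$ are given by \eqref{Formula: lambda_t = ...}, \eqref{Formula: tilde u_tilde t = ...}, and \eqref{Formula: dot M_t = F(M_t,u_t) circ M_t}. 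First I would restrict to a stopping time $T>0$ before which $M_t$ stays in a small neighbourhood of the identity (so that $y(M_t,u_t)$ is well defined, smooth, and all coefficients in the ODE system are bounded). On $[0,T)$ write $\tilde u_{\lambda_t}=y(M_t,u_t)$; since $u_t=B_t$ is a semimartingale and $y$ is smooth and $\lambda_t$ is $C^1$, It\^o's formula gives $\tilde u_{\tilde t}=\int_0^{\tilde t}\beta_{\tilde s}\,\mathrm d\tilde s+\int_0^{\tilde t}\rho_{\tilde s}\,\mathrm d\hat B_{\tilde s}$ in the new time, where $\hat B$ is a Brownian motion in the $\tilde t$-clock (time-changed $B$) and $\rho,\beta$ are explicit functions of $(M,u)$. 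The key computation is to show that the \emph{diffusion coefficient} $\rho_{\tilde t}$ is identically $\sqrt{\kappa}$ (using $\kappa=2\sigma_{-1}^2/\delta_{-2}=2\tilde\sigma_{-1}^2/\tilde\delta_{-2}$ and \eqref{Formula: dot lambda = M'^2}): morally, $\mathrm d\lambda_t=\frac{\delta_{-2}}{\tilde\delta_{-2}}((\hat M_t)'(a))^2\,\mathrm dt$ and the $\partial_u y$-factor contributes exactly $((\hat M_t)'(a))^{-1}\sigma_{-1}/\tilde\sigma_{-1}$, so the quadratic variations match after the time change. This is the step I expect to be the main obstacle — it is where the hypothesis ``same $\kappa$'' is used, and it requires differentiating the implicit relation $\hat M_t(a)=a$ (equivalently \eqref{Formula: hat M(a) = a}, \eqref{Formula: hat M = H M H^-1}) to get $\partial_x y$ and $\partial_M y$ explicitly, then combining with the Stratonovich-to-It\^o correction.

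Granting that $\mathrm d\tilde u_{\tilde t}=\beta_{\tilde t}\,\mathrm d\tilde t+\sqrt{\kappa}\,\mathrm d\hat B_{\tilde t}$ with a bounded adapted drift $\beta$, local equivalence of the two laws on $\{\tilde{\mathcal K}_{\tilde t}\}$ up to $\tilde T$ (the image of $T$ under $\lambda$) is immediate from Girsanov's theorem: the law of $\{\tilde u_{\tilde t}\}_{\tilde t\in[0,\tilde T)}$ is absolutely continuous with respect to $\sqrt{\kappa}$ times a standard Brownian motion, with Radon--Nikodym density $\exp\!\big(\kappa^{-1}\!\int \beta\,\mathrm d\tilde u-\tfrac12\kappa^{-1}\!\int\beta^2\,\mathrm d\tilde t\big)$ (Novikov's condition holds because $\beta$ is bounded on $[0,\tilde T)$), and by the one-to-one correspondence \eqref{Formula: B_t <-> G_t} between driving functions and chains — and hence hulls — this transfers to equivalence of the laws on $\tilde{\mathcal K}_{\lambda_t}$ and $\tilde{\mathcal K}_{\tilde t}$.

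For the second assertion I would compute the drift $\beta_{\tilde t}$ explicitly and show it vanishes identically precisely when $\kappa=6$ and both SLEs are driftless. Here the driftless normalization \eqref{Formula: delta_-1 - 3 sigma_0 = 0} (equivalently $\nu=0$) for both pairs is what makes the Stratonovich drift terms in \eqref{Formula: Slit hol stoch flow Ito} align; the residual $\beta$ is then a multiple of $(\kappa-6)$ times a nonvanishing factor built from $(\hat M_t)'(a)$ and $(\hat M_t)''(a)$ — this is the same mechanism behind the classical locality of chordal SLE$_6$, and indeed the chordal/radial case of this statement is \cite{Schramm2006}. When $\beta\equiv0$ the density above is $1$, so $\tilde u_{\tilde t}=\sqrt{\kappa}\,\hat B_{\tilde t}$ is exactly a Brownian driving function and the laws of $\tilde{\mathcal K}_{\lambda_{\tilde t}}$ and $\tilde{\mathcal K}_{\tilde t}$ coincide for all $\tilde t\in[0,\tilde T)$. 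Finally I would note that once the laws agree, the stopping time can be propagated: the argument restarts at $\tilde T$ by the strong Markov property of $(\delta,\sigma)$-SLE, so either $\tilde T=+\infty$ or the construction extends past it, which is what the statement claims modulo the time change.
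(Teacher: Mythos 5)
Your proposal is correct and follows essentially the same route as the paper: reduce to the driving functions via Theorem \ref{Theorem: The master theorem}, derive the It\^o equation for $\tilde u_{\tilde t}$ in terms of $u_{\lambda_{\tilde t}}$, match the quadratic variations using the common $\kappa$, invoke the Girsanov-type argument of \cite{Schramm2006} for local equivalence, and observe that the drift collapses exactly when $\kappa=6$ and both chains are driftless. The computation you flag as the main obstacle is carried out in the paper by expanding the M\"obius equation \eqref{Formula: dS M circ M^-1 = ...} into a Laurent series in the half-plane chart, which yields \eqref{Formula: a = 1/sqrt lambda} and the explicit drift $(6-\kappa)\dot\lambda_{\tilde t}^{1/2}\beta_{\tilde t} + (\tilde\delta_{-1}+\tfrac{\kappa}{2}\tilde\sigma_0) - \dot\lambda_{\tilde t}^{1/2}(\delta_{-1}+\tfrac{\kappa}{2}\sigma_0)$ in \eqref{Formula: dI u = ...}, confirming the mechanism you describe.
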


`Locally' means that the equivalence is valid until each stopping time
$T_n>0$ from some collection
$\{T_n\}_{n=1,2,3,\dotso}$
such as $T_n \geq T_m$ if $n>m$ and $T_n\map T$
a.s. when $n\map \infty$.

\begin{proof}
We use the results of the proof of Theorem 
\ref{Theorem: The master theorem}.
Assume $u_t$ and $\tilde u_{\tilde t}$ are some continuous It\^o processes,
Then the stochastic version of formula
\eqref{Formula: dot M_t = ( dot l H delta - M H delta ) circ M_t}
is
\begin{equation}
	\dS \hat M_{\tilde t}  \circ \hat M_{\tilde t}^{-1} = 
	\left( \tilde \delta d \tilde t + \tilde \sigma \dS \tilde u_{\tilde t} \right)
	- \dot \lambda_{\tilde t}~ \hat M_{\tilde t}{}_* 
		\left( \delta d \tilde t + \sigma \dS u_{\lambda_{\tilde t}} \right)
	\label{Formula: dS M circ M^-1 = ...}
\end{equation}
It will be convenient to consider this relation in the half-plane chart.
The expression for $\hat M_t^{\HH}\colon\HH\map\HH$ is
\begin{equation}
	\hat M_{\tilde t}^{\HH}(z) 
	= \frac{\alpha_{\tilde t} z}{1+\beta_{\tilde t} z},\quad z\in\HH, 
	\label{Formula: hat M^H(z) = az/(1+bz)}
\end{equation} 
for some real-valued stochasic processes 
$\{\alpha_{\tilde t}\}_{\tilde t\in[0,\tilde T)}$ 
and
$\{\beta_{\tilde t}\}_{\tilde t\in[0,\tilde T)}$. 
Since $G_0=\tilde G_0=\id$ we impose the initial conditions
\begin{equation}
	\alpha_{0}=1,\quad \beta_0=0.
	\label{Formula: alpha_0=1, beta_0=0}
\end{equation}
For 
$\left(\hat M_{\tilde t}{}_* v \right)^{\HH}(z)$ 
we have
\begin{equation}
	\left(\hat M_{\tilde t}{}_* v \right)^{\HH}(z) = 
	\frac{(\alpha_{\tilde t}-\beta_{\tilde t}~ z)^2}{\alpha_{\tilde t}} 
		v^{\HH} \left( \frac{z}{\alpha_{\tilde t} - \beta_{\tilde t}~z} \right),\quad 
		z\in\HH.
\end{equation}

Expand both sides of the identity 
\eqref{Formula: dS M circ M^-1 = ...}
into the Laurent series with respect to $z$ and impose the normalization
\eqref{Formula: delta_N = ..., sigma_N = ... without drift}.
The coefficients at $z^{-1}$ give
\begin{equation}
	0 = 2 - 2~ \alpha_{\tilde t}^2~ \dot \lambda_{\tilde t}
	\quad \Leftrightarrow \quad
	\alpha_{\tilde t} = \frac{1}{\sqrt{ \dot \lambda_{\tilde t} }},
	\label{Formula: a = 1/sqrt lambda}
\end{equation}
We remind that $\dot \lambda_{\tilde t}>0$, $\tilde t\in[0,\tilde T)$ a.s.

The coefficients at $z^0$ give
\begin{equation}
	0 =
	\tilde \delta_{-1}d\tilde t 
	- \dS \tilde u_{\tilde t} 
	+ 6~ \dot\lambda_{\tilde t}~ \alpha_{\tilde t}~ \beta_{\tilde t}~ d \tilde t
	- \dot \lambda_{\tilde t} \alpha_{\tilde t} \delta_{-1} d \tilde t
	+ \dot\lambda_{\tilde t}~ \alpha_{\tilde t}~ \dS u_{\lambda_{\tilde
	t}},
	\label{Formula: 0 = ... 1} 
\end{equation}
and we conclude that
\begin{equation}
	\dS \tilde u_{\tilde t} = 
	\left(
		6~ \dot\lambda_{\tilde t}^{\frac12}~ \beta_{\tilde t} 
		+ \tilde \delta_{-1}
		- \dot\lambda_{\tilde t}^{\frac12}~ \delta_{-1}
	\right) d \tilde t
	+
	\dot\lambda_{\tilde t}^{\frac12}~ \dS u_{\lambda_{\tilde t}}.
	\label{Formula: d tilde u = ... 1}
\end{equation}
If we assume that the driving function $u_{t}$ is a Brownian
motion $u_{t} = \sqrt{\kappa }B_{t}$ the last formula
is an analogue of the relation from Proposition 4.2 in
\cite{Lawler2001a}.
We apply the same argumentation and item 6 form Theorem
\ref{Theorem: The master theorem}
to conclude the first theorem statement.

In order to obtain the second statement we consider the coefficient at $z^1$
in the identity
\eqref{Formula: dS M circ M^-1 = ...}
\begin{equation}\begin{split}
	&\frac{\dS \alpha_{\tilde t}}{\alpha_{\tilde t}} =
	\tilde \delta_0~ d\tilde t 
	+  \tilde \sigma_0~\dS \tilde u_{\tilde t} 
	- 6~ \beta_{\tilde t}^2~ \dot\lambda_{\tilde t}~ d\tilde t
	-\\-&
	2~ \beta_{\tilde t}~ \dot\lambda_{\tilde t} 
		\dS u_{\lambda_{\tilde t}} 
	+ 2~ \dot\lambda_{\tilde t}~ \beta_{\tilde t}~ \delta_{-1}~ d\tilde t
	- \dot\lambda_{\tilde t}~ \delta_0~ d\tilde t
	- \dot\lambda_{\tilde t}~ \sigma_0~ \dS u_{\lambda_{\tilde t}}.
\end{split}\end{equation} 
we use this formula and
\eqref{Formula: a = 1/sqrt lambda}
to obtain 
$\dS \dot \lambda_{\tilde t}$ 
\begin{equation}
	\dS \dot \lambda_{\tilde t} 
	=-2 \dot \lambda_{\tilde t} \frac{\dS \alpha_{\tilde t}}{\alpha_{\tilde t}}
	= (\dotso) d\tilde t  
	-2 \dot \lambda_{\tilde t} \left(
		-2 \dot \lambda_{\tilde t}^{\frac12} \beta_{\tilde t}
		+ \tilde \sigma_0
		- \dot \lambda_{\tilde t}^{\frac12} \sigma_0
	\right) \dS \tilde u_{\tilde t}.
\end{equation}
Assume now that $\tilde u_{\tilde t}$ is a Brownian motion, 
$\tilde u_{\tilde t}=\sqrt{\tilde \kappa}\tilde B_{\tilde t}$, 
$\tilde t\in[0,\tilde T]$.
We can apply
\eqref{Formula: d lambda = a dt + b dB}
and
\eqref{Folrmula: dB = lambda dB - 1/4 b/lambda dt}
to conclude
\begin{equation}
	\sqrt{\kappa} \dI \tilde B_{\tilde t} = 
	\left(
		(6-\kappa) \dot \lambda_{\tilde t}^{\frac12} \beta_{\tilde t}
		+ \left( \tilde \delta_{-1} + \frac{\kappa}{2} \tilde \sigma_0 \right)
		- \dot \lambda_{\tilde t}^{\frac12} \left(\delta_{-1} + \frac{\kappa}{2}
		\sigma_0\right) \right) d \tilde t 
	+\dot\lambda_{\tilde t}^{\frac12}~ \dI u_{\lambda_{\tilde t}}
	\label{Formula: dI u = ...}
\end{equation}
Analogously to 
\cite{Schramm2006}
the SLEs are locally equivalent if and only if the 
$u_t=\sqrt{ \kappa}\tilde B_t$ is a Brownian motion with 
$\kappa=\tilde \kappa$.

If $\kappa=6$ and 
$\tilde \delta_{-1} + 3\tilde \sigma_{0}=\delta_{-1} + 3\sigma_{0}=0$,
then the random laws on 
$\{u_t\}_{t\in[0,T]}$
is also Brownian times $\sqrt{\kappa}$.
\end{proof}

With the aid of this theorem we can extend various properties of the
well-studied chordal SLE to the general case of $(\delta,\sigma)$-SLE. In
particular, we can prove the following.

\begin{corollary}
\label{Corollary: local SLE properties}
For $\delta$ and $\sigma$ as in
\eqref{Formula: delta_N = ..., sigma_N = ... without drift}
the random family of hulls 
$\{\mathcal{K}_t\}_{t\in[0,\infty)}$ possesses the following properties:
\begin{enumerate}[1.]
\item
$\{\mathcal{K}_t\}_{t\in[0,\infty)}$ is a curve generated set a.s. 
Namely, there exists a parametrised 
random curve 
$\gamma:[0,\infty)\map\bar\Dc$ (not necessarily simple), 
started from the source point
$\lim\limits_{t\map +0}\gamma_t=a$, 
such that $G_t^{-1}(\Dc)\subset \Dc$ is a connected component of 
$\Dc\setminus \gamma_{[0,t]}$ for the forward case.
\item
\begin{itemize}
\item $0<\kappa\leq 4$: $\gamma$ is a simple curve in the interior of $\Dc$ ;
\item $4<\kappa< 8$: $\gamma$ touches itself and the boundary $\de\Dc$;
\item $8\leq \kappa$: $\gamma$ is a space filling curve.
\end{itemize}
See fig. 
\ref{Figure: kappa dependence picture} 
for a qualitative illustration
and 
\ref{Figure: SLE4_demo},
\ref{Figure: SLE7_demo}
for numerical simulation for $\kappa=4$ and $\kappa=7$ correspondingly.
\item The Hausdorff dimention of $\gamma$ is equal to $\min\{2,1+\kappa/8\}$.
\end{enumerate}
\label{Corollary: SLE analitic properties}
\end{corollary}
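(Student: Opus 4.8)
The plan is to read off all three assertions from the corresponding, by now classical, facts about the \emph{chordal} SLE$_\kappa$, transporting them by means of Theorem~\ref{Theorem: Absolute continuity of SLE}. Fix the chordal pair $(\tilde\delta,\tilde\sigma)$ from Table~\ref{Table: Calssical cases}; it carries the same $\kappa$ as our normalized $(\delta,\sigma)$ from~\eqref{Formula: delta_N = ..., sigma_N = ... without drift}. By Theorem~\ref{Theorem: Absolute continuity of SLE} the random laws of the hull families $\{\mathcal{K}_t\}$ and $\{\tilde{\mathcal{K}}_{\tilde t}\}$ are then locally equivalent modulo a random, strictly increasing, continuously differentiable time change $\lambda$ (the differentiability being item~3 of Theorem~\ref{Theorem: The master theorem}): equivalent on each $[0,T_n]$ for a sequence of stopping times $T_n$. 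The whole point of the reduction is that each property in question — being generated by a (not necessarily simple) curve with $G_t^{-1}(\Dc)$ one of the connected components of $\Dc\setminus\gamma_{[0,t]}$, lying in a given one of the three phases, and having a prescribed Hausdorff dimension — is, after restriction to any $[0,T_n]$, an event measurable with respect to the $\sigma$-algebra generated by the hulls up to that time, and it is invariant under the time change $\lambda$, since $\lambda$ only reparametrizes and does not alter the traced set or the conformal type of the complement. Hence any such event that holds a.s.\ for chordal SLE$_\kappa$ is transported a.s.\ to the $(\delta,\sigma)$-SLE on each $[0,T_n]$.

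For item~1, I would invoke the Rohde--Schramm theorem, see \cite{Rohde2005} and \cite[Chapters~6--7]{Lawler2008}: for every $\kappa\ge 0$, chordal SLE$_\kappa$ is a.s.\ generated by a curve $\tilde\gamma$ started from the source, with $\tilde G_{\tilde t}^{-1}(\Dc)$ a connected component of $\Dc\setminus\tilde\gamma_{[0,\tilde t]}$. Local equivalence transports ``curve-generated, with $G_t^{-1}(\Dc)$ a complementary component'' to the $(\delta,\sigma)$-SLE on each $[0,T_n]$; one then uses the strong Markov property of $(\delta,\sigma)$-SLE established above (which restarts the evolution as a fresh $(\delta,\sigma)$-SLE from $\mathcal{K}_{T_n}$) together with the uniqueness statements of Theorems~\ref{Theorem: The master theorem} and~\ref{Theorem: Curve -> chain} to glue the pieces into one random curve $\gamma\colon[0,+\infty)\map\bar\Dc$ with $\lim_{t\to+0}\gamma_t=a$, defined on the whole half-line because the flow $\{G_t\}_{t\ge0}$ is.

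For items~2 and~3 the same transport applies. The trichotomy simple curve / self-touching-and-$\de\Dc$-touching / space-filling according to $\kappa\in(0,4]$, $\kappa\in(4,8)$, $\kappa\in[8,+\infty)$ is part of the Rohde--Schramm analysis (\cite{Rohde2005}, \cite[Chapter~6]{Lawler2008}), and the value $\min\{2,1+\kappa/8\}$ of the Hausdorff dimension of the chordal trace is Beffara's theorem (for $\kappa\ge 8$ it is $2$, already forced by space-fillingness). ``$\gamma$ is simple on $[0,T_n]$'' and ``$\gamma$ has a self-intersection or a boundary contact inside $[0,T_n]$'' transfer a.s.\ by the local equivalence, and the global statements follow on letting $n\to\infty$; for the dimension one adds the countable stability of Hausdorff dimension, $\dim_{\mathrm H}\gamma_{[0,+\infty)}=\sup_n\dim_{\mathrm H}\gamma_{[0,T_n]}$, together with $\dim_{\mathrm H}\tilde\gamma_{[0,\tilde t]}=\min\{2,1+\kappa/8\}$ a.s.\ for each $\tilde t>0$.

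The main obstacle is not a single computation but the localization bookkeeping. One must check that the events encoding items~1--3 genuinely localize — that is, are determined by the hulls up to a stopping time and are stable under $\lambda$ — and then glue the stopping-time pieces of the curve (and, for $\kappa\ge 8$, of the area-exhaustion argument) into one globally defined object, which needs the strong Markov property of $(\delta,\sigma)$-SLE and the uniqueness in Theorems~\ref{Theorem: The master theorem} and~\ref{Theorem: Curve -> chain} so that the pieces are mutually consistent; when $\kappa<8$ one may in addition upgrade the comparison with the chordal chain to all of $[0,+\infty)$ via item~6 of Theorem~\ref{Theorem: The master theorem}, using that the curve then avoids a boundary point. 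A minor additional point: the chordal target is driftless whereas the given $(\delta,\sigma)$-SLE may have $\nu\ne 0$; this contributes only an extra Girsanov factor to the Radon--Nikodym density and does not affect mutual absolute continuity, so the transport of a.s.\ properties is unaffected.
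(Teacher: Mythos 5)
Your proposal is correct and follows essentially the same route as the paper: reduce to the chordal case via the local equivalence of laws in Theorem~\ref{Theorem: Absolute continuity of SLE}, import the Rohde--Schramm and Beffara results for the chordal trace, and extend from the stopping-time window to all of $[0,+\infty)$ by restarting with the strong Markov property (the paper phrases this as an a.s.\ divergent sum of i.i.d.\ positive stopping times $T_1+T_1'+\dotsb$, which is your gluing along the fundamental sequence). Your extra remarks on the measurability/localization bookkeeping and the Girsanov handling of a nonzero drift are consistent with, and somewhat more explicit than, the paper's argument.
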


\begin{figure}[H]
\centering
  \begin{subfigure}[t]{0.33\textwidth}
		\centering
    \includegraphics[width=5cm,keepaspectratio=true]
    	{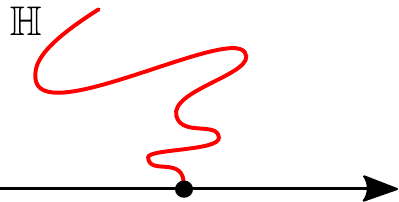}
    \caption{Simple curve, $\kappa\leq4$}
  \end{subfigure}%
  ~  
  \begin{subfigure}[t]{0.33\textwidth}
		\centering
    \includegraphics[width=5cm,keepaspectratio=true]
    	{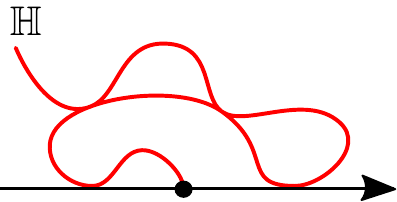} 
  	\caption{Self-touching curve, $4<\kappa<8$}
	\end{subfigure}%
	~
	\begin{subfigure}[t]{0.33\textwidth}
	\centering
		\includegraphics[width=5cm,keepaspectratio=true]
			{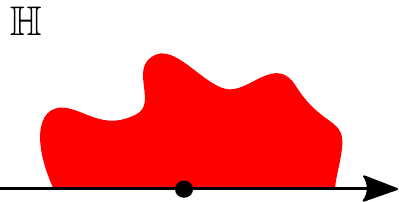}
		\caption{Space filling curve, $8\leq \kappa$.}
	\end{subfigure}
 	\caption{Schematic illustration of the qualitative behaviour of the SLE slit
 	for different values of $\kappa$ in the half-plane chart.
 	\label{Figure: kappa dependence picture}}
\end{figure}

\begin{proof}
Let 
$\{\tilde G_{\tilde t}\}_{\tilde t\in[0,+\infty)}$ 
be the radial SLE,
see Section 
\ref{Section: Radial Loewner equation},
and let 
$b\in\de\Dc$ 
be the common zero of 
$\tilde \delta$ 
and 
$\tilde \sigma$.
Let also $T_1$ be the first stopping time from the collection and let 
$\lambda$
be the time reparametrization from Theorem
\ref{Theorem: Absolute continuity of SLE}.
Then the law of 
$\{\K_t\}_{t\in[0,T_1]}$ 
stopped at $T_1>0$ 
is equivalent to the chordal law 
$\{\tilde \K_{\tilde t}\}_{\tilde t\in[0,\lambda_{T_1}]}$ 
stopped at 
$\lambda_{T_1}>0$. 
In particular, 
$\{\mathcal{K}_t\}_{t\in[0,T_1]}$ 
is curve generated a.s.

In order to extend this property for $t\in[0,\infty)$, observe that the law
of $\{G_{t+T_1} \circ G_{T_1}^{-1}\}_{t\in[0,+\infty)}$
coincides with the law of 
$\{G_t\}_{t\in[0,+\infty)}$, 
which is the strong Markov
property. Thus, application of Theorem
\ref{Theorem: Absolute continuity of SLE} 
again to the pair of 
$(\delta,\sigma)$-SLE
$G_{t+T_1} \circ G_{T_1}^{-1}$ 
and to the chordal SLE with the
same zero point $b$ gives that 
$\mathcal{K}_t$ 
is curve generated until 
$T_1+ T_1'$, 
where $T_1'$ has the same law as $T_1$.
Continuing by induction we conclude that the hull is generated by curve until a
countable sum of independent equally distributed positive stopping times. Such
sum diverges a.s.

The second two items can be proved similarly.
\end{proof}

Thanks to Corollary \ref{Corollary: local SLE properties},  
the hull $\K_t$ is a simple curve 
a.s. if $\kappa\leq 4$.
Moreover, due to Theorem \ref{Theorem: Curve -> chain} 
there is at most one sample of the Brownian motion for any curve. 
On the other hand, the sigma-algebra and the measure on curves can be defined by
methods not related to conformal maps. For example, we can consider a scaling
limit of a lattice path. In particular, the self-avoiding walk and the
loop-erased random walk are constructed this way.
Another approach is to define the curve as an interface in various
two-dimensional models from statistical physics such as the Ising model or
percolation.

Last decades, the relation between measures constructed in such two
different ways was established, see for example
\cite{Lawler2001b}
and
\cite{Schramm2000},
for classical SLEs. We restrict our consideration by remarking that 
Theorem \ref{Theorem: Absolute continuity of SLE} might be a key tool in this
direction.

We established before that the random law of the Brownian motion is preserved
under transition from one pair of $\delta$ and $\sigma$ to another if
$\kappa=6$ and $\nu=0$ for both pairs. Consider now 
such two essentially different $(\delta,\sigma)$-SLEs
that the random law is kept unchanged under transition from one to another for
any $\kappa>0$. This question is motivated by the Theorems
\ref{Theorem: ppS -> simple cases}
and
\ref{Theorem: ppS -> simple cases 2}.
The case when one of the SLEs is chordal is the subject of the
following theorem.


\begin{theorem}
Let 
$\{G_t\}_{t\in[0,+\infty)}$ 
be a forward $(\delta,\sigma)$-SLE, and let its
transform by
\eqref{Formula: G_t = M_t circ G_l_t}
be a forward chordal SLE
$\{{\tilde G}_{\tilde t}\}_{\tilde t\in[0,+\infty)}$ 
for some 
$\{\lambda_{\tilde t}\}_{\tilde t\in[0,+\infty)}$,
and
$\{\hat M_{\tilde t}\}_{\tilde t\in[0,\tilde T)}$.
Then, 
the $(\delta,\sigma)$-SLE 
$\{ G_{ t}\}_{t\in[0,+\infty)}$ 
is one of the following
\begin{enumerate}[1.]
  \item Chordal;
  \item The case described in Section \ref{Section: Chordal case with fix time
  change};
  \item The case described in Section \ref{Section: Case with one fixed point}.
\end{enumerate}
\label{Theorem: cordal -> fix time change and fixed boundary point}
\end{theorem}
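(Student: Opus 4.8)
The plan is to carry out the Laurent‑series analysis of the structural identity \eqref{Formula: dS M circ M^-1 = ...} — the same identity that drives the proofs of Theorem \ref{Theorem: The master theorem} and Theorem \ref{Theorem: Absolute continuity of SLE} — specialised to the chordal target $\tilde\delta = 2\ell_{-2}$, $\tilde\sigma = -\ell_{-1}$. First I would normalise $\delta$ and $\sigma$ by the elementary transforms of Section \ref{Section: Equivalence and normalization of slit L chains} so that $\delta_{-2} = 2$, $\sigma_{-1} = -1$, and work in the half‑plane chart, where by the proof of Theorem \ref{Theorem: Absolute continuity of SLE} the conjugating M\"obius maps read $\hat M_{\tilde t}^{\HH}(z) = \frac{\alpha_{\tilde t} z}{1 + \beta_{\tilde t} z}$ with $\alpha_0 = 1$, $\beta_0 = 0$, and $\dS\hat M_{\tilde t}\circ\hat M_{\tilde t}^{-1}$ equals $\frac{\dS\alpha_{\tilde t}}{\alpha_{\tilde t}}\, w - \frac{\dS\beta_{\tilde t}}{\alpha_{\tilde t}}\, w^2$. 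A short computation shows that each pushed‑forward field $\hat M_{\tilde t *}\ell_n$, $n\in\{-2,-1,0,1\}$, has Laurent support in $\{-1,0,1,2\}$, so matching the coefficients of $z^{-1},z^{0},z^{1},z^{2}$ in \eqref{Formula: dS M circ M^-1 = ...} yields a closed system: $z^{-1}$ gives $\alpha_{\tilde t}^2\dot\lambda_{\tilde t} = 1$; $z^{0}$ reproduces \eqref{Formula: d tilde u = ... 1}, hence \eqref{Formula: dI u = ...}, for the transformed driving $\tilde u_{\tilde t}$; $z^{1}$ gives the evolution of $\alpha_{\tilde t}$ (equivalently of $\lambda_{\tilde t}$); and $z^{2}$ gives the evolution of $\beta_{\tilde t}$.

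The decisive step is to impose that the transformed chain is a genuine chordal SLE, i.e. $\tilde u_{\tilde t} = \sqrt\kappa\,\tilde B_{\tilde t}$ with $\tilde B$ a standard Brownian motion. Putting $\tilde\delta_{-1} = \tilde\sigma_0 = 0$ into \eqref{Formula: dI u = ...} and running the argument of Theorem \ref{Theorem: Absolute continuity of SLE}, the martingale parts match automatically and the Brownian character of $\tilde B$ forces the $d\tilde t$–coefficient to vanish; dividing by $\dot\lambda_{\tilde t}^{1/2} > 0$ this is the single identity
\begin{equation*}
	(6-\kappa)\,\beta_{\tilde t} = \delta_{-1} + \frac{\kappa}{2}\,\sigma_0 \qquad \text{for all } \tilde t,
\end{equation*}
which, together with $\beta_0 = 0$, governs the whole classification.

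If $\kappa \neq 6$, the relation forces $\beta_{\tilde t}$ to be the constant $(\delta_{-1}+\frac{\kappa}{2}\sigma_0)/(6-\kappa)$, and $\beta_0 = 0$ then forces $\delta_{-1} + \frac{\kappa}{2}\sigma_0 = 0$ and $\beta_{\tilde t}\equiv 0$, so $\hat M_{\tilde t}$ is a pure dilation. Feeding $\beta\equiv 0$ into the $z^{2}$–equation turns its right‑hand side into $\dot\lambda_{\tilde t}(\delta_1\, d\tilde t + \sigma_1\, \dS u_{\lambda_{\tilde t}})$, which must vanish; separating drift and martingale gives $\sigma_1 = 0$ and $\delta_1 = 0$. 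Invoking the drift normalisation \eqref{Formula: delta_-1 - 3 sigma_0 = 0} (equivalently, tracking $\nu$) leaves $\sigma = -\ell_{-1}$ and $\delta = 2\ell_{-2} + \delta_0\ell_0$: for $\delta_0 = 0$ this is exactly the chordal pair (item 1), and for $\delta_0\neq 0$ the $z^{1}$–equation becomes $\dS\alpha_{\tilde t}/\alpha_{\tilde t} = -\dot\lambda_{\tilde t}\,\delta_0\, d\tilde t$, which with $\alpha_{\tilde t}^2\dot\lambda_{\tilde t} = 1$ is a deterministic ODE for the time reparametrisation — precisely the situation of Section \ref{Section: Chordal case with fix time change} (item 2). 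If instead $\kappa = 6$, the relation only says $\delta_{-1} + 3\sigma_0 = 0$ and leaves $\beta_{\tilde t}$ free; one must then use the $z^{1}$– and $z^{2}$–equations together with $\alpha_{\tilde t}^2\dot\lambda_{\tilde t} = 1$ and the Brownian character of $\tilde B$ to pin down $\delta_0,\delta_1,\sigma_1$, which singles out the family of Section \ref{Section: Case with one fixed point} (item 3). The reverse case follows from the forward case via Proposition \ref{Proposition: Forward - Inverse LE connection}.

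I expect the bookkeeping in the $\kappa = 6$ branch to be the main obstacle: there $\hat M_{\tilde t}$ is a genuinely non‑dilation M\"obius family, so $\alpha_{\tilde t}$, $\beta_{\tilde t}$ and $\lambda_{\tilde t}$ are coupled through truly stochastic equations, and one has to show that the two requirements "$\hat M_{\tilde t}$ is a M\"obius automorphism for every $\tilde t$" and "$\tilde B$ is a Brownian motion" leave no freedom beyond the one‑fixed‑point family — whereas the $\kappa\neq 6$ branch becomes essentially algebraic once $\beta\equiv 0$ is known. A secondary point requiring care is matching each surviving family to its canonical description in the referenced sections and checking compatibility with the equivalence classes under $\mathscr{R}$ and $\mathscr{S}$.
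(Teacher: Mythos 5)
Your setup is the paper's own: expand \eqref{Formula: dS M circ M^-1 = ...} in Laurent coefficients in the half-plane chart, extract $\alpha_{\tilde t}^2\dot\lambda_{\tilde t}=1$ from $z^{-1}$ and the vanishing-drift condition $(6-\kappa)\beta_{\tilde t}=\delta_{-1}+\tfrac{\kappa}{2}\sigma_0$ from \eqref{Formula: dI u = ...}, and conclude $\beta_{\tilde t}\equiv 0$ and $\delta_1=\sigma_1=0$. Up to that point the argument is correct and coincides with the paper's proof.

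The case analysis after that contains a genuine error: you lose the one-fixed-point family from the $\kappa\neq 6$ branch and try to recover it at $\kappa=6$, where it does not belong. Concretely, neither the drift normalisation \eqref{Formula: delta_-1 - 3 sigma_0 = 0} nor the relation $\delta_{-1}+\tfrac{\kappa}{2}\sigma_0=0$ forces $\sigma_0=0$, so your conclusion that the $\kappa\neq6$, $\beta\equiv 0$ branch ``leaves $\sigma=-\ell_{-1}$'' is false. The family of Section \ref{Section: Case with one fixed point} has, in the normalisation $\sigma_{-1}=-1$, $\sigma_0=-2$, $\sigma_1=\delta_1=0$, $\delta_{-1}=\kappa$; it satisfies every constraint you derived ($\beta\equiv0$, $\delta_1=\sigma_1=0$, $\delta_{-1}+\tfrac{\kappa}{2}\sigma_0=0$) for \emph{every} $\kappa>0$, and its $\Delta_2=\sigma_0^2-\sigma_1\sigma_{-1}=4\neq 0$ shows it is not $\mathscr{R}$- or $\mathscr{S}$-equivalent to $\sigma=-\ell_{-1}$. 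Conversely, at $\kappa=6$ the vanishing-drift condition degenerates to $\delta_{-1}+3\sigma_0=0$ with $\beta_{\tilde t}$ unconstrained, and by Theorem \ref{Theorem: Absolute continuity of SLE} essentially every driftless pair passes the hypothesis there, so that branch singles out nothing. The correct sub-branching after $\beta\equiv 0$ (this is what the paper does, via $\tilde G_{\tilde t}^{\HH}=e^{-c_{\tilde t}}G_{\lambda_{\tilde t}}^{\HH}$ and the decomposition $\dS\dot\lambda_{\tilde t}=x_{\tilde t}\,d\tilde t+y_{\tilde t}\,\dS\tilde B_{\tilde t}$) is on whether $\dot\lambda$ carries a martingale part: your own $z^1$ equation reads
\begin{equation*}
	\frac{\dS\alpha_{\tilde t}}{\alpha_{\tilde t}}
	= -\dot\lambda_{\tilde t}\left(\delta_0\, d\tilde t + \sigma_0\, \dS u_{\lambda_{\tilde t}}\right),
\end{equation*}
so $\sigma_0=\delta_0=0$ gives the chordal case, $\sigma_0=0,\ \delta_0\neq 0$ gives Section \ref{Section: Chordal case with fix time change}, and $\sigma_0\neq 0$ gives Section \ref{Section: Case with one fixed point}. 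As written, your proof establishes only the first two items of the trichotomy.
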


\begin{proof}
We use the proof of Theorem 
\ref{Theorem: Absolute continuity of SLE}.
The term in the parentheses of 
\eqref{Formula: dI u = ...}
is identically zero if and only if $\beta_{\tilde t}$ is a constant because 
$\dot \lambda_{\tilde t}\neq 0$. Due to 
\eqref{Formula: alpha_0=1, beta_0=0}
\begin{equation}	
	\beta_{\tilde t} \equiv 0.
\end{equation}
Hence, 
$\hat M^{\HH}(z)=\alpha_{\tilde t}\,z = e^{-c_{\tilde t}} z$ 
for some real valued function $c_{\tilde t}$, 
and
\begin{equation}
 \tilde G_{\tilde t}^{\HH}(z) = e^{-c_{\tilde t}} G_{\lambda_{\tilde t}}^{\HH}(z).
 \label{Formula: tilde G = e^c G}
\end{equation}
The Stratonovich differential of $\tilde G^{\HH}_{\tilde t}(z)$ is 
\begin{equation}\begin{split}
 \dS \tilde G^{\HH}_{\tilde t}(z) =& 
 (\dS e^{-c_{\tilde t}}) G^{\HH}_{\lambda_{\tilde t} }(z) +
 e^{-c_{\tilde t}} \dS G^{\HH}_{\lambda_{\tilde t} }(z)  
 =\\=&
 (\dS e^{-c_{\tilde t}}) G^{\HH}_{\lambda_{\tilde t} }(z) +
 e^{-c_{\tilde t}} \dot \lambda_{\tilde t}
 \left( 
  \frac{2}{G^{\HH}_{\lambda_{\tilde t} }(z)}d\tilde t - \sqrt{\kappa} \dS B_{\lambda_{\tilde t}}
 \right)
\end{split}\end{equation}
Due to 
\eqref{Formula: a = 1/sqrt lambda},
we have 
\begin{equation}
 e^{-c_{\tilde t}} = \dot \lambda^{-\frac12}_{\tilde t},
\end{equation}
and consequently,
\begin{equation}\begin{split}
 \dS e^{-c_{\tilde t}} =
 - \frac12 e^{-3 c_{\tilde t}} x_{\tilde t} d \tilde t
 - \frac12 e^{-3 c_{\tilde t}} y_{\tilde t} \dS \tilde B_{\tilde t},
\end{split}\end{equation}
where we used 
(\ref{Formula: d lambda = a dt + b dB}).
Thus, we conclude that
\begin{equation}\begin{split}
 \dS \tilde G^{\HH}_{\tilde t}(z) =&
 \left( 
  - \frac12 e^{-3 c_{\tilde t}} x_{\tilde t} d \tilde t
  - \frac12 e^{-3 c_{\tilde t}} y_{\tilde t} \dS \tilde B_{\tilde t}
 \right) 
 e^{ c_{\tilde t}} \tilde G^{\HH}_{\tilde t}(z) 
 +\\+&
 \frac{2}{\tilde G^{\HH}_{\tilde t }(z)}d\tilde t 
 - \sqrt{\kappa} \dS \tilde B_{\tilde t} 
 + \frac14 \sqrt{\kappa} e^{-2c_{\tilde t}} y_{\tilde t} d \tilde t.
 \label{Formula: 3}   
\end{split}\end{equation}

In order to have time independent coefficients we  assume that 
$x_{\tilde t}$
and 
$y_{\tilde t}$ 
are proportional to $e^{2 c_{\tilde t}}$. Hence, define
$\xi\in \mathbb{R}$ by
\begin{equation}
 x_{\tilde t} = - 4 \xi e^{2c_{\tilde t}}.
\end{equation}
Without lost of generality, we can assume that $y_{\tilde t}$ is one of three
possible forms 
\begin{enumerate}[1.]
\item $y_{\tilde t}=0$,
\item $y_{\tilde t}=4\sqrt{\kappa} e^{2c_{\tilde t}} $,
\item $y_{\tilde t}=-4\sqrt{\kappa} e^{2c_{\tilde t}} $,
\end{enumerate}
because all other choices can be reduced to these three with the transform
$\mathscr{S}$.
The first case is considered in Section 
\ref{Section: Chordal case with fix time change}.
Other two cases are discussed in Section
\ref{Section: Case with one fixed point}  .
\end{proof}

\subsection{Domain Markov property and conformal invariance of random laws on
planar curves}
\label{Section: Domain Markov property and conformal invariance of random laws on
planar curves}

In this section, we consider the property of random laws on planar curves that
are defined as above, not necessary as the complement of images of
conformal maps.

Let $\mathscr{G}_{D,a}$ be the set of all simple open curves $\gamma$ in a
domain $D\subset\C$. All curves start from
some point $a\in\de D$ in a smooth piece of the boundary. 
Let $(\mathscr{G}_{D,a},\mathcal{F}_{D,a},P_{D,a})$ 
be a probability space.

Let 
$\tau:\tilde D \map D$ be a conformal map that can be continuously extended to
the boundary near the point $a$ and let
$\tilde a:=\tau(a)\in\de \tilde D$. 
Let 
$({\mathscr{G}}_{\tilde D,\tilde a},
\tilde{\mathcal{F}}_{\tilde D,\tilde a},\tilde P_{\tilde D, \tilde a})$
be the probability space induced by the map $G$. We will use the notation  
\begin{equation}
 \left(\tau_* P\right)(\tilde B) 
 := P ( G(\tilde B) ),\quad 
 \tilde B\in \mathcal{\tilde F}.
 \label{Formula: push forward of measure}
\end{equation}
For example, $\tau_* P_{D,a} = \tilde P_{\tilde D,\tilde a}$. 


On the other hand, a measure on $\mathscr{G}_{\tilde D,\tilde a}$
can be defined independently, regardless of the
conformal mapping. For example, self-avoiding and loop-erased random
walks are usually defined on a lattice grid embedded to arbitrary domain. Then
we can consider a limiting measure as the mesh size tends to zero.
If such laws are related with
\eqref{Formula: push forward of measure}
then the limiting measure is called
\emph{conformally invariant}. 
\index{conformal invariance of a random law on curves}
It arises in the scaling limit and possesses only some special
random law on the lattice paths like mentioned above. We emphasize that, if the
measure $P_{D,a}$ is designed with the aid of $(\delta,\sigma)$-SLE like in the
present text, the conformal invariance is straightforward from the very
definition formulated in terms of $\Dc$.

In Section 
\ref{Section: Relations between essentially different SLEs},
we define a $\mathcal{F}_t$-random law on hulls $\K_t$ for 
$t\in[0,+\infty)$
and for any pair of $\delta$ and $\sigma$. In particular, for $\kappa\leq 4$,
and for any given domain $D$.
This gives a random law on planar curves as above 
with $\mathcal{F}_{D,a}$ induced by $\mathcal{F}_t^B$ and 
$P_{D,a}$ given by $P^B$. 

\begin{remark}
It is an interesting problem to study the relation between the sigma-algebra 
$\mathcal{F}_{D,a}$ as above and the sigma-algebra induced by the Hausdorff
distance on $\mathscr{G}_{D,a}$.
\end{remark}

Let us discuss now another (and independent) property of a family of measures
on planar curves called the domain Markov property. It is related to the strong
Markov property of $G_t$ considered above but it is formulated for a general not
necessary conformally invariant random law on curves.

Let 
$\mathscr{G}_{D,a}[\gamma~|~\tilde \gamma]$ 
be a conditional law on curves 
$\gamma\in\mathscr{G}_{D,a}$.
The condition is such that 
$\gamma_{\sim}\subset\gamma$ 
is a fixed subcurve also started form the point
$a$ and ended at 
$a_{\sim}\in\gamma$. 
Assume that a random law on 
$\mathscr{G}_{D\setminus\gamma_{\sim},a_{\sim}}$ 
is given for some curve 
$\gamma_{\sim}$.
The 
\emph{domain Markov property} 
\index{domain Markov property}
is usually defined as
\begin{equation}
	\mathrm{Law}_{D,a}[\gamma~|~\gamma_{\sim}]=
	\mathrm{Law}_{D\setminus\gamma_{\sim},a_{\sim}}[\gamma\setminus\gamma_{\sim}].
	\label{Formula: Domain Markov property (heuristic)}	
\end{equation}
%
%
%
%
%
%
%
%
The conditional law on the left-hand side can be understood heuristically or
defined as below. 

Consider first the case of discrete measures, for example, self-avoiding or
loop-erased random walks defined on a finite lattice grid.
The conditional law of their continuous versions (scaling limits) can be
understood as the limit law. For more detail we refer, for example, to
\cite{Schramm2000}. 
However, if the law is defined on a continuous set of curves (and the
conditional expectation is taken with respect to an event of probability zero)
from the very beginning like we did using $(\delta,\sigma)$-SLE,
then we need the following constructions for a mathematically rigorous
definition.

Consider a quotient space 
$(\mathscr{G}_{D,a})_{\sim}$ 
of 
$\mathscr{G}_{D,a}$ 
with the following property. For each equivalence class there exists a maximal
curve 
$\gamma_{\sim}\in\mathscr{G}_{D,a}$ 
which is strictly contained in each of the curve from this class and all curves
with this subcurve are in this class.
Thus, each element of 
$(\mathscr{G}_{D,a})_{\sim}$
is represented by $\gamma_{\sim}\in\mathscr{G}_{D,a}$. We use the letter 
$\gamma_{\sim}$ 
for both: the equivalence class and the common part of curves it represents. We
denote by 
$a_{\sim}$ 
the end of the curve 
$\gamma_{\sim}$ 
which differs from 
$a$. 
Let 
$\mathcal{F}_{\sim}$ 
be the sigma-subalgebra of 
$\mathcal{F}$ 
generated by this factor space. Namely, each subset from 
$\mathcal{F}_{\sim}$
is a preimage of the quotient map of some set of 
$\mathcal{F}$.
Remark that 
$\gamma_{\sim}$ 
is a 
$\mathcal{F}_{\sim}$-random 
variable. 


Let 
$\mathcal{F}_{D\setminus \gamma_{\sim},a_{\sim}}$ 
be the sigma-algebra on 
$\mathscr{G}_{D\setminus \gamma_{\sim},a_{\sim}}$
consisting of all sets form $\mathcal{F}_{D,a}$ that are subsets of the
equivalence class given by $\gamma_{\sim}$.
Assume now that for a.a. $\gamma_{\sim}$  a 
$\mathcal{F}_{D\setminus \gamma_{\sim},a_{\sim}}$- 
random law on 
$\mathscr{G}_{D\setminus \gamma_{\sim},a_{\sim}}$
is given.
Hence, for any bounded $\mathcal{F}_{D,a}$-measurable function 
$f:\mathscr{G}_{D,a}\map\mathbb{R}$,
the function 
$f(\cdot \cup \gamma_{\sim}):\mathscr{G}_{D\setminus
\gamma_{\sim},a_{\sim}}\map\mathbb{R}$ 
is bounded and
$\mathcal{F}_{D\setminus \gamma_{\sim},a_{\sim}}$-measurable. We are now ready
to give meaning to 
\eqref{Formula: Domain Markov property (heuristic)}:
\begin{equation}
	\Evv{D,a}{f(\gamma)|\mathcal{F}_{\sim}}(\omega) =
	\Evv{D\setminus \gamma_{\sim}(\omega),a_{\sim}(\omega)}{f(\gamma \cup
	\gamma_{\sim}(\omega))}\quad a.s. 
	\label{Formula: Domain Markov property (strict)}
\end{equation}
for any bounded $\mathcal{F}_{D,a}$-measurable function $f$.

Thus, to formulate the domain Markov property we have to specify 
not only random laws for each $\gamma_{\sim}$ on 
$D\setminus \gamma_{\sim}$, 
but also the collection of curves 
$\gamma_{\sim}$. 
For example, in the case of curves on finite lattice grid, we can consider the
set of all curves of fixed length. In the case of $(\delta,\sigma)$-SLE generated
curve, we can pick up any stopping time $0<T<+\infty$ and consider the
set of curves $\gamma_{\sim}$ generated by the random map $G_T$. 
We use Theorem
\ref{Theorem: Curve -> chain}
and denote by $G[\gamma_{\sim}]$ the unique map which corresponds to the curve
$\gamma_{\sim}$.
We can now formulate what we mean by the
\emph{domain Markov property of $(\delta,\sigma)$-SLE}, see also Fig. 
\ref{Figure: domain Markov Property illustration}.
\index{domain Markov property of $(\delta,\sigma)$-SLE}

\begin{proposition}
Let $\{G_t\}_{t\in[0,+\infty)}$ be a 
$(\delta,\sigma)$-SLE with $\kappa\leq 4$.
Choose some chart $\psi\colon \Dc\map D^{\psi}$.
For a positive stopping time $T<+\infty$ a.s.,
let $\mathcal{F}_{\sim}:=\mathcal{F}_T^B$,
and let us define 
\begin{equation}
	P_{D\setminus\gamma_{\sim},a_{\sim}}(\tilde B) 
	:= \left( {G[\gamma_{\sim}]^{\psi}}_* P_{D,a} \right) (\tilde B),\quad
	\tilde B \in \mathcal{F}_{\sim}.
\end{equation}
The domain Markov property is satisfied with respect to $\mathcal{F}_{\sim}$.
\end{proposition}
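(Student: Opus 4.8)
The plan is to derive the domain Markov property from the strong Markov property of $(\delta,\sigma)$-SLE proved above, the curve‑generation statement valid for $\kappa\leq4$, and the uniqueness part of Theorem~\ref{Theorem: Curve -> chain}.

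\textbf{Step 1: identify the data.} Since $\kappa\leq4$, Corollary~\ref{Corollary: local SLE properties} gives a simple curve $\gamma$ with $\K_t=\gamma_{[0,t]}$, so $\gamma_{\sim}=\gamma_{[0,T]}=\K_T$ and its non‑$a$ endpoint is the tip $a_{\sim}=z_T$, which obeys $G_T(z_T)=a$. Using the one‑to‑one correspondence $\{B_s\}_{s\in[0,T]}\leftrightarrow\{G_s\}_{s\in[0,T]}$ and the fact that $T$ is a stopping time, the random objects $\gamma_{\sim}$, $a_{\sim}$ and the conformal map $G_T$ are $\mathcal{F}_{\sim}=\mathcal{F}_T^B$‑measurable. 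By Theorem~\ref{Theorem: Curve -> chain} the $(\delta,\sigma)$‑chain inducing the simple curve $\gamma_{\sim}$ is unique, whence $G[\gamma_{\sim}]=G_T$, and by \eqref{Formula: push forward of measure} the conditional measure $P_{D\setminus\gamma_{\sim},a_{\sim}}=(G_T^{\psi})_*P_{D,a}$ is, by definition, the law of $(G_T^{\psi})^{-1}(\hat\gamma)$ with $\hat\gamma\sim P_{D,a}$.

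\textbf{Step 2: decompose the curve.} Put $G_{t+T,T}:=G_{t+T}\circ G_T^{-1}$; by the strong Markov property of $(\delta,\sigma)$‑SLE this is again a $(\delta,\sigma)$‑SLE, independent of $\mathcal{F}_T^B$, and I denote by $\hat\gamma$ its (a.s.\ simple) curve, which has law $P_{D,a}$. From $G_{t+T}=G_{t+T,T}\circ G_T$ and $G_s(\Dc)=\Dc\setminus\K_s$ one obtains $\K_{t+T}=\K_T\cup G_T^{-1}(\hat\K_t)$, hence $\gamma_{[T,+\infty)}=G_T^{-1}(\hat\gamma)$ and $\gamma=\gamma_{\sim}\cup G_T^{-1}(\hat\gamma)$; since $\hat\gamma$ starts at $a$ and $G_T^{-1}(a)=z_T=a_{\sim}$, the curve $G_T^{-1}(\hat\gamma)$ belongs to $\mathscr{G}_{D\setminus\gamma_{\sim},a_{\sim}}$ after the chart identification via $\psi$.

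\textbf{Step 3: conditional expectation.} For bounded $\mathcal{F}_{D,a}$‑measurable $f$, write $f(\gamma)=\Phi(\gamma_{\sim},G_T,\hat\gamma)$ for a measurable $\Phi$, whose first two arguments are $\mathcal{F}_{\sim}$‑measurable and whose third argument is independent of $\mathcal{F}_{\sim}$. The standard freezing property of conditional expectations then gives
\[
	\Evv{D,a}{f(\gamma)\mid\mathcal{F}_{\sim}}(\omega)=\int f\big(\gamma_{\sim}(\omega)\cup G_T(\omega)^{-1}(\hat\gamma)\big)\,dP_{D,a}(\hat\gamma)\quad\text{a.s.},
\]
and by Step~1 the right‑hand side is precisely $\Evv{D\setminus\gamma_{\sim}(\omega),a_{\sim}(\omega)}{f(\gamma\cup\gamma_{\sim}(\omega))}$, i.e.\ \eqref{Formula: Domain Markov property (strict)}. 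The main obstacle is Step~2: making rigorous that the ``future'' $\gamma_{[T,+\infty)}$ of the SLE curve equals $G_T^{-1}$ of the curve generated by the Markov‑shifted SLE, with the correct base‑point bookkeeping ($a_{\sim}=G_T^{-1}(a)$) and the attendant measurability of $\gamma_{\sim}$, $a_{\sim}$, $G_T$; once that identification and $G[\gamma_{\sim}]=G_T$ are secured, the remaining arguments are routine manipulations with the strong Markov property and the definition \eqref{Formula: push forward of measure}.
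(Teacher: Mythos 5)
Your argument is correct and follows exactly the route the paper intends: its own proof is the single line ``this follows from the strong Markov property of $(\delta,\sigma)$-SLE and from the construction above,'' and your Steps 1--3 are precisely the expansion of that remark (curve generation for $\kappa\leq 4$, uniqueness from Theorem~\ref{Theorem: Curve -> chain} giving $G[\gamma_{\sim}]=G_T$, the decomposition $G_{t+T}=G_{t+T,T}\circ G_T$ with $G_{t+T,T}$ an independent fresh $(\delta,\sigma)$-SLE, and the freezing lemma). No substantive difference from the paper's approach.
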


\begin{proof}
This follows from the strong Markov property of $(\delta,\sigma)$-SLE and from
the construction above.
\end{proof}

\begin{center}
\begin{figure}[H]
\centering
	\includegraphics[keepaspectratio=true]
    	{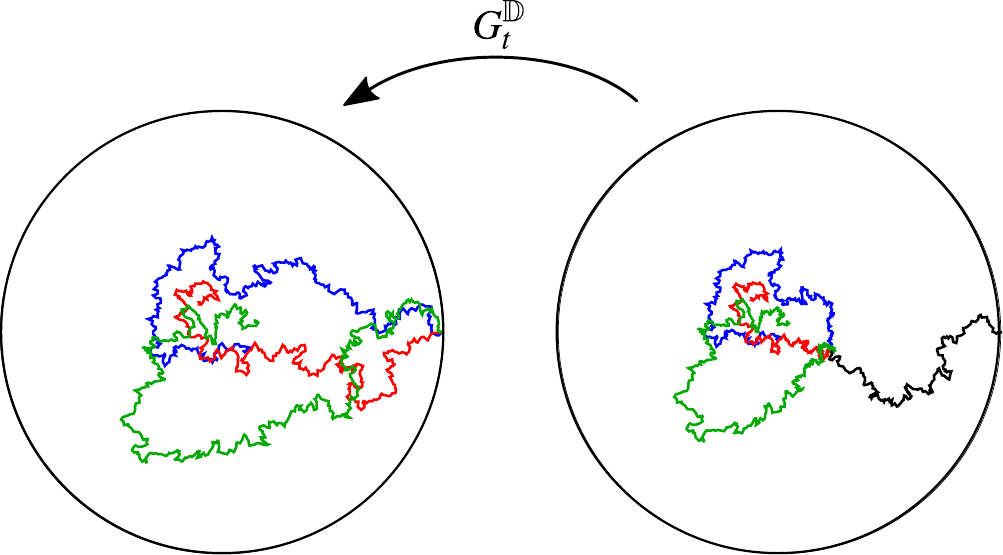}
  \caption{This is an illustration of the domain Markov property of
  ($\delta,\sigma$)-SLE.
  The three color lines on the left-hand side are samples of the
  ($\delta,\sigma$)-SLE slit $\gamma$. 
  The corresponding color lines on the right-hand side are the preimages with
  respect to an independently sampled forward ($\delta,\sigma$)-SLE map $G_t$,
  which has the slit denoted by the black line $\gamma_{\sim}$. The domain
  Markov property states that the union of the black line and any of the color lines has the
  same law as the ($\delta,\sigma$)-SLE slit.
  For this figure we made the slit simulation for
  $\delta=2\ell_{-2}$ and $\sigma=-\sqrt{2}(\ell_{-1}+\ell_{1})$. in the unit
  disk chart
  \label{Figure: domain Markov Property illustration}}
\end{figure}
\end{center}

The domain Markov property can also be considered for continuously growing hulls
that are not curves. This can be done for the classical cases, but for other
cases we would need to prove the Conjecture 
\ref{Conjecture: uniquence of G and T given K}.

\section*{Conclusions and perspectives}

\begin{enumerate}

\item 
It is important to prove Conjecture
\ref{Conjecture: uniquence of G and T given K}
and obtain which normalization conditions are satisfied for the general case of
($\delta,\sigma$)-SLE analogous to the classical cases.

\item
We considered only holomorphic vector fields and conformal maps. This
restriction can be relaxed as follows. Let $\delta$ and $\sigma$ are tangent at
the boundary (except the point $a$ for $\delta$) ans holomorphic only on some
neighborhood of the boundary of $\Dc\setminus\K_t$. On the rest part of $\Dc$
the vector fields can be smooth and such that the initial value problem
\eqref{Formula: d g = h delta g dt}
has a unique solution $\{G_t\}_{t\in[0,+\infty)}$ that is a family of
endomorphisms.
Such map $G_t$ is holomorphic only on some neighborhood of the boundary and
smooth inside. On the other hand, the local behaviour of the hull is the same
as for holomorphic $\delta$ and $\sigma$.
This provides a more general random law on curves with the SLE local
behaviour.

\item The generalisation of the SLE to multiply-connected Riemann surfaces
(genuses with boundaries) is not straightforward because the amount of complete
and semicomplete vector fields tangent at the boundaries is hardly restricted
by the Riemann-Roch theorem. The approach from the previous item is a possible
way to avoid this difficulty.

\end{enumerate}

\chapter{Numerical Simulation}
\label{Chapter: Numerical Simulation}

\section*{Introduction}

This chapter is dedicated to methods of numerical solution of forward
$(\delta,\sigma)$-L\"owner chains and simulation of the forward
$(\delta,\sigma)$-SLE.
A general method for the approximation of conformal maps was introduced in
\cite{Marshall2007}. Its application to the chordal L\"owner equation is
considered in
\cite{Kennedy2009}
together with some technical details.

Numerical simulation of the $(\delta,\sigma)$-SLE was
developed and implemented independently by the author using Wolfram Mathematica
software. All figures with $(\delta,\sigma)$-SLE samples in this monograph,
except Fig. \ref{Figure: bad kennedy figure}, are made with this program. Here
we describe the applied method together with some technical
difficulties, which motivates the structure of the
program.
The correctness (convergence) in the deterministic case is actually proved in
\cite{Marshall2007}.
The analogous statement for the stochastic case is more difficult, see 
\cite{HuyV.Tran2014}
for details. We avoid the consideration of this
problem in the present text.

We organize this chapter as follows. In the first section we study how to
approximate $(\delta,\sigma)$-L\"owner chains with given driving
function. If the driving function is a sample of the Brownian motion, there
are additional technical difficulties that are considered in the second
section. The third part of the chapter is actually a conclusion/perspective
section. We discuss how to extend these methods to non-continuous stochastic
processes such as the stable Levy process.


\section{Approximation of $(\delta,\sigma)$-L\"owner chain}
\label{Section: Simulation of Loewner chain}

\subsection{The zipper method}
\label{Section: The zipper method}

Let 
$G_t$
be a conformal map from a forward L\"owner chain 
$\{G_s\}_{s\in[0,\,t]}$ 
with
driving function 
$\{u_s\}_{s \in[0,\,t]}$, $t\in[0,+\infty)$. 
Our purpose is to obtain a
reasonable approximation map $\bar G_t$
($\bar G_t \approx G_t$). 
Namely, for a given chart $\psi$ and a large integer
$N$ 
we define a conformal map
$\bar G_t^N$ 
and claim 
\begin{equation}
	\left(\bar G_t^N\right)^{\psi}(z) \xrightarrow{N\map\infty} G_t^{\psi}(z)\quad
	\text{uniformly on }\psi(\Dc).
	\label{Formula: G^N -> G}
\end{equation}

To this end, consider a finite partition of the interval
\index{partition of time}
$[0,\,t]$
given by a collection
$\{t_0,\,t_1,\,t_2,\dotso t_N\}$ 
such that
\begin{equation}
	0=t_0<t_1<t_2<\dotso<t_{N-1}<t_N=t<\infty.
	\label{Formula: 0=t_0<t_1<...<t_N}
\end{equation}
For example, one can use the uniform partition
\begin{equation}
	t_n:=\frac{n}{N} t,\quad n=1,2,\dotso N. 
	\label{Formula: uniform partition of time}
\end{equation}
We use the results of 
Section 
\ref{Section: Definition and basic properties}
to split $G_t$ into the composition
\begin{equation}\begin{split}
	&G_t = 
	G_{t_N,\,t_{N-1}} \circ G_{t_{N-1},\,t_{N-2}} \circ \dotso \circ
	G_{t_3,\,t_2} \circ G_{t_2,\,t_1} \circ G_{t_1,\,t_0}.
\end{split}\end{equation}
Each function 
$G_{t_{n},\,t_{n-1}}$ ($n=1,2,\dotso,N$)
is a L\"owner map 
$\tilde G_{t_n-t_{n-1}}$ 
obtained with the driving function 
$\{\tilde u_s := u_{s+t_{n-1}}-u_{t_{n-1}} \}_{s \in[0,\,t_n-t_{n-1}]}$. 
Let 
$\bar G_n$ ($n=1,2,\dotso,N$) 
be some approximation of 
$G_{t_n,\,t_{n-1}}$.
We call $\bar G_n$ the \emph{step map}.
\index{step map}
We choose the partition
\eqref{Formula: 0=t_0<t_1<...<t_N}
such that 
\begin{equation}
	\max\limits_{n=1,2,\dotso,N}
	\left(t_n-t_{n-1}\right) \xrightarrow{N\map \infty} 0,
	\label{Formula: max (t_n-t_n-1) -> 0}
\end{equation}
and define the approximation $\bar G_t^N$ by
\begin{equation}
 \bar G_t^N := \bar G_N \circ \bar G_{N-1} \circ \dotso \circ \bar G_2 \circ
 \bar G_1,\quad N=1,2,\dotso.
 \label{Formula: G^N = G_n circ G_n-1 ... G_1}
\end{equation}
 
Our main purpose is to approximate the curve
$\gamma_t$ 
that generates the hull 
\begin{equation}
	\K_t=\Dc \setminus G_t^{-1}(\Dc).
\end{equation}
We assume first that 
$\K_t=\gamma_t$ is a simple curve.
Then the curve
\begin{equation}
	{{\bar \gamma}}^{\,N}_t := \Dc \setminus \left(\bar G^N_t\right)^{-1}(\Dc)
\end{equation}
is such that
\begin{equation}
	\psi({{\bar \gamma}}^{\,N}_t) \xrightarrow{N\map\infty} \psi(\gamma_t)
\end{equation}
due to
\eqref{Formula: G^N -> G}. 

We notice that 
${{\bar \gamma}}^{\,N}_t$ 
consists of arcs such that the first one is 
$\Dc\setminus \bar G_1^{-1}(\Dc)$,
the second one
is the image of 
$\Dc\setminus \bar G_2^{-1}(\Dc)$ 
with respect to $\bar G_1^{-1}$,
the third one is the image of 
$\Dc\setminus \bar G_3^{-1}(\Dc)$ 
with respect to 
$\bar G_1^{-1}\circ \bar G_2^{-1}$,
and so on.

Consider the 
\emph{joint points}
\index{joint points} 
$\bar \gamma_{n,a}^N\in {\bar \gamma}^{\,N}_t$ 
defined by 
\begin{equation}
 \bar \gamma_{n,a}^N:=
 \bar G_1^{-1} \circ \bar G_2^{-1} \circ \dotso \circ \bar G_n^{-1}(a),
 \quad n=1,2,\dotso N,
 \label{Formula: gamma^N_n = G_1 G_2 ... G_n}
\end{equation}
where
$a\in\de \Dc$ 
is the source point. In the limit 
$N\map\infty$, 
the distance between them tends to zero: 
\begin{equation}
	\max\limits_{n=1,2,\dotso,N} 
	\left| \psi(\bar \gamma_{n,a}^N) - \psi(\bar \gamma_{n-1,a}^N) \right|
	\xrightarrow{N\map \infty} 0,
\end{equation}
and the arcs tends to
straight line segments due to the continuity of the driving function 
$\{u_s\}_{s\in[0,\,t]}$.
Thereby, to approximate the curve $\gamma_t$ in chart $\psi$ we can just find
all joined points $\bar\gamma_{n,a}^N$ 
with large enough $N$ and proper partition 
\eqref{Formula: 0=t_0<t_1<...<t_N}.
The joint points 
$\{\bar \gamma_{n,a}^{N}\}_{n=1,2,\dotso N}$
can then be connected by straight
line segments in the chart $\psi$.

To find each point 
$\bar \gamma_{n,a}^N$ 
we need to make $n$ iterations (calculate
the map 
$\bar G_i$ 
for different values of $i$, $n$ times). Hence, the time of
the calculation of 
$\bar \bar \gamma^{\,N}_t$ 
grows as $O(N^2)$. A faster method of
approximation is considered in 
\cite{Kennedy2007}. 
We do not implement it here.

Below, we discuss how to choose the step map and
the partition 
\eqref{Formula: 0=t_0<t_1<...<t_N}.

\subsection{Choice of the step map $\bar G_n$}
\label{Formula: Choice of bar G_n}

One of the possible choices of the approximation function
$\bar G_n$ 
for the chordal case is made, e.g., in 
\cite{Kennedy2009}. 
The map 
$\bar G_n$ 
is chosen such that the slit 
$\bar \gamma:=\Dc\setminus \bar G_n^{-1}(\Dc)$ 
is a straight line segment in the half-plane chart from $a$ to some point inside
the domain. See 
\cite{Kennedy2009} 
for exact formulas.

An alternative can be given by 
the solution of the equation
\begin{equation}
	\dot {G}_s 
	= \delta \circ {G}_s 
	+ \frac{u_{t_n}-u_{t_{n-1}}} {t_n-t_{n-1}} \sigma \circ {G}_s
	,\quad {G}_0=\id,\quad s \in[0,\,t_n-t_{n-1}], 
	\label{Formula: dot G_t = delta + (u-u)/(t-t)sigma}
\end{equation}
which is 
$H_s\left[\delta+\frac{u_{t_n}-u_{t_{n-1}}} {t_n-t_{n-1}} \sigma\right]$.
With such a step map the full approximation 
$\bar G_t^N$ 
is a L\"owner chain with driving function 
$\{\bar u_s^N\}_{s\in[0,\,t]}$ 
obtained as a piecewise linear continuous approximation of 
$\{u_s\}_{s\in[0,\,t]}$
such that 
$\bar u^N_{t_n} = u_{t_n},~n=0,1,2,\dotso N$ at each joint point of 
$\bar u_t^N$.

The third alternative considered here corresponds to the approximation of 
$\{u_s\}_{s\in[0,\,t]}$
by a piecewise constant function 
$\{\bar u_s^N\}_{s\in[0,\,t]}$ 
such that
\begin{equation}
	\bar u^N_{s} = u_{t_n},\quad s \in [t_{n},\,t_{n-1}),\quad n=1,2,\dotso N.
	\label{Formula: bar u = u_n}
\end{equation}
Thus, 
$\{\bar u_s^N\}_{s\in[0,\,t]}$
is not continuous,
and the step function is a composition 
\begin{equation}
	\bar G_n = H_{u_{t_n}-u_{t_{n-1}}}[\sigma] \circ H_{t_n-t_{n-1}}[\delta],\quad
	n=1,2,\dotso N.
	\label{Formula: tilde G_n = H[sigma] H[delta]}
\end{equation}

The term
$H_{t_n-t_{n-1}}[\delta]$ 
corresponds to the interval  
$[t_{n-1},\,t_{n})$, 
where 
$\bar u_t^N$ 
is constant. The curve 
$\bar \gamma :=\Dc\setminus H_{s}[\delta]^{-1}(\Dc)$ 
is a flow line of 
$\delta$ 
that starts at $a$ and tends to the attracting (or degenerate) zero of
$\delta$ when $s\map +\infty$.
Meanwhile, 
$H_{t_n-t_{n-1}}[\delta]$
maps the tip of 
$\gamma$ to $a$ 
and both sides of the slit $\bar \gamma$ to the boundary $\de\Dc$ of $\Dc$.

The second term 
$H_{u_{t_n}-u_{t_{n-1}}}[\sigma]$
in 
\eqref{Formula: dot G_t = delta + (u-u)/(t-t)sigma}
is a M\"obius automorphism that moves the point $a$ along the boundary of
$\Dc$. 
The length of the move is defined by the jump of  
$\bar u_s^N$
at 
$s=t_n$, 
which is
$u_{t_n}-u_{t_{n-1}}$.
Hence, the inverse step map 
$\bar G_n^{-1}$
produces a slit $\bar \gamma$ which starts at the source $a$, but  
maps $a$ to the boundary of 
$\Dc\setminus\bar\gamma$,
which is either a side of $\bar \gamma$ or the boundary $\de\Dc$, if the jump is
big enough. It is that point from which the next piece of $\bar \gamma^{\,N}$
starts.


Thereby, in the chain 
\eqref{Formula: gamma^N_n = G_1 G_2 ... G_n},
each next part of the slit $\bar {\gamma_t}^{N}$ grows not from the tip of the
previous one, but from a moved point on a side of the slit. Thus, 
${\bar {\gamma}}_t^{\,N}$ 
is not a curve, but a tree with the property that in the limit 
\eqref{Formula: max (t_n-t_n-1) -> 0}
the source of each arc of 
${\bar {\gamma}}^{\,N}$ 
tends to the tip of the previous arc, see Fig.
\ref{Figure: different approxiamtions}. 
This is a consequence of the continuity
of the driving function $\{u_s\}_{s\in[0,\,t]}$,
see Fig.
\ref{Figure: different approxiamtions}.
\begin{center}
\begin{figure}[h]
\centering
	\includegraphics[keepaspectratio=true]
    	{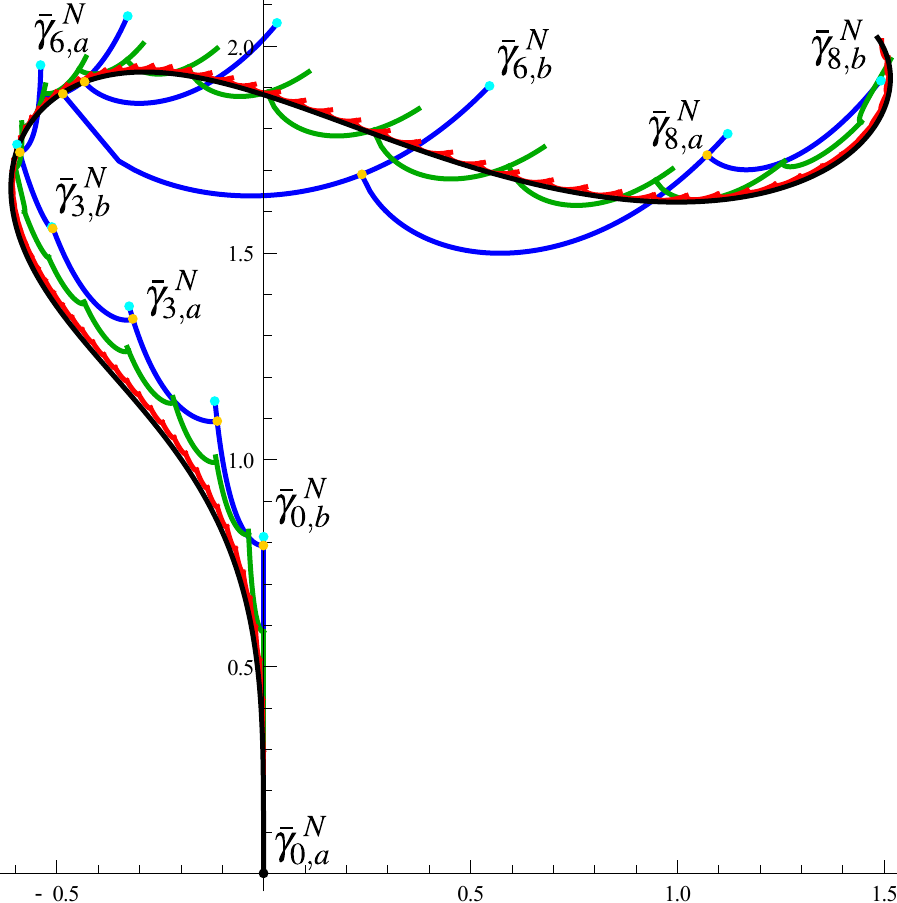}
  \caption{
  We show the slit (the blue line) of the approximation map $\bar G_t^N$.
  We denote the points $\bar\gamma_{i,a}^N$ with yellow color and the points 
  $\bar\gamma_{i,b}^N$ with blue color.
  We also demonstrate the convergence of $\bar \gamma_t^N$ to $\gamma_t$
  when $N\map+\infty$.
  This figure shows how the slit 
  $\bar\gamma_t^N$ 
  of $\bar G_t^N$ looks
  for different values of $N$. For illustration we consider 
  the chordal case, half-plane chart, the driving function
  $u_s:=-4s^2 (s-1)(s-2)$, $t=1.8$, and use the partition 
  \eqref{Formula: uniform partition of time}.  
  The black line is the exact slit
  $\gamma_t$, the blue, green, and red lines are the slits $\bar \gamma_t^N$
  for $N=10$, $N=20$, and $N=80$ correspondingly.
  \label{Figure: different approxiamtions}}
\end{figure}
\end{center}

The choice 
\eqref{Formula: tilde G_n = H[sigma] H[delta]}
has the disadvantage that the map 
$H_{s}[\delta]$ 
cannot be expressed in terms of elementary functions. This increases the time
of numerical calculations. To handle this we combine the second
\eqref{Formula: dot G_t = delta + (u-u)/(t-t)sigma}
and the third 
\eqref{Formula: tilde G_n = H[sigma] H[delta]}
approach as follows. 
Assume the normalization 
\eqref{Formula: delta_-2 = pm2}
and consider the semicomplete vector field 
\begin{equation}
	\delta + \frac{u_{t_n}-u_{t_{n-1}}} {t_n-t_{n-1}} \sigma 
  \label{Formula: delta + (u-u)/(t-t)sigma}
\end{equation}
from 
\eqref{Formula: dot G_t = delta + (u-u)/(t-t)sigma}.
Now we apply the classification from Section
\ref{Section: SLE preliminaries}
and 
Fig.\ref{Figure: Semicomplete vector fields}
to define a vector field 
$\bar\delta_n$.
\begin{enumerate}
  \item
  If 
  \eqref{Formula: delta + (u-u)/(t-t)sigma}
 	is parabolic, then 
  $\bar\delta_n$
  is chordal 
  (see Section 
 	\ref{Section: Chordal Loewner equation})
  with triple zero at the same point where 
  \eqref{Formula: delta + (u-u)/(t-t)sigma} has the double zero.
  \item 
  If 
  \eqref{Formula: delta + (u-u)/(t-t)sigma}
  is hyperbolic, let $b_1,b_2\in\de\Dc$ be the 
  non-attracting zeros of 
  \eqref{Formula: delta + (u-u)/(t-t)sigma}. 
  Assume $\bar\delta_n$ is dipolar 
  (see Section \ref{Section: Dipolar Loewner equation})
  with non-attracting zeros at $b_1$ and $b_2$.
  \item
  If 
  \eqref{Formula: delta + (u-u)/(t-t)sigma}
  is elliptic, then 
  $\bar\delta_n$ is
  radial 
	(see Section \ref{Section: Radial Loewner equation})
  with zero at the position of the attracting zero of 
  \eqref{Formula: delta + (u-u)/(t-t)sigma}.
\end{enumerate}
We also define a complete vector field $\bar\sigma_n$ by
\begin{equation}
	\bar\sigma_n := \delta + \frac{u_{t_n}-u_{t_{n-1}}} {t_n-t_{n-1}} \sigma -
	\bar\delta_n
\end{equation}
and the step function by
\begin{equation}
	\bar G_n = H_{t_n-t_{n-1}}[\bar\sigma_n] \circ
	H_{t_n-t_{n-1}}[\bar\delta_n],\quad n=1,2,\dotso N.
	\label{Formula: bar G_n = H[hat sigma] H[hat delta]}
\end{equation}
Thereby, the slit 
$\bar \gamma$ 
obtained with $\bar G_n$ tends to the attracting
zero of 
\ref{Formula: delta + (u-u)/(t-t)sigma}
in the chordal and radial cases.
The map $\bar G_n$ is given by the solution of driftless chordal, dipolar, or
radial equations with a piecewise constant driving function.
In fact, this method is a
version of
\eqref{Formula: tilde G_n = H[sigma] H[delta]}
after some drift transform $\mathscr{D}_c$ of the tripple
\begin{equation}
	\left(\delta,\sigma,
	\left\{\frac{u_{t_n}-u_{t_{n-1}}} {t_n-t_{n-1}}s\right\}
	_{s=[0,\,t_n-t_{n-1})}\right)
\end{equation}
of 
\eqref{Formula: dot G_t = delta + (u-u)/(t-t)sigma}.

\subsection{Choice of the partition}
\label{Section: Choice of the partition}

The simplest choice of partition 
\eqref{Formula: 0=t_0<t_1<...<t_N}
is the uniform one
\eqref{Formula: uniform partition of time}.
A critical disadvantage of this approach is discussed in
\cite{Kennedy2009}.
The difficulty is that the distance between the joint points
$\bar\gamma_n^{\,N}$ is far from being homogenuous in this case. 
Define the distance 
\index{$d(\cdot,\cdot)$}
\begin{equation}
	d(z_2,z_1):=|\psi(z_2)-\psi(z_1)|,\quad z_1,z_2\in\Dc
\end{equation}
for a given chart $\psi$.
For a fixed driving function $\{u_s\}_{s\in[0,s]}$ and a fixed integer $N$ the
distance 
$d(\bar \gamma_n^N,\bar \gamma_{n-1}^N)$ 
may vary from numerically very large to very small,
see Fig. 
\ref{Figure: bad kennedy figure}.
Thus, the curve $\gamma$ is approximated with too high accuracy in some regions
and with too low accuracy in other regions. As an illustration we
present a simulation of a chordal L\"owner chain with a driving
function obtained as a sample of Brownian motion (a sample of chordal SLE), see
fig.
\ref{Figure: bad kennedy figure}.
\begin{figure}
\centering
\includegraphics[width=10cm,keepaspectratio]{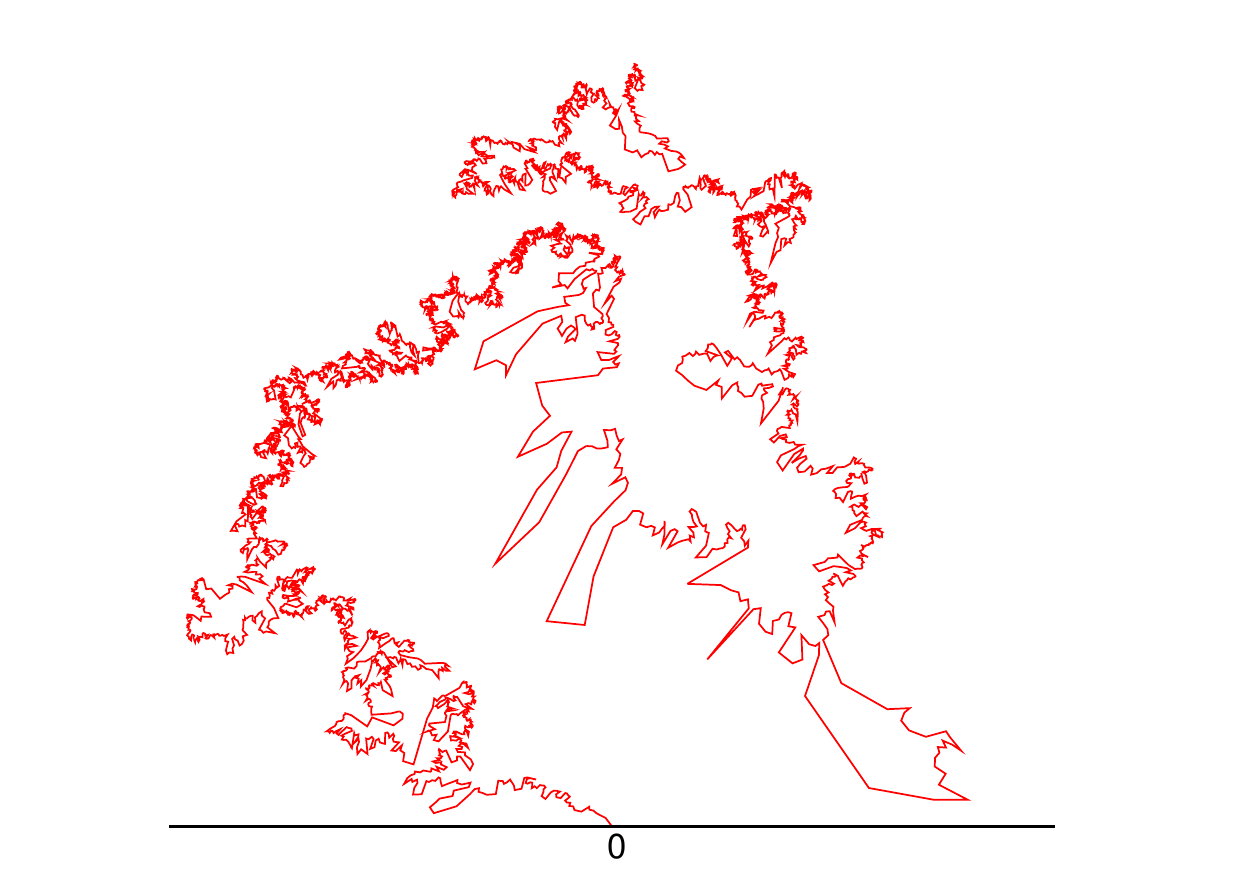}
\caption{An approximation of the slit of a chordal L\"owner chain with the
driving function given by a sample of Brownian motion (SLE slit) in the
half-plane chart.
The partition of the time interval is uniform. 
The parameters are chosen to be $\kappa=6$ and $N=10000$. (\cite{Kennedy2009})}
\label{Figure: bad kennedy figure} 
\end{figure}
The thin straight lines correspond to regions of time where
$d(\bar\gamma_{n,a}^N,\bar\gamma_{n-1,a}^N)$ 
is much bigger than the resolution of the picture. On the other hand, in other
regions, where the curve is bold, the distance 
$d(\bar\gamma_{n,a}^N,\bar\gamma_{n-1,a}^N)$ 
is overabundantly small.

To avoid this unpleasant situation it is enough to take each next $t_n$ not
uniformly as above, but such that
$d_{\text{min}}<d(\bar\gamma_n^N,\gamma_{n-1,a}^N)<d_{max}$ 
for some parameters 
$d_{\text{min}},d_{\text{max}}>0$ 
that correspond to the desired degree of resolution. Thus, the choice of
partition strongly depends on the driving function $u_t$. We remark that such a
partition also depends on the choice of chart $\psi$ also, as the Lebesgue
distance $d(\cdot,\cdot)$ does.

If the driving function 
$\{u_s\}_{s\in[0,\,t]}$ 
is given analytically for all
values of 
$s\in[0,\,t]$,
than the implementation of this method is straightforward. Once we have found
$\{\bar\bar \gamma^{\,N}_i\}_{i=1,2,\dotso n}$ 
for some $n<N$, we can just take some
$t_{n+1}>t_n$, 
calculate 
$\bar\gamma_{n+1,a}^N$, 
and increase or decrease the value of 
$t_{n+1}$
depending on the calculated value of 
$d(\bar\gamma_{n+1,a}^N,\bar\gamma_{n,a}^N)$.
Only after obtaining an optimal 
$\bar\gamma_{n+1,a}^N$ 
we add such 
$\bar\gamma_{n+1,a}^N$
to the collection
$\{\bar {\gamma}^{\,N}_i\}_{i=1,2,\dotso n}$. One can also use the distance
\begin{equation}
	d(\bar\gamma_{n+1,a}^N,\bar\gamma_{n,b}^N)+
	d(\bar\gamma_{n,b}^N,\bar\gamma_{n,a}^N)
\end{equation}
instead of 
$d(\bar\gamma_{n+1,a}^N,\bar\gamma_{n,a}^N)$.

\section{Simulation of $(\delta,\sigma)$-SLE}
\label{Section: Simulation of SLE}

Above we considered a method of numerical simulation for
$(\delta,\sigma)$-L\"owner chains. The problem becomes more complicated
when the driving function is a sample of a random process such as the Brownian
motion $\{B_t\}_{t\in[0,+\infty)}$. This is due to the fact that we have to
sample the process during the simulation. We prefer to avoid sampling of $B_t$
beforehand, because we do not know the partition in advance (the required amount
of floating-point operations and memory is incredibly big if we just take the
smallest possible mesh $t_n-t_{n-1}$ and sample $B_{t_n}$ uniformly, in all
points $t_n$), see also the caption to Fig.
\ref{Figure: histograms}.

On the other hand, if we have sampled $B_t$ at
$t=t_n$ and have concluded that $\bar \gamma_{n,a}$ is too close to or too
far away from $\bar \gamma_{n-1,a}$ we cannot ignore the value of $B_{t_n}$ in
the future sampling of $B_t$ for other values of $t$ (say, $\tilde t_{n}$) as we
did above for a not random driving function, as $B_{\tilde t_{n}}$
and $B_{t_n}$ are not independent random variables.
This motivates the following method (algorithm) for the ($\delta,\sigma$)-SLE
simulation. The scheme
(see Fig. \ref{Figure: the routine R})
presented below is motivated by the discussion above and
the experience of the author. We do not present any proof of correctness. 

We consider a routine 
$R(\bar \gamma_{x,a},\{t_x,B_x\},\{t_y,B_y\})$ that should obtain all points 
$\bar\gamma_n$ that correspond to a given time interval $[t_x,\,t_y]$. It also
samples all necessary values of $B_t$ inside $[t_x,\,t_y]$.
The values of $B_t$ for $t=t_x$ and $t=t_y$ are assumed to be given as $B_x$
and $B_y$. All points 
$\bar \gamma_{i,a}\in\Dc\}_{i=1,2,\dotso, n}$ 
for the interval 
$[0,\,t_x]$ 
are assumed to be obtained during the previous steps. The last joint point 
$\bar \gamma_{n,a}\in\Dc$, which corresponds
to $t=t_x$ and is denoted by 
$\bar \gamma_{x,a}$, 
is the first argument of the routine
$R$.
The algorithm is recursive, as the routine $R$ calls itself to
obtain $\bar\gamma_{n,a}$ on subintervals of $[t_x,\,t_y]$.  

In the very beginning of the simulation, we pick some $T>0$, 
sample $B_T$, and set $\bar\gamma_{0,a}:=a$. Then we call the routine $R$ for
the first time with the arguments $R(\bar\gamma_{0,a},(0,0),(T,B_T))$, which
initiates the simulation.

The points $\bar\gamma_{n,a}$, $n=1,2,\dotso$, are obtained sequentially in
different execution instance of $R$.
There is also a global Boolean parameter `additional point' that is equal to
`false' by default and is equal to `true' if the last sampled point
$\bar\gamma_{n+1,a}$ is too close to $\bar\gamma_{n,a}$, see details below.

\begin{sidewaystable}[hp]
\begin{center}
\includegraphics{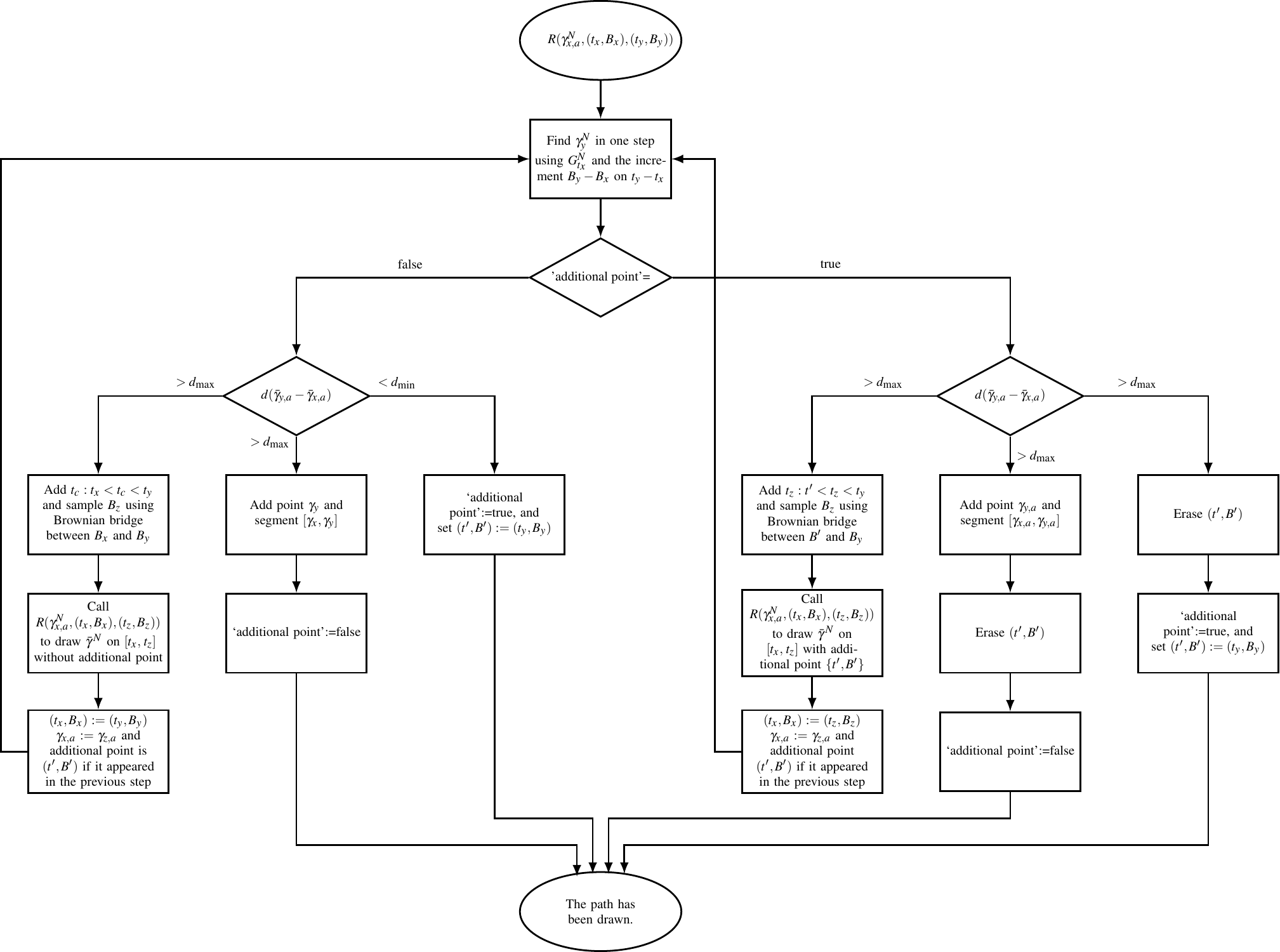}
\end{center}
\caption{Block scheme for the routine $R$.}
\label{Figure: the routine R}
\end{sidewaystable}
The design of the routine $R$ is presented in Fig. 
\ref{Figure: the routine R}
as a block scheme. First of all  
the routine calculates the point $\bar\gamma_{y,a}$ by an application of the
step map $\bar G_{n+1}$ with given time interval $t_y-t_x$ and jump
$B_y-B_x$
(the top rectangle on 
\ref{Figure: the routine R}).
The method is described in Section \ref{Formula: Choice of bar G_n}.
The next decision (the diamond below the rectangle) will be discussed below. For
now assume that there is no additional point and go left. Afterwards there are three
alternatives related to the value of the distance 
$d(\bar\gamma_{y,a},\bar\gamma_{x,a})$.
\begin{enumerate}[1.]
	\item $d(\bar\gamma_{y,a},\bar\gamma_{x,a})$ 
	is too big. In this case, we pick
	up some point $t_z\in(t_x,\,t_y)$, sample $B_z:=B_{t_z}$, 
	and call the routine $R$ for the interval $[t_x,\,t_z]$. Afterwards we return
	to the beginning but with a changed left boundary of the time interval 
	$t_x:=t_z$ 
	(or smaller, see below for details) 
	and new 
	$\bar \gamma_{x,a}$, which corresponds to the new value of $t_x$.
	\item $d(\bar\gamma_{y,a},\bar\gamma_{x,a})$ is neither too big, nor too small.
	This means that the point $\bar\gamma_{y,a}$ obtained before can be added to
	the collection $\{\bar\gamma_{i,a}\}_{i=0,1,2,\dotso n}$ of the joint points.
	Hence, the task of the routine is complete and we finish it with attribute
	`additional point'=`false'.
	\item $d(\bar\gamma_{y,a},\bar\gamma_{x,a})$ is too small. This means that the
	point $\bar\gamma_{y,a}$ is too close to $\bar\gamma_{y,a}$ and it is not
	reasonable to add it to the collection of the joint points because this
	uselessly increases the number of iterations in 
	\eqref{Formula: gamma^N_n = G_1 G_2 ... G_n}. We finish the routine but with
	the attribute `additional	point'=`true'. We also register the pair 
	$(t',B'):=(t_y,B_y)$ for future needs.
\end{enumerate}

Thus, we finish the consideration of the first left half of the block scheme
that corresponds to the option `additional point'=`false'. The alternative
happens when the last considered point 
$\bar\gamma_{n+1,a}$ 
is too close to the
last point added to the collection 
$\{\bar\gamma_i\}_{i=0,1,2,\dotso n}$ 
as in item 3 above. So, if after the
calculation of 
$\bar\gamma_{y,a}$ 
in the upper rectangle the global attribute  
`additional	point'=`true', we go right from the diamond. In addition, the pair 
$(t',B')$ 
is given and it is known that it corresponds to a point too closed to
$\bar\gamma_{x,a}$. The right half of the
algorithm is analogous to the left half. We have the same three alternatives
after the calculation of the distance 
$d(\bar\gamma_{y,a},\bar\gamma_{x,a})$.
\begin{enumerate}[1.]
	\item $d(\bar\gamma_{y,a},\bar\gamma_{x,a})$ is too big. In this case we pick
	some point $t_z\in(t',\,t_y)$, sample $B_z:=B_{t_z}$, 
	and call the routine $R$ for the interval $[t_x,\,t_z]$
	with option `Additional	point'=`True'. Afterwards we return to
	the beginning, but the value of $t_x$ can be increased and the point
	$\gamma_{x,a}$ can be changed to another one	as in item 1 above.
	The global attribute `additional	point' depends on the results of the called
	routine for the interval $[t_x,\,t_z]$. We recall that it equals to
	`false' if the point 
	$\bar\gamma_{z,a}$ for $(t_z,B_z)$ is added to the collection, and it  
	equals to `true' if $\bar\gamma_{z,a}$ for $(t_z,B_z)$ is too close to the last
	point $\bar\gamma_n$ added to the collection. In the second case, the pair
	$\{t_z,B_z\}$ is registered as $(t',B')$ and 
	$\{t_x,B_x\}$ corresponds to the last point $\bar\gamma_{n,a}$ added to the
	collection.
	\item $d(\bar\gamma_{y,a},\bar\gamma_{x,a})$ is neither too big, nor not too
	small.
	This means that the point $\bar\gamma_{y,a}$ obtained before can be added to
	the collection $\{\bar\gamma_{i,a}\}_{i=0,1,2,\dotso n}$ of the joint points. 
	The pair 
	$(t',B')$ can be erased, as it has no influence on values of $B_t$ if 
	$t>t_y>t'$. 
	The	task of the routine is complete after that, and we finish it with
	the attribute `additional point'=`false'.
	\item $d(\bar\gamma_{y,a},\bar\gamma_{x,a})$ is too small. This means that the
	point $\bar\gamma_{y,a}$ is also too close to $\bar\gamma_{x,a}$. Just as in
	the previous alternative we no longer need $(t',B')$ and we exchange
	$(t',B'):=(t_y,B_y)$. 
	We finish the routine with the attribute `additional
	point'=`true'.
\end{enumerate}

That is the functionality of the program. The routine $R$ calls itself for
smaller and smaller intervals of time until the distance between the points
$\bar\gamma_{n,a}$ is small enough. The choice of point $t_z$ in the first
alternatives of both halves of the block scheme can be prescribed by the
time interval between the two previously added points. 

The simulation can be stopped when the first call of the routine for 
$[0,\,T]$ finishes. If the time intervals between the last added points are of
the same order of magnitude as $T$, the last point $B_T$ may be interpreted
as the limit point $t\map+\infty$ of the slit $\gamma_t$. The value $T=10^8$
was used in all simulations. This choice is motivated by some technical
characteristics of numerical calculations in Wolfram Mathematica.

The limit point of the slit can be observed in all simulations in Chapter
\ref{Chapter: Classical cases}. 
In some situations, however, such as a chordal SLE
in the half-plane chart, the limit points were not observed (this is in
accordance with the analytic theory, which states that the limit point is the
infinity). In this case, the simulation can be stopped when the number $n$ of
the joint points achieves some maximal value $N$.

Here we present two examples with high resolution for a sample of the chordal
SLE slit in the half-plane chart, see fig.
\ref{Figure: SLE4_demo} 
and 
\ref{Figure: SLE7_demo}. 
In most other figures with slit samples from this monograph the number of points
$N$ is approximately equal to several hundreds and 
$d_{\text{max}}=0.02$, $d_{\text{min}}=0.01$.
The figure 
\eqref{Figure: histograms}
is dedicated to demonstrate the importance of dynamical choice of the
partition of time.  
 
\begin{figure}[hp]
\centering
\includegraphics[height=\textheight,
		keepaspectratio]{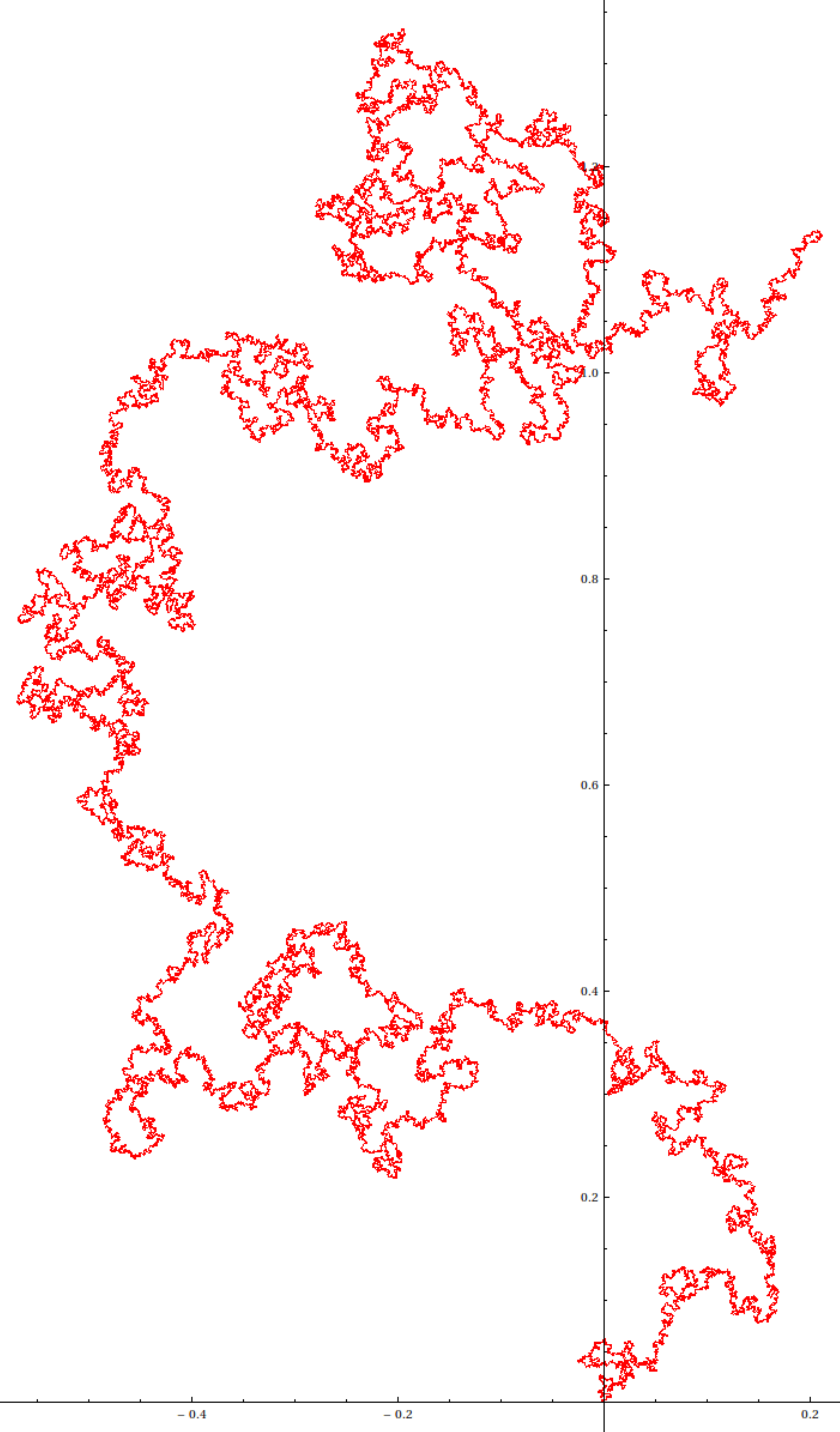}
\caption{Simulation of chordal SLE slit in the half-plane chart for $\kappa=4$ 
($10^5$ points, 
$d_{\text{max}}=10^{-3}$, and $d_{\text{min}}=0.5 \cdot 10^{-3}$).} 
\label{Figure: SLE4_demo}.
\end{figure}

\begin{figure}[hp]
\centering
\includegraphics[height=19cm,
		keepaspectratio]{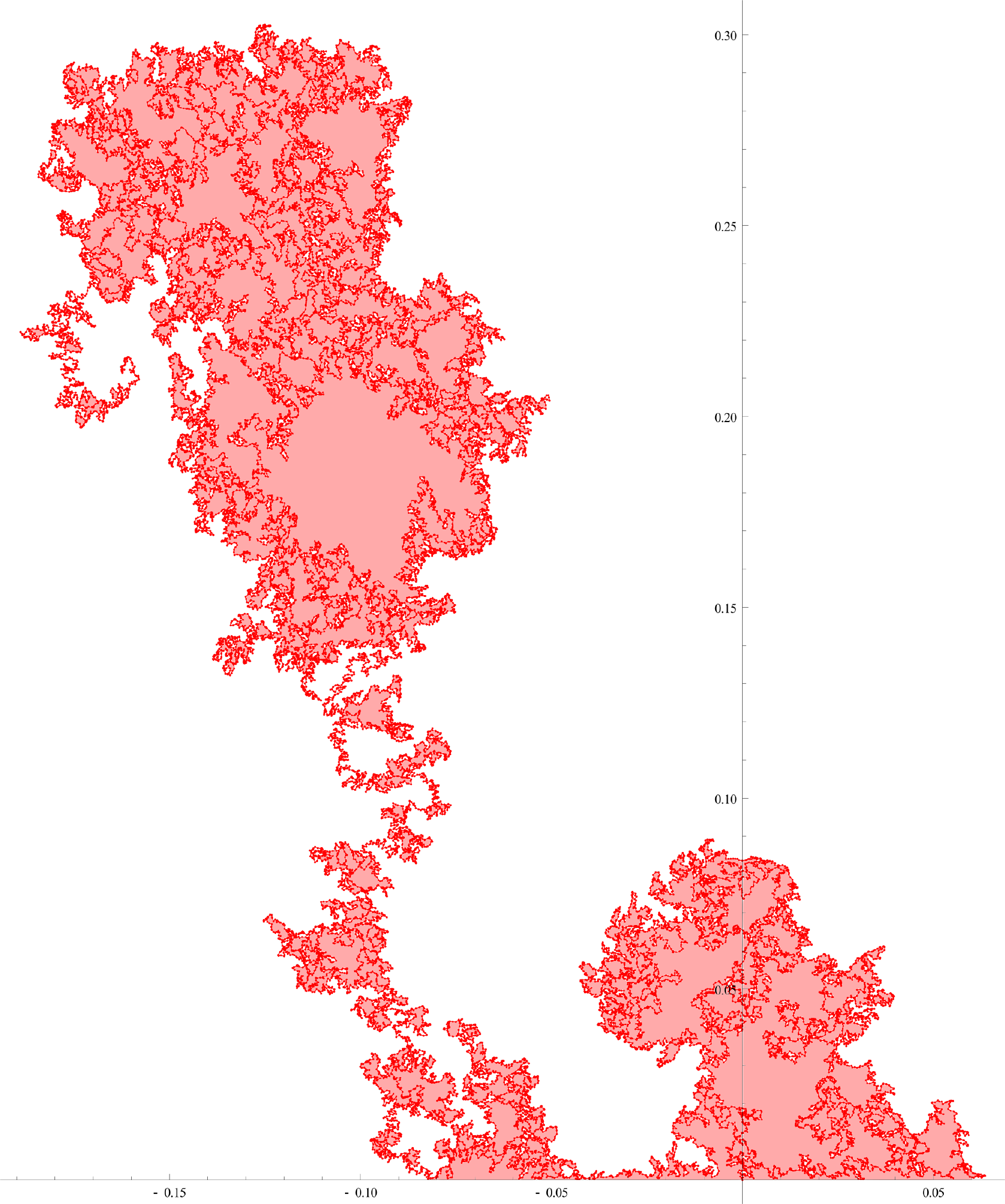}
\caption{Simulation of the chordal SLE slit for
$\kappa=7$ in the half-plane chart ($38793$ points), $d_{\text{max}}=10^{-3}$,
and $d_{\text{min}}=0.5 \cdot 10^{-3}$. The method gives only an
approximation of the curve that generates the hull $\K_t$. 
This approximate curve ${\bar \gamma}^{\,N}_t$ never touches itself, but passes
close to itself. We fill the regions that are numerically close to be bounded by the curve with
pink color by hand in a graphics editor.}
\label{Figure: SLE7_demo}
\end{figure}


\begin{figure}[h]
\centering
  \begin{subfigure}[t]{0.5\textwidth}
		\centering
    \includegraphics[width=5cm,keepaspectratio=true]
    	{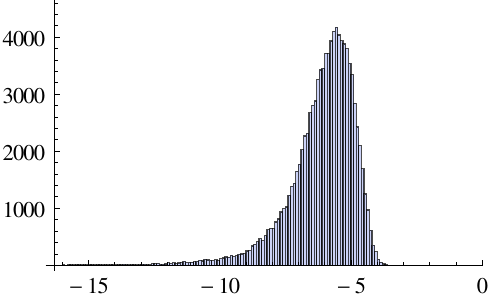}
    \caption{Fig. \ref{Figure: SLE4_demo}}
  \end{subfigure}%
  ~  
  \begin{subfigure}[t]{0.5\textwidth}
		\centering
    \includegraphics[width=5cm,keepaspectratio=true]
    	{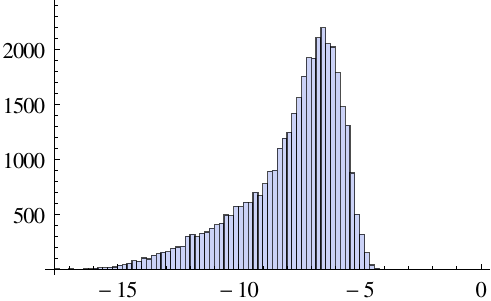} 
  	\caption{Fig. \ref{Figure: SLE7_demo}}
	\end{subfigure}%
\caption{The histograms illustrate the distribution of the length of the
intervals $t_n-t_{n-1}$, $n=1,2,\dotso,N$ in the partitions of time that are
used for the simulations in Fig.%
\ref{Figure: SLE4_demo} 
and 
\ref{Figure: SLE4_demo}. 
We used the
logarithmic scale on the horizontal axis. The left histogram
demonstrates that the length of the time intervals varies from $10^{-12}$ to
$10^{-4}$, which corresponds to a difference of eight orders of magnitude
between the typical smallest and typical biggest time interval. The right
histogram demonstrates the same variation, but from $10^{-15}$ to $10^{-5}$.
\label{Figure: histograms}}
\end{figure}

\section{Simulation of $(\delta,\sigma)$-L\"owner equation driven by a
stable Levy process}
\label{Section: Levy simulation}

Substituting a stochastic process in the place of the driving function 
is a convenient instrument to study the structure of $(\delta,\sigma)$-L\"owner
chain. Fig. 
\ref{Figure: Some SLE Examples}
shows that a $(\delta,\sigma)$-L\"owner 
chain driven by a sample the Brownian motion 
(($\delta,\sigma$)-SLE) 
does not allow one to distinguish one choice of $\delta$ and $\sigma$ from
another. In the general case, the curves have identical local behaviour and tend
to a random point inside the disk.

To make such figures more representative one may consider
a more general stochastic process then the Brownian motion. One of the most
natural choices is the stable Levy process. This version of the stochastic 
chordal L\"owner equation is considered in 
\cite{Rushkin2006,Oikonomou2008,Chen2009}.
The point is that the usage of a discontinuous but piecewise continuous
driving function gives a tree slit, not just a single curve. Each of the
intervals of continuity of the driving function corresponds to a curve that is a
branch of the tree. And each branch tends to a different random point.
Thus, for each choice of $\delta$ and $\sigma$ we expect to have a unique
characteristic pattern of a random tree.

The problem of a correct numerical simulation of a $(\delta,\sigma)$-L\"owner
equation driven by a stable Levy process has not been completely solved in the
present work. The technical difficulty will be described below. However, we
present some intermediate results and a picture that is somehow close to be
correct. The author is not aware which method was used in the cited above
parers. The most important step is the choice of partition of the time
interval.

\subsection{Stable Levy process}
\label{Section: Stable Levy process}

In this section, we give the definition and consider some basic properties of
the stable Levy process. For more details we refer to \cite{Applebaum2009}.

\begin{definition}
The 
\emph{stable Levy process (symmetric $\alpha$-stable Levy process)}
\index{stable Levy process}
is an $\mathbb{R}$-valued stochastic process
$\{L^{\alpha}_t\}_{t\in[0,\infty)}$ ($L_0=0$)
with independent increments. The distribution law of the increments are defined
by the following characteristic function
\begin{equation}
	\Ev{e^{i (L^{\alpha}_{t-s}-L_s) \theta}} 
	= e^{-s|\theta|^{\alpha}},\quad \theta\in\mathbb{R},
	\quad t,s\in[0,+\infty).
\end{equation} 
for some fixed $0<\alpha\leq 2$.
\end{definition}

The process 
$\{L^{\alpha}_t\}_{t\in[0,\infty)}$
possesses the following properties: 
\begin{enumerate}[1.]
\item $\{L^{\alpha}_t\}_{t\in[0,\infty)}$ is time homogeneous;
\item $\{L^{\alpha}_t\}_{t\in[0,\infty)}$ is strong Markov;
\item $\{L^{\alpha}_t\}_{t\in[0,\infty)}$ is self-similar (scale covariant):
\begin{equation}
	\mathrm{Law}[ L^{\alpha}_{ct} ] 
	= \mathrm{Law} [ c^{\frac{1}{\alpha}} L_t ],\quad c>0,\quad t\in[0,+\infty);
\end{equation}
\item $\{L^{\alpha}_t\}_{t\in[0,\infty)}$ is stochastically continuous, i.e.,
\begin{equation}
	\lim\limits_{t\map s} \mu_{L^{\alpha}}\left[ |L^{\alpha}_t|>a \right] 
	= 0,\quad a>0,\quad s\in[0,+\infty);
\end{equation}
\item $\{L^{\alpha}_t\}_{t\in[0,\infty)}$ is right-continuous with left limits
a.s.;
\item For $\alpha<2$, $\{L^{\alpha}_t\}_{t\in[0,\infty)}$ is piecewise
continuous, in the sense that on each finite time interval it is represented by
a continuous function with a finite number of jumps of size bigger then
$\varepsilon>0$ and a countable number of jumps smaller then $\varepsilon$ a.s.;
\item For $\alpha=2$, $\{L^{\alpha}_t\}_{t\in[0,\infty)}$ is the Brownian motion
$\{B_t\}_{t\in[0,+\infty)}$.
\end{enumerate}

\begin{figure}[h]
\centering
  \begin{subfigure}[t]{0.5\textwidth}
		\centering
    \includegraphics[width=5cm,keepaspectratio=true]
    	{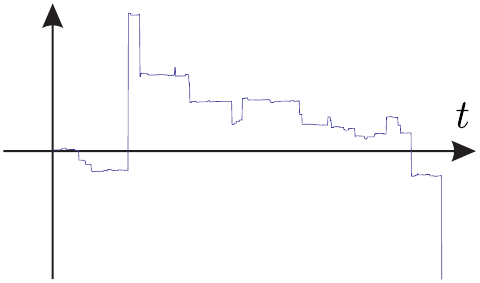}
    \caption{$\alpha=0.5$}
  \end{subfigure}%
  ~  
  \begin{subfigure}[t]{0.5\textwidth}
		\centering
    \includegraphics[width=5cm,keepaspectratio=true]
    	{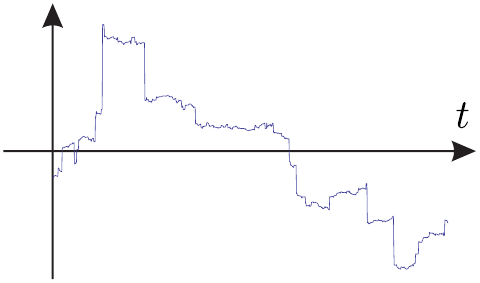} 
  	\caption{$\alpha=1$}
	\end{subfigure}%

	\begin{subfigure}[t]{0.5\textwidth}
	\centering
		\includegraphics[width=5cm,keepaspectratio=true]
			{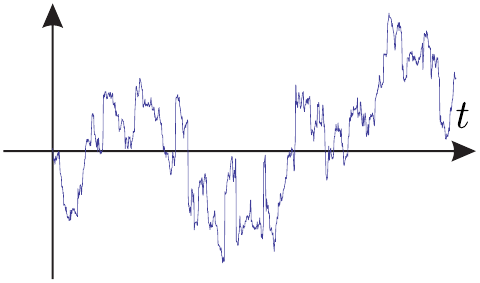}
		\caption{$\alpha=1.5$}
	\end{subfigure}%
	~
	\begin{subfigure}[t]{0.5\textwidth}
	\centering
		\includegraphics[width=5cm,keepaspectratio=true]
			{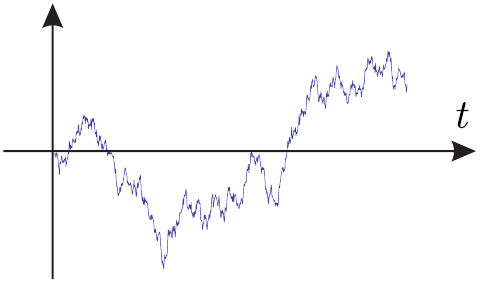}
		\caption{$\alpha=2$}
	\end{subfigure}%
\caption{Samples of the stable Levy process
$\{L^{\alpha}_t\}_{t\in[0,+\infty)}$ for diffrent values of $\alpha$.
\label{Figure: Levy samples}}
\end{figure}

\begin{sidewaystable}[h]
\begin{center}
\centering
\includegraphics[width=25cm,
		keepaspectratio]{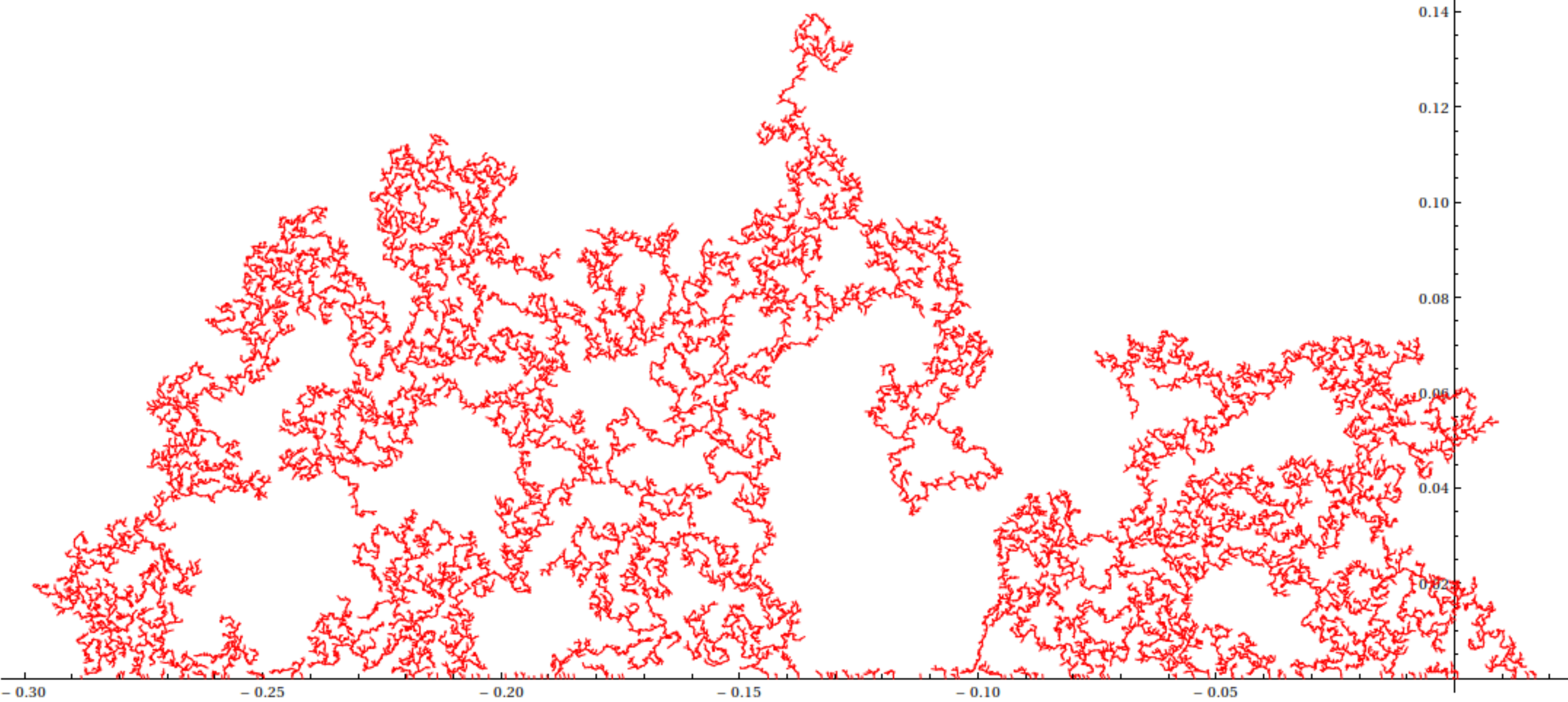}
\caption{Simulation of chordal SLE slit driven by stable Levy process
($u_t:=L_t^{\alpha}$) with the method described in the text ($3\cdot 10^4$
points).
The parameters are $\alpha=1.9$ are $T=8\cdot10^{-3}$. The slit is a sample
of a random tree.
\label{Figure: SLE_Levy_alpha=1.9_demo}}
\end{center}
\end{sidewaystable}

Figure 
\ref{Figure: Levy samples}
illustrates samples of 
$\{L^{\alpha}_t\}_{t\in[0,+\infty]}$ 
for
different values of $\alpha$. We can see that smaller values of $\alpha$
correspond to bigger but rarer jumps and more continuously looking
behaviour between jumps. When $\alpha$ tends to $2$, the number of jumps
increases, but they are smaller. In the limit we have the Brownian motion, which
has no jumps a.s.

We use the following method of numerical sampling of
$\{L^{\alpha}_t\}_{t\in[0,\infty)}$.
Represent $L_t$ as a composition 
\begin{equation}
	L_t=B_{Y_t^{\alpha}},\quad t\in[0,+\infty)
\end{equation}
of the Brownian motion
$\{B_s\}_{s\in[0,+\infty)}$ 
and an independent positive-valued strictly increasing stochastic process
$\{Y^{\alpha}_t\}_{t\in[0,\infty)}$ called 
\emph{Levy subordinator}.
\index{Levy subordinator}
The value of $Y^{\alpha}_t$ may be sampled with the aid of the formula
\begin{equation}
	Y^{\alpha}_t = 
	\frac{t^{\frac{2}{\alpha}} \sin\left((x+\frac{\pi}{2})\frac{\alpha}{2}\right)  
		\left( 
			\frac{\cos\left( x - \frac{\pi}{2})\frac{\alpha}{2}  \right) }{y} 
		\right)
		^{\frac{2}{\alpha}-1}	
	}
	{\cos^{\frac{2}{\alpha}}(x)},\quad t\in[0,+\infty),
	\label{Formula: Y_t = ...}
\end{equation} 
where $x$ is a random variable homogeneously distributed on the interval 
$[-\pi/2,\pi/2]$ 
and 
$y$ 
is an independent positive random variable with the probability density function 
$p_y=e^{-y}$, $y\geq 0$. 
It can be shown
\footnote{Unfortunately the author lost the reference to the book the formula 
\eqref{Formula: Y_t = ...}
is from.}
that a third independent Gaussian random variable with variance 
$Y_t$ 
has characteristic function 
$e^{-s|\theta|^{\alpha}}$. 
Furthermore, the process 
$\{Y_t\}_{t\in[0,+\infty)}$ 
defined by
\eqref{Formula: Y_t = ...}
is time-homogeneous. 

For the L\"owner slits simulation we also need a method to sample 
the Levy bridge. It is a stable Levy process, conditioned 
$L^{\alpha}_T=X$ 
for some fixed 
$X\in\mathbb{R}$ 
and $T>0$, and  denoted by
$\{L^{\alpha}_{t;T,X}\}_{t\in[0,\,t)}$. 
This can be done by the following method. Let
$\{B_{t;T,X}\}_{t\in[0,\,t]}$
be the Brownian bridge, namely, 
the Brownian motion conditioned $B_T=X$. Then
\begin{equation}
	L^{\alpha}_{t;T,X} =
	B_{Y^{\alpha}_t;Y^{\alpha}_t+\tilde Y^{\alpha}_{T-t},X},\quad t\in[0,\,t],
\end{equation}
where $Y^{\alpha}_t$ and $\tilde Y^{\alpha}_{T-t}$ are two independently sampled
subordinators as above.

\subsection{Simulation method for Levy process}
\label{Section: simulation method for Levy process}

Due to the strong Markov property of 
$\{L_t^{\alpha}\}_{t\in[0,+\infty)}$ 
the stochastic process $G_t$ driven by $u_t=L_t$ is also strong Markov by the
same argumentation as in Section 
\ref{Section: Domain Markov property and conformal invariance of random laws on
planar curves}.
Thus, we can use the same representation 
\eqref{Formula: G^N = G_n circ G_n-1 ... G_1}
as before. 

Each jump of the driving function corresponds to a new brunch of the tree slit.
Hence, instead of connecting of the points
$\{\bar\gamma_{n,a}^N\}_{n=1,2,\dotso}$ 
by straight line segments, we apply the following method.
We define a second sort of joint points: 
\begin{equation}\begin{split}
	&\bar \gamma_{n,b}^N:=
	\bar G_1^{-1} \circ \bar G_2^{-1} \circ \dotso \circ \bar G_{n}^{-1} 
	\circ H^{-1}_{t_{n+1}-t_{n}}[\delta] (a),\\
	&n=0,1,2,\dotso N-1
	\label{Formula: gamma^N_n = G_1 G_2 ... G_n Levy}
\end{split}\end{equation}
denoted with the index `$b$', see also the Fig.
\ref{Figure: different approxiamtions}.
Let 
\begin{equation}
	\bar l_n^N:=\{
	\bar G_1^{-1} \circ \bar G_2^{-1} \circ \dotso \circ \bar G_{n}^{-1} 
	\circ H^{-1}_{s}[\delta] (a)\in\Dc,\quad s\in[0,\,t_{n+1}-t_{n}]\}
\end{equation}
be an arc that connects  
$\bar \gamma_{n,a}^N$
and
$\bar \gamma_{n,b}^N$.
We approximate the tree hull 
$\K_t$
by the union 
$\bar \K_t^N:=\bigcup\limits_{n=0,1,\dotso,N} \bar l_n^N$.
If we apply this for a continuous driving function, than the end 
$\bar \gamma_{n,b}^N$ 
of each
arc $\bar l_n^N$ would be close to the beginning point
$\bar \gamma_{n+1,a}^N$
of the next arc
$\bar l_n^N$. 
Thus, the tree $\bar \K_t^N$ would tend to a simple curve. 
On the other hand, if the driving function has a jump $\Delta u$ at $t=t'$,
$t'\in(t_{n},\,t_{n+1})$ 
the points 
$\bar \gamma_{n+1,a}^N$
and
$\bar \gamma_{n+1,b}^N$
do not tend to each other because the distance between them is defined by the
map $H_{B_{t_{n+1}}-B_{t_n}}[\sigma]$
that tends to 
$H_{\Delta u}[\sigma]$,
when
$t_{n}\map t'-$
and
$t_{n+1}\map t'+$.

We can force to extend the method considered in the previous section to this
situation by the substituting the distance 
$d(\bar\gamma_{n,a}^N,\bar\gamma_{n-1,a}^N)$
by the distance
$d(\bar\gamma_{n,b}^N,\bar\gamma_{n,a}^N)$.
However, it turns out that $\bar \K_t^N$ is not close to $\K_t$ in this case.
This happens for the following reason. The method of sampling of the stable Levi
process 
$\{L_t^{\alpha}\}_{t\in[0,+\infty)}$ 
allows to know the value of 
$L_t^{\alpha}$ 
in some finite number of points, but not the positions of jumps. We obtain each
visible branch of $\bar \K_t^N$
just because of a relatively high value of 
$L_{t_{n+1}^{\alpha}}-L_{t_{n}^{\alpha}}$
with respect to
$t_{n+1}-t_{n}$.
If the length of the arc  
$\bar l_{n+1}^N$
after the `jump' is not too big or small, the program continues the
simulation
for $t>t_{n+1}$.
However, it may occur, and numerical tests confirm this, that 
the behaviour of 
$L_t^{\alpha}$
between
$t_{n+1}$
and
$t_n$
is important. Namely, the jump between 
$t_{n+1}$
and
$t_n$
may consists of two jumps and a small time interval between them may correspond
to a numerically big subtree of $\K_t$ that is lost. In the case of Brownian
motion the probability of such a phenomenon was suppressed due to the continuity
of the driving function (absence of jumps). In case of the Levy process, a
numerical test showed that this regularly happens. In other words, the method
of simulation proposed above does not give all the branches, but only some of
them.

Nevertheless, we force this method. The result for chordal SLE and 
$\alpha=1.9$ is presented in fig.
\ref{Figure: SLE_Levy_alpha=1.9_demo}.
The application for other types of $(\delta,\sigma)$-SLE in the unit disk chart 
does not give the expected result with the characteristic pattern because too
many branches of the tree are lost. It would be promising to modify the
discussed method in order to solve this problem.

\chapter{($\delta,\sigma$)-L\"owner equation from the algebraic point of view}
\label{Chapter: Representation theory approach}

\section*{Introduction}

In this chapter, we study the $(\delta,\sigma)$-L\"owner equations from the
point of view of the Lie algebras they are related to.
In the first section, we reconsider the ($\delta,\sigma$)-SLE classification
problem. The second one is dedicated to a generalization of
the results of Roland Friedrich, Michel Bauer, and Denis Bernard
about the SLE/CFT relation from the classical SLEs to
the general case of ($\delta,\sigma$)-SLE. 

\section{Algebraic classification and normalization}
\label{Section: Classification and normalization from algebraic point of view}

In Section
\ref{Section: Equivalence and normalization of slit L chains},
we explained what we mean by essentially different L\"owner chains and
introduced normalization conditions in terms of coefficients 
$\{\delta_i\}_{i=-2,-1,0,1}$ 
and 
$\{\sigma_i\}_{i=-1,0,1}$.
Here, we study the family of essentially different 
$(\delta,\sigma)$-L\"owner equations from a different point of view. We avoid
using the basis
\eqref{Formula: ell_n^H = ...}
and parametrization 
\eqref{Formula: delta and sigma for Lowner chain}.
Instead, we consider purely algebraic properties of vector fields $\delta$
and $\sigma$ and their possible linear combinations and commutators. 
Such objects are invariant with respect to the transforms $\mathscr{R}$ and
$\mathscr{S}$ by the construction.
 
Define a commutator of holomorphic vector two fields $v$ and $w$ by
\begin{equation}
	[v,w]^{\psi}(z):= 
	v^{\psi}(z)\left(w^{\psi}\right)'(z) - w^{\psi}(z)\left(v^{\psi}\right)'(z)
	\label{Formula: [v,w] := ...}
\end{equation}
It is straightforward to check that the right-hand side transforms as a
vector field, according to the rule 
\eqref{Formula: tilde v = 1/dtau v(tau)}. Thus, we assume that $[v,w]$ is also a
vector field and the definition above is chart independent. 
For the basis vectors
\eqref{Formula: ell_n^H = ...}
this commutator takes the from
\begin{equation}
	[\ell_n,\ell_m] = (m-n) \ell_{n+m}.
	\label{Formula: [l_n,l_m] = (m-n) l_n+m}
\end{equation}

Define an infinite dimensional Lie algebra of holomorphic vector fields on
$\Dc$ that are tangent at the boundary except the point $a\in\de\Dc$, where a
finite order pole is allowed.
The algebra is a real linear space	
\begin{equation}
	\mathcal{U}_{\leq 1} := \mathrm{span}\{\ell_{n}~|~n=1,0,-1,-2,\dotso\}
	\label{Formula: Witt_<=1 := ...}
\end{equation}
equipped with the Lie operation $[\cdot,\cdot]$.
The algebra $N_{\leq 1}$ is a subalgebra of a more frequently used Witt algebra
\begin{equation}
	\mathrm{Witt} := \mathrm{span}\{\ell_{n}~|~ n=\dotso,-2,1,0,-1,-2,\dotso\}.
	\label{Formula: Witt := ...}
\end{equation}

Consider the Lie hull 
\begin{equation}
	\mathcal{U}[\delta,\sigma]
	:=\mathrm{span}\{\delta,\sigma,[\delta,\sigma],[[\delta,\sigma],\sigma],
	\dotso \},
\end{equation}
where $\delta$ and $\sigma$ are given by 
\eqref{Formula: delta and sigma for Lowner chain}.
It is a subalgebra of $\mathcal{U}_{\leq 1}$. 
Since we only consider such vector fields that $\delta_{-2}\neq 0$ and
$\sigma_{-1}\neq 0$, the Lie algebra $\mathcal{U}[\delta,\sigma]$ is always
infinite dimensional. For example, in the chordal $\delta$ and $\sigma$ case,
see Section \ref{Section: Chordal Loewner equation}, we have
\begin{equation}
	\mathcal{U}[2\ell_{-2},\ell_{-1}] 
	= \mathrm{span}\{\ell_{n} ~|~ n=-1,-2,\dotso\}.
\end{equation}

It is remarkable that the transformations
$\mathscr{V}$, $\mathscr{T}$ 
and 
$\mathscr{D}$ 
act on 
$\delta$ 
and 
$\sigma$
linearly and preserve the 2-dimentional subspace
$\mathrm{span}\{\delta,\sigma\}$. 
In particular, they leave the Lie hull 
$\mathcal{U}[\delta,\sigma]$
invariant. The transformations 
$\mathscr{R}$ 
and 
$\mathscr{S}$ 
do change
$\mathrm{span}\{\delta,\sigma\}$ 
and  
$\mathcal{U}[\delta,\sigma]$ 
(as a subset of 
$\mathcal{U}_{\leq1}$), 
but preserve the Lie algebra structure. Namely, the transforms
$\mathscr{R}$ 
and 
$\mathscr{S}$ 
induce insomorphisms of subalgebras of 
$\mathcal{U}_{\leq 1}$. 
Thus, in particular, essentially equivalent 
$(\delta,\sigma)$-SLEs 
correspond to isomorphic Lie algebras.

We also mention that the algebra corresponding to the chordal
L\"owner equation is represented by vector fields with a second order zero at
the boundary point; the Lie algebra corresponding to the radial L\"owner
equations consists of vector fields vanishing at an interior point of the disk;
and the Lie algebra corresponding to the dipolar L\"owner equation consists of
vector fields vanishing at two boundary points, see also the corresponding
sections in Chapter 
\ref{Chapter: Classical cases}.
In other words, the Lie algebras
corresponding to these essentially different L\"owner equations are not
isomorphic. It is a more difficult question whether or not the Lie hulls are
isomorphic as algebras for essentially different equations.
The investigation how the algebraic properties of $\delta$ and $\sigma$
influence the geometric properties of $(\delta,\sigma)$-L\"owner chains seems
very promising. 

A possible way to classify the L\"owner equations is to consider the following
identity, which is the simplest nontrivial commutation relation between
$\delta$ and $\sigma$.
\begin{equation}
	[[\delta,\sigma],\delta] = 
	x_5 [[[\delta,\sigma],\sigma],\sigma] +
	x_4 [[\delta,\sigma],\sigma] +
	x_3 [\delta,\sigma] \\+ 
	x_2 \delta +
	x_1 \sigma + 
	w, 
	\label{Formula: [[delta,sigma],delta] = ...}
\end{equation}	
where $w$ is a holomorphic vector field vanishing at the source point $a$.
The coefficients $x_i$, $i=1,2,3,4,5$ are obviously invariant with respect to
the transformations $\mathscr{R}$ and $\mathscr{S}$.
With the aid of the transforms $\mathscr{T}$ and $\mathscr{D}$ it is always
possible to fix
\begin{equation}
	x_5 = \frac13, \quad x_4=0
	\label{Formula: x_5 = 1/3, x_4 = 0}
\end{equation}
($x_5 = -\frac13$ for the reverse case).
The transform 
\begin{equation}
	\mathscr{T}_c^2 \circ \mathscr{V}_{c},\quad c\in \mathbb{R}\setminus\{0\}
	\label{Formula: T_c^2 circ V_c}
\end{equation}
keeps the above conditions unchanged. On other parameters it acts as 
\begin{equation}\begin{split}
	&x_3\mapsto c^2 x_3,\\
	&x_2\mapsto c^3 x_2,\\
	&x_1\mapsto c^4 x_1,\\
	\label{Formula: x_n |-> c^m x_n}
\end{split}\end{equation}
The collection of three real parameters
$\{x_3,x_2,x_1\}$ up to the transform 
\eqref{Formula: x_n |-> c^m x_n},
can be identified with the collection of all essentially different
$(\delta,\sigma)$-L\"owner chains. 
For example, in the normalization 
\ref{Formula: delta_-2 = pm2}
and
\ref{Formula: sigma_-1 = -1},
for the forward dipolar equation $x_3=8/3$
and $x_2=x_1=0$ and for the forward radial case $x_3=-8/3$ and $x_2=x_1=0$. 
The value $8/3$ can be replaced by an arbitrary positive constant due to
\eqref{Formula: x_n |-> c^m x_n}.
The case considered in 
\cite{Ivanov2012a} 
corresponds to $x_3=32/3$, $x_2=0$, and $x_1=-60$ for the normalization
\eqref{Formula: x_5 = 1/3, x_4 = 0}.

The chordal case is special because $x_1=x_2=x_3=0$. All other cases
can be understood as points on a 2-dimentional real manifold obtained as a
quotient of $\{x_3,x_2,x_1\}/$ by the transform 
\eqref{Formula: x_n |-> c^m x_n}, 
which is homeomorphic to a disk.
The dipolar, radial, and  
\cite{Ivanov2012a}
cases correspond to different points
on the boundary of this manifold.

Essential difference in the stochastic case can be studied analogously. 
The investigation into how the algebraic properties of $\delta$ and $\sigma$
influence the random law of the $(\delta,\sigma)$-SLE seems to be very
promising; in particular, in effects on the properties of the highest weight
representation of $\mathcal{U}[\delta,\sigma]$, which plays an important
role in the SLE/CFT connection discussed in 
Section \ref{Section: Coupling between SLE and GFF}.

We use the transform 
\eqref{Formula: T_c circ V_s^1/2}
to fix 
\begin{equation}
	x_5 = \frac{1}{3\kappa},\quad 
	x_4 = \frac{\nu}{\kappa^{\frac12}}
\end{equation}
according to definitions
\eqref{Formula: kappa := ...}
and
\eqref{Formula: nu := ...}.
The remaning part of the classification of essentially different 
$(\delta,\sigma)$-SLEs is the same as in Section 
\ref{Section: The str of the family of essentially diff slit L chains}.
We just remark that there is one more special  property of the chordal case: the
chordal stochastic equation is the only one invariant with respect to
$\mathscr{P}$.

\section{($\delta,\sigma$)-L\"owner equation and representation theory}
\label{Section: Loewner equation and representation theory}

In the theory of the Lie group-algebra correspondence and their
representations (see, e.g. 
\cite{Knapp1996}),
the following construction is considered.
Let $\mathcal{M}$ be a manifold and 
$\{v^i\}_{i=1,2,\dotso n}$ 
be a collection of vector fields on it such that the initial value problem
\begin{equation}
 \dot  H_t[v^i] = v^i \circ G_t,\quad G_0=\id 
\end{equation}
induces flows of automorphisms 
$H_t[v^i]:\mathcal{M}\map\mathcal{M}$ 
for
$t\in(-\infty,+\infty)$. 
Consider the Lie algebra 
$\mathcal{U}\{v^1,v^2,\dotso v^n\}$ 
induced by the collection of vector fields 
$\{v^i\}_{i=1,2,\dotso n}$ 
with respect to the commutator $[\cdot,\cdot]$. 
The linear space of this algebra consists of all linear combinations of
$v^1,v^2,\dotso v^n$ and all possible commutators. 
It is a subspace of the space of all vector fields on
$\mathcal{M}$.
Consider also the Lie group generated by all automorphisms $H_t[v^i]$, 
$t\in(-\infty,+\infty)$ with respect to composition. 
The Lie group-algebra theory establishes a connection between this group and
algebra. As a linear space,  
$\mathrm{U}\{v^1,v^2,\dotso v^n\}$
can be finite or infinite dimensional.

The Lie algebra $\mathcal{U}[\delta,\sigma]$ and the 
semigroup of endomorphisms or inverse endomorphisms 
$\mathscr{G}[\delta,\sigma]$
on $\Dc$ 
introduced in Chapter
\ref{Chapter: Slit Loewner equation and its stochastic version}
can be considered in the spirit of Lie theory. The author is not aware of any
Lie semi-group analogous to this theory, and we avoid the construction of such a
theory here.

The Lie algebra 
$\mathcal{U}_{\leq 1}$ 
(see \eqref{Formula: Witt_<=1 := ...}) 
can be associated with the semi-group $\mathcal{G}$ of all conformal
endomorphisms of $\Dc$. We notice that any semigroup
$\mathcal{G}[\delta,\sigma]$ is a subsemigroup of $\mathcal{G}$, as
any algebra 
$\mathcal{U}[\delta,\sigma]$ 
is a subalgebra of 
$\mathcal{U}_{\leq 1}$. 
Just as in the Lie group unitary representation theory
we can expect that for a Lie algebra representation there is a corresponding
group representation connected with the exponential map. 

%
%
%

We notice that the algebra $\mathcal{U}_{\leq 1}$ is a
subalgebra of the \emph{Virasoro algebra}. On the other hand, a Virasoro algebra
representation is one of the key ingredients of the CFT. This is one of two
approaches considered in this monograph to see the SLE/CFT relation.

A peculiar property of the classical cases is that the map $G_t$ has fixed
points. One can consider the Taylor expansion of $G_t^{\psi}(z)$ at these
points, and define a group of germs instead of the semigroup. Michel Bauer and Denis Bernard used
this in their series of papers (see, for example, \cite{Bauer2004b})
and considered an infinite dimensional group representation, see also 
\cite{Kytola2007}. 
We generalize this construction slightly by the consideration of non
classical SLEs and the semigroups.

Consider the universal enveloping algebra 
(see e.g., \cite{Knapp1996})
$\widehat{\mathcal{U}}_{\leq 1}$ 
of 
$\mathcal{U}_{\leq 1}$. 
According to the Poincaré–Birkhoff–Witt theorem
it is a span of finite formal ordered
products of the vectors in $\mathcal{U}_{\leq 1}$
\begin{equation}\begin{split}
	&\mathcal{U}_{\leq 1} = 
	\mathrm{span} \{
	\ell_{-N}^{n_{-N}} \dotso \hat\ell_{-2}^{n_{-2}} \hat\ell_{-1}^{n_{-1}}
	\hat\ell_{0}^{n_{0}} \hat\ell_{1}^{n_{1}}
	~|\\&|~ 
	n_1,n_0,n_{-1},n_{-2},\dotso n_{-N+1}= 0,1,2,\dotso
	,\quad n_{-N}=1,2,\dotso
	,\quad N=0,1,2,\dotso\},
	\label{Formula: U_<=1 = ...}
\end{split}\end{equation}
where each 
$\hat \ell_n\in\widehat{\mathcal{U}}_{\leq 1}$ 
corresponds to a basis vector of 
$\mathcal{U}_{\leq 1}$. 
We use the hat 
` $\widehat{~}$ ' 
to distinguish vector fields on $\Dc$ and elements of the universal
enveloping algebra as well as for the corresponding algebras.
Following physics terminology we call elements of the universal
enveloping algebra \emph{operators}
\index{operator}
because they correspond to linear maps on $\widehat{\mathcal{U}}$.
The formal product of operators
$\hat \ell_n$ 
possesses the property
\begin{equation}
	\hat\ell_n \hat\ell_m - \hat\ell_m \hat\ell_n = (m-n) \hat\ell_{n+m} 
\end{equation}
due to 
\eqref{Formula: [l_n,l_m] = (m-n) l_n+m}. 

Our purpose is to construct a representation of the semigroup
$\mathscr{G}[\delta,\sigma]$. To this end we have to consider infinite linear
combinations in
\eqref{Formula: U_<=1 = ...},
but not finite as above. We denote by 
$\overline{\widehat{\mathcal{U}}}_{\leq 1}$
the linear space of formal infinite linear combinations of the basis vectors
from $\widehat{\mathcal{U}}_{\leq 1}$. However,  
the algebraic structure of 
$\widehat{\mathcal{U}}_{\leq 1}$
cannot be extended to 
$\overline{\widehat{\mathcal{U}}}_{\leq 1}$ 
because infinite divergent sums appears near basis vectors after
multiplications. For example, the square of
\begin{equation}
	\hat\ell_{-1}\hat\ell_{1} + \hat\ell_{-1}^2\hat\ell_{1}^2 +
	\hat\ell_{-1}^3\hat\ell_{1}^3 +
	\dotso\in\overline{\widehat{\mathcal{U}}}_{\leq 1}
\end{equation}
is not well defined because, in the formal series for the square, the
coefficient near $\hat \ell_0$
is an infinite divergent sum.

However, in some cases, like 
$\mathrm{Lie}\{\ell_{-2},\ell_{-1}\}$,
it is possible due to the grading property.
Let 
$\widehat{\mathcal{U}}_{\leq 1}^n$, 
$n\in\mathbb{Z}$ 
be a subset of 
$\widehat{\mathcal{U}}_{\leq 1}$
defined by
\begin{equation}\begin{split}
 	\widehat{\mathcal{U}}_{\leq 1}^n :=& 
	\mathrm{span} \{
	\ell_{-N}^{n_{-N}}\dotso \hat\ell_{-2}^{n_{-2}} \hat\ell_{-1}^{n_{-1}}
	\hat\ell_{0}^{n_{0}} \hat\ell_{1}^{n_{1}}~|\\
	|& n_1,n_0,n_{-1},n_{-2},\dotso,n_{-N} = 0,1,2,\dotso
	,\quad n_{-N}=1,2,\dotso
	\wedge \\ \wedge &
	n_1+n_0+n_{-1}+n_{-2}+\dotso+n_{-N} = n,\quad
	N=0,1,2,\dotso
	\}.
	\label{Formula: hat U^n_<=1 := ...}
\end{split}\end{equation}
Thus,
\begin{equation}
	\widehat{\mathcal{U}}_{\leq 1} 
	= \bigcup\limits_{n\in\mathbb{Z}} \widehat{\mathcal{U}}_{\leq 1}^n
\end{equation}
and
\begin{equation}
	\left[\widehat{\mathcal{U}}_{\leq 1}^n
	,\widehat{\mathcal{U}}_{\leq 1}^m\right] 
	= \widehat{\mathcal{U}}_{\leq 1}^{n+m},\quad n,m\in\mathbb{Z}.
	\label{Formula: [U,U] = U}
\end{equation}

We remark now that the restriction of 
$\widehat{\mathcal{U}}_{\leq 1}^n$ 
to
$\widehat{\mathcal{U}}[\ell_{-2},\ell_{-1}]$
is a finite dimensional space for 
$n \leq-1$ and a zero dimensional space for $n\geq 0$. Hence the product in   
$\overline{\widehat{\mathcal{U}}}[\ell_{-2},\ell_{-1}]$
(formal infinite linear combinations of basis vectors from
$\widehat{\mathcal{U}}[\ell_{-2},\ell_{-1}]$)
is well-defined because each coefficient in the product is given by an at most 
finite sum.


This case coincides with the chordal case because
$\mathcal{U}[\ell_{-2},\ell_{-1}]=\mathcal{U}[\delta_c,\sigma_c]$,
see Section 
\ref{Section: Chordal Loewner equation}.
The grading property can be generalized for the dipolar and the radial cases as
well. To this end we can consider the basis of 
$\mathcal{U}[\delta_d,\sigma_d]$ 
for the dipolar case, see Section 
\ref{Section: Dipolar Loewner equation},
defined by
\begin{equation}
	\ell_{d,n} := \ell_n - \ell_{n+2},\quad n=-1,-2,-3,\dotso,
\end{equation}
 and use the decomposition
\begin{equation}\begin{split}
	&\widehat{\mathcal{U}}[\delta_d,\sigma_d] = 
	\bigcup\limits_{n=-1,-2,\dotso} 
	\widehat{\mathcal{U}}^n_d
\end{split}\end{equation}
into finite dimensional subspaces 
\begin{equation}\begin{split}
 	\widehat{\mathcal{U}}_{d}^n :=& 
	\mathrm{span} \{
	\ell_{d,-N}^{n_{-N}}\dotso \hat \ell_{d,-2}^{n_{-2}}
	\hat \ell_{d,-1}^{n_{-1}} \hat \ell_{d,0}^{n_{0}}
	\hat \ell_{d,1}^{n_{1}}~|\\
	|& n_1,n_0,n_{-1},n_{-2},\dotso,n_{-N} = 0,1,2
	,\dotso \wedge n_1+n_0+n_{-1}+n_{-2}+\dotso+n_{-N} = n,\\
	& N=0,1,2,\dotso
	\}.
	\label{Formula: hat U^n_<=1 := ...}
\end{split}\end{equation}
They satisfy a weaker version of 
\eqref{Formula: [U,U] = U},
namely, 
\begin{equation}\begin{split}
	\left[\widehat{\mathcal{U}}_d^n,\widehat{\mathcal{U}}_d^m \right]\subset
	\bigcup\limits_{i=-1,-2,\dotso,n+m} 
	\widehat{\mathcal{U}}^i_d.
\end{split}\end{equation}
For the radial case, see Section
\ref{Section: Radial Loewner equation},
we can apply an analogous method and obtain the basis
\begin{equation}
	\ell_{r,n} := \ell_n + \ell_{n+2},\quad n=-1,-2,-3,\dotso.
\end{equation}
Let the operators $\hat \delta$ and $\hat \sigma$ be the images of vector fields
$\delta$ and $\sigma$ in $\widehat{\mathcal{U}}_{\leq 1}$:
\begin{equation}\begin{split}
	&\hat \delta := \delta_{-2} \hat \ell_{-2} + \delta_{-1} \hat \ell_{-1} +
	\delta_{0} \hat \ell_{0} + \delta_{1} \hat \ell_{1},\\ 
	&\hat \sigma := \sigma_{-1} \hat \ell_{-1} +
	\sigma_{0} \hat \ell_{0} + \sigma_{1} \hat \ell_{1} 
	\label{Formula: hat delta = ... , hat sigma = ...}
\end{split}\end{equation}
according to 
\eqref{Formula: delta and sigma for Lowner chain}.
Consider the initial value problem 
\begin{equation}
	\dot {\hat {G}}_t = 
	{\hat {G}}_t \hat \delta + {\hat {G}}_t \hat \sigma \dot u_t
	,\quad \hat G_0 = \hat I, \quad t\in[0,+\infty)
	,\quad \hat G_t\in \overline{\widehat{\mathcal{U}}}[\delta,\sigma]
	\label{Formula: dot hat G_t = hat G hat sigma + hat G hat delta dot u_t}
\end{equation}
for some given $\hat \delta$ and $\hat \sigma$ and
for any continuously diffrentiable function $u_t$. We denote by $\hat I$ the
identity in $\widehat{\mathcal{U}}$. The vectors $\hat \delta$ and $\hat \sigma$
are placed on the right-hand side of $\hat G_t$ to be in agreement with 
\cite{Bauer2004b}
and traditional notation from quantum physics.
The grading property ensures the existence and uniqueness of the solution
because the system can be reduced to a countable collection of finite linear
subsystems. For general ($\delta,\sigma$)-SLEs the existence is a more difficult
problem. We assume the following conjecture henceforth.
\begin{conjecture}
For any $\delta$ and $\sigma$ as in 
\eqref{Formula: delta and sigma for Lowner chain}
and any continuously differentiable function 
$\{u_t\}_{t\in[0,+\infty)}$ 
there exist a unique solution 
\eqref{Formula: dot hat G_t = hat G hat sigma + hat G hat delta dot u_t}.
\label{Conjecture: hat G_t existence}
\end{conjecture}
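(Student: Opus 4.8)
The plan is to read \eqref{Formula: dot hat G_t = hat G hat sigma + hat G hat delta dot u_t} as a linear, infinite‑dimensional Cauchy problem and to construct its solution as a time‑ordered exponential, the only real work being to make the formal series converge. Since $u_t$ is $C^1$, the coefficient $\hat A_t:=\hat\delta+\hat\sigma\,\dot u_t$ is a continuous path, locally bounded in $t$, taking values in the fixed four‑dimensional subspace $\mathrm{span}\{\hat\ell_{-2},\hat\ell_{-1},\hat\ell_0,\hat\ell_1\}\subset\widehat{\mathcal U}_{\leq1}$, so there are no stochastic subtleties here, only analytic ones. The formal solution is $\hat G_t=\hat I+\sum_{k\geq1}\int_{0\leq s_1\leq\dots\leq s_k\leq t}\hat A_{s_1}\hat A_{s_2}\cdots\hat A_{s_k}\,ds_1\cdots ds_k$, and every partial sum is a finite linear combination, with scalar iterated‑integral coefficients, of words in $\hat\delta$ and $\hat\sigma$; projecting onto a fixed PBW monomial $M$ gives a series over $k$ in which the simplex $\{0\leq s_1\leq\dots\leq s_k\leq t\}$ contributes the factorially small weight $t^k/k!$, while the algebraic weight — how a length‑$k$ word in the four generators contributes to $M$ via \eqref{Formula: [l_n,l_m] = (m-n) l_n+m} — may be large. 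So the whole conjecture reduces to controlling that algebraic weight.

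The mechanism that does this, and the reason the chordal case is trivial, is the degree grading of $\widehat{\mathcal U}_{\leq1}$. For chordal $\delta,\sigma$ only non‑positive degrees occur and each occurring graded piece is finite‑dimensional, so the system decouples into finitely many linear ODEs at each degree; for the dipolar and radial cases the weaker gradings recalled in the text still suffice. In the general case the $\ell_1$‑components of $\delta$ and $\sigma$ produce degree‑raising terms, there is no grading with finite pieces, and this is precisely where the displayed $\sum_k(\hat\ell_{-1}\hat\ell_1)^k$‑type divergence lives. I would therefore realise $\widehat{\mathcal U}_{\leq1}$ as operators on a suitable module: a lowest/highest‑weight module $V=V(h)$ of the Witt algebra, or, better, a bosonic free‑field (Fock) module, or a family of such modules indexed by the highest weight, chosen so that $V=\bigoplus_{N\geq0}V_N$ with $\dim V_N<\infty$, $V_N=0$ for $N<0$, with $\hat\ell_n$ mapping $V_N$ into $V_{N-n}$, and — crucially, in a free‑field realisation — with graded operator norm $\|\hat\ell_n\|_{V_N\to V_{N\mp n}}$ growing only polynomially in $N$. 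The representation then converts the algebraic weight into these operator norms: estimating the $V_M$‑component of the $k$‑th term of the time‑ordered exponential applied to a level‑$N$ vector, the number of admissible level‑change patterns is at most $4^k$, the product of graded operator norms along a path is bounded by $(C(N+k))^{dk}$, and together with $t^k/k!$ and Stirling this makes the $k$‑th term $O\big((C(t)(N+k)^d/k)^k\big)$, summable over $k$ for every $t$. Hence $\hat G_t$ exists as an operator on $V$, uniformly for $t$ in compact intervals; term‑by‑term differentiation shows it solves the equation, and since every partial sum is a word‑combination in $\hat\delta,\hat\sigma$ its limit lies in $\overline{\widehat{\mathcal U}}[\delta,\sigma]$, working over the separating family to read off and check convergence of the abstract PBW coefficients. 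Uniqueness follows because the difference of two solutions solves the same homogeneous equation with zero initial datum, and a Gronwall estimate carried out graded‑piece by graded‑piece — or, more structurally, the abstract Ovsyannikov / Cauchy--Kowalevski argument on the scale of Banach spaces $\{V_{\leq N}\}$ — forces it to vanish.

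The hard part will be producing the representation. A single Verma module is almost certainly not faithful, and in fact not every element of $\overline{\widehat{\mathcal U}}_{\leq1}$ even acts on a module — for instance $\sum_c\hat\ell_0^c$ is a legitimate element of $\overline{\widehat{\mathcal U}}_{\leq1}$ that diverges on a highest‑weight vector — so one must work with the subspace of ``tame'' elements that do act, verify that the constructed $\hat G_t$ is tame, and use a family of modules rich enough to separate tame operators and let one recover their monomial coefficients. Moreover the polynomial‑norm requirement genuinely forces a free‑field realisation: making the PBW basis orthonormal gives only sub‑exponential, $e^{c\sqrt N}$, operator growth, which is not beaten by $t^k/k!$, whereas the Sugawara/free‑field realisation gives polynomial growth at the price of a nonzero central charge, so one must check that the central‑extension‑free Witt action stays faithful on the chosen family. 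The divergence example in the text is the real warning sign: no purely combinatorial absolute bound on the algebraic weights can close the argument, so the required cancellation has to be supplied by the representation itself, and pinning down a family that simultaneously provides it and is faithful is, I expect, the crux of the proof.
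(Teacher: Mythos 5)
First, be aware that the paper does not prove this statement: it is stated explicitly as a conjecture, is assumed henceforth, and is listed again in the chapter's concluding remarks as an open problem. The only argument the text supplies is the grading argument that settles the chordal (and, in weakened form, the dipolar and radial) cases. So there is no proof of the general case to compare yours against; your proposal has to stand on its own, and your diagnosis of where the difficulty sits — the degree-raising $\ell_1$-components destroy the finite-dimensional grading, and the paper's own $\sum_k(\hat\ell_{-1}\hat\ell_1)^k$ example shows no purely combinatorial bound on the word weights can work — agrees with the paper's.

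As a self-contained argument, however, it does not close. The Dyson-series setup and the reduction to controlling the algebraic weight of length-$k$ words in the four generators are sound, but every decisive step is deferred: (i) the existence of a graded module, or family of modules, on which $\hat\ell_{-2},\dots,\hat\ell_1$ act with \emph{polynomially} bounded graded norms is asserted rather than constructed, and this bound is precisely what the convergence of the series hinges on; (ii) even granting such a module, operator convergence of $\hat G_t$ on $V$ is not what the conjecture asserts — the solution must lie in $\overline{\widehat{\mathcal{U}}}[\delta,\sigma]$, i.e., every abstract PBW coefficient must converge, and the passage from convergence on a separating family of modules back to coefficientwise convergence in the completed enveloping algebra is a genuine step you only gesture at; (iii) faithfulness/separation on the relevant ``tame'' subspace is likewise left open, and you yourself call (i) and (iii) the crux. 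This makes the proposal a plausible research programme rather than a proof. One small correction: your worry about the central extension is moot for this subalgebra, since the Virasoro cocycle $\tfrac{1}{12}(n^3-n)\delta_{n+m}$ vanishes identically for $n,m\le 1$ (the constraint $n+m=0$ forces $n\in\{-1,0,1\}$, where $n^3-n=0$), so any Virasoro module of any central charge restricts to a genuine $\mathcal{U}_{\leq 1}$-action; the only real question there is whether the chosen family separates tame elements.
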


If we denote the solution with the initial condition 
$\hat G_{0}=\hat I$
by $\hat G_{t}$ we obtain the same properties as for $G_{t}$ form section 
\eqref{Section: Definition and basic properties}
as well as an analogue of Proposition
\ref{Proposition: Forward - Inverse LE connection}. 
Thus, we obtained a map
\begin{equation}\begin{split}
	\{u_t\}_{t\in[0,+\infty)} \mapsto \{\hat G_t\}_{t\in[0,+\infty)}
	\label{Formula: G -> G}
\end{split}\end{equation}
analogous to 
\eqref{Formula: u_t -> G_t}.
Thereby, the product
$\hat G_t \hat {\tilde G}_{\tilde t}$
of two solutions 
$\hat G_t$ 
and 
$\hat {\tilde G}_{\tilde t}$ of 
\eqref{Formula: dot hat G_t = hat G hat sigma + hat G hat delta dot u_t}
is well-defined, and it is the solution with driving function
\begin{equation}
	\tilde {\tilde u}_s =
	\begin{cases}
		u_s
		&\mbox{if } s\in[0,t] \\
		\tilde u_{s-t} + u_t
		& \mbox{if } s\in(t,+\infty).
	\end{cases}
\end{equation}

We remark that it may be not true in general (for not calssical cases) that for
given $G_t$ in $\mathscr{G}[\delta,\sigma]$ there is a unique 
$\hat G_t$ in 
$\overline{\widehat{\mathcal{U}}}[\delta,\sigma]$, 
see the discussion in Section  
\ref{Section: General properties of Loewner chain}.
This is because the map $G_t$ depends only on the hull
$\K_t$, while the vector $\hat G_t$ 
`contains information' of how the hull was generated.
However, if $\K_t=\gamma_t$ is a simple curve, we have the representation 
\begin{equation}
	\mathscr{G}[\delta,\sigma] \map
	\overline{\widehat{\mathcal{U}}} [\delta,\sigma],\quad t\in[0,T].
\end{equation} 
of the semigroup $\mathscr{G}[\delta,\sigma]$.
 

It is straightforward to define the stochastic process 
$\{\hat G_t\}_{t\in[0,+\infty)}$ 
by
\begin{equation}
	\dS {\hat {G}}_t = 
	{\hat {G}}_t \hat \delta dt + {\hat {G}}_t \hat \sigma \dS B_t
	,\quad \hat G_0 = 1, \quad t\in[0,+\infty).
	\label{Formula: dI hat G_t = hat G hat sigma dt + hat G hat delta dS B}
\end{equation}
The existence of the solution is related to the Conjecture  
\ref{Conjecture: hat G_t existence}.

\begin{theorem}
The It\^o form for $\hat G_t$ is 
\begin{equation}
	\dI {\hat {G}}_t = 
	{\hat {G}}_t \left( 
		 \hat \delta + \frac12 \hat \sigma^2 
	\right) dt 
	+ {\hat {G}}_t \hat \sigma \dI B_t
	,\quad \hat G_0 = 1, \quad t\in[0,+\infty).
	\label{Formula: dI hat G_t = hat G hat sigma dt + hat G hat delta dS B}
\end{equation}
\end{theorem}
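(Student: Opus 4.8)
The plan is to apply the standard Stratonovich-to-It\^o conversion formula coefficientwise in the universal enveloping algebra. Recall from Appendix~\ref{Appendix: Some relations from stochastic calculus} (see also \cite[Section 4.3.6]{Gardiner1982}) that for a continuous semimartingale of the form $\dS X_t = A_t\,dt + C_t\,\dS B_t$ one has $\dI X_t = A_t\,dt + C_t\,\dI B_t + \tfrac12\,d\langle C, B\rangle_t$, where $\langle C,B\rangle_t$ denotes the quadratic covariation. The only point requiring care is that $\hat G_t$ takes values in the infinite-dimensional space $\overline{\widehat{\mathcal U}}[\delta,\sigma]$, so first I would fix a grading (as in the chordal, dipolar, or radial decompositions discussed above, or, for general $\delta$ and $\sigma$, the one granted by Conjecture~\ref{Conjecture: hat G_t existence}) and read the Stratonovich equation as the countable system of scalar equations obtained by projecting onto each basis vector $\hat\ell_{-N}^{n_{-N}}\dotsm\hat\ell_1^{n_1}$. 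Since each such projection involves only finitely many coefficients of $\hat G_t$, $\hat\delta$, and $\hat\sigma$, every manipulation below reduces to ordinary one-dimensional It\^o calculus and can afterwards be reassembled.

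Next I would identify $C_t := \hat G_t\hat\sigma$ as the integrand of the stochastic part. From the Stratonovich equation the martingale part of $d\hat G_t$ is $\hat G_t\hat\sigma\,\dI B_t$ to leading order, hence the martingale part of $dC_t = (d\hat G_t)\,\hat\sigma$ is $\hat G_t\hat\sigma^2\,\dI B_t$ (here $\hat G_t\hat\sigma^2$ means $(\hat G_t\hat\sigma)\hat\sigma$, the operators acting on the right in agreement with the convention fixed above). Therefore $d\langle C, B\rangle_t = \hat G_t\hat\sigma^2\,dt$, and the conversion formula yields
\begin{equation}
	\dI\hat G_t = \hat G_t\hat\delta\,dt + \hat G_t\hat\sigma\,\dI B_t + \tfrac12\,\hat G_t\hat\sigma^2\,dt
	= \hat G_t\Bigl(\hat\delta + \tfrac12\hat\sigma^2\Bigr)dt + \hat G_t\hat\sigma\,\dI B_t,
\end{equation}
which is the claimed identity.

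The main obstacle is purely the bookkeeping needed to legitimize the covariation computation inside $\overline{\widehat{\mathcal U}}[\delta,\sigma]$: one must check that $\hat G_t\hat\sigma^2$ is a well-defined element (i.e.\ that no coefficient becomes a divergent sum) and that the projected scalar equations genuinely close up into a solvable system, triangular with respect to the grading, so that the scalar It\^o formula may be applied termwise and then reassembled into an identity in $\overline{\widehat{\mathcal U}}[\delta,\sigma]$. For the classical chordal, dipolar and radial cases this is immediate from the grading properties established above; in general it rests on Conjecture~\ref{Conjecture: hat G_t existence}, under which the same argument goes through verbatim. Once well-definedness is granted, nothing further is needed beyond the one-line computation of $d\langle\hat G_t\hat\sigma, B\rangle_t$.
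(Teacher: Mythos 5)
Your proposal is correct and follows essentially the same route as the paper: both reduce the equation to its components in a basis of $\widehat{\mathcal{U}}[\delta,\sigma]$, observe that the diffusion coefficient is the linear map of right multiplication by $\hat\sigma$, and apply the componentwise Stratonovich-to-It\^o conversion, so that the correction term $\frac12\beta^j\partial_j\beta^i$ (equivalently your $\frac12\,d\langle \hat G_t\hat\sigma, B\rangle_t$) becomes $\frac12\hat G_t\hat\sigma^2\,dt$. The only cosmetic difference is that you phrase the correction via quadratic covariation while the paper quotes the explicit drift-correction formula \eqref{Formula: dI X_t = a(X)dt + 1/2 b(X)b'(X)dt + b(X)dB}; your remarks on well-definedness via the grading and Conjecture \ref{Conjecture: hat G_t existence} match the paper's own caveats.
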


\begin{proof}
We use that the It\^o form of a vector valued stochastic process
\eqref{Formula: dX = a X dt + b X dB}
is
\begin{equation}
	\dI X^i_t = 
	\left( \alpha^i(X_t) + \frac12 \beta^j(X_t) \frac{\de \beta^i(X_t)}{\de X_t^j}   
	\right) dt
	+ \beta^i(X_t) \dI B_t.
	\label{Formula: dI X_t = a(X)dt + 1/2 b(X)b'(X)dt + b(X)dB}
\end{equation}
Let $\{\hat U_i\}_{i=1,2,\dotso}$ 
be the basis of 
$\hat {\mathcal{U}}[\delta,\sigma]$ 
and let 
$X^i_t$ 
be the corresponding components of $\hat G_t$ such as
\begin{equation}
	\hat G_t = \sum\limits_{i=1,2,\dotso} X^i_t \hat U_i.
\end{equation}
Thus, 
\begin{equation}
	\hat G_t \hat \delta = 
	\sum\limits_{i=1,2,\dotso} \alpha^{i}(X_t) \hat	U_i,\quad 
	\hat G_t \hat \sigma = \sum\limits_{i=1,2,\dotso} \beta^{i}(X_t) \hat U_i,
\end{equation}
and the matrix
$\frac{\de}{\de X_t^j} \beta^i(X_t)$
is represented by right multiplication with $\hat \sigma$.
Thereby, we conclude 
\eqref{Formula: dI hat G_t = hat G hat sigma dt + hat G hat delta dS B}
from
\eqref{Formula: dI X_t = a(X)dt + 1/2 b(X)b'(X)dt + b(X)dB}.
\end{proof}

We define an operator in 
$\widehat{\mathcal{U}}[\delta,\sigma]$,
which is an analogue of the diffusion differential operator $\mathcal{A}$
(defined below by 
\eqref{Formula: A = L + 1/2 L^2})
for the differential equation
\eqref{Formula: Slit hol stoch flow Strat},
\begin{equation}
	\hat A := \hat \delta + \frac12 \hat \sigma^2.
	\label{Formula: hat A = hat delta + 1/2 hat sigma^2}
\end{equation}

Let now $\mathcal{V}[\delta,\sigma]$ be a representation space of the algebra 
$\mathcal{U}[\delta,\sigma]$ 
and denote by $|\rangle\in\mathcal{V}[\delta,\sigma]$ 
a vector such that
\begin{equation}
	\hat A|\rangle = 0.
	\label{Formula: A |> = 0}
\end{equation}
Then 
\begin{equation}
	\dI \hat G_t |\rangle = \hat G_t \hat \sigma |\rangle \dI B_t,
\end{equation}
which means that $\hat G_t |\rangle$ is a 
$\overline{\widehat{\mathcal{U}}}[\delta,\sigma]$-valued
local martingale. 
Now we consider a possible way to define such a space.

Let $\mathrm{Vir}$ be the Virasoro algebra, which is the only nontrivial
central extension of the Witt algebra 
\eqref{Formula: Witt := ...}. We use the basis
$\{ L_n,I\}_{n\in\mathbb{Z}}$, 
in which the Lie brackets are
\begin{equation}
	[ L_n, L_m] = (m-n) L_{n+m} + \frac{1}{12} (n^3-n) \delta_{n+m}
	 I,\quad [ L_n, I]=0
	,\quad n,m\in \mathbb{Z},
	\label{Formula: vir algebra com rules}
\end{equation}
where $\delta_{n+m}$ is the Kronecker delta ($\delta_{0}=1$ and
$\delta_{n}=0,~n\neq 0$).
We used a non-standard choice of the sign of the basis
elements $L_n$ to be compatible with 
\eqref{Formula: ell_n^H = ...}.

It is straightforward that $\mathcal{U}_{\leq 1}\subset\mathrm{Vir}$, hence 
any representation of $\mathrm{Vir}$ is also a representation of 
$\mathcal{U}_{\leq 1}$.
For the Virasoro algebra there is a well-studied
representation theory. In particular, there is an infinite dimensional highest
weight representation $\mathcal{V}_{h,c}$, for $h,c\in\mathbb{R}$
\begin{equation}\begin{split}
	&\hat L_n |\rangle = 0,\quad n=1,2,\dotso\\
	&\hat L_0 |\rangle = -h|\rangle,\\
	&\hat I |\rangle = c |\rangle.
	\label{Formula: vir highest weight rules}
\end{split}\end{equation}
Thereby, the collection
\begin{equation}\begin{split}
	&\hat L_{-N}^{n_{-N}} \dotso \hat L_{-3}^{n_{-3}} \hat L_{-2}^{n_{-2}} \hat
	L_{-1}^{n_{-1}}|\rangle,\\
	&n_{-1},n_{-2},n_{-3},\dotso,n_{-N+1} = 0,1,2,\dotso
	,\quad n_{-N} = 1,2,\dotso
	,\quad N=0,1,2,\dotso. 
\end{split}\end{equation}
is a basis of $\mathcal{V}_{h,c}$ and 
$\mathcal{V}_{h,c}\approx \mathcal{U}[\ell_{-2},\ell_{-1}]$ 
as linear spaces.  

We denote the vectors from the space dual to $\mathcal{V}_{h,c}$ by
$\langle \alpha|$. In this notation, $\langle|$ is a vector from the dual
basis that corresponds to $|\rangle$, namely,
\begin{equation}\begin{split}
	&\langle||\rangle = 1,\\
	&\langle| \hat L_{-N}^{n_{-N}} \dotso \hat L_{-3}^{n_{-3}} \hat L_{-2}^{n_{-2}}
	\hat L_{-1}^{n_{-1}}|\rangle = 0,\\ 
	&n_{-1},n_{-2},n_{-3},\dotso,n_{-N+1} = 0,1,2,\dotso
	,\quad n_{-N} = 1,2,\dotso
	,\quad N=1,2,\dotso. 
\end{split}\end{equation}

The parameters $h$ and $c$ can be chosen such that 
\begin{equation}\begin{split}
	& \hat L_n\left( 2 \hat L_{-2} \pm \frac{\kappa}{2} \hat L_{-1}^2\right)
	|\rangle = 0 ,\quad n=1,2,\dotso,
	\label{Formula: L_n(2 L_-2 + k/2 L_-1^2)> = 0}
\end{split}\end{equation}
for given $\kappa>0$. The necessary and sufficient conditions are 
\begin{equation}\begin{split}
	h = -\frac{ \pm \kappa - 6}{ \pm 2 \kappa},\quad
	c = h (\pm 3 \kappa - 8).
	\label{Formula; h = h(k)}
\end{split}\end{equation}
This can be obtained with elementary calculations using 
\eqref{Formula: vir algebra com rules}
and
\eqref{Formula: vir highest weight rules}, 
see 
\cite{Bauer2003a} 
for details.
The condition 
\eqref{Formula: L_n(2 L_-2 + k/2 L_-1^2)> = 0} 
implies that the quotient space 
\begin{equation}
	\left.\mathcal{V}_{h,c} \right|_
	{\left( 2 L_{-2} \pm \frac{\kappa}{2}	L_{-1}^2\right) |\rangle \sim 0}
	\label{Formula: The quotient rep space}
\end{equation}
is a representation of $\mathrm{Vir}$ with the property 
\eqref{Formula: A |> = 0},
if we assume the upper sign in the `$\pm$' pairs for the forward case
and the lower one for the reverse case.

If in addition 
$\hat L_{-1} |\rangle =0$,
the representation is trivial, $\hat L_n|\rangle =0$,
$n\in\mathbb{Z}$. This is why we assume
\begin{equation}
	\hat L_{-1} |\rangle \neq 0.
\end{equation}
It can be shown 
(see \cite{Kac1987}) 
that, in this case, the space 
\eqref{Formula: The quotient rep space}
is a non-trivial irreducible infinite-dimensional
representation of $\mathrm{Vir}$, and consequently, $\mathcal{U}_{\leq 1}$.

Thereby, to obtain the space of local martingales, we extend the algebra
$\mathcal{U}[\delta,\sigma]$ 
to the algebra 
$\mathrm{Vir}$ 
and consider the vector space $\mathcal{V}_{h,c}$. 
From 
\eqref{Formula: hat delta = ... , hat sigma = ...}
and the rules 
\eqref{Formula: vir algebra com rules}
and
\eqref{Formula: vir highest weight rules}
we conclude that 
\begin{equation}\begin{split}
	\hat A |\rangle =& 
	\left[
		\delta_{-2} \hat L_{-2} + \frac{\sigma_{-1}^2}{2} \hat L_{-1}^2 +
 		\left(
 			\delta_{-1} - \sigma_{-1}\sigma_{0} \left( h + \frac12 \right)
	 	\right) \hat L_{-1} 
	\right.
	+\\+&
	\left.
		h	\left(
			- \delta_0 + \sigma_{-1} \sigma_{1} + \frac12 h \sigma_0^2
	 	\right)
	\right] |\rangle
\end{split}\end{equation}
Now we use the normalization from Section
\eqref{Section: Equivalence and normalization SLE},
\eqref{Formula; h = h(k)},
and
\eqref{Formula: nu := ...}
to obtain
\begin{equation}\begin{split}
	&\hat A |\rangle 
	=\\=& 
	\left(
		\pm 2 \hat L_{-2} + \frac{\kappa}{2} \hat L_{-1}^2 
		\pm \left( \delta_{-1} + 3 \sigma_0 \right) \hat L_{-1} +
		\frac{(\kappa \mp 6 ) \left(\pm 4 \underline{\delta}_0+(\mp 6 + \kappa)
		\underline{\sigma}_0^2 + 4 \kappa \underline{\sigma}_1\right)}
			{8 \kappa } \right) |\rangle
	=\\=&
	\pm \left(
		2 \hat L_{-2} + \frac{\pm \kappa}{2} \hat L_{-1}^2 +
		\nu \hat L_{-1} +
		\frac{(\pm \kappa - 6 ) \left(4 \underline{\delta}_0+(\pm \kappa - 6)
		\underline{\sigma}_0^2 \pm 4 \kappa \underline{\sigma}_1\right)}
			{\pm 8 \kappa } \right) |\rangle
	\label{Formula: A|> = ...}
\end{split}\end{equation}
It is convenient to formally assume $\kappa\map-\kappa$ for the reverse 
($\delta,\sigma$)-SLE.

We conclude that $\hat G_t | \rangle$ is a local martingale taking values in
the quotient space 
\eqref{Formula: The quotient rep space}
for the forward and reverse cases if and only if $\nu=0$ (or equivalently, 
\eqref{Formula: delta_-1 - 3 sigma_0 = 0}). 
This property does not depend on the choice of the basis $\ell_n$, however, the
last coefficient in
\eqref{Formula: A|> = ...}
does. We have proved the following theorem

\begin{theorem}
Assume that for given $\delta$ and $\sigma$ the solution $hat G_t$ of
\ref{Formula: dI hat G_t = hat G hat sigma dt + hat G hat delta dS B} 
exists in the highest weight representation space of the Virasoro algebra. Then,  
$\hat G_t | \rangle$ is a local martingale if and only if 
\eqref{Formula; h = h(k)}
is satisfied and
$\nu=0$.
\end{theorem}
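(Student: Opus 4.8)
The plan is to convert the local-martingale property of $\hat G_t|\rangle$ into the single algebraic requirement $\hat A|\rangle=0$, where $\hat A=\hat\delta+\tfrac12\hat\sigma^2$, and then to decide when this holds in a highest-weight module of $\mathrm{Vir}$ by splitting $\hat A|\rangle$ along its $\hat L_0$-grading. Applying the It\^o form of the $\hat G_t$-equation obtained above to the highest-weight vector gives $\dI(\hat G_t|\rangle)=\hat G_t\hat A|\rangle\,dt+\hat G_t\hat\sigma|\rangle\,\dI B_t$. The stochastic-integral term is always a local martingale and the first term is of finite variation, so by uniqueness of the semimartingale decomposition $\hat G_t|\rangle$ is a local martingale precisely when $\int_0^{\cdot}\hat G_s\hat A|\rangle\,ds\equiv0$, i.e. $\hat G_s\hat A|\rangle=0$ for a.e. $s$, a.s. Since $s\mapsto\hat G_s$ is continuous with $\hat G_0=\hat I$, evaluating at $s=0$ shows this is equivalent to $\hat A|\rangle=0$ as a vector in the representation space. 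Thus the theorem reduces to: for which $h,c,\nu$ is $\hat A|\rangle=0$?

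The expression for $\hat A|\rangle$ is already at hand from the paragraphs preceding the theorem: using $[\hat\ell_n,\hat\ell_m]=(m-n)\hat\ell_{n+m}$ and the highest-weight relations $\hat L_n|\rangle=0$ $(n\ge1)$, $\hat L_0|\rangle=-h|\rangle$, $\hat I|\rangle=c|\rangle$, together with the SLE normalization and the definition of $\nu$, one obtains that $\hat A|\rangle$ is a linear combination of $\hat L_{-2}|\rangle$ and $\hat L_{-1}^2|\rangle$ (the level-$2$ part, a nonzero multiple of $v:=2\hat L_{-2}|\rangle\pm\tfrac{\kappa}{2}\hat L_{-1}^2|\rangle$, upper sign forward and lower reverse), of $\hat L_{-1}|\rangle$ (the level-$1$ part, equal to $\pm\nu\,\hat L_{-1}|\rangle$), and of $|\rangle$ (the level-$0$ part, a scalar multiple of $|\rangle$). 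These three pieces lie in distinct $\hat L_0$-eigenspaces (eigenvalues $-h-2,-h-1,-h$), which are linearly independent; hence $\hat A|\rangle=0$ if and only if each of the three graded components vanishes separately.

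Now I would analyse the three levels. Level $2$: $v=0$ is impossible in the Verma module $M(h,c)$, where $\hat L_{-2}|\rangle$ and $\hat L_{-1}^2|\rangle$ are linearly independent; so $\mathcal V_{h,c}$ must be a proper quotient of $M(h,c)$ in which $v$ is identified with $0$. Moreover $v$ must be a genuine singular vector: if it were not, then $\hat L_1v$ (or $\hat L_2v$) would be a nonzero vector of level $1$ (or $0$), so the submodule generated by $v$ would contain $\hat L_{-1}|\rangle$ (or $|\rangle$) and the quotient would collapse to the trivial representation. The condition $\hat L_nv=0$ $(n\ge1)$ is exactly the computation recalled just before the theorem, and forces $h=h(\kappa)$, $c=c(\kappa)$; in that case $\mathcal V_{h,c}$ is the quotient representation introduced above, in which the level-$2$ part of $\hat A|\rangle$ vanishes identically. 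Level $1$: $\pm\nu\,\hat L_{-1}|\rangle=0$ together with the standing hypothesis $\hat L_{-1}|\rangle\neq0$ (without which the representation is trivial) is equivalent to $\nu=0$. Level $0$: this component is a scalar multiple of $|\rangle$, so it changes $\hat G_t|\rangle$ only by a deterministic multiplicative factor — equivalently, once $h=h(\kappa)$ and $\nu=0$, that deterministic rescaling of $\hat G_t|\rangle$ is a genuine local martingale — and so, up to the normalization customary in the SLE/CFT correspondence, it imposes no further condition. Putting the three levels together yields precisely: $\hat G_t|\rangle$ is a local martingale iff $h=h(\kappa)$ (equivalently $c=c(\kappa)$, forced along the way) and $\nu=0$.

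The step I expect to be delicate is the very first reduction. Writing down $\dI(\hat G_t|\rangle)$ and arguing that its drift must vanish presupposes that $\hat G_t$ acts as a well-defined operator on $\mathcal V_{h,c}$ and that $\hat A|\rangle$ is a well-defined element there — this is the existence hypothesis built into the theorem, and in the generic case is Conjecture \ref{Conjecture: hat G_t existence}; it is guaranteed only in graded situations (the chordal case via the grading of $\widehat{\mathcal U}[\ell_{-2},\ell_{-1}]$, and the dipolar and radial cases via the analogous gradings described earlier). Once that is granted, the $\hat L_0$-grading of $\mathcal V_{h,c}$ makes the rest a finite, essentially bookkeeping computation built on the formula for $\hat A|\rangle$ already established and on the standard Virasoro null-vector identity.
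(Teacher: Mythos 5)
Your proof is correct and follows essentially the same route as the paper: reduce the local-martingale property to $\hat A|\rangle=0$, expand $\hat A|\rangle$ in the graded pieces $\hat L_{-2}|\rangle$, $\hat L_{-1}^2|\rangle$, $\hat L_{-1}|\rangle$, $|\rangle$, kill the level-two part by passing to the quotient by the singular vector (which forces \eqref{Formula; h = h(k)}) and the level-one part by $\nu=0$. You are in fact somewhat more explicit than the paper on the ``only if'' direction (uniqueness of the semimartingale decomposition plus linear independence of the $\hat L_0$-eigenspaces) and on the leftover level-zero scalar term, which the paper likewise treats as imposing no further condition.
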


\section*{Conclusions and Perspectives}

\begin{enumerate}[1.]
\item We introduced the parameters $x_1$, $x_2$, and $x_3$ to classify
$(\delta,\sigma)$-SLEs that are not essentially equivalent and the corresponding
algebras $\mathcal{U}[\delta,\sigma]$.
Does a better way to parametrize exist and what is possible to say about the
properties of a $(\delta,\sigma)$-SLE, given 
$\mathcal{U}[\delta,\sigma]$, 
and vice versa?

\item It is important to prove Conjecture 
\ref{Conjecture: hat G_t existence}

\item We showed that the SLE/CFT connection in the sense of the existence of the
related singular highest weight Virasoro algebra representation is only possible
if $\nu=0$.
On the other hand, as we see in Chapter 
\ref{Chapter: Coupling},
there is no difficulty to relate SLEs with $\nu\neq 0$ to the Gaussian free
field. It is essential to obtain a possible generalization of the above approach
to include the case $\nu\neq0$ as well as to understand the relation between
the Gaussian free field and representation of the Virasoro algebra. 
For example, a more general approach from
\cite{Kytola2009}
may help.
\end{enumerate}

\chapter{Gaussian Free Field or \\ probabilistic approach to Euclidean
Conformal Field Theory}
\label{Chapter: CFT}

\section*{Introduction}
\addcontentsline{toc}{section}{Introduction}

This chapter is a kind of preface to the next one. Here, we introduce the 
\emph{Gaussian free field (GFF)}
and explain what me mean by 
\emph{Conformal field theory (CFT)}. 
\index{CFT}
The author do not pretend to present new
results in this chapter, however, he is not aware with direct analogous of some 
of the calculations and approaches. For example, the usage of derivational
functional technique is frequently used in quantum field theory (see, for
instance, \cite{Vasiliev1998})
and in the probability theory, but not in mathematically strict application to
CFT. The \emph{Ward identity} in the form 
\eqref{Formula: L_delta E[Phi...] = lim E[T Phi]}
is also not frequently used in literature.

We give some historical remarks now. 
Belavin, Polyakov, and Zamolodchikov (BPZ) \cite{Belavin1984} defined in 1984 a
class of conformal theories `minimal models', which described some discrete
models (Ising, Potts, etc.) at criticality. Central theme is 
universality, i.e.,  the properties of a system close to the critical
point are independent of its microscopic realization. Universal classes are
characterized by a special parameter, central charge. 
A characteristic property of such model is the invariance with respect to
a change of scale. In two dimensions this usually implies invariance with
respect to arbitrary conformal change of coordinates. The `group' of these
transforms is infinite-dimensional and the called conformal group. This
motivates the term Conformal field theory. The classical references
are  
\cite{Francesco1997,Blumenhagen2009,Moore1989}, 
see also
\cite{Schottenloher2008} as an introduction for mathematicians.

CFT includes several ingredients such as  correlation functions, operator-valued
distributions, the highest weight representation of the Virasoro algebra,
symmetry with respect to  infinitesimal conformal maps, the Ward
identities, the operator product expansion (OPE), and the vertex algebras.
However, there is still no a mathematically complete formulation that covers
all of the aspects at the same time. See 
\cite{Summers2012}
for an overview about the current status. The axiomatic approach to CFT grew up
from the Hilbert sixth problem, and the Euclidean axioms were suggested by
Osterwalder and Schrader 
\cite{Osterwalder1975}.
BPZ conjectured that the behaviour of the system at
criticality should be  described by critical exponents identified as the highest
weights of  degenerate representations of infinite-dimensional Lie algebras,
Virasoro in our case. The boundary version of this approach BCFT, i.e., CFT on
domains with boundary, was developed by Cardy, 
\cite{Cardy2004}. 

Our approach to CFT, in this chapter, is mathematically complete and
the reader does not need to have any background in this area. However, we consider here
only aspects of CFT that we need to define its relation to $(\delta,\sigma)$-SLE.
We restrict ourselves to the consideration of only Euclidean CFT with one free
scalar (\emph{pre-pre-Schwarzian}) bosonic field on a simply connected
2-dimensional domain (BCFT). It is because we couple only these models to SLE
in the next chapter. The generalization is straightforward if one starts to
consider a more general version of the coupling. The only important restriction
is that the theory is supposed to be Euclidean, 2-dimensional, and bosonic.

In Section 
\ref{Section: Gaussian free field}
we define GFF, which is a random law on the
space of distributions over a domain in the complex plane that is invariant
with respect to conformal transformation. The moments of this random variable
can be interpreted as the correlation functions $S_n$ and the analogue of the
tress tensor and Ward identities are considered in Section
\ref{Section: Stress tensor and the conformal Ward identity}. 
The operator product expansion can also be derived with this approach. To define
the Hilbert space of states one can apply the so-called Wick rotation and the
field reconstruction theorem,
see \cite{Schottenloher2008},
to the correlation functions. The last step requires an additional element which
is an anticonformal isomorphism denoted by $*$ in 
\cite{Schottenloher2008}. 

This is what we imply under CFT in this monograph.
A similar approach is used in 
\cite{Simon1979,Nelson1973}
for generic Euclidean field theory and in
\cite{SergioAlbeverio}
for the string theory and the Liouville theory.

\section{Test functions}
\label{Section: Test function}

We will define the Schwinger functions $S_n$ and the Gaussian free field $\Phi$
in terms of linear functionals over some space of smooth test functions defined
in what follows. 

Let $\Hc_s^{\psi}$ 
\index{$\Hc_s$}
be a linear space of real-valued smooth
functions $f\colon D^{\psi} \map \mathbb{R}$ 
in the domain 
$D^{\psi}:=\psi(\Dc)\subset\C$ 
with compact support  equipped with the topology of homogeneous
convergence of all derivatives on the corresponding compact, namely, the
topology is generated by following collection of neighborhoods of the zero
function
\begin{equation}\begin{split}
	&U^{\psi}_{K} := 
	\bigcap\limits_{n,m=0,1,2,\dotso}  
	\{f(z)\in C^{\infty}(D) \colon \supp f \subseteq K \wedge \left| \de^n \bar
	\de^m f^{\psi}(z)\right| <\varepsilon_{n,m},\quad z\in K \},\\
	&\varepsilon_{n,m}>0,\quad n,m = 0,1,2\dotso,
\end{split}\end{equation}
where $K\subset D^{\psi}$ is any compact subset of $D^{\psi}$.

We call $f^{\psi}\in\Hc_s^{\psi}$ the \emph{test functions}
\index{test function} 
and assume that they are $(1,1)$-differentials 
\begin{equation}
 f^{\tilde \psi}
 (\tilde z) = \tau'(\tilde z) \overline{\tau'(\tilde z)} f
 ^{\psi}(\tau(\tilde z)),\quad
 \tau:=\psi \circ \tilde \psi^{-1},
 \label{Formula: f^tilde psi = tau^2 f^psi(tau)}
\end{equation} 
It is straightforward to check that any transition map 
$\tau$ induces a homeomorphism between $\Hc_s^{\psi}$ and 
$\Hc_s^{\tilde \psi}$.
Thereby, we will drop the index $\psi$ at $\Hc_s$ henceforth, and we consider 
the space $\Hc_s$ as a topological space of smooth ($1,1$)-differentials with
compact support.

We will use the space $\Hc_s$ for the forward coupling 
(see Section \ref{Section: Coupling between SLE and GFF}).
To study the reverse coupling we will use a slightly different space
\begin{equation}
	\Hc_s^* :=
	\left\{f \in\Hc_s \colon \int\limits_{D^{\psi}} f^{\psi}(z) l(dz) 
	=	0 \right\}.
\end{equation}
where $l$ is the Lebesgue measure.
\index{$\Hc_s^*$}
This space consists of smooth compactly supported functions with the zero mean
value in any chart. In Sections
\ref{Section: Coupling of forward radial SLE and Dirichlet GFF},
\ref{Section: Coupling of reverse radial SLE and Neumann GFF},
and
\ref{Section: Coupling with twisted GFF},
we consider other examples of such spaces. Henceforth, we denote by $\Hc$
any of those nuclear spaces $\Hc_s$, $\Hc_s^*$, $\Hc_{s,b}$, $\Hc_{s,b}^*$, or
$\Hc_{s,b}^{\pm}$ for shortness.

An important property of $\Hc$ is the \emph{nuclearity},
see
\cite{Gelfand1964,Hida2008,Pietsch1972} 
which is necessary and sufficient to admit the uniform Gaussian measure
on the dual space $\Hc'$ (the GFF).

Constructing such a uniform Gaussian measure on a finite dimentional linear
space is a trivial problem, however, it is not possible on an
infinite-dimentional Hilbert space. On the other hand, if a space $\Hc$ is
nuclear as $\Hc_s$ or $\Hc_s^*$, then the dual space $\Hc'$ admits a uniform
Gaussian measure. A general recipy holds not only for Gaussian measures and is
given by the following theorem.

\begin{theorem} 
(\textbf{Bochner-Minols} \cite{Gelfand1964,SergioAlbeverio,Hida2008}) \\
 \label{Theorem: Bochner-Minols}
 Let $\Hc$ be a nuclear space, and let  
  $\hat \mu\colon \Hc\map \C $ be a functional (non-linear). 
 Then the following 3 conditions  
 \begin{enumerate} [1.]
  \item $\hat \mu$ is positive definite
  \begin{equation}
   \forall \{z_1,z_2,\dotso z_n\}\in\C^n,~ 
   \forall \{f_1,f_2,\dotso f_n\}\in \Hc^n~\then
   \sum\limits_{1\leq k,l\leq n} z_k \bar z_l \hat \mu [f_k - f_l]\geq 0;	 	
  \end{equation}
  \item $\hat \mu(0)=1$;
  \item $\hat \mu$ is continuous
 \end{enumerate}  
 are satisfied if and only if there exists a unique probability measure 
 $P$ 
 on
 $(\Omega,\mathcal{F},P)$ 
 for 
 $\Omega=\Hc'$, 
 which has $\hat \mu$ as a characteristic function
 \begin{equation}
  \hat \mu [f] := 
  \int\limits_{\Phi\in \Hc'} e^{(i\Phi[f])} P_{\Phi}(d\Phi), \quad \forall
  f \in \Hc.
  \label{Formula: hat mu = int exp iPhi dPhi 2}
 \end{equation}
 The corresponding $\sigma$-algebra $\mathcal{F}_{\Phi}$ is generated by the
 cylinder sets
 \begin{equation}
  \{ F \in \Hc'\colon \quad F[f]\in B \},\quad 
  \forall f\in \Hc,\quad
  \forall \text{ Borel sets } B \text{ of } \mathbb{R}~. 
 \end{equation} 
\end{theorem}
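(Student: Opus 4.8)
The statement is the Bochner--Minlos theorem, and the plan is to reduce the
infinite-dimensional assertion to Bochner's classical theorem in finite
dimensions, glued together by Kolmogorov's extension theorem, with the
nuclearity of $\Hc$ entering only at the very last (and decisive) step. The
necessity direction is routine: if such a $P$ exists then
$\hat\mu[0]=\int_{\Hc'}1\,P(d\Phi)=1$; positive definiteness follows from the
expansion
\[
\sum_{1\le k,l\le n} z_k \bar z_l \,\hat\mu[f_k-f_l]
= \int_{\Hc'} \Big| \sum_{k=1}^{n} z_k e^{i\Phi[f_k]} \Big|^2 P(d\Phi)\ \ge\ 0,
\]
using that each $\Phi$ acts linearly on test functions; and continuity of
$\hat\mu$ follows from dominated convergence (the integrand is bounded by $1$)
together with the fact that every $\Phi\in\Hc'$ is a continuous functional, so
$f_n\to f$ in $\Hc$ forces $\Phi[f_n]\to\Phi[f]$ pointwise in $\Phi$.

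For the sufficiency direction I would first construct the finite-dimensional
marginals. For any $f_1,\dots,f_n\in\Hc$, the map
$(t_1,\dots,t_n)\mapsto \hat\mu[\,t_1 f_1+\dots+t_n f_n\,]$ on $\mathbb{R}^n$ is
positive definite, equals $1$ at the origin, and is continuous, so by the
classical Bochner theorem there is a unique Borel probability measure
$\mu_{f_1,\dots,f_n}$ on $\mathbb{R}^n$ with this characteristic function.
These measures are consistent under coordinate projections and linear
reparametrisations, hence by the Kolmogorov extension theorem they assemble
into a cylinder ($=$ finitely additive) measure $\mu$ on the algebraic dual
$\Hc^{*}$, whose image under each finite-dimensional evaluation map is the
corresponding $\mu_{f_1,\dots,f_n}$. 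At this stage $\mu$ need not be countably
additive on $\Hc'$.

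The crux is to upgrade $\mu$ to a genuine Radon probability measure carried by
the topological dual $\Hc'$, and here I would use that $\Hc$ is nuclear: write
$\Hc=\bigcap_{m}\Hc_m$ as a projective limit of Hilbert spaces with the
embeddings $\Hc_{m+1}\hookrightarrow\Hc_m$ Hilbert--Schmidt, so that
$\Hc'=\bigcup_{m}\Hc_m'$. Continuity of $\hat\mu$ at $0$ yields, for each
$\varepsilon>0$, an index $m$ and a radius $\rho>0$ with
$\mathrm{Re}\,(1-\hat\mu[f])<\varepsilon$ whenever $\|f\|_{\Hc_m}<\rho$;
picking $m'>m$ with $\Hc_{m'}\hookrightarrow\Hc_m$ Hilbert--Schmidt, Minlos'
estimate --- bounding $\mu^{*}\{\,\|x\|_{\Hc_{m'}'}>R\,\}$ in terms of
$\varepsilon$ and the Hilbert--Schmidt norm of that embedding, proved by
integrating $1-\hat\mu$ against a standard Gaussian on a finite-dimensional
section and letting the dimension tend to infinity --- shows that $\mu$ is
concentrated, up to $\varepsilon$, on a ball of $\Hc_{m'}'$. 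Such balls are
relatively compact for a weaker dual topology, so the family is tight, and
Prokhorov's theorem produces a countably additive probability measure $P$ on
$\Hc'$ with the prescribed marginals; by construction
$\hat\mu[f]=\int_{\Hc'}e^{i\Phi[f]}P(d\Phi)$. Uniqueness is then automatic:
the cylinder sets $\{\Phi:\Phi[f]\in B\}$ generate $\mathcal{F}_\Phi$, and the
joint law of every finite vector $(\Phi[f_1],\dots,\Phi[f_n])$ is pinned down
by $\hat\mu$ via uniqueness in Bochner's theorem on $\mathbb{R}^n$.

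The only non-routine ingredient is the Minlos concentration estimate in the
third paragraph; everything else is Bochner plus Kolmogorov plus a compactness
argument. The Hilbert--Schmidt hypothesis, i.e.\ nuclearity of $\Hc$, is
exactly what makes that estimate work, and it is indispensable: for $\Hc$ an
infinite-dimensional Hilbert space the cylinder measure associated with, e.g.,
$\hat\mu[f]=e^{-\|f\|^2/2}$ fails to be countably additive on the dual, which
is why the theorem is stated for nuclear $\Hc$ only.
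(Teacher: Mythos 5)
The paper does not prove this statement: it is quoted as the classical Bochner--Minlos theorem with references to Gelfand--Vilenkin, Albeverio et al., and Hida et al., and no argument is given in the text. Your sketch is the standard proof from those sources --- necessity via the expansion of $\bigl|\sum_k z_k e^{i\Phi[f_k]}\bigr|^2$ and dominated convergence, sufficiency via finite-dimensional Bochner, Kolmogorov consistency, the Minlos concentration estimate through a Hilbert--Schmidt embedding, and Prokhorov --- and it is correct in outline, with the role of nuclearity correctly isolated as the one non-routine ingredient.
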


The random law on $\Hc'$ is called uniform with respect to a bilinear functional
$B:\Hc\times\Hc\map \mathbb{R}$ 
if the characteristic function
$\hat \mu$ 
is of the form
\begin{equation}
	\hat \mu[f] = e^{-\frac12 B[f,f]},\quad f\in \Hc.
\end{equation}  
We consider the class of bilinear functionals we work with in Section
\ref{Section: Fundamental solution to the Laplace-Beltrami equation}.
First we study the linear and bilinear functionals over $\Hc_s$ and
their transformation properties.

\section{Pre-pre-Schwarzian}
\label{Section: Pre-pre-Schwarzian}

%
%
%
%

In Section
\ref{Section: SLE preliminaries}
we defined a vector field by considering all charts and a certain 
transformation rule. Here we define an analogous construction with the
same method.

A collection of maps $\eta^{\psi}\colon \psi(\Dc)\map \mathbb{C}$, each of
which is given for a global chart map $\psi:\Dc\map \psi(\Dc)\subset
\mathbb{C}$, is called a \emph{pre-pre-Schwarzian} 
\index{pre-pre-Schwarzian}
form of order $\mu,\mu^*
\in\mathbb{C}$ if for any chart map $\tilde \psi$
\begin{equation}
 \eta^{\tilde \psi}(\tilde z) = 
 \eta^{\psi}( \tau (\tilde z) ) 
 + \mu \log \tau'(\tilde z)
 + \mu^* \log \bar \tau'(\tilde z) ,\quad
 \tau = \psi \circ \tilde\psi^{-1},\quad
 \tilde z\in\tilde D,\quad
 \forall \psi,\tilde\psi.
 \label{Formula: tilde phi = phi - chi arg}
\end{equation}
If $\eta$ is defined for one chart map, then it is automatically defined for
all other chart maps. We borrowed the term `pre-pre-Schwarzian' from
\cite{Kang2011}. 
In \cite{Sheffield2010} 
the analogous object is called `AC
surface' and it is used in CFT as well.

Following Section
\ref{Section: SLE preliminaries}
we define
\begin{equation}
 F_* \eta^{\psi}(z) : = 
 \eta^{\psi \circ F}(z) =
 \eta^{\psi}( \left(F^{\psi}\right)^{-1}(z) ) 
 + \mu \log \left( \left(F^{\psi}\right)^{-1}\right)'(z)
 + \mu^* \log \overline{ \left( \left(F^{\psi}\right)^{-1}\right)'(z)}
 \label{Formula: G eta(z) = eta(G(z)) + mu log G'(z) + ...}
\end{equation}



We are interested in two special cases. The first one is
$\mu=\mu^*=\gamma/2\in\mathbb{R}$, and
\begin{equation}
 F_* \eta^{\psi}(z) = 
 \eta^{\psi}( \left(F^{\psi}\right)^{-1}(z) ) 
 + Q \log \left| \left(\left(F^{\psi}\right)^{-1}\right)'(z) \right|.
 \label{Formula: tilde eta = eta + Q log}
\end{equation}
The second is $\mu=i\chi/2,~\mu^*=-i\chi/2,~\chi\in\mathbb{R}$, and
\begin{equation}
 F_* \eta^{\psi}(z) = 
 \eta^{\psi}\left( \left(F^{\psi}\right)^{-1}(z) \right) 
 - \chi \arg \left(\left(F^{\psi}\right)^{-1}\right)'(z).
 \label{Formula: tilde eta = eta - chi arg}
\end{equation}
In both cases $\eta$ can be chosen real in all charts. Moreover, if the
pre-pre-Schwarzian is represented by a real-valued function in all charts it is
one of two above forms.
 
A ($\mu,\mu^*$)-pre-pre-Schwarzian can be obtained from a vector field $v$ by
the relation
\begin{equation}
 \eta^{\psi}(z) \equiv -\mu \log v^{\psi}(z) 
 - \mu^* \log \overline{v^{\psi}(z)}. 
 \label{Formula: eta = -mu log v - mu log v}
\end{equation}
For the two special cases above we have
\begin{equation}
 \eta = -Q \log |v|
\end{equation} 
and
\begin{equation}
 \eta = \chi \arg v,
 \label{Formula: eta = chi arg v}
\end{equation} 
where we dropped the upper index $\psi$ and the argument $z$.
 
In Section 
\ref{Section: Linear functionals and change of coordinates},
we obtain the transformation rules 
(\ref{Formula: tilde eta = eta + Q log})
by taking logarithm of a $(1,1)$-differential.
The second type of the real pre-pre-Schwarzian is connected to a sort of an
imaginary analog of a metric.

We can define the \emph{Lie derivative} 
\index{Lie derivative}
of $X$, where $X$ can be a
pre-pre-Schwarzian, a vector field, or even an object with a more general
transformation rule (assignment), as
\begin{equation}
 \Lc_{v} X^{\psi} (z):=
 \left. \frac{\de}{\de s} {H_s^{-1}[v]}_* X^{\psi} (z)
 \right|_{s=0},
\end{equation}
see \eqref{Formula: d H = sigma H ds} for the definition of $H_s[v]$. 
If $X$ is a vector field $w$, then 
\begin{equation}
 \Lc_v w^{\psi} = 
 v^{\psi}(z) \de w^{\psi}{}'(z) 
 - v^{\psi}{}'(z) w^{\psi}(z)
 = [v,w]^{\psi}(z),
 \label{Formula: L_v w = [v,w] := v w' - v' w},
\end{equation}
see 
\eqref{Formula: [v,w] := ...} for the definition of $[\cdot,\cdot]$. 

If $X$ is a pre-pre-Schwarzian, then
\begin{equation}
 \Lc_{v} \eta^{\psi} (z) = 
 v^{\psi}(z) \de_z \eta^{\psi}(z) + 
 \overline{v^{\psi}(z)} \de_{\bar z} \eta^{\psi}(z) 
 + \mu \,{v^{\psi}}'(z) + \mu^* \overline{{v^{\psi}}'(z)}. 
 \label{Formula: L eta = v d eta + chi dv}
\end{equation}
Here and further, we use notations
\begin{equation}\begin{split}
 \de_z := \frac12 \left( \frac{\de}{\de x} - i \frac{\de}{\de y} \right),\quad
 \de_{\bar z} := \frac12 \left( \frac{\de}{\de x} + i \frac{\de}{\de y} \right),
\end{split}\end{equation}
and $f'=\de_z f$ for a holomorphic function $f$.

If $\mu = \mu^* = 0$, then $\eta$ is called a \emph{scalar}. 
\index{scalar}
It is remarkable
that if $\eta$ is a pre-pre-Schwarzian, then $\Lc_v \eta$ is a scalar anyway,
which is stated in the following lemma.

\begin{lemma} 
Let $\eta$ be a pre-pre-Schwarzian of order $\mu,\mu^*$, let $v$ be a
holomorphic vector field, and let $G$ be a conformal self-map. Then
\begin{equation}
 F^{-1}_* (\Lc_v \eta)^{\psi}(z) = (\Lc_v \eta)^{\psi} \circ F^{\psi}(z)
 \label{Formula: G L phi = L phi circ G},
\end{equation}
or in the infinitesimal form
\begin{equation}
 \Lc_w (\Lc_v \eta)^{\psi}(z) = 
 w^{\psi}(z) \de_z (\Lc_v \eta)^{\psi}(z) +  
 \overline{w^{\psi}(z)} \de_{\bar z} (\Lc_v \eta)^{\psi}(z)
 \label{Formula: L L phi = w de L phi}
\end{equation}
\label{Lemma: L phi is a scalar}
\end{lemma}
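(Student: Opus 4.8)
Let me think about how to prove this.

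The plan is to prove the lemma in its transformation form: I will show that for every transition map $\tau=\psi\circ\tilde\psi^{-1}$ one has $(\Lc_v\eta)^{\tilde\psi}(\tilde z)=(\Lc_v\eta)^{\psi}(\tau(\tilde z))$, i.e. $\Lc_v\eta$ obeys the scalar (i.e. $(0,0)$-differential) transformation rule with no additive anomaly. Once this is established, the finite statement \eqref{Formula: G L phi = L phi circ G} follows by specialising $\tilde\psi=\psi\circ F^{-1}$ so that $\tau=F^{\psi}$, and the infinitesimal form \eqref{Formula: L L phi = w de L phi} follows either by taking $F=H_s[w]$ from \eqref{Formula: d H = sigma H ds} and differentiating at $s=0$, or, once $\Lc_v\eta$ is known to be a scalar, directly from \eqref{Formula: L eta = v d eta + chi dv} specialised to $\mu=\mu^*=0$.

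For the core computation I would start from the explicit expression for the Lie derivative in the $\tilde\psi$-chart given by \eqref{Formula: L eta = v d eta + chi dv},
\[
(\Lc_v\eta)^{\tilde\psi}(\tilde z)=v^{\tilde\psi}\de_{\tilde z}\eta^{\tilde\psi}+\overline{v^{\tilde\psi}}\de_{\bar{\tilde z}}\eta^{\tilde\psi}+\mu\,(v^{\tilde\psi})'+\mu^*\,\overline{(v^{\tilde\psi})'},
\]
and substitute the transformation rules of its ingredients: the vector-field rule \eqref{Formula: tilde v = 1/dtau v(tau)}, namely $v^{\tilde\psi}(\tilde z)=v^{\psi}(\tau(\tilde z))/\tau'(\tilde z)$, and the pre-pre-Schwarzian rule \eqref{Formula: tilde phi = phi - chi arg}, namely $\eta^{\tilde\psi}(\tilde z)=\eta^{\psi}(\tau(\tilde z))+\mu\log\tau'(\tilde z)+\mu^*\log\overline{\tau'(\tilde z)}$. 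Differentiating these (using that $\tau$ is holomorphic, so $\de_{\tilde z}$ kills $\log\overline{\tau'}$, $\de_{\bar{\tilde z}}$ kills $\log\tau'$, and the cross terms in the Wirtinger derivatives of $\eta^{\psi}(\tau(\tilde z))$ vanish) produces ``anomalous'' terms proportional to $\tau''/\tau'$ and its conjugate from three places: from $\de_{\tilde z}\eta^{\tilde\psi}$ a term $\mu\,\tau''/\tau'$, from $\de_{\bar{\tilde z}}\eta^{\tilde\psi}$ a term $\mu^*\,\overline{\tau''/\tau'}$, and from $(v^{\tilde\psi})'=(v^{\psi})'(\tau)-v^{\psi}(\tau)\tau''/(\tau')^2$. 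Collecting, the holomorphic anomaly $\mu\,v^{\psi}(\tau)\tau''/(\tau')^2$ coming through $v^{\tilde\psi}\de_{\tilde z}\eta^{\tilde\psi}$ cancels exactly against $-\mu\,v^{\psi}(\tau)\tau''/(\tau')^2$ coming through $\mu(v^{\tilde\psi})'$, and similarly in the anti-holomorphic sector; what survives is precisely $v^{\psi}(\tau)\de_z\eta^{\psi}(\tau)+\overline{v^{\psi}(\tau)}\de_{\bar z}\eta^{\psi}(\tau)+\mu(v^{\psi})'(\tau)+\mu^*\overline{(v^{\psi})'(\tau)}=(\Lc_v\eta)^{\psi}(\tau(\tilde z))$.

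A second, more structural route I would keep in reserve: by \eqref{Formula: eta = -mu log v - mu log v} one can write $\eta$, locally near any point where a nonvanishing holomorphic reference field $v_0$ exists, as $-\mu\log v_0-\mu^*\log\overline{v_0}$ plus a genuine scalar. Then, using that the pushforward acts on $-\mu\log v_0$ consistently with $-\mu\log$ of the pushed-forward field (this is \eqref{Formula: G eta(z) = eta(G(z)) + mu log G'(z) + ...} read against \eqref{Formula: tilde v = 1/dtau v(tau)}), one gets $\Lc_v\eta=-\mu\,[v,v_0]/v_0-\mu^*\,\overline{[v,v_0]/v_0}$ plus the Lie derivative of a scalar; each summand is manifestly a scalar because a ratio of two vector fields of the same weight is a scalar (the $1/\tau'$ Jacobian factors cancel), and the Lie derivative of a scalar is a scalar by the chain rule applied to $H_{-s}[v]^{\tilde\psi}=\tau^{-1}\circ H_{-s}[v]^{\psi}\circ\tau$. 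I expect no real obstacle beyond bookkeeping — keeping the holomorphic and anti-holomorphic $\tau''/\tau'$ terms straight — since the whole content of the lemma is the exact cancellation of the conformal anomaly between the $\log\tau'$ term in the transformation of $\eta$ and the non-tensorial term in the transformation of $v'$; there are no estimates or limiting arguments at all.
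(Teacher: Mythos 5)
Your proposal is correct and matches the paper's treatment essentially exactly: your main computation (substituting the transformation rules for $v$ and $\eta$ into \eqref{Formula: L eta = v d eta + chi dv} and watching the $\tau''/\tau'$ anomalies cancel between the $\log\tau'$ term of $\eta$ and the non-tensorial part of $(v^{\tilde\psi})'$) is the paper's first proof written in transition-map rather than pushforward notation, and your ``reserve'' route via \eqref{Formula: eta = -mu log v - mu log v} is precisely the paper's second proof. No gaps.
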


We present two short proves.
\begin{proof} (first)\\
\begin{equation}\begin{split}
 &F^{-1}_* (\Lc_v \eta)^{\psi}(z) = F^{-1}_* \left(
  v^{\psi}(z) \de_z \eta^{\psi}(z) + \overline{v^{\psi}(z)} \de_{\bar z} 
 \eta^{\psi}(z) + \mu {v^{\psi}}'(z) + \mu^* \overline{{v^{\psi}}'(z)} 
 \right)
 =\\=&
 \frac{v^{\psi}\circ F^{\psi}(z) }{{F^{\psi}}'(z)} \de_z 
  \left( \eta^{\psi}\circ F^{\psi}(z) 
  + \mu \log {F^{\psi}}'(z) + \mu^* \log \overline{{F^{\psi}}'(z)} \right) 
 +\\+&
 \overline{\frac{v^{\psi}\circ F^{\psi}(z) }{{F^{\psi}}'(z)}}
  \de_{\bar z} \left( \eta^{\psi}\circ F^{\psi}(z)
  + \mu \log {F^{\psi}}'(z) + \mu^* \log \overline{{F^{\psi}}'(z)} \right) 
 +\\+&
 \mu \de_z \frac{v^{\psi}\circ F^{\psi}(z)}{{F^{\psi}}'(z)} 
 + \mu^* \de_{\bar z} \frac{\overline{v^{\psi}\circ F^{\psi}(z)}}
 {\overline{{F^{\psi}}'(z)}} =\\=&
 v^{\psi}\circ F^{\psi}(z) (\de \eta^{\psi})\circ F^{\psi}(z) 
 + \mu \frac{v^{\psi}\circ F^{\psi}(z) }{{F^{\psi}}'(z)}
 \de_z \log {F^{\psi}}'(z)
 + 0 
 +\\+&
 \overline{v^{\psi}\circ F^{\psi}(z)} (\bar \de \eta^{\psi})\circ F^{\psi}(z) 
 + 0 + \mu^* \overline{ \frac{v^{\psi}\circ F^{\psi}(z) }{{F^{\psi}}'(z)} }
 \de_{\bar z} \log \overline{{F^{\psi}}'(z)} 
 +\\+&
 \mu {v^{\psi}}' \circ F^{\psi}(z)  
 - \mu \frac{v^{\psi}\circ F^{\psi}(z) {F^{\psi}}''(z) }{{F^{\psi}}'(z)^2} 
 + \mu^* \overline{{v^{\psi}}' \circ F^{\psi}(z)}  
 - \mu^* \frac{\overline{v^{\psi}\circ F^{\psi}(z)} \overline{{F^{\psi}}''(z)} }
 {\overline{{F^{\psi}}'(z)^2}} 
 =\\=&
 \left( v^{\psi} \de \eta^{\psi} \right) \circ F^{\psi}(z) 
 + \mu {v^{\psi}}' \circ F^{\psi}(z)
 +\\+&
 \left( \overline{v^{\psi}} \bar \de \eta^{\psi} \right) \circ F^{\psi}(z)
 + \mu^* \overline{{v^{\psi}}' \circ F^{\psi}(z)}  
 =\\=&
 \left( 
  v^{\psi} \de \eta^{\psi} + \overline{ v^{\psi}} \bar \de \eta^{\psi}
  + \mu {v^{\psi}}' + \mu^* \overline{{v^{\psi}}'} 
 \right) \circ F^{\psi}(z)
 =\\=&
 (\Lc_v \eta)^{\psi} \circ F^{\psi}(z) 
\end{split}\end{equation}
\end{proof}

\begin{proof} (second) \\
Let $\alpha$ and $\beta$ be two not necessary holomorphic vector fields such
that 
\begin{equation}
 \eta^{\psi}(z) = 
 -\mu \log \alpha^{\psi}(z) - \mu^* \log \beta^{\psi}(z)  
\end{equation}
in some fixed chart $\psi$.
For example, we can assume that $\beta^{\psi}\equiv 0$ 
and 
$\alpha^{\psi}=\mu^{-1} e^{-\eta^{\psi}}$ 
in the chart $\psi$. Then, the transformation rules 
\eqref{Formula: tilde phi = phi - chi arg}
are satisfied.
\end{proof}

\section{Linear functionals and change of coordinates}
\label{Section: Linear functionals and change of coordinates}

In this section, we consider linear functionals over $\Hc_s$ and $\Hc$ that
transform as pre-pre-Schwarzians.

Let 
$\eta^{\psi}\in\Hc_s^{\psi}{}'$
be a linear functional over 
$\Hc_s^{\psi}$
for a given chart $\psi$.
The functional is called regular if there exists a locally integrable function 
$\eta^{\psi}(z)$ 
such that 
\begin{equation}
 \eta^{\psi}[f] := 
 \int\limits_{\psi(\Dc)} \eta^{\psi}(z) f^{\psi}(z) l(dz),
 \label{Formula: eta[f] = int eta f dlz}
\end{equation}
where $l$ is the Lebesgue measure on $\C$. We use the brackets `$[\cdot]$' for
functionals and the parentheses `$(\cdot)$' for corresponding functions
(kernels).

We assume that $f$ transforms according to
(\ref{Formula: f^tilde psi = tau^2 f^psi(tau)}).
If $\eta^{\psi}(z)$ is a scalar, then the number $\eta^{\psi}[f]\in\mathbb{R}$
does not depend on the choice of the chart $\psi$. Indeed, for any choice of
another chart $\tilde \psi$ we have
\begin{equation}\begin{split}
 \eta^{\tilde \psi}[f] :=& 
 \int\limits_{\tilde \psi(\Dc)} \eta^{\tilde \psi}(\tilde z) f^{\tilde \psi}(\tilde z) l(d\tilde z)=
 \int\limits_{\tilde \psi(\Dc)} \eta^{\psi}(\tau(\tilde z)) f^{\psi}(\tau(\tilde z))
 |\tau'(\tilde z)|^2  l(d\tilde z)
 =\\=&
 \int\limits_{\psi(\Dc)} \eta^{\psi}(z) f^{ \psi}(z) l(dz)
 =\eta^{\psi}[f] ,
\end{split}\end{equation}

If $\eta^{\psi}(z)$ is a pre-pre-Schwarzian, we have
\begin{equation}\begin{split}
 \eta^{\tilde \psi}[f] =& 
 \int\limits_{\tilde \psi(\Dc)} \eta^{\tilde \psi}(\tilde z) f^{\tilde \psi}(\tilde z) l(d\tilde z)
 =\\=&
 \int\limits_{\tilde \psi(\Dc)} \left(
  \eta^{\psi}(\tau(\tilde z)) 
  + \mu \log \tau'(z) + \mu^* \overline{\log \tau'(z)}
 \right)  
 f^{\psi}(\tau(\tilde z)) |\tau'(\tilde z)|^2  l(d\tilde z)
 =\\=&
 \int\limits_{\psi(\Dc)} \left(
  \eta^{\psi}(z) 
  - \mu \log \tau^{-1}{}'(z) - \mu^* \overline{\log \tau^{-1}{}'(z)}
 \right)  
 f^{\psi}(z)  l(d z)
 =\\=&
 \eta^{\psi}[f] 
 - \int\limits_{\psi(\Dc)} \left(
  \mu \log \tau^{-1}{}'(z) + \mu^* \overline{\log \tau^{-1}{}'(z)}
 \right)  
 f^{\psi}(z) l(d z)
 \label{Formula: eta^tilde psi[f] = eta psi[f] - mu int ...}
\end{split}\end{equation}
according to 
\eqref{Formula: G eta(z) = eta(G(z)) + mu log G'(z) + ...}.

If $\eta^{\psi}$ is not a regular pre-pre-Schwarzian but just a functional from 
$\Hc_s'$ we can consider the last line of
\eqref{Formula: eta^tilde psi[f] = eta psi[f] - mu int ...}
as a definition of the transformation rules for $\eta[f]$ from a chart $\psi$
to a chart $\tilde \psi$.
We denote by $\Hc_s'$ the linear space of pre-pre-Schwarzians as above.

The space $\Hc_s^*{}'$ dual to $\Hc_s^*$ is a quotient space
\begin{equation}
	\Hc_s^*{}' = \Hc_s{}'/ \sim,\quad 
	\eta \sim \eta+C \quad \Leftrightarrow \quad C\in\mathbb{R},\quad 
	\eta\in\Hc_s{}'
\end{equation}
of functionals over $\Hc_s$ defined up to a constant. 

Consider now the pushforward operation $F_*$
on $(1,1)$-differentials $f$ defined by
\begin{equation}
	(F_* f)^{\psi}(z) := 
	\left|\left(F^{\psi}\right)^{-1}{}'(z)\right|^2 
	f^{\psi} \left( \left(F^{\psi}\right)^{-1}(z) \right).
\end{equation}
For a conformal map $F\colon\Dc\setminus\K\map\Dc$ the right-hand side is
a finitely supported function and $F_*$ is a linear homeomorphism 
$F_*\colon\Hc_s\map\Hc_s$ or $F_*\colon\Hc_s^*\map\Hc_s^*$.
However, if
$F\colon\Dc\map\Dc\setminus\K$, 
then the right-hand side is well-defined only if 
$\supp f\subset \Dc\setminus\K$.
In this work we define $F_*$ only on a subset of $\Hc_s$ of test functions that
are supported in $\Dc\setminus\K=F^{-1}(\Dc)$.

Define the pushforward operation by
\begin{equation}\begin{split}
	&F_* \eta^{\psi} [f] 
	= \eta^{\psi\circ F} [f] 
	=\\=&
	\eta^{\psi}[F_*^{-1} f] 
	+	\int\limits_{\supp f^{\psi}} 
	\left( 
		\mu \log \left(F^{\psi}\right)^{-1}{'}(z) +
		\mu^{*}	\overline{\log \left(F^{\psi}\right)^{-1}{'}(z)}
	\right)
	f^{\psi}(z)	l(dz),\\
	&	f\in\Hc_s\colon\supp f\subset \Im(F).
	\label{Formula: F eta f= eta F-1 f}
\end{split}\end{equation}
It can be understood as a pushforward $F_*:\Hc_s'\map\Hc_s'$ 
in the dual space if
$F\colon\Dc\setminus\K\map\Dc$.
For
$F\colon\Dc\map\Dc\setminus\K$
the formula
\eqref{Formula: F eta f= eta F-1 f}
is well-defined only for functional $\eta$ over test functions supported on 
$\Dc\setminus\K$.

Functionals over the space $\Hc_s$ are differentiable infinitely many times. The
Lie derivative is defined by
\begin{equation}\begin{split}
	\Lc_v \eta[f] =& 
	\left. \frac{\de}{\de s} {H_s^{-1}[v]}_* \eta^{\psi} [f] \right|_{s=0} 
	=\\=&
	-\eta^{\psi}[\Lc_{v} f] 
	+ \int\limits_{\supp f^{\psi}} 
	\left(
		\mu v^{\psi}{}'(z) + \mu^* \overline{v^{\psi}{}'(z)} 
	\right)
	f^{\psi}(z) l(dz),
\end{split}\end{equation}
where
\begin{equation}\begin{split}
	\Lc_v f^{\psi}(z) =& 
	\left. \frac{\de}{\de s} {H_s^{-1}[v]}_* f^{\psi} \right|_{s=0} 
	=\\=&
	v^{\psi}(z) \de_z f^{\psi}(z) + \overline{v^{\psi}(z)} \de_{\bar z} f^{\psi}(z)
	+ v^{\psi}{}'(z) f^{\psi}(z) + \overline{v^{\psi}{}'(z)} f^{\psi}(z).
\end{split}\end{equation}

\section{Fundamental solution to the Laplace-Beltrami equation}
\label{Section: Fundamental solution to the Laplace-Beltrami equation}

In this section, we consider linear continuous functionals with respect to
each argument in $\Hc_s$. An important example is the Dirac functional
\begin{equation}
	\delta_{\lambda}[f,g] := 
	\int\limits_{\psi(\Dc)} f^{\psi}(z) g^{\psi}(z) 
	\frac{1}{\lambda^{\psi}(z)} l(dz),\quad 
	f,g\in\Hc_s,
	\label{Formual: delta_lambda[f,g] := ...}
\end{equation}
where $\lambda(z) l(dz)$ is a differential of some measure on $\psi(\Dc)$ that
is mutually absolutely continuous with respect to the Lebesgue measure $l(dz)$. The
Radon–Nikodym coefficient $\lambda^{\psi}(z)$ transforms as a
$(1,1)$-differential:
\begin{equation}
	\lambda^{\tilde \psi}
 (\tilde z) 
 = \tau'(\tilde z) \overline{\tau'(\tilde z)} \lambda^{\psi}
 (\tau(\tilde z))
 ,\quad
 \tau:=\psi \circ \tilde \psi^{-1}.
\end{equation}
It is easy to see that the right-hand side of
\eqref{Formual: delta_lambda[f,g] := ...}.
does not depend on the choice
of $\psi$.
 
We call the functional regular if there exists a function 
$B^{\psi}(z,w)$ on $\psi^{\Dc}\times\psi^{\Dc}$ such that
\begin{equation}
	B^{\psi}[f,g] := 
	\int\limits_{\psi(\Dc)} \int\limits_{\psi(\Dc)} B^{\psi}(z,w )
	f^{\psi}(z) g^{\psi}(w)
	l(dz) l(dw),\quad 
	f,g\in\Hc_s. 
	\label{Formula: H = int H}
\end{equation}
We use the same convention about the brackets and parentheses as for the linear
functionals. We consider only scalar regular bilinear functionals and
require the transformation rules
\begin{equation}
 B^{\tilde \psi}(\tilde z,\tilde w) = 
 B^{\psi} (\tau(\tilde z),\tau(\tilde w)),\quad 
 \tau=\psi\circ \tilde \psi^{-1},\quad 
 z,w\in\tilde\psi(\Dc).
\end{equation}
Thus, the right-hand side of 
(\ref{Formula: H = int H})
does not depend on the choice 
of the chart $\psi$ and we can drop the index $\psi$ in the left-hand side.

The pushforward is defined by 
\begin{equation}
	F_* B^{\psi} (z,w) 
	= B^{\psi\circ F} (z,w)
	:= B^{\psi}( \left(F^{\psi} \right)^{-1}(z),\left(F^{\psi} \right)^{-1}(w))
	,\quad z,w\in \Im(F^{\psi}).
	\label{Formula: G B(z,w) = B(G(z),G(w))}
\end{equation}
and, for an arbitrary functional,
\begin{equation}
	F_* B^{\psi} [f,g] 
	= B^{\psi\circ F} [f,g]
	:= B^{\psi}[ F^{-1}_* f,F^{-1}_* g ]
	,\quad f,g\in\Hc_s\colon\supp f\subset \Im(F).
\end{equation}
The same remarks as in the previous section for $\eta$ remain true in this case.


Define now the Lie derivative in the same way as above
\begin{equation}\begin{split}
	&\Lc_{v} B^{\psi}(z,w)
	:=\left. \frac{\de}{\de s} 
	H_s[v]^{-1}_* B^{\psi} (z,w) \right|_{s=0} 
	=\\=&
	v^{\psi}(z)\de_z B^{\psi}(z,w) + 
	\overline {v^{\psi}(z)}\de_{\bar z} B^{\psi}(z,w) + 
	v^{\psi}(w)\de_w B^{\psi}(z,w) + 
	\overline {v^{\psi}(w)}\de_{\bar w} B^{\psi}(z,w). 
	\label{Formula: L Gamma = ...}
\end{split}\end{equation}
We remark that $\Lc_v B$ is also scalar in two variables.
Functionals $\delta_{\lambda}$ and $B$ are both scalar and continious with
respect to each variable.

Define the Laplace-Beltrami operator $\Delta_{\lambda}$ as
\begin{equation}
	{\Delta_{\lambda}}_1 B^{\psi}(z,w)
	:= -\frac{4}{\lambda^{\psi}(z)} \de_z \de_{\bar z} B^{\psi}(z,w),
\end{equation}   
where the lower index `$1$' means that the operator acts only with respect to
the first argument.

%

Let a regular bilinear functional $\Gamma_{\lambda}$ be a solution to the
equation
\begin{equation}
	{\Delta_{\lambda}}_1 \Gamma_{\lambda}[f,g] 
	= 2 \pi~ \delta_{\lambda}[f,g],\quad \Gamma_{\lambda}[f,g]
	=\Gamma_{\lambda}[g,f],
	\quad f,g\in\Hc_s.
	\label{Formula: Delta Gamma = delta}
\end{equation}
We fix the boundary conditions later.
This equation is conformally invariant in the sense that if 
$\Gamma_{\lambda}^{\psi}(z,w)$ is a solution on a chart 
$\psi$, then 
\begin{equation}
 \Gamma_{\lambda}^{\psi}(\tau(\tilde z),\tau(\tilde w)) = 
 \Gamma^{\tau^{-1}\circ \psi}_{\lambda}(z,w)
\end{equation}
is a solution in the chart 
$\tau^{-1} \circ \psi$.

The solution 
$\Gamma_{\lambda}^{\psi}(z,w)$ 
is a collection of smooth and harmonic functions on 
$\psi(\Dc)\times \psi(\Dc)\setminus\{z\times w\colon z=w\}$ 
of  general form
\begin{equation}
	\Gamma_{\lambda}^{\psi}(z,w)=-\frac12\log (z-w)(\bar z-\bar w)+H^{\psi}(z,w)~,
	\label{Formula: Gamma = Log + H}
\end{equation}
where $H^{\psi}(z,w)$ is an arbitrary symmetric harmonic function with respect
to each variable that is defined by the boundary conditions and will be
specified in what follows.

It is straightforward to verify that the function $\Gamma_{\lambda}^{\psi}(z,w)$
does not depend on the choice of $\lambda$ because the identity
\eqref{Formula: Delta Gamma = delta} 
in the integral form is
\begin{equation}\begin{split}
	&\int\limits_{\psi(\Dc)} \int\limits_{\psi(\Dc)}
		-\frac{4}{\lambda^{\psi}(z)} \de_z \de_{\bar z} 
	\Gamma^{\psi}(z,w)
	f^{\psi}(z) g^{\psi}(w) l(dz)  l(dw) 
	=\\=& 
	\int\limits_{\psi(\Dc)} f^{\psi}(z) g^{\psi}(z) 
		\frac{1}{\lambda^{\psi}(z)} l(dz).
\end{split}\end{equation}
The change $\lambda\map \tilde \lambda$ is equivalent to a change 
$f^{\psi}(z)\map \frac{\lambda^{\psi}(z)}{\tilde\lambda^{\psi}(z)} f^{\psi}(z)$. 
We will drop the lower index $\lambda$ in $\Gamma_{\lambda}$ below.
The fundamental solutions of the Laplace equation are also known as 
\emph{Green's functions} (for free field).
\index{Green's function}

\begin{example} \textbf{Dirichlet boundary conditions.}
\label{Example: Dirichlet boundary conditions Gamma}
Let us denote by  $\Gamma_D$  the solution $\Gamma$ to
(\ref{Formula: Delta Gamma = delta}) 
satisfying the zero boundary conditions, namely,
\begin{equation}
  \left. \Gamma_D^{\HH}(z,w) \right|_{z\in\mathbb{R}}=0,\quad
  \lim_{z\rightarrow \infty}\Gamma_D^{\HH}(z,w) = 0,\quad w\in \HH.
\end{equation}
Then, $\Gamma_D$ admits the form
\begin{equation}
  \Gamma_D^{\HH}(z,w):=-\frac12\log
  \frac{(z-w)(\bar z-\bar w)}{(z-\bar w)(\bar z-w)},
  \label{Formula: Gamma_D = Log...} 
\end{equation}
and possesses the property of symmetry with respect to all M\"obious automorphisms 
$H:\Dc\map\Dc$,
\begin{equation}
	H_* \Gamma_D = \Gamma_D
	\label{Formula: G Gamma_D = G}
\end{equation}
or
\begin{equation}
  \Lc_{\sigma} \Gamma_D(z,w) = 0~,\quad 
  \forall \text{ complete vector field } \sigma~.
  \label{Formula: Lv Gamma_D = 0}
\end{equation}
\end{example}


\begin{example} \textbf{Combined Dirichlet-Neumann boundary conditions.}
\label{Example: Gamma: Combined Dirichlet-Neumann boundary conditions}
 Let $\Gamma_{DN}$ denote the solution to
 (\ref{Formula: Delta Gamma = delta}) 
 satisfying the following boundary conditions in the strip chart 
 \begin{equation}\begin{split}
	&\left. \Gamma_{DN}^{\SSS}(z,w) \right|_{z\in\mathbb{R}}=0, \quad
  \left. \de_y\Gamma_{DN}^{\SSS}(x+i y,w) \right|_{y=\pi }=0, \quad
  x\in\mathbb{R},\\
  &\lim_{z\rightarrow \infty\wedge \Re z>0}\Gamma_{DN}^{\SSS}(z,w) = 0,\quad
  \lim_{z\rightarrow \infty\wedge \Re z<0}\Gamma_{DN}^{\SSS}(z,w) = 0,\quad
  w\in \HH.
\end{split}\end{equation}
We consider this case in Section
\ref{Section: Coupling of forward dipolar SLE and combined Dirichlet-Neumann GFF}
and the exact form of $\Gamma_{DN}$ is given by
\eqref{Formula: G_DN^S = ...}.
It is not invariant with respect to 
all M\"obious automorphisms but it is invariant if the automorphism
preserves the points of change of the boundary conditions, which are
$\pm \infty$ in the strip chart.
%
\end{example}

The set of bilinear functionals $\Hc_s^* \otimes \Hc_s^*\map \mathbb{R}$ over
$\Hc_s^*$ can be obtained from the corresponding set of functionals over
$\Hc_s$ with the equivalence relation
\begin{equation}
	B[f,g] \sim B[f,g] + \beta[f] + \beta[g],\quad \beta\in\Hc_s'.
\end{equation}
In particular, $B^{\psi}(z,w)$ is defined up to a sum 
$\beta^{\psi}(z)+\beta^{\psi}(w)$, for any locally integrable real-valued
function $\beta(z)$.

\begin{example}
\label{Example: Neumann boundary conditions Gamma}
Let $\Hc=\Hc_s^*$, define the bilinear functional by the regular expression 
\begin{equation}
  \Gamma_N^{\HH}(z,w):=-\frac12\log
  (z-w)(\bar z-\bar w)(z-\bar w)(\bar z-w) + \beta(z) +\beta(w).
  \label{Formula: Gamma_N = Log...} 
\end{equation}
The functional $\Gamma_N$ is invariant with respect to 
M\"obious transforms $H$
\begin{equation}
	\Gamma[f,g] = H_* \Gamma[f,g],\quad f,g\in \Hc_s^* 
\end{equation}
as well as $\Gamma_D$.
Equivalently,
\begin{equation}
	\Gamma(z,w) = \Gamma(H^{-1}(z),H^{-1}(w)) + \varepsilon(z)+\varepsilon(w)
\end{equation}
for some locally integrable function $\varepsilon$.

If we ignore the term $\beta(z) +\beta(w)$ the function 
$\Gamma_N^{\HH}(z,w)$ possesses the Neumann boundary condition 
\begin{equation}\begin{split}
  &\left. \de_y \Gamma_{N}^{\HH}(x+iy,w) \right|_{y=0 }=0,\quad
  x\in (-\infty,+\infty),\quad w\in \HH.
  \label{Formula: de Gamma_N = 0}
\end{split}\end{equation}
There is a singular point at the infinity, but it can be cancelled by some
proper choice of $\beta$. 
\end{example}

We will consider one more example ($\Gamma_{\text{tw},b}$) in Section
\ref{Section: Coupling with twisted GFF}.

\section{Gaussian free field}
\label{Section: Gaussian free field}

\begin{definition}
Let for some nuclear space of smooth functions $\Hc$ 
the linear functional $\eta$ and some Green's functional
$\Gamma$ are given.
Assume in Theorem
\ref{Theorem: Bochner-Minols}
\begin{equation}
  \hat \mu[f] := \exp{\left(-\frac12 \Gamma[f,f]+ i\eta[f] \right)},
  \quad f\in\Hc.
  \label{Formula: GFF chracteristic function 2}
\end{equation}
Then the $\Hc'$-valued random variable $\Phi$ is called the
\emph{Gaussian free field (GFF)}.
\index{Gaussian free field (GFF)}
We will denote it by 
$\Phi(\Hc,\Gamma,\eta)$.
\index{$\Phi(\Hc,\Gamma,\eta)$}
\end{definition}

For convenience, we change the definition of the characteristic function from
(\ref{Formula: hat mu = int exp iPhi dPhi 2}) to
\begin{equation}
 \hat\phi[f] := 
 \int\limits_{\Phi\in \Hc_s'} e^{\Phi[f]} P_{\Phi}(d\Phi)
 ,\quad \forall f \in \Hc,
 \label{Formula: chi = int exp iPhi dPhi}
\end{equation}
and (\ref{Formula: GFF chracteristic function 2}) changes to
\begin{equation}
	\hat\phi[f] = e^{\left(\frac12 \Gamma[f,f]+\eta[f] \right)},
	\label{Formula: GFF chracteristic function}
\end{equation}
which is possible for the Gaussian measures.

The \emph{expectation} of a random variable $X[\Phi]$ ($X:\Hc' \map \C$) is
defined as
\begin{equation}
 \Ev{X}:=\int\limits_{\Phi\in \Hc_s'} X[\Phi] P(d\Phi).
\end{equation}
 
An alternative and equivalent (see, for example \cite{Hida2008}) definition of
GFF can be formulated as follows:

\begin{definition}	
The \emph{Gaussian free field} 
$\Phi$ 
is a 
$\Hc'$-valued 
random variable, that is a map
$\Phi\colon \Hc\times \Omega \map \mathbb{R}$ 
(measurable on 
$\Omega$ 
and continuous linear on the nuclear space
$\Hc$),
or a measurable map 
$\Phi\colon \Omega \map \Hc'$,
such that 
$\mathrm{Law}[\Phi[f]]=N\left(\eta[f],\Gamma[f,f]^{\frac12} \right),
~ f\in \Hc$,
i.e.,
it possesses the properties
\begin{equation}
  \Ev{\Phi[f]}=\eta[f], \quad \forall~f\in \Hc,
 \end{equation}
 \begin{equation}
  \Ev{\Phi[f]\Phi[f]}=\Gamma[f,f]+\eta[f]\eta[f], \quad \forall~f\in \Hc~
\end{equation}
for Green's bilinear positively defined functional $\Gamma$, and for a linear
functional $\eta$.
\end{definition}

The random variable $\Phi$ introduced this way transforms from one chart to
another according to the pre-pre-Schwarzian rule 
\begin{equation}
 \Phi^{\tilde \psi}[f] = \Phi^{\psi}[f] - 
 \int\limits_{\psi(\Dc)} \left( \mu \log \tau^{-1}{}'(z) + \mu^* \overline{ \log \tau^{-1}{}'(z)} \right) 
 f^{\psi}(z) l(dz),\quad \tau:=\psi \circ \tilde \psi^{-1},
 \label{Formula: Phi^tilde psi[f] = Phi^psi [f] - mu in log tau f ldz}
\end{equation}
due to the corresponding property
\eqref{Formula: eta^tilde psi[f] = eta psi[f] - mu int ...}
of $\eta$.

The pushforward can also be defined by
\begin{equation}\begin{split}
	&F_* \Phi^{\psi} [f] = 
	\Phi^{\psi}[F_*^{-1} f] 
	+	\int\limits_{\supp f^{\psi}} 
	\left( 
		\mu \log \left(F^{\psi}\right)^{-1}{'}(z) +
		\mu^{*}	\overline{\log \left(F^{\psi}\right)^{-1}{'}(z)}
	\right)
	f(z)	l(dz),\\
	&	f\in\Hc\colon\supp f\subset \Im(F)
	\label{Formula: F eta f= eta F-1 f}
\end{split}\end{equation}
as well as the Lie derivatives
\begin{equation}\begin{split}
	\Lc_v \Phi[f] =& 
	\left. \frac{\de}{\de s} {H_s^{-1}[v]}_* \Phi^{\psi} [f] \right|_{s=0} 
	=\\=&
	-\Phi^{\psi}[\Lc_{v} f] 
	+ \int\limits_{\supp f^{\psi}} 
	\left(
		\mu v^{\psi}{}'(z) + \mu^* \overline{v^{\psi}{}'(z)} 
	\right)
	f^{\psi}(z) l(dz),
\end{split}\end{equation}



\begin{example}
Let $\Hc:=\Hc_s$,
$\Gamma := \Gamma_D$ (as in Example \ref{Formula: Gamma_D = Log...}), let
$\eta^{\psi}(z) := 0$ in all charts $\psi$ ($\mu=\mu^*=0$). 
Then we call $\Phi(\Hc_s,\Gamma_D,0)$ the Gaussian free field with \emph{zero
boundary condition}.
\end{example}

\begin{example} 
Relax the previous example. Let $\eta^{\psi}$ be a harmonic function in
$D^{\psi}$ continuously extendable to the boundary $\de D^{\psi}$ if the
chart map $\psi$ can be extended to $\de\Dc$.
Then 
we call $\Phi$ the Gaussian free field with the \emph{Dirichlet boundary condition}.  
\end{example}


We can define the Laplace-Beltrami operator $\Delta_{\lambda}$ over $\Phi$ 
as well as the Lie derivative by
\begin{equation}
	(\Delta_{\lambda} \Phi)[g]:=\Phi[\Delta_{\lambda} g]
	,\quad g\in\Hc,
\end{equation}
where $\Delta_{\lambda}$ on a ($1,1$)-differential is defined by
\begin{equation}
	\Delta_{\lambda} g^{\psi}(z):=-4\de_z \de_{\bar z}
	\frac{g^{\psi}(z)}{\lambda^{\psi}(z)}
\end{equation}
in any chart $\psi$. 
If $\eta$ is harmonic the identity 
\begin{equation}\begin{split}
	&\Ev{(\Delta_{\lambda} \Phi)[g] \Phi[f_1]\Phi[f_2]\dotso\Phi[f_n]}
	=\\=&
	\sum\limits_{i=1,2,\dotso,n}
	\delta_{\lambda}[g,f_i] \,
	\Ev{\Phi[f_1]\Phi[f_2]\dotso \Phi[f_{i-1}]\Phi[f_{i+1}]...\dotso\Phi[f_n]}
	\label{Formula: E[DD Phi ] = delta}
\end{split}\end{equation}
is satisfied. Thereby, one can write heuristically
\begin{equation}
	\Delta_{\lambda} \Phi(z) = 0
	,\quad z\not\in\supp f_1 \cup \supp f_2 \cup\dotso \cup\supp f_n.   
\end{equation}

It turns out that the characteristic functional $\hat\phi$ is also a derivation functional for the  correlation functions.
Define the variational derivative over some functional $\nu$ as a map
$\frac{\delta}{\delta f}\colon  \nu\mapsto \frac{\delta}{\delta f} \nu$ to the set of functionals by 
\begin{equation}
 \left(\frac{\delta}{\delta f} \nu \right) [g]:= 
  \left.\frac{\de}{\de \alpha} \nu [g+\alpha f] \right|_{\alpha=0},\quad \forall f,g\in \Hc.
\end{equation}

If $\nu$ is such that $\nu[g+\alpha f]$ is an analytic function with respect to $\alpha$ in a neighbourhood of $\alpha=0$ for each $f$ and $g$, like $\hat \phi$, 
it is straightforward to see that for each $g,f_1,f_2,\dotso\in\Hc$
\begin{equation}
 \nu[g] = \nu[0],\quad g\in \Hc  
 \quad \Leftrightarrow \quad
 \left( \frac{\delta }{\delta f_1} \frac{\delta }{\delta f_2} \dotso \frac{\delta }{\delta f_n}
  \nu\right)[0] = 0,\quad n=1,2,\dotso.
\end{equation}
Define the Schwinger functionals as
\begin{equation}
  S_n[f_1,f_2,\dotso ,f_n]:=
  \Ev{\Phi[f_1]\Phi[f_2]\dotso\Phi[f_n]}=
  \left( \frac{\delta }{\delta f_1} \frac{\delta }{\delta f_2} \dotso \frac{\delta }{\delta f_n}
  \hat\phi\right)[0]
 \label{Formula: S[f f...f] = d d...d mu}
\end{equation}
where $\hat\phi\colon \Hc\map \mathbb{R}$ is defined in 
(\ref{Formula: GFF chracteristic function}).

The identity 
\eqref{Formula: E[DD Phi ] = delta}
can be reformulated as 
\begin{equation}
	\Ev{(\Delta_{\lambda}\Phi)[g] e^{\Phi[f]}} =
	\left(
		\delta_{\lambda}[g,f] + \eta [\Delta_{\lambda} f]
	\right) \hat \phi[f]
	,\quad f,g\in\Hc.
\end{equation}

\section{The Schwinger functionals}
\label{Formula: The Schwinger functions}

In this section, we consider the Schwinger functionals defined by
(\ref{Formula: S[f f...f] = d d...d mu}) 
and their derivation functional 
$\hat \phi$ in more detail.

For any finite collection $\{f_1,f_2,\dotso,f_n\}$ of functions from $\Hc_s$ or
$H_{\Gamma}$, the collection of random variables 
$\{ \Phi[f_1],\Phi[f_2],\dotso,\Phi[f_n] \}$ 
has the multivariate normal distribution. Thus, we have
\begin{equation}
 \Ev{\Phi[f_1]\Phi[f_2]\dotso\Phi[f_n]}=\sum\limits_{\text{partitions}} 
  \prod \limits_k \Gamma[f_{i_k},f_{j_k}]~,
\end{equation}
for 
$\eta(z)\equiv 0$, 
where the sum is taken over all partitions of the set 
$\{1,2\dotso,n\}$ 
into disjoint pairs 
$\{i_k,j_k\}$.
In particular, the expectation of the product of an odd number of  fields is
identically zero.
For the general case ($\eta\not\equiv 0$) the Schwinger functionals are   
\begin{equation}
 S[f_1,f_2,\dotso,f_n]:=
 \Ev{\Phi[f_1]\Phi[f_2]\dotso\Phi[f_n]}=
 \sum\limits_{\text{partitions}} 
   \prod \limits_k \Gamma[f_{i_k},f_{j_k}] \prod \limits_l \eta[f_{i_l}],
 \label{Formula: general form of correlation functions}
\end{equation}
where the sum is taken over all partitions of the set 
$\{1,2\dotso,n\}$ 
into
disjoint non-ordered pairs 
$\{i_k,j_k\}$, 
and non-ordered single elements
$\{i_l\}$.
In particular,
\begin{equation}\begin{split}
 S_1[f_1]=&\eta[f_1],\\
 S_2[f_1,f_2]=&\Gamma[f_1,f_2] + \eta[f_1]\eta[f_2],\\
 S_3[f_1,f_2,f_3]=&\Gamma[f_1,f_2]\eta[f_3] + 
  \Gamma[f_3,f_1] \eta[f_2] + 
  \Gamma[f_2,f_3] \eta[f_1]+
  \eta[f_1] \eta[z_2] \eta[f_3],\\
 S_4[f_1,f_2,f_3,f_4]=&
 \Gamma[f_1,f_2]\Gamma[f_3,f_4]+\Gamma[f_1,f_3]\Gamma[f_2,f_4]+\Gamma[f_1,f_4]\Gamma[f_2,f_3]
 +\\+&
 \Gamma[f_1,f_2]\eta[f_3]\eta[f_4]+\Gamma[f_1,f_3]\eta[f_2]\eta[f_4]+\Gamma[f_1,f_4]\eta[f_2]\eta[f_3]
 +\\+&
 \eta[f_1]\eta[f_2]\eta[f_3]\eta[f_4].
 \label{Formula: S_1=... S_2=... S_3=... S_4=...}
\end{split}\end{equation}
Such correlation functionals 
are called the \emph{Schwinger functionals}. Their kernels 
\begin{equation}
	S_n(z_1,z_2,\dotso,s_n)
\end{equation}
are known as Schwinger functions or $n$-point
functions.
For regular functionals $\Gamma$ and $\eta$, the Schwinger functions are also
regular but it is still reasonable to understand $S_n$ as a functional because
the derivatives are not regular. For example,
\begin{equation}
	{\Delta_{\lambda}}_1 S_2^{\psi}(z,w) = 2\pi \delta_{\lambda}(z-w).
\end{equation}

The transformation rules for $S_n$ (the behaviour under the action of $G_*$) 
are quite complex. We present here only the infinitesimal ones
\begin{equation}\begin{split}
 &\Lc_v S_n^{\psi}[f_1,f_2,\dotso] =
 -\sum\limits_{1\leq k\leq n} S_n^{\psi}[f_1,f_2,\dotso \Lc_v f_k,\dotso, f_n] 
 -\\-&
 \sum\limits_{1\leq k\leq n}
 S_{n-1}^{\psi}[f_1,f_2,\dotso f_{k-1},f_{k+1},\dotso f_{n-1}] 
 \int\limits_{\psi(\Dc)} \left( \mu v^{\psi}{}'(z) + \mu^* \overline{ v^{\psi}{}'(z)} \right) 
 f_k^{\psi}(z) l(dz) 
\end{split}\end{equation}

We prefer to work with the characteristic functional $\hat\phi$, rather
than with $S_n$. For instance, for any endomorphism 
$F\colon \tilde \Dc \map \Dc$ 
We can define the pushforward operation 
$F_*\colon \hat\phi (\Gamma, \eta) \mapsto \hat\phi (F_*\Gamma,F_*\eta)$ 
which maps functionals 
on $\Dc$ to functionals on $\tilde \Dc$. Equivalently,
\begin{equation}
	\left( F_* \hat\phi (\Gamma, \eta) \right) [\tilde f] :=
	\hat\phi(F_*\Gamma, F_* \eta)[\tilde f],
	\quad \tilde f\in \Hc_s[\tilde \Dc].
 \label{Formula: F phi(Gamma, eta) = phi(F^-1 Gamma, F^-1 eta)}
\end{equation}
(we need to mark the dependence on the functionals $\Gamma$ and on $\eta$ here).

The Lie derivative $\Lc_v$ over an arbitrary nonlinear functional 
$\rho\colon \Hc_s\map \mathbb{C}$ 
can also be defined as
\begin{equation}
 \Lc_v \rho[f]:=(\Lc_v \rho)[f]=
 \frac{\de}{\de \alpha}\left. (H_{\alpha}[v]_*^{-1} \rho)[f] \right|_{\alpha=0}
\end{equation} 
(if the partial derivative over $\alpha$ 
is well-defined).

For example,
\begin{equation}
 \Lc_v \exp{\left( \rho[f] \right)} = 
 (\Lc_v \rho[f]) \exp{\left( \rho[f] \right)},
 \label{Formula: L exp nu = L nu exp nu}
\end{equation}
\begin{equation}
 \Lc_v^2 \exp{\left(\rho[f] \right)} = 
 \left(\Lc_v^2 \rho[f] + (\Lc_v \rho[f])^2 \right) \exp{\left( \rho[f] \right)}.
 \label{Formula: L2 exp nu = (L2 nu + L nu 2) exp nu}
\end{equation}
In our case $\rho[f] = \hat \phi[f]= \frac12 \Gamma[f,f] + \eta[f]$.
We remind that the Lie derivative of $\eta$ and $\Gamma$ are defined in 
(\ref{Formula: L eta = v d eta + chi dv})
and
(\ref{Formula: L Gamma = ...})
respectively. 

The operations $G_*^{-1}$ and $\frac{\delta}{\delta f}$ or 
$\Lc$ and $\frac{\delta}{\delta f}$ 
commute. Thus, for example, we have
\begin{equation}
 \Lc_v S_n[f_1,f_2,\dotso,f_n] = 
 \left( \frac{\delta }{\delta f_1} \frac{\delta }{\delta f_2} \dotso \frac{\delta }{\delta f_n}
  \Lc_v \hat \phi \right)[0].
 \label{Formula: LS[f f...f] = d d...d L mu}
\end{equation}
We use this to deduce the martingale properties of $G_t^{-1}{}_*S_n$ and of all
their variational derivatives from the martingale property of
$G_t^{-1}{}_*\hat\phi$, which will be discussed in the next section.

\section{Gaussian Hilbert space}
\label{Section: Gaussian Hilbert space}

Define inner product on $\Hc$ by
\begin{equation}
	(f,g)_{\Gamma} := \Gamma[f,g],\quad f,g\in \Hc.
	\label{Formula: (f,g)_Gamma = Gamma[f,g]}
\end{equation}
If the bilinear functional $\Gamma$ is positively defined, then consider the
real Hilbert space
\begin{equation}
	H_{\Gamma} := \overline{\Hc}^{(\cdot,\cdot)_{\Gamma}}
\end{equation}
obtained by the completion of 
$\Hc$ 
with respect to the inner product
$(\cdot,\cdot)_{\Gamma}$. 

\begin{proposition}
Let $\eta$ be continuous with respect to
the topology of $H_{\Gamma}$, and  let consequently
$\eta^{\psi}\in H_{\Gamma}'$.
For any $f\in H_{\Gamma}$ there exists a Gaussian random variable
$\Phi[f]\in L_2(\Omega)$
with the expectation $\eta[f]$ and the covariance 
$(f,f)_{\Gamma}^{\frac12}$. Moreover, for any collection
of $f_1,f_2,\dotso,f_N\in H_{\Gamma}$ the set 
$\{\Phi[f_1],\Phi[f_2],\dotso,\Phi[f_N]\}$ is a collection of jointly Gaussian
variables and
\begin{equation}
	\Ev{(\Phi[f_n]-\eta[f_n])(\Phi[f_m]-\eta[f_m])} = \Gamma[f_n,f_m]
	,\quad n,m=1,2,\dotso,N.
	\label{Formula: E[Phi Phi] = G[f,f]}
\end{equation}    
\end{proposition}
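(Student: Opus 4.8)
The plan is to construct the Gaussian family $\{\Phi[f]\}_{f\in H_\Gamma}$ by extending the finite-dimensional construction provided by the Bochner--Minlos theorem (Theorem~\ref{Theorem: Bochner-Minols}) from the nuclear test space $\Hc$ to its Hilbert-space completion $H_\Gamma$, using an $L_2(\Omega)$-isometry argument. First I would observe that on $\Hc$ itself the random variable $\Phi[f]$ is already defined: by the definition of GFF with characteristic functional $\hat\mu[f]=\exp\bigl(-\tfrac12\Gamma[f,f]+i\eta[f]\bigr)$, each $\Phi[f]$ is a real Gaussian variable with mean $\eta[f]$ and variance $\Gamma[f,f]=(f,f)_\Gamma$, and for any finite collection $f_1,\dots,f_N\in\Hc$ the vector $(\Phi[f_1],\dots,\Phi[f_N])$ is jointly Gaussian with covariance matrix $\bigl(\Gamma[f_n,f_m]\bigr)_{n,m}$ --- this follows by differentiating $\hat\mu$ at linear combinations $\sum t_k f_k$, exactly as in \eqref{Formula: S_1=... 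S_2=... S_3=... S_4=...}. In particular $f\mapsto \Phi[f]-\eta[f]$ is a linear map from $\Hc$ into $L_2(\Omega)$ which is an isometry onto its image when $\Hc$ carries the inner product $(\cdot,\cdot)_\Gamma$, since $\Ev{(\Phi[f]-\eta[f])^2}=\Gamma[f,f]=\|f\|_\Gamma^2$.

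Next I would extend this isometry by continuity. Since $H_\Gamma=\overline{\Hc}^{(\cdot,\cdot)_\Gamma}$ and $L_2(\Omega)$ is complete, the isometric linear map $\Hc\to L_2(\Omega)$, $f\mapsto\Phi[f]-\eta[f]$, extends uniquely to an isometry $H_\Gamma\to L_2(\Omega)$; call its value at $f\in H_\Gamma$ the centered variable $\Phi_0[f]$, and set $\Phi[f]:=\Phi_0[f]+\eta[f]$, which makes sense because $\eta$ is assumed continuous on $H_\Gamma$, i.e. $\eta\in H_\Gamma'$. Then for $f\in H_\Gamma$ pick a sequence $f_n\in\Hc$ with $f_n\to f$ in $H_\Gamma$; the variables $\Phi[f_n]$ are Gaussian with means $\eta[f_n]\to\eta[f]$ and variances $\Gamma[f_n,f_n]\to\Gamma[f,f]$, and they converge in $L_2(\Omega)$ to $\Phi[f]$. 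Because an $L_2$-limit of Gaussian variables is Gaussian (the characteristic functions $\exp\bigl(it\eta[f_n]-\tfrac12 t^2\Gamma[f_n,f_n]\bigr)$ converge pointwise to $\exp\bigl(it\eta[f]-\tfrac12 t^2\Gamma[f,f]\bigr)$), we get that $\Phi[f]$ is Gaussian with the asserted mean $\eta[f]$ and variance $(f,f)_\Gamma$.

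For the joint statement \eqref{Formula: E[Phi Phi] = G[f,f]}, I would take $f_1,\dots,f_N\in H_\Gamma$ and approximating sequences $f_k^{(n)}\in\Hc$ with $f_k^{(n)}\to f_k$. Each vector $(\Phi[f_1^{(n)}],\dots,\Phi[f_N^{(n)}])$ is jointly Gaussian; joint Gaussianity is preserved under $L_2$-convergence of the components (again by convergence of the multivariate characteristic functions, using that $\eta[\sum t_k f_k^{(n)}]\to\eta[\sum t_k f_k]$ and $\Gamma[\sum t_k f_k^{(n)},\sum t_k f_k^{(n)}]\to\Gamma[\sum t_k f_k,\sum t_k f_k]$ by bilinearity and continuity). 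Hence $(\Phi[f_1],\dots,\Phi[f_N])$ is jointly Gaussian, and the covariance identity follows from
\begin{equation}
	\Ev{(\Phi[f_n]-\eta[f_n])(\Phi[f_m]-\eta[f_m])} = (\Phi_0[f_n],\Phi_0[f_m])_{L_2(\Omega)} = (f_n,f_m)_\Gamma = \Gamma[f_n,f_m],
\end{equation}
using the polarization identity together with the isometry property established above.

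The main obstacle I expect is not the Gaussian limiting argument, which is routine, but rather checking the two continuity/extension hypotheses carefully: first, that the inner product $(f,g)_\Gamma=\Gamma[f,g]$ is genuinely positive definite (not merely nonnegative) on $\Hc$ so that $H_\Gamma$ is a bona fide Hilbert space and the completion does not collapse vectors --- this is where the specific structure of $\Gamma$ (e.g. the Dirichlet form $\Gamma_D$ of \eqref{Formula: Gamma_D = Log...}) enters and must be invoked; and second, that $\eta$ really does extend to a bounded functional on $H_\Gamma$, which is an assumption of the proposition but must be used consistently so that $\Phi[f]=\Phi_0[f]+\eta[f]$ is well-defined for every $f\in H_\Gamma$ and agrees with the earlier definition on $\Hc$. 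Once these are in place, the construction of $\Phi[f]$ as an element of $L_2(\Omega)$ and the verification of its law are immediate from the isometry.
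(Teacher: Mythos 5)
Your proposal is correct and follows essentially the same route as the paper: approximate $f\in H_\Gamma$ by test functions, use $\Ev{(\Phi[g_k]-\Phi[g_l])^2}=\|g_k-g_l\|_\Gamma^2$ to get $L_2(\Omega)$-convergence, and identify the (joint) Gaussian law of the limits by passing to the limit in the characteristic functionals. The only difference is presentational — you phrase the extension as an isometry $H_\Gamma\to L_2(\Omega)$ and add the (worthwhile) caveats about positive definiteness of $(\cdot,\cdot)_\Gamma$ and continuity of $\eta$, which the paper handles elsewhere via \eqref{Formula: ()_G = () = ()_nabla}.
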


\begin{proof}
Consider a fundamental sequence $\{g_n\}_{n=1,2,\dotso}$ such as
\begin{equation}
	g_k \xrightarrow{H_{\Gamma}} f,\quad k\map +\infty.
\end{equation}
Then, the sequence $\Phi[g_k]$, $k=1,2,\dotso$ converges in $L_2(\Omega)$
because
\begin{equation}
	\Ev{\left(\Phi[g_n] - \Phi[g_m]\right)^2} 
	= \Gamma[g_k - g_l,g_k - g_l] = (g_k-g_l,g_k-g_l)_{\Gamma}
\end{equation}
tends to zero according to the definition of the
sequence $\{g_k\}_{n=1,2,\dotso}$ we started from. 

To show
\eqref{Formula: E[Phi Phi] = G[f,f]}
we can consider the limit of characteristic functions
\begin{equation}\begin{split}
	&\Ev{e^{\Phi[g_{1,k_1}] + \Phi[g_{2,k_2}] + \dotso + \Phi[g_{N,k_N}] }}
	=\\=&
	e^{\frac12 \Gamma[g_{1,k_1} + g_{2,k_2} + \dotso + 
	 g_{N,k_N}, g_{1,k_1} + g_{2,k_2} + \dotso
	+ g_{N,k_N}] + \eta[g_{1,k_1} +
	g_{2,k_2} + \dotso + g_{N,k_N}]} \map\\
	&\map  e^{\frac12 \Gamma[f_{1}+f_{2}+\dotso+f_{N}, f_{1}+f_{2}+\dotso+f_{N}] + 
	\eta[f_{1}+f_{2}+\dotso+f_{N}]},\\
	&g_{n,k_n} \xrightarrow{H_{\Gamma}} f_n
	,\quad k_n \map + \infty
	,\quad n=1,2,\dotso N.
\end{split}\end{equation}
\end{proof}

We remark that 
$\Phi$ 
can not be understood as a random variable that takes values in 
$H_{\Gamma}'$, 
even though 
$\Phi[f]$ 
is a random variable for any 
$f\in H_{\Gamma}$. 
However, the functionals 
$\Gamma$, $\eta$, $S_n$,
and 
$\hat\phi$ 
are defined for 
$f\in H_{\Gamma}$
as well and the machinery from the previous section can be extended from
$f\in\Hc$ to $f\in H_{\Gamma}$

\begin{definition}
\emph{Gaussian Hilbert space} 
\index{Gaussian Hilbert space}
is subspace $\Phi[H_{\Gamma}]\subset L_2(\Omega)$ consisting of vector 
$\Phi[f]$ for $f\in H_{\Gamma}$. 
\end{definition}

Define the convolution
\begin{equation}
	(\Gamma * f)^{\psi}(z) :=
	\int\limits_{\psi(\Dc)} \Gamma(z,w) f^{\psi}(w) l(dz)
	,\quad f\in\Hc
\end{equation}
for any chart $\psi$. Let
\begin{equation}
	\Hc^1_{\Gamma}:=\Gamma*\Hc
\end{equation}
be the image space which is also nuclear with respect to the image topology.
It consists of functions
\begin{equation}
	h = \Gamma * f\in\Hc_{\Gamma}^1,\quad f\in\Hc
	\label{Formula: h = Gamma f}
\end{equation}
that transform as scalars
\begin{equation}
	F_* h^{\psi} = h^{\psi} \circ \left(F^{\psi}\right)^{-1} .
\end{equation}
We remark that $h$ possess the some boundary
conditions as $\Gamma$. In particular, if $\Gamma=\Gamma_{DN}$ is as in Example
\ref{Example: Gamma: Combined Dirichlet-Neumann boundary conditions}
then $h$ posses Dirichlet and Neumann boundary conditions on corresponding
intervals of $\de\Dc$.

Let
\begin{equation}
	(h,f)
	:= \int\limits_{\psi(\Dc)} f^{\psi}(z) h^{\psi}(z) l(dz)
	,\quad f\in \Hc,\quad h\in \Hc_{\Gamma}^1. 
	\label{Formula: (,) := ...}
\end{equation}
and
\begin{equation}
	(h_1,h_2)_{\nabla}
	:= 4\int\limits_{\psi(\Dc)} \overline{\de h_1^{\psi}(z)} \de h_2^{\psi}(z)
	l(dz)
	\label{Formula: (,)_nabla := ...}
\end{equation}
be the Dirichlet inner product. We notice that the form
\eqref{Formula: (,)_nabla := ...}
is chart independent. This property can be called 
\emph{conformal invariance of the Dirichlet inner form}.
\index{conformal invariance of the Dirichlet inner form}
It is straightforward to check that
\begin{equation}
	(f_1,f_2)_{\Gamma}
	=(f_1,h_2)=(h_1,f_2)
	= \frac{1}{2\pi}\left(h_1, h_2 \right)_{\nabla}
	,\quad h_i=\Gamma * f_i
	,\quad i=1,2,
	\label{Formula: ()_G = () = ()_nabla}
\end{equation}
if $\Gamma$ possesses ether Dirichlet or Neumann boundary conditions on each
interval of the boundary. This condition is satisfied for all choices of
$\Gamma$ in this text.
For the last relation and 
\eqref{Formula: (,)_nabla := ...}
we can conclude that 
$(\cdot,\cdot)_{\Gamma}$ 
is positively defined.

The separable Hilbert space
\begin{equation}
	H^1_{\Gamma} := \overline{\Hc_s^1}^{(\cdot,\cdot)_{\nabla}}
\end{equation}
is naturally isomorphic to $H_{\Gamma}$ with the convolution map 
\eqref{Formula: h = Gamma f}. 
As well as $\Hc^1_{\Gamma}$ it consists of functions satisfying the same
boundary conditions as $\Gamma$.

Due to 
\eqref{Formula: ()_G = () = ()_nabla}
it is natural to associate the Gaussian free field $\Phi$ with the space 
$H_{\Gamma}^{1}$. 
Let
$\{\xi_n\}_{n=1,2\dotso}$
be a countable collection of independent standard normally distributed random
variables, let $\{e_n\}_{n=1,2,\dotso}$ be an orthonormal basis of
$H_{\Gamma}$, let $\{e_n^1\}_{n=1,2,\dotso}$ be an orthonormal basis of
$H^1_{\Gamma}$ such that 
$(e_i^1,e_j)=\delta_{i,j}$, $i,j=1,2,\dotso$, 
and let
\begin{equation}
	f= \sum\limits_{n=1,2,\dotso} f_i e_i.
\end{equation}
Then, the GFF $\Phi(\Hc,\Gamma,0)$ can be equivalently defined by
\begin{equation}
	\Phi[f] := \sum\limits_{n=1,2,\dotso} \xi_i f_i,\quad f\in H_{\Gamma}
\end{equation}
and can be understood as formal series
\begin{equation}
	\Phi = \sum\limits_{n=1,2,\dotso} \xi_i e_i^1.
	\label{Formula: Phi = sum xi_i e_i}
\end{equation}
We notice that 
$(\Phi,\Phi)_{\nabla}$ diverges a.s. 

\begin{figure}
\centering
\includegraphics[width=\textwidth,keepaspectratio]{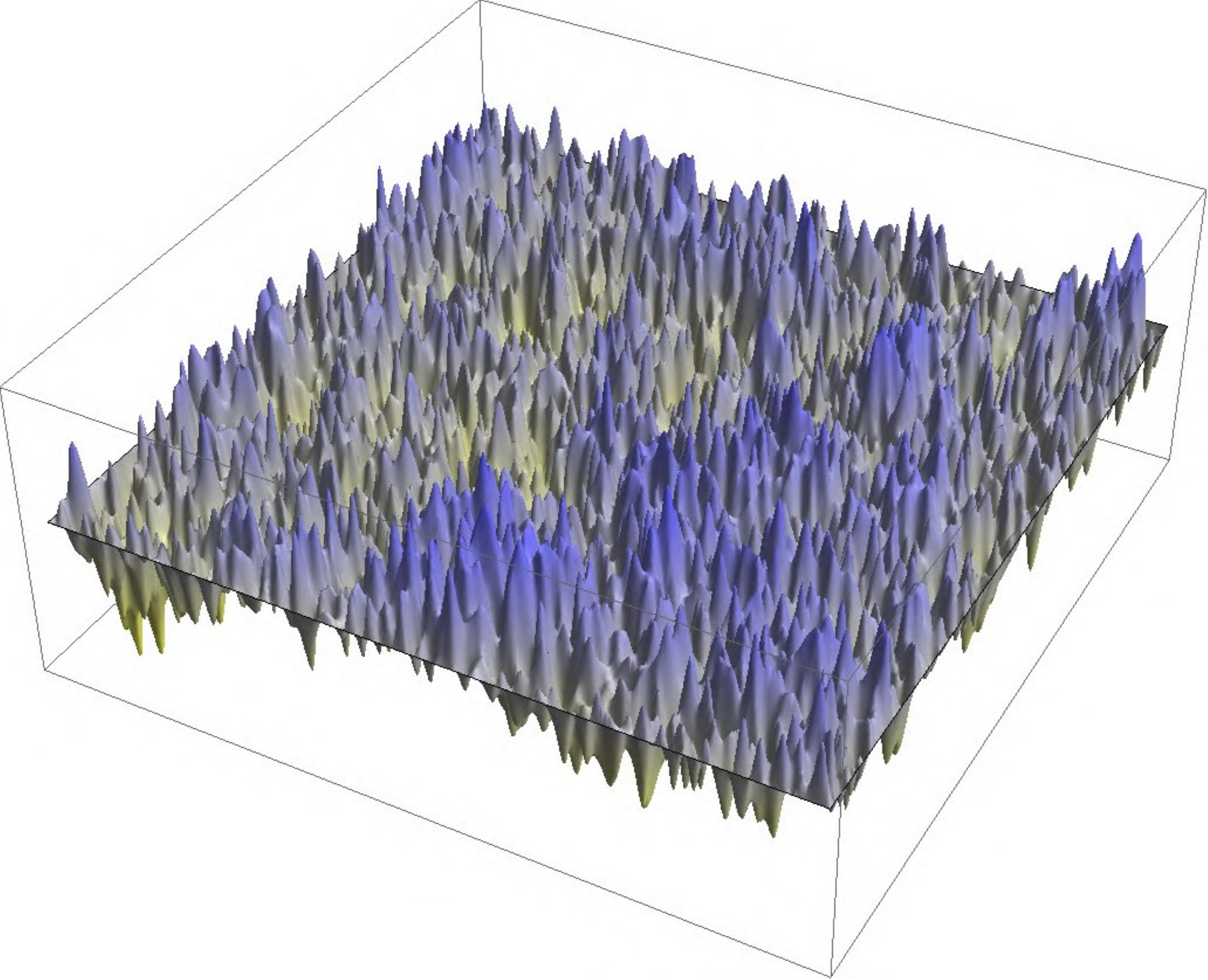}
\caption{Visualisation of a sample of the Gaussian free field with
$\Gamma=\Gamma_D$ on a square $[0,\pi]\times[0,\pi]$. The picture is obtained as
a sum of first $10^4$ terms in 
\eqref{Formula: Phi = sum xi_i e_i}. The basis is 
$e^1_{i,j}:= \frac{1}{\sqrt{i^2+j^2}}\sin (x i)\sin (x j)$, $i,j=1,2,\dotso$.
We remark that the series 
\eqref{Formula: Phi = sum xi_i e_i}
does not converge neither uniformly for a.a $z\in[0,\pi]\times[0,\pi]$, nor
pointwise a.s. Thus, this picture provides only a heuristic visualisation.
\label{Figure: GFF_3D}}
\end{figure}

%
%
%
%
%
%
%
%
%
%
%

\section{Stress tensor and the conformal Ward identity}
\label{Section: Stress tensor and the conformal Ward identity}

This section is not essential for understanding the next chapter. We
explain what can be implied under the
\emph{stress tensor} 
from mathematical point of view in frameworks of the probabilistic approach
to CFT. We also derive the \emph{Ward identities}. This provides a connection
between the relations 
(\ref{Formula: L eta + 1/2 L^2 eta}-\ref{Formula: L_sigma Gamma = 0})
of the coupling in the next chapter and the structure of CFT. 


We start from studying the properties of the Dirichlet form 
$(\cdot,\cdot)_{\nabla}$ 
introduced in the previous section. It is conformally invariant in the sense
that the expression on the right-hand side of
\eqref{Formula: (,)_nabla := ...}
is identical in all charts. It is also invariant with respect to the M\"obious
transform $M$:
\begin{equation}
	(M_* h,M_* k)_{\nabla} = (h,k)_{\nabla}
	,\quad h,k\in	\Hc_{\Gamma}^1.
\end{equation}

Our purpose is to study how the Dirichlet inner product 
$(\cdot,\cdot)_{\nabla}$
transforms with respect to infinitesimal non-conformal automorphisms that are
closed to be conformal in the sense that is explained below. Such automorphism
can be induced by a vector field $v$ with the aid of
\eqref{Formula: d H = sigma H ds}.
In order to induce an automorphism the vector field should be tangent at the
boundary, and the automorphisms are conformal if the field is holomorphic. If
both condition are satisfied, then $v$ is complete, see Section 
\ref{Section: SLE preliminaries}.
The vector field $\delta$ defined in
\eqref{Formula: delta with 1 simple pole}
satisfies both conditions except for only one point $a\in\Dc$, where there is a
simple pole.

We  consider now the Lie derivative with respect to a not necessary
holomorphic but continuously differentiable and tangent at the boundary vector
field $v$. Define
\begin{equation}
	\Lc_{v} (h,k)_{\nabla}: = 
	(\Lc_v h,k) + (h,\Lc_v k),\quad h,k \in\Hc_{\Gamma}^1,
\end{equation}
and mention that
\begin{equation}\begin{split}
	&(f_1,\Lc_v f_2)_{\Gamma}
	=(h_1,\Lc_v f_2)
	= -(\Lc_v h_1,f_2)
	= -\frac{1}{2\pi}\left(\Lc_v h_1, h_2 \right)_{\nabla},\\
	&f_i\in \Hc
	,\quad h_i=\Gamma * f_i
	,\quad i=1,2,
	\label{Formula: (f,L f)_G = (L h,h)_nubla}
\end{split}\end{equation}
due to 
\eqref{Formula: ()_G = () = ()_nabla} and the integration by parts.
We use the relation
\begin{equation}
	\de_{\bar z} (\Lc_{v} h^{\psi}(z))  
	= \Lc_{v} (\de_{\bar z} h^{\psi}(z))
	+ (\de_{\bar z} v^{\psi}(z)) \de_z h^{\psi}(z)
	+ \left(\de_{\bar z} \overline{ v^{\psi}(z)}\right) \de_{\bar z} h^{\psi}(z)
\end{equation}
to conclude that
\begin{equation}\begin{split}
	&\Lc_{v} (h, k)_{\nabla} =
	(\Lc_{v} h, k)_{\nabla} +
	( h,\Lc_{v} k)_{\nabla} 
	=\\=&
	4\int\limits_{\psi(\Dc)} \left(
		\de_{\bar z}(\Lc_{v} h^{\psi}(z)) \de_z k^{\psi}(z) + 
		\de_{\bar z} h^{\psi}(z) \de_z ( \Lc_{v}  k^{\psi}(z)) 
	\right)	
	=\\=&
	4\int\limits_{\psi(\Dc)} \left[
 	\de_z 
 	\left( v^{\psi}(z) \de_{\bar z} h^{\psi}(z) \de_z k^{\psi}(z) \right)+
 	\de_{\bar z} 
 	\left(
 		\overline{v^{\psi}(z)} \de_{\bar z} h^{\psi}(z) \de_z k^{\psi}(z)
 	\right)
 	\right.
 	+\\+& 
 	\left.
 	\de_{\bar z} v^{\psi}(z) \de_z h^{\psi}(z) \de_z k^{\psi}(z) 
 	+ \de_{z} \overline{v^{\psi}(z)} \de_{\bar z} h^{\psi}(z) \de_{\bar z}
 	k^{\psi}(z) \right] l(dz) =\\=&
	4\int\limits_{\psi(\Dc)} \left[
 	\left(\de_{\bar z} v^{\psi}(z)\right) \de_z h^{\psi}(z) \de_z k^{\psi}(z) 
 	+ \left(\de_{z} \overline{v^{\psi}(z)} \right) \de_{\bar z} h^{\psi}(z)
 	\de_{\bar z} k^{\psi}(z) \right] l(dz).
 	\label{Formula: formula 6}
\end{split}\end{equation}
The last expression equals to zero if $v$ is a holomorphic vector field, which
is possible only if it is complete.
Define now a regularized version of the vector
field $\delta$ from
\eqref{Formula: delta with 1 simple pole}
by
\begin{equation}
	\delta^{\HH}_{\varepsilon} (z) := 
	\begin{cases} 
		\delta^{\HH}(z)  
		&\mbox{if } |z|>\varepsilon \\
		\delta_{-2} 
		\left(-\frac{2}{\varepsilon^3} |z|^2 + \frac{3}{\varepsilon^2} |z| \right)
		e^{-i\arg z} 
		& \mbox{if } |z|\leq\varepsilon
	\end{cases}
	,\quad z\in\HH,\quad \varepsilon>0.
	\label{Formula: delta_e := ...}
\end{equation}
This vector field is continuously differentiable and tangent at the entire
boundary and 
$\delta_{\varepsilon }\xrightarrow{\varepsilon\map 0} \delta$ 	
pointwise. Besides,
\begin{equation}
	\de_{\bar z} \delta_{\varepsilon}^{\HH}(z) =
	\begin{cases} 
		0	&\mbox{if } |z|>\varepsilon \\
		3\delta_{-2} \frac{\varepsilon-|z|}{\varepsilon^3}
		& \mbox{if } |z|\leq\varepsilon
	\end{cases}	
	,\quad z\in\HH.
\end{equation} 
We substitute now $v:=\delta_{\varepsilon}$ in 
\eqref{Formula: formula 6}
to calculate
\begin{equation}\begin{split}
	\lim\limits_{\varepsilon \map 0} \Lc_{\delta_{\varepsilon}} (h,k)_{\nabla} 
	=& 2 \pi \delta_{-2}
	\left( 
		\left.
			\de_z h(z) \de_z k(z) 
		\right|_{z=0}
		+
		\left.
			\de_{\bar z} h(z) \de_{\bar z} k(z)
		\right|_{z=0} 
	\right)
	=\\=&	
	\pi \delta_{-2} 
	\left( 
		\left. \de_x h(x) \de_x k(x) \right|_{x=0}
		- \left. \de_{y} h(iy) \de_{ y} k(iy) \right|_{y=0}
	\right).
\end{split}\end{equation} 
The geometrical meaning of this relation can be understood as 
all variations of the conformally invariant inner product $(h,k)_{\nabla}$ is
concentrated in the point $a$ where $\delta$ has a pole.
 
Let now 
$h:=\Gamma * f$ and $k:=\Gamma * g$ for some $f,g\in\Hc$. Using
\eqref{Formula: Delta Gamma = delta}
and
\eqref{Formula: (f,L f)_G = (L h,h)_nubla}
we obtain
\begin{equation}\begin{split}
	&\lim\limits_{\varepsilon \map 0} (\Lc_{\delta_{\varepsilon}} \Gamma)[f,g]
	=-\lim\limits_{\varepsilon \map 0} 
	\left( 
		\Gamma[\Lc_{\delta_{\varepsilon}}f,g] + \Gamma[f,\Lc_{\delta_{\varepsilon}}g]
	\right) 
	=\\=&
	\frac{1}{2\pi}
	\lim\limits_{\varepsilon \map 0}
	\Lc_{\delta_{\varepsilon}} \left(\Gamma * f,\Gamma * g\right)_{\nabla} 
	=\\=&
	\frac{\delta_{-2}}{2}
	\left(
		\left. 
			\de_x (\Gamma * f)^{\HH}(x) \de_x (\Gamma * g)^{\HH}(x) 
		\right|_{x=0}
		-\left. 
			\de_y (\Gamma * f)^{\HH}(iy) \de_y (\Gamma * g)^{\HH}(iy)
		\right|_{y=0} 
	\right)
	\label{Formula: formula 7} 
\end{split}\end{equation}
This is a version of the Hadamard's variational formula, which describes the
variations of the Green's function under small perturbations of the domain, for
the case when the perturbations are concentrated in the point $\psi^{\HH}(a)=0$.

We consider first the case $\Gamma=\Gamma_D$, and the case $\Gamma=\Gamma_N$
is considered afterwards. Other cases can be studied analogously, the most essential point is the boundary
conditions of $\Gamma$ at the point $a$ that are identical for $\Gamma_D$,
$\Gamma_{DN}$, and $\Gamma_{\text{tw},b}$.

In the Dirichlet case,
\begin{equation}
	\left. \de_x (\Gamma_D * f)^{\HH}(x) \right|_{x=0} = 0
	\label{Formula: d_x Gamma_D (x) = 0}
\end{equation}
and the first term in 
\eqref{Formula: formula 7}
cancels. In order to have covariant relations and a convenient connection with
the next chapter, we introduce a complete vector field $\sigma$ such as
$\sigma(a)\neq 0$ and the Lie derivative in a direction orthogonal to a
vector field $v$ by
\begin{equation}
	\Lc^{\perp}_{v} := \Lc_{i v}
\end{equation}
In particular, for a scalar $s$ we have
\begin{equation}
	\Lc^{\perp}_{v} s = \left(i \Lc^+_{v} - i \Lc^-_{v}\right) s 
\end{equation}
and, in a fixed chart $\psi$, the same formula is
\begin{equation}
	\Lc^{\perp}_{v} s^{\psi}(z) 
	= \left( i v^{\psi}(z)\de_z - i \overline{v^{\psi}(z)}\de_{\bar z} \right) 
	s^{\psi}(z). 
\end{equation}
Thus, 
\begin{equation}
	\left. \de_y (\Gamma_D * f)^{\HH}(iy) \right|_{y=0}
	= \left.
		\frac{1}{\sigma^{\HH}(0)}	\Lc_{\sigma}^{\perp} (\Gamma_D * f)^{\HH}(z) 
	\right|_{z=0} 
	= \frac{1}{\sigma_{-1}} \Lc_{\sigma}^{\perp} (\Gamma_D * f)(a)	
\end{equation}
and
\begin{equation}
	\Lc_{\delta} \Gamma[f,g] = 
	-\frac{\delta_{-2}}{2\sigma_{-1}^2} 
	\Lc_{\sigma}^{\perp} (\Gamma_D * f)(a) \Lc_{\sigma}^{\perp} (\Gamma_D * g)(a).
	\label{Formula: L_delta G = Hadamard Dirichlet}
\end{equation}

Thereby, the Lie derivative 
$\Lc_{\delta} \Gamma[f,g]$ 
is a product of tow linear functionals of $f$ and of $g$. This relation plays an
important role in coupling that is considered in the next chapter, see 
\eqref{Formula: Hadamard's formula}.

Let $f_{\varepsilon,b}\in\Hc$, $\varepsilon>0$ be a delta sequence
for the point $b\in\bar\Dc$, namely,
\begin{equation}\begin{split}
	&\int\limits_{\psi(\Dc)} f_{\varepsilon,b}^{\psi}(z) l(dz) = 1
	, \quad \varepsilon>0,\\
 	&\lim\limits_{\varepsilon \map 0} \supp f_{\varepsilon,b} 
 	= b \in \bar\Dc
 	\label{Formula: delta sequence for b}
\end{split}\end{equation}
We define now the 
\emph{stress tensor} 
for Dirichlet type GFF (or CFT) by
\index{stress tensor for Dirichlet type CFT} 
\begin{equation}\begin{split}
	T_{\varepsilon}(b) 
	:=& -\frac{c_1}{2}
	\left( 
		\Phi[\Lc_{\sigma}^{\perp} f_{\varepsilon,b}] 
		- \eta[\Lc_{\sigma}^{\perp}	f_{\varepsilon,b}] \right)^2
	+\frac{c_1}{2} \Gamma_D[\Lc_{\sigma}^{\perp} f_{\varepsilon,b},
		\Lc_{\sigma}^{\perp}f_{\varepsilon,b}] 
	+\\+& 
	c_2 \left( 
			\Phi[\Lc_{\sigma}\Lc^{\perp}_{\sigma} f_{\varepsilon,b}] -
			\eta[\Lc_{\sigma}\Lc^{\perp}_{\sigma} f_{\varepsilon,b}] 
	\right),
\end{split}\end{equation}
where $c_1$ and $c_2$ are some real coefficients that we fix below.
For each $\varepsilon$ we have the following properties
\begin{equation}
	\Ev{T_{\varepsilon}(b)} = 0,
\end{equation}  
\begin{equation}
	\Ev{T_{\varepsilon}(b) \Phi[g]} 
	= c_2 \, \Gamma_D[\Lc_{\sigma}\Lc^{\perp}_{\sigma}f_{\varepsilon,b},g], 
\end{equation} 
\begin{equation}\begin{split}
	\Ev{T_{\varepsilon}(b) \Phi[g]\Phi[k]} 
	=& - c_1 \Gamma_D[\Lc_{\sigma} f_{\varepsilon,b},g]
	\Gamma_D[\Lc_{\sigma} f_{\varepsilon,b},k]
	+\\+& 
	c_2 \, \Gamma_D[\Lc_{\sigma}\Lc^{\perp}_{\sigma}f_{\varepsilon,b},g] \eta[k]
	+ c_2 \, \Gamma_D[\Lc_{\sigma}\Lc^{\perp}_{\sigma}f_{\varepsilon,b},k]
	\eta[g],
	\label{Formula: E[T Phi Phi] = ...}
\end{split}\end{equation} 
and more generally
\begin{equation}\begin{split}
	&\Ev{T_{\varepsilon}(b) e^{\Phi[g]}} 
	=\\=&
	\left(
		- \frac{c_1}{2} \Gamma_D[\Lc_{\sigma}^{\perp} f_{\varepsilon,b},g]
		\Gamma_D[\Lc_{\sigma}^{\perp} f_{\varepsilon,b},g]  
		+ c_2 \, \Gamma_D[\Lc_{\sigma}\Lc^{\perp}_{\sigma}f_{\varepsilon,b},g] 
	\right)
	e^{\frac12 \Gamma_D[g,g]+\eta[g]}
\end{split}\end{equation} 

Assume now $b:=a$, then according to 
\eqref{Formula: delta sequence for b}
we have
\begin{equation}\begin{split}
	\lim\limits_{\varepsilon\map 0} 
		\Gamma_D[\Lc_{\sigma}^{\perp} f_{\varepsilon,a},g]
	=& -\Lc_{\sigma}^{\perp} (\Gamma_D * g)(a),\\
	\lim\limits_{\varepsilon\map 0}
	\Gamma_D[\Lc_{\sigma}\Lc^{\perp}_{\sigma}f_{\varepsilon,a},g] 
	=& \Lc_{\sigma} \Lc_{\sigma}^{\perp} (\Gamma_D * g)(a).
\end{split}\end{equation}

Let us choose $\eta$ such that
\begin{equation}
	\Lc_{\delta} \eta[g] 
	= c_2 \Lc_{\sigma} \Lc_{\sigma}^{\perp} (\Gamma_D * g)(a)
	\label{Formula: L_delta eta = c L L Gamma_D}
\end{equation}
and assume
\begin{equation}
	c_1 = \frac{\delta_{-2}}{2 \sigma_{-1}^2},
\end{equation} 
then, due to 
\eqref{Formula: L_delta G = Hadamard Dirichlet},
\begin{equation}\begin{split}
	\lim\limits_{\varepsilon\map 0} \Ev{T_{\varepsilon}(a) e^{\Phi[g]}} 
	=& \left( \Lc_{\delta} \frac12 \Gamma_D[g,g] + \Lc_{\delta} \eta[g] \right)
	e^{\frac12 \Gamma_D[g,g]+\eta[g]} 
	=\\=& 
	\Lc_{\delta} e^{\frac12 \Gamma[g,g]+\eta[g]}
	,\quad f\in\Hc
	\label{Formula: lim E[T e...] = L_delta e...}
\end{split}\end{equation} 
In particular, 
\begin{equation}\begin{split}
	&\Lc_{\delta} \Ev{\Phi[f_1] \Phi[f_2]\dotso \Phi[f_n]}
	=\lim\limits_{\varepsilon\map 0} \Ev{T_{\varepsilon}(a) 
	\Phi[f_1] \Phi[f_2]\dotso \Phi[f_n]},\\ 
	&f_i\in\Hc,\quad i=1,2,\dotso,n.
	\label{Formula: L_delta E[Phi...] = lim E[T Phi]}
\end{split}\end{equation} 
This relation is known as the \emph{conformal Ward identity}.
We notice that the limit $\lim_{\varepsilon\map 0} T_{\varepsilon}(a)$ 
is not well-defined. This is why it cannot be interchanged with the
expectation `$\Ev{\dotso}$'. It is not just a technical difficulty because, in
order to have a non zero commutator of the Lie derivatives
\begin{equation}\begin{split}
	& [\Lc_{\delta}, \Lc_{\delta'}]\,
	\Ev{\Phi[f_1] \Phi[f_2]\dotso \Phi[f_n]}
	=\\=&
	\left(
		\lim_{\varepsilon\map 0} \lim_{\varepsilon'\map 0} -
		\lim_{\varepsilon'\map 0} \lim_{\varepsilon\map 0} 
	\right)
	\Ev{T_{\varepsilon}(a)T_{\varepsilon'}(a')\Phi(z_1)
	\Phi(z_2)\dotso\Phi(z_n)}
	\neq 0,
\end{split}\end{equation}
where $\delta'$ is some vector field analogous to $\delta$, but with the pole at
a different point $a'\in\de\Dc$.

In order to study the Neumann case we consider the Green's function
$\Gamma=\Gamma_N$
(see \eqref{Formula: Gamma_N = Log...})
for the space of test functions $\Hc=\Hc_s$. The property
\eqref{Formula: ()_G = () = ()_nabla}
is satisfied because the singularity of $\Gamma_N$ at the point $z=\infty$ in
the half-plane chart is logarithmic.

Instead of 
\eqref{Formula: d_x Gamma_D (x) = 0}
we have the condition 
\begin{equation}
	\left. \de_y (\Gamma_N * f)^{\HH}(iy) \right|_{y=0} = 0
\end{equation}
and we have some other slightly different relations 
\begin{equation}
	\left. \de_x (\Gamma_D * f)^{\HH}(x) \right|_{x=0}
	= \left.
		\frac{1}{\sigma^{\HH}(0)}	\Lc_{\sigma} (\Gamma_N * f)^{\HH}(z) 
	\right|_{z=0} 
	= \frac{1}{\sigma_{-1}} \Lc_{\sigma} (\Gamma_N * f)(a),	
\end{equation}
\begin{equation}
	\Lc_{\delta} \Gamma_N[f,g] = 
	\frac{\delta_{-2}}{2\sigma_{-1}^2} 
	\Lc_{\sigma} (\Gamma_N * f)(a) \Lc_{\sigma} (\Gamma_N * g)(a).
\end{equation}
We can define
\begin{equation}\begin{split}
	T_{\varepsilon}(b) 
	:=& -\frac{c_1}{2} 
	\left( 
		\Phi[\Lc_{\sigma} f_{\varepsilon,b}] 
		- \eta[\Lc_{\sigma}	f_{\varepsilon,b}] \right)^2
	+\frac{c_1}{2}
	\Gamma_N[\Lc_{\sigma}	f_{\varepsilon,b},
	\Lc_{\sigma} f_{\varepsilon,b}] 
	+\\+& 
	c_2	\left( 
			\Phi[\Lc_{\sigma}^2 f_{\varepsilon,b}] -
			\eta[\Lc_{\sigma}^2 f_{\varepsilon,b}] 
	\right),
	\label{Formula: T_e(b) = ...}
\end{split}\end{equation}
and assume
\begin{equation}
	\Lc_{\delta} \eta[g] 
	= c_2 \Lc_{\sigma}^2 (\Gamma_N * g)(a).
	\label{Formula: L_delta eta = c L L Gamma_N}
\end{equation} 
in order to obtain 
\eqref{Formula: lim E[T e...] = L_delta e...}.

The expression
\eqref{Formula: T_e(b) = ...}
is in agreement with the heuristic expression
\begin{equation}
	T(z) = - \frac12 \de_z \Phi(z) \de_z \Phi(z) + i \alpha \de_z^2 \Phi(z) 
\end{equation}
for the stress tensor, see, for example
\cite[(1.15)]{Ginsparg1993a}. 
We notice that the singular part 
$\frac{c_1}{2} \Gamma_N[\Lc_{\sigma} f_{\varepsilon,b},
	\Lc_{\sigma} f_{\varepsilon,b}]$
in
\eqref{Formula: T_e(b) = ...}
is added in order to have a finite expression in 
\eqref{Formula: E[T Phi Phi] = ...} 
after taking the limit $\varepsilon \map 0$.
It is invariant with respect to the choice of chart. However, it can be
chosen to be just $\frac{C}{\varepsilon^{2}}$ in any chart with some constant $C$, as
in 
\cite{Kang2011} 
for example.
In this case, $T_{\varepsilon}(b)$ transforms from one chart to another in a
complicated manner (as a Schwarzian), but not as a scalar as it is in our
frameworks.
 
It is straightforward to check that the condition 
\eqref{Formula: L_delta eta = c L L Gamma_D}
or
\eqref{Formula: L_delta eta = c L L Gamma_N}
is satisfied for all cases considered in this monograph. For example, for the
case considered in Section
\ref{Section: Coupling of chordal SLE and Dirichlet GFF},
from 
\eqref{Formula: eta - chordal with drift in H}
and
\eqref{Formula: Gamma_D = Log...}
we have
\begin{equation}
	c_1 = 1,\quad c_2=- \kappa^{-\frac32}.
\end{equation}

\section{Vertex operators}

In this section, we consider how the
so-called vertex operator in CFT can be interpreted from the probabilistic point
of view.
We specify all limits that usually dropped in the classical literature
about CFT and avoid some heuristic notations such as 
`$\Phi(z)$' 
and 
`$\mathcal{V}(z)$`.
This differs from
\cite{Kang2011},
where a similar problem is considered. As well as the previous section, this one
is not necessary for further understanding.

We notice first that 
\begin{equation}
	\Ev{e^{\Phi[g] } e^{\Phi[f]} } 
	= e^{\frac12 \Gamma[f,f] + \Gamma[f,g] + \frac12 \Gamma[g,g] 
	+ \eta[f] +	\eta[g]},\quad f,g\in H_{\Gamma}
\end{equation}
due to
\eqref{Formula: chi = int exp iPhi dPhi}.
Define now the random variable
\begin{equation}
	\mathcal{V}_1[g] := e^{\Phi[g] - \eta[g] - \frac12 \Gamma[g,g]}
	,\quad g\in H_{\Gamma},
\end{equation} 
and notice that inserting it into an expectation is equivalent to
the changing
\begin{equation}
	\eta[\cdot]\map\eta[\cdot]+\Gamma[g,\cdot].
\end{equation}
In other words,
\begin{equation}
	\Evv{\Gamma,\eta}{\mathcal{V}_1[g]\, X} =
	\Evv{\Gamma,\eta[\cdot]+\Gamma[g,\cdot]}{X}
	\label{Formula: E^eta X = E^eta+Gamma X}
\end{equation}
for any random variable $X$.

For example, let $\Gamma=\Gamma_N$, $\eta=0$, $\mu=\mu^*=0$, and assume 
$g:=-\frac12 f_{\varepsilon,b}$, 
see
\eqref{Formula: delta sequence for b},
Consider the random variable
\begin{equation}
	\mathcal{V}_{\varepsilon}(b):=e^{-\frac12 \Phi[f_{\varepsilon,b}] 
	+\frac12 \eta[f_{\varepsilon,b}] 
	- \frac18 \Gamma[f_{\varepsilon,b},f_{\varepsilon,b}]}
	,\quad \varepsilon>0, \quad b\in\bar\Dc, 
\end{equation} 
and the limit $\varepsilon \map 0$
\begin{equation}
	\lim\limits_{\varepsilon \map 0}
	\Evv{\Gamma_N,0}{\mathcal{V}_{\varepsilon}(a)\, X} =
	\Evv{\Gamma_N,\eta}{X},\quad
	\eta^{\HH}(z)=  \log|z|.
	\label{Formula: E^0 V X = E^eta X}
\end{equation}
This combination of $\Gamma$ and $\eta$ is similar to 
the combination considered in Section
\ref{Section: Coupling of reverse chordal SLE and Neumann GFF}
for $\nu=0$.
The difference is that $\mu=\mu^*\neq 0$ for all values of $\kappa$.
Thus, it may be possible to avoid introducing 
non-zero $\eta$ in the definition of the GFF and work with the insertion of 
$\mathcal{V}_{\varepsilon}(a)$
instead. However, we did not use this approach due to the following reasons:
\begin{enumerate}
\item The term $\Gamma[g,\cdot]$ in
\eqref{Formula: E^eta X = E^eta+Gamma X} 
is scalar, whereas we need a pre-pre-Schwarzian bechaviour of $\eta$;
\item The limit in 
\eqref{Formula: E^0 V X = E^eta X}
cannot be interchanged with the expectation because
$\lim\limits_{\varepsilon \map 0} \mathcal{V}_{\varepsilon}(b)$ 
does not exist as an element of a reasonable
normed space such as $H_{\Gamma}$.
\end{enumerate}   

The random variable $\mathcal{V}_{\varepsilon}(b)$ can be thought of an
analogue of the heuristic vertex operator $e^{\alpha \Phi(z)}$ frequently used
in CFT literature. For the singular part
$-\frac18 \Gamma[f_{\varepsilon,b},f_{\varepsilon,b}]$ 
in
\eqref{Formula: E^0 V X = E^eta X}
we made the same remark as for the singular part of $T_{\varepsilon}(b)$ in the
end of the previous section. If one uses $C\log \varepsilon$ for some
constant $C$ instead of 
$-\frac18 \Gamma[f_{\varepsilon,b},f_{\varepsilon,b}]$
in any chart, that leads to a more sophisticated transformation rule of 
$\mathcal{V}_{\varepsilon}$.

It is also possible to introduce a similar random variable, the insertion of
which changes the covariance $\Gamma$. Calculate first the expectation
\begin{equation}\begin{split}
	&\Ev{e^{\frac12 \Phi[g]^2} e^{\Phi[f]} } 
	= \frac{1}{\sqrt{1-\Gamma[g,g]}}e^{\frac12 \Gamma[f,f] + \eta[f] 
	+ \frac12 \frac{\left( \Gamma[f,g]+\eta[g]\right)^2 }{1-\Gamma[g,g]}},\\
	&f,g\in H_{\Gamma},\quad \Gamma[g,g]<1.
\end{split}\end{equation}
This can be done by using a standard finite-dimensional machinery for Gaussina
integrals. Define the variable
\begin{equation}
	\mathcal{V}_2[g] := \sqrt{1-\Gamma[g,g]} e^{\frac12 \left(\Phi[g] -
	\eta[g]\right)^2}.
\end{equation} 
It possesses the property
\begin{equation}
	\Ev{\mathcal{V}_2[g] e^{\phi[f]}}
	= e^{\frac12 \Gamma[f,f]  
	+ \frac12 \frac{\Gamma[f,g]^2}{1-\Gamma[g,g]} + \eta[f]}.  
\end{equation}
Consequently, its insertion leads to
\begin{equation}
	\Gamma[\cdot,\cdot] \map
	\Gamma[\cdot,\cdot]+\frac{\Gamma[\cdot,g] \Gamma[\cdot,g] }{1-\Gamma[g,g]},
\end{equation}
or equivalently,
\begin{equation}
	\Evv{\Gamma,\eta}{\mathcal{V}_2[g]\, X}
	= \Evv{\Gamma[\cdot,\cdot]+\frac12
	\frac{\Gamma[\cdot,g] \Gamma[\cdot,g]}{1-\Gamma[g,g]},\eta}{X}.
\end{equation}
%
%

\chapter{Coupling of ($\delta,\sigma$)-SLE and the GFF}
\label{Chapter: Coupling}

\section*{Introduction}
\label{Section: Coupling Introduction}

In this introduction, we slightly simplify the notation form the previous
chapters, drop the chart indices, and use the notation $\Phi(z)$ instead of
$\Phi[f]$ for the GFF.
 
The relationships between CFT and both forward and reverse forms of SLE
have several aspects. The most important are:
\begin{enumerate} [1.]
  \item For some choice of the covariance $\Gamma$ and the expectation $\eta$ of
  the GFF $\Phi$ the random laws of 
  $\Phi(z)$
  and
  $G^{-1}_t{}_*\Phi(z)$	
  (in the simplest case $\Phi(G_t(z))$)
  are identical if GFF $\Phi(z)$ and SLE map $G_t(z)$ are sampled independently,
  \cite{Sheffield2010}. 
  \item The CFT correlation functions $S_n(z_1,z_2,\dotso, z_n)$ induce SLE
  (local) martingales 
  $${G_t^{-1}}_* S_n(z_1,z_2,\dotso z_n),$$ 
  \cite{Bauer2004b}.
  \item Two Riemann surfaces equipped with independent random metrics 
  according to the Liouville Theory can be glued together along the boundary
  segments in a boundary length-preserving-way (conformal welding). The
  resulting law of the interface between the two surfaces is the SLE, 
  \cite{Duplantier2011}.
  \item Some of CFT correlation functions are related to the probabilities of
  touching the boundary by the SLE slit, \cite{Bauer2004b,Friedrich2003a}.
  \item The heuristic Lie semigroup of conformal endomorphism of $\Dc$, where
  the SLE map takes values, has a highest weight representation in the CFT space
  of states. The diffusion operator $\mathcal{A}$ of the SLE differential
  equation corresponds to the null vector of this representation which is
  singular, see Chapter 
  \ref{Chapter: Representation theory approach}
  and 
  \cite{Bauer2004b}.
\end{enumerate}

\begin{figure}[h]
\centering
	\includegraphics[keepaspectratio=true]
    	{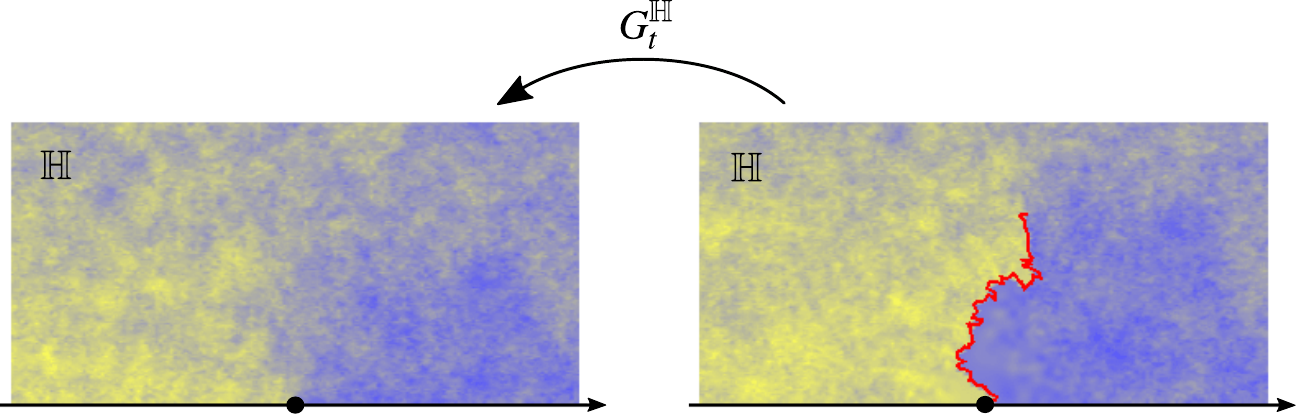}
\caption{This figure illustrates the first aspect of the coupling discussed in
the introduction to Chapter
\ref{Chapter: Coupling}. 
On the left-hand side, we present a sample of the GFF
$\Phi(\Hc_s,\Gamma_D,\chi \left( -\arg{z}+\pi/2 \right))$
(see the caption to figure 
\ref{Figure: GFF_3D}) in the half-plane chart. The blue color corresponds to
positive values of $\Phi(z)$, and yellow color corresponds to the negative
values of $\Phi(z)$.
We take the value of the coefficient $\chi$ bigger then it
is supposed to be according to the formula 
\eqref{Formula: chi = 2/k - k/2 intro}. This is done  
in order to make the domination of blue color near 
$\mathbb{R}^+$ 
more visible as well as for the yellow color near 
$\mathbb{R}^-$. 
The red line on the
right-hand side is an independent sample of the chordal SLE slit for some $t>0$
and $\kappa=2$. Blue and yellow colors on the right-hand side correspond to
the pushforward $G_t^{-1}{}_* \Phi(z)$ of the GFF $\Phi(z)$ on the
left-hand side.
The coupling proposition states that the expectation with respect to the SLE
random law gives a sample of GFF 
$\Phi(\Hc_s,\Gamma_D,\chi \left( -\arg{z}+\pi/2 \right))$
on the right-hand side.
\label{Figure. Coupling illustration}}
\end{figure}

The coupling proposition (the first item above) was formulated first in a talk
of Scott Sheffield in 2005. A detail proof for the chordal SLE case is
presented in 
\cite{Sheffield2010}
for the chordal SLE $\kappa<4$.
The SLE/CFT relation is extended from the
chordal equation to the radial one in 
\cite{Bauer2004,Kang2012a} 
and to the dipolar one in
\cite{Bauer2004a,Kang,Kanga}, 
see also
\cite{Izyurov2010} 
for the SLE-($\kappa,\rho$) case.
The 5th aspect is considered in Chapter 
\ref{Chapter: Representation theory approach}.

In this chapter, we discuss only the first two aspects. 
Thus, in our frameworks, we consider only the SLE/GFF coupling and leave the
term `SLE/CFT coupling' for a more general scope.
We consider the general case of $(\delta,\sigma)$-SLE and the GFF which
transforms as a pre-pre-Schwarzian
\eqref{Formula: Phi^tilde psi[f] = Phi^psi [f] - mu in log tau f ldz}
and the covariance $\Gamma$ as in
\eqref{Formula: Gamma = Log + H}.
To motivate this assumption let us discuss a geometric interpretation of the
coupling. 

It can be heuristically explained as follows. A properly defined zero-level line
started from the origin of GFF $\Phi(z)$ on the half-plane 
$\HH$
has the same law as the chordal SLE slit if $\kappa=4$. 
This proposition can be generalized
to other values of 
$\kappa<4$, 
see \cite{Sheffield2010}.
In order to explain its geometric meaning, let us associate to $\Phi$ another
field $J(z)$ of unit vectors $|J(z)|= 1$ by 
\begin{equation}
	J(z): = e^{i \Phi(z)/\chi}
\end{equation}
for some 
$\chi\in\mathbb{R}\setminus\{0\}$.
We assume that the unit vector field transforms from the domain $\HH$ to 
$\HH\setminus \gamma$ for some $\gamma\subset\HH$ according to the rule
\begin{equation}
	J(z)\map \arg (G'(z))	J(G(z)),
\end{equation}  
where $G$ is a conformal map $G:\HH\setminus \gamma \map \HH$ (which can be
understood as a change of coordinates).
Thereby, $\Phi(z)$ transforms according to the rule
\begin{equation}
 \Phi(z) \map G^{-1}_* \Phi(z) := \Phi(G(z)) 
  - \chi \, \mathrm{arg}\, {G}'(z).
 \label{Formula: Phi ->Phi - chi arg}
\end{equation}
The coupling statement is the agreement in law of   
$\Phi(G_t(z))-\chi \arg G'(z)$
and
$\Phi(z)$, 
where
\begin{equation}
	\chi = 
	\frac{2}{\sqrt{\kappa}}-\frac{\sqrt{\kappa}}{2},
	\label{Formula: chi = 2/k - k/2 intro}
\end{equation}
if
$G_t$ 
is a forward chordal SLE map. Besides, a flow line of 
$J(z)$, 
starting at the origin, has the same law as the
SLE${}_{\kappa}$ curve.

For the geometric interpretation of the coupling of the reverse SLE assume that
$e^{\Phi(z)/Q} l(dz)$
is a random measure on
$\HH$
for some
$\kappa$-dependent constant $Q\in\mathbb{R}\setminus\{0\}$ and for the Lebesgue
measure $l$.
A Riemann surface $\Dc$ equipped with such a random measure is actually what is
called the Liouville model. Its relation to SLE is explained in the item 3 above
if
\begin{equation}
	Q = 
	\frac{2}{\sqrt{\kappa}}+\frac{\sqrt{\kappa}}{2}.
	\label{Formula: chi = 2/k + k/2 intro}
\end{equation}
In 
\cite{Duplantier2011},
it was shown that the expectation of measure of a subset $A\subset\HH$
\begin{equation}
	\Ev{\int\limits_{A} e^{\Phi(z)/Q} l(dz) }
\end{equation}
is invariant with respect to the conformal change of coordinates 
$G$ 
if the
GFF $\Phi$ transforms according to the rule 
\begin{equation}
	\Phi(z)\map G^{-1}_* \Phi(z):=\Phi(G(z))+Q\,\log|G{}'(z)|.
	\label{Formula: Phi -> Phi + gamma log}
\end{equation}

Both of these interpretations of coupling are related to the fact that the
stochastic process
\begin{equation}
	G_t^{-1}{}_* \Ev{\Phi(z_1)\Phi(z_2)\dotso\Phi(z_n)}
\end{equation}
is a (local) martingale if we assume the rules
\eqref{Formula: Phi ->Phi - chi arg}
and
\eqref{Formula: Phi -> Phi + gamma log}
for the forward and reverse SLEs correspondingly.
We show in Chapter
\ref{Chapter: Slit Loewner equation and its stochastic version}
that the slit of  
$(\delta,\sigma)$-SLE
has the same local behaviour as the chordal SLE. Thus, it is reasonable 
to expect that the general form of 
$(\delta,\sigma)$-SLE
can be coupled with some GFF with the same local covariance. Namely, we assumed
for $\Gamma^{\psi}(z,w)$ 
the form
\eqref{Formula: Gamma = Log + H}
which corresponds to logarithmic singularity when $z=w$ with the general
harmonic part.

Before the organization part we discuss a technical question regarding the
definition of the pushforward operation in the forward case. We discuss in
Section
\ref{Section: Linear functionals and change of coordinates}
that the pushforward 
$G_t^{-1}{}_*\Phi[f]$
of 
$\Phi[f]$
is well-defined if
$\supp f \in \Im[G^{-1}_t]$. 
This is not a restriction for the reverse SLE 
$\{G_t\}_{t\in[0,+\infty)}$.
In order to handle that for the forward SLE, we introduce a stopping time 
$T[f]$, for which the hull $\K_t$ touches the support of $f$ for the first time
(see \eqref{Formula: T[f] = ...} below) and
we consider a stopped process 
$\{G_{t\wedge T[f]}\}_{t\in[0,+\infty)}$. This approach is also used in 
\cite{Izyurov2010}.
The most important for us property of the process
$\{{G_{t\wedge T[f]}^{-1}}_*\Phi[f]\}_{t\in[0,+\infty)}$ 
is that it is a local martingale.
A stopped local martingale is also a local martingale. This is why a
stopping of 
$\{G_t\}_{t\in[0,+\infty)}$
does not change our results. However, we lose some information,
which makes the proposition of coupling less substantial than one possibly
expects. An approach to avoid the stopping for the
case when the hull $\K_t$ is a simple curve ($\kappa\leq4$)
is considered in
\cite{Sheffield2010,Schramm2008}.
For the case $\kappa>4$, it is a more complicated and interesting problem.
Once someone defines the pushforward $G_t^{-1}{}_*\Phi[f]$ for a bigger stopping
time than
\eqref{Formula: T[f] = ...}
such that the propositions in Section 
\ref{Section: Some technical propositions}
are satisfied, the coupling given in the key Theorem 
\ref{Theorem: The coupling theorem}
below can be extended. In the last Section 
\ref{Section: Alternative definition og G_* eta}
we consider without a proof a possible alternative to
\cite{Sheffield2010,Schramm2008}
way to define 
$G_t^{-1}{}_*\Phi[f]$ 
for $t\in[0,+\infty)$.

We remark that the stopping is not essential if $\kappa\leq 4$ in the
following sense. If the Lebesgue measure of the support of the test function
tends to zero, then 
$P(\{T[f]<+\infty\})$ 
also tends to zero. Together with the fact of regularity of the Schwinger
functions $S_n$, we can conclude that 
\begin{equation}
	\{G_{t}^{-1}{}_* S_n(z_1,z_2,\dotso,z_n)\}_{t\in[0,+\infty)}
	,\quad z_1,z_2,\dots,z_n\in\Dc,
\end{equation}
is a local martingale a.s. for $\kappa\leq 4$.

This chapter is organized as follows. After some technical remarks in Section
\ref{Section: Some technical propositions}, 
we prove Theorem 
\ref{Theorem: The coupling theorem} 
in Section 
\ref{Section: Coupling between SLE and GFF}. 
It states that
for $(\delta,\sigma$)-SLE a pushforward 
${G_t^{-1}}_* S_n$ 
is a local martingale if and only if the system of the partial diffrential
equations (\ref{Formula: L eta + 1/2 L^2 eta})-(\ref{Formula: L_sigma Gamma = 0})
for $\delta$, $\sigma$, $\Gamma$, and $\eta$  
is satisfied. The second equation (\ref{Formula: Hadamard's formula}) is known
as Hadamard's formula, and the third (\ref{Formula: L_sigma Gamma = 0}) states
that the covariance $\Gamma$ must be invariant with respect to the M\"obius
automorphisms generated by the vector field $\sigma$. The same theorem also
states that  both the local martingale property of
$\{G_t\}_{t\in[0,+\infty)}$
and the system of equations are equivalent to a \emph{local coupling} which is a
weaker version of the coupling from \cite{Sheffield2010} discussed above. We
expect, that the local coupling leads to the
same property of the flow lines of $e^{i\Phi(z)/\chi}$ to agree in law with the
($\delta,\sigma$)-SLE curves as well as the connection to the welding of the
Liouville surfaces.

A general solution to the system 
(\ref{Formula: L eta + 1/2 L^2 eta}--\ref{Formula: L_sigma Gamma = 0})
gives all possible ways to couple $(\delta,\sigma)$-SLEs with the GFF at least
in the frameworks of our assumptions of the pre-pre-Schwarzian behaviour of $\eta$ and of the scalar behaviour of $\Gamma$.
Theorem
\ref{Theorem: eta structure}
in Section 
\ref{Section: The coupling in case of Dirichlet Neumann boundary conditions}
explains why it is natural to couple forward ($\delta,\sigma$)-SLE to the GFF
with transformation rule
\eqref{Formula: Phi ->Phi - chi arg}
and the reverse SLE to GFF with the transformation rule 
\eqref{Formula: Phi -> Phi + gamma log}.
It also shows that the conditions
\eqref{Formula: chi = 2/k - k/2 intro}
and
\eqref{Formula: chi = 2/k + k/2 intro}
are necessary. 

In Section
\ref{Section: The coupling in case of Dirichlet Neumann boundary conditions}, 
we assume the simplest choices of the Dirichlet ($\Gamma=\Gamma_D$) and Neumann
($\Gamma=\Gamma_N$)
boundary conditions for the covariance $\Gamma$ and study all ($\delta,\sigma$)-SLEs that
can be coupled.
It turns out that only the following ($\delta,\sigma$)-SLEs are allowed 
\begin{enumerate}
\item Classical SLEs with the drift $\nu\in\mathbb{R}$, $\kappa>0$;
\item Two one-parametric cases that are reparametrizations of the
driftless ($\nu=0$) chordal SLE, $\kappa>0$;
\item All ($\delta,\sigma$)-SLEs with $\kappa=6$ and $\nu=0$ for
$\Gamma=\Gamma_D$.
\end{enumerate}
The last item makes a bridge to the result of Theorem
\ref{Theorem: Absolute continuity of SLE},
which states that all such ($\delta,\sigma$)-SLEs define the same random law on
unparametrized curves. 

We consider all these cases of the coupling in detail one by one in Chapter
\ref{Chapter: Classical cases}. Two more examples of the coupling can
be obtained if we assume less trivial choices of $\Gamma$. We also consider them in  
Chapter 
\ref{Chapter: Classical cases},
Sections
\ref{Section: Coupling of forward dipolar SLE and combined Dirichlet-Neumann
GFF}
and
\ref{Section: Coupling with twisted GFF}.

\section{Some technical propositions}
\label{Section: Some technical propositions}

Here we formulate and prove some technical statements that will be
used in the proof of Theorem \ref{Theorem: The coupling theorem} below.
This section can be skipped if the reader is not interested in the details of
the proof.

As we discussed in Section 
\ref{Section: Linear functionals and change of coordinates},
we can define
${G_t^{-1}}_*\eta[f]$ 
and
${G_t^{-1}}_*\Gamma[f,f]$
only if
$\supp f\subset \Im(G^{-1})$. 
This is why we introduce a stopping time. For $f\in\Hc$,
let $T[f]$ be the stopping
time for which the hull $\K_t$ of forward ($\delta,\sigma$)-SLE touches first
some small neighborhood $U(\supp f)$ of the support of $f$:
\index{$T[f]$}
\begin{equation}
	T[f]:=\sup \{t>0\colon \K_t\cap U(\supp f) 
	= \emptyset\},\quad f\in\Hc.
	\label{Formula: T[f] = ...}
\end{equation}
The neighborhood $U(\supp f)$ can be defined, for example, as the set of points
with Poincare distance less then some $\varepsilon>0$ from $\supp f$. Thus,
$T[f]>0$ a.s. We also define the stopping time $T(x)$, $x\in\bar\Dc$ analogously
using the neighborhood $U(x)$ of single point $x\in\Dc$.  
In the reverse case, we set $T[f]:=+\infty$. 

Consider now an It\^o process $\{X_t\}_{t\in[0,+\infty)}$ such that
\begin{equation}
	\dI X_t = a_t dt + b_t \dI B_t,\quad t\in[0,+\infty),
\end{equation}
for some continuous processes 
$\{a_t\}_{t\in[0,+\infty)}$
and
$\{b_t\}_{t\in[0,+\infty)}$.
We denote by 
$\{X_{t\wedge T}\}_{t\in[0,+\infty)}$ 
the stopped process by a stopping time $T$. It possesses
\begin{equation}
	\dI X_{t\wedge T} = \theta(T-t) a_t dt + \theta(T-t) b_t \dI B_t,\quad
	t\in[0,+\infty),
\end{equation}
where
\begin{equation}
	\theta(t) :=
	\begin{cases} 
		0 	&\mbox{if } t\leq 0 \\
		1		&\mbox{if } t>0.
	\end{cases}
\end{equation}
If $\{X_t\}_{t\in[0,+\infty)}$ is a local martingale ($a_t=0$,
$t\in[0,+\infty)$) then 
$\{X_{t\wedge T}\}_{t\in[0,+\infty)}$ 
is also a local martingale. 

We consider below the stopped processes 
$\{Y(G_{t\wedge T}) \}_{t\in[0,+\infty)}$ 
instead of 
$\{Y(G_t)\}_{t\in[0,+\infty)}$. 
for some functions 
$Y:\mathscr{G}\map \mathbb{R}$
and the It\^o differential equations for them. In order to make the relations
less cluttered, we usually drop the terms `$...\wedge T[f]$' and $\theta(T-t)$.
However, in the places where it is essential to remember about them, such as the
proof of Theorem 
\ref{Theorem: The coupling theorem},
we specify the stopping times explicitly.

Define the \emph{diffusion operator}
\index{diffusion operator} 
\index{$\Ac$}
\begin{equation}
  \Ac: = \Lc_{\delta} + \frac12 \Lc_{\sigma}^2.
  \label{Formula: A = L + 1/2 L^2}
\end{equation}
and consider how a regular pre-pre-Schwarzian $\eta$ changes under random
evolution $G_t$. 
The functions  
${G^{-1}_t}_* \eta^{\psi}(z)$ 
and 
${G^{-1}_t}_* \Gamma^{\psi}(z,w)$ 
are defined by 
\eqref{Formula: G eta(z) = eta(G(z)) + mu log G'(z) + ...}
and
\eqref{Formula: G B(z,w) = B(G(z),G(w))} 
until the stopping times $T(z)$ and $\min(T(z),T(w))$ correspondingly.

\begin{proposition} 
Let  $\{G_t\}_{t\in[0,+\infty)}$ be a ($\delta,\sigma$)-SLE.
\begin{enumerate}[1.]
\item 
Let $\eta$ be a regular pre-pre-Schwarzian such that the Lie dereivatives
$\Lc_{\sigma} \eta$, $\Lc_{\delta} \eta$,
and
$\Lc_{\sigma}^2 \eta$ are well-defined. Then
\begin{equation}
  \dI {G^{-1}_{t}}_* \eta^{\psi}(z) = 
  {G_{t}^{-1}}_* \left( \Ac \eta^{\psi}(z) ~dt + \Lc_{\sigma}
  \eta^{\psi}(z) \dI B_t \right).
  \label{Formula: d G^-1 eta  = G A^+ eta dt + G L eta dB}
\end{equation}
\item
Let $\Gamma$ be a scalar 
(\eqref{Formula: G B(z,w) = B(G(z),G(w))} is satisfied) 
bilinear functional 
such that the Lie derivatives
$\Lc_{\sigma} \Gamma$, 
$\Lc_{\delta} \Gamma$,
and
$\Lc_{\sigma}^2 \Gamma$ 
are well-defined.
Then
\begin{equation}
  \dI {G_{t}^{-1}}_* \Gamma^{\psi}(z,w) = 
  {G_{t}^{-1}}_* \left( \Ac \Gamma^{\psi}(z,w) ~dt 
  	+ \Lc_{\sigma} \Gamma^{\psi}(z,w) \dI B_t \right).
  \label{Formula: d G Gamma  = G A Gamma dt + G L Gamma dB}
\end{equation}
\end{enumerate}
\label{Proposition: dI G eta = G A eta dt + G L eta dB}
\end{proposition}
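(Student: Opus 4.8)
The plan is to prove both items by a direct application of It\^o's formula in a fixed chart $\psi$, using the Stratonovich form of the $(\delta,\sigma)$-SLE equation together with the chain rule for pushforwards along the flow. First I would recall from \eqref{Formula: dot F^-1 = -v F^-1 v_t F^-1} (and its stochastic analogue, obtained by replacing $\dot u_t$ with the Brownian differential) that $G_t^{-1}$ satisfies, in Stratonovich form,
\begin{equation}
	\dS {G_t^{-1}}^{\psi}(z) =
	-\left( {G_t^{-1}}^{\psi}\right)'(z)\,
	\left( \delta^{\psi}(z) + \sigma^{\psi}(z)\,\dS B_t \right),
\end{equation}
and similarly for the inverse of $g_t$. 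The key structural observation is that the map $X \mapsto {G_t^{-1}}_* X$ is, for each realization of $G_t$, a composition of the flow generated by $\delta$ (with a time-dependent rate) and the flow generated by $\sigma$ through $H_{u_t}[\sigma]$, so that the \emph{infinitesimal} generator of $t\mapsto {G_t^{-1}}_* X$ in the Stratonovich sense is exactly $\Lc_\delta$ for the $dt$-part and $\Lc_\sigma$ for the $\dS B_t$-part. Concretely I would establish the Stratonovich identity
\begin{equation}
	\dS {G_t^{-1}}_* X^{\psi} =
	{G_t^{-1}}_* \left( \Lc_\delta X^{\psi}\, dt + \Lc_\sigma X^{\psi}\, \dS B_t \right),
\end{equation}
valid for $X$ a pre-pre-Schwarzian or a scalar bilinear functional, by differentiating the defining transformation rules \eqref{Formula: G eta(z) = eta(G(z)) + mu log G'(z) + ...} and \eqref{Formula: G B(z,w) = B(G(z),G(w))} in $t$ and recognizing the resulting expressions as Lie derivatives via \eqref{Formula: L eta = v d eta + chi dv} and \eqref{Formula: L Gamma = ...}. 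This is the step where I expect the bulk of the bookkeeping: one must carefully track the $\log G'$ (resp. $\log\overline{G'}$) correction terms and verify that their $t$-derivatives reassemble into the $\mu\, v' + \mu^* \overline{v'}$ pieces of $\Lc_v\eta$, and for $\Gamma$ that the scalar transformation rule produces no anomalous term. The commutation of $\frac{\de}{\de t}$ with the pushforward, and the fact that ${G_{t+s}^{-1}} = {G_s^{-1}}\circ (G_{t,s})^{-1}$ gives the semigroup/cocycle property needed to localize the computation at $s=0$, should be invoked from the discussion around \eqref{Formula: G_t = G_t,s circ G_s}.

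Having the Stratonovich identity, the final step is the conversion to It\^o form. Here I would use the general Stratonovich-to-It\^o correction \eqref{Formula: Slit hol stoch flow Ito}: if $\dS Y_t = A_t\, dt + C_t\, \dS B_t$ with $C_t = {G_t^{-1}}_*\Lc_\sigma X$, then $\dI Y_t = (A_t + \tfrac12 \dS C_t / \dS B_t)\, dt + C_t\, \dI B_t$, and the It\^o correction term $\tfrac12\, dC_t\!\cdot dB_t$ equals $\tfrac12 {G_t^{-1}}_*\Lc_\sigma^2 X$ because applying the Stratonovich identity once more to $C_t = {G_t^{-1}}_*\Lc_\sigma X$ shows its $\dS B_t$-coefficient is ${G_t^{-1}}_* \Lc_\sigma(\Lc_\sigma X) = {G_t^{-1}}_*\Lc_\sigma^2 X$. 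Collecting terms gives
\begin{equation}
	\dI {G_t^{-1}}_* X^{\psi} =
	{G_t^{-1}}_* \left( \left(\Lc_\delta + \tfrac12 \Lc_\sigma^2\right) X^{\psi}\, dt
	+ \Lc_\sigma X^{\psi}\, \dI B_t \right)
	= {G_t^{-1}}_*\left( \Ac X^{\psi}\, dt + \Lc_\sigma X^{\psi}\, \dI B_t\right),
\end{equation}
which is \eqref{Formula: d G^-1 eta  = G A^+ eta dt + G L eta dB} for $X=\eta$ and \eqref{Formula: d G Gamma  = G A Gamma dt + G L Gamma dB} for $X=\Gamma$. Throughout one works up to the stopping time $T(z)$ (resp. $\min(T(z),T(w))$) so that the pushforwards are well defined, as set up in Section \ref{Section: Some technical propositions}; for the reverse case $T=+\infty$ and no stopping is needed. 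The regularity hypotheses ($\Lc_\sigma X$, $\Lc_\delta X$, $\Lc_\sigma^2 X$ well defined) are exactly what is required for the two successive applications of the Stratonovich identity and for the It\^o correction to make sense, so no further assumptions enter. The main obstacle, as noted, is purely computational: verifying that the derivative of the pre-pre-Schwarzian transformation rule closes up into Lie derivatives, i.e. the content of Lemma \ref{Lemma: L phi is a scalar} applied iteratively, rather than any conceptual difficulty.
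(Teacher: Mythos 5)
Your proof is correct, and it reaches the same identities, but it organizes the computation differently from the paper. The paper does not pass through a Stratonovich identity for the pushforward at all: it first proves a standalone lemma (Lemma \ref{Lemma: Ito derivative lemma}) expressing the It\^o differential of $F(X_t)$ directly in terms of first and second derivatives of $F$ along the \emph{deterministic} flows $Y_s$, $Z_s$ generated by the drift and diffusion coefficients, then applies it to the four-dimensional real process $X_t=\{\Re G_t^{\psi}(z),\Im G_t^{\psi}(z),\Re {G_t^{\psi}}'(z),\Im {G_t^{\psi}}'(z)\}$ with $F$ the pre-pre-Schwarzian transformation rule; the flows $Y_s,Z_s$ are then recognized as $H_s[\delta]$ and $H_s[\sigma]$, so the derivatives at $s=0$ are by definition $\Lc_\delta$, $\Lc_\sigma$, and $\tfrac12\Lc_\sigma^2$. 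Your route — derive $\dS\, {G_t^{-1}}_*X = {G_t^{-1}}_*(\Lc_\delta X\,dt+\Lc_\sigma X\,\dS B_t)$ from the ordinary chain rule valid in Stratonovich calculus, then convert to It\^o with the correction $\tfrac12\,{G_t^{-1}}_*\Lc_\sigma^2 X\,dt$ obtained by applying the same identity once more to the diffusion coefficient — is the ``direct calculation'' the paper explicitly mentions and sets aside; it is more transparent about why Lie derivatives appear and avoids setting up the auxiliary finite-dimensional process, whereas the paper's lemma buys a uniform statement for any assignment whose transformation rule involves finitely many derivatives at finitely many points (vector fields, Schwarzians, etc.), which is why the paper prefers it. Both arguments localize at $t=0$ and extend by the cocycle property ${G_t^{-1}}=G_{t_0}^{-1}\circ\tilde G_{t-t_0}^{-1}$, and both need exactly the stated hypotheses on $\Lc_\sigma X$, $\Lc_\delta X$, $\Lc_\sigma^2 X$. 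One small slip to fix: the Stratonovich equation you quote at the outset is the one for ${G_t^{-1}}^{\psi}(z)$, but since ${G_t^{-1}}_*\eta^{\psi}(z)=\eta^{\psi}(G_t^{\psi}(z))+\mu\log{G_t^{\psi}}'(z)+\mu^*\log\overline{{G_t^{\psi}}'(z)}$ involves $G_t$ and $G_t'$ rather than $G_t^{-1}$, the chain-rule step actually uses the forward equation $\dS G_t=\delta\circ G_t\,dt+\sigma\circ G_t\,\dS B_t$ and its derivative in $z$; this does not affect the validity of the argument.
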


This can be proved by the direct calculation but we show a more
preferable way, which is valid not only for pre-pre-Scwarzians but, for
instance, for vector fields, and even more generally, for assignments
whose transformation rules contain an arbitrary finite number of derivatives at
a finite number of points. To this end let us prove the following lemma.
\begin{lemma}
\label{lemma2}
Let $X^i(t)$ ($i=1,2,\dotso,n$) be a finite collection of stochastic processes
defined by the following system of equations in the Stratonovich form
\begin{equation}\begin{split}
	d^{\mathrm{S}} X_t^i = \alpha^i(X_t) dt + \beta^i(X_t) d^{\mathrm{S}}B_t,
	\label{Formula: dX = a X dt + b X dB}
\end{split}\end{equation}
for some fixed functions $\alpha,\beta\colon \mathbb{R}^n\map\mathbb{R}^n$.
Let us define $Y_s^i$, $Z_s^i$ as the solutions to the initial value problems
\begin{equation}\begin{split}
&\dot Y_s^i = \alpha^i(Y_s), \quad Y_0^i=0, \\
&\dot Z_s^i = \beta^i(Z_s), \quad Z_0^i=0,
\label{Formula: Y_s = ..., Z_s = ...}
\end{split}\end{equation}
defined in some neighbourhood of $s=0$. Let also
$F:\mathbb{R}^n\map\C$
be a twice-differentiable function.
Then, the It\^{o}'s differential of $F(X_t)$ can be written in the following form
\begin{equation}
d^{\mathrm{Ito}}F(X_t) =
\left.
\frac{\de}{\de s} F(X_t+Y_s) dt +
\frac{\de}{\de s} F(X_t+Z_s) d^{\mathrm{Ito}}B_t +
\frac12\frac{\de^2}{\de s^2} F(X_t+Z_s) dt
\right|_{s=0}.
\label{Formula: Ito derivative lemma}
\end{equation}
\label{Lemma: Ito derivative lemma}
\end{lemma}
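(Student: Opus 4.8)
The plan is to reduce the statement to the classical multidimensional Itô formula by carefully unwinding the Stratonovich-to-Itô conversion and then recognizing the resulting drift and diffusion coefficients as the $s$-derivatives appearing on the right-hand side. First I would recall that the Stratonovich system \eqref{Formula: dX = a X dt + b X dB} is equivalent, in the Itô sense, to
\begin{equation}
	d^{\mathrm{Ito}} X_t^i = \left( \alpha^i(X_t) + \tfrac12 \beta^j(X_t)\, \partial_j \beta^i(X_t) \right) dt + \beta^i(X_t)\, d^{\mathrm{Ito}} B_t,
\end{equation}
using the standard correction term (summation over $j$). Applying the ordinary Itô formula to $F(X_t)$ for a twice-differentiable $F$ then gives
\begin{equation}
	d^{\mathrm{Ito}} F(X_t) = \left( \partial_i F(X_t) \alpha^i(X_t) + \tfrac12 \partial_i F(X_t) \beta^j(X_t)\partial_j\beta^i(X_t) + \tfrac12 \partial_i\partial_k F(X_t)\beta^i(X_t)\beta^k(X_t) \right) dt + \partial_i F(X_t)\beta^i(X_t)\, d^{\mathrm{Ito}} B_t.
\end{equation}

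Next I would identify each term with an $s$-derivative of $F$ composed with the flows $Y_s$ and $Z_s$ defined by \eqref{Formula: Y_s = ..., Z_s = ...}. Since $\dot Y_0^i = \alpha^i(0)$ but we actually need the flow started at the point $X_t$, the correct reading is that $\left.\frac{\partial}{\partial s} F(X_t + Y_s)\right|_{s=0} = \partial_i F(X_t)\,\dot Y_0^i = \partial_i F(X_t)\,\alpha^i(X_t)$ — here one uses that the processes $Y_s, Z_s$ in the statement are really the flows of $\alpha,\beta$ and the notation $X_t + Y_s$ is shorthand for evaluating that flow with base point $X_t$ (equivalently, $Y$ solves $\dot Y_s = \alpha(Y_s)$ with $Y_0 = X_t$, and the "$+$" is the additive-notation convention the author uses; I would state this convention explicitly at the start). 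Similarly $\left.\frac{\partial}{\partial s} F(X_t + Z_s)\right|_{s=0} = \partial_i F(X_t)\beta^i(X_t)$, which matches the $d^{\mathrm{Ito}} B_t$ coefficient. For the second derivative, differentiating $\frac{\partial}{\partial s} F(X_t + Z_s) = \partial_i F(X_t+Z_s)\beta^i(Z_s\text{-flow})$ once more in $s$ and evaluating at $0$ produces exactly $\partial_i\partial_k F(X_t)\beta^i(X_t)\beta^k(X_t) + \partial_i F(X_t)\beta^j(X_t)\partial_j\beta^i(X_t)$, so that $\tfrac12\left.\frac{\partial^2}{\partial s^2}F(X_t+Z_s)\right|_{s=0}$ collects both the Stratonovich correction term and the Itô second-order term. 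Summing the three contributions reproduces the displayed Itô formula, which is \eqref{Formula: Ito derivative lemma}.

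The main obstacle is purely notational rather than analytic: one must pin down precisely what "$X_t + Y_s$" means (the additive group notation standing for the time-$s$ flow of the vector field $\alpha$ emanating from $X_t$), because the whole point of the lemma is that this compact form will later be applied to infinite-dimensional "assignments" — pre-pre-Schwarzians, vector fields, etc. — where $F$ is replaced by an evaluation functional and $\alpha,\beta$ by $\delta,\sigma$ acting via $H_s[\cdot]_*$. So after the finite-dimensional computation I would add a remark that the same bookkeeping goes through verbatim when $X$ ranges in $\mathscr{G}[\delta,\sigma]$ and the role of $Y_s, Z_s$ is played by $H_s[\delta]$, $H_s[\sigma]$; this is what makes Proposition \ref{Proposition: dI G eta = G A eta dt + G L eta dB} an immediate corollary, with $\Ac = \Lc_\delta + \tfrac12\Lc_\sigma^2$ arising exactly as the "$\frac{\partial}{\partial s}$ along $\delta$ plus $\frac12\frac{\partial^2}{\partial s^2}$ along $\sigma$" operator. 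The only genuine hypothesis to track is that $F$ (or the relevant functional) be twice differentiable in the appropriate sense and that the flows $Y_s, Z_s$ exist on a neighborhood of $s=0$, which is guaranteed since $\delta$ is semicomplete and $\sigma$ is complete.
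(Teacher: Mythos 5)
Your proposal is correct and is essentially the paper's own proof: the paper likewise verifies the identity by applying the chain rule to the right-hand side and matching it against the It\^o differential obtained from the Stratonovich-to-It\^o conversion, merely writing the computation in the opposite direction. Your explicit remark that $Y_s$, $Z_s$ must be read as the flows of $\alpha$, $\beta$ based at $X_t$ (so that $\dot Y_0^i=\alpha^i(X_t)$ rather than $\alpha^i(0)$) is the reading the paper's computation and its later application to $H_s[\delta]$, $H_s[\sigma]$ implicitly use, so you are right to pin it down.
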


\begin{proof}
The direct calculation of the right-hand side of
(\ref{Formula: Ito derivative lemma})
gives
\begin{equation}\begin{split}
  &
  F'_i(X_t) \left(\alpha^i(X_t) + \frac12 {\beta'_j}^i(X_t)\beta^j(X_t) \right) dt +
  F'_i(X_t) \beta^i(X_t) d^{\mathrm{Ito}}B_t +
  \frac12 F''_{ij}(X_t) \beta^i(X_t)\beta^j(X_t) dt,
\end{split}\end{equation}
which is indeed the It\^{o}'s differential of $F(X_t)$. We employed summation
over repeated indices and used the notation 
$F'_i(X):=\frac{\de}{\de X^i}F(X)$.
\end{proof}

\medskip

\noindent
{\it Proof of  Proposition \ref{Proposition: dI G eta = G A eta dt + G L eta dB}.}
We use the lemma above.
Let $n=4$, and let  us define a vector valued linear map
$\{ \cdot \}$ for an analytic function $x(z)$ as
\begin{equation}
 \{ x(z) \} : = \{ \Re x(z),\Im x(z),\Re x'(z),\Im x'(z) \}.
 \label{Formula: [x] :=  [Re,Im,Re',Im']}
\end{equation}
For example,
\begin{equation}
 X_t := \{ G_t^{\psi}(z) \} =
  \{ \Re G_t^{\psi}(z), \Im G_t^{\psi}(z), \Re {G_t^{\psi}}'(z), \Im {G_t^{\psi}}'(z) \}.
\end{equation}
From
\eqref{Formula: Slit hol stoch flow Strat}
we have
\begin{equation}
 \alpha(X_t) = \{ \delta^{\psi}(G_t^{\psi}(z)) \},\quad
 \beta(X_t) = \{ \sigma^{\psi}(G_t^{\psi}(z)) \}.
\end{equation}
Let also
\begin{equation}\begin{split}
	F(X_t) =& F(\{{G_{t}^{\psi}}(z) \}):=
	{G_{t}^{-1}}_* \eta^{\psi}(z) 
	=\\=& 
	\eta^{\psi}(G_{t}^{\psi}(z)) 
	+ \mu \log {G_{t}^{\psi}}'(z) 
	+ \mu^* \log \overline{{G_{t}^{\psi}}'(z)}.
\end{split}\end{equation}
We have
\begin{equation}
 Y_s = \{ H_s[\delta]^{\psi}(z)-z \},\quad
 Z_s = \{ H_s[\sigma]^{\psi}(z)-z \}
\end{equation}
due to
(\ref{Formula: Slit hol stoch flow Strat}),
(\ref{Formula: dX = a X dt + b X dB}),
(\ref{Formula: Y_s = ..., Z_s = ...}),
and
(\ref{Formula: d H[sigma](z) = sigma( H[sigma] ) (z)}).

Now we can use Lemma~\ref{lemma2} in order to obtain (\ref{Formula: d G^-1 eta  = G A^+ eta dt + G L eta dB}) for $t=0$:
\begin{equation}\begin{split}
 &\left. \dI {G_{t}^{-1}}_* \eta^{\psi}(z) \right|_{t=0} =
 \left. \dI F[X_t] \right|_{t=0} 
 =\\=&
 \text{ (right-hand side of (\ref{Formula: Ito derivative lemma}) with $t=0$
 ) }.
 \label{Formula: d G eta |_t=0 = ...}
\end{split}\end{equation}
But
\begin{equation}\begin{split}
 &\left. \frac{\de}{\de s} F(X_t+Y_s) \right|_{s=0,t=0} =
 \left. \frac{\de}{\de s} F( \{z+ H_s[\delta]^{\psi}(z) - z \}) \right|_{s=0}
 =\\=&
 \left. \frac{\de}{\de s} F( \{H_s[\delta]^{\psi}(z)\}) \right|_{s=0}
 =\left. \frac{\de}{\de s} \{H_s[\delta]^{-1}_* \eta^{\psi}(z) \right|_{s=0}
 = \Lc_{\delta} \eta^{\psi}(z).
\end{split}\end{equation}
A similar observation for other terms in
(\ref{Formula: d G eta |_t=0 = ...})
yields that
\begin{equation}\begin{split}
 \left. \dI {G_t^{-1}}_* \eta^{\psi}(z) \right|_{t=0} =&
 \Lc_{\delta} \eta^{\psi}(z) dt
 + \Lc_{\sigma} \eta^{\psi}(z) \dI B_t
 + \frac12 \Lc_{\sigma}^2 \eta^{\psi}(z) dt
 =\\=&
 \Ac \eta^{\psi}(z) dt + \Lc_{\sigma} \eta^{\psi}(z) \dI B_t.
\end{split}\end{equation}
For $t>0$ we use the results of Section
\ref{Section: Definition and basic properties}
to conclude
\begin{equation}\begin{split}
 &\dI {G_t^{-1}}_* \eta^{\psi}(z) =
 \dI \left( \tilde G_{t-t_0} \circ G_{t_0} \right)^{-1}_*  \eta^{\psi}(z) =
 \dI \left. {G_{t_0}^{-1}}_* { \tilde G_{t-t_0}^{-1}\,  }_* \eta^{\psi}(z) \right|_{t_0=t}
 =\\=&
 \left. {G_{t}^{-1}}_* \dI  { \tilde G_{s}^{-1}\,  }_* \eta^{\psi}(z) \right|_{s=0} =
 {G_{t}^{-1}}_* \left( \Ac \eta^{\psi}(z) dt + \Lc_{\sigma} \eta^{\psi}(z) \dI B_t \right).
\end{split}\end{equation}

The proof of 
\ref{Formula: d G Gamma  = G A Gamma dt + G L Gamma dB}
is analogous. The only difference is that we do not have
the pre-pre-Schwarzian terms with the derivatives but there are two points $z$
and $w$. We can assume
\begin{equation}
  \{ x \} : = \{ \Re x(z),\Im x(z),\Re x(w),\Im x(w) \}
\end{equation}
instead of
(\ref{Formula: [x] :=  [Re,Im,Re',Im']})
and the remaining part of the proof is the same.
\quad\qed
\medskip

We will obtain below the It\^o differential over 
${G_t^{-1}}_*\eta[f]$ 
and 
${G_t^{-1}}_* \Gamma [f,g]$ 
for 
($\delta,\sigma$)-SLE
$\{G_t\}_{t\in[0,+\infty)}$
and
$f,g\in\Hc$.  
To this end we need the It\^o formula for nonlinear functionals over $\Hc$.
For linear functionals on the Schwartz space this has been shown in
\cite{Krylov2009}.
However, the author is not aware of the results for nonlinear
functionals.
The following propositions are special cases we will need in the next
section. They are consequences of the proposition above, the classical It\^o 
formula, and the stochastic Fubini theorem.

\begin{proposition} ~\\
Let the conditions of Proposition
\ref{Proposition: dI G eta = G A eta dt + G L eta dB}
be satisfied, then:
\begin{enumerate}[1.]
\item
The It\^{o} differential is
interchangeable with the integration over $\Dc$. Namely,
\begin{equation}\begin{split}
  &\dI \int\limits_{\psi(\supp f)} {G^{-1}_{t}}_*
  \eta^{\psi}(z) f^{\psi}(z) l(dz) 
  =\\=&
  \int\limits_{\psi(\supp f)} {G^{-1}_{t}}_* \Ac
  \eta^{\psi}(z) f^{\psi}(z) l(dz)\, dt 
  +\\+& 
  \int\limits_{\psi(\supp f)} 
  	{G^{-1}_{t}}_* \Lc_{\sigma} \eta^{\psi}(z) f^{\psi}(z)
  	 l(dz) \,\dI B_t.
  \label{Formula: dI int G^-1 eta = int dI G^-1 eta}
\end{split}\end{equation}
An equivalent shorter formulation is
\begin{equation}\begin{split}
  \dI {G_{t }^{-1}}_* \eta[f]
  = {G^{-1}_t}_* \Ac \eta[f] dt +
  {G^{-1}_t}_* \Lc_{\sigma} \eta[f] \dI B_t
  \label{Formula: dI G^-1 eta[f] = G^-1 A eta[f] dt + G^-1 L eta[f] dB}.
\end{split}\end{equation}
\item 
The It\^{o} differential is interchangeable with the double integration
over $\Dc$, namely,
\begin{equation}\begin{split}
  &\dI \int\limits_{\psi(\supp f)} 
  \int\limits_{\psi(\supp f)} {G^{-1}_{t}}_* \Gamma (x,y)
  f^{\psi}(x)f^{\psi}(y)  l(dx) l(dy) =\\=&
  \int\limits_{\psi(\supp f)} \int\limits_{\psi(\supp f)}
  {G^{-1}_{t}}_* \Ac \Gamma (x,y)
  f^{\psi}(x) f^{\psi}(y) l(dx) l(dy)\, dt
  +\\+&
  \int\limits_{\psi(\supp f)} \int\limits_{\psi(\supp f)}
  {G^{-1}_{t}}_* \Lc_{\sigma} \Gamma(x,y)
  f^{\psi}(x) f^{\psi}(y) l(dx) l(dy)\, \dI B_t.
  \label{Formula: dI int G^-1 Gamma = int dI G^-1 Gamma}
\end{split}\end{equation}
 An equivalent  shorter formulation is
 \begin{equation}\begin{split}
  \dI {G_t^{-1}}_* \Gamma[f,g]
  =
  {G^{-1}_t}_* \Ac \Gamma[f,g]\, dt +
  {G^{-1}_t}_* \Lc_{\sigma} \Gamma[f,g]\, \dI B_t.
  \label{Formula: dI G^-1 eta[f] = G^-1 A eta[f] dt + G^-1 L eta[f] dB}
 \end{split}\end{equation}
\end{enumerate}
\label{Proposition: dI int eta = int dI eta}
\end{proposition}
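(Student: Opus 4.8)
The plan is to derive both interchange formulas from Proposition~\ref{Proposition: dI G eta = G A eta dt + G L eta dB} by a stochastic Fubini argument. The pointwise statements \eqref{Formula: d G^-1 eta  = G A^+ eta dt + G L eta dB} and \eqref{Formula: d G Gamma  = G A Gamma dt + G L Gamma dB} already give, for each fixed $z$ (resp.\ each fixed pair $(z,w)$), the It\^o differential of the random function $t\mapsto {G_t^{-1}}_*\eta^\psi(z)$, valid up to the stopping time $T(z)$ (resp.\ $\min(T(z),T(w))$). Since we work with a test function $f\in\Hc$ of compact support and the stopped chain $\{G_{t\wedge T[f]}\}_{t\in[0,+\infty)}$, on the event $\{t<T[f]\}$ we have $\supp f\subset\Im(G_t^{-1})$ and the drift and diffusion coefficients
\begin{equation*}
	(z,\tau)\mapsto {G_\tau^{-1}}_*\Ac\eta^\psi(z),\qquad
	(z,\tau)\mapsto {G_\tau^{-1}}_*\Lc_\sigma\eta^\psi(z)
\end{equation*}
are jointly continuous in $(z,\tau)$ on $\psi(\supp f)\times[0,t]$ and hence bounded there; this gives the integrability needed to invoke the stochastic Fubini theorem.

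First I would write the integrated (Duhamel) form of \eqref{Formula: d G^-1 eta  = G A^+ eta dt + G L eta dB}, namely
\begin{equation*}
	{G_{t\wedge T[f]}^{-1}}_*\eta^\psi(z)
	= \eta^\psi(z)
	+ \int_0^{t\wedge T[f]} {G_\tau^{-1}}_*\Ac\eta^\psi(z)\,d\tau
	+ \int_0^{t\wedge T[f]} {G_\tau^{-1}}_*\Lc_\sigma\eta^\psi(z)\,\dI B_\tau,
\end{equation*}
multiply through by $f^\psi(z)$, and integrate $l(dz)$ over $\psi(\supp f)$. The ordinary Lebesgue integral commutes with the $d\tau$-integral by classical Fubini, and with the It\^o integral by the stochastic Fubini theorem (the version in, e.g., \cite{Protter2004a}), whose hypotheses are met by the boundedness just noted together with measurability of the integrand, which follows from the joint continuity. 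Differentiating the resulting identity back in $t$ yields \eqref{Formula: dI int G^-1 eta = int dI G^-1 eta}, and recognising $\int \eta^\psi(z)f^\psi(z)\,l(dz)=\eta[f]$, and similarly for $\Ac\eta[f]$ and $\Lc_\sigma\eta[f]$, gives the compact form \eqref{Formula: dI G^-1 eta[f] = G^-1 A eta[f] dt + G^-1 L eta[f] dB}. For item~2 the argument is verbatim the same with $z$ replaced by the pair $(x,y)$, $f^\psi(z)l(dz)$ by $f^\psi(x)f^\psi(y)\,l(dx)\,l(dy)$, and $\eta$ by $\Gamma$, using \eqref{Formula: d G Gamma  = G A Gamma dt + G L Gamma dB} and applying the stochastic Fubini theorem once for each of the two spatial integrations (or once over the product space $\psi(\supp f)\times\psi(\supp f)$ with the product measure).

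The main obstacle is the technical verification of the stochastic Fubini hypotheses: one must check that on $\{\tau<T[f]\}$ the coefficient $(z,\tau)\mapsto{G_\tau^{-1}}_*\Lc_\sigma\eta^\psi(z)$ is jointly measurable and satisfies the square-integrability bound $\int_0^{t}\!\!\int_{\psi(\supp f)} \mathbb{E}\big[\theta(T[f]-\tau)\,\big({G_\tau^{-1}}_*\Lc_\sigma\eta^\psi(z)\big)^2\big]\,l(dz)\,d\tau<\infty$, and likewise that the drift coefficient is absolutely integrable. Here one uses that $U(\supp f)$ is a fixed neighbourhood whose Poincar\'e distance to $\supp f$ is at least $\varepsilon>0$, so on $\{\tau<T[f]\}$ the maps $G_\tau^{-1}$ stay conformal on a fixed neighbourhood of $\supp f$ with derivatives controlled by the Koebe distortion estimates; continuity in $\tau$ comes from continuity of $\tau\mapsto G_\tau$ established in Section~\ref{Section: Definition and basic properties}. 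Once this uniform control is in place the interchange is routine, so I would state the measurability/boundedness as a short auxiliary observation and then let the stochastic Fubini theorem do the work; the passage from $t=0$ to general $t$, if one prefers to mirror the proof of Proposition~\ref{Proposition: dI G eta = G A eta dt + G L eta dB}, uses the semigroup relation $G_t=\tilde G_{t-t_0}\circ G_{t_0}$ exactly as there.
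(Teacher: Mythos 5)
Your proposal is correct and follows essentially the same route as the paper: write the pointwise Itô formula of Proposition \ref{Proposition: dI G eta = G A eta dt + G L eta dB} in integral form, integrate against $f^{\psi}(z)\,l(dz)$, and interchange the Lebesgue and stochastic integrals via the stochastic Fubini theorem of \cite{Protter2004a}. The only difference is that you spell out the measurability and integrability hypotheses for stochastic Fubini in more detail than the paper does, which is a welcome addition rather than a divergence.
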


\begin{proof}

The relation
(\ref{Formula: dI int G^-1 eta = int dI G^-1 eta})
in  integral form is
%
%
%
%
\begin{equation}\begin{split}
  &\int\limits_{\psi(\supp f)} 
  	{G^{-1}_t}_* \eta^{\psi} (z) f^{\psi}(z) l(dz)
  =\eta[f]  
  +\\+&
  \int\limits_0^{t}
   \int\limits_{\psi(\supp f)} 
   {G^{-1}_{\tau}}_* \Ac \eta^{\psi} (z)  f^{\psi}(z) l(dz) d{\tau} +
  \int\limits_0^{t}
  \int\limits_{\psi(\supp f)} 
  	{G^{-1}_{\tau}}_* \Lc_{\sigma} \eta^{\psi}(z) f^{\psi}(z) l(dz) \dI
  B_{\tau}
	\label{Formula: 1}
\end{split}\end{equation}
The order of the It\^{o} and the Lebesgue integrals can be changed using the
stochastic Fubini theorem, see, for example \cite{Protter2004a}.
It is enough now to use
(\ref{Formula: d G^-1 eta  = G A^+ eta dt + G L eta dB}) to obtain
(\ref{Formula: dI int G^-1 eta = int dI G^-1 eta}).

The proof of 
\ref{Formula: dI G^-1 eta[f] = G^-1 A eta[f] dt + G^-1 L eta[f] dB}
is analogous.
\end{proof}

\begin{proposition}
\label{Proposition: G chi[f] is an Ito process}
Let
\begin{equation}
 \hat\phi[f] = \exp \left( W[f] \right),\quad
 W[f] := \frac12 \Gamma[f,f] + \eta[f].
\end{equation}
Then
${G_t^{-1}}_* \hat\phi[f]$  is an It\^{o} process defined by the integral
\begin{equation}\begin{split}
 &{G_t^{-1}}_* \hat\phi[f] 
 =\\=&
 \int\limits_0^t
 \exp \left( {G_{\tau}^{-1}}_* W[f] \right)
 \left(
  {G_{\tau}^{-1}}_* \Ac W[f] d\tau +
  {G_{\tau}^{-1}}_* \Lc_{\sigma} W[f] \dI B_{\tau} +
  \frac12 \left( {G_{\tau}^{-1}}_* \Lc_{\sigma} W[f] \right)^2 d\tau
 \right).
 \label{Formula: G chi = chi + int ...}
\end{split}\end{equation}
\end{proposition}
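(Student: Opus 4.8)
The plan is to combine the two previously established facts: Proposition \ref{Proposition: dI int eta = int dI eta}, which gives the It\^o differentials of $\{G_t^{-1}\}_*\eta[f]$ and of $\{G_t^{-1}\}_*\Gamma[f,f]$, together with the ordinary (finite-dimensional) It\^o formula applied to the exponential function. First I would set $X_t := {G_t^{-1}}_* W[f] = \frac12 {G_t^{-1}}_*\Gamma[f,f] + {G_t^{-1}}_*\eta[f]$, and observe that by linearity of the pushforward on functionals together with item 1 and item 2 of Proposition \ref{Proposition: dI int eta = int dI eta}, $X_t$ is an It\^o process with
\begin{equation*}
\dI X_t = {G_t^{-1}}_*\Ac W[f]\, dt + {G_t^{-1}}_*\Lc_{\sigma} W[f]\, \dI B_t.
\end{equation*}
Here I am using that $\Ac$ and $\Lc_\sigma$ are linear operators, so they commute with the scalar factor $\tfrac12$ and with the sum, and that $\Ac = \Lc_\delta + \tfrac12\Lc_\sigma^2$ is exactly the operator appearing in both differential formulas. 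The diffusion coefficient of $X_t$ is $b_t := {G_t^{-1}}_*\Lc_\sigma W[f]$ and its drift is $a_t := {G_t^{-1}}_*\Ac W[f]$.

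Next I would apply the classical one-dimensional It\^o formula to $F(x) = e^x$ and the process $X_t$:
\begin{equation*}
\dI\, e^{X_t} = e^{X_t}\,\dI X_t + \tfrac12 e^{X_t}\, d\langle X\rangle_t
= e^{X_t}\Big( a_t\, dt + b_t\, \dI B_t + \tfrac12 b_t^2\, dt\Big).
\end{equation*}
Substituting the expressions for $a_t$ and $b_t$ and recalling that $\hat\phi[f] = e^{W[f]}$, so that ${G_t^{-1}}_*\hat\phi[f] = e^{X_t}$ (the pushforward acts on $\hat\phi$ through its action on $W$, by definition of $\hat\phi(\Gamma,\eta)$ and the pushforward $F_*\hat\phi(\Gamma,\eta) = \hat\phi(F_*\Gamma, F_*\eta)$), yields precisely the claimed integrand; integrating from $0$ to $t$ and using $\hat\phi[f]\big|_{t=0}$ absorbed into the $X_0$ term (note ${G_0^{-1}}_* = \mathrm{id}$, so the lower limit contributes $e^{W[f]} = \hat\phi[f]$, which is exactly what the integral form with the stated initial value encodes) gives \eqref{Formula: G chi = chi + int ...}. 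The equivalence of \eqref{Formula: L exp nu = L nu exp nu} and \eqref{Formula: L2 exp nu = (L2 nu + L nu 2) exp nu} also makes the algebra transparent: the drift of $e^{X_t}$ is $e^{X_t}$ times $\Ac W[f] + \tfrac12(\Lc_\sigma W[f])^2$, which by those identities is exactly ${G_t^{-1}}_*\big(\Lc_\delta \hat\phi[f] + \tfrac12\Lc_\sigma^2\hat\phi[f]\big)/\hat\phi[f]$ evaluated appropriately.

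The main obstacle I expect is justifying the interchange that lets me treat $X_t = {G_t^{-1}}_*W[f]$ as a genuine scalar It\^o process to which the classical It\^o formula applies, rather than merely a formal combination of functional-valued objects; this is really the content of Proposition \ref{Proposition: dI int eta = int dI eta} (whose own proof invoked the stochastic Fubini theorem), so here it suffices to cite it. A secondary technical point is the stopping time: all these identities hold only up to $T[f]$ in the forward case, so strictly one works with the stopped processes $\{G_{t\wedge T[f]}^{-1}\}_*(\cdot)$, and since $e^x$ is smooth the stopped exponential is again given by the stopped integral, with the indicator $\theta(T[f]-t)$ multiplying both the $dt$ and $\dI B_t$ terms. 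I would mention this once and then, following the convention announced in Section \ref{Section: Some technical propositions}, suppress the `$\wedge T[f]$' in the displayed formula. Finally, one should check that the integrand processes appearing are continuous and locally bounded so that the It\^o and Lebesgue integrals in \eqref{Formula: G chi = chi + int ...} are well-defined; this follows from the continuity of $G_t^\psi(z)$ and its derivative in $t$ and $z$, uniformly on $\supp f$ before the stopping time, exactly as in the proof of Proposition \ref{Proposition: dI int eta = int dI eta}.
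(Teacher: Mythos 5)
Your proposal is correct and follows essentially the same route as the paper: the paper likewise first combines the two items of Proposition \ref{Proposition: dI int eta = int dI eta} to write $\dI\, {G_t^{-1}}_* W^{\psi}[f] = {G_{t}^{-1}}_* \Ac W^{\psi}[f]\, dt + {G_{t}^{-1}}_* \Lc_{\sigma} W^{\psi}[f]\, \dI B_{t}$ and then applies the classical It\^o lemma to the exponential. Your additional remarks on the stopping time and the integrability of the integrands are consistent with the conventions the paper adopts in Section \ref{Section: Some technical propositions}.
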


\begin{proof}
The stochastic process
${G_t^{-1}}_* W^{\psi}[f]$
has the integral form
\begin{equation}\begin{split}
 &{G_t^{-1}}_* W^{\psi}[f] = 
 \frac12 {G_t^{-1}}_* \Gamma^{\psi}[f,f] + {G_t^{-1}}_* \eta^{\psi}[f]
 =\\=&
 \int\limits_{\psi(\supp f)} \int\limits_{\psi(\supp f)}
  {G_t^{-1}}_* \Gamma^{\psi}(z,w) f^{\psi}(z) f^{\psi}(w) l(dz) l(dw)
 +
 \int\limits_{\psi(\supp f)} {G_t^{-1}}_* \eta^{\psi}(z) f^{\psi}(z) l(dz).
\end{split}\end{equation}
due to the Proposition. 
\ref{Proposition: dI int eta = int dI eta}. 
In  terms of It\^{o} differentials it is
\begin{equation}
 \dI {G_t^{-1}}_* W^{\psi}[f] =
 {G_{t}^{-1}}_* \Ac W^{\psi}[f] dt +
 {G_{t}^{-1}}_* \Lc_{\sigma} W^{\psi}[f] \dI B_{t}.
\end{equation}
In order to obtain the exponential function we can just use It\^{o}'s lemma
\begin{equation}\begin{split}
 &\dI {G_t^{-1}}_* \exp \left( W^{\psi}[f] \right) =
 \dI \exp \left( {G_t^{-1}}_* W^{\psi}[f] \right)
 =\\=&
 \exp \left( {G_{t}^{-1}}_* W^{\psi}[f] \right)
 \left(
  {G_{t}^{-1}}_* \Ac W^{\psi}[f] dt
  + {G_{t}^{-1}}_* \Lc_{\sigma} W^{\psi}[f] \dI B_t
  + \frac12 \left( {G_{t}^{-1}}_* \Lc_{\sigma} W^{\psi}[f] \right)^2 dt
 \right)
\end{split}\end{equation}
\end{proof}

\section{Coupling between SLE and GFF}
\label{Section: Coupling between SLE and GFF}

Let 
$(\Omega^{\Phi},\mathcal{F}^{\Phi},P^{\Phi})$ 
be the probability space
for GFF 
$\Phi$
and let
$(\Omega^{B},\mathcal{F}^{B},P^{B})$ 
be the independent probability space for
the Brownian motion 
$\{B_t\}_{t\in[0,+\infty)}$, 
which governs some
$(\delta,\sigma)$-SLE 
$\{G_t\}_{t\in[0,+\infty)}$. 
In this section, we consider a coupling between
these random laws. Let $T[f]$ be the stopping time as defined above.


\begin{definition}
A GFF 
$\Phi(\Hc,\Gamma,\eta)$
is called \emph{coupled}
to forward or reverse
$(\delta,\sigma)$-SLE, driven by $\{B_t\}_{t\in[0,+\infty)}$, if
the random variable ${G^{-1}_{t\wedge T[f]}}_*\Phi^{\psi}[f]$
obtained by independent samplings of $\Phi$ and $G_t$ has the same law as
$\Phi^{\psi}[f]$ for any test function $f\in\Hc$, chart map $\psi$, and
$t \in[0,+\infty)$.
 \label{Definition: Coupling}
\end{definition}

If the  coupling holds for a fixed chart map $\psi$ and for any $f\in\Hc$,
then it also holds for any other chart map $\tilde \psi$,
due to
(\ref{Formula: Phi^tilde psi[f] = Phi^psi [f] - mu in log tau f ldz}).
We also give a weaker version of the coupling statement that we plan to use
here.. To this end, we have to consider a stopped versions of the stochastic
process $\{G_{t\wedge T[f]}\}_{t\in[0,+\infty)}$.

A collection of stopping
times $\{T_n\}_{n=1,2,\dotso}$ is called a
\emph{fundamental sequence}
\index{fundamental sequence}
if $0\leq T_n\leq T_{n+1}\leq \infty$, $n=1,2,\dotso$ a.s., and
$\lim\limits_{n\map \infty} T_n=\infty$ a.s.
A stochastic process $\{x_t\}_{t\in[0,+\infty)}$ is called a 
\emph{local martingale}
\index{local martingale}
if there exists a fundamental sequence of stopping times
$\{T_n\}_{n=1,2,\dotso}$, such that the stopped process
$\{x_{t\wedge T_n}\}_{t\in[0,+\infty)}$ 
is a martingale for each
$n=1,2,\dotso$.
%

Let now the statement of coupling above is valid only for the stopped by
$T_n$ process $\{G_{t\wedge {T[f]} \wedge T_n}\}_{t\in[0,+\infty)}$ 
for each $n=1,2,\dotso$.
Namely,
${G^{-1}_{t\wedge {T[f]} \wedge T_n}}_*\Phi^{\psi}[f]$
has the same law as
$\Phi^{\psi}[f]$ for each $n=1,2,\dotso$.
We are ready now to define local coupling.

\begin{definition}
A GFF 
$\Phi(\Hc,\Gamma,\eta)$
is called \emph{locally coupled}
\index{local coupling}
to forward or reverse
$(\delta,\sigma)$-SLE, driven by $\{B_t\}_{t\in[0,+\infty)}$, if
there exist a fundamental sequence 
$\{T_n[f,\psi]\}_{n=1,2,\dotso}$ such that
the random variable ${G^{-1}_{t\wedge T[f]}}_*\Phi^{\psi}[f]$
obtained by independent samplings of $\Phi$ and $G_t$ has the same law as
$\Phi^{\psi}[f]$ until 
stopping time $T_n[f,\psi]$ for each $n=1,2,\dotso$, for any test function
$f\in\Hc$, and chart map $\psi$.

\label{Definition: Local coupling}
\end{definition}

\begin{remark}
If 
$\mathbb{T}[f,\psi]=+\infty$ a.s. 
for each
$f\in\Hc$,
then the coupling is not local.
\end{remark}

%
%

In this subsection, we prove the following theorem.
\begin{theorem}
\label{Theorem: The coupling theorem}
The following three propositions are equivalent:
\begin{enumerate} [1.]
\item
GFF
$\Phi(\Hc,\Gamma,\eta)$
is locally coupled to
$(\delta,\sigma)$-SLE;
\item
${G_{t \wedge T[f]}^{-1}}_* \hat\phi^{\psi}[f]$ is a local martingale for
$f\in\Hc$
in any chart $\psi$;
\item
The system of the equations
 \begin{equation}\begin{split}
  \Lc_{\delta} \eta[f] + \frac12\Lc_{\sigma}^2 \eta[f] = 0
  ,\quad f\in\Hc,
  \label{Formula: L eta + 1/2 L^2 eta}
 \end{split}\end{equation}
 \begin{equation}
  \Lc_{\delta} \Gamma[f,g] + \Lc_{\sigma} \eta[f] \Lc_{\sigma} \eta[g] = 0
  ,\quad f,g\in\Hc,
  \label{Formula: Hadamard's formula}
 \end{equation}
 and
 \begin{equation}
  \Lc_{\sigma} \Gamma[f,g] = 0
  ,\quad f,g\in\Hc.
  \label{Formula: L_sigma Gamma = 0}
 \end{equation}
is satisfied.
 \end{enumerate}
\end{theorem}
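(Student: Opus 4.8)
The plan is to prove the chain of equivalences $3 \Rightarrow 2 \Rightarrow 1$ and then $1 \Rightarrow 3$, using the Itô calculus prepared in Section \ref{Section: Some technical propositions}. The central object is the characteristic functional $\hat\phi[f] = \exp(W[f])$ with $W[f] = \frac12\Gamma[f,f] + \eta[f]$, and the key computational input is Proposition \ref{Proposition: G chi[f] is an Ito process}, which already gives
\begin{equation*}
	\dI {G_t^{-1}}_* \hat\phi[f] =
	\exp\!\left( {G_{t}^{-1}}_* W[f] \right)
	\left(
		{G_{t}^{-1}}_* \Ac W[f]  +
		\tfrac12 \left( {G_{t}^{-1}}_* \Lc_{\sigma} W[f] \right)^2
	\right) dt
	+ \exp\!\left( {G_{t}^{-1}}_* W[f] \right) {G_{t}^{-1}}_* \Lc_{\sigma} W[f] \, \dI B_t
\end{equation*}
(all terms tacitly stopped at $T[f]$). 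So $ {G_{t \wedge T[f]}^{-1}}_* \hat\phi^{\psi}[f]$ is a local martingale if and only if its drift vanishes, i.e. $\Ac W[f] + \frac12 (\Lc_\sigma W[f])^2 = 0$ identically (pulled back by $G_t^{-1}$, but since $G_t^{-1}$ ranges over an open family of maps this is equivalent to the identity for $W$ itself). First I would expand this using $\Ac = \Lc_\delta + \frac12\Lc_\sigma^2$ and the bilinearity of $\Gamma$, separating the ``quadratic in $\Gamma$'', ``$\Gamma$ times $\eta$'', and ``quadratic in $\eta$'' pieces. The $\Lc_\sigma$ of a quadratic form $\Gamma[f,f]$ produces $\Lc_\sigma\Gamma[f,f]$, and $\Lc_\sigma^2$ produces $\Lc_\sigma^2\Gamma[f,f] + 2(\text{something})$; carefully, the identity $\Ac W[f] + \frac12(\Lc_\sigma W[f])^2 = 0$ becomes, after grouping by homogeneity degree in $f$,
\begin{equation*}
	\underbrace{\Lc_\delta \Gamma[f,f] + \Lc_\sigma\eta[f]\,\Lc_\sigma\eta[f]}_{\text{deg 2, the }\eta\eta\text{ and }\Gamma\text{ parts}}
	+ \underbrace{\Lc_\delta\eta[f] + \tfrac12\Lc_\sigma^2\eta[f]}_{\text{deg 1}}
	+ \tfrac14\left(\Lc_\sigma^2\Gamma[f,f] + \cdots\right) = 0,
\end{equation*}
and using the polarization $f \mapsto f,g$ and Lemma \ref{Lemma: L phi is a scalar} (which says $\Lc_\sigma\Gamma$ and $\Lc_\sigma\eta$ have no anomalous pre-pre-Schwarzian terms), the vanishing of each homogeneity component separately yields exactly \eqref{Formula: L_sigma Gamma = 0} (from the degree-2 quadratic-in-$\Gamma$-with-two-derivatives piece, forcing $\Lc_\sigma\Gamma = 0$), then \eqref{Formula: Hadamard's formula} (from the remaining degree-2 piece), then \eqref{Formula: L eta + 1/2 L^2 eta} (degree 1). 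This establishes $2 \Leftrightarrow 3$; the reverse direction is immediate since if the three equations hold the drift is zero.

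Next, for $3 \Rightarrow 1$ (or directly $2 \Rightarrow 1$), I would argue that a bounded local martingale is a martingale, so $\Ev{{G_{t\wedge T[f]\wedge T_n}^{-1}}_* \hat\phi^\psi[f]} = \hat\phi^\psi[f]$ for a suitable fundamental sequence $\{T_n\}$ localizing the exponential local martingale (taking $T_n$ to be the exit time from a large ball in the relevant finite-dimensional chart of moduli, so boundedness is automatic). Taking the expectation over the independent GFF sample and identifying $\hat\phi$ as the characteristic functional of $\Phi[f]$ via \eqref{Formula: chi = int exp iPhi dPhi}, the equality of characteristic functionals gives equality in law of ${G^{-1}_{t\wedge T[f]\wedge T_n}}_*\Phi^\psi[f]$ and $\Phi^\psi[f]$ for each $n$ — which is precisely local coupling in the sense of Definition \ref{Definition: Local coupling}. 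Here one has to be slightly careful that the operations ${G^{-1}_t}_*$ on $\Phi$ versus on $\hat\phi$ are compatible, i.e. that $\Ev{ {G^{-1}_t}_*\Phi[f] \text{ stuff}}$ really is computed by ${G^{-1}_t}_*\hat\phi[f]$; this is exactly the content of \eqref{Formula: F phi(Gamma, eta) = phi(F^-1 Gamma, F^-1 eta)} together with the pushforward formulas for $\eta$ and $\Gamma$, so it reduces to bookkeeping.

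For the converse $1 \Rightarrow 2$: if the GFF is locally coupled, then for the fundamental sequence $T_n[f,\psi]$ the law of ${G^{-1}_{t\wedge T[f]\wedge T_n}}_*\Phi^\psi[f]$ equals that of $\Phi^\psi[f]$, hence its Laplace transform — which, again by independence and \eqref{Formula: F phi(Gamma, eta) = phi(F^-1 Gamma, F^-1 eta)}, equals $\Ev{{G^{-1}_{t\wedge T[f]\wedge T_n}}_*\hat\phi^\psi[f]}$ — is constant in $t$; since this holds for all $n$ and the process ${G^{-1}_{t\wedge T[f]}}_*\hat\phi^\psi[f]$ is a continuous semimartingale whose stopped expectations are constant, it is a local martingale, giving proposition 2. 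I expect the main obstacle to be not the probabilistic logic but the algebraic reorganization in the $2 \Leftrightarrow 3$ step: one must verify that the three homogeneity components of the drift $\Ac W[f] + \frac12(\Lc_\sigma W[f])^2$ can be set to zero \emph{independently}, which requires knowing that there is no cross-cancellation forced between the $\eta$-sector and the $\Gamma$-sector — this is where Lemma \ref{Lemma: L phi is a scalar} (scalarity of $\Lc_v\eta$ and $\Lc_v\Gamma$) is essential, as it guarantees the degree-2 ``two-derivative'' term is purely $\Lc_\sigma\Gamma$ with no admixture from $\eta$, so that equation \eqref{Formula: L_sigma Gamma = 0} is forced first and then substituted back. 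A secondary technical point is justifying the interchange of Itô differentiation with the (double) spatial integration defining $\eta[f]$ and $\Gamma[f,g]$, but this is already handled by Proposition \ref{Proposition: dI int eta = int dI eta} via the stochastic Fubini theorem, so I would simply cite it.
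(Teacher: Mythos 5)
Your reduction of proposition 2 to proposition 3 follows the paper's strategy (vanishing of the drift term supplied by Proposition \ref{Proposition: G chi[f] is an Ito process}, then decomposition of the resulting functional identity by homogeneity in $f$), but your bookkeeping of the degrees is wrong in a way that matters. Since $W[f]=\frac12\Gamma[f,f]+\eta[f]$, the drift $\Ac W[f]+\frac12\left(\Lc_\sigma W[f]\right)^2$ is a functional polynomial of degree \emph{four} in $f$: its degree-4 part is $\frac18\left(\Lc_\sigma\Gamma[f,f]\right)^2$, its degree-3 part is $\frac12\Lc_\sigma\Gamma[f,f]\,\Lc_\sigma\eta[f]$, its degree-2 part is $\frac12\Lc_\delta\Gamma[f,f]+\frac14\Lc_\sigma^2\Gamma[f,f]+\frac12\left(\Lc_\sigma\eta[f]\right)^2$, and its degree-1 part is $\Ac\eta[f]$. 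Equation \eqref{Formula: L_sigma Gamma = 0} is forced by the vanishing of the quartic and cubic components after polarization --- not, as you claim, by a ``degree-2 quadratic-in-$\Gamma$'' piece, and Lemma \ref{Lemma: L phi is a scalar} plays no role in this separation. Only after $\Lc_\sigma\Gamma=0$ is established does the degree-2 component collapse to Hadamard's formula \eqref{Formula: Hadamard's formula}, since then $\Ac\Gamma=\Lc_\delta\Gamma$. As written, your degree-2 equation would read $\Lc_\delta\Gamma+\frac12\Lc_\sigma^2\Gamma+\Lc_\sigma\eta\,\Lc_\sigma\eta=0$, which is strictly weaker than \eqref{Formula: Hadamard's formula} together with \eqref{Formula: L_sigma Gamma = 0}; so the argument as stated does not deliver proposition 3.

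The more serious gap is in $1\Rightarrow 2$. Local coupling only gives you that the \emph{unconditional} expectation $\Evv{B}{{G^{-1}_{t\wedge T[f]\wedge T_n}}{}_*\hat\phi^{\psi}[f]}$ is constant in $t$; your claim that a continuous semimartingale whose stopped expectations are constant is a local martingale is false (take $X_t=\int_0^t B_s\,ds$: its expectation vanishes identically, but it is a pure finite-variation drift, hence not a local martingale). To obtain proposition 2 you need the \emph{conditional} identity $\Evv{B}{X_{t+s}~|~\mathcal{F}^B_s}=X_s$, which the coupling hypothesis does not hand you directly. The paper closes this gap by applying the unconditional coupling identity not to $f$ but to the pushed-forward test function $\tilde G_{s\wedge T[f]}{}_*f$, where $\tilde G_s$ is an independently sampled copy of the SLE, and then invoking the semigroup and Markov structure $\tilde{\tilde G}_{t+s}=G_t\circ\tilde G_s$ of $(\delta,\sigma)$-SLE to convert the resulting statement into exactly the conditional expectation property with respect to $\mathcal{F}^B_{s\wedge T[f]}$ (along a suitably shifted fundamental sequence). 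This use of the Markov property is the essential idea of that implication and is entirely missing from your proposal.
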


We start the proof after some remarks.
Just for clarity (but not for applications) we reformulate the system
(\ref{Formula: L eta + 1/2 L^2 eta}--\ref{Formula: L_sigma Gamma = 0})
directly in terms of partial derivatives using
(\ref{Formula: L eta = v d eta + chi dv}),
(\ref{Formula: L L phi = w de L phi}),
(\ref{Formula: L Gamma = ...}),
and
(\ref{Formula: A = L + 1/2 L^2})
as
\begin{equation}\begin{split}
 &\delta(z) \de_z \eta(z) + \overline{\delta(z)} \de_{\bar z} \eta(z)
 + \mu \delta'(z) + \mu^* \overline{\delta'(z)}
 +\\+&
 \frac12 \sigma^2(z) \de_z^2 \eta(z) 
 + \frac12 \overline{\sigma^2(z)} \de_{\bar z}^2 \eta(z)
 + \sigma(z) \overline{\sigma(z)} \de_z \de_{\bar z} \eta
 +\\+&
 \frac12 \sigma(z) \sigma'(z) \de_z \eta(z)
 + \frac12 \overline{\sigma(z)} \overline{\sigma'(z)} \de_{\bar z} \eta(z)
 + \mu \sigma(z) \sigma''(z) + \mu^* \overline{ \sigma(z) \sigma''(z) } =
 \beta;\\
 &\delta(z) \de_z \Gamma(z,w) + \delta(w) \de_w \Gamma(z,w)
 +\overline{\delta(z)} \de_{\bar z} \Gamma(z,w) 
 + \overline{\delta(w)} \de_{\bar w} \Gamma(z,w)
 +\\+&
 \left(
  \sigma(z) \de_z \eta(z) + \overline{\sigma(z)} \de_{\bar z} \eta(z)
  + \mu \sigma'(z) + \mu^* \overline{ \sigma'(z) }
 \right)
 \times \\ \times &
 \left(
  \sigma(w) \de_w \eta(w) + \overline{\sigma(w)} \de_{\bar w} \eta(w)
  + \mu \sigma'(w) + \mu^* \overline{ \sigma'(w) }
 \right) = \beta_1(z)+\beta_1(w);\\
 &\sigma(z) \de_z \Gamma(z,w) + \sigma(w) \de_w \Gamma(z,w)
 +\overline{\sigma(z)} \de_{\bar z} \Gamma(z,w) 
 + \overline{\sigma(w)} \de_{\bar w} \Gamma(z,w) 
 =\\=& 
 \beta_2(z)+\beta_2(w),
	\label{Formula: master equations in ditails}
\end{split}\end{equation}
for $\Hc=\Hc_s^*$. We drop above the upper index $\psi$ for shortness.
The right hand side is not zero, but arbitrary real
constant $\beta$ for the first equation and a sum of two real-valued functions
$\beta_i(z)+\beta_i(w)$, $i=1,2$ for other two equations because linear
functionals over $\Hc_s^*$ are defined up to a constant and bilinear ones are
defined up to a sum $\beta(z)+\beta(w)$, see Sections
\ref{Section: Test function}
and
\ref{Section: Fundamental solution to the Laplace-Beltrami equation}.
For the space $\Hc=\Hc_s$ the right-hand sides in 
\eqref{Formula: master equations in ditails}
are just zeros.

The first equation
(\ref{Formula: L eta + 1/2 L^2 eta})
is just a local martingale condition for $\eta$. The second one 
\eqref{Formula: Hadamard's formula} 
is known as Hadamard's formula, see also Section 
\ref{Section: Stress tensor and the conformal Ward identity}. 
The third 
means that $\Gamma$ should be invariant under the one-parametric family of  M\"obius automorphisms generated by $\sigma$.


\medskip
\noindent
{\it Proof of  Theorem \ref{Theorem: The coupling theorem}}.
Let us start with showing how the statement 1 about the coupling implies the
statement 2 about the local martingality. 

\medskip
\noindent
\textbf{1.$\Leftrightarrow$2.}
Let 
$G_{t\wedge T_f \wedge T_n[f,\psi]}$ 
be a stopped process $G_{t\wedge T_f}$ by the stopping times
$T_n[f,\psi]$ 
forming some fundamental sequence. 
The coupling statement can be reformulated as an equality of characteristic
functions for the random variables 
${G^{-1}_{t\wedge \tilde T_n[f,\psi]}}_*\Phi^{\psi}[f]$
and
$\Phi^{\psi}[f]$
for all test functions $f$.
Namely, the following expectations must be equal
\begin{equation}
 \Evv{B}{ \Evv{\Phi} 
 {e^{ {G^{-1}_{t\wedge T[f]\wedge T_n[f,\psi]}}_*\Phi^{\psi}[f] } } }
 = \Evv{\Phi}{e^{\Phi^{\psi}[f]}}
 ,\quad f\in \Hc,\quad t\in[0,+\infty),\quad n=1,2,\dotso~,
\end{equation}
which, in particular, means the integrability of
$e^{{G^{-1}_{t\wedge \tilde T_n[f,\psi]}}{}_*\Phi^{\psi}[f]}$ 
with respect to 
$\Omega^{B}$ 
and 
$\Omega^{\Phi}$.
We used 
$\Evv{B}{\cdot}$ 
for the expectation with respect to the random law of
$\{B_t\}_{t\in[0,+\infty)}$ 
(or $\{G_t\}_{t\in[0,+\infty)}$) 
and
$\Evv{\Phi}{\cdot}$ 
for the expectation with respect to $\Phi$.
Let us use Definitions
(\ref{Formula: GFF chracteristic function})
and
(\ref{Formula: F phi(Gamma, eta) = phi(F^-1 Gamma, F^-1 eta)})
to simplify this identity to
\begin{equation}
	\Evv{B}{{G^{-1}_{t\wedge T[f]\wedge T_n[f,\psi]}}_* \hat\phi^{\psi}[f]} =
	\hat\phi^{\psi}[f],\quad f\in \Hc
	,\quad t\in[0,+\infty),\quad n=1,2,\dotso~.
	\label{Formula: E[G chi] = chi}
\end{equation}
After the change 
$f\map \tilde G_{s \wedge T[f]}{}_*f$ 
for some independently
sampled 
$\tilde G_s$ 
and
$s\in[0,+\infty)$, 
we obtain
\begin{equation}\begin{split}
	&\Evv{B}{G^{-1}_{t\wedge T[\tilde G_{s \wedge T[f]}{}_* f]
		\wedge T_n[\tilde
	G_{s \wedge T[f]}{}_*f,\psi]}{}_* 
	\hat\phi^{\psi}[\tilde G_{s \wedge T[f]}{}_*f]} 
	= \hat\phi^{\psi}[\tilde G_{s \wedge T[f]}{}_*f],\\
	&f\in \Hc
	,\quad t\in[0,+\infty),\quad n=1,2,\dotso.
\end{split}\end{equation}
Multiplying both sides on 
\begin{equation}
	e^{\int\limits_{~\supp f^{\psi}} 
	\left( 
		\mu \log \left(\tilde G_{s \wedge T[f]}^{\psi}\right){'}(z) +
		\mu^{*}	\overline{\log \left(\tilde G_{s \wedge T[f]}^{\psi}\right){'}(z)}
	\right)
	f(z)	l(dz)}
\end{equation}
and using 
\eqref{Formula: F eta f= eta F-1 f}
and
\eqref{Formula: GFF chracteristic function}
we conclude that
\begin{equation}\begin{split}
	&\Evv{B}{
	\tilde G_{s \wedge T[f]}^{-1}{}_*
	G^{-1}_{t\wedge T[\tilde G_{s \wedge T[f]}{}_* f]
		\wedge T_n[\tilde
	G_{s \wedge T[f]}^{-1}{}_*f,\psi]}{}_* 
	\hat\phi^{\psi}[f]} 
	= \tilde G_{s \wedge T[f]}{}_*^{-1} \hat\phi^{\psi}[f],\\
	&f\in \Hc
	,\quad t\in[0,+\infty),\quad n=1,2,\dotso.
	\label{Formula: 5}
\end{split}\end{equation}
Defined now the process
\begin{equation}
	\tilde {\tilde G}_{t+s} 
	:= G_{t} \circ \tilde G_{s}
	,\quad s,t\in[0,+\infty),
\end{equation}
which has the law of ($\delta,\sigma$)-SLE.
Its stopped version possesses
\begin{equation}
	\tilde{{\tilde G}}_{t+s\wedge T[f]} 
	= G_{t\wedge T[\tilde G_{s \wedge T[f]}{}_* f]} \circ \tilde G_{s\wedge T[f]}
	,\quad s,t\in[0,+\infty),\quad f\in\Hc.
\end{equation}
The left-hand side of
\eqref{Formula: 5}
equals to
\begin{equation}\begin{split}
	&\Evv{B}{
	\left(
		{G_{t\wedge T[\tilde G_{s \wedge T[f]}{}_* f]
		\wedge T_n[\tilde
		G_{s \wedge T[f]}^{-1}{}_*f,\psi]}} 
		\circ \tilde G_{s \wedge T[f]}
	\right)_*^{-1} 
	\hat\phi^{\psi}[f]}
	=\\=&
	\Evv{B}{
	\left(
		\tilde {\tilde G}_{t+s\wedge T[f]
		\wedge T_n[\tilde	G_{s \wedge T[f]}^{-1}{}_*f,\psi]+s } 
	\right)_*^{-1} 
	\hat\phi^{\psi}[f]~|~
	\mathcal{F}^B_{s\wedge T[f]}}.
\end{split}\end{equation}
We use now the Markov property of ($\delta,\sigma$)-SLE and conclude that
$T_n'[f,\psi]:= T_n[\tilde	G_{s \wedge T[f]}^{-1}{}_*f,\psi]+s$
is a fundamental sequence for the pair of $f$ and 
$\psi$. 
Thus, 
\eqref{Formula: 5}
simplifies to
\begin{equation}
	\Evv{B}{
		G_{t+s\wedge T[f]
		\wedge T_n'[f,\psi] }^{-1} {}_* \hat\phi^{\psi}[f]~|~
	\mathcal{F}^B_{s\wedge T[f]\wedge T_n'[f,\psi]}}
	=G_{t+s\wedge T[f]\wedge T_n'[f,\psi] }^{-1}{}_* \hat \phi[f],
\end{equation}
hence, 
$\{G_{t\wedge T[f] }^{-1}{}_* \hat \phi[f]\}_{t\in[0,+\infty)}$ 
is a local martingale. 

The inverse statement can be obtained by the same method in the reverse
order.

\medskip
\noindent
\textbf{2.$\Leftrightarrow$3.}
According to Lemma \ref{Proposition: G chi[f] is an Ito process},
the drift term, the coefficient at $dt$, vanishes identically when
\begin{equation}
 \Ac W[f] +  \frac12 \left( \Lc_{\sigma} W[f] \right)^2 = 0,\quad f\in\Hc.
 \label{Formula: A W + 1/2 L W L W = 0}
\end{equation}
The left-hand side is a functional polynomial of power four.
We use the fact that a regular symmetric functional 
$P[f] := \sum\limits_{k=1,2,\dotso n}
p_k[f,f,\dotso,f]$
of power $n$ over such spaces as $\Hc_s$,
$\Hc_s^*$, $\Hc_{s,b}$ (see Sections 
\ref{Section: Coupling of forward radial SLE and Dirichlet GFF}),
$\Hc_{s,b}^{*}$ (see Section
\ref{Section: Coupling of reverse radial SLE and Neumann GFF}),
or
$\Hc^{\pm}_{s,b}{}^{*}$ (see Section
\ref{Section: Coupling with twisted GFF})
is identically zero if and only if 
\begin{equation}
	p_k[f_1,f_2,\dotso,f_n] = 0,\quad k=1,2,\dotso n, \quad f\in \Hc.
\end{equation}
Thus, each of the following functions must be identically zero:
\begin{equation}\begin{split}
	&\Ac \eta[f] = 0,\quad
	\frac12 \Ac \Gamma[f,g]
	+ \frac12 \Lc_{\sigma} \eta[f] \Lc_{\sigma} \eta[g] = 0,\\
	& \Lc_{\sigma} \eta[f] \Lc_{\sigma} \Gamma[g,h] + \text{symmetric terms} = 0,\\
	&\Lc_{\sigma} \Gamma[f,g] \Lc_{\sigma} \Gamma[h,l]  + \text{symmetric terms} =0,\\
	& f,g,h,l\in \Hc.
	\label{Formula: 6}
\end{split}\end{equation}
We can conclude that $\Lc_{\sigma} \Gamma[f,g]= 0$,
$\Ac \Gamma[f,g] = \Lc_{\delta}\Gamma[f,g]$
for any $f,g\in\Hc$,
and this system is equivalent to the system
(\ref{Formula: L eta + 1/2 L^2 eta}--\ref{Formula: L_sigma Gamma = 0}).
For the case $\Hc=\Hc_s$ we can write \eqref{Formula: 6} in terms of functions
on $\psi(\Dc)$:
\begin{equation}\begin{split}
	&\Ac \eta(z) = 0,\quad
	\frac12 \Ac \Gamma(z,w)
	+ \frac12 \Lc_{\sigma} \eta(z) \Lc_{\sigma} \eta(w) = 0,\\
	& \Lc_{\sigma} \eta(z) \Lc_{\sigma} \Gamma(w,u) 
	+ \text{symmetric terms} = 0,\\
	& \Lc_{\sigma} \Gamma(z,w) \Lc_{\sigma} \Gamma(u,v) 
	+ \text{symmetric terms} =0,\\
	& z,w,u,v\in \psi(\Dc)
	,\quad z \neq w,~u \neq v, \dotso~.
\end{split}\end{equation}



\begin{remark}
Fix a chart $\psi$.
The coupling and the martingales are not local if in addition to the
proposition 3 in Theorem
\ref{Theorem: The coupling theorem}
the relation
\begin{equation}
 \Evv{B}{ \left| \int\limits_0^t \exp
 	\left( {G_{\tau \wedge T[f]}^{-1}}_* W^{\psi}[f] \right) 
 	{G_{\tau \wedge T[f]}^{-1}}_*
 	\Lc_{\sigma} W^{\psi}[f] \dI B_{\tau} \right| } <\infty,\quad
 t\geq 0,
 \label{Formula: E[ |int GLW dB| ]}
\end{equation}
holds. This is the condition that the diffusion term at
$\dI B_t$ in
(\ref{Formula: G chi = chi + int ...})
is in $L_1(\Omega^{B})$.
However, this may not be true, in general, in another chart $\tilde \psi$.
Meanwhile, if the local martingale property of
${G_{t \wedge T[f]}^{-1}}_* \hat\phi^{\psi}[f]$
is satisfied in one chart $\psi$ for any $f\in\Hc$, then it is also true in
any other chart due to the invariance of the condition 
\eqref{Formula: A W + 1/2 L W L W = 0} in the proof.
\end{remark}

The studying of the general solution of 
(\ref{Formula: L eta + 1/2 L^2 eta}-\ref{Formula: L_sigma Gamma = 0})
is an interesting and complicated problem. Take the Lie derivative
$\Lc_{\sigma}$ 
over the second equation, the Lie derivative 
$\Lc_{\delta}$
over the third equation, and consider the difference of the resulting equations.
It is an algebraically independent equation
\begin{equation}
	\Lc_{[\delta,\sigma]} \Gamma[f,g] = 
	- \Lc_{\sigma}^2 \eta[f] \Lc_{\sigma} \eta[g]
	\Lc_{\sigma}^2 \eta[f] \Lc_{\sigma}^2 \eta[g].
\end{equation}
Continuing by induction we obtain an inifinite system of a priori algebraically
independent equations because the Lie algebra $\mathcal{U}[\delta,\sigma]$,
induced by $\delta$ and $\sigma$ and 
introduced in Section
\ref{Section: Classification and normalization from algebraic point of view},
is infinite dimensional. Thereby, the existence of the solution on the system
(\ref{Formula: L eta + 1/2 L^2 eta}-\ref{Formula: L_sigma Gamma = 0})
is a special event that is strongly related to the properties of 
$\mathcal{U}[\delta,\sigma]$. A geometric interpretation of the second equation 
\eqref{Formula: Hadamard's formula}
and a hint to solve this equation are discussed in Section 
\ref{Section: Stress tensor and the conformal Ward identity}.

Before studying special solutions to the system
(\ref{Formula: L eta + 1/2 L^2 eta}--\ref{Formula: L_sigma Gamma = 0}),
let us consider some of its general properties. We also reformulate it in terms
of the analytic functions $\eta^+$, $\Gamma^{++}$ and $\Gamma^{+-}$, which is
technically more convenient.

\begin{theorem}
\label{Theorem: eta structure}
Let $\delta$, $\sigma$, $\eta$, and $\Gamma$ be such that the system
(\ref{Formula: L eta + 1/2 L^2 eta}--\ref{Formula: L_sigma Gamma = 0})
is satisfied, let
$\Gamma$ be a fundamental solution to the Laplace equation
(see (\ref{Formula: Gamma = Log + H})),
and which transforms as a scalar,
see (\ref{Formula: G B(z,w) = B(G(z),G(w))}), 
and let $\eta$ be a pre-pre-Schwrazian. Then,
\begin{itemize}
\item For the forward case and $\Hc=\Hc_s$: 
\begin{enumerate}[1.]
\item
$\eta$ is a $(i\chi/2,-i\chi/2)$-pre-pre-Schwarzian
(\ref{Formula: tilde eta = eta - chi arg})
given by a harmonic function in any chart with $\chi$ given by
\begin{equation}
	\chi = \frac{2}{\sqrt{\kappa}} - \frac{\sqrt{\kappa}}{2}.
	\label{Formula: chi = 2/k - k/2}
\end{equation}
\item
The boundary value of $\eta$ undergoes a jump $2\pi/\sqrt{\kappa}$ at the
source point $a$, namely,  its local behaviour in the half-plane chart is given
by (\ref{Formula: eta = -2/k (arg z - pi/2) + hol}) up to a sign;
\item
The system
(\ref{Formula: L eta + 1/2 L^2 eta}--\ref{Formula: L_sigma Gamma = 0})
is equivalent to the system
(\ref{Formula: eta = eta^+ + eta^-}),
(\ref{Formula: delta/sigma j + mu [sigma,delta]/sigma + 1/2 L j = ibeta}),
(\ref{Formula: Gamma = Gamma^++ + bar Gamma^++ + Gamma^+- + bar Gamma^+-}),
(\ref{Formula: L Gamma^++ + L sigma^+ L sigma^+ = e + e}),
and
(\ref{Formula: L Gamma^++ = 0, L Gamma^+- = 0}).
\end{enumerate}
\item For the reverse case and $\Hc=\Hc_s^*$:
\begin{enumerate}[1.]
\item
$\eta$ is a $(Q/2,Q/2)$-pre-pre-Schwarzian
(\ref{Formula: tilde eta = eta + Q log})
given by a harmonic function in any chart with $Q$ given by
\begin{equation}
	Q = \frac{2}{\sqrt{\kappa}} + \frac{\sqrt{\kappa}}{2}.
	\label{Formula: Q = 2/k + k/2}
\end{equation}
\item
The  value of $\eta$ possesses a logarithmic singularity
at the source point $a$, namely,  its local behaviour in the half-plane chart is
given by 
\eqref{Formula: Q = 2/k log |z| + hol}
up to a sign;
\item
The system
(\ref{Formula: L eta + 1/2 L^2 eta}--\ref{Formula: L_sigma Gamma = 0})
is equivalent to the system
(\ref{Formula: eta = eta^+ + eta^-}),
(\ref{Formula: delta/sigma j + mu [sigma,delta]/sigma + 1/2 L j = ibeta}),
(\ref{Formula: Gamma = Gamma^++ + bar Gamma^++ + Gamma^+- + bar Gamma^+-}),
(\ref{Formula: L Gamma^++ + L sigma^+ L sigma^+ = e + e}),
and
(\ref{Formula: L Gamma^++ = 0, L Gamma^+- = 0}).
\end{enumerate}
\end{itemize}
\end{theorem}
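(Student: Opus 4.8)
The plan is to analyze the three equations
\eqref{Formula: L eta + 1/2 L^2 eta}--\eqref{Formula: L_sigma Gamma = 0}
by splitting the pre-pre-Schwarzian $\eta$ and the Green's functional $\Gamma$ into holomorphic and antiholomorphic parts, exploiting that $\eta$ is harmonic in each chart (so $\eta = \eta^+ + \eta^-$ with $\eta^- = \overline{\eta^+}$) and that $\Gamma$ has the form \eqref{Formula: Gamma = Log + H} with harmonic symmetric part. First I would treat \eqref{Formula: L eta + 1/2 L^2 eta}: using the explicit form \eqref{Formula: L eta = v d eta + chi dv} of the Lie derivative of a pre-pre-Schwarzian together with the normalization from Section \ref{Section: Equivalence and normalization SLE} (so $\delta^{\HH} = \pm(2/z + \delta_{-1} + \delta_0 z + \delta_1 z^2)$, $\sigma^{\HH} = -1 + \sigma_0 z + \sigma_1 z^2$), I would expand the equation $\Ac\eta = 0$ in the half-plane chart near $z=0$ and extract the coefficient of the leading singular term. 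Because $\delta$ has a simple pole at $a$ and $\sigma(a)\neq 0$, the $\Lc_\sigma^2\eta$ term forces a specific logarithmic (forward: $\arg$) singularity of $\eta^+$ at $a$, which pins down the pre-pre-Schwarzian order $\mu = \pm i\chi/2$ (resp.\ $\pm Q/2$) and, by matching the subleading coefficient, the value \eqref{Formula: chi = 2/k - k/2} (resp.\ \eqref{Formula: Q = 2/k + k/2}). The appearance of $\kappa = 2\sigma_{-1}^2/\delta_{-2}$ here is exactly through the ratio of the coefficient of $\Lc_\sigma^2$ (which is $\tfrac12\sigma_{-1}^2$) against that of $\Lc_\delta$ (which is $\delta_{-2}$), and I would carry this computation using \eqref{Formula: kappa := ...}.

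Next I would establish the jump of the boundary value of $\eta$ at $a$: since $\eta = \chi\arg v$ (forward) or $-Q\log|v|$ (reverse) for the vector field reconstruction \eqref{Formula: eta = chi arg v}, and since $\delta$ provides precisely such a field with a simple pole at $a$, the local form \eqref{Formula: eta = -2/k (arg z - pi/2) + hol} (resp.\ \eqref{Formula: Q = 2/k log |z| + hol}) follows by substituting $v = \delta^{\HH}$ and reading off the singular part $\propto \arg z$ (resp.\ $\log|z|$) with the coefficient already fixed to $2/\sqrt\kappa$. The sign ambiguity corresponds to the two choices of orientation (the two roots in the $\pm$ pairs), exactly as in the classification of Section \ref{Section: Equivalence and normalization SLE}; I would note that this is the mirror-image freedom and does not affect the structure.

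Then I would rewrite the whole system in the holomorphic variables. Writing $\eta = \eta^+ + \overline{\eta^+}$ gives \eqref{Formula: eta = eta^+ + eta^-}; feeding this into $\Ac\eta=0$ and using that holomorphic and antiholomorphic contributions separate (up to the additive real constant $\beta$ allowed by $\Hc_s^*$, resp.\ zero for $\Hc_s$) yields the single scalar ODE \eqref{Formula: delta/sigma j + mu [sigma,delta]/sigma + 1/2 L j = ibeta} for $j := \Lc_\sigma\eta^+/\ldots$ (the precise combination from \eqref{Formula: L eta = v d eta + chi dv}). Similarly, decomposing $\Gamma - (-\tfrac12\log(z-w)(\bar z-\bar w))$ into $\Gamma^{++}(z,w)$, its conjugate, $\Gamma^{+-}(z,\bar w)$ and conjugate gives \eqref{Formula: Gamma = Gamma^++ + bar Gamma^++ + Gamma^+- + bar Gamma^+-}; equation \eqref{Formula: L_sigma Gamma = 0} splits as \eqref{Formula: L Gamma^++ = 0, L Gamma^+- = 0}, and Hadamard's formula \eqref{Formula: Hadamard's formula} becomes \eqref{Formula: L Gamma^++ + L sigma^+ L sigma^+ = e + e} once the logarithmic kernel's Lie derivative is computed explicitly (this is the content of the Hadamard variational computation already carried out in Section \ref{Section: Stress tensor and the conformal Ward identity}, see \eqref{Formula: L_delta G = Hadamard Dirichlet}). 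Conversely, any $(\eta^+,\Gamma^{++},\Gamma^{+-})$ solving the rewritten system reassembles into a solution of the original one, giving the claimed equivalence.

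The main obstacle I expect is bookkeeping the additive ambiguities correctly: over $\Hc_s^*$ linear functionals are defined only up to a real constant and bilinear ones up to a sum $\beta(z)+\beta(w)$ (see \eqref{Formula: master equations in ditails}), so the "equivalence of systems" must be stated modulo these, and one has to check that the holomorphic/antiholomorphic splitting does not lose or create solutions — in particular that the decomposition $\Gamma^{+-}$ vs.\ $\Gamma^{++}$ respects the symmetry $\Gamma[f,g]=\Gamma[g,f]$ and the boundary conditions that make \eqref{Formula: ()_G = () = ()_nabla} hold. The singularity-matching step that produces \eqref{Formula: chi = 2/k - k/2} is computationally the longest but is routine given the normalization; the conceptual care is all in the additive constants and in verifying that the reconstruction \eqref{Formula: eta = chi arg v}, applied to $v=\delta$, genuinely reproduces the harmonic $\eta$ with the right jump and not merely up to an unspecified harmonic piece — this follows because the difference is a global harmonic pre-pre-Schwarzian of order zero with no singularity, hence (given the boundary data forced by the equation) determined, but I would spell that out.
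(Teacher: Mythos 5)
Your overall skeleton (harmonicity of $\eta$, the splitting $\eta=\eta^{+}+\overline{\eta^{+}}$, the ODE for $j^{+}=\Lc_{\sigma}\eta^{+}$, and the decomposition of $\Gamma$ into $\Gamma^{++},\Gamma^{+-}$ and conjugates) matches the paper's proof. But the quantitative core of the theorem is attributed to the wrong equation, and this is a genuine gap. You claim that expanding $\Ac\eta=0$ near $z=0$ alone ``forces a specific $\arg$ singularity of $\eta^{+}$ at $a$'' and thereby pins down both the singular coefficient and $\mu$. It does not: the first equation is a single relation linking the singular coefficient $c$ of $\eta^{\HH}\sim c\arg z$ to $\mu$ (explicitly, with $\delta^{\HH}\sim 2/z$ and $\sigma^{\HH}\sim-\sqrt{\kappa}$ one gets $c+\chi=c\kappa/4$), and it admits solutions with \emph{any} $c$, including $c=0$. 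The value $c=-2/\sqrt{\kappa}$ — hence the jump $2\pi/\sqrt{\kappa}$ and, via the first equation, the formulas \eqref{Formula: chi = 2/k - k/2} and \eqref{Formula: Q = 2/k + k/2} — comes from the Hadamard formula \eqref{Formula: Hadamard's formula}: the universal singularity $\Gamma^{++}(z,w)=-\tfrac12\log(z-w)+\dotsb$ together with the residue $\pm2$ of $\delta$ at $a$ produces the term $\pm 1/(zw)$ in $\Lc_{\delta}\Gamma^{++}$ (see \eqref{Formula: 2/z de_z - 12 log(z-w) + 2/z de_w - 12 log(z-w) = ...}), and matching this against the product $\Lc_{\sigma}\eta^{+}(z)\,\Lc_{\sigma}\eta^{+}(w)$ forces $j^{+\,\HH}(z)=\mp i/z+\text{hol.}$ Only after this is substituted into \eqref{Formula: delta/sigma j + mu [sigma,delta]/sigma + 1/2 L j = ibeta} does the $1/z^{2}$ coefficient determine $\mu$.

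Your second paragraph has the same defect in a sharper form: you propose to read off the local form of $\eta$ near $a$ from the reconstruction $\eta=\chi\arg v$ with $v=\delta$. That formula (\eqref{Formula: eta = chi arg v}) is only a device for building \emph{some} pre-pre-Schwarzian from a vector field; the actual $\eta$ of the coupling is not $\chi\arg\delta$, and substituting $v=\delta^{\HH}=2/z+\dotsb$ would give singular coefficient $-\chi=\sqrt{\kappa}/2-2/\sqrt{\kappa}$ rather than the correct $-2/\sqrt{\kappa}$; these never agree. The correct route is to integrate $j^{+}$ via \eqref{Folmula: phi+ = int ...}, using $\sigma^{\HH}(0)=-\sqrt{\kappa}$, which yields $\eta^{+\,\HH}(z)=\tfrac{i}{\sqrt{\kappa}}\log z+\dotsb$ and hence \eqref{Formula: eta = -2/k (arg z - pi/2) + hol}. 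Finally, you take harmonicity of $\eta$ as given, whereas it is part of the conclusion; it follows because \eqref{Formula: Hadamard's formula} forces $\Lc_{\sigma}\eta$ (hence $\Lc_{\sigma}^{2}\eta$) to be harmonic, \eqref{Formula: L eta + 1/2 L^2 eta} then makes $\Lc_{\delta}\eta$ harmonic, and transversality of $\delta$ and $\sigma$ a.e.\ gives harmonicity of $\eta$ itself. Your bookkeeping of the additive ambiguities over $\Hc_{s}^{*}$ is the right concern and is handled as you describe.
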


\begin{proof}
The system
(\ref{Formula: L eta + 1/2 L^2 eta}--\ref{Formula: L_sigma Gamma = 0})
defines $\eta$ only up to an additive constant $C$ that we keep writing in
the formulas for $\eta$ below. The condition for the pre-pre-Schwarzian $\eta$
to be real leads to only two possibilities:
\begin{enumerate} [1.]
 \item $\mu=-\mu^*$ and is pure imaginary as in
 (\ref{Formula: tilde eta = eta - chi arg});
 \item $\mu=\mu^*$ and is real as in
 (\ref{Formula: tilde eta = eta + Q log}).
\end{enumerate}

The equation
\eqref{Formula: Hadamard's formula} 
shows that functional $\Lc_{\sigma} \eta$ has to be given by a harmonic
function as well as $\Lc_{\sigma}^2 \eta$ in any chart. On the other hand,
\eqref{Formula: L eta + 1/2 L^2 eta} 
implies that $\Lc_{\delta}\eta$ is also harmonic. The vector fields $\delta$
and $\sigma$ are transversal almost everywhere. We conclude that $\eta$ is
harmonic. We used also the fact that the additional $\mu$-terms in 
\eqref{Formula: L eta = v d eta + chi dv} 
are harmonic.

The harmonic function
$\eta^{\psi}(z)$ 
can be represented as a sum of an analytic function 
${\eta^+}^{\psi}(z)$ 
and its complex conjugate in any chart $\psi$
\begin{equation}
 \eta^{\psi}(z) = {\eta^+}^{\psi}(z) + \overline{{\eta^+}^{\psi}(z)}.
 \label{Formula: eta = eta^+ + eta^-}
\end{equation}
Below in this proof, we drop the chart index $\psi$, which can
be chosen arbitrarily.

We can define $\eta^+$ and $\overline {\eta^+}$ to be pre-pre-Schwarzians of
orders $(\mu,0)$ and $(0,\mu^*)$ correspondingly due to
(\ref{Formula: L eta = v d eta + chi dv}). Thus, $\eta^+$ is defined up to a
complex constant $C^+$.
We denote
\begin{equation}
  j^+ := \Lc_{\sigma} \eta^+.
  \label{Formula: j^+ = L eta^+}
\end{equation}
and
\begin{equation}
  j := \Lc_{\sigma} \eta = \Lc_{\sigma} \eta^+ + \overline{\Lc_{\sigma} \eta^+}.
  \label{Formula: j = L eta}
\end{equation}
The reciprocal formula is
 \begin{equation}\begin{split}
  \eta^+(z) 
  := \int { \frac{j^+(z) - \mu \sigma'(z)}{\sigma(z)} }dz.
	\label{Folmula: phi+ = int ...}
\end{split}\end{equation}
This integral can be a ramified function if $\sigma(z)$ has a zero inside of
$\Dc$ (the elliptic case). We consider how to handle this technical difficulty
in Sections
\ref{Section: Coupling of forward radial SLE and Dirichlet GFF}
and
\ref{Section: Coupling of reverse radial SLE and Neumann GFF}.

Let us reformulate now
(\ref{Formula: L eta + 1/2 L^2 eta}) in terms of $j^+$.
Using the fact that
\begin{equation}
 \Lc_{v}^2 (\eta^+ + \overline{\eta^+}) =
 \Lc_{v}^2 \eta^+ + \Lc_{v}^2 \overline{\eta^+},
\end{equation}
we conclude that
\begin{equation}
 \Lc_{\delta} \eta^+ + \frac12 {\Lc_{\sigma}}^2 \eta^+ = C^+.
 \label{Formula: Lc eta^+ + 1/2 L^2 eta^+ = C}
\end{equation}
Here $C^+=i\beta$ for some $\beta\in\mathbb{R}$ for the forward case. For the
reverse case, 
$C^+=-\beta + i \beta'$
for some 
$\beta,\beta'\in\mathbb{R}$
because 
\eqref{Formula: Lc eta^+ + 1/2 L^2 eta^+ = C}
is an identity in sense of functionals over $\Hc_s^*$. 

The relation 
\eqref{Formula: Lc eta^+ + 1/2 L^2 eta^+ = C}
is equivalent to
\begin{equation}\begin{split}
 &\frac{\delta}{\sigma} \Lc_{\sigma} \eta^+ +
 \frac{\sigma \Lc_\delta \eta^+ - \delta \Lc_\sigma \eta^+}{\sigma}
 + \frac12 {\Lc_{\sigma}}^2 \eta^+ = C^+ \quad
  \Leftrightarrow \\
 &\frac{\delta}{\sigma} j^+
 + \frac{\sigma \delta \de \eta^+ + \mu \sigma \delta'
  - \delta \sigma \de \eta^+ - \mu \delta \sigma'}{\sigma}
 + \frac12 \Lc_\sigma j^+ = C^+ \quad \Leftrightarrow
\end{split}\end{equation}
\begin{equation}\begin{split}
 &\frac{\delta}{\sigma} j^+
 + \mu \frac{[\sigma, \delta]}{\sigma}
 + \frac12 \Lc_\sigma j^+ = C^+. 
 \label{Formula: delta/sigma j + mu [sigma,delta]/sigma + 1/2 L j = ibeta}
\end{split}\end{equation}
We used
\eqref{Formula: L eta = v d eta + chi dv}
and
\eqref{Formula: L_v w = [v,w] := v w' - v' w}.

Consider now the function $\Gamma^{\HH}(z,w)$. It is harmonic with respect to
both variables with the only logarithmic singularity. Hence, it can be split as a sum of four terms
\begin{equation}
 \Gamma^{\HH}(z,w) :=
 {\Gamma^{++}}^{\HH}(z,w) + \overline{{\Gamma^{++}}^{\HH}(z,w)}
 \mp {\Gamma^{+-}}^{\HH}(z,\bar w) \mp \overline{{\Gamma^{+-}}^{\HH}(z,\bar w)},
 \label{Formula: Gamma = Gamma^++ + bar Gamma^++ + Gamma^+- + bar Gamma^+-}
\end{equation}
where
${\Gamma^{++}}^{\HH}(z,w)$
and
${\Gamma^{+-}}^{\HH}(z,w)$
are analytic with respect to both variables except the diagonal $z=w$ for
${\Gamma^{++}}^{\HH}(z,w)$.
We use the upper sign in the pairs $\mp$ for the forward case and the lower sign
for the reverse case.

So, e.g.,
$\overline{{\Gamma^{+-}}^{\HH}(z,\bar w)}$
is anti-analytic with respect to $z$ and analytic with respect to $w$.
We can assume that both
${\Gamma^{++}}(z,w)$ and ${\Gamma^{+-}}(z,w)$ transform as scalars represented
by analytic functions in all charts and symmetric with respect to $z
\leftrightarrow w $. Observe that these functions are defined at least up to 
the transform
\begin{equation}\begin{split}
 &{\Gamma^{++}}^{\HH}(z,w) \rightarrow
  {\Gamma^{++}}^{\HH}(z,w) + \epsilon^{\HH}(z) + \epsilon^{\HH}(w),\\
 &{\Gamma^{+-}}^{\HH}(z,w) \rightarrow
  {\Gamma^{+-}}^{\HH}(z,w) + \epsilon^{\HH}(z) + \epsilon^{\HH}(w)
\end{split}\end{equation}
for any analytic function $\epsilon^{\HH}(z)$ such that
\begin{equation}
 \overline{\epsilon^{\HH}(z)} = \epsilon^{\HH}(\bar z).
\end{equation}
In the forward case, the these additional terms are canceled due to the choice
of minus in the pairs `$\mp$' in 
\eqref{Formula: Gamma = Gamma^++ + bar Gamma^++ + Gamma^+- + bar Gamma^+-}. 
In the reverse case, the contribution of these functions is equivalent to
zero bilinear functional over $\Hc_s^*$.

Consider the  equation
(\ref{Formula: L_sigma Gamma = 0}). It leads to
\begin{equation}\begin{split}
	&\Lc_{\sigma} {\Gamma^{++}}^{\HH}(z,w) = \beta_2^{\HH}(z) +
	\beta_2^{\HH}(w),\quad \Lc_{\sigma} {\Gamma^{+-}}^{\HH}(z,w) = \beta_2^{\HH}(z)
	+ \beta_2^{\HH}(w)
	\label{Formula: L Gamma^++ = beta + beta, L Gamma^+- = beta + beta}
\end{split}\end{equation}
for any analytic function $\beta_2^{\HH}(z)$ such that
$\overline{\beta_2^{\HH}(z)} = \beta_2^{\HH}(\bar z)$.
One can fix this freedom, the function $\beta_2^{\HH}$, by
the conditions
\begin{equation}\begin{split}
 &\Lc_{\sigma} {\Gamma^{++}}^{\HH}(z,w) = 0,\quad
 \Lc_{\sigma} {\Gamma^{+-}}^{\HH}(z,w) = 0.
 \label{Formula: L Gamma^++ = 0, L Gamma^+- = 0}
\end{split}\end{equation}
Thus,
${\Gamma^{++}}^{\HH}(z,w)$ and ${\Gamma^{+-}}^{\HH}(z,w)$
are fixed up to a non-essential constant.

The second equation
(\ref{Formula: Hadamard's formula})
can be reformulated now as
\begin{equation}\begin{split}
 &\Lc_{\delta} {\Gamma^{++}}^{\HH}(z,w)
 + \Lc_{\sigma} {\eta^+}^{\HH}(z) \Lc_{\sigma} {\eta^+}^{\HH}(w)
 = \beta_1^{\HH}(z) + \beta_1^{\HH}(w),\\
 &\Lc_{\delta} {\Gamma^{+-}}^{\HH}(z,\bar w)
 + \Lc_{\sigma} {\eta^+}^{\HH}(z) \overline{\Lc_{\sigma} {\eta^+}^{\HH}(w)}
 = \beta_1^{\HH}(z) + \beta_1^{\HH}(\bar w)
 \label{Formula: L Gamma^++ + L sigma^+ L sigma^+ = e + e}
\end{split}\end{equation}
for any analytic function $\beta_1^{\HH}(z)$ such that
$\overline{\beta_1^{\HH}(z)} = \beta_1^{\HH}(\bar z)$ analogous to 
\eqref{Formula: L Gamma^++ = beta + beta, L Gamma^+- = beta + beta}.
We can conclude now that the system
(\ref{Formula: L eta + 1/2 L^2 eta}--\ref{Formula: L_sigma Gamma = 0})
is equivalent to the system
(\ref{Formula: eta = eta^+ + eta^-}),
(\ref{Formula: delta/sigma j + mu [sigma,delta]/sigma + 1/2 L j = ibeta}),
(\ref{Formula: Gamma = Gamma^++ + bar Gamma^++ + Gamma^+- + bar Gamma^+-}),
(\ref{Formula: L Gamma^++ + L sigma^+ L sigma^+ = e + e}),
and
(\ref{Formula: L Gamma^++ = 0, L Gamma^+- = 0}).

Use now the fact
\begin{equation}
	{\Gamma^{++}}^{\HH}(z,w) = -\frac12 \log(z-w) + \text{analytic terms}
	\label{Formula: Gamma^++ = - 1 log z-w + hol}
\end{equation}
to obtain a singularity of
${j^+}^{\HH}$
about the origin in the half-plane chart.
Relation
(\ref{Formula: delta_N = ..., sigma_N = ... without drift})
yields
\begin{equation}
 \pm \frac{2}{z} \de_z \left( -\frac12 \log(z-w) \right) 
 \pm \frac{2}{w} \de_w \left( -\frac12 \log(z-w) \right)
 = \pm \frac{1}{zw},
 \label{Formula: 2/z de_z - 12 log(z-w) + 2/z de_w - 12 log(z-w) = ...}
\end{equation}
hence,
\begin{equation}\begin{split}
	{j^+}^{\HH}(z) &= \frac{ -i  }{ z} + \text{holomorphic part},
	\quad \text{for the forward case;}\\
	{j^+}^{\HH}(z) &= \frac{ -1  }{ z} + \text{holomorphic part},
	\quad \text{for the reverse case.}	
	\label{Formula: j^+ = mp/z + hol}
\end{split}\end{equation}
The choice of the sign of
${j^+}^{\HH}(z)$
is irrelevant. We made this choice just to be consistent with
\cite{Sheffield2010}.
The analytic terms in
(\ref{Formula: Gamma^++ = - 1 log z-w + hol})
can give a term with the sum of simple poles at $z$ and $w$ but in the form of
the product $1/zw$.

From
(\ref{Folmula: phi+ = int ...})
we conclude that the singular part of
${\eta^+}^{\HH}$
is proportional to the logarithm of~$z$:
\begin{equation}\begin{split}
	&{\eta^+}^{\HH}(z) = \frac{i}{\sqrt{\kappa}} \log z + \text{holomorphic part}, 
	\quad \text{for the forward case;}\\
	&{\eta^+}^{\HH}(z) = \frac{1}{\sqrt{\kappa}} \log z + \text{holomorphic part}, 
	\quad \text{for the reverse case.}	 
\end{split}\end{equation}
Thus, we have
\begin{equation}\begin{split}
	\eta^{\HH}(z) =& 
	\frac{-2}{\sqrt{\kappa}} \arg z + \text{non-singular harmonic part},
	\quad \text{for the forward case;}
	\label{Formula: eta = -2/k (arg z - pi/2) + hol}
\end{split}\end{equation}
\begin{equation}\begin{split}
	\eta^{\HH}(z) =& 
	\frac{2}{\sqrt{\kappa}} \log|z| + \text{non-singular harmonic part},
	\quad \text{for the reverse case.}	
	\label{Formula: Q = 2/k log |z| + hol}
\end{split}\end{equation}
We can chose the additive constant such that, in the half-plane chart, we have
\begin{equation}
  \eta^{\HH}(+0)=-\eta^{\HH}(-0) = \frac{\pi}{\sqrt{\kappa}}
  \label{Formula: eta(+0) = eta(-0) = ...}
\end{equation}
in the forward case. This provides the jump $2 \pi/\sqrt{\kappa}$ of the value
of $\eta$ at the boundary near the origin, which is exactly the same behaviour
of $\eta$ needed for the flow line construction in \cite{Miller2012} and
\cite{Sheffield2010}.
However, the form
\eqref{Formula: eta(+0) = eta(-0) = ...}
is not chart independent, and only the jump
$2 \pi/\sqrt{\kappa} = \eta^{\psi}(+0) - \eta^{\psi}(-0)$
does not change its value if the boundary of $\psi(\Dc)$ is not singular in the
neighbourhood of the source $\psi(a)$. 
In the reverse case, such constant has no meaning also because $\eta$ is
a functional over $\Hc_s^*$.

Substitute now
\eqref{Formula: j^+ = mp/z + hol}
in 
\eqref{Formula: delta/sigma j + mu [sigma,delta]/sigma + 1/2 L j = ibeta} 
in the half-plane chart, use 
\eqref{Formula: delta_N = ..., sigma_N = ... without drift},
and consider the corresponding Laurent series.
We are interested in the coefficient near the first term
$\frac{1}{z^2}$:
\begin{equation}
	\frac{2}{z} \frac{1}{-\sqrt{\kappa}} \frac{- i }{ z} +
	\mu \frac{-2}{z^2} + \frac12 (-\sqrt{\kappa})\frac{ i }{ z^2}
	+ o\left(\frac{1}{z^2}\right) = C^+
\end{equation}
for the forward case and we have the same expression but without $i$ in the
reverse case. We can conclude that
\begin{equation}\begin{split}
	&\mu = i\frac{4 - \kappa}{4\sqrt{\kappa}},
	\quad \text{for the forward case;}\\
	&\mu = \frac{4 + \kappa}{4\sqrt{\kappa}},
	\quad \text{for the reverse case.}
	\label{Formula: mu = i pm (-4 pm k)/4/k}
\end{split}\end{equation}
Thus, the pre-pre-Schwarzians 
\eqref{Formula: tilde eta = eta - chi arg}
with $\chi$ given by
(\ref{Formula: chi = 2/k - k/2}) 
is only one that can be realized in the forward case. The same is true for the
pre-pre-Schwarzians
\eqref{Formula: tilde eta = eta + Q log},
$Q$ given by 
\eqref{Formula: Q = 2/k + k/2}, 
and the reverse case.
\end{proof}


\section{Coupling in the case of the Dirichlet and Neumann boundary conditions}
\label{Section: The coupling in case of Dirichlet Neumann boundary conditions}

In this section, we consider some special solutions to the system
(\ref{Formula: L eta + 1/2 L^2 eta}--\ref{Formula: L_sigma Gamma = 0})
with the help of Theorem
\ref{Theorem: eta structure}.
We assume the Dirichlet and Neumann boundary condition for $\Gamma$ 
(see Examples 
\ref{Example: Dirichlet boundary conditions Gamma}
and
\ref{Example: Neumann boundary conditions Gamma})
and find the general solution in these cases. In other words, we systematically
study which of $(\delta,\sigma)$-SLE can be coupled to GFF if
$\Gamma=\Gamma_{D}$ and $\Gamma=\Gamma_{N}$.



	\begin{theorem}
	Let a forward ($\delta$,$\sigma$)-SLE be coupled to the GFF with
	$\Hc=\Hc_s$, $\Gamma=\Gamma_D$, and
	let $\eta$ be a pre-pre-Schwartzian
	\eqref{Formula: tilde eta = eta - chi arg} 
	of order $\chi$. Then only the special combinations of
	$\delta$ and $\sigma$ summarized in Table \ref{Table: some simple cases},
	and all combinations with $\kappa=6$ and $\nu=0$ are possible.
	\label{Theorem: ppS -> simple cases}
	\end{theorem}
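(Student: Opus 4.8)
The plan is to combine the structural characterization of $\eta$ from Theorem \ref{Theorem: eta structure} with the coupling system (\ref{Formula: L eta + 1/2 L^2 eta}--\ref{Formula: L_sigma Gamma = 0}), specializing to the Dirichlet Green's function $\Gamma=\Gamma_D$. First I would recall that by Theorem \ref{Theorem: eta structure}, in the forward case with $\Hc=\Hc_s$, the existence of a coupling already forces $\eta$ to be a harmonic $(i\chi/2,-i\chi/2)$-pre-pre-Schwarzian with $\chi=\frac{2}{\sqrt{\kappa}}-\frac{\sqrt{\kappa}}{2}$, and it forces the analytic pieces $\eta^+$, $\Gamma^{++}$, $\Gamma^{+-}$ to satisfy the reduced equations (\ref{Formula: delta/sigma j + mu [sigma,delta]/sigma + 1/2 L j = ibeta}), (\ref{Formula: L Gamma^++ + L sigma^+ L sigma^+ = e + e}), (\ref{Formula: L Gamma^++ = 0, L Gamma^+- = 0}) with $\mu=i\frac{4-\kappa}{4\sqrt{\kappa}}$. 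So the real content here is to impose in addition that $\Gamma=\Gamma_D$ — i.e. $\Gamma^{++}(z,w)=-\tfrac12\log(z-w)$ and $\Gamma^{+-}(z,w)=-\tfrac12\log(z-w)$ in the half-plane chart up to non-essential constants — and see which normalized $(\delta,\sigma)$ pairs are compatible.

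The key computational step is equation (\ref{Formula: L_sigma Gamma = 0}), which in the reduced form reads $\Lc_\sigma \Gamma^{++}=0$ and $\Lc_\sigma\Gamma^{+-}=0$. With $\Gamma^{++\,\HH}(z,w)=-\tfrac12\log(z-w)$ this is
\begin{equation}
\sigma^{\HH}(z)\,\de_z\!\left(-\tfrac12\log(z-w)\right)+\sigma^{\HH}(w)\,\de_w\!\left(-\tfrac12\log(z-w)\right)=0,
\end{equation}
i.e. $\sigma^{\HH}(z)-\sigma^{\HH}(w)=0$ up to a multiple of $(z-w)$; writing $\sigma^{\HH}(z)=\sigma_{-1}+\sigma_0 z+\sigma_1 z^2$ with $\sigma_{-1}=-1$, one finds $\frac{\sigma^{\HH}(z)-\sigma^{\HH}(w)}{z-w}=\sigma_0+\sigma_1(z+w)$, so this must coincide with $\beta_2^{\HH}(z)+\beta_2^{\HH}(w)$ for an analytic $\beta_2$ — which is automatic, so (\ref{Formula: L_sigma Gamma = 0}) imposes no constraint on $\sigma$ here, exactly because of the logarithmic form. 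The real constraints come from Hadamard's formula (\ref{Formula: L Gamma^++ + L sigma^+ L sigma^+ = e + e}): writing $j^+=\Lc_\sigma\eta^+$, which by (\ref{Formula: j^+ = mp/z + hol}) has the form $-i/z+(\text{holomorphic})$, and using $\Lc_\delta\Gamma^{++\,\HH}(z,w)=\frac{\delta^{\HH}(z)}{z-w}\mp\cdots$ with $\delta^{\HH}(z)=\pm(2/z+\delta_{-1}+\delta_0 z+\delta_1 z^2)$, I would expand both sides as Laurent series in $z$ and $w$ near the origin and match coefficients. The leading orders reproduce $\chi$ and $\mu$ as in Theorem \ref{Theorem: eta structure}; the next orders give a finite system of polynomial equations in $\delta_{-1},\delta_0,\delta_1,\sigma_0,\sigma_1$ (with $\kappa,\nu$ as parameters). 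Solving this finite system — case-splitting on whether $\sigma$ is parabolic, hyperbolic or elliptic (so that $\eta^+=\int\frac{j^+-\mu\sigma'}{\sigma}$ in (\ref{Folmula: phi+ = int ...}) is single-valued, which it must be since $\eta$ is a genuine pre-pre-Schwarzian on $\Dc$) — is what produces the finite list in Table \ref{Table: some simple cases} together with the exceptional family $\kappa=6$, $\nu=0$.

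The main obstacle I expect is the single-valuedness / boundary-condition bookkeeping: the integral (\ref{Folmula: phi+ = int ...}) for $\eta^+$ is potentially ramified when $\sigma$ has an interior zero (the elliptic/radial case), and one has to check that the residue of $\frac{j^+-\mu\sigma'}{\sigma}$ at that interior zero vanishes, which is an extra equation that eliminates most $\sigma$; similarly one must verify that the resulting $\eta$ and $\Gamma_D$ together are consistent with the Dirichlet boundary values (the jump $2\pi/\sqrt{\kappa}$ at $a$ from (\ref{Formula: eta(+0) = eta(-0) = ...}) must match what $\Gamma_D$ allows). Concretely I would: (i) invoke Theorem \ref{Theorem: eta structure} to reduce to the analytic system; (ii) substitute $\Gamma^{++}=\Gamma^{+-}=-\tfrac12\log(z-w)$; (iii) expand (\ref{Formula: delta/sigma j + mu [sigma,delta]/sigma + 1/2 L j = ibeta}) and (\ref{Formula: L Gamma^++ + L sigma^+ L sigma^+ = e + e}) in Laurent series, obtaining a polynomial system; (iv) solve it, discarding branches where $\eta^+$ fails to be single-valued or where the boundary conditions are violated; (v) read off that the surviving solutions are exactly the classical chordal/dipolar/radial cases with drift $\nu$, the two one-parameter reparametrizations of the driftless chordal case, and — because at $\kappa=6$ the term $(\pm\kappa-6)(\cdots)$ in (\ref{Formula: A|> = ...})-type cancellations trivializes the obstruction (cf. Theorem \ref{Theorem: Absolute continuity of SLE}) — the whole family with $\kappa=6$, $\nu=0$; (vi) tabulate these as Table \ref{Table: some simple cases}. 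The verification in step (iv)--(v) is the genuinely delicate part; everything else is a finite, if tedious, Laurent-coefficient computation.
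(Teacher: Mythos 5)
Your proposal follows essentially the same route as the paper: invoke Theorem \ref{Theorem: eta structure} to reduce to the analytic system, substitute $\Gamma^{++}=\Gamma^{+-}=-\tfrac12\log(\cdot-\cdot)$, observe that $\Lc_\sigma\Gamma_D=0$ is automatic by M\"obius invariance, use Hadamard's formula to pin down $j^{+\,\HH}(z)=-i/z+i\alpha$, substitute into \eqref{Formula: delta/sigma j + mu [sigma,delta]/sigma + 1/2 L j = ibeta}, normalize $\sigma^{\HH}$ to the three forms $-\sqrt{\kappa}$, $-\sqrt{\kappa}(1-z^2)$, $-\sqrt{\kappa}(1+z^2)$ via $\mathscr{R}$ and $\mathscr{S}$, and solve the resulting polynomial coefficient system, which is exactly what the paper does.

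One step of your plan would, however, lead you to a wrong table if carried out literally: in your "main obstacle" paragraph you propose to impose single-valuedness of $\eta^+=\int\frac{j^+-\mu\sigma'}{\sigma}$ as an extra residue equation at an interior zero of $\sigma$ and to discard the branches where it fails. The elliptic (radial) case, line 6 of Table \ref{Table: some simple cases}, genuinely has a ramified $\eta$ for generic $\kappa$ and $\nu$, yet the paper keeps it: the ramification is not an obstruction to the coupling but a signal that the test-function space must be shrunk from $\Hc_s$ to $\Hc_{s,b}=\Lc_{\sigma_r}(\Hc_s)$ (so that $\arg z$ becomes a legitimate element of the dual space), as done in Section \ref{Section: Coupling of forward radial SLE and Dirichlet GFF}. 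Enforcing single-valuedness over $\Hc_s$ would eliminate the radial family except at special parameter values, contradicting the statement you are proving. Apart from this, your attribution of the polynomial constraints slightly conflates the two equations (Hadamard's formula only fixes $j^+$ up to the constant $\alpha$; the full system in $\delta_{-1},\delta_0,\delta_1$ comes from the $\eta^+$-equation), and the $\kappa=6$ family drops out as an ordinary extra branch of that system rather than via any Virasoro-type cancellation — but these are cosmetic, and the computation you describe is the paper's.
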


\begin{center}
\begin{table}
\begin{tabular}{| c | p{3cm}  | c | c | c | c | }
\hline
   & Name &   
  $\delta=$ 
  & $\sigma= $ & $\alpha=$ & $\beta=$ \\ 
 \hline
  1 & Chordal with drift &
   $ \pm 2 \ell_{-2} - \nu \ell_{-1} $ &
   $ -\sqrt{\kappa} \ell_{-1} $ &
   $ -\frac{\nu}{2} $ &
   $ \frac{-\nu^{2}}{2\sqrt{\kappa}} $ \\ \hline
  2 & Chordal with fixed time change &
   $ \pm 2  \ell_{-2} + 2 \xi \ell_0 $ &
   $ -\sqrt{\kappa} \ell_{-1} $ &
   $ 0 $ &
   $ -\frac{\xi(\pm \kappa-8)}{2\sqrt{\kappa}} $ \\ \hline
  3 & Dipolar with drift &
   $\begin{aligned}
   	\pm &2 \left(\ell_{-2} - \ell_0 \right) -\\
   	- &\nu (\ell_{-1} - \ell_{1})
   \end{aligned}$ &
   $ -\sqrt{\kappa}(\ell_{-1} - \ell_{1}) $ &
   $ -\frac{\nu}{2} $ &
   $ \frac{4-\nu^{2}}{2\sqrt{\kappa}} $ \\ \hline
  4 & One right fixed boundary point &
	$\begin{aligned}
		&\pm 2 \ell_{-2} + \\
		&+(\kappa \mp 6) \ell_{-1} + \\
    &+2(\pm 3 - \kappa \pm \xi) \ell_{0} + \\
    &+(\mp 2 + \kappa \mp 2 \xi) \ell_{1}
	\end{aligned}$ &
   $ -\sqrt{\kappa}(\ell_{-1}-\ell_{1}) $ &
   $ +\frac12(\kappa \mp 6)$ &
   $ \frac{\xi(\mp \kappa+8)}{2\sqrt{\kappa}} $ \\ \hline
  5 & One left fixed boundary point &
   $\begin{aligned}
    &\pm 2 \ell_{-2} - \\
    &-(\kappa \mp 6) \ell_{-1} + \\
    &+2(\pm 3 - \kappa \pm \xi) \ell_{0} + \\
    &-(\mp 2 + \kappa \mp 2 \xi) \ell_1
   \end{aligned}$ &
   $ -\sqrt{\kappa}(\ell_{-1}-\ell_{1}) $ &
   $ -\frac12(\kappa \mp 6)$ &
   $ \frac{\xi(\mp \kappa + 8)}{2\sqrt{\kappa}} $ \\ \hline
  6 & Radial with drift &
		$\begin{aligned} 
   		\pm &2 \left( \ell_{-2} + \ell_0 \right) - \\
   		-&\nu( \ell_{-1} + \ell_1 ) 
		\end{aligned}$ & 
   $ -\sqrt{\kappa}( \ell_{-1} + \ell_1 ) $ &
   $ -\frac{\nu}{2} $ &
   $ \frac{4-\nu^{2}}{2\sqrt{\kappa}} $ \\ \hline
\end{tabular}
 \caption{The list of ($\delta$,$\sigma$)-SLE types that can be coupled with
 GFF with the Dirichlet boundary conditions $\Gamma=\Gamma_D$ 
 (the upper sign in the `$\pm$' and `$\mp$' pairs)
 and with the Neumann boundary conditions $\Gamma=\Gamma_N$ 
 (the lower sign).
}
\label{Table: some simple cases}
\end{table}
\end{center}

The table consists of 6 cases, 3 of which are classical. Each case is a
one-parametric family of ($\delta$,$\sigma$)-SLEs parametrized by the drift
$\nu\in\mathbb{R}$, or by the parameter $\xi\in\mathbb{R}$. The cases
may intersect for vanishing values of $\nu$ or $\xi$.
Different combinations of $\delta$ and $\sigma$ can correspond
to essentially the same process in $\Dc$ but written in different coordinates
as we studied in Section 
\ref{Section: Equivalence and normalization SLE}.
In the columns of $\delta$ and $\sigma$, we present only one example of such
choices in each case.

Some special cases of the forward coupling presented here have been considered
before.
The chordal SLE without drift (case $1$ from the table with $\nu=0$) was
considered in 
\cite{Kang2011}, 
the radial SLE without  drift (case $6$ from the
table with $\nu=0$) in 
\cite{Kang2012a}, 
and the dipolar SLE without  drift
(case $4$ from the table with $\nu=0$) appeared in 
\cite{Kang}.
The case $2$ actually corresponds to the same measure as the chordal SLE but
stopped at  time $t=\frac14\xi$ if $\xi>0$
(see section 
\ref{Section: Chordal case with fix time change}).
The cases $4$ and $5$ are mirror images of each other. They are discussed in
Section 
\ref{Section: Case with one fixed point}. 
The reverse coupling of the chordal SLE with zero drift is considered in
\cite{Sheffield2010}.



\medskip
\noindent
{\it Proof of Theorem \ref{Theorem: ppS -> simple cases}}.
Let us use Theorem \ref{Theorem: The coupling theorem} and assume the
Dirichlet boundary conditions for $\Gamma=\Gamma_D$. 
\begin{equation}
 {\Gamma^{++}}^{\HH}(z,w) = -\frac12 \log(z-w),\quad
 {\Gamma^{+-}}^{\HH}(z,\bar w) = -\frac12\log (z- \bar w)
 \label{Formula: Gamma_D = Log...}
\end{equation}
in Theorem
\ref{Theorem: eta structure}.
The condition
(\ref{Formula: L Gamma^++ = beta + beta, L Gamma^+- = beta + beta})
is satisfied for any complete vector field $\sigma$ and some $\sigma$-dependent
$\beta_2$ which is irrelevant.

In order to obtain $j^+$ we remark first that due to the M\"obious invariance
\eqref{Formula: Lv Gamma_D = 0}
we can ignore the polynomial part of $\delta^{\HH}(z)$
\begin{equation}
 \Lc_{\delta} \Gamma^{\HH}(z,w) =
 \left(
  \frac{2}{z} \de_z + \frac{2}{\bar z} \de_{\bar z} +
  \frac{2}{w} \de_w + \frac{2}{\bar w} \de_{\bar w}
 \right)
 \Gamma^{\HH}(z,w).
\end{equation}
Using
(\ref{Formula: L Gamma^++ + L sigma^+ L sigma^+ = e + e}),
(\ref{Formula: Gamma_D = Log...}),
and
(\ref{Formula: 2/z de_z - 12 log(z-w) + 2/z de_w - 12 log(z-w) = ...})
we obtain that
\begin{equation}\begin{split}
	&{j^+}^{\HH}(z) = \frac{-i}{z} + i\alpha,\quad \alpha\in\C, 
	\label{Formula: j+ = i/z + ia}
\end{split}\end{equation}
with
\begin{equation}
 \beta_1(z) = \frac{\alpha}{z} - \frac{{\alpha}^2}{2}.
\end{equation}

In order to satisfy all  conditions formulated  in Theorem
\ref{Theorem: eta structure}
we need to check
\eqref{Formula: delta/sigma j + mu [sigma,delta]/sigma + 1/2 L j = ibeta}.
Substituting
\eqref{Formula: j+ = i/z + ia} to
\eqref{Formula: delta/sigma j + mu [sigma,delta]/sigma + 1/2 L j = ibeta}
gives
\begin{equation}\begin{split}
 & \frac{\delta}{\sigma} j^+ + \mu \frac{[\sigma,\delta]}{\sigma}
 + \frac12 \Lc_\sigma^+ j^+ = i \beta ~ \Leftrightarrow \\
 & \delta j^+ + \mu\, [\sigma,\delta] + \frac12 \sigma \Lc_{\sigma}^+ j^+
  - i \beta \sigma  = 0
 ~ \Leftrightarrow \\
 & \delta^{\HH}(z) \left( \frac{-i }{z} + i \alpha \right)
 + \mu\, [\sigma,\delta]^{\HH}(z)
 + \frac12 \left(\sigma^{\HH}(z)\right)^2 \de \left( \frac{-i }{z} 
 + i \alpha \right) - i \beta \sigma^{\HH}(z) = 0. 
 \label{Formula: A phi = ib then ...}
\end{split}\end{equation}
In what follows, we will use the half-plane chart in the proof.
With the help of
(\ref{Formula: S - transform})
and
(\ref{Formula: R - transform})
we can assume without lost of generality that $\sigma^{\HH}$ is one of three
possible forms:
\begin{enumerate} [1.]
  \item $\sigma^{\HH}(z)=-\sqrt{\kappa}$,
  \item $\sigma^{\HH}(z)=-\sqrt{\kappa}(1-z^2)$,
  \item $\sigma^{\HH}(z)=-\sqrt{\kappa}(1+z^2)$.
\end{enumerate}
Let us consider these cases turn by turn.

\medskip
\noindent
{\bf 1. $\sigma(z) =-\sqrt{\kappa}$.} \\
 Inserting
 \eqref{Formula: delta_N = ..., sigma_N = ... without drift}
 the relation
 \eqref{Formula: A phi = ib then ...}
 reduces to
 \begin{equation}\begin{split}
  &\frac{-2 + \frac{\kappa}{2} -2 i \sqrt{\kappa}\mu }{z^2} +
  \frac{ 2 \alpha - \delta_{-1} }{z} +
  \left(\beta \sqrt{\kappa} + \alpha \delta_{-1} - \delta_{0}
   + i \sqrt{\kappa} \mu \delta_{0} \right)
   +\\+&
   z \left( \alpha \delta_0 - \delta_1 + 2 i \sqrt{\kappa} \mu \delta_1 \right)
   + z^2 \alpha \delta_1 \equiv 0 \quad \Leftrightarrow \\
  &(\ref{Formula: mu = i pm (-4 pm k)/4/k}),\quad
  2 \alpha - \delta_{-1}=0,\quad
  \beta \sqrt{\kappa} + \alpha \delta_{-1} - \delta_{0}
   + i \sqrt{\kappa} \mu \delta_{0} = 0,\quad \\
  &\alpha \delta_0 - \delta_1 + 2 i \sqrt{\kappa} \mu \delta_1 = 0,\quad
  \alpha \delta_1 =0.
\end{split}\end{equation}
There are three possible cases:
\begin{enumerate}
 	\item 
  \begin{equation}
  	\delta_{-1} = 2\alpha,\quad
  	\delta_0 = 0,\quad
  	\delta_{1} = 0,\quad
  	\kappa>0,\quad
  	\beta = \frac{-2 \alpha^2}{\sqrt{\kappa}}.
	\end{equation}
	It is convenient to use the drift parameter
	\eqref{Formula: nu := ...}, Thus,
	\begin{equation}
		\nu=-2\alpha,
		\label{Formula: nu := -2 alpha}
	\end{equation}
	that is related to the drift in the chordal equation.
	This case is presented in the first line of Table
	\ref{Table: some simple cases}.	  
	\item 
	\begin{equation}
  	\delta_{-1} = 0,\quad
  	\delta_0 = -\frac{4 \beta \sqrt{\kappa}}{\kappa - 8},\quad
   	\delta_{1} = 0,\quad
   	\kappa>0,\quad
   	\alpha = 0.
	\end{equation}
	This case is presented in the second line of Table ($\xi\in\mathbb{R}$)
	and discussed in details in Section
	\ref{Section: Chordal case with fix time change}.
	\item 
	\begin{equation}
		\delta_{-1}=0,\quad
		\delta_0=2\sqrt{5}\beta,\quad
		\delta_1\in\mathbb{R},\quad
		\kappa=6,\quad
		\alpha=0.
	\end{equation}	
	This is a general case of $\delta$ with $\kappa=6$ and $\nu=0$.	
\end{enumerate}

\medskip
\noindent
{\bf 2. $\sigma^{\HH}(z)=-\sqrt{\kappa}(1-z^2)$.} \\
Relation (\ref{Formula: A phi = ib then ...}) reduces to
\begin{equation}\begin{split}
  &\frac{-2 + \frac{\kappa}{2} + 2 i \mu \sqrt{\kappa} }{z^2} +
   \frac{ 2 \alpha - \delta_{-1} }{z}
   +
   \left(\beta \sqrt{\kappa} - \kappa + 6 i \sqrt{\kappa} \mu + \alpha
   \delta_{-1} - \delta_0 + i \sqrt{\kappa} \mu \delta_0 \right)
   +\\+&
   z \left( 2 i \sqrt{\kappa} \mu \delta_{-1} + \alpha \delta_0 - \delta_1
    + 2 i \sqrt{\kappa} \mu \delta_1 \right)
   +
   z^2 \left( - \beta \sqrt{\kappa} + \frac{\kappa}{2} + i \sqrt{\kappa} \mu
   \delta_0 + \alpha \delta_1 \right) = 0 \quad \Leftrightarrow \\
  &(\ref{Formula: mu = i pm (-4 pm k)/4/k}),\quad
  2 \alpha - \delta_{-1} = 0,\quad
   \beta \sqrt{\kappa} - \kappa + 6 i \sqrt{\kappa} \mu + \alpha \delta_{-1} -
   \delta_0 + i \sqrt{\kappa} \mu \delta_0 = 0,\\
  & 2 i \sqrt{\kappa} \mu \delta_{-1} + \alpha \delta_0 - \delta_1
    + 2 i \sqrt{\kappa} \mu \delta_1 = 0,\quad
  - \beta \sqrt{\kappa} + \frac{\kappa}{2} + i \sqrt{\kappa} \mu \delta_0
    + \alpha \delta_1 = 0.
\end{split}\end{equation}
There are four solutions each of which is a two-parameter family.
The first one corresponds to the dipolar SLE with the drift $\nu$,
line 3 in  Table \ref{Table: some simple cases}.
The second and the third equations are `mirror images' of each other,
as it can be seen from the  lines 4 and 5 in the table. They are parametrized
by $\xi:=\frac{2\beta\sqrt{\kappa}}{8-\kappa}$ and discussed in details in
Section 
\ref{Section: Case with one fixed point}. 
The fourth case is when
\begin{equation}
	\delta_{-1}=0,\quad
	\delta_0=2(\sqrt{6}\beta-3),\quad
	\delta_1\in\mathbb{R},\quad
	\kappa=6,\quad
	\alpha=0. 
\end{equation}
This is a general form of $\delta$ with $\kappa=6$ and $\nu=0$.

\medskip
\noindent
{\bf 3. $\sigma^{\HH}(z)=-\sqrt{\kappa}(1+z^2)$.} \\
Relation (\ref{Formula: A phi = ib then ...}) reduces to
\begin{equation}\begin{split}
  &\frac{-2 + \frac{\kappa}{2} -2 i \sqrt{\kappa} \mu }{z^2} +
   \frac{ 2 \alpha - \delta_{-1} }{z}
   +\\+&
   \left( \beta \sqrt{\kappa} - \kappa + 6 i \sqrt{\kappa} \mu + \alpha
   \delta_{-1} - \delta_0 + i \sqrt{\kappa} \mu \delta_0 \right)
   +\\+&
   z \left( 2 i \kappa \mu \delta_{-1} + \alpha \delta_0 - \delta_1
    + 2 i \sqrt{\kappa} \mu \delta_1 \right)
   +\\+&
   z^2 \left( -\beta \sqrt{\kappa} + \frac{\kappa}{2} + i \sqrt{\kappa} \mu
   \delta_0 + \alpha \delta_1 \right) = 0 \quad \Leftrightarrow \\
  &(\ref{Formula: mu = i pm (-4 pm k)/4/k}),\quad
  2 \alpha - \delta_{-1} = 0,\quad
   \beta \sqrt{\kappa} - \kappa + 6 i \sqrt{\kappa} \mu + \alpha
   \delta_{-1} - \delta_0 + i \sqrt{\kappa} \mu \delta_0 = 0,\\
  &2 i \kappa \mu \delta_{-1} + \alpha \delta_0 - \delta_1
    + 2 i \sqrt{\kappa} \mu \delta_1 = 0,\quad
  -\beta \sqrt{\kappa} + \frac{\kappa}{2} + i \sqrt{\kappa} \mu
  \delta_0 + \alpha \delta_1 =0.
\end{split}\end{equation}
The first solution is presented in the line 6 of Table
\ref{Table: some simple cases},
where it is again convenient to introduce the parameter $\nu$
related to the drift in the radial equation. The second solution is
\begin{equation}
	\delta_{-1}=0,\quad
	\delta_0=2(\sqrt{6}\beta-3),\quad
	\delta_1\in\mathbb{R},\quad
	\kappa=6,\quad
	\alpha=0. 
\end{equation}
This is a general form of $\delta$ with $\kappa=6$ and $\nu=0$.
\quad\qed

The following theorem is an analogue of the previous one. 
\begin{theorem}
Let a reverse ($\delta$,$\sigma$)-SLE be coupled to a the GFF with
$\Gamma=\Gamma_N$ and $\Hc=\Hc_s^*$, and let $\eta$ be a pre-pre-Schwartzian
\eqref{Formula: tilde eta = eta + Q log}
of order $Q$.
Then only the combinations of
$\delta$ and $\sigma$ 
summarized in Table \ref{Table: some simple cases}
are possible.
\label{Theorem: ppS -> simple cases 2}
\end{theorem}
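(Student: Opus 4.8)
The plan is to mirror the structure of the proof of Theorem \ref{Theorem: ppS -> simple cases} almost line by line, but with the Neumann data instead of the Dirichlet data. First I would invoke Theorem \ref{Theorem: The coupling theorem} to reduce the coupling statement to the system \eqref{Formula: L eta + 1/2 L^2 eta}--\eqref{Formula: L_sigma Gamma = 0}, and then invoke Theorem \ref{Theorem: eta structure} (reverse case, $\Hc=\Hc_s^*$) to reduce it further to the equivalent system \eqref{Formula: eta = eta^+ + eta^-}, \eqref{Formula: delta/sigma j + mu [sigma,delta]/sigma + 1/2 L j = ibeta}, \eqref{Formula: Gamma = Gamma^++ + bar Gamma^++ + Gamma^+- + bar Gamma^+-}, \eqref{Formula: L Gamma^++ + L sigma^+ L sigma^+ = e + e}, \eqref{Formula: L Gamma^++ = 0, L Gamma^+- = 0}, with the value of $\mu$ already pinned down by \eqref{Formula: mu = i pm (-4 pm k)/4/k} to $\mu=\mu^*=(4+\kappa)/(4\sqrt\kappa)$ in the reverse case. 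So the computational heart of the proof is: plug in $\Gamma=\Gamma_N$, extract $j^+={\Lc_{\sigma}\eta^+}$, and solve \eqref{Formula: delta/sigma j + mu [sigma,delta]/sigma + 1/2 L j = ibeta} for the admissible $(\delta,\sigma)$ in the normalization \eqref{Formula: delta_N = ..., sigma_N = ... without drift}.

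Next I would compute $j^+$ explicitly. Writing $\Gamma_N^{\HH}(z,w)=-\tfrac12\log(z-w)(\bar z-\bar w)(z-\bar w)(\bar z-w)+\beta(z)+\beta(w)$ as in \eqref{Formula: Gamma_N = Log...}, the analytic pieces are ${\Gamma^{++}}^{\HH}(z,w)=-\tfrac12\log(z-w)$ and ${\Gamma^{+-}}^{\HH}(z,\bar w)=+\tfrac12\log(z-\bar w)$ (note the opposite sign relative to the Dirichlet case, which is exactly the $\mp$ in \eqref{Formula: Gamma = Gamma^++ + bar Gamma^++ + Gamma^+- + bar Gamma^+-}), and the condition \eqref{Formula: L Gamma^++ = 0, L Gamma^+- = 0} holds for any complete $\sigma$ modulo the irrelevant $\beta_2$. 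Using the M\"obius invariance of $\Gamma_N$ (Example \ref{Example: Neumann boundary conditions Gamma}) I can discard the polynomial part of $\delta^{\HH}$ when computing $\Lc_\delta\Gamma_N$, so only the $\frac{2}{z}\partial_z$-type term contributes, and \eqref{Formula: L Gamma^++ + L sigma^+ L sigma^+ = e + e} together with \eqref{Formula: 2/z de_z - 12 log(z-w) + 2/z de_w - 12 log(z-w) = ...} gives ${j^+}^{\HH}(z)=\frac{-1}{z}+\alpha$ for some $\alpha\in\C$ (real $\mu$ this time, so the structure is the reverse-case analogue of \eqref{Formula: j^+ = mp/z + hol}), with a corresponding $\beta_1(z)=\frac{\alpha}{z}-\frac{\alpha^2}{2}$. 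Then I would substitute this $j^+$ and $\mu=(4+\kappa)/(4\sqrt\kappa)$ into \eqref{Formula: delta/sigma j + mu [sigma,delta]/sigma + 1/2 L j = ibeta} rewritten as $\delta j^+ + \mu[\sigma,\delta] + \tfrac12\sigma\Lc_\sigma j^+ - \beta'\sigma = 0$ (the constant on the right being real here, $C^+=-\beta+i\beta'$).

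I would then case-split on the three normal forms of $\sigma^{\HH}$ allowed after the transforms $\mathscr{S}$ and $\mathscr{R}$, exactly as in the Dirichlet proof: $\sigma^{\HH}(z)=-\sqrt\kappa$, $\sigma^{\HH}(z)=-\sqrt\kappa(1-z^2)$, and $\sigma^{\HH}(z)=-\sqrt\kappa(1+z^2)$. In each case, inserting $\delta^{\HH}(z)=\pm(2/z)\pm\delta_{-1}\pm\delta_0 z\pm\delta_1 z^2$ (reverse case, so the overall sign is the lower one, $\delta_{-2}<0$), expanding in a Laurent series in $z$, and setting each coefficient to zero produces a small polynomial system in $\delta_{-1},\delta_0,\delta_1,\kappa,\alpha,\beta'$. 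The $z^{-2}$ coefficient reproduces \eqref{Formula: mu = i pm (-4 pm k)/4/k} for the reverse sign (a consistency check), the $z^{-1}$ coefficient forces $\delta_{-1}=2\alpha$, and the remaining three coefficients yield the branches. Setting $\nu=-2\alpha$ as in \eqref{Formula: nu := -2 alpha} and $\xi$ as defined in the previous proof, these branches are precisely rows $1$--$6$ of Table \ref{Table: some simple cases} read with the lower sign in the `$\pm$'/`$\mp$' pairs. I would note that the $\kappa=6$, $\nu=0$ branch that appeared in the Dirichlet case as an \emph{extra} family here collapses into the general rows and is \emph{not} extra: because $\Gamma_N$ is only defined up to a sum $\beta(z)+\beta(w)$ and lives over $\Hc_s^*$, the degrees of freedom in the $\kappa=6$ solution are absorbed, which is exactly why the theorem statement for the reverse case omits the "all combinations with $\kappa=6$ and $\nu=0$" clause.

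The main obstacle I anticipate is bookkeeping rather than conceptual: getting the signs right everywhere. The reverse case flips $\delta_{-2}$, flips the sign inside the $\Gamma^{+-}$ term of \eqref{Formula: Gamma = Gamma^++ + bar Gamma^++ + Gamma^+- + bar Gamma^+-}, makes $\mu$ real and positive instead of imaginary, and replaces $C^+=i\beta$ by $C^+=-\beta+i\beta'$; it is easy to let one of these slip and land on the wrong row of the table or on a spurious solution. A secondary subtlety is verifying that the $\Hc_s^*$-ambiguities (linear functionals defined up to a constant, bilinear ones up to $\beta(z)+\beta(w)$) are genuinely harmless for the coefficient-matching — i.e. that the only freedom they introduce is the already-present $\beta,\beta_1,\beta_2$ terms in \eqref{Formula: master equations in ditails} and nothing more — so that the enumeration of solutions is complete. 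Once those are handled carefully, the proof is a routine transcription of the Dirichlet argument with $\Gamma_D\rightsquigarrow\Gamma_N$, $\chi\rightsquigarrow Q$, and the forward/reverse sign dictionary applied throughout.
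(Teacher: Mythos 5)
Your overall route is exactly the paper's: reduce to the system (\ref{Formula: L eta + 1/2 L^2 eta})--(\ref{Formula: L_sigma Gamma = 0}) via Theorems \ref{Theorem: The coupling theorem} and \ref{Theorem: eta structure}, compute ${j^+}^{\HH}(z)=-1/z+\alpha$ from the Neumann covariance (the paper writes $-1/z-\alpha$, a pure convention), substitute into (\ref{Formula: delta/sigma j + mu [sigma,delta]/sigma + 1/2 L j = ibeta}) with the real $\mu=(4+\kappa)/(4\sqrt{\kappa})$ and $C^+=-\beta+i\beta'$, and run the same three-way case split on $\sigma^{\HH}$. That is precisely what the paper does, and the bookkeeping you describe (including $\beta'=0$ falling out of the $z^0$ coefficient) is correct.

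However, your explanation of why the extra ``$\kappa=6$, $\nu=0$'' family disappears in the reverse case is wrong. You claim it ``collapses into the general rows'' because the $\Hc_s^*$-ambiguities (functionals defined up to constants, bilinear forms up to $\beta(z)+\beta(w)$) absorb the extra degrees of freedom. That is not the mechanism. If you actually carry out the Laurent expansion in, say, the case $\sigma^{\HH}=-\sqrt{\kappa}$, the branch with $\delta_1\neq 0$ forces $\alpha=0$, $\delta_{-1}=0$, and then the $z^1$ coefficient gives $1+2\mu\sqrt{\kappa}=0$, i.e. $(4+\kappa)/2=-1$, hence $\kappa=-6$. The branch does not merge with the other rows; it is simply eliminated because $\kappa>0$. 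The paper states this explicitly (``the analogous solutions give the negative value $\kappa=-6$''). Since the whole content of the theorem is that \emph{only} the table entries occur, getting the exclusion of this branch right is not optional, and the $\Hc_s^*$-quotient argument you offer in its place would not establish it. (A minor additional slip: the kernel ${\Gamma^{+-}}^{\HH}(z,\bar w)$ is still $-\tfrac12\log(z-\bar w)$ in the Neumann case --- it is the sign \emph{in front of} $\Gamma^{+-}$ in (\ref{Formula: Gamma = Gamma^++ + bar Gamma^++ + Gamma^+- + bar Gamma^+-}) that flips, not the kernel itself --- but this does not affect the $j^+$ computation, which only uses $\Gamma^{++}$.)
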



\begin{proof}
We use the same method as in the proof of the previous theorem. The difference
is in using the space $\Hc_s^*$ instead of $\Hc_s$. We still have
\eqref{Formula: Gamma_D = Log...}, 
but instead of 
\eqref{Formula: j+ = i/z + ia} we have
\begin{equation}
 {j^+}^{\HH}(z) = \frac{-1}{z} - \alpha,\quad \alpha\in\C. 
 \label{Formula: j+ = -1/z - a}
\end{equation}
This definition of $\alpha$ makes formulas more similar to the previous case.
We also have
\begin{equation}\begin{split}
	&\delta^{\HH}(z) \left( \frac{-1 }{z} - \alpha \right)
	+ \mu\, [\sigma,\delta]^{\HH}(z)
	+ \frac12 \left(\sigma^{\HH}(z)\right)^2 \de 
	\left( \frac{-1 }{z} - \alpha \right) 
	+ (\beta-i\beta') \sigma^{\HH}(z) \equiv 0. 
\end{split}\end{equation}
instead of 
\eqref{Formula: A phi = ib then ...}. 
We obtain that $\beta'=0$, the same relations for $\beta$, and the same set of
solutions from Table
\ref{Table: some simple cases}
with slightly different signs (the lower ones in all pairs `$\pm$' and `$\mp$').
We do not have the special case $\kappa=6$ as before, because the analogous
solutions give the negative value $\kappa=-6$.
\end{proof}

\section{Alternative definition of $G_t^{-1}{}_* \eta[f]$}
\label{Section: Alternative definition og G_* eta}

This section is actually the beginning of the conclusion/perspective part.
We briefly consider how one can solve the difficulties in the definition of 
$G_t^{-1}{}_* \eta[f]$, $G_t^{-1}{}_* \Gamma[f,f]$,
and other such quantities, which appears when 
$\Im G_t^{-1}\cap \supp f\neq \emptyset$. 

Consider the initial value problem 
\begin{equation}
	\dS G_{t,\varepsilon} 
	= \delta_{\varepsilon} \circ G_{t,\varepsilon} dt 
	+ \sigma \circ G_{t,\varepsilon} \dS B_t
	,\quad t\in[0,+\infty),\quad \varepsilon>0,
\end{equation}
where the vector field $\delta_{\varepsilon}$ is defined by 
\eqref{Formula: delta_e := ...}.
The solution 
$\{G_{t,\varepsilon}\}_{t\in[0,+\infty)}$
is a family of automorphisms of $\Dc$ that are not conformal but continuously
differentiable. It can be understood as a regularized version of  
$\{G_t\}_{t\in[0,+\infty)}$
because
\begin{equation}
	G_{t,\varepsilon}\xrightarrow{\varepsilon \map 0} G_t,\quad t\in[0,+\infty),
\end{equation}
pointwise. 

\begin{center}
\begin{figure}[h]
\centering
	\includegraphics[keepaspectratio=true]
    	{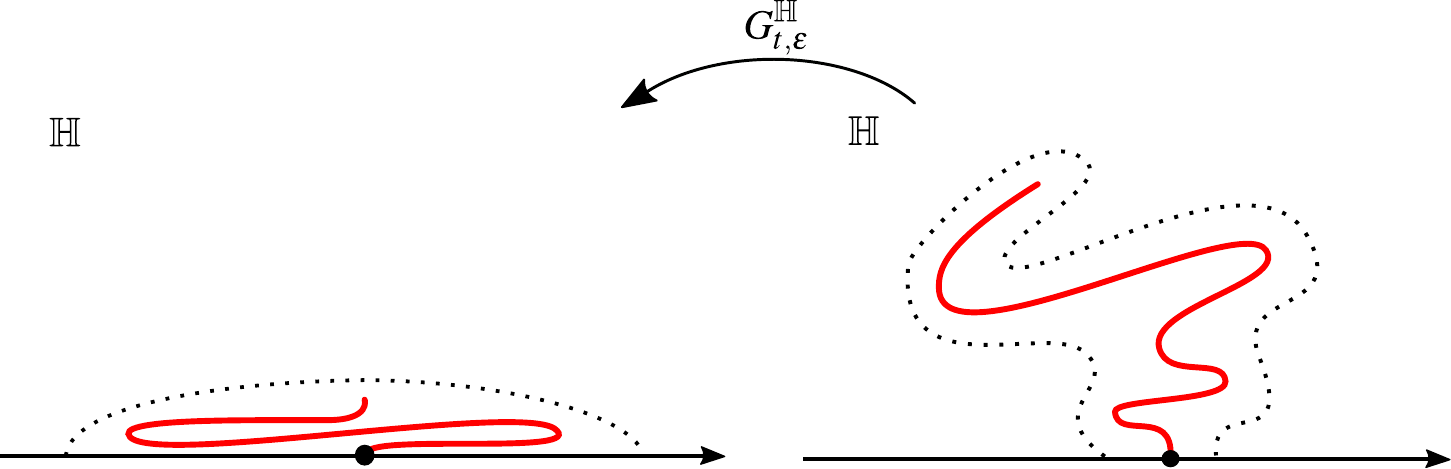}
   \caption{Schematic illustration of how the map 
   $G_{t,\varepsilon}^{\HH}$
   acts in the case when the slit $\gamma_t$ of 
   $G_{t}^{\HH}$ is a simple curve. It is denoted by the red line on the
   right-hand side. On the left-hand side, the red line is the corresponding 
   image $G_{t,\varepsilon}^{\HH}(\gamma_t)$. 
   The map $G_{t,\varepsilon}^{\HH}$ is conformal outside the dashed line.
   \label{Figure: smooth map - simple curve }}
\end{figure}
\end{center}

We remark that for a non-conformal map $F$ the definition of the
pre-pre-Schwarzian 
\eqref{Formula: tilde eta = eta + Q log}
should be generalized to 
\begin{equation}
	F_* \eta^{\psi}(z) = 
	\eta^{\psi}( \tau(z) ) 
	+ Q \, \log \left( 
		\de_z \tau(z) \de_{\bar z} \overline{\tau(z)} -  	
		\de_{\bar z} \tau(z) \de_{z} \overline{\tau(z)}  	
	\right),\quad \tau(z) = \left(F^{\psi}\right)^{-1}(z).
\end{equation} 
Random variables such as 
$G_{t,\varepsilon}^{-1}{}_* \eta[f]$ 
and 
$G_{t,\varepsilon}^{-1}{}_* \Gamma[f,f]$
are defined for $t\in[0,+\infty)$ because 
$G_{t,\varepsilon}$ is an automorphism.
Thus, we can define 
\begin{equation}
	G_{t}^{-1}{}_* \eta[f] 
	:= \lim\limits_{\varepsilon\map 0} G_{t,\varepsilon}^{-1}{}_* \eta[f]
	,\quad t\in[0,+\infty). 
	\label{Formula: G_* eta := limit G_e_* eta}	
\end{equation}
If $t\leq T[f]$ this definition coincides with the standard one.  
We expect it to be equivalent to the approach described in 
\cite{Schramm2010}
if the hull $\K_t$ is a simple curve ($\kappa\leq4$). 
If $\K_t$ is generated by a curve with self-touches, than the pushforward 
$G_{t}^{-1}{}_* \eta[f]$ 
depends also on the chain 
$\{G_s\}_{s\in[0,t]}$,
or equivalently, on the driving function 
$\{u_s\}_{s\in[0,t]}$, 
but not only on the final map 
$G_t$.

Let us now consider how the limit in
\eqref{Formula: G_* eta := limit G_e_* eta}
works in this case, see Fig. 
\eqref{Figure: closed domains}.
Let the hull $\K_t$ be generated by the curve $\gamma_t$ with self-touches. It
can be represented as union of the curve $\gamma_t$ and the connected open
components $U^+_n$ and $U^-_n$, $n=1,2,\dotso$, bounded by $\gamma_t$:
\begin{equation}
	\K_t = \gamma_t \cup \bigcup\limits_{n=1,2,\dotso} U^+_n 
	\cup \bigcup\limits_{n=1,2,\dotso} U^-_n.  
\end{equation}

\begin{center}
\begin{figure}[h]
\centering
	\includegraphics[keepaspectratio=true]
    	{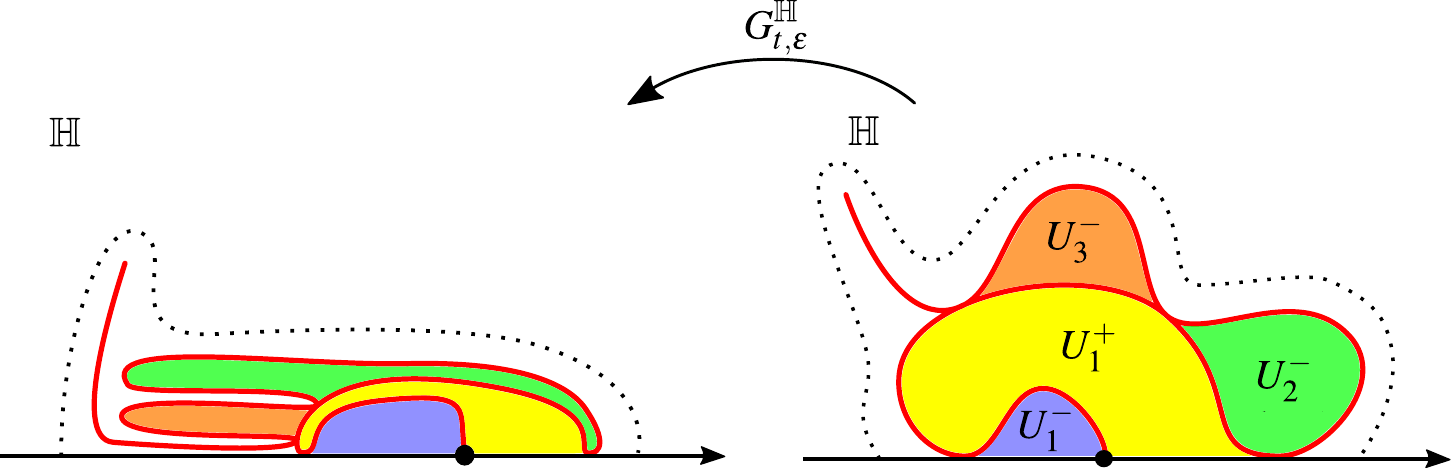}
  \caption{This is a schematic illustration of how the map
  $G_{t,\varepsilon}^{\HH}$ acts in the case of self-touching curve $\gamma_t$.
  The red line on the right-hand side is the slit $\gamma$ of $G_{t}^{\HH}$, on
  the left-hand side, the red line is the image
  $G_{t,\varepsilon}^{\HH}(\gamma_t)$. The map $G_{t,\varepsilon}^{\HH}$ is
  conformal beyond the dashed line.
  \label{Figure: closed domains}}
\end{figure}
\end{center}

We denote by the upper indexes `$+$' and `$-$' the components bounded by
$\gamma_t$ passing clockwise and counterclockwise correspondingly.
We state without a proof that
\begin{equation}
	\lim\limits_{\varepsilon\map 0} G_{t,\varepsilon}^{\HH}(U_n^{+}) 
	\subset [0,+\infty),\quad 
	\lim\limits_{\varepsilon\map 0} G_{t,\varepsilon}^{\HH}(U_n^{-}) 
	\subset (-\infty,0],\quad n=1,2,\dotso.
\end{equation}
In other words, the sets $U_n^{+}$, $n=1,2,\dotso$ are `pressed' to the positive
real axe and $U_n^{-}$ to the negative one. 
Consider, for instance, the case from
Section 
\ref{Section: Coupling of chordal SLE and Dirichlet GFF}
with $\kappa=4$, $\nu=0$,
when $\eta$ is a scalar and
\begin{equation}
	\eta^{\HH}(z)=-\arg z + \pi/2,\quad z\in\HH.
\end{equation}
We use the results of Section
\ref{Section: Linear functionals and change of coordinates}
to conclude that 
\begin{equation}\begin{split}
	&G_t^{-1}{}_* \eta[f] 
	=	\int\limits_{\psi^{\HH}(\Dc \setminus \K_t)} \eta^{\HH}(G^{\HH}_t(z))
	f^{\HH}(z) l(dz) +\\+& \frac{\pi}{2}\sum\limits_{n=1,2,\dotso} \left[ 
		\int\limits_{\psi^{\HH}(U_n^+)} f^{\HH}(z) l(dz)
	  -\int\limits_{\psi^{\HH}(U_n^-)} f^{\HH}(z) l(dz).
	 \right]
	 ,\quad f\in\Hc_s.
\end{split}\end{equation}
Equivalently, we can define 
$\eta^{\HH}(G^{\HH}_t(z))$ 
for $z\in\K_t$ to be equal to 
$+\pi/2$ and $-\pi/2$ on $U_n^+$ and $U_n^-$ correspondingly. 
 
In order to see that the proposed definition of the pushforward gives the
extension of Theorem
\ref{Theorem: The coupling theorem}
for $t>T[f]$,
one can apply the following method.
Let $\eta_{\varepsilon}$ and $\Gamma_{\varepsilon}$ be the solution of 
(\ref{Formula: L eta + 1/2 L^2 eta}--\ref{Formula: L_sigma Gamma = 0})
with $\delta_{\varepsilon}$ instead of $\delta$. Then 
\begin{equation}
	\hat \phi_{\varepsilon}[f] 
	= e^{\frac12 \Gamma_{\varepsilon}[f,f]+\eta_{\varepsilon}[f]}
\end{equation}
induces the local martingales as well as $\hat \phi$. The last part of the
method is to show that 
$\eta_{\varepsilon}\xrightarrow{\varepsilon \map 0}\eta$,
$\Gamma_{\varepsilon}\xrightarrow{\varepsilon \map 0}\Gamma$,
and consequently,
$\hat \phi_{\varepsilon}\xrightarrow{\varepsilon \map 0}\hat \phi$.

\section*{Conclusions and perspectives}

\begin{enumerate}[1.]

\item
We did not prove it in this monograph, but our experience shows that we listed
all possible ways of coupling of GFF  with $(\delta,\sigma)$-SLE if we assume that
$\Gamma$ transforms as a scalar, see (\ref{Formula: G B(z,w) = B(G(z),G(w))}),
and that $\eta$ is a pre-pre-Schwarzian. It would be useful to prove this as
well as to understand what should be relaxed in our frameworks to couple all 
$(\delta,\sigma)$-SLEs.

\item 
The pre-pre-Schwarzian rule 
(\ref{Formula: G eta(z) = eta(G(z)) + mu log G'(z) + ...})
is motivated by the local geometry of SLE curve 
\cite{Sheffield2010}. 
In principle, one can consider alternative rules. Moreover, the scalar behaviour
of $\Gamma$
can also be relaxed because the harmonic part 
$H^{\psi}(z,w)$ 
in 
\eqref{Formula: Gamma = Log + H} 
can transform in many ways. Such more general coupling is intrinsic and can be
thought of as a generalization of the coupling in Sections 
\ref{Section: Coupling of forward dipolar SLE and combined Dirichlet-Neumann GFF}
and
\ref{Section: Coupling with twisted GFF}
for arbitrary $\kappa$.  

\item
We considered only the simplest case of one GFF. It would be
interesting to examine tuples of 
$\{\Phi_i\}_{i=1,2,\dotso n}$. 

\item
The Bochner-Minols Theorem
\ref{Theorem: Bochner-Minols}
suggests to consider not only free fields, but for example, some polynomial
combinations in the exponential of
(\ref{Formula: GFF chracteristic function 2}).
In particular, the quartic functional 
\begin{equation}
	\hat \phi[f] = e^{
		\eta[f] + \frac12 \Gamma[f,f] 
		+ \frac16 W_3[f,f,f] + \frac{1}{24} W_4[f,f,f,f] 
	},
\end{equation}
for some symmetric qubic  $W_3$ and quartic $W_4$ functionals with some
transformation properties (not necessary scalar),
corresponds to conformal field theories
related to 2-to-2 scattering of particles in dimension two.
Then, the generalization of the system 
(\ref{Formula: L eta + 1/2 L^2 eta}--\ref{Formula: L_sigma Gamma = 0})
is
\begin{equation}\begin{split}
	& \Lc_{\delta} \eta[f] + \frac12 \Lc_{\sigma}^2 \eta[f] = 0,\\
	& \Lc_{\delta} \Gamma[f,g] + \Lc_{\sigma} \eta[f] \Lc_{\sigma} \eta[g] 
	+ \frac12 \Lc_{\sigma}^2 \Gamma[f,g]= 0,\\
	& \Lc_{\delta} W_3[f,g,h] 
	+ \Lc_{\sigma} \Gamma[f,g] \Lc_{\sigma} \eta[h] 
	+ \text{symmetric terms} =0,\\
	& \Lc_{\delta} W_4[f,g,h,l] 
	+ \Lc_{\sigma} \Gamma[f,g] \Lc_{\sigma} \Gamma[h,l] 
	+ \text{symmetric terms} =0,\\
	& \Lc_{\sigma} W_3[f,g,h]=0,\quad 
	\Lc_{\sigma} W_4[f,g,h,l] =0,\\	
	&f,g,h,l\in\Hc.	
\end{split}\end{equation}
Any nontrivial solution of this system (which corresponds to positive
definite $\hat \phi$) gives a new type of coupling which is particularly
interesting because it is related to a field theory with a self-interaction. 

For the general form of the the Schwinger functionals $S_n$ the necessary and
sufficient condition for the coupling is
\begin{equation}
	\left( \Lc_{\delta} + \frac12\Lc_{\sigma}^2 \right)
	S_n(z_1,z_2,\dotso z_n)=0,\quad n=1,2,\dotso,
\end{equation}
which can be understood as a version of BPZ formula form \cite{Belavin1984}.

\end{enumerate}

\chapter{Important spacial cases}
\label{Chapter: Classical cases}

In this chapter, we collect the most important special cases of
$(\delta,\sigma)$-L\"owner equations and couplings. 
Each section starts from some special choice of $\delta$ and $\sigma$ with the
corresponding motivation. We continue with discussion of
properties of deterministic and stochastic versions that are specific for
given $\delta$ and $\sigma$. The last part of each section is dedicated to
the coupling to GFF. We consider all known types of couplings related to the
given $\delta$ and $\sigma$.

To compare different cases we have to consider them in the same chart.
We use two charts for this purpose: the half-plane $\HH$ chart and the unit
disk $\D$ chart. The advantage of the first one is the simplisity of formulas.
The unit disk is sometimes more convenient because it is a compact domain with
smooth boundary. It is the chart on which we compared the global bechaviour of
the slits. Numerical simulations of all $(\delta,\sigma)$-SLEs are also
presented in the unit disk chart.

We discuss in details only the chordal case. The analogous comments for
other cases are dropped in order to avoid repetitions.

\section{Chordal case}
\label{Section: chordal case}

\subsection{Chordal L\"owner equation}
\label{Section: Chordal Loewner equation}

The chordal version of L\"owner equation is the most known and well-studied. 
The ordinary equation was discovered by Charles L\"owner in 1923. Its stochastic
version was introduced by Schramm \cite{Schramm2000} in 2000.

\begin{figure}[h]
\centering
  \begin{subfigure}[t]{0.33\textwidth}
		\centering
    \includegraphics[width=5cm,keepaspectratio=true]
    	{Pictures/Vector_fileds_v=_-2_.pdf}
    \caption{Flow lines of vector field $\delta_c$.}
  \end{subfigure}%
  ~  
  \begin{subfigure}[t]{0.33\textwidth}
		\centering
    \includegraphics[width=5cm,keepaspectratio=true]
    	{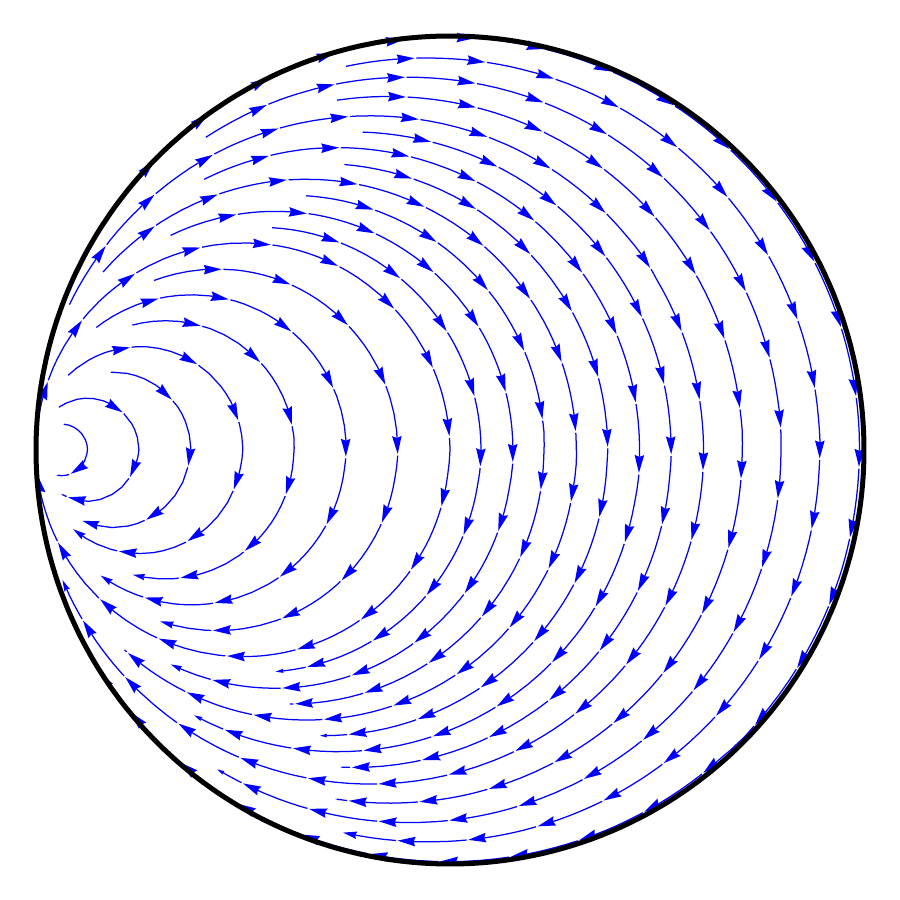} 
  	\caption{Flow lines of vector field $\sigma_c$.}
	\end{subfigure}%
	~
	\begin{subfigure}[t]{0.33\textwidth}
	\centering
		\includegraphics[width=5cm,keepaspectratio=true]
			{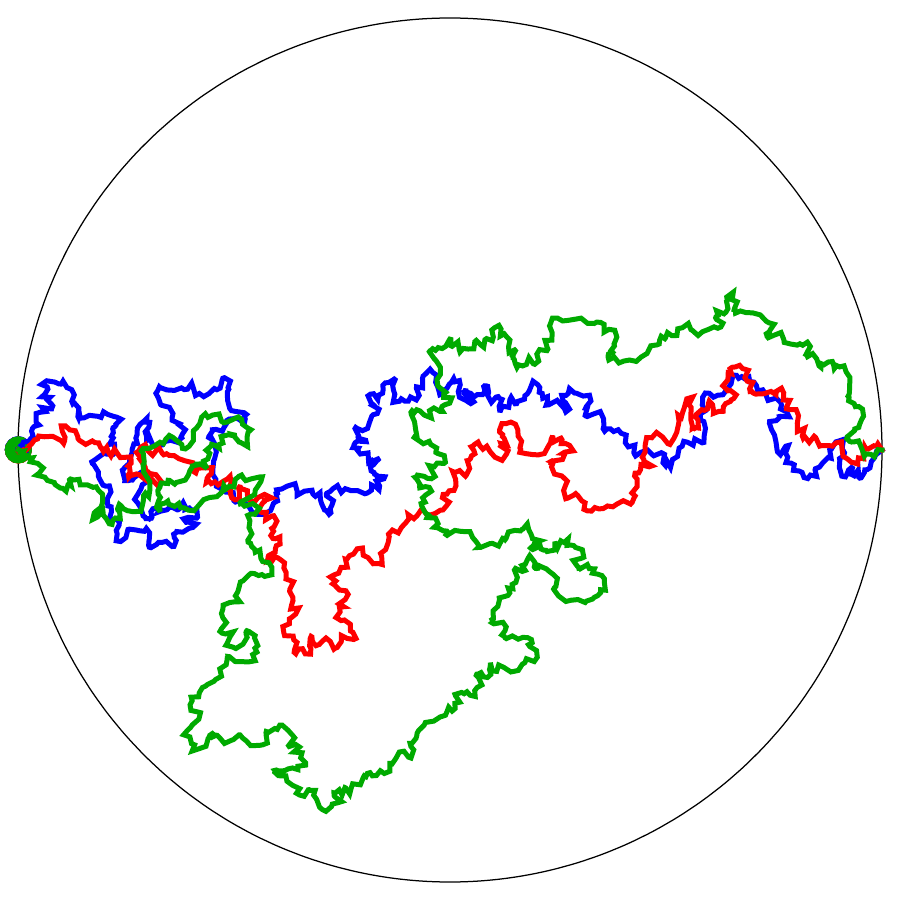}
		\caption{Three samples of chordal SLE slits for $\kappa=2$. All slits tend to
		the point $\phi^{\D}(b)=-1$.}
	\end{subfigure}
 	\caption{low lines of the chordal vector field $\delta_c$, $\sigma_c$, and
	corresponding slit samples in the unit disk chart.
 	\label{Figure: Chordal SLE illustration}}
\end{figure}

Assume that both of the holomorphic vector fields $\delta$ and $\sigma$ have
second order zero at a boundary point $b\in\Dc$. This point can always be mapped
to 
$\psi^{\D}(b)=1$ 
in the unit disk chart, or equivalently, to 
$\psi^{\HH}(b)=\infty$ 
in the half-plane chart. Thus, without loss of
generality we can assume that 
\begin{equation}\begin{split}
	&\sigma = \sigma_{-1} \ell_{-1}, \quad \sigma_{-1}\in\mathbb{R}, \\
	&\delta = \delta_{-2} \ell_{-2}+\delta_{-1} \ell_{-1}, \quad \delta_{-2}>0,
	\quad \delta_{-1}\in\mathbb{R},
\end{split}\end{equation}
where we restrict ourselves to the forward case.
Using the transforms $\mathscr{V}$, $\mathscr{T}$, and $\mathscr{D}$ we can
normalize
\begin{equation}\begin{split}
	\delta_c = 2 \ell_{-2},\quad
	\sigma_c = -\ell_{-1}.
 \label{Formula: Chordal delta and sigma}
\end{split}\end{equation}
This normalization is motivated by historical reasons, because the chordal
L\"owner equation in the half-plane chart has usually the form  
\begin{equation}
	\dot G_t^{\HH}(z) = \frac{2}{G_t^{\HH}(z)} - \dot u_t.
	\label{Formula: dot G = 2/G - dot u}
\end{equation}
This corresponds to the vector fields
\begin{equation}\begin{split}
	\delta_c^{\HH}(z) = \frac{2}{z},\quad
	\sigma_c^{\HH}(z) = -1.
 \label{Formula: Chordal delta and sigma in H det}
\end{split}\end{equation}
that are the relations
\eqref{Formula: Chordal delta and sigma}
in the half-plane chart.


To obtain a more traditional form of the chordal L\"owner equation
which is valid also for not diffrentiable
driving functions we can apply the procedure from the beginning of Section
\ref{Section: Definition and basic properties}.
The one parametric group of M\"obius automorphisms 
$H[\sigma]_s$ 
can be found in the half-plane chart from the equation 
(\ref{Formula: d H = sigma H ds}), 
which in our case takes the form
\begin{equation}\begin{split}
 \dot H_s[\sigma_c]^{\HH}(z) = -1 ~,\quad
 H_0[\sigma_c]^{\HH} (z) = z,\quad z\in \HH,
\end{split}\end{equation}
has the solution 
\begin{equation}\begin{split}
 H_s[\sigma_c]^{\HH}(z) = z - s,\quad z\in \HH.
\end{split}\end{equation}
We define
\begin{equation}
	g_t^{\HH}(z) := 
	\left( H_{u_t}[\sigma_c]^{-1} \circ G_t \right)^{\HH}(z)=
	G_t^{\HH}(z) + u_t,
\end{equation}
which is a form of
\eqref{Formula: g_t = H[sigma] circ G_t}.
Taking into account that 
\begin{equation}
	\left( H_s[\sigma_c]_*^{-1} \delta \right)^{\HH}(z) = 
	\frac{2}{z-s}
\end{equation}
we derive the equation for $g_t^{\HH}(z)$
\begin{equation}\begin{split}
 \dot g_t^{\HH}(z) = \frac{2}{g_t^{\HH}(z)-u_t},
 \label{Formula: dot g = 2/(g-u)}
\end{split}\end{equation}
which is known as the chordal L\"owner equation.

We remark that the maps $G_t^{\HH}$ and $g_t^{\HH}$ possess the
normalization conditions
\begin{equation}
	G_t^{\HH}(z) = z - u_t + O(z^{-1})
	\label{Formula: chordal normalization for G in H}
\end{equation}
and
\begin{equation}
	g_t^{\HH}(z) = z + O(z^{-1}).
\end{equation}
These conditions uniquely fix the maps $G_t$ and $g_t$ for given $\K_t$ if 
$\psi^{\HH}(\K_t)$ is compact.
\index{chordal normalization}


In the unit disk chart, the relations above have the form 
\begin{equation}
	\delta^{\D}_c(z) = \frac{(1+z)^3}{1-z},\quad
	\sigma^{\D}_c(z) = -\frac{i}{2}(1+z)^2.
 \label{Formula: Chordal delta and sigma in D det}
\end{equation}

The analogue of
\eqref{Formula: dot G = 2/G - dot u}
is
\begin{equation}
	\dot G_t^{\D}(z) = 
	\frac{\left( 1+G_t^{\D}(z) \right)^3}
		{1-G_t^{\D}(z)}
	- \frac{i}{2}(1+G_t^{\D}(z))^2 \dot u_t.
\end{equation}
The equation 
\eqref{Formula: dot g = 2/(g-u)}
in the unit disk chart is
\begin{equation}
	\dot g_t^{\D}(z) = 
	\frac{-4i(1+g_t^{\D}(z))^3}
		{\left(2i+u_t+ u_t g_t^{\D}(z)\right)
			(i+u_t-i g_t^{\D}(z) + u_t g_t^{\D}(z))}.
\end{equation}


To define the stochastic version of chordal L\"owner equation (chordal SLE)
we can substitute $u_t=\sqrt{\kappa}B_t+\nu t$
in
\eqref{Formula: dot G = 2/G - dot u}
or
\eqref{Formula: dot g = 2/(g-u)}.
In the first case we thus obtain in the chordal SLE in Stratonovich form
\begin{equation}\begin{split}
 d^{\mathrm{S}} G_t^{\HH}(z) = 
 \frac{2}{G_t^{\HH}(s)}dt - \sqrt{\kappa} d^{\mathrm{S}} B_t - \nu dt
 \label{Formula: chordal SLE in H in Strat}
\end{split}\end{equation}
The It\^o form of the same equation in the half-plane chart is
\begin{equation}\begin{split}
	d^{\mathrm{Ito}} G_t^{\HH}(z) = 
	\frac{2}{G_t^{\HH}(s)}dt - \sqrt{\kappa} d^{\mathrm{Ito}} B_t - \nu dt.
\end{split}\end{equation}
The last two equations are identical, because
${\sigma^{\HH}}'(z)\equiv 0$.
In other charts, ${\sigma^{\psi}}'(z)\not \equiv 0$, and the Stratonovich and
It\^o forms differ. 

Thereby, we define $\delta$ and $\sigma$ for stochastic case by
\begin{equation}
  \delta = \pm\delta_c + \nu \sigma_c = \pm 2 \ell_{-2} - \ell_{-1} \nu,\quad
  \sigma = \sqrt{\kappa} \sigma_c = -\sqrt{\kappa} \ell_{-1},\quad
  \kappa>0,\quad \nu \in \mathbb{R},
  \label{Formula: delta and sigma chordal}
 \end{equation}
where we add parameters $\kappa$ and $\nu$, see comments in Section
\ref{Section: Equivalence and normalization SLE}
and
`$\pm$' 
corresponds to the forward and reverse cases.

The chordal SLE without drift has a specific property that is only hold for this
case. The random laws of 
$\{G_t\}_{t\in[0,+\infty)}$ 
and
$\{\K_t\}_{t\in[0,+\infty)}$ 
are invariant with respect to the transform $\mathscr{P}$, see Section
\ref{Section: Equivalence and normalization SLE}. In other words, this is the
only ($\delta,\sigma$)-SLE whise law is scale invariant. This property makes the
chordal case special, see also the end of Section 
\ref{Section: Classification and normalization from algebraic point of view}.
This is why it is natural to use chordal equation to study local properties of
the slit.

%
%

\subsection{Coupling of forward chordal SLE and Dirichlet GFF}
\label{Section: Coupling of chordal SLE and Dirichlet GFF}

This is the simplest type of coupling considered in
\cite{Sheffield2010}.
A another aspect is discussed in 
In \cite{Kang2011}.
We consider here a generalization with a nonzero drift $\nu\in\mathbb{R}$.
It is a special case of
Theorem \ref{Theorem: The coupling theorem}
and
Theorem \ref{Theorem: ppS -> simple cases}
(the 1st string of Table \ref{Table: some simple cases} and sing `$+$' in the
`$\pm$' pair).

We assume
\begin{equation}
	\delta=\delta_c+\nu \sigma_c, \quad \sigma=\sqrt{\kappa}\sigma_c
	,\quad \Hc=\Hc_s, \quad	\Gamma=\Gamma_D.
\end{equation}
The pre-pre-Schwarzian $\eta$,
can be obtained directly form the Hadamard formula
\eqref{Formula: Hadamard's formula}.

We show here how to calculate $\eta$ from the relations
\eqref{Formula: j^+ = L eta^+}
by substituting
\eqref{Formula: j+ = i/z + ia}
and
\eqref{Formula: nu := -2 alpha}.
In the half-plane chart, we have 
\begin{equation}
  -\kappa^{\frac12} \de_z {\eta^+}^{\HH}(z) + \mu \cdot 0 =
  \frac{-i}{z} + i \alpha.
\end{equation}
Then
\begin{equation}
  {\eta^+}^{\HH}(z) =
  \frac{i}{\sqrt{\kappa}} \log z - \frac{i \alpha z}{\sqrt{\kappa}} + C^+ ,
  \quad C^+ \in \mathbb{C}.
  \label{Formula: phi+ - chordal with drift}
\end{equation}
and taking into account that $\alpha=-\frac{\nu}{2}$
(see \eqref{Formula: nu := -2 alpha})
and
\eqref{Formula: eta = eta^+ + eta^-}
we obtain finally
\begin{equation}
  {\eta}^{\HH}(z) =
  \frac{-2}{\sqrt{\kappa}} \arg z
  - \frac{ \nu}{\sqrt{\kappa}} \Im z + C,\quad C\in\mathbb{R}.
  \label{Formula: eta - chordal with drift in H}
\end{equation}

We present here an explicit form of the time evolution
${M_1}_t$ of the one-point function $S_1(z)=\eta(z)$
 \begin{equation}\begin{split}
  {M_1}_t^{\HH}(z) =& ({G_t^{-1}}_* \eta)^{\HH}(z) =
  \frac{-2}{\sqrt{\kappa}} \arg G_t^{\HH}(z)
  - \frac{ \nu}{\sqrt{\kappa}} \Im G_t^{\HH}(z)
  +\frac{\kappa-4}{2\sqrt{\kappa}} \arg {G_t^{\HH}}'(z) + C.
  \label{Formula: M1 - chordal with drift}
 \end{split}\end{equation}
 When $\nu=0$ this expression coincides (up to a constant)
 with the analogous one from
 \cite[Section 8.5]{Kang2011}.

\subsection{Coupling of reverse chordal SLE and Neumann GFF}
\label{Section: Coupling of reverse chordal SLE and Neumann GFF}

We assume 
\begin{equation}
	\delta=-\delta_c+\nu \sigma_c, \quad \sigma=\sqrt{\kappa}\sigma_c
	,\quad \Hc=\Hc_s^*,\quad \Gamma=\Gamma_N.
\end{equation}
This type of the coupling is also considered in 
\cite{Sheffield2010}. 
It corresponds to the first string of the Table
\ref{Table: some simple cases} 
and the sign `$-$' in the `$\pm$' pairs. We use the space $\Hc_s^*{}'$ for the
GFF, which means that $\Phi$ is a linear functional defined up to a constant 
over the space of smooth functions with compact support. The covariance
$\Gamma=\Gamma_N$ is defined in 
\eqref{Formula: Gamma_N = Log...}.

Following the same way
as in the previous section we obtain
\begin{equation}
	{j}^{\HH}(z) = -\Re \frac{1}{z} + \nu,\quad \nu=\mathbb{R},
\end{equation}
\begin{equation}
	{\eta^{+}}^{\HH}(z)
	= \frac{1}{\sqrt{\kappa}} \log z + \frac{\alpha}{\sqrt{\kappa}} z,
\end{equation}
\begin{equation}
	{\eta}^{\HH}(z)
	= \frac{2}{\sqrt{\kappa}} \log |z| - \frac{\nu}{\sqrt{\kappa}} \Re z,
\end{equation}
\begin{equation}
 \eta
 = \frac{2}{\sqrt{\kappa}} \log 
	\left|
  	\frac{\sigma_c^{\frac{\kappa}{4}}}{\delta_c}
	\right|
  + 
  \frac{2\nu}{\sqrt{\kappa}} \Re \frac{\sigma_c} {\delta_c}+ C
 = \frac{2}{\sqrt{\kappa}} \log
	\left| 
		\frac{\sigma^{\frac{\kappa}{4}}}{\delta  -
  	\frac{\nu}{\sqrt{\kappa}}\sigma } 
	\right| -
	\frac{2\nu}{\kappa} \Re \frac{\sigma}
 		{\delta-\frac{\nu}{\sqrt{\kappa}}\sigma}+ C.
\end{equation}
The one point local martingale in the half-plane chart is
\begin{equation}\begin{split}
	&{M_1}_t^{\HH}(z)
	= {G_t^{-1}}_* {\eta^*}^{\HH} (z)
	=\\=&
	\frac{2}{\sqrt{\kappa}} \log \left|{G_t}^{\HH}(z)\right|
	- \frac{\nu}{\sqrt{\kappa}} \Re \left( {G_t}^{\HH}(z) \right)
	+ \frac{\kappa+4}{2\sqrt{\kappa}} \log \left| \left(G_t^{\HH}\right)'(z)
	\right|+C
\end{split}\end{equation}

\section{Dipolar Case}
\label{Section: Dipolar Case}

\subsection{Dipolar L\"owner equation}
\label{Section: Dipolar Loewner equation}

The dipolar version of the L\"owner equation was considered first in
\cite{Bauer2003b}.

Assume that both of the vector fields $\delta$ and $\sigma$ have zero at the
same two different points $b_1$ and $b_2$ on the boundary $\de\Dc$. Namely,
\begin{equation}
	\delta(b_1)=\sigma(b_1)=\delta(b_2)=\sigma(b_2)=0,\quad b_1\neq b_2\in\de\Dc.
\end{equation}
With the help of the of the transforms 
$\mathscr{R}$ and $\mathscr{P}$
we can always normalize such that
\begin{equation}\begin{split}
	&\sigma=\sigma_{-1} \left( \ell_{-1}-\ell_{1} \right),
	\quad \sigma_{-1}\in\mathbb{R}\setminus \{0\},	\\
	&\delta=\delta_{-2} \left( \ell_{-2}-\ell_{0} \right)
		+	\delta_{-1} \left( \ell_{-1}-\ell_{1} \right),
	\quad \delta_{-2}>0, \quad \delta_{-1}\in\mathbb{R}.
\end{split}\end{equation}
This corresponds to
$\psi^{\D}(b_1)=i$
and
$\psi^{\D}(b_2)=-i$
in the unit disk chart and
$\psi^{\HH}(b_1)=1$
and
$\psi^{\HH}(b_2)=-1$
in the half plane chart.
Using the transforms $\mathscr{V}$, $\mathscr{D}$, and $\mathscr{T}$ we can
normalize further and assume
\begin{equation}\begin{split}
	\sigma = \ell_{-1}-\ell_{1},\quad
	\delta = 2 \left( \ell_{-2} - \ell_{0} \right).
\end{split}\end{equation}

\begin{figure}[t]
\centering
  \begin{subfigure}[t]{0.33\textwidth}
		\centering
    \includegraphics[width=5cm,keepaspectratio=true]
    	{Pictures/Vector_fileds_v=_-2_-_0_.pdf}
    \caption{Flow lines of vector field $\delta_d$.}
  \end{subfigure}%
  ~
  \begin{subfigure}[t]{0.33\textwidth}
		\centering
    \includegraphics[width=5cm,keepaspectratio=true]
    	{Pictures/Vector_fileds_v=-_-1_+_1_.pdf}
  	\caption{Flow lines of vector field $\sigma_d$.}
	\end{subfigure}%
	~
	\begin{subfigure}[t]{0.33\textwidth}
		\centering
		\includegraphics[width=5cm,keepaspectratio=true]
			{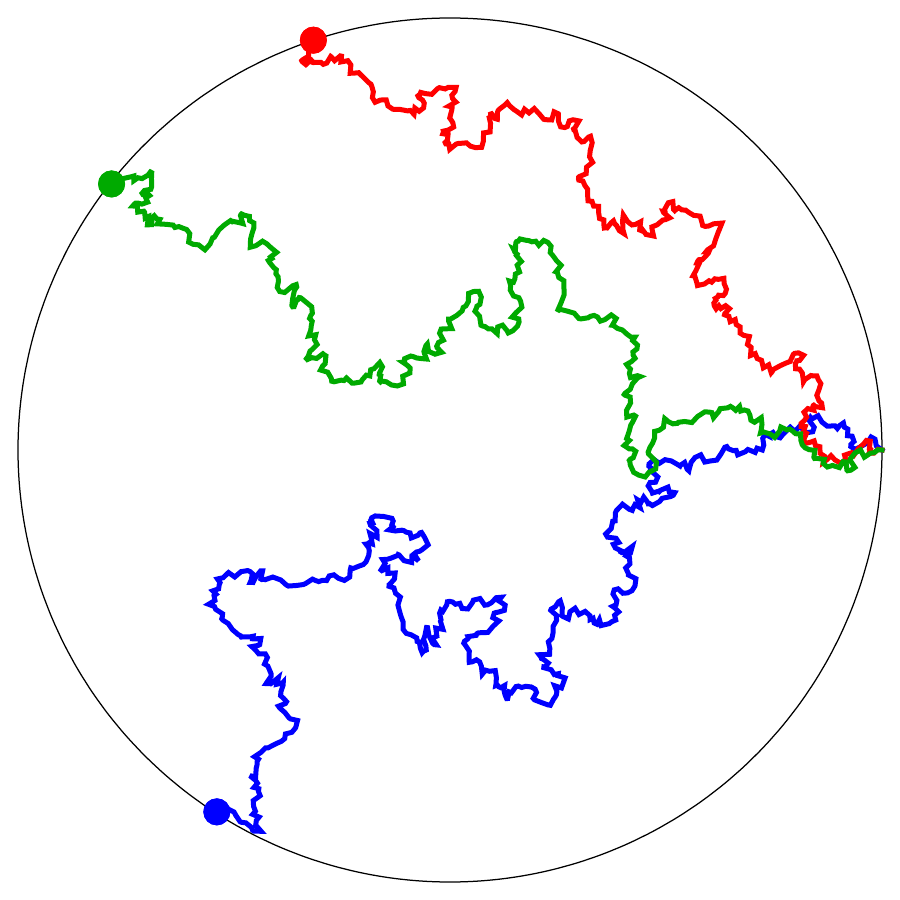}
		\caption{3 samples for dipolar SLE slits for $\kappa=2$. The slits tend to a
		random points (big color dots) on the left arc between $\psi^{\D}(b_1)=i$ and
		$\psi^{\D}(b_2)=-i$.}
	\end{subfigure}
	\caption{Flow lines of the dipolar field $\delta_d$, $\sigma_d$, and
	corresponding slit samples in the unit disk chart.}
\end{figure}

It is natural to consider the dipolar L\"owner equation in the chart defined by
the following transition map
\begin{equation}
  \tau_{\HH,\mathbb{S}}(z)=\mathrm{th}\frac{z}{2}
  \colon \mathbb{S} \map\HH. 
  \label{Formula: tau_H S}
\end{equation}
This domain $\SSS$ is called \emph{strip}
\index{strip chart}
\begin{equation}
	\SSS := \psi^{\SSS}(\Dc)
	= \{ z\in\C\colon 0 < \Im z < \pi \}
\end{equation}
and the chart $\psi^{\SSS}$ we call 
\emph{strip chart}. 
We denote vector
fields, conformal maps a pre-pre-Schwarzians in this coordinates with the upper
index $\SSS$ analogous to $\D$ and $\HH$.
In this coordinates
$\psi^{\SSS}(b_1)=$`right infinity',
$\psi^{\SSS}(b_2)=$`left infinity',
and
$\psi^{\SSS}(a)=0$.

The vector fields have the form
\begin{equation}\begin{split}
	\delta^{\SSS}(z) = 4\cth \frac{z}{2}, \quad
	\sigma^{\SSS}(z) = - 2.
 \label{Formula: delta and sigma dipolar in S deterministic}
\end{split}\end{equation}
The L\"owner equation can be written as
\begin{equation}
  \dot G_t^{\SSS}(z) = 4 \cth \frac{G_t^{\SSS}(z)}{2} - 2 \dot u_t.
  \label{Formula: dot G^S = 4 cth G/2 - 2 dot u}
\end{equation}
A more frequenly used normalization is
\begin{equation}
  \dot G_t^{\SSS}(z) = \cth \frac{G_t^{\SSS}(z)}{2} - \dot u_t,
\end{equation}
which can be obtained from
\eqref{Formula: dot G^S = 4 cth G/2 - 2 dot u}
by the transform $\mathscr{P}_c$ with $c=2$.

We prefere the normalization only for universality (the invariant coefficients
$\delta_{-2}=2$ and $\sigma_{-1}=-1$ are the same in all cases).

We obtain the following form of the L\"owner equation
\begin{equation}
	\dot g_t^{\SSS}(z) = 4 \cth \left( \frac{g_t^{\SSS}(z)}{2} - u_t \right)
\end{equation}
with
\begin{equation}
	g_t^{\SSS}(z) = G_t^{\SSS}(z) + 2 u_t
\end{equation}
in the same way as in the previous section.

We define $\delta_c$ and $\sigma_c$ for the stochastic case by
\begin{equation}
  \delta_d = 2 \ell_{-2} - \ell_{-1} \nu,\quad
  \sigma_d = -\left( \ell_{-1} - \ell_{-1} \right)
  ,\quad \nu \in \mathbb{R},
  \label{Formula: delta and sigma chordal}
 \end{equation}
where we add parameters $\kappa$ and $\nu$, see comments in Section
\ref{Section: Equivalence and normalization SLE}.
The dipolar SLE can be expressed as
\begin{equation}
  \dS G_t^{\SSS}(z)
  = 4 \cth \frac{G_t^{\SSS}(z)}{2} dt
  -2 \sqrt{\kappa} d^{\mathrm{S}} B_t
  -\frac{\nu}{2} dt
\end{equation}
in the Stratonovich form and as
\begin{equation}
  \dI G_t^{\SSS}(z)
  = 4 \cth \frac{G_t^{\SSS}(z)}{2} dt
  -2 \sqrt{\kappa} d^{\mathrm{It\^{o} }} B_t
  -\frac{\nu}{2} dt
\end{equation}
in terms of the It\^{o} differentials, because $\sigma^{\SSS}(z)$ is a constant.

Consider now the same relations for the stochastic case in the half-plane chart.
They are
\begin{equation}
  \delta_d^{\HH}(z)= 2 \left( \frac{1}{z} - z \right) - \nu(1-z^2),\quad
  \sigma_d^{\HH}(z)=-(1-z^2),\quad
  \kappa>0,\quad \nu \in \mathbb{R},
  \label{Formula: delta and sigma dipolar in H stoch}
\end{equation}
\begin{equation}
	\dS G_t^{\SSS}(z) =
	2\left( \frac{1}{G_t^{\SSS}(z)} - G_t^{\SSS}(z) \right)dt
	- \nu \left( 1 - G_t^{\SSS}(z)^2 \right)dt
	- \sqrt{\kappa} \left( 1 - G_t^{\SSS}(z)^2 \right) \dS B_t
\end{equation}
\begin{equation}\begin{split}
	\dI G_t^{\SSS}(z) =&
	2\left(
		\frac{1}{G_t^{\SSS}(z)} - G_t^{\SSS}(z)
		- \kappa G_t^{\SSS}(z)
		\left( 1- G_t^{\SSS}(z)^2 \right)
	\right)dt
	- \nu \left( 1 - G_t^{\SSS}(z)^2 \right)dt
	-\\-&
	\sqrt{\kappa} \left( 1 - G_t^{\SSS}(z)^2 \right) \dI B_t,
\end{split}\end{equation}
where we used
\eqref{Formula: Slit hol stoch flow Ito}.

We remark that the maps $G_t^{\SSS}$ and $g_t^{\SSS}$ possess the
normalization conditions
\begin{equation}\begin{split}
	&G_t^{\SSS}(z) = z - 2u_t + O(z^{-1}),\quad \Re z >0,\\
	&G_t^{\SSS}(z) = z - 2 u_t + O(z^{-1}),\quad \Re z <0,
\end{split}\end{equation}
and
\begin{equation}\begin{split}
	&g_t^{\SSS}(z) = z + O(z^{-1}),\quad \Re z >0,\\
	&g_t^{\SSS}(z) = z + O(z^{-1}),\quad \Re z <0.
	\label{Formula: dipolar normalization for G in S}
\end{split}\end{equation}
Besides 
$G_t(z_{\text{tip}})=a$ 
and 
$g_t^{\SSS}(z_{\text{tip}})=u_t$ 
in the forward case and 
$G_t(a)=z_{\text{tip}}$ 
in the reverse case.
These conditions uniquely fix the maps $G_t$ and $g_t$ for given $\K_t$ if 
$\psi^{\SSS}(\K_t)$ is compact and does not touch the upper boundary of
the strip $\SSS$.
\index{dipolar normalization}

The unit disk chart is also reasonable to consider.
\begin{equation}
  \delta_d^{\D}(z)= 2\frac{(1+z)(1+z^2)}{1-z} - \nu(1-z^2),\quad
  \sigma_d^{\D}(z)=-i (1+z^2),\quad
  \kappa>0,\quad \nu \in \mathbb{R},
  \label{Formula: delta and sigma dipolar in H stoch}
\end{equation}
\begin{equation}
	\dS G_t^{\D}(z) =
	2\frac{(1+G_t^{\D}(z))(1+G_t^{\D}(z)^2)}{1-G_t^{\D}(z)} dt
	-i \nu (1+G_t^{\D}(z)^2) dt
	-i \sqrt{\kappa} (1+G_t^{\D}(z)^2) \dS B_t
\end{equation}
\begin{equation}\begin{split}
	\dI G_t^{\D}(z) =&
	2 \left(
		\frac{(1+G_t^{\D}(z))(1+G_t^{\D}(z)^2)}{1-G_t^{\D}(z)}
		- \kappa G_t^{\D}(z) (1+G_t^{\D}(z)^2)
	\right)	dt
	-\\-&
	i \nu (1+G_t^{\D}(z)^2) dt
	-i \sqrt{\kappa} (1+G_t^{\D}(z)^2) \dI B_t.
\end{split}\end{equation}

We define $\delta$ and $\sigma$ for the stochastic case by 
\begin{equation}\begin{split}
  &\delta = \pm\delta_d + \nu \sigma_d 
  = \pm 2 (\ell_{-2}-\ell_{0}) - \nu(\ell_{-1}-\ell_{1}),\\
  &\sigma = \sqrt{\kappa} \sigma_d 
  = -\sqrt{\kappa} (\ell_{-1}-\ell_1),\quad
  \kappa>0,\quad \nu \in \mathbb{R},
  \label{Formula: delta and sigma chordal}
\end{split}\end{equation}
where 
`$\pm$' corresponds to the forward and reverse cases.



\subsection{Coupling of forward dipolar SLE and Dirichlet GFF}
\label{Section: Coupling fo dipolar SLE}

We assume 
\begin{equation}
	\delta=\delta_d+\nu \sigma_d, \quad \sigma=\sqrt{\kappa}\sigma_d
	,\quad \Hc=\Hc_s,\quad \Gamma=\Gamma_D.
\end{equation}
This case corresponds to the third string in the Table 
\ref{Table: some simple cases}
and sign `$+$' in the pairs $\pm$. With $\nu=0$ it is also considered in 
\cite{Kang}. 

We obtain first $\eta^+$, $\eta$ and ${M_1}_t$ in the half-plane chart.
The same way as in the previous subsection we calculate
 \begin{equation}
  -\sqrt{\kappa}(1-z^2) \de_z {\eta^+}^{\HH}(z)
  + \mu \left( -\sqrt{\kappa} (1-z^2) \right)' =
  \frac{-i}{z} + i \alpha.
 \end{equation}
 Taking into account
 (\ref{Formula: mu = i pm (-4 pm k)/4/k})
 and $\alpha=-\nu/2$ we obtain
 \begin{equation}
  {\eta^+}^{\HH}(z) =
  \frac{i}{\sqrt{\kappa}} \log z
  + \frac{i(\kappa-6)}{4 \sqrt{\kappa}} \log (1-z^2)
  + \frac{i\nu}{2\sqrt{\kappa}} \mathrm{arcth} z + C^+.
 \end{equation}
 \begin{equation}
  {\eta}^{\HH}(z) =
  \frac{-2}{\sqrt{\kappa}} \arg z
  - \frac{(\kappa-6)}{2 \sqrt{\kappa}} \arg (1-z^2)
  - \frac{\nu}{\sqrt{\kappa}} \Im \mathrm{arcth} z + C.
  \label{Formula: phi - dipolar with drift in H}
 \end{equation}
 \begin{equation}\begin{split}
  &{M_c}_t^{\HH}(z) = ({G_t^{-1}}_* \eta)^{\HH}(z)
  =\\=&
  \frac{-2}{\sqrt{\kappa}} \arg G_t^{\HH}(z)
  - \frac{(\kappa-6)}{2 \sqrt{\kappa}} \arg (1-G_t^{\HH}(z)^2)
  -\\-&
  \frac{\nu}{\sqrt{\kappa}} \Im \mathrm{arcth}\, G_t^{\HH}(z)
  +\frac{\kappa-4}{2\sqrt{\kappa}} \arg {G_t^{\HH}}'(z) + C.
 \end{split}\end{equation}

The corresponding relations in the strip chart are
\begin{equation}\begin{split}
 \delta_d^{\SSS}(z)
 = 4 \cth \frac{z}{2} - 2 \nu,
 = - 2,\quad
 \nu \in \mathbb{R},
\end{split}\end{equation}
\begin{equation}
	j^{\SSS}(z) = -i \cth \frac{z}{2} + i \alpha,
\end{equation}
The function $\eta^{\SSS}(z)$ can be found as the solution to
(\ref{Formula: j^+ = L eta^+})
in the strip chart
\begin{equation}
	\eta^{\SSS}(z)
	= \frac{-2 }{\sqrt{\kappa}} \arg \mathrm{sh} \frac{z}{2}
	- \frac{\nu}{2\sqrt{\kappa}} \Im z + C,
\end{equation}
\begin{equation}\begin{split}
	&{M_1}_t^{\SSS}(z) = ({G_t^{-1}}_* \eta)^{\SSS}(z)
	=\\=&
	\frac{-2}{\sqrt{\kappa}} \arg \mathrm{sh} \frac{G_t^{\SSS}(z)}{2}
	- \frac{\nu}{2\sqrt{\kappa}} \Im G_t^{\SSS}(z)
	+\frac{\kappa-4}{2\sqrt{\kappa}} \arg \left(G_t^{\SSS}\right)'(z) + C.
\end{split}\end{equation}
The expression for $\Gamma_D$ in the strip chart is
\begin{equation}
	\Gamma_D^{\SSS}(z,w)
	= \Gamma_D^{\HH}\left( \tau_{\HH,\SSS}(z),\tau_{\HH,\SSS}(w) \right)
	= -\frac12 \log \frac{ \mathrm{sh}(\frac{z-w}{2}) 
	\mathrm{sh}(\frac{\bar z-\bar w}{2})}
  { \mathrm{sh}(\frac{\bar z-w}{2}) \mathrm{sh}(\frac{z-\bar w}{2})}.
\end{equation}

\subsection{Coupling of reverse dipolar SLE and Neumann GFF}
\label{Section: Coupling of reverse dipolar SLE and Neumann GFF}

We assume 
\begin{equation}
	\delta=-\delta_d+\nu \sigma_d, \quad \sigma=\sqrt{\kappa}\sigma_d
	,\quad \Hc=\Hc_s^*,\quad \Gamma=\Gamma_N.
\end{equation}
This case corresponds to the third string in the Table 
\ref{Table: some simple cases}
and sign `$-$' in the pairs $\pm$.
We have
\begin{equation}
  {\eta^+}^{\HH}(z) =
  \frac{1}{\sqrt{\kappa}} \log z
  - \frac{(\kappa+6)}{4 \sqrt{\kappa}} \log (1-z^2)
  - \frac{\nu}{2\sqrt{\kappa}} \arcth z + C^+,
\end{equation}
\begin{equation}
	{\eta}^{\HH}(z)
	= \frac{2}{\sqrt{\kappa}} \log |z|
	- \frac{(\kappa+6)}{2 \sqrt{\kappa}} \log |1-z^2|
	- \frac{\nu}{\sqrt{\kappa}} \Re \arcth z + C,
\end{equation}
and 
\begin{equation}\begin{split}
	&{M_1}_t^{\HH}(z)
	= {G_t^{-1}}_* {\eta}^{\HH} (z)
	=\\=&
	\frac{2}{\sqrt{\kappa}} \log |{G_t}^{\HH}(z)|
	- \frac{(\kappa+6)}{2 \sqrt{\kappa}}
		\log \left|1-\left({G_t}^{\HH}(z)\right)^2\right| -
	\frac{\nu}{\sqrt{\kappa}} \Re \arcth {G_t}^{\HH}(z)
	+\\+&
	\frac{\kappa+4}{2\sqrt{\kappa}} \log \left| \left({G_t}^{\HH}\right)'(z)
	\right| + C,
\end{split}\end{equation}
in the half-plane chart. In the strip chart the covariance $\Gamma_N$ and $\eta$
are of the form
\begin{equation}\begin{split}
	&\Gamma_{N}^{\SSS}(z,w)
	=\\=& 
	-\frac{1}{2} \log
		\left(
		\left(\th \frac{z}{2}-\th \frac{w}{2}\right)
		\left(\th \frac{z}{2}-\th	\frac{\bar w}{2}\right) 
		\left(\th \frac{\bar z}{2}-\th \frac{w}{2}\right) 
		\left(\th \frac{\bar z}{2}-\th \frac{\bar w}{2}\right)
		\right)
	+\\+& 
	\beta(z) + \beta(w),
\end{split}\end{equation}
and
\begin{equation}
	\eta^{\SSS}(z)
	= \frac{2 }{\sqrt{\kappa}} \log 
	\left| \mathrm{sh} \frac{z}{2} \right|
	+ \frac{\nu}{2\sqrt{\kappa}} \Re z + C.
\end{equation}


\subsection{Coupling of forward dipolar SLE and combined Dirichlet-Neumann GFF}
\label{Section: Coupling of forward dipolar SLE and combined Dirichlet-Neumann
GFF}

We assume 
\begin{equation}
	\delta=\delta_d+\nu \sigma_d,\quad
	\sigma=\sqrt{\kappa}\sigma_d,\quad \Hc=\Hc_s, \quad \Gamma=\Gamma_{DN}
	\label{Formula: DN coupling}
\end{equation}
(see Example 
\ref{Example: Gamma: Combined Dirichlet-Neumann boundary conditions}).
Below we conclude that the coupling is possible if and only if $\kappa=4$ and
$\nu=0$.
This case is considered in 
\cite{Kanga} 
and it is also a special case in
\cite{Izyurov2010}.

In the strip chart, the covariance $\Gamma_{DN}$ is of the form
\begin{equation}
 \Gamma_{DN}^{\mathbb{S}}(z,w)= -\frac12\log \frac{
  \mathrm{th}\frac{z-w}{4} \mathrm{th}\frac{\bar z-\bar w}{4}}
  {\mathrm{th}\frac{\bar z-w}{4} \mathrm{th}\frac{z -\bar w}{4}},\quad
  z,w\in \SSS:=\{z~:~ 0<\Im z<\pi \}.
 \label{Formula: G_DN^S = ...}
\end{equation}

The function $\Gamma_{DN}^{\SSS}(z,w)$ satisfies the boundary conditions
\begin{equation}
 \left.\Gamma_{DN}^{\SSS}(x,w)\right|_{x\in \mathbb{R}}=0,\quad
 \left. \de_y \Gamma_{DN}^{\SSS}(x+i y,w)\right|_{x \in \mathbb{R},y=\pi}=0,
\end{equation}
and the symmetry property
\begin{equation}
 \Lc_{\sigma} \Gamma_{DN}(z,w)=0.
 \label{Formula: L_s G = 0}
\end{equation}

The coupling of GFF with this $\Gamma$ and the dipolar SLE is geometrically
motivated. Both zeros of $\delta$ and $\sigma$ are at the
same boundary points where $\Gamma_{DN}$ changes the boundary conditions from
Dirichlet to Neumann. In the strip chart these points are $-\infty$ and
$+\infty$.

\begin{proposition}
Let 
\eqref{Formula: DN coupling} 
is satisfied and let $\eta$ be a pre-pre-Schwarzian
\eqref{Formula: tilde eta = eta - chi arg}. 
Then the coupling is possible only for $\kappa=4$ and $\nu=0$.
\label{Proposition: D-N coupling}
\end{proposition}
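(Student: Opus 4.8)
The plan is to apply Theorem \ref{Theorem: The coupling theorem} together with the structural result Theorem \ref{Theorem: eta structure}, specialized to the dipolar fields $\delta_d$ and $\sigma_d$ and to the specific covariance $\Gamma = \Gamma_{DN}$. First I would record that, since $\eta$ is assumed to be a $(i\chi/2,-i\chi/2)$-pre-pre-Schwarzian (the forward case, $\Hc=\Hc_s$), Theorem \ref{Theorem: eta structure} forces $\chi$ to satisfy \eqref{Formula: chi = 2/k - k/2} and reduces the coupling system \eqref{Formula: L eta + 1/2 L^2 eta}--\eqref{Formula: L_sigma Gamma = 0} to the equivalent system \eqref{Formula: eta = eta^+ + eta^-}, \eqref{Formula: delta/sigma j + mu [sigma,delta]/sigma + 1/2 L j = ibeta}, \eqref{Formula: Gamma = Gamma^++ + bar Gamma^++ + Gamma^+- + bar Gamma^+-}, \eqref{Formula: L Gamma^++ + L sigma^+ L sigma^+ = e + e}, and \eqref{Formula: L Gamma^++ = 0, L Gamma^+- = 0}. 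The symmetry equation \eqref{Formula: L_sigma Gamma = 0}, i.e. \eqref{Formula: L_s G = 0}, is automatic for $\Gamma_{DN}$ by construction, so the real content lies in the Hadamard equation \eqref{Formula: Hadamard's formula} / \eqref{Formula: L Gamma^++ + L sigma^+ L sigma^+ = e + e}, which ties $\Gamma_{DN}$ to $\Lc_\sigma\eta$.

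Next I would extract $j^+ = \Lc_\sigma \eta^+$ from the Hadamard equation using the explicit form of $\Gamma_{DN}$. In the strip chart $\sigma_d^{\SSS}(z) = -2\sqrt{\kappa}$ is constant, so $\Lc_\sigma$ is just $-2\sqrt{\kappa}\,\partial_z$ on analytic objects, and the splitting \eqref{Formula: Gamma = Gamma^++ + bar Gamma^++ + Gamma^+- + bar Gamma^+-} of \eqref{Formula: G_DN^S = ...} gives ${\Gamma^{++}}^{\SSS}(z,w) = -\tfrac12\log\th\tfrac{z-w}{4}$. Applying $\Lc_\delta$ (with $\delta_d^{\SSS}(z) = 4\cth\tfrac{z}{2} - 2\nu$) to ${\Gamma^{++}}^{\SSS}$ and demanding, per \eqref{Formula: L Gamma^++ + L sigma^+ L sigma^+ = e + e}, that the result plus $j^+(z)j^+(w)$ be a sum $\beta_1(z)+\beta_1(w)$, I would solve for $j^+$ explicitly; this is the analogue of the computation \eqref{Formula: j+ = i/z + ia} in the chordal/Dirichlet case, but now the boundary-condition change at $\pm\infty$ in the strip chart is what distinguishes $\Gamma_{DN}$ from $\Gamma_D$. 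Then I would integrate via \eqref{Folmula: phi+ = int ...} to get $\eta^+$, hence $\eta$, and finally substitute $j^+$ and $\mu$ into the remaining constraint \eqref{Formula: delta/sigma j + mu [sigma,delta]/sigma + 1/2 L j = ibeta}, expanding in a Laurent series (at the origin in the half-plane chart, or equivalently at the finite puncture in the strip chart). Matching coefficients will produce a finite set of algebraic equations in $\kappa$ and $\nu$; I expect these to be inconsistent unless $\kappa=4$ (which makes $\mu=0$, i.e. $\eta$ a scalar, by \eqref{Formula: mu = i pm (-4 pm k)/4/k}) and $\nu=0$, at which point $\eta^{\SSS}(z)$ degenerates to something compatible with the Neumann boundary at $\Im z = \pi$.

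The conceptual reason I expect $\kappa=4$, $\nu=0$ to be forced is geometric: the Neumann boundary piece of $\Gamma_{DN}$ on $\{\Im z=\pi\}$ requires $\partial_y(\Gamma_{DN}\ast f) = 0$ there, and the only way the evolved functionals $G_t^{-1}{}_*\eta$ and $G_t^{-1}{}_*\Gamma$ can respect this while also being local martingales is if $\eta$ is a genuine scalar (so it transforms without the $\chi\arg$ correction that would break the Neumann condition) — and $\eta$ is a scalar precisely when $\chi=0$, i.e. $\kappa=4$ by \eqref{Formula: chi = 2/k - k/2}; the drift must then vanish because a nonzero $\nu$ would reintroduce a non-harmonic-at-the-boundary term in \eqref{Formula: phi - dipolar with drift in H} incompatible with the mixed boundary data. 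The main obstacle will be carrying out the Laurent-expansion bookkeeping cleanly enough to see that no other $(\kappa,\nu)$ survives: one has to be careful that the constants $\beta$, $\beta_1$, $\beta_2$ appearing on the right-hand sides (because functionals over $\Hc_s$ here, unlike $\Hc_s^*$, are genuinely constrained — actually the right-hand sides vanish for $\Hc_s$) are handled correctly, and that the branch structure of $\arcth$ and $\log\th$ near the Neumann boundary is tracked. Once the algebra pins down $\kappa=4$, $\nu=0$, I would close the argument by exhibiting the resulting $\eta$ explicitly (it should be, up to an additive constant, $\eta^{\SSS}(z) = -\arg\ch\tfrac{z}{2} + \text{const}$ or similar) and noting it does satisfy all of \eqref{Formula: L eta + 1/2 L^2 eta}--\eqref{Formula: L_sigma Gamma = 0}, so that the necessary condition is also sufficient, completing the proof of Proposition \ref{Proposition: D-N coupling}.
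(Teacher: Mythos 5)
Your proposal is correct and follows essentially the same route as the paper: invoke Theorem \ref{Theorem: eta structure} in the strip chart, read off ${\Gamma^{++}_{DN}}^{\SSS}(z,w)=-\tfrac12\log\th\tfrac{z-w}{4}$, solve the Hadamard equation \eqref{Formula: L Gamma^++ + L sigma^+ L sigma^+ = e + e} for $j^+$ (the paper gets ${j^+}^{\SSS}(z)=-i/\sh\tfrac{z}{2}+i\alpha$), and substitute into \eqref{Formula: delta/sigma j + mu [sigma,delta]/sigma + 1/2 L j = ibeta}, whose vanishing forces $\kappa=4$, $\nu=0$, $\beta=0$, $\mu=0$. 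The only cosmetic difference is that the paper's resulting $\eta^{\SSS}(z)=-\arg\cth\tfrac{z}{4}+C$ rather than your guessed $-\arg\ch\tfrac{z}{2}$, but that does not affect the argument.
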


\begin{proof}
We use Theorem
\ref{Theorem: eta structure}
and the strip chart.
From
\eqref{Formula: G_DN^S = ...}
we obtain
\begin{equation}
 {\Gamma^{++}_{DN}}^{\SSS}(z,w) = -\frac12 \log \th \frac{z-w}{4},\quad
 {\Gamma^{+-}_{DN}}^{\SSS}(z,\bar w) = -\frac12 \log \th \frac{z-\bar w}{4}.
\end{equation}
The relations
(\ref{Formula: L Gamma^++ = beta + beta, L Gamma^+- = beta + beta})
are satisfied. From
(\ref{Formula: L Gamma^++ + L sigma^+ L sigma^+ = e + e})
we find that
\begin{equation}
	(\Lc_{\sigma} \eta^+)^{\SSS}(z)
	= {j^+}^{\SSS}(z) = \frac{-i}{\sh\frac{z}{2}} + i
	\alpha,\quad \alpha\in\mathbb{R}.
\end{equation}
Substituting in
(\ref{Formula: delta/sigma j + mu [sigma,delta]/sigma + 1/2 L j = ibeta})
gives
\begin{equation}
 -i \frac{
  \left( \beta \sqrt{\kappa} - \alpha \nu \right) \sh^2\frac{z}{2}
  + \nu \sh\frac{z}{2}
  - 2 i \sqrt{\kappa} \mu
  \ch\frac{z}{2} \left( 4 \sh\frac{z}{2} \alpha + (\kappa -4) \right)
  }{2 \sqrt{\kappa} \sh^2\frac{z}{2} }
 \equiv 0,
\end{equation}
which is possible only if $\kappa=4$, $\nu=0$, $\beta=0$ and $\mu=0$
(the latter agrees with (\ref{Formula: mu = i pm (-4 pm k)/4/k})).
\end{proof}

From
(\ref{Formula: j^+ = L eta^+})
we obtain that
\begin{equation}
 {\eta^+}^{\SSS}(z) = \frac{i}{2} \log \th \frac{z}{4}+C^+,
\end{equation}
and
\begin{equation}
 \eta^{\SSS}(z) = - \arg \,\mathrm{cth}\, \frac{z}{4} + C.
\end{equation}

We also present here the relations in the half-plane chart
\begin{equation}
 \eta^{\HH}(z)
 = - \arg \frac{z}{1+\sqrt{1-z^2}} + C,
\end{equation}
\begin{equation}\begin{split}
	&\Gamma_{DN}^{\HH}(z)
	=\\=& 
 -\frac12 \log \frac{(z-w)(\bar z-\bar w)
  (1-\bar z w+\sqrt{1-\bar z^2}\sqrt{1-w^2})(1- z \bar w+\sqrt{1- z^2}\sqrt{1-\bar w^2})}
  {(\bar z-w)(z-\bar w)
  (1-zw+\sqrt{1-z^2}\sqrt{1-w^2})(1-\bar z \bar w+\sqrt{1-\bar z^2}\sqrt{1-\bar w^2})}.
  \label{Formula: Gamma_DN^H := ...}
\end{split}\end{equation}


\section{Radial Case}
\label{Section: Radial case}

\subsection{Radial L\"owner equation}
\label{Section: Radial Loewner equation}

The radial L\"owner equation was studied together with the chordal L\"owner
version.

Assume that both of the vector fields $\delta$ and $\sigma$ have zero at the
same point $b\in\Dc$ at the interior of $\Dc$:
\begin{equation}
	\delta(b)=\sigma(b)=0, \quad b\in\Dc.
\end{equation}
With the aid of the transforms
$\mathscr{R}$ and $\mathscr{S}$ it is always possible to normilize
$\psi^{\D}(b)=0$ in the unit disc chart or, equivalently, $\psi^{\HH}(b)=i$ in
the half-plane chart. Thereby, in terms of basis vectors $\ell_n$ we have
\begin{equation}\begin{split}
	&\sigma=\sigma_{-1} \left( \ell_{-1} + \ell_{1} \right),
	\quad \sigma_{-1}\in\mathbb{R}\setminus \{0\},	\\
	&\delta=\delta_{-2} \left( \ell_{-2} + \ell_{0} \right)
		+	\delta_{-1} \left( \ell_{-1} + \ell_{1} \right),
	\quad \delta_{-2}\neq0, \quad \delta_{-1}\in\mathbb{R},
\end{split}\end{equation}
Using the transforms $\mathscr{V}$, $\mathscr{D}$, and $\mathscr{T}$ we can
normalize further and assume
\begin{equation}\begin{split}
	\delta_r = 2 \left( \ell_{-2} + \ell_{0} \right),\quad
	\sigma_r = -\ell_{-1} - \ell_{1}
\end{split}\end{equation}
for the forward case.
\begin{figure}[h]
\centering
  \begin{subfigure}[t]{0.35\textwidth}
		\centering
    \includegraphics[width=5cm,keepaspectratio=true]
    	{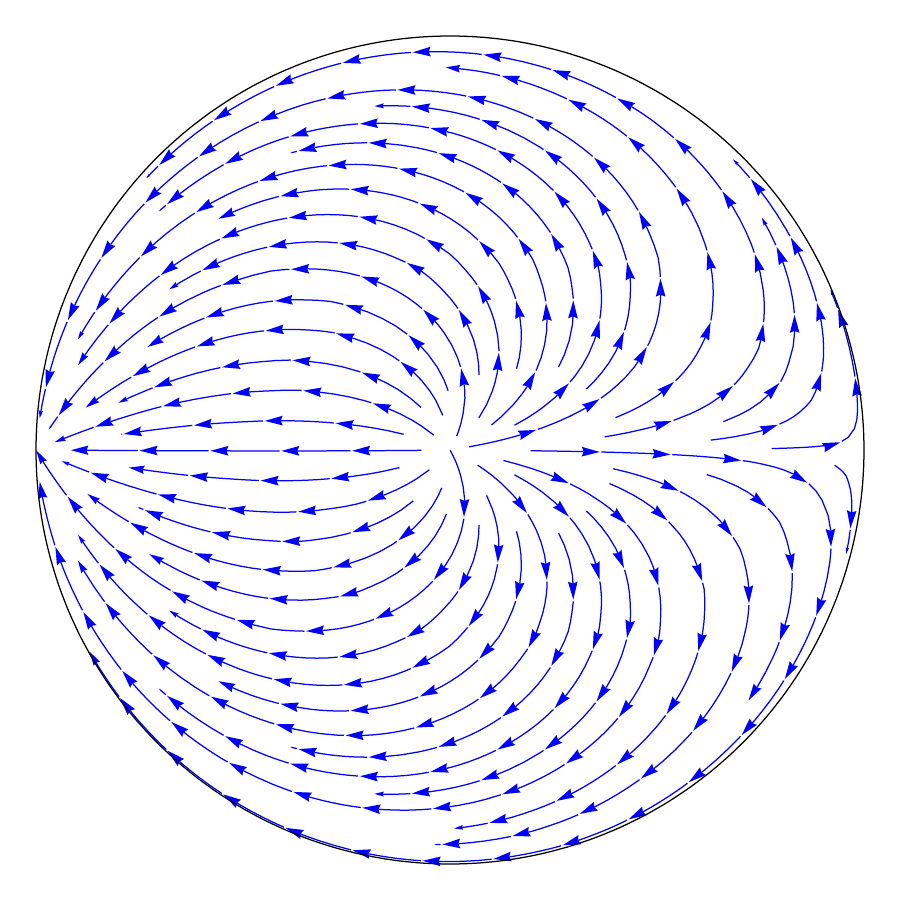}
    \caption{Flow lines of vector field $\delta_r$.}
  \end{subfigure}%
  ~
  \begin{subfigure}[t]{0.35\textwidth}
		\centering
    \includegraphics[width=5cm,keepaspectratio=true]
    	{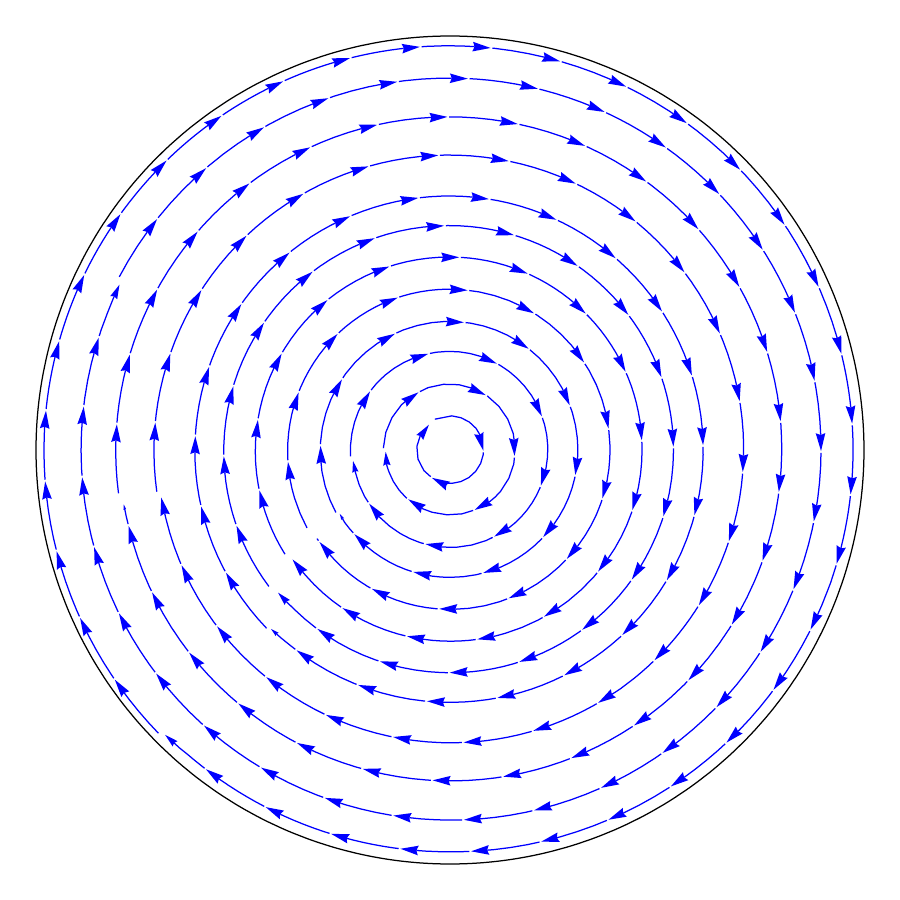}
    \caption{Flow lines of vector field $\delta_r$.}
	\end{subfigure}%
	~
	\begin{subfigure}[t]{0.33\textwidth}
		\centering
		\includegraphics[width=5cm,keepaspectratio=true]
			{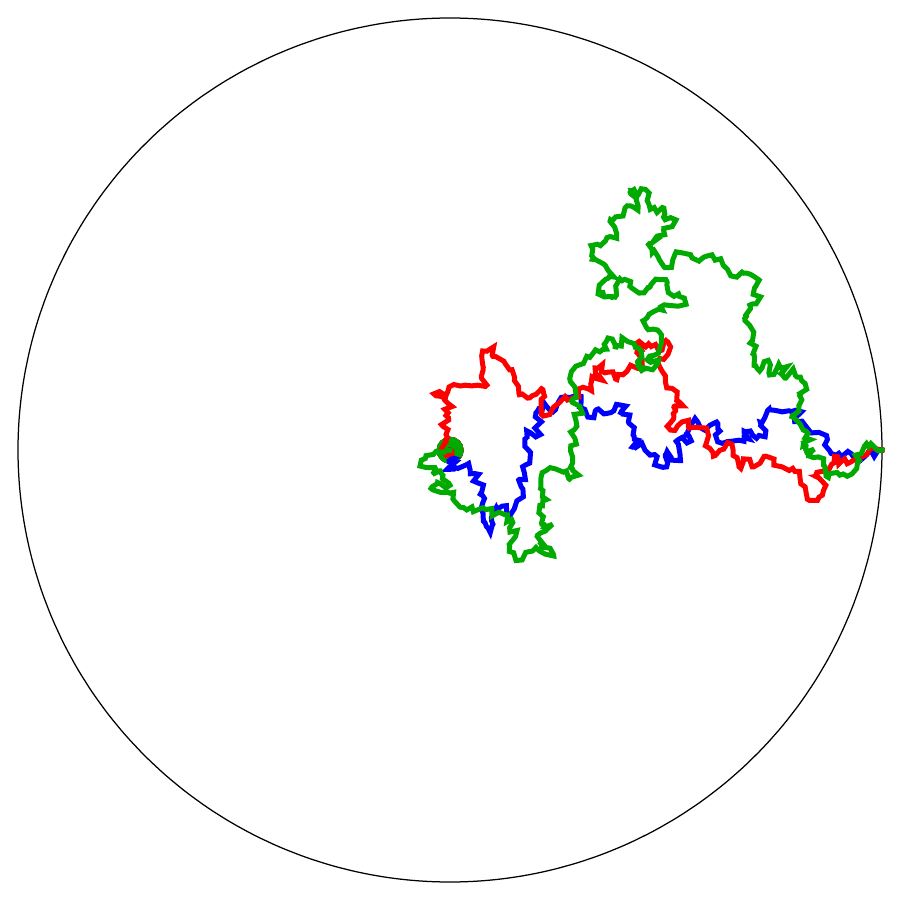}
		\caption{3 samples for radial SLE slits for $\kappa=2$ in the unit disk
		chart. The slits tend to the fixed point $\psi^{\D}(b)=0$.}
	\end{subfigure}
	\caption{Flow lines of the radial vector field $\delta_r$, $\sigma_r$, and
	corresponding slit samples in the unit disk chart.}
\end{figure}

It is natural to consider radial L\"owner equation in the unit disk and
logarithmic charts. We will define the logarithmic chart later and present 
first the relations in the unit disk chart. The vector fields have the form
\begin{equation}\begin{split}
	\delta_r^{\D}(z) = 4 z \frac{1+z}{1-z}, \quad
	\sigma_r^{\D}(z) = -2 i z.
 \label{Formula: delta and sigma radial in D deterministic}
\end{split}\end{equation}
The L\"owner equation in terms of $G_t$ can be written as
\begin{equation}
  \dot G_t^{\D}(z) = 4 G_t^{\D}(z) \frac{1+G_t^{\D}(z)}{1-G_t^{\D}(z)}
  - 4 i G_t^{\D}(z) \dot u_t.
  \label{Formula: dot G^D = 4G^D(1+G^D)/(1-G^D) - 2iG^D dot u}
\end{equation}

The equation for $H_s[\sigma]$ in the unit disk chart is
\begin{equation}\begin{split}
 &\dot H_s[\sigma]^{\D}(z) = -2 i H_s[\sigma]^{\D}(z),\quad
  H_0[\sigma]^{\D}(z) = z,\quad
 z\in \D.
\end{split}\end{equation}
The solution is
\begin{equation}\begin{split}
 H_s[\sigma]^{\D}(z) = e^{-2 i s}z.
 \label{Formula: h_t for radial case in D}
\end{split}\end{equation}
Thus, the equation for $g_t$ is
\begin{equation}\begin{split}
 \dot g_t^{\D}(z) =&
 \left(
  \frac{1}{{H_{u_t}[\sigma]^{\D}}'}
   \left( 4 H_{u_t}[\sigma]^{\D}
   \frac{1+H_{u_t}[\sigma]^{\D}}{1-H_{u_t}[\sigma]^{\D}} \right)
 \right)
 \circ g_t^{\D}(z)
 =\\=&
 \frac{1}{ e^{-2i u_t} }
 \left( e^{-2i u_t} g_t^{\D}(z)
  \frac{ 1 + e^{-2i u_t} g_t^{\D}(z) }{ 1 - e^{-2i u_t} g_t^{\D}(z) }
 \right)
\end{split}\end{equation}
or
\begin{equation}\begin{split}
	\dot g_t^{\D}(z) =
	4g_t^{\D}(z) \frac{e^{2i u_t} +g_t^{\D}(z)}{ e^{2iu_t} - g_t^{\D}(z)} .
	\label{Formula: g radial S
	LE in D}
\end{split}\end{equation}

We remark that the maps $G_t^{\D}$ and $g_t^{\D}$ possess the
normalization conditions
\begin{equation}\begin{split}
	&G_t^{\D}(z) = z\, e^{4t-2i u_t} + O(z^2),
	\label{Formula: radila normalization for G in D}
\end{split}\end{equation}
and
\begin{equation}
	g_t^{\D}(z) = z\, e^{4t} + O(z^2).
\end{equation}
These conditions uniquely fix the maps $G_t$ and $g_t$ for given $\K_t$ if 
$\psi^{\D}(\K_t)$ does not touch the origin $\psi^{\D}(b)=0$.
\index{radial normalization}

The stochastic version is given in the Stratanovich form by
\begin{equation}
	d^{\mathrm{S}} G_t^{\D}(z) =
	4 G_t^{\D}(z) \frac{ 1 + G_t^{\D}(z) }{1 - G_t^{\D}(z)} dt
	-2 i \sqrt{\kappa} G_t^{\D}(z) d^{\mathrm{S}} B_t.
\end{equation}
The same equation in the It\^o from is
\begin{equation}
	\dI G_t^{\D}(z) =
	4 \left(
		G_t^{\D}(z) \frac{ 1 + G_t^{\D}(z) }{1 - G_t^{\D}(z)}
		- \frac12 \kappa G_t^{\D}(z)
	\right) dt
	-2 i \sqrt{\kappa} G_t^{\D}(z) \dI B_t.
\end{equation}

As well as for the dipolar case, in a more frequently used normalization, the
same realations are
\begin{equation}\begin{split}
	\delta^{\D}(z) = z \frac{1+z}{1-z} - i \nu z, \quad
	\sigma^{\D}(z) = -i z,
\end{split}\end{equation}
\begin{equation}\begin{split}
	\dot g_t^{\D}(z) =
	g_t^{\D}(z) \frac{e^{i u_t} +g_t^{\D}(z)}{ e^{iu_t} - g_t^{\D}(z)},
\end{split}\end{equation}
\begin{equation}
	\dS G_t^{\D}(z) =
	G_t^{\D}(z) \frac{ 1 + G_t^{\D}(z) }{ 1 - G_t^{\D}(z) } dt
	-i \sqrt{\kappa}\, G_t^{\D}(z) \dS B_t.
\end{equation}
That can be obtained from
\eqref{Formula: dot G^S = 4 cth G/2 - 2 dot u}
by the transform $\mathscr{P}_c$ with $c=2$.

We prefere the normalization only for universality (the invariant coefficients
$\delta_{-2}=2$ and $\sigma_{-1}=-1$ are the same in all cases in our
frameworks).

In the half-plane chart, we have
\begin{equation}\begin{split}
	\delta^{\HH}(z) = 2 \left(\frac{1}{z}+z\right) - \nu (1 + z^2), \quad
	\sigma^{\HH}(z) = - (1 + z^2),
\end{split}\end{equation}
\begin{equation}
	H_s[\sigma]^{\HH}(z) = \frac{z-\tg s}{1+ z \tg s},
\end{equation}
\begin{equation}\begin{split}
	\dot g_t^{\HH}(z) =
	\left(1+g_t^{\HH}(z)^2\right)
	\frac{ 2+ 2 (\tg u_t) g_t^{\HH}(z) + \nu(\tg u_t - g_t^{\HH}(z)) }
		{g_t^{\HH}(z)-\tg u_t}
\end{split}\end{equation}
\begin{equation}
	\dS G_t^{\HH}(z) =
	\left(\frac{1}{G_t^{\HH}(z)}+G_t^{\HH}(z)\right) dt
	- \nu (1 + G_t^{\HH}(z)^2) dt
	-\sqrt{\kappa} (1 + G_t^{\HH}(z)^2) \dS B_t.
\end{equation}

In a chart where the complete vector field $\sigma$ is a constant, the relations
are always specially simple. We see that for the examples of chordal case (the
half-plane chart) and dipolar case (strip chart) in the previous two sections.
The chart where radial filed $\sigma_r$ is a constant is
called \emph{logarithmic}.
\index{logarithmic chart}
We denote it by $\LL$ and define by the transition map
\begin{equation}
 \tau_{\D,\LL}(z):=e^{iz}:\HH\map\D,\quad
 \tau_{\LL,\D}(z)=\tau_{\D,\LL}^{-1}(z) =-i\log \, z.
 \label{Formula: tau_D LL = ...}
\end{equation}
Thereby,
\begin{equation}
 \tau_{\HH,\LL}(z)
 = \tau_{\HH,\D} \circ \tau_{\D,\LL}(z)
 = \mathrm{tg}\frac{z}{2}:\HH\map\HH,\quad
 \tau_{\LL,\HH}(z) = \tau_{\HH,\LL}^{-1}(z) = 2 \,\mathrm{arctg}\, z,
\end{equation}
The chart maps $\psi^{\HH}$, $\psi^{\D}$, and $\psi^{\SSS}$, we considered
above, are global. It means that the maps are defined on entire $\Dc$ and are
single-valued. The map $\psi^{\LL}$ is not defined in a point of $\Dc$ that is
$b$ in our normalization of the radial equation. Besides, the function $\log$
is multivalued (ramified) and the upper half-plane (the image of
$\tau_{\HH,\LL}$) contains infinite number of identical copies of the radial
slit because
\begin{equation}
	\tau_{\LL,\HH}(z+2\pi)=\tau_{\LL,\HH}(z),\quad z\in\HH.
\end{equation}
The advantage of this chart is
that the automorphisms $H_t[\sigma_r]^{\LL}$ induced by $\sigma_r$ 
(see \eqref{Formula: d H[sigma](z) = sigma( H[sigma] ) (z)})
are  horizontal translations because $\sigma_r^{\LL}(z)$ is a real constant. In
the logarithmic char, we also have
\begin{equation}\begin{split}
  \delta_r^{\LL}(z)
  = 4\ctg \frac{z}{2},\quad
  \sigma_r^{\LL}(z)
  = - 2.
  \label{Formula: delta and sigma radial in L chart}
\end{split}\end{equation}
From this and the relations below it is observable that if one replace formally
all hyperbolic functions in the dipolar case in the strip chart $\SSS$ to the
corresponding trigonometric functions then he obtain the relations for
the radial equation in the logarithmic chart $\LL$.
\begin{equation}
 	\dot G_t^{\LL}(z) = 4 \ctg \frac{G_t^{\LL}(z)}{2} - 2 \dot u_t,
	\label{Formula: dot G^L = 4 ctg G_t^L (z)/2 - 2 dot u}
\end{equation}
\begin{equation}
	\dot g_t^{\LL}(z) = 4 \ctg \left( \frac{g_t^{\LL}(z)}{2} - u_t \right)
\end{equation}


\subsection{Coupling of forward radial SLE and Dirichlet GFF}
\label{Section: Coupling of forward radial SLE and Dirichlet GFF}

We assume 
\begin{equation}
	\delta=\delta_r+\nu \sigma_r, \quad \sigma=\sqrt{\kappa}\sigma_r
	,\quad \Hc=\Hc_{s,b}
	,\quad	\Gamma=\Gamma_D.
\end{equation}
We define $\Hc_{s,b}$ below.
This case of the coupling with $\nu=0$ is considered in 
\cite{Kang2012a}. 

A naive calculation in this case gives a ramified function for 
$\eta^{\psi}(z)$ 
due a zero value of $\sigma_r(z)$ in the denominator of 
\eqref{Folmula: phi+ = int ...} in a point inside $\Dc$. 
To handle this we have to step aside form working with the test
functions from the space $\Hc_s$.
We define the space $\Hc_{s,b}$ as the image of $\Hc_s$ with respect
to the Lie derivative $\Lc_{\sigma_r}$
\begin{equation}
	\Hc_{s,b}:=\Lc_{\sigma_r}(\Hc_s).
\end{equation}
\index{$\Hc_{s,b}$ and $\Hc_{s,b}'$}
Equivalently, it consist of all function from $\Hc_s$ that have the integral of
$f^{\D}(z)$ along all concentric circles in the unit disk chart equal to zero.
Thereby, the dual space is bigger then $\Hc_{s}'$ 
($\Hc_{s}'\subset \Hc_{s,b}'$) 
and in addition contains, for example,
\begin{equation}
	\arg z \in \Hc^{\D}_{s,b}{}'.
\end{equation}
In the logarithmic chart, the same functional can be represented as 
\begin{equation}
	\Re z \in \Hc^{\LL}_{s,b}{}'.
\end{equation}
Thus, the meaning of ramified expressions for $\eta^{\psi}(z)$ below can be
understood in sense of elements of $\Hc_{s,b}'$.


%
%

We present here the expressions for
$\delta$, $\sigma$, $\Gamma_D$, $\eta$ and ${M_1}_t$ in three different charts:
half-plane, the unit disk, and logarithmic, using
the same method as before. The calculation are  similar to the
dipolar case. In fact, it is enough to  change some signs and
replace the hyperbolic functions by trigonometric.
In contrast to the dipolar case, $\eta$ is multiply defined
in the half-plane and the unit disk-charts. 
From the heuristic point of view this is not an essential problem. In any chart
$\eta^{\psi}(z)$ just changes its value only up to an irrelevant constant
after the harmonic continuation around the point $b$.

In the half-plane chart, we have
\begin{equation}
 -\sqrt{\kappa}(1+z^2) \de_z {\eta^+}^{\HH}(z)
 + \mu \left( -\sqrt{\kappa} (1+z^2) \right)' =
 \frac{-i}{z} + i \alpha,
\end{equation}
\begin{equation}
 {\eta}^{\HH}(z) =
 \frac{-2}{\sqrt{\kappa}} \arg z
 - \frac{(\kappa-6)}{2 \sqrt{\kappa}} \arg (1+z^2)
 - \frac{\nu}{\sqrt{\kappa}} \Im \mathrm{arctg}\, z + C,
 \label{Formula: phi - radial with drift in H}
\end{equation}
\begin{equation}\begin{split}
 &{M_1}_t^{\HH}(z) = ({G_t^{-1}}_* \eta)^{\HH}(z)
 =\\=&
 \frac{-2}{\sqrt{\kappa}} \arg G_t^{\HH}(z)
 - \frac{(\kappa-6)}{2 \sqrt{\kappa}} \arg (1 + G_t^{\HH}(z)^2)
 -\\-&
 \frac{\nu}{\sqrt{\kappa}} \Im \mathrm{arctg}\, z
 +\frac{\kappa-4}{2\sqrt{\kappa}} \arg {G_t^{\HH}}'(z) + C
\end{split}\end{equation}
 analogously to
\eqref{Formula: phi+ - chordal with drift}.

The unit disk chart is defined in
(\ref{Formula: tau_H D}), and
\begin{equation}
 \eta^{\D}(z)
 = \frac{-2}{\sqrt{\kappa}} \arg (1-z)
 - \frac{\kappa-6}{2\sqrt{\kappa}} \arg z
 + \frac{\nu}{2\sqrt{\kappa}} \log |z| + C,
\end{equation}
\begin{equation}\begin{split}
 &{M_1}_t^{\D}(z) = ({G_t^{-1}}_* \eta)^{\D}(z)
 =\\=&
 \frac{-2}{\sqrt{\kappa}} \arg (1-G_t^{\D}(z))
 - \frac{\kappa-6}{2\sqrt{\kappa}} \arg G_t^{\D}(z)
 + \frac{\nu}{2\sqrt{\kappa}} \log |G_t^{\D}(z)|
 + \frac{\kappa-4}{2\sqrt{\kappa}} \arg {G_t^{\D}}'(z) + C,
\end{split}\end{equation}
\begin{equation}
 \Gamma_D^{\D}(z,w)
 = \Gamma_D^{\HH}\left( \tau_{\HH,\D}(z),\tau_{\HH,\D}(w) \right)
 = -\frac12 \log \frac{ (z-w)(\bar z - \bar w) }
  { (\bar z-w)(z - \bar w) }.
\end{equation}

The relations for the radial SLE in the logarithmic chart can be
easily obtained from the  dipolar SLE in the strip chart just by replacing the
hyperbolic functions by their trigonometric analogs
\begin{equation}
	{\Gamma_{D}^{++}}^{\LL}(z,w)
	= -\frac12 \log \sin \frac{z-w}{2}, \quad
	{\Gamma_{D}^{+-}}^{\LL}(z,\bar w)
	= -\frac12 \log \sin \frac{z-\bar w}{2},
\end{equation}
\begin{equation}
	\Gamma_D^{\LL}(z,w)
	= -\frac12 \log \frac{\sin(\frac{z-w}{2}) \sin(\frac{\bar z-\bar w}{2})}
   {\sin(\frac{\bar z-w}{2}) \sin(\frac{z-\bar w}{2})},
\end{equation}
\begin{equation}
 \eta^{\LL}(z)
 = \frac{-2 }{\sqrt{\kappa}} \arg \mathrm{sin} \frac{z}{2}
 - \frac{\nu}{2\sqrt{\kappa}} \Im z + C,
\end{equation}
\begin{equation}\begin{split}
 &{M_1}_t^{\LL}(z) = ({G_t^{-1}}_* \eta)^{\LL}(z)
 =\\=&
 \frac{-2}{\sqrt{\kappa}} \arg \sin \frac{G_t^{\LL}(z)}{2}
 - \frac{\nu}{2\sqrt{\kappa}} \Im G_t^{\LL}(z)
 +\frac{\kappa-4}{2\sqrt{\kappa}} \arg {G_t^{\LL}}'(z) + C.
\end{split}\end{equation}
This relations above coincide up to a constant with the analogous ones
established in 
\cite{Kang2012a}.


We remark that for $\kappa=6$ we can use the usual space $\Hc_s$ instead of
$\Hc_{s,b}$.

\subsection{Coupling of reverse radial SLE and Neumann GFF}
\label{Section: Coupling of reverse radial SLE and Neumann GFF}

We assume 
\begin{equation}
	\delta=-\delta_r+\nu \sigma_r, \quad \sigma=\sqrt{\kappa}\sigma_r
	,\quad \Hc=\Hc_{s,b}^{*}	
	,\quad \Gamma=\Gamma_D,
\end{equation}
where 
\begin{equation}
	\Hc_{s,b}^* :=
	\left\{f \in\Hc_{s,b} \colon \int\limits_{D^{\psi}} f^{\psi}(z) l(dz) 
	=	0 \right\}.
\end{equation}
We use the space $\Hc_{s,b}^*$ but not $\Hc_{s}^*$ because of the same reasons
as in the previous subsection.

Analogous to the previous cases we have in the half-plane chart
\begin{equation}
  {\eta^+}^{\HH}(z) =
  \frac{1}{\sqrt{\kappa}} \log z
  - \frac{(\kappa+6)}{4 \sqrt{\kappa}} \log (1 + z^2)
  - \frac{\nu}{2\sqrt{\kappa}} \arctg z + C^+,
\end{equation}
\begin{equation}
	{\eta}^{\HH}(z)
	= \frac{2}{\sqrt{\kappa}} \log |z|
	- \frac{(\kappa+6)}{2 \sqrt{\kappa}} \log |1-z^2|
	- \frac{\nu}{\sqrt{\kappa}} \Re \arctg z + C,
\end{equation}
\begin{equation}\begin{split}
	&{M_1}_t^{\HH}(z)
	= {G_t^{-1}}_* {\eta}^{\HH} (z)
	=\\=&
	\frac{2}{\sqrt{\kappa}} \log |{G_t}^{\HH}(z)|
	- \frac{(\kappa+6)}{2 \sqrt{\kappa}}
		\log \left|1-\left({G_t}^{\HH}(z)\right)^2\right| -
	\frac{\nu}{\sqrt{\kappa}} \Re \arctg {G_t}^{\HH}(z)
	+\\+&
	\frac{\kappa+4}{2\sqrt{\kappa}} \log \left| {G_t}^{\HH}{}'(z) \right| +
	C.
\end{split}\end{equation}

In the logarithmic chart the same expressions are
\begin{equation}
	\Gamma_{N}^{\LL}(z,w)
	= -\frac12 \log 
	\left(
		\sin \frac{z-w}{2} \sin \frac{z-\bar w}{2}
		\sin \frac{\bar z-w}{2} \sin \frac{\bar z- \bar w}{2}
	\right)
	+\beta(z)+\beta(w),
\end{equation}
\begin{equation}
	{\eta^+}^{\LL}(z) = 
	\frac{1}{\sqrt{\kappa}} \log \sin \frac{z}{2} 
	- \frac{\nu}{4\sqrt{\kappa} } z,
\end{equation}\begin{equation}
	\eta^{\LL}(z) = 
	\frac{2}{\sqrt{\kappa}} \log \left| \sin \frac{z}{2} \right|
	- \frac{\nu}{2\sqrt{\kappa} } \Re z.
\end{equation}
The function $\eta^{\LL}(z)$ possesses the Neumann boundary condition along all
the boundary
\begin{equation}\begin{split}
  &\left. \de_y \eta^{\LL}(x+iy,w) \right|_{y=0 } = -\frac{\nu}{2\sqrt{k}},\quad
  x\in \mathbb{R}
  ,\quad w\in \C,
  \label{Formula: Gamma_DN := ...}
\end{split}\end{equation}

In the unit disc chart $\eta$ takes the form:
\begin{equation}
 \eta^{\D}(z)
 = \frac{2}{\sqrt{\kappa}} \log |1-z|
 + \frac{\kappa + 	6}{2\sqrt{\kappa}} \log |z|
 - \frac{\nu}{2\sqrt{\kappa}} \arg z + C,
\end{equation}

If $\nu=0$ we do not need to introduce the space $\Hc_{s,b}$ and
can use $\Hc_s^*$ because there is no branch point at $b$.

\subsection{Coupling with twisted GFF}
\label{Section: Coupling with twisted GFF}

We assume
\begin{equation}
	\delta=\delta_r+\nu \sigma_r,\quad
	\sigma=\sqrt{\kappa}\sigma_r,\quad \Hc=\Hc_{s,b}^{\pm}
	,\quad \Gamma=\Gamma_{\text{tw},b}.
	\label{Formula: twisted assumptions}
\end{equation}
We define $\Hc_{s,b}^{\pm}$ and $\Gamma_{\text{tw}}$ below.

This model is similar to the one form Section
\ref{Section: Coupling of forward dipolar SLE and combined Dirichlet-Neumann
GFF}. At the algebraic level, as it will
be shown below, it is enough to replace formally all hyperbolic functions in
the dipolar case in the strip chart $\SSS$ to the corresponding trigonometric
functions in order to obtain the relations for the radial SLE in the
logarithmic chart $\LL$. But at the analytic level, we have to consider the
correlation functions which are doubly defined on $\Dc$ and change their sign
to the opposite while being turned once around the interior point $b$. 
This construction was
considered before and called `twisted CFT' as we were informed by Num-Gyu Kang.

Let 
$\Dc_b^{\pm}$ 
be the double cover of 
$\Dc\setminus \{b\}$ 
and let
$\Hc_{s,b}^{\pm}$
be the space of test functions 
$f\colon \Dc_b^{\pm}\map\mathbb{R}$
as in  Section 
\ref{Section: Test function}
with an extra condition
$f(z_1)=-f(z_2)$, 
where $z_1$ and $z_2$ are two points of 
$\Dc^{\pm}_b$
corresponding to the same point of 
$\Dc\setminus\{b\}$. 
Thus, in the logarithmic chart, we have
\begin{equation}
 f^{\LL}(z)=f^{\LL}(z+4\pi k) = - f^{\LL}(z+ 2\pi k),\quad k\in \mathbb{Z},\quad z\in\HH.
 \label{Formula: f = -f  = f}
\end{equation}
Such  functions are $4\pi$-periodic and $2\pi$-antiperiodic. In particular,
$f^{\LL}$ is not of compact support in the logarithmic chart, but we require
that the support is compact in $\Dc_b^{\pm}$. The dual space
$\Hc_{s,b}^{\pm}{}'$
also consist of $4\pi$-periodic and $2\pi$ untiperiodic functionals.

We define the covariance functional by
\index{$\Gamma_{\text{tw},b}$}
\begin{equation}
 \Gamma_{\text{tw},b}^{\LL}(z,w)= -\frac12\log \frac{
  \mathrm{tg}\frac{z-w}{4} \mathrm{tg}\frac{\bar z-\bar w}{4}}
  {\mathrm{tg}\frac{\bar z-w}{4} \mathrm{tg}\frac{z -\bar w}{4}},\quad
  z,w\in \HH, \quad z\neq w+2\pi k,\quad k\in\mathbb{Z}.
 \label{Formula: G_d^S = ...}
\end{equation}
in the logarithmic chart. Observe that
\begin{equation}
	\Gamma_{\text{tw},b}^{\LL}(z,w) 
	= \Gamma_{\text{tw},b}^{\LL}(z+4\pi k,w) 
	= -\Gamma_{\text{tw},b}^{\LL}(z+2\pi k,w),\quad k\in \mathbb{Z}
  ,\quad z,w\in\HH.
\end{equation}

In the unit disk chart the covariance $\Gamma_{\text{tw}}^{\D}$ is of the form
\begin{equation}
	\Gamma^{\D}_{\text{tw},b}(z,w) = -\frac12\log \frac
  {(\sqrt{z}-\sqrt{w})(\sqrt{\bar z}-\sqrt{\bar w})(\sqrt{z}+\sqrt{\bar w})(\sqrt{\bar z}-\sqrt{w})}
  {(\sqrt{z}+\sqrt{w})(\sqrt{\bar z}+\sqrt{\bar w})(\sqrt{z}-\sqrt{\bar w})(\sqrt{\bar z}+\sqrt{w})},
\end{equation}
or in the half-plane chart it is of the form
\begin{equation}
	\Gamma_{\text{tw},b}^{\HH}(z)
	= -\frac12 \log \frac{(z-w)(\bar z-\bar w)
  (1+\bar z w+\sqrt{1+\bar z^2}\sqrt{1+w^2})(1+ z \bar w+\sqrt{1+ z^2}\sqrt{1+\bar w^2})}
  {(\bar z-w)(z-\bar w)
  (1+zw+\sqrt{1+z^2}\sqrt{1+w^2})(1+\bar z \bar w+\sqrt{1+\bar z^2}\sqrt{1+\bar w^2})}.
\end{equation}
It is doubly defined because of the square root and the analytic continuation
around the center changes its sign.

The covariance $\Gamma_{\text{tw,b}}^{\LL}$ satisfies  the Dirichlet boundary
conditions and tends to zero as one of  the variables tends to the center point
$b$ ($\infty$ in the $\LL$ chart)
\begin{equation}
	\left.\Gamma_{\text{tw},b}^{\LL}(x,w)\right|_{x\in \mathbb{R}}=0,\quad
	\lim\limits_{y\map +\infty} \Gamma_{\text{tw}}^{\LL}(x+iy,w) = 0,\quad x\in
	\mathbb{R}, \quad w\in \HH;
\end{equation}
\begin{equation}
	\left.\Gamma_{\text{tw},b}^{\D}(z,w)\right|_{|z| = 1 }=0,\quad
	\lim\limits_{z\map 0} \Gamma_{\text{tw}}^{\D}(z,w) = 0, \quad w\in \D.
\end{equation}

The $\sigma$-symmetry property 
\begin{equation}
	\Lc_{\sigma_r} \Gamma_{\text{tw}}(z,w)=0
	\label{Formula: L_s_r G = 0}
\end{equation}
holds.

We call the GFF 
$\Phi(\Hc_{s,b}^{\pm},\Gamma_{\text{tw},s},\eta)$
\emph{twisted Gaussian free field}
$\Phi_{\text{tw}}$ 
\index{twisted Gaussian free field}.
%
Similarly to the dipolar case in Section 
\ref{Section: Coupling of forward dipolar SLE and combined Dirichlet-Neumann GFF}
the following proposition
can be proved.
\begin{proposition}
With the assumptions as in 
\eqref{Formula: twisted assumptions}
the twisted GFF $\Phi_{\text{tw}}$
can be coupled if and only if $\kappa=4$ and $\nu=0$.
\label{Proposition: Twisted coupling}
\end{proposition}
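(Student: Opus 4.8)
The strategy mirrors the proof of Proposition \ref{Proposition: D-N coupling}, with the strip chart replaced by the logarithmic chart $\LL$ and the hyperbolic functions replaced by their trigonometric analogues. First I would invoke Theorem \ref{Theorem: The coupling theorem} together with Theorem \ref{Theorem: eta structure}: under the pre-pre-Schwarzian hypothesis on $\eta$ and the scalar hypothesis on $\Gamma$, the existence of the coupling is equivalent to the reduced system \eqref{Formula: eta = eta^+ + eta^-}, \eqref{Formula: delta/sigma j + mu [sigma,delta]/sigma + 1/2 L j = ibeta}, \eqref{Formula: Gamma = Gamma^++ + bar Gamma^++ + Gamma^+- + bar Gamma^+-}, \eqref{Formula: L Gamma^++ + L sigma^+ L sigma^+ = e + e}, \eqref{Formula: L Gamma^++ = 0, L Gamma^+- = 0}. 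I carry the whole computation in the $\LL$ chart, where by \eqref{Formula: delta and sigma radial in L chart} one has $\sigma^{\LL} = \sqrt{\kappa}\,\sigma_r^{\LL} = -2\sqrt{\kappa}$ (a constant) and $\delta^{\LL} = \delta_r^{\LL} + \nu\,\sigma_r^{\LL} = 4\ctg\tfrac{z}{2} - 2\nu$, and where the $\sigma_r$-symmetry \eqref{Formula: L_s_r G = 0} of $\Gamma_{\text{tw},b}$ supplies \eqref{Formula: L_sigma Gamma = 0} directly.

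Next I would decompose $\Gamma_{\text{tw},b}^{\LL}$ from \eqref{Formula: G_d^S = ...} in the form \eqref{Formula: Gamma = Gamma^++ + bar Gamma^++ + Gamma^+- + bar Gamma^+-}, reading off ${\Gamma_{\text{tw},b}^{++}}^{\LL}(z,w) = -\tfrac12\log\tg\tfrac{z-w}{4}$ and ${\Gamma_{\text{tw},b}^{+-}}^{\LL}(z,\bar w) = -\tfrac12\log\tg\tfrac{z-\bar w}{4}$. Since $\sigma_r^{\LL}$ is constant, $\Lc_{\sigma_r}$ is, up to the factor $-2$, the translation derivative $\de_z + \de_w$, which annihilates every function of $z-w$; this gives \eqref{Formula: L Gamma^++ = 0, L Gamma^+- = 0}. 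Feeding ${\Gamma_{\text{tw},b}^{++}}^{\LL}$ into \eqref{Formula: L Gamma^++ + L sigma^+ L sigma^+ = e + e} and solving for $j^+ := \Lc_{\sigma_r}\eta^+$ produces, exactly as in the dipolar case, $j^{+\,\LL}(z) = \dfrac{-i}{\sin\tfrac{z}{2}} + i\alpha$ with $\alpha\in\mathbb{R}$ and a determined $\beta_1^{\LL}$. Substituting this into \eqref{Formula: delta/sigma j + mu [sigma,delta]/sigma + 1/2 L j = ibeta} and clearing denominators yields an identity of the schematic shape
\begin{equation*}
	\bigl(\beta\sqrt{\kappa} - \alpha\nu\bigr)\sin^2\tfrac{z}{2}
	+ \nu\sin\tfrac{z}{2}
	- 2i\sqrt{\kappa}\,\mu\,\cos\tfrac{z}{2}\bigl(4\alpha\sin\tfrac{z}{2} + (\kappa-4)\bigr) \equiv 0,
\end{equation*}
whose vanishing forces $\kappa = 4$, $\nu = 0$, $\beta = 0$ and $\mu = 0$, the last being consistent with \eqref{Formula: mu = i pm (-4 pm k)/4/k}. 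For the converse, with $\kappa = 4$ and $\nu = 0$ the system closes: solving \eqref{Formula: j^+ = L eta^+} gives ${\eta^+}^{\LL}(z) = \tfrac{i}{2}\log\tg\tfrac{z}{4} + C^+$ and hence $\eta^{\LL}(z) = -\arg\ctg\tfrac{z}{4} + C$, in exact trigonometric analogy with the combined Dirichlet--Neumann case of Section \ref{Section: Coupling of forward dipolar SLE and combined Dirichlet-Neumann GFF}.

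The genuinely new point — and the step I expect to be the main obstacle — is bookkeeping the twisted (double-valued) structure. I must check that the reduced system of Theorems \ref{Theorem: The coupling theorem} and \ref{Theorem: eta structure}, derived for spaces such as $\Hc_s$ and $\Hc_{s,b}$, transfers verbatim to $\Hc_{s,b}^{\pm}$; this is plausible since the derivation is local, using only the analytic/harmonic structure away from the interior point $b$ and the pole of $\delta$ at $a$, but it should be stated explicitly. I then need to verify that $\Gamma_{\text{tw},b}^{\LL}$ and $\eta^{\LL}$ genuinely define a covariance and a linear functional on $\Hc_{s,b}^{\pm}{}'$, i.e. that they are $4\pi$-periodic and $2\pi$-antiperiodic in the $\LL$ chart: this holds because $\tg\tfrac{z-w}{4}$ and $\tg\tfrac{z}{4}$ are $4\pi$-periodic and change sign under the $2\pi$-shift, which is exactly the antiperiodicity built into $\Hc_{s,b}^{\pm}$ via \eqref{Formula: f = -f  = f}. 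Finally, the behaviour of $\eta$ at the source $a$ (the jump $2\pi/\sqrt{\kappa}$ dictated by Theorem \ref{Theorem: eta structure}) is immediate from the local form of $-\arg\ctg\tfrac{z}{4}$ near $z = 0$. The remaining algebra is a routine transcription of the dipolar $\SSS$-chart computation under the hyperbolic-to-trigonometric substitution.
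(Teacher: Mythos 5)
Your proposal is correct and follows essentially the same route as the paper, which simply states that the proof repeats that of Proposition \ref{Proposition: D-N coupling} in the logarithmic chart with hyperbolic functions replaced by trigonometric ones — exactly the computation you carry out, arriving at the same constraint forcing $\kappa=4$, $\nu=0$, $\beta=0$, $\mu=0$. Your additional bookkeeping of the $4\pi$-periodicity and $2\pi$-antiperiodicity on $\Hc_{s,b}^{\pm}$ is a welcome explicit verification of a point the paper leaves implicit, but it does not change the argument.
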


The {\it proof} in the logarithmic chart actually repeats  the proof of
Proposition \ref{Proposition: D-N coupling}.

%

We give here the expressions for $\eta$ in the logarithmic, unit-disk and half-plane charts:
\begin{equation}
 \eta^{\LL}(z) = - 2\arg \tg \frac{4}{z} + C.
\end{equation}
\begin{equation}
 {\eta}^{\D}(z)
 = -2 \arg \frac{1-\sqrt{z}}{1+\sqrt{z}} + C
 = 4\Im \,\mathrm{arctgh}\sqrt{z} + C.
\end{equation}
\begin{equation}
 \eta^{\HH}(z)
 = - 2\arg \frac{z}{1+\sqrt{1+z^2}} + C.
\end{equation}
From this relation it is clear that $\eta$ is  antiperiodic.



Established $\eta^{\LL}$ also possesses the property
\eqref{Formula: f = -f  = f}. 
Thus, the construction of the level (flow) lines, which we discussed in the
introduaction to Chapter 
\ref{Chapter: Coupling},
can be performed for  both layers simultaneously and the lines will be
identical. In particular, this means that the line can turn around the central
point and appears in the second layer but can not intersect itself. This agrees
with the property of the SLE slit which evoids self-intersections.

\section{Chordal case with a fix time change}
\label{Section: Chordal case with fix time change}

\begin{figure}[ht]
\centering
  \begin{subfigure}[t]{0.33\textwidth}
		\centering
    \includegraphics[width=5cm,keepaspectratio=true]
    	{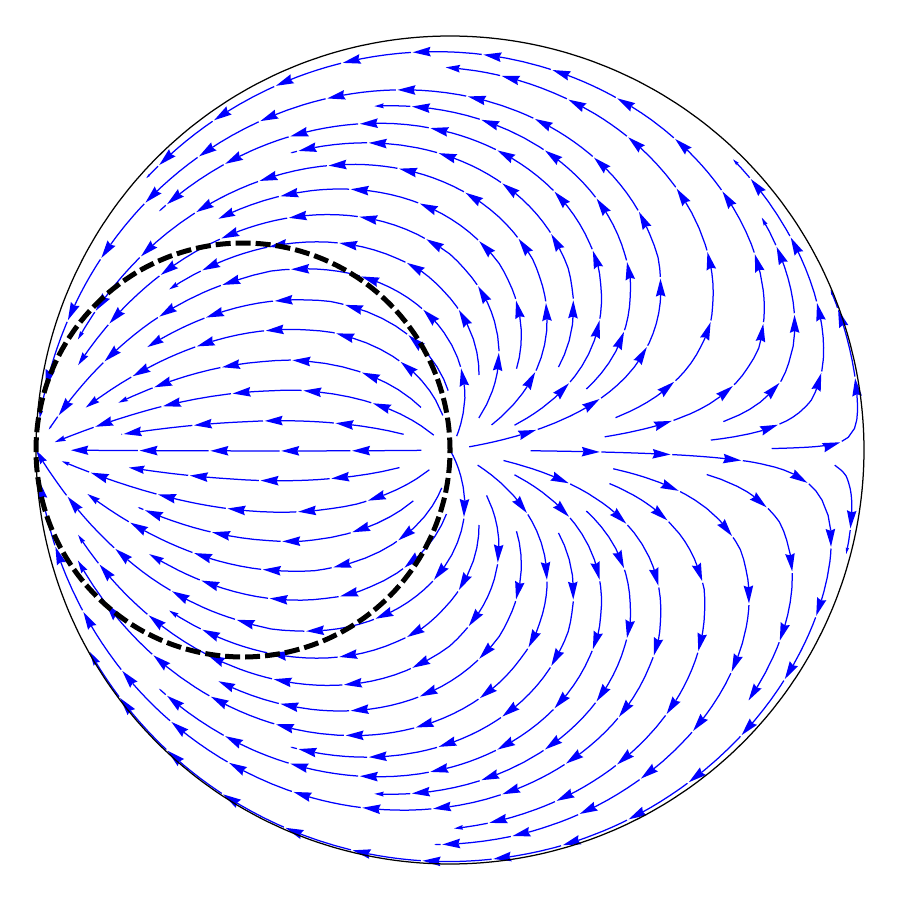}
    \caption{Flow lines of vector field $\delta$.}
  \end{subfigure}%
  ~
  \begin{subfigure}[t]{0.33\textwidth}
		\centering
    \includegraphics[width=5cm,keepaspectratio=true]
    	{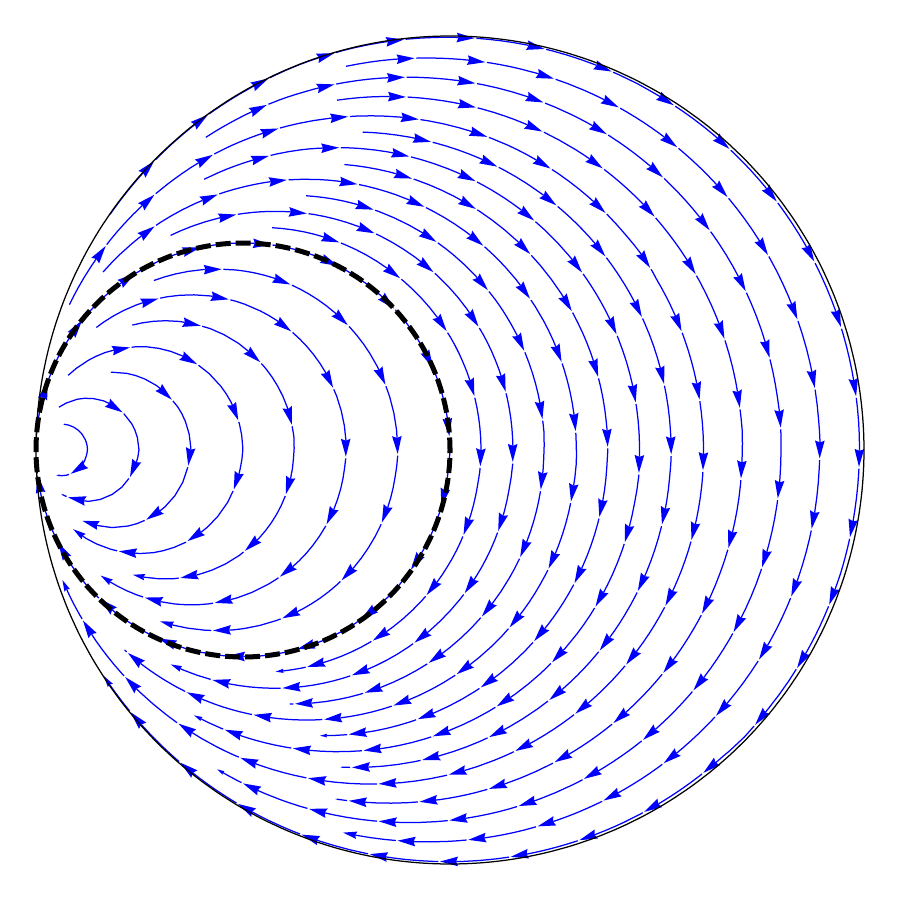}
  	\caption{Flow lines of vector field $\sigma$.}
	\end{subfigure}%
	~
  \begin{subfigure}[t]{0.33\textwidth}
		\centering
    \includegraphics[width=5cm,keepaspectratio=true]
    	{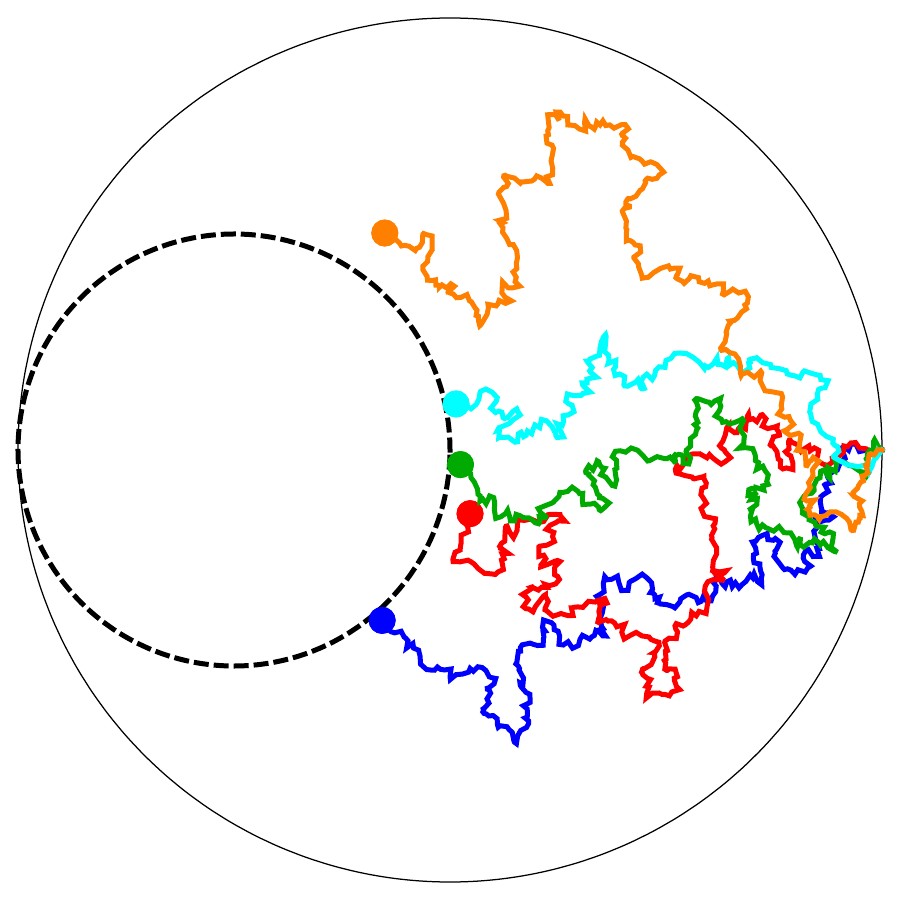}
    \caption{5 samples for the SLE slits for $\kappa=2$ in the unit disk
		chart. The slits tend to a random point not far from the boundary of the
		invariant domain $\mathcal{I}$.}
  \end{subfigure}
 	\caption{Flow lines of the vector fields $\delta$ and $\sigma$ from
 	\eqref{Formula: delta and sigma for fix time change},
 	and corresponding ($\delta,\sigma$)-SLE slit samples for $\xi=1$ in the unit
 	disk chart. The dashed line in the boundary of the invariant domain
 	$\mathcal{I}$.} 
\end{figure} 	

\begin{figure}[h]
  \begin{subfigure}[t]{0.33\textwidth}
		\centering
    \includegraphics[width=5cm,keepaspectratio=true]
    	{Pictures/Vector_fileds_v=_-2_-_0_.pdf}
    \caption{Flow lines of vector field $\delta$.}
  \end{subfigure}%
  ~
  \begin{subfigure}[t]{0.33\textwidth}
		\centering
    \includegraphics[width=5cm,keepaspectratio=true]
    	{Pictures/Vector_fileds_v=-_-1_.pdf}
  	\caption{Flow lines of vector field $\sigma$.}
	\end{subfigure}%
	~
  \begin{subfigure}[t]{0.33\textwidth}
		\centering
    \includegraphics[width=5cm,keepaspectratio=true]
    	{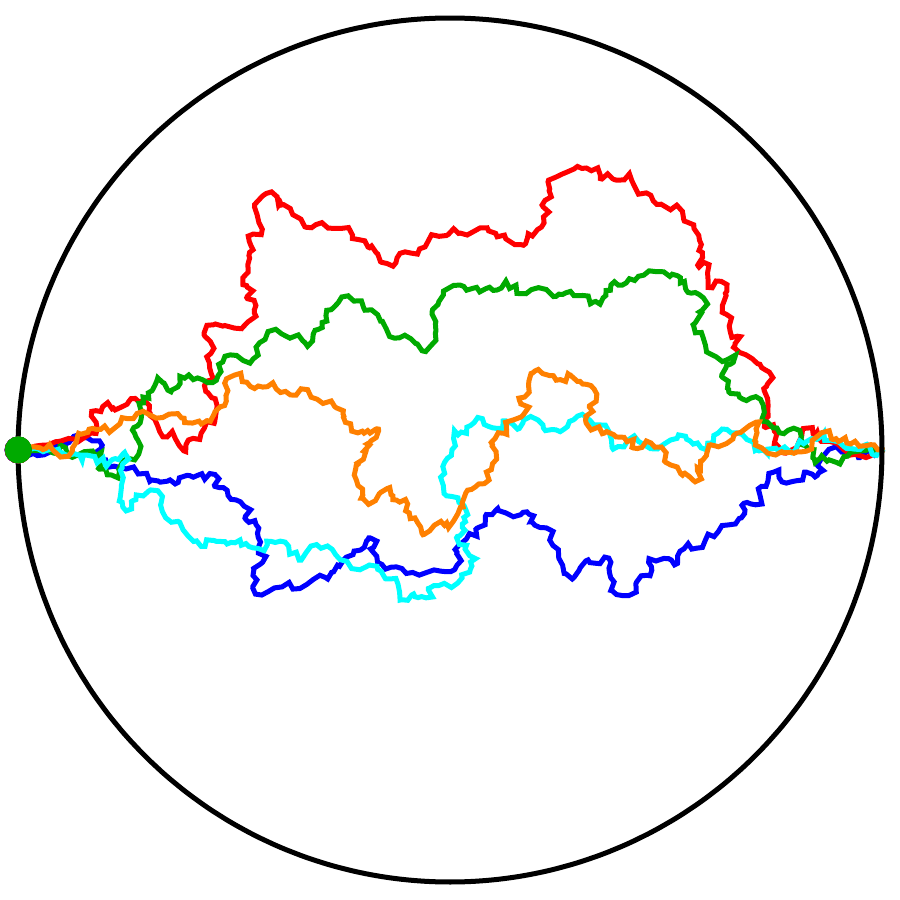}
    \caption{5 samples for the SLE slits for $\kappa=2$. The slits are of the
    same law as chordal in Fig.
 	\ref{Figure: Chordal SLE illustration}.}
  \end{subfigure}
 	\caption{Vector fields 
 	\eqref{Formula: delta and sigma for fix time change}
 	and corresponding ($\delta,\sigma$)-SLE slit samples for $\xi=-1$ in the unit
 	disk chart.}
\end{figure}

We assume
\begin{equation}
	\delta = \pm 2 \ell_{-2} + 2 \xi \ell_0,\quad
	\sigma = -\sqrt{\kappa}\ell_{-1}
	,\quad \xi\in\mathbb{R}\setminus\{0\}. 
	\label{Formula: delta and sigma for fix time change}
\end{equation}
We consider the forward case (`$+$' in `$\pm$' pair) first.
The stochastic equation with such parameters appeared in Theorem
\ref{Theorem: cordal -> fix time change and fixed boundary point}
and Theorem
\ref{Theorem: ppS -> simple cases}. 
We do not consider deterministic L\"owner
chain, because only stochastic version is motivated by these theorems. According
to the Theorem
\ref{Theorem: cordal -> fix time change and fixed boundary point}
this choice of $\delta$ and $\sigma$ is
equivalent to the chordal for any $\kappa>0$ if we do a fixed time
raparametrization and a scale transform. But this and chordal cases are not essentially equivalent in sense
described in Section
\ref{Section: Equivalence and normalization of slit L chains}. 
The stochastic equation in the half-plane chart is
\begin{equation}
	\dS G_t^{\HH}(z)
	= \frac{2}{G_t^{\HH}(z)} dt + 2 \xi G_t^{\HH}(z) dt 
	- \sqrt{\kappa} 1 \dS B_t.
\end{equation}

%
We explain now the relation to the chordal case described in Section
\ref{Section: Chordal Loewner equation}.
Let 
$\xi \in\mathbb{R}\setminus \{0\}$ 
and let
$\{\tilde G_{\tilde t}\}_{t\in[0,+\infty)}$
is the chordal stochastic flow
(chordal SLE (\ref{Formula: delta and sigma chordal})).
Define
\begin{equation}
 \tilde G_{\tilde t}^{\HH}(z) = e^{2\xi \tilde t} G_{\lambda(\tilde t)}^{\HH}(z)
\end{equation}
in the half-plane chart and
assume that
\begin{equation}\begin{split}
 &\lambda(\tilde t) := \frac{1 - e^{-4\xi \tilde t}}{4\xi};\\
 &\lambda:~ [0,+\infty)\map[0,(4\xi)^{-1}), \quad \xi > 0;\\
 &\lambda:~ [0,+\infty)\map[0,+\infty), \quad \xi < 0.
\end{split}\end{equation}
This choice of $\lambda$ corresponds to
$c_{\tilde t}=-2\xi \tilde t$
in Theorem \ref{Theorem: cordal -> fix time change and fixed boundary point}.
We notice that the time reparametrization is not random.

The flow $\{\tilde G_{\tilde t}\}_{t\in[0,+\infty)}$ satisfies the
autonomous equation
\eqref{Formula: Slit hol stoch flow Strat}
with $\delta$ and $\sigma$ from
\eqref{Formula: delta and sigma for fix time change},
that are vector fields from the second string of Table
\ref{Table: some simple cases}
and a special case of
(\ref{Formula: 3})
with $x_{\tilde t} = - 4 \xi e^{2c_{\tilde t}}$ and $y_{\tilde t}=0$.

There is a common zero of $\delta$ and $\sigma$ at infinity in the half-plane
chart and at $z=-1$ in the unit disk chart, so the infinity is a stable point:
$\tilde G^{\HH}_{\tilde t}: \infty \map \infty$. But in contrast to the chordal
case the coefficient at $z^1$ in the Laurent series is not $1$ but
$e^{2 \xi \tilde t}$.

The vector field $\delta$ is of radial type if $\xi> 0$, and of dipolar type if
$\xi< 0$.
In the second case the equation induces exactly the same
measure as the chordal stochastic flow but with a different time
parametrization.
If $\xi>0$ the measures also coincide when the chordal stochastic flow is
stopped at the time $t=(4\xi)^{-1}$.



Due to this this observation it is reasonable to expect that the Dirichlet GFF
($\Hc=\Hc_s$ and $\Gamma=\Gamma_D$)
coupled with such kind of ($\delta,\sigma$)-SLE is the same as in the chordal
case, because it is supposed to induce the same random law of the flow lines.
Indeed, $\sigma$ from
\eqref{Formula: delta and sigma for fix time change}
coincides with that from the chordal case, hence, $\eta$, defined by
(\ref{Formula: j = L eta}),
with
$\alpha=0$ (see the table) also coincides with
(\ref{Formula: eta - chordal with drift in H})
with $\nu=0$. Thus, the martingales are the same as in the chordal case. 
The same observation can be made in the reverse case (`$-$' in `$\pm$' pair)
and Neumann GFF ($\Hc=\Hc_s^*$ and $\Gamma=\Gamma_N$).

\section{The case with one fixed point}
\label{Section: Case with one fixed point}

\begin{figure}[ht]
\centering
  \begin{subfigure}[t]{0.33\textwidth}
		\centering
    \includegraphics[width=5cm,keepaspectratio=true]
    	{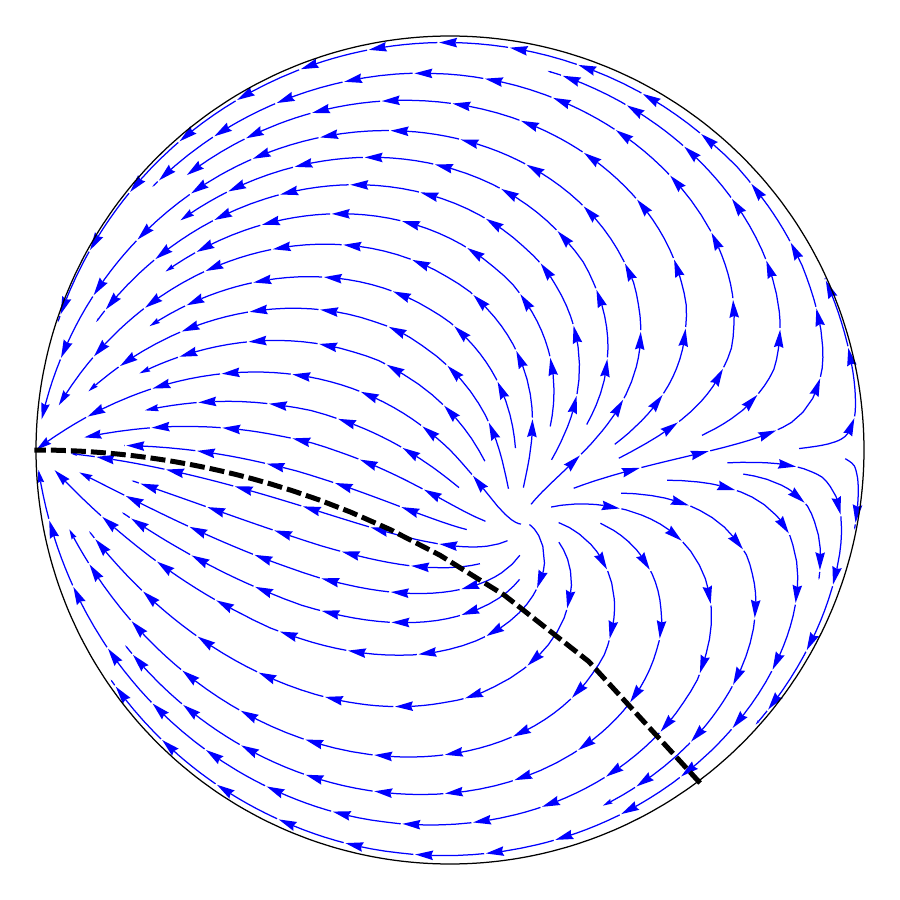}
    \caption{Flow lines of vector field $\delta$.}
  \end{subfigure}%
  ~
  \begin{subfigure}[t]{0.33\textwidth}
		\centering
    \includegraphics[width=5cm,keepaspectratio=true]
    	{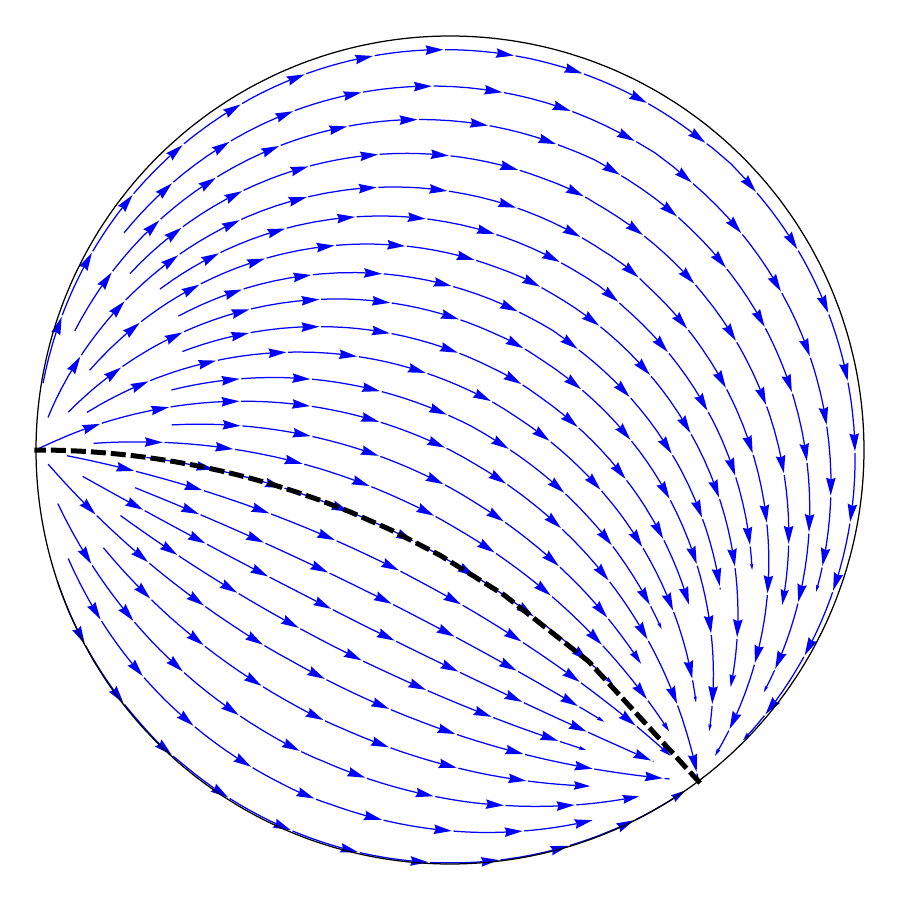}
  	\caption{Flow lines of vector field $\sigma$.}
	\end{subfigure}%
  ~
  \begin{subfigure}[t]{0.33\textwidth}
		\centering
    \includegraphics[width=5cm,keepaspectratio=true]
    	{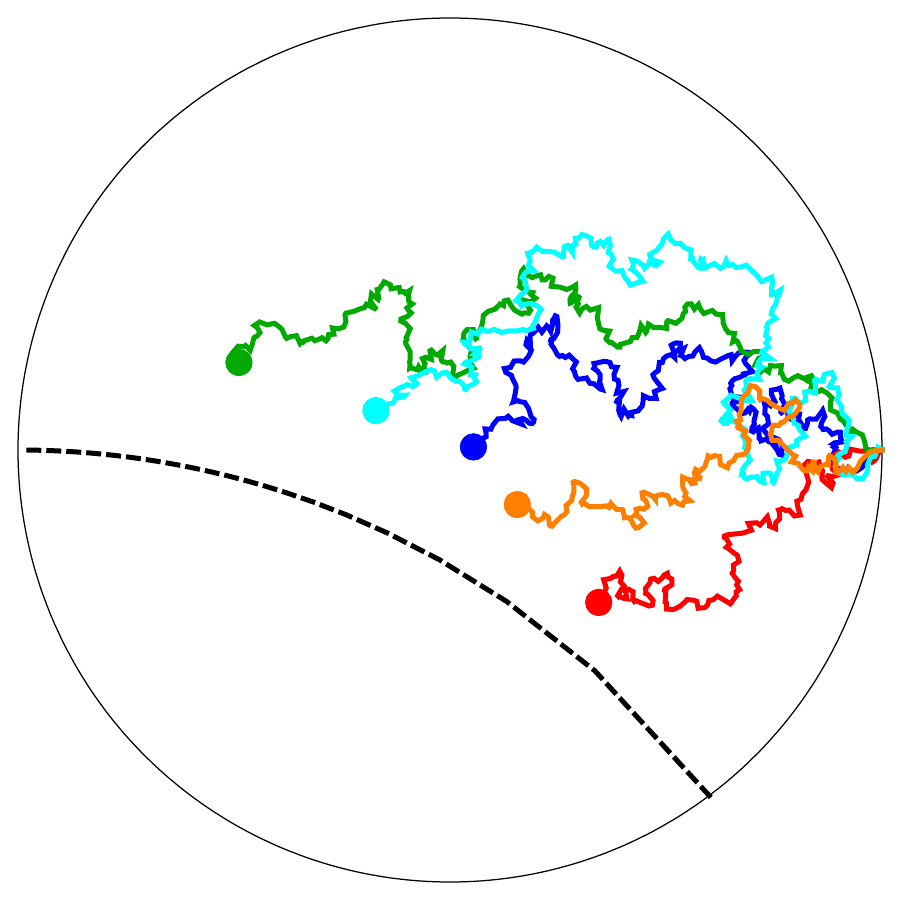}
    \caption{5 slits samples for $(\delta,\sigma)$-SLE. The slits tends to a
    random point inside the disk not far from the boundary of
    $\tilde {\mathcal{I}}$}
   \end{subfigure}%
	\caption{Flow lines of vector field $\delta$, $\sigma$ from
	\eqref{Formula: assumtions 1FP}
	and corresponding slit samples in the unit disk
	chart. The dashed line is the boundary of the invariant domain
	$\tilde {\mathcal{I}}$.}
\end{figure}

We assume
\begin{equation}
	\delta = \pm 2\ell_{-2} + \kappa \ell_{-1} \pm 2\xi \ell_{0},\quad
	\sigma = -\sqrt{\kappa}(\ell_{-1} + 2 \ell_{0})
	,\quad \xi\in\mathbb{R}
	,\quad \kappa>0.
	\label{Formula: assumtions 1FP}
\end{equation}
These vector fields have a common zero $b$ at the infinity,
$\psi^{\HH}(b)=\infty$,
in the half-plane chart.
Thereby, $b$ is a fixed point ($G_t(b)=b$) of the map $G_t$ for
$t\in[0,+\infty]$.

As well as in the previous section we, consider the forward case 
(`$+$' in `$\pm$' pair) first. There is also a mirror case 
\begin{equation}
	\delta = \pm 2\ell_{-2} - \kappa \ell_{-1} \pm 2\xi \ell_{0},\quad
	\sigma = -\sqrt{\kappa}(\ell_{-1} - 2 \ell_{0})
	,\quad \xi\in\mathbb{R}
	,\quad \kappa>0.
\end{equation}
that can be considered analogously.

The stochastic equation with such parameters appeared at the third case in
Theorem
\ref{Theorem: cordal -> fix time change and fixed boundary point}.
The corresponding stochastic equation 
\begin{equation}\begin{split}
 &d^S G_t^{\HH}(z) =
  \left(
   \frac{2}{G_t^{\HH}(z)} + \kappa + 2 \xi G_t^{\HH}(z)
  \right)dt -
  \sqrt{\kappa} \left(1 + 2 G_t^{\HH}(z) \right) d^S B_t, 
 \label{Formula: moved SLE in half-plane}
\end{split}\end{equation}
Can be obtained for the chordal equation after a time change
$\lambda_{\tilde t}= e^{4\xi \tilde t - 4\sqrt{\kappa} \tilde B_{\tilde t}}$
and a multiplication on
$e^{2 \xi \tilde t - 2\sqrt{\kappa} \tilde B_{\tilde t}}$.

On the other hand, the case form the 4th string in Table
\ref{Table: some simple cases}
is essentially equivalent to this. Indeed, let is apply a transform
$\mathscr{R}_{c}$ with $c=-1$ which corresponds to the M\"obious automorphism
\begin{equation}
	H_{-1}[\ell_1]\colon \Dc\map\Dc
\end{equation}
having the form
\begin{equation}
 H_{-1}[\ell_1]^{\HH}(z)=\frac{z}{1+z}.
\end{equation}
in the half-plane chart. It maps the fixed point $z=\infty$ to $z=1$ in the
half-plane chart and keeps the origin and the normalization 
\eqref{Formula: chordal normalization for G in H} 
unchanged. It results in
\begin{equation}\begin{split}
	&\tilde G_t := H_{-1}[\ell_1] \circ G_t \circ H_{-1}[\ell_1]^{-1}, \\
	&\tilde \delta^{\HH}(z)
	:= H_{-1}[\ell_1]_* \delta^{\HH}(z)
	= \frac{2}{z} + \kappa-6 + 2(3-\kappa+\xi)z + (- 2 + \kappa - 2\xi)z^2, \\
	&\tilde \sigma^{\HH}(z)
	:= H_{-1}[\ell_1]_* \sigma^{\HH}(z)
	= -\sqrt{\kappa}(1-z^2).
\end{split}\end{equation}
That corresponds to
\begin{equation}\begin{split}
	&\tilde \delta = 2\ell_{-2} + (\kappa-6) \ell_{-1} + 2 (3 - \kappa + \xi)
	\ell_{0} +  (- 2 + \kappa - 2\xi)\ell_{1}, \quad
	\xi\in\mathbb{R},\\
	&\tilde \sigma = -\sqrt{\kappa}(\ell_{-1} - \ell_{1}),
	\label{Formula: formula 3}
\end{split}\end{equation}
which is exactly the vector fields $\delta$ and $\sigma$ from the 4th string in
Table \ref{Table: some simple cases}. The 5th string corresponds to the mirror
case, for which the point $z=1$ is fixed.

We do not consider deterministic L\"owner
chain for
\eqref{Formula: assumtions 1FP}, 
because only stochastic version is motivated by Theorems
\ref{Theorem: ppS -> simple cases}
and
\ref{Theorem: cordal -> fix time change and fixed boundary point}.
We notice that the subsurface
$\tilde {\mathcal{I}}\subset \Dc$
defined in the half-plane chart by
\begin{equation}
	\psi^{\HH} (\tilde {\mathcal{I}}) 
	= \{ z\in\HH:\Re(z) > -\frac12 \}
\end{equation}
is invariant
($G_t^{-1}(\tilde {\mathcal{I}}) \subset \tilde{\mathcal{I}}$)
if and only if
$\xi\geq \kappa-4$.
In order to see this, it is enough to calculate the real part of
\begin{equation}\begin{split}
 \delta^{\HH}(z) = \frac{2}{z} + \kappa + 2\xi z, \quad
 \sigma^{\HH}(z) = -\sqrt{\kappa}(1+2z),
 \label{Formula: formula 3}
\end{split}\end{equation}
which are actually the horizontal components of the vector fields
on the boundary of
$\psi^{\HH}(\tilde{\mathcal{I}})$ in $\HH$,
$\{z\in\HH:~\mathrm{Re}(z)=-\frac12\}$,
\begin{equation}\begin{split}
 \Re\left(\tilde \delta^{\HH}\left(-\frac12 + i h\right)\right) =&
 \Re\left(\frac{2}{-\frac12 + i h} + \kappa
 + 2\xi\left(-\frac12 + i h\right)\right)
 =\\=&
 -\frac{1}{h^2+\frac14} + \kappa - \xi,\\
 \Re \left(\tilde \sigma^{\HH}\left(-\frac12 + i h\right) \right)=&
 \Re \left(\sqrt{\kappa}\left(1+2\left(-\frac12
 + i h\right)\right)\right)=0,\quad
 h>0.
\end{split}\end{equation}
The first number is negative for all values of $h$ if and only if
$\xi\geq \kappa - 4$.

We remark that  the $H_{-1}[\ell_1]$-transform has the invariant subsurface
$\tilde{\mathcal{I}}:=H_{-1}[\ell_1](\tilde {\mathcal{I}}) \subset \Dc$
for the $(\delta,\sigma)$-SLE above, which is an upper half of the unit disk
\begin{equation}
	\psi^{\HH}(\tilde{\mathcal{I}}) =\{ z\in\HH\colon |z|<1 \}
\end{equation}

Similarly to the previous section it is reasonable to expect that $\eta$ for the
coupling with the Dirichlet GFF is the same as in the chordal case, because it
is supposed to induce the same random law of the flow lines. Indeed, the
solution to (\ref{Formula: j^+ = L eta^+}),
with $\sigma$ and $\alpha$ as in the 5$^{\text{th}}$ string of the table, is
\begin{equation}
 {\eta^+}^{\HH}(z)
 = \frac{i}{\sqrt{\kappa}} \log z
 + i\frac{\kappa-6}{2\sqrt{\kappa}} \arg (1-z) + C^+.
\end{equation}
Thus,
\begin{equation}
 \eta^{\HH}(z)
 = \frac{-2}{\sqrt{\kappa}} \arg z
 - \frac{\kappa-6}{\sqrt{\kappa}} \arg (1-z) + C.
\end{equation}
After the $r_{-1}$-transform for $\tilde \delta$ and $\tilde \sigma$, we have
\begin{equation}
 \tilde \eta^{\HH}(z)
 = \frac{-2}{\sqrt{\kappa}} \arg z + C.
\end{equation}
The last relation coincides with
(\ref{Formula: eta - chordal with drift in H})
with $\nu=0$. We remind that $\Gamma_D$ is invariant under M\"obius transforms,
in particular, under $r_{-1}$. 

The analogous propositions are true for the reverse case and Neumann GFF.

\section{Degenerate case}
\label{Section: Degenerate case}

\begin{figure}[ht]
\centering
  \begin{subfigure}[t]{0.33\textwidth}
		\centering
    \includegraphics[width=5cm,keepaspectratio=true]
    	{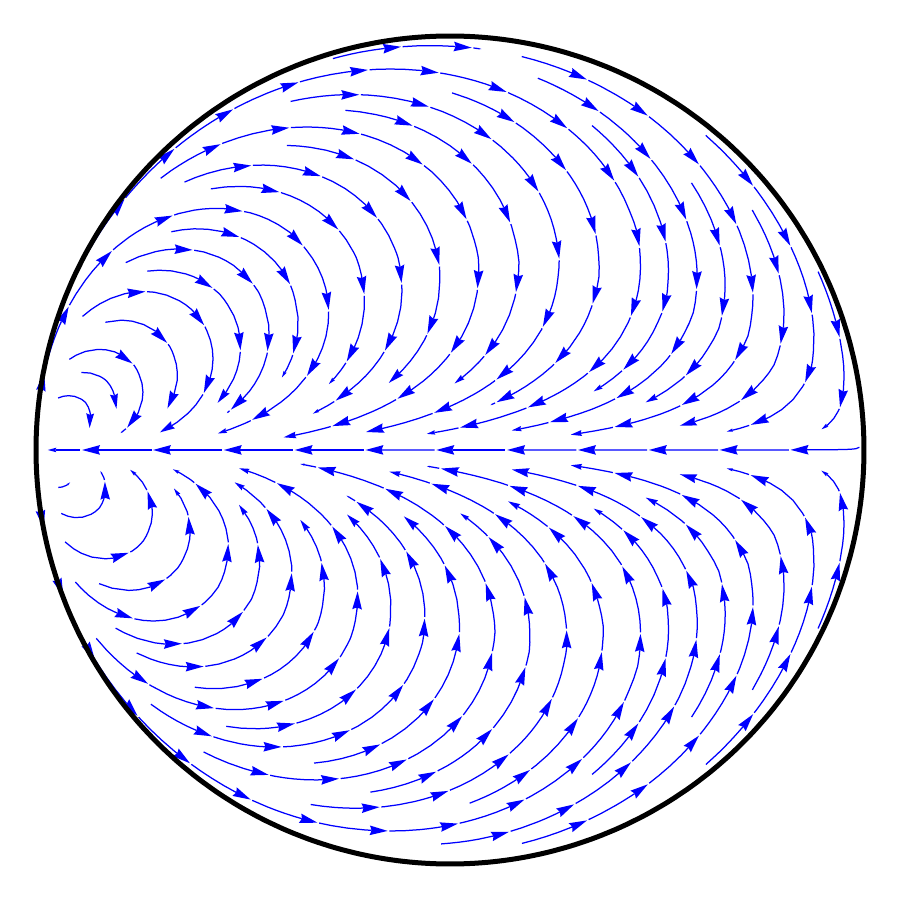}
    \caption{Flow lines of vector field $\delta$.} 
  \end{subfigure}%
  ~
  \begin{subfigure}[t]{0.33\textwidth}
		\centering
    \includegraphics[width=5cm,keepaspectratio=true]
    	{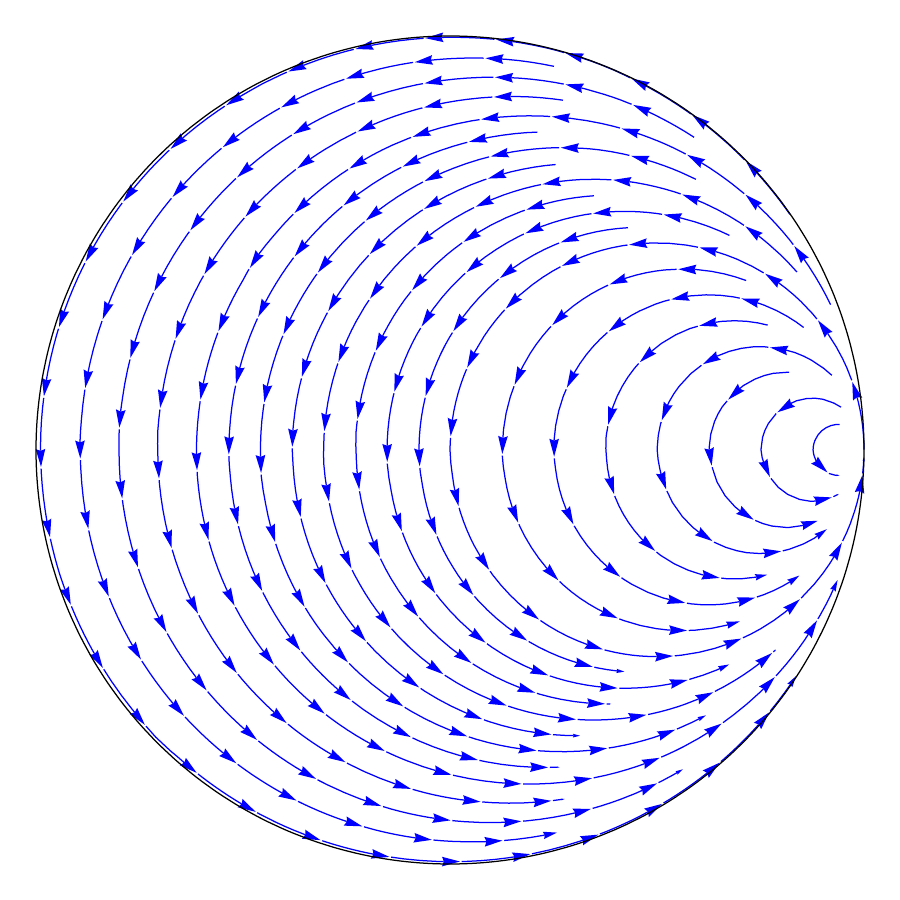}
  	\caption{Flow lines of vector field $\sigma$.}
	\end{subfigure}%
	~
  \begin{subfigure}[t]{0.33\textwidth}
		\centering
    \includegraphics[width=5cm,keepaspectratio=true]
    	{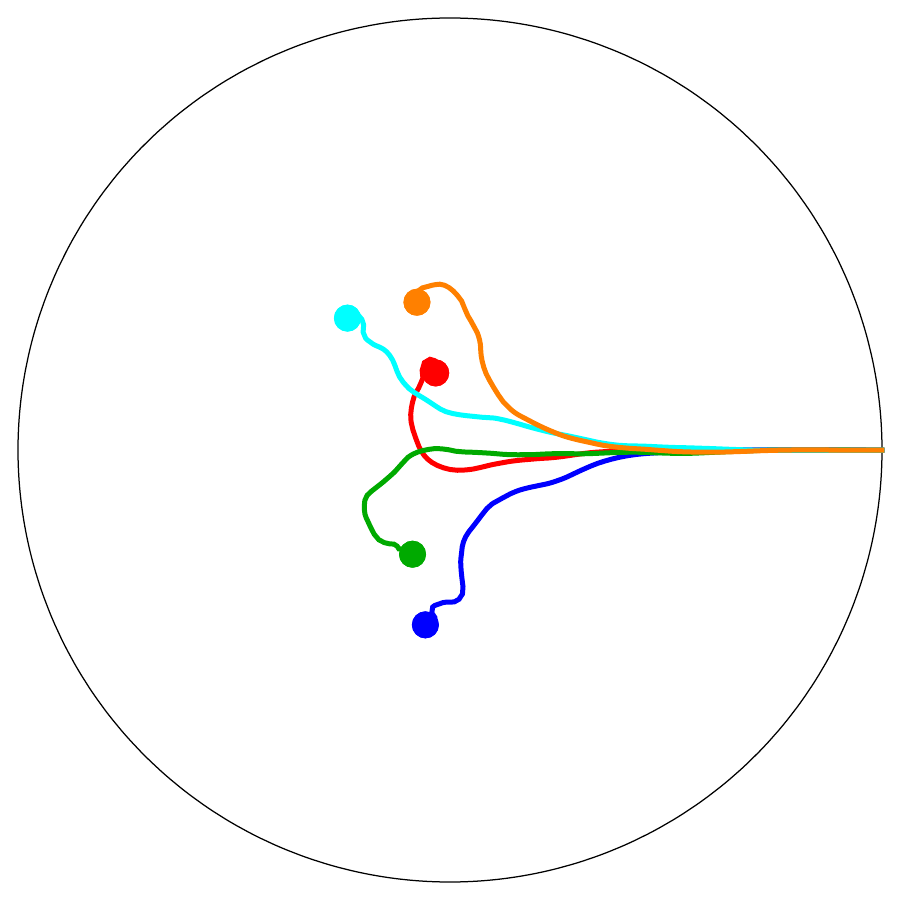}
    \caption{5 samples for the degenerate SLE slits. The slits looks smooth and
    tend to a random point inside the disk.
		All slits are closed to the trivial straight line for small $t$.}
  \end{subfigure}
	\caption{Vector fields 
	\eqref{Formula: degeneate SLE delta and sigma}
	and corresponding ($\delta,\sigma$)-SLE slit samples in the unit
	disk chart.
	\label{Figure: degenerate SLE}}
\end{figure}

In this section, we consider a special case when $\sigma(a)=0$:
\begin{equation}\begin{split}
	\delta =&
	\delta_{-2} \ell_{-2} + \delta_{-1} \ell_{-1} + \delta_0 \ell_{0} 
	+ \delta_1 \ell_{1},\quad
	\delta_{-2},\delta_{-1}, \delta_0, \delta_1 \in \mathbb{R},\quad
	\delta_{-2}\neq 0\\
	\sigma =&
	\sigma_0 \ell_{0} + \sigma_1 \ell_{1},\quad
	\sigma_0, \sigma_1 \in \mathbb{R},\quad \sigma_{-1}\neq 0,\\
	\label{Formula: delta and sigma for degenerate Lowner chain}
\end{split}\end{equation}
We call this choice \emph{degenerate L\"owner equation}.
\index{degenerate L\"owner equation} 
This case was excluded above because the most part of propositions such as
Theorem
\ref{Theorem: The master theorem}
and
\ref{Theorem: Absolute continuity of SLE}
fails and we can not construct a bridge from well-studied classical L\"owner
equations to this one as above.
In particular, it is not possible to show that the hulls 
$\{\K_t\}_{t\in[0,+\infty)}$ 
are curve generated using the same methods.
However, we expect that the hulls are smooth curves for continuous driving
functions. We do not present here any systematic investigation of degenerate
case. Instead, we consider some special cases and force the methods of
numerical simulation considered in Chapter
\ref{Chapter: Numerical Simulation}
for the stochastic version.

If we assume, for example,
\begin{equation}
	\delta=-2 \ell_{-2},\quad \sigma=\ell_0
\end{equation}
then the chain is a composition of $H_t[-2 \ell_{-2}]$ and some M\"obious
automorphism. Thus, the hull it trivial, it is just a vertical straight line
segment from the origin in the half-plane chart.

A more interesting case is 
\begin{equation}
	\delta=-2 \ell_{-2},\quad \sigma=\ell_1.
	\label{Formula: degeneate SLE delta and sigma}
\end{equation}
A numerical simulation for $u_t=B_t$ is given in the fig. 
\ref{Figure: degenerate SLE}.

\begin{remark}
The ($\delta,\sigma$)-SLE with
\eqref{Formula: degeneate SLE delta and sigma}
induces a random law on \emph{smooth} planar curves that
possesses the property of conformal invariance and a version of Markov property.
There is no a contradiction with the known fact that conformal invariance and
Markov property necessary leads to fractional SLE curves because we relaxed the
condition that fixes the end of the curve.
\end{remark}

For other choices of degenerate $\delta$ and $\sigma$ gives similar pictures or
trivial (not random) slits.

%
%
\appendix
\chapter{Some relations from stochastic calculus}
\label{Appendix: Some relations from stochastic calculus}

We refer to 
\cite{Oksendal2003},
\cite{Protter2004a},
\cite{Gardiner1982}
for the definition of It\^{o} and Stratonovich integrals
and others for details.

Stochastic differential equations in Stratonovich of the form 
\begin{equation}
 \dS X_t = a_t dt + b_t \dS B_t
\end{equation}
is buy the definition a shorter version of the Stratonovich integral equations 
\begin{equation}
 \int\limits_0^T \dS X_t = \int\limits_0^T a_t dt + \int\limits_0^T  b_t \dS
 B_t.
\end{equation}
And the same for It\^{o} differential and integral. We use notations $\dS$ and $\dI$ for these cases correspondingly. The advantage of the Stratonovich form is that the usual rules of differentiation are satisfied. The advantage of the It\^{o} form appears in the working with martingales.

We do not use differential equations that contain both It\^{o} and Stratonovich differential in this paper. Instead, we use the following relation between the integrals
\begin{equation}
 \int\limits_{0}^T F(X_t,t) \dS B_t =
 \int\limits_{0}^T F(X_t,t) \dI B_t 
 + \frac12 \int\limits_{0}^T b_t \de_1 F(X_t,t) dt.
 \label{Formula: Strat to Ito}
\end{equation}
The last term can also be expressed in terms of covariance
\begin{equation}
 \int\limits_{0}^T B_t \de_1 F(X_t,t) dt = \langle F(X_T) ,B_t \rangle.
\end{equation}

For example to obtain 
(\ref{Formula: Slit hol stoch flow Ito})
from
\ref{Formula: Slit hol stoch flow Strat}
we assume 
\begin{equation}
 X_t:=G_t(z),\quad
 b_t:=\sigma(G_t(z)),\quad
 F(x_t,t):=\sigma(X_t)=\sigma(G_t(z)).
\end{equation}
Then
\begin{equation}
 \de_1F(X_t,t) = \sigma'(G_t(z)),
\end{equation}
and
\begin{equation}
 \int\limits_0^T \sigma(G_t(z)) \dS B_t =
 \int\limits_0^T \sigma(G_t(z)) \dI B_t +
 \frac12 \int\limits_0^T \sigma(G_t(z))\sigma'(G_t(z)) dt.
\end{equation}
It is enough now to add $\int\limits_0^T \delta(G_t(z)) dt$ to both parts to obtain the right-hand sides of the integral forms of 
(\ref{Formula: Slit hol stoch flow Ito})
and
(\ref{Formula: Slit hol stoch flow Strat}).

We also used in this paper that for any monotonic continuously differentiable function 
$\lambda:[0,\tilde T]\map[0,T]$
\begin{equation}
 \tilde B_{\tilde T} := 
 \int\limits_0^{\tilde T} \dot \lambda^{\frac12}_{\tilde t} \dI B_{\lambda_{\tilde t}}
 = \int\limits_0^{\lambda_{\tilde T}} \dot \lambda^{\frac12}_{\lambda^{-1}_t} \dI B_{t}
\end{equation}
has the same law as $B_{\tilde T}$.
In differential form this relation is
\begin{equation}
 \dI \tilde B_{\tilde t} = \dot \lambda^{\frac12}_{\tilde t} \dI B_{\lambda_{\tilde t}}.
 \label{Formula: dB = lambda dB}
\end{equation}

We need to reformulate the relation 
(\ref{Formula: dB = lambda dB})
in the Stratonovich form.
Let now $\lambda$ satisfies 
\begin{equation}
 \dS \dot \lambda_{\tilde t} 
 = x_{\tilde t} d \tilde t + y_{\tilde t} \dS \tilde B_{\tilde t}.
 \label{Formula: d lambda = a dt + b dB}
\end{equation}
\begin{equation}\begin{split}
 \int\limits_0^{\tilde T}  \dS \tilde B_{\tilde t}
 =&\int\limits_0^{\tilde T}  \dI \tilde B_{\tilde t}
 = \int\limits_0^{\lambda_{\tilde T}} \dot \lambda^{\frac12}_{\lambda^{-1}_t} \dI B_{t}
 = \int\limits_0^{\lambda_{\tilde T}} \dot \lambda^{\frac12}_{\lambda^{-1}_t} \dS B_{t}
 - \frac12 
 \langle \dot \lambda^{\frac12}_{\tilde T}, B_{\lambda_{\tilde T}}  \rangle
 =\\=& 
 \int\limits_0^{\tilde T} \dot \lambda^{\frac12}_{\tilde t} \dS B_{\lambda_{\tilde t}}
 - \frac12 
 \langle 
  \dot \lambda^{\frac12}_{\tilde T}, 
  \int\limits_0^{\tilde T} \lambda^{-\frac12}_{\tilde t} d\tilde B_{\tilde t}
 \rangle
 = \int\limits_0^{\tilde T} \dot \lambda^{\frac12}_{\tilde t} \dS B_{\lambda_{\tilde t}}
 - \frac12 \int\limits_0^{\tilde T} 
 \frac12 \dot \lambda^{-\frac12}_{\tilde t} y_{\tilde t} 
 \dot \lambda^{-\frac12}_{\tilde t} d \tilde t
 =\\=&
  \int\limits_0^{\tilde T} \dot \lambda^{\frac12}_{\tilde t} \dS B_{\lambda_{\tilde t}}
 - \frac14 \int\limits_0^{\tilde T} 
 \frac{y_{\tilde t}}{\dot \lambda_{\tilde t} } d \tilde t.
\end{split}\end{equation}
Thus we conclude that
\begin{equation}
 \dS \tilde B_{\tilde t} = 
 \dot \lambda_{\tilde t}^{\frac12} \dS B_{\lambda_{\tilde t}} - 
 \frac14 \frac{y_{\tilde t}}{\dot \lambda_{\tilde t} } d \tilde t.
 \label{Folrmula: dB = lambda dB - 1/4 b/lambda dt}
\end{equation}

\backmatter

\listoffigures

\printindex

%
%

%
%
\bibliographystyle{alpha}

\bibliography{library}
	
\end{document}